\definecolor{lgray}{gray}{0.88}
\definecolor{llgray}{gray}{0.93}
\definecolor{llyellow}{RGB}{255,255,224}
\definecolor{lyellow}{RGB}{238,238,209}
\definecolor{llblue}{RGB}{224,255,255}
\definecolor{lblue}{RGB}{176,224,230}
\definecolor{llgreen}{RGB}{193,255,193}
\definecolor{lgreen}{RGB}{180,238,180}
\newcommand{\B}{\mbox{$\{0,1\}$}}
\newcommand{\Bn}{\mbox{$\{0,1\}^n$}}
\newcommand\defeq{\stackrel{\mathrm{\scriptsize def}}{=}}
\newcommand\Path{{\tt Path}}
\newcommand\Star{{\tt Star}}
\newcommand\Tree{{\tt Tree}}
\newcommand\Grid{{\tt Grid}}
\newcommand\diam{{\tt diam}}
\newcommand\brickwall{{\tt Brickwall}}
\newcommand\Cnot{{\sf CNOT}}
\newcommand\Expander{{\tt Expander}}
\newcommand\qsp{{\tt QSP}}
\newtheorem{theorem}{Theorem}
\newtheorem{corollary}{Corollary}
\newtheorem{lemma}{Lemma}
\newtheorem{proposition}{Proposition}
\newtheorem{definition}{Definition}
\title{Does qubit connectivity impact quantum circuit complexity?}
 \author[1]{Pei Yuan \footnote{Email: peiyuan@tencent.com} }
 \author[2]{Jonathan Allcock \footnote{Email: jonallcock@tencent.com}}
\author[1]{Shengyu Zhang \footnote{Email: shengyzhang@tencent.com}}
 \affil[1]{Tencent Quantum Laboratory, Tencent, Shenzhen, Guangdong 518057, China}
 \affil[2]{Tencent Quantum Laboratory, Tencent, Hong Kong, China}
\date{}
\begin{document}





\maketitle

\begin{abstract}
Some physical implementation schemes of quantum computing can apply two-qubit gates only on certain pairs of qubits. These connectivity constraints are commonly viewed as a significant disadvantage. {For example, compiling an unrestricted $n$-qubit quantum circuit to one with poor qubit connectivity, such as a 1D chain, usually results in a blowup of depth by $O(n^2)$ and size by $O(n)$. It is appealing to conjecture that this overhead is unavoidable---a random circuit on $n$ qubits has $\Theta(n)$ two-qubit gates in each layer and a constant fraction of them act on qubits separated by distance $\Theta(n)$. }

While it is known that almost all $n$-qubit unitary operations need quantum circuits of $\Omega(4^n/n)$ depth and $\Omega(4^n)$ size to realize with all-to-all qubit connectivity, in this paper, we show that \textit{all} $n$-qubit unitary operations can be implemented by quantum circuits of $O(4^n/n)$ depth and $O(4^n)$ size even under {1D chain}  qubit connectivity constraint. 

We extend this result and investigate qubit connectivity in three directions. First, we consider more general connectivity graphs and show that the circuit size can always be made $O(4^n)$ as long as the graph is connected. For circuit depth, we study $d$-dimensional grids, complete $d$-ary trees and expander graphs, and show results similar to the 1D chain. 
Second, we consider the case when ancillary qubits are available. We show that, with ancilla, the circuit depth can be made polynomial, and the space-depth trade-off is not impaired by connectivity constraints unless we have exponentially many ancillary qubits. 
Third, we obtain nearly optimal results on  special families of unitaries, including diagonal unitaries, 2-by-2 block diagonal unitaries, and Quantum State Preparation (QSP) unitaries, the last being a fundamental task used in many quantum algorithms for machine learning and linear algebra. 
\end{abstract}

\section{Introduction}
\label{sec:intro}
Quantum computation has shown advantages over classical computation in solving some intractable computational problems, based on the unique properties of quantum mechanics. In recent years, tremendous advances have been made in quantum technologies, in both theory and experiment, and hundreds of quantum algorithms have been proposed with rigorous mathematical proofs of speedup over the best possible or best-known classical counterparts \cite{quantumalgorithm}. When these algorithms are realized in quantum circuits consisting of 1-qubit and 2-qubit gates, however, qubit connectivity often comes as a constraint. Some leading implementation schemes such as superconducting qubits~\cite{IBMQ,arute2019quantum,gong2021quantum} and quantum dots \cite{ciorga2000addition,elzerman2003few,petta2004manipulation,schroer2007electrostatically,zajac2016scalable}, and cold atoms \cite{bloch2008quantum,buluta2011natural,bernien2017probing,graham2019rydberg},
can only apply 2-qubit gates on certain pairs of qubits, while other schemes such as trapped ion \cite{leibfried2003quantum,schindler2013quantum,pogorelov2021compact,blatt2012quantum}
and photonic quantum computers \cite{wang201818,zhong2020quantum,madsen2022quantum}
may not be subject to the same constraints. While this connectivity constraint is typically viewed as a considerable disadvantage,  the extent of this disadvantage seems yet to be systematically studied. 
This paper aims to address the central question:
\begin{quote}
    {\it How does qubit connectivity affect quantum circuit complexity?}
\end{quote}

We study this question in terms of circuit depth and size. 
Let us start with a motivating example. Some early-stage superconducting quantum systems have qubits arranged in a 1D chain and only allow nearest neighbor interactions \cite{IBMQ, kelly2015state}, which we refer to as being under \textit{path constraint}. 
The 1D chain has very poor connectivity by almost all graph-theoretic measures, such as diameter, average degree, number of edges, vertex or edge expansion, etc. Compiling a quantum circuit based on all-to-all qubit connectivity to one compatible with 1D chain connectivity usually results in a blowup of depth by $O(n^2)$ and of size by $\Theta(n)$. Indeed, each layer generally has $\Theta(n)$ two-qubit gates and many of these gates act on two qubits that are $\Theta(n)$ apart on the chain. In this regard, it is even appealing to conjecture that these overheads in depth and size are unavoidable for generic quantum circuits. However, this intuition turns out to be wrong, as the following result shows.

\begin{theorem}\label{thm:US_noancilla_path_intro}
    Any $n$-qubit unitary can be implemented by a quantum circuit of depth $O(4^n/n)$ and size $O(4^n)$ under path constraint.
\end{theorem}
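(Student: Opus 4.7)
The plan is to reduce the synthesis of an arbitrary $n$-qubit unitary $U$ to a sequence of simpler building blocks, each of which admits an efficient path-compatible implementation, and then to compose these blocks without incurring a $\mathrm{poly}(n)$ blowup in depth.

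First, I would decompose $U$ into $O(2^n)$ simpler operations using a standard column-by-column scheme: since $U$ is determined by its $2^n$ columns $U|x\rangle$ for $x \in \{0,1\}^n$, one can realize $U$ as a product of $O(2^n)$ quantum state preparation (QSP) unitaries, each fixing one additional column via a Householder-style reflection. Equivalently, the cosine-sine decomposition yields $O(2^n)$ uniformly controlled rotations as the atomic building blocks. Either route reduces the problem to showing that each atomic block can be implemented on the chain in depth $O(2^n/n)$ and size $O(2^n)$.

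Second, and this is the main technical step, I would establish that any $n$-qubit QSP (equivalently, a uniformly controlled single-qubit rotation) can be implemented under path constraint in depth $O(2^n/n)$ and size $O(2^n)$. The strategy is to take the standard Gray-code-ordered sequence of $2^{n-1}$ rotations used in the all-to-all setting and reschedule it along the chain so that the $O(n)$ routing cost is amortized over $\Theta(n)$ useful operations per layer. Concretely, one would interleave CNOT-cascades running along the chain with the single-qubit rotations, so that each ``pipelined'' sweep along the path performs $\Theta(n)$ of the required controlled rotations in parallel. A divide-and-conquer recursion that splits the chain in half, recursively prepares amplitudes on the two halves, and merges them by a boundary operation on the central cut should yield the claimed depth.

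Composing the two steps gives $O(2^n)\cdot O(2^n/n)=O(4^n/n)$ total depth and $O(2^n)\cdot O(2^n)=O(4^n)$ total size, matching the target. The main obstacle is the second step: the all-to-all construction is already essentially tight at $\Theta(2^n)$ gates, so no per-gate overhead can be afforded when passing to the chain. The crux is a scheduling argument showing that $\Omega(n)$ qubits of the path can be kept active in every layer throughout the atomic block, which requires reconciling the algebraic Gray-code structure of uniformly controlled rotations with the linear geometry---essentially, demonstrating that the usual CNOT-cascade implementation can be realized as a pipelined wave along the chain, so that while one rotation is being performed near one endpoint, the parity information required for the next rotation is already being propagated from the other endpoint.
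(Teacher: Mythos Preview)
Your high-level decomposition (reduce $U$ to $O(2^n)$ uniformly controlled gates, implement each in depth $O(2^n/n)$ and size $O(2^n)$ on the chain, multiply) matches the paper exactly. The gap is in the second step. The ``standard Gray-code-ordered sequence of $2^{n-1}$ rotations'' you propose to reschedule cannot achieve depth $O(2^n/n)$ on \emph{any} graph: all $2^{n-1}$ rotations act on the \emph{same} target qubit, separated by CNOTs that update the parity stored there, so they are inherently sequential and the depth is $\Omega(2^{n-1})$ regardless of connectivity. Pipelining can amortize CNOT routing cost but cannot parallelize operations that all hit one wire, and your divide-and-conquer sketch does not fix this either---a UCG does not factor across a cut of the chain.

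The paper instead reduces each UCG to three $n$-qubit \emph{diagonal} unitaries (Lemma~\ref{lem:UCG_decomposition}) and implements $\Lambda_n$ in depth $O(2^n/n)$ on the chain (Lemma~\ref{lem:diag_path_withoutancilla_main}). The key idea, adapted from \cite{sun2021asymptotically}, is a prefix/suffix split of $s\in\{0,1\}^n$ that spreads the phase computation across $r_t=\Theta(n)$ target qubits in parallel, each enumerating prefixes via its own Gray code. The path-specific ingredients are: (i) a permutation $\Pi^{path}$ that interleaves control and target qubits so they alternate along the chain; (ii) an asymmetric split $r_c=r_t+\tau$ with $\tau\approx 2\log n$ extra control qubits placed at one end; (iii) assigning target $i$ the $(r_c,i)$-Gray code. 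With this layout, a Gray-code step that flips prefix bit $k$ requires CNOTs of span $O(k)$ on the chain, but such a step occurs only $2^{r_c-k}$ times; since $\sum_k k\cdot 2^{-k}=O(1)$, the total is $O(2^{r_c})$ per suffix group and $O(2^n/n)$ overall. This exponential-decay-in-distance counting, not pipelining, is what absorbs the routing overhead.
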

Note that these bounds are tight: even \textit{without any connectivity restrictions}, almost all $n$-qubit unitary circuits need depth $\Omega(4^n/n)$ and size $\Omega(4^n)$ to implement \cite{sun2021asymptotically}. Therefore, the above theorem implies that the qubit connectivity constraint does not increase the depth and size complexity (by more than a constant factor) for almost all $n$-qubit unitaries. 

This somewhat counter-intuitive example calls for more systematic studies of the central question in specific settings. In this paper, we investigate three aspects of this topic: 
\begin{enumerate}
    \item Graphs: What constraint graphs affect circuit complexity and by how much? Is there a simple graph property such as diameter, vertex degree, or expansion constant that characterizes the impact the graph has on circuit depth and size? 
    
    \item Space: Recent studies show that ancillary qubits can be used to reduce quantum circuit depth. How does the qubit connectivity constraint affect this? 
    
    \item Unitaries: What can we say about specific sets of unitary operations, in terms of worst-case and average-case complexity? 
\end{enumerate}

Our main results are described below. The results involving ancillary qubits are easiest to state and will be used subsequently, so we begin with those.

\paragraph{Ancillary qubits and depth-space trade-offs} A number of recent results have shown that one can reduce circuit depth by utilizing ancillary qubits \cite{low2018trading,wu2019optimization,sun2021asymptotically,yuan2022optimal,rosenthal2021query}.
When connectivity constraints are taken into consideration, for example, when all $n+m$ qubits are arranged in a 1D chain, can we still trade ancilla for depth\footnote{Technically speaking, one should specify where the $n$ non-ancilla qubits are located in the chain, e.g., if they are located at the two ends, with the $m$ ancilla in a contiguous block in the middle, then one requires at least $O(n+m)$ depth to let them ``reach''  each other. Here we consider the case where the ancilla and non-ancilla qubits form two contiguous blocks, a scenario more natural for downstream applications.}? We show:
\begin{theorem}
    For all $m\le O(2^{n/2})$, any $n$-qubit unitary
    can be implemented by a quantum circuit of depth $O(4^n/(n+m))$ and size $O(4^n)$ under the $(n+m)$-long path constraint, using $m$ ancillary qubits. These bounds are tight. 
\end{theorem}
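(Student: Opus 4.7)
The plan is to establish the upper bound by an explicit construction extending Theorem~\ref{thm:US_noancilla_path_intro}, and to obtain tightness via a counting argument.

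\textbf{Lower bound.} A standard dimension-counting argument yields size $\Omega(4^n)$ for almost all $n$-qubit unitaries, independent of connectivity and of ancilla count: the parameter space of $U(2^n)$ has real dimension $\Theta(4^n)$, while each two-qubit gate contributes only $O(1)$ continuous parameters and $O(\log(n+m))$ bits of placement information, so covering $U(2^n)$ to accuracy $\epsilon$ forces size $\Omega(4^n)$. For depth, each layer of a circuit on $n+m$ qubits contains at most $\lfloor(n+m)/2\rfloor$ two-qubit gates under any connectivity, so depth $\geq $ size$/\lfloor(n+m)/2\rfloor = \Omega(4^n/(n+m))$.

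\textbf{Upper bound.} I would proceed in two steps. First, decompose $U$ into a product of $O(2^n)$ uniformly-controlled single-qubit rotations of the form $\sum_{k\in\{0,1\}^{n-1}}|k\rangle\langle k|\otimes V_k$, via Quantum Shannon decomposition or a column-by-column construction. Second, for each such piece, implement it under the $(n+m)$-long path constraint in depth $O(2^n/(n+m))$ and size $O(2^n)$ using the $m$ ancillae. The main idea is to use CNOT cascades along the path to replicate the $(n-1)$-bit control register across $\Theta((n+m)/n)$ contiguous length-$\Theta(n)$ blocks of the path, apply distinct batches of the $2^{n-1}$ controlled rotations in parallel within each block (invoking the Theorem~\ref{thm:US_noancilla_path_intro} construction as a per-block subroutine that already achieves $O(2^n/n)$ depth on $n$ qubits arranged on a path), and finally uncompute the replicas. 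Multiplying by the $O(2^n)$ pieces yields total depth $O(4^n/(n+m))$ and total size $O(4^n)$.

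The main obstacle will be step two: controlling the overhead of the replication/routing so that it does not dominate the per-piece computation budget. Each CNOT cascade along a length-$(n+m)$ path costs depth $O(n+m)$, which must be amortized against the $\Theta(2^{n-1})$ useful rotations per piece; this is where $m \leq O(2^{n/2})$ is used, to guarantee that the setup cost of broadcasting controls to $\Theta((n+m)/n)$ blocks remains within the $O(2^n)$ size budget per piece and $O(2^n/(n+m))$ depth budget per piece. A second subtlety is that the Theorem~\ref{thm:US_noancilla_path_intro} subroutine must be invoked in a way that respects the restricted per-block path connectivity and does not disturb neighboring blocks; arranging rotations block-by-block so that ``seam'' interactions across block boundaries are few and parallelizable is the technical crux. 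Once each uniformly-controlled rotation fits these budgets, sequential composition delivers the claimed bounds.
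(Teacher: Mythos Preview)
Your lower bound and the outer decomposition into $O(2^n)$ uniformly-controlled gates (UCGs) are correct and match the paper's route (Lemma~\ref{lem:unitary_decomposition}). The gap is in your step two.

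You propose to copy the $n$-bit control to $\Theta((n+m)/n)$ blocks and then ``invok[e] the Theorem~\ref{thm:US_noancilla_path_intro} construction as a per-block subroutine that already achieves $O(2^n/n)$ depth.'' But if each block runs that subroutine, each block takes depth $O(2^n/n)$, and running blocks in parallel does not reduce the maximum depth---you end up with $O(2^n/n)$ per UCG and $O(4^n/n)$ total, not $O(4^n/(n+m))$. For your scheme to yield $O(2^n/(n+m))$ per UCG, you would need each block to implement a diagonal with only $K=\Theta(n2^n/(n+m))$ nonzero phase contributions in depth $O(K/n)$. Theorem~\ref{thm:US_noancilla_path_intro} does not give this as a black box: its Gray-code traversal enumerates all $2^{r_c}$ prefixes in every stage $C_k$ regardless of which $\alpha_s$ vanish, so the CNOT-update cost does not shrink with $K$. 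One could try to open the construction and assign different suffix-stages $C_k$ to different blocks, but then each block must still pay the per-stage $U_{Gen}^{(k)}$ and the $\Lambda_{r_c}$ endgame; checking that those overheads stay within budget for $m$ near $2^{n/2}$ is a nontrivial additional argument you have not supplied.

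The paper (Section~\ref{sec:diag_with_ancilla} and Lemma~\ref{lem:diag_path_ancillary}) does \emph{not} replicate the full $n$-bit state to blocks. Instead it splits $x$ into a prefix and a $p$-bit suffix with $2^p=\Theta(m)$, uses $\Theta(m)$ \emph{target} qubits---one per suffix $t_k$---and runs a single Gray-code cycle over the $2^{n-p}$ prefixes, updating all target qubits simultaneously via CNOTs whose controls are placed (via the interleaved copy/target/auxiliary layout of Fig.~\ref{fig:register_in_path}) at distance $O(1)$ for most steps. This yields depth $O(2^{n/2}+2^n/m)$ per diagonal, hence $O(4^{3n/4}+4^n/(n+m))$ per unitary (Theorem~\ref{thm:US_path_grid}), and the first term is dominated precisely when $m\le O(2^{n/2})$. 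Your identification of the $O(m)$ broadcast cost as the origin of the $m\le O(2^{n/2})$ threshold is correct in spirit---in the paper it appears as the $O(m)$-depth Suffix/Prefix Copy stages---but the parallelism mechanism itself is different from what you describe.
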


That is, when at most $O(2^{n/2})$ ancilla are available, 1D chain connectivity does not affect either the worst case or generic circuit depth or size. On the other hand, we show circuit depth upper and lower bounds of $O\big(2^{3n/2} + \frac{4^n}{n+m}\big)$ and $\Omega\big(2^{n} + \frac{4^n}{n+m}\big)$, respectively, with $m > O(2^{n/2})$ ancilla. Comparing this with the depth upper bound of $O\big(n2^{n/2}+\frac{n^{1/2}2^{3n/2}}{m^{1/2}}\big)$ in the unrestricted case \cite{yuan2022optimal}, we see that the effect of connectivity on circuit complexity can be sensitive to the number of ancilla.


\paragraph{The effect of graph constraints on connectivity}

Qubit connectivity can be modelled by an undirected, connected {\it constraint graph} $G=(V,E)$, with vertices $v\in V$ corresponding to qubits, and edges $(u,v)\in E$ corresponding to pairs of qubits on which one can apply 2-qubit gates.  The case where $G=K_n$, i.e., the complete graph on $n$ vertices, describes an $n$-qubit circuit with all-to-all connectivity (or, equivalently, no connectivity constraints). 


Current superconducting quantum processors have qubit connectivity constraint corresponding to a wide range of constraint graphs. 1D chain is the common qubit layout used in many early-stage chips. In addition to that, bilinear chains~\cite{IBMQ,ye2019propagation}, 2D grids~\cite{arute2019quantum,gong2021quantum}, brick-wall graphs~\cite{IBMQ} and trees~\cite{IBMQ} have also been realized, and 3D grids may potentially be utilized by multi-layer chips in the future.

We study three families of graphs: (i) $d$-dimensional grids, (ii) $d$-ary trees, and (iii) expanders. In each family, we can see the dependence of depth overhead on some key parameter ($d$  or expansion). We start from the grids. The following result concerns grid graphs $n^{1/d}\times \cdots \times n^{1/d}$.

\begin{theorem}
     For all $m\le O(2^{\frac{dn}{d+1}}/d)$, any $n$-qubit unitary
    can be implemented by a quantum circuit of $O(4^n/(n+m))$ depth and $O(4^n)$ size under the $(n+m)^{1/d}\times \cdots \times (n+m)^{1/d}$-grid constraint using $m$ ancillary qubits, and these bounds are tight. When no ancillary qubits are used, the required circuit depth is $O(4^n/n)$, the same as for unrestricted circuits. 
\end{theorem}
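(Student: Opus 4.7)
The plan is to bootstrap from the path-constrained result stated just above and then exploit the extra parallel directions of the grid to widen the regime of ancilla for which the optimal depth $O(4^n/(n+m))$ is attainable. I would split the argument into the ancilla-free case (which reduces almost directly to the path theorem) and the case with ancilla (which requires a genuinely grid-aware construction).

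For the ancilla-free case, the observation is that a $d$-dimensional grid on $n$ vertices contains a Hamiltonian path (trivially, by a boustrophedon traversal). Any circuit respecting this path's edge set also respects the grid's edge set, so Theorem~1 immediately transfers: depth $O(4^n/n)$ and size $O(4^n)$. The matching lower bound already holds in the unconstrained setting \cite{sun2021asymptotically}, so the bounds are tight.

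For the case with $m > 0$ ancillary qubits, I would mirror the strategy used to prove the path-with-ancilla theorem but replace its serial routing subroutine with a grid-parallel one. Concretely, I would decompose the target unitary into $O(4^n/(n+m))$ elementary blocks of the type used in the construction of \cite{sun2021asymptotically} and \cite{yuan2022optimal} (e.g.\ QSP layers, diagonal unitaries, and 2-by-2 block diagonal unitaries), arranged so that each layer of blocks can be executed in $O(1)$ amortized depth per target gate. The crucial step is to show that on the $(n+m)^{1/d}\times\cdots\times(n+m)^{1/d}$ grid, the data-fan-out and fan-in needed inside each block can be performed along the $d$ independent axes in parallel, giving an effective routing cost that scales like $(n+m)^{1/d}$ per coordinate rather than $(n+m)$ as on the path. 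Balancing this routing against the counting bound shows that the depth stays $O(4^n/(n+m))$ exactly up to the threshold $m=O(2^{dn/(d+1)}/d)$, recovering the path regime $m\le O(2^{n/2})$ at $d=1$.

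The main obstacle is the careful bookkeeping in step three: one must show that the $d$-dimensional fan-out can indeed be pipelined so that routing does not dominate block execution up to the full claimed range of $m$, and that distinct blocks in the same layer do not contend for edges of the grid. I would handle contention by partitioning the grid into axis-aligned subgrids, each responsible for one block and each containing its own ancilla block, and then process blocks by axis-sweep. For tightness, the depth lower bound $\Omega(4^n/(n+m))$ follows from the parameter count of $n$-qubit unitaries (about $4^n$) together with the fact that any layer contains at most $\lfloor(n+m)/2\rfloor$ disjoint two-qubit gates, and the size lower bound $\Omega(4^n)$ follows from the same parameter count since each gate carries $O(1)$ real parameters; both are already known to hold without any connectivity constraint, so they certainly hold under the grid constraint.
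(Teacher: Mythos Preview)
Your ancilla-free argument is correct and matches the paper: the grid has a Hamiltonian path, so the path bound transfers verbatim.

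For the ancilla case, the plan has a genuine gap. The phrase ``decompose the target unitary into $O(4^n/(n+m))$ elementary blocks \ldots\ each executed in $O(1)$ amortized depth'' is not a construction --- it is the conclusion restated. You have not said what these blocks are or how to obtain such a decomposition for an arbitrary $n$-qubit unitary. The paper does not work this way. Its actual chain is: (i) cosine--sine decomposition of $U$ into $2^n-1$ many $n$-qubit uniformly controlled gates (Lemma~\ref{lem:unitary_decomposition}); (ii) each UCG reduces to three $n$-qubit diagonal unitaries plus four single-qubit gates (Lemma~\ref{lem:UCG_decomposition}); (iii) each diagonal unitary is realised via the five-stage Gray-code framework (suffix copy, Gray initial, prefix copy, Gray cycle, inverse) of Appendix~\ref{append:diag_with_ancilla}. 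The grid enters \emph{only} in the copy stages: Lemma~\ref{lem:copy_grid} shows that broadcasting an $n'$-bit string across the full grid costs depth $O((n')^2+\sum_i n_i)=O(n^2+d(n+m)^{1/d})$, rather than $O(m)$ as on a path. The Gray initial and Gray cycle stages are still executed along a Hamiltonian path inside each local register and cost $O(2^{n-p})$ exactly as in the path construction. Summing gives diagonal depth $O(n^2+d2^{n/(d+1)}+2^n/(n+m))$, and multiplying by the $O(2^n)$ UCGs yields the GUS bound; the threshold $m=O(2^{dn/(d+1)}/d)$ is precisely where the last term stops dominating.

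Your proposed contention-handling idea --- partition the grid into axis-aligned subgrids, one per ``block'', and process them in parallel --- cannot work for GUS: all $2^n-1$ UCGs act on the same $n$ input qubits and must be applied sequentially, so there is nothing to parallelise across subgrids. The parallelism the grid buys is inside the implementation of a \emph{single} diagonal unitary, and it is confined to the copy/fan-out step. The lower-bound paragraph is fine.
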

We make several remarks.
%
First, in later sections, we give circuit constructions for $d$-dimensional grids of general sizes $n_1\times \cdots \times n_d$, which include bilinear chains as a special case. 
Of particular importance are the cases $d=2$ and $d=3$, which correspond to practical implementations of superconducting processors. 
Second, some graphs, such as the brick-wall graph found in some IBM processors, do not fall into this family, but we shall show how it reduces to the 2D grid with a similar (and tight) bound. 
Third, for $m$ larger than $O(2^{dn/(d+1)}/d)$, upper and lower bounds are also given. 

The second family of graphs are the complete $d$-ary trees.
\begin{theorem}
    For all $m\ge 0$, any $n$-qubit unitary 
    can be realized by a quantum circuit of depth \[\Tilde{O}\left(dn2^n + \frac{(n+d)4^n}{n+m}\right)\] and size $O(4^n)$ under complete $d$-ary tree (with $n+m$ vertices) constraint, using $m$ ancillary qubits. In particular, when no ancillary qubits are available, the required circuit depth is $O(4^n)$, and this is optimal up to a factor of $O(n/d)$.
\end{theorem}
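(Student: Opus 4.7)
The plan is to reduce the problem of implementing a general $n$-qubit unitary to implementing $2^n$ quantum state preparation (\qsp) unitaries, and to give an efficient \qsp\ construction tailored to the complete $d$-ary tree. This mirrors the column-by-column (Householder) strategy used in the path and grid constructions elsewhere in the paper; only the \qsp\ subroutine needs to be reworked for the tree geometry.

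For the \qsp\ step, I would pipeline amplitudes through the tree level by level. Since the tree has height $\Theta(\log_d(n+m))$ and each internal node must address its $d$ children serially, a single ``level-step'' costs depth $O(d)$, while the $d$ subtrees of the root can be exercised in parallel. With $n+m$ qubits and careful pipelining, I would aim for a \qsp\ circuit of depth $\Tilde{O}\bigl(dn + (n+d)\,2^n/(n+m)\bigr)$: the first summand captures the worst-case traversal cost through the tree, and the second is the standard state-preparation depth using $n+m$ qubits, inflated by $(n+d)/n$ to reflect the bandwidth bottleneck near the root. Composing $2^n$ such \qsp{}s then yields depth
\[
2^n \cdot \Tilde{O}\!\left(dn + \frac{(n+d)\,2^n}{n+m}\right) \;=\; \Tilde{O}\!\left(dn\,2^n + \frac{(n+d)\,4^n}{n+m}\right),
\]
while a careful accounting of the gate counts across all $2^n$ \qsp{}s telescopes to the $O(4^n)$ size bound.

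For the lower bound in the $m=0$ case, I would use a combinatorial observation: the maximum matching in a complete $d$-ary tree on $n$ vertices is $\Theta(n/d)$. Indeed, the $\approx (d-1)n/d$ leaves can each only be matched to their unique parent, and each of the $\approx n/d$ internal nodes can appear in at most one matched edge. Therefore every circuit layer contains only $O(n/d)$ disjoint two-qubit gates, so the $\Omega(4^n)$ gate-count lower bound inherited from the unrestricted setting forces depth $\Omega(d\cdot 4^n/n)$, matching the $O(4^n)$ upper bound up to the claimed $O(n/d)$ factor.

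The main obstacle is the tree-\qsp\ subroutine: one must exploit the $d$-fold branching to parallelize while managing the single-vertex bottleneck at the root and the fact that information cannot reach distant subtrees in fewer than $\Theta(\log_d(n+m))$ SWAP rounds. Obtaining the correct $(n+d)/(n+m)$ scaling requires pipelining so that nearly every edge of the tree is active in nearly every layer, analogous in spirit to the path construction but adapted to the tree's branching structure; once this \qsp\ is in place, the reduction to arbitrary unitaries is routine.
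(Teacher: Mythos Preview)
Your lower-bound argument via the maximum matching is exactly what the paper does (Theorem~\ref{thm:depth_lower_bound_graph} specialized to $\Tree_n(d)$), so that part is fine. The issue is the upper bound.

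The paper does \emph{not} use a column-by-column Householder reduction to \qsp, neither for trees nor for paths and grids. The reduction it uses everywhere is the cosine--sine decomposition: an $n$-qubit unitary factors into $2^n-1$ \emph{$n$-qubit} uniformly controlled gates (Lemma~\ref{lem:unitary_decomposition}), each of which is three $n$-qubit diagonal unitaries plus four single-qubit gates (Lemma~\ref{lem:UCG_decomposition}). The tree-specific content is then a construction of an $n$-qubit \emph{diagonal} unitary on $\Tree_{n+m}(d)$ in depth $\tilde O\bigl(nd+(n+d)2^n/(n+m)\bigr)$ (Lemma~\ref{lem:diag_tree_ancilla}); multiplying by $O(2^n)$ gives the stated bound.

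Your route has a real gap. You need \qsp\ on $\Tree_{n+m}(d)$ in depth $\tilde O\bigl(dn+(n+d)2^n/(n+m)\bigr)$, but the sketch (``pipeline amplitudes through the tree level by level'') does not establish this, and the constructions actually in the paper do not meet it: the UCG-based \qsp\ (Theorem~\ref{thm:QSP_darytree}) gives $\tilde O\bigl(n^2d+(n+d)2^n/(n+m)\bigr)$, and even the improved unary-encoding approach for binary trees (Appendix~\ref{append:binary_tree_improvement}) still carries an $n^2$ factor. With the \qsp\ bounds available, composing $2^n$ of them yields $\tilde O\bigl(n^2d\,2^n+\cdots\bigr)$, an extra factor of $n$ that $\tilde O$ does not absorb. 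The reason the paper avoids this is that the UCG decomposition produces $2^n-1$ objects that are each a \emph{single} $n$-qubit diagonal (overhead $\tilde O(nd)$), whereas your Householder route produces $2^n$ \qsp{}s each of which, in the paper's framework, is itself $n$ UCGs of growing size (overhead $\tilde O(n^2d)$). If you want to salvage the \qsp\ route you would need a genuinely new tree-\qsp\ with linear-in-$n$ overhead; otherwise, switch to the UCG $\to$ diagonal reduction.
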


As qubit connectivity in real devices can vary greatly (see \cite{IBMQ} for a few examples), we also study circuit size under general graph constraints. We show: 

\begin{theorem}\label{thm:US_size_upper_intro}
Any $n$-qubit unitary matrix can be implemented by a quantum circuit of size $O(4^n)$ under arbitrary connected graph constraints.
\end{theorem}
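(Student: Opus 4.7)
The plan is to reduce the problem to the path-constrained construction of Theorem~\ref{thm:US_noancilla_path_intro}. Given a connected graph $G = (V, E)$ on $n$ vertices, I first fix a spanning tree $T \subseteq E$ of $G$ and perform a depth-first traversal of $T$ from any root. Since the traversal uses each of the $n-1$ tree edges exactly twice, the ordering $v_1, v_2, \ldots, v_n$ of $V$ by first appearance satisfies $\sum_{i=1}^{n-1} d_T(v_i, v_{i+1}) \le 2(n-1)$, where $d_T$ denotes graph distance in $T$.

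Given the target unitary $U$, I apply Theorem~\ref{thm:US_noancilla_path_intro} to obtain a circuit $C$ of depth $D = O(4^n/n)$ and size $O(4^n)$ implementing $U$ on a \emph{logical} path of $n$ qubits. I identify logical qubit $i$ with physical qubit $v_i \in V(G)$. Single-qubit gates in $C$ are applied directly on the corresponding physical qubit. Each 2-qubit gate of $C$ acting on logical edge $(i,i+1)$ is simulated by SWAP conjugation along $T$: $d_T(v_i, v_{i+1}) - 1$ SWAPs down the $v_i$-to-$v_{i+1}$ path in $T$ bring the relevant qubits into adjacency in $G$, the original gate is applied, and the SWAPs are reversed to restore the qubit placement. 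Each such simulation costs $O(d_T(v_i,v_{i+1}))$ 2-qubit gates, all on edges of $T \subseteq E$.

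Summing over the 2-qubit gates of $C$, the size contribution is $O\bigl(\sum_{i=1}^{n-1} s_i \cdot d_T(v_i, v_{i+1})\bigr)$, where $s_i$ is the number of 2-qubit gates of $C$ on logical edge $(i,i+1)$. The key bound is $s_i \le D$: any single edge of the logical path can host at most one 2-qubit gate per layer, and $C$ has only $D$ layers. Combining this uniform cap with the DFS distance bound yields $\sum_{i=1}^{n-1} s_i \cdot d_T(v_i, v_{i+1}) \le D \cdot 2(n-1) \le O(4^n/n) \cdot 2(n-1) = O(4^n)$. Adding the $O(4^n)$ cost from single-qubit gates (which are at most the size of $C$) gives total size $O(4^n)$ on $G$, as required.

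I do not foresee a serious obstacle. The two technical inputs---the Euler-tour distance bound and the per-layer cap $s_i \le D$---are clean and follow respectively from standard tree theory and from the fact that Theorem~\ref{thm:US_noancilla_path_intro} simultaneously guarantees size $O(4^n)$ and depth $O(4^n/n)$. The only care needed is verifying that each SWAP-conjugation step is realized in $O(d_T(v_i,v_{i+1}))$ 2-qubit gates and no more, which is routine.
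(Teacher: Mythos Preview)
Your proposal is correct, and it takes a genuinely different route from the paper.

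The paper proves this theorem (stated as Theorem~\ref{thm:US_graph}) by building the circuit directly on the arbitrary graph $G$: it decomposes $U$ into $2^n-1$ UCGs (Lemma~\ref{lem:unitary_decomposition}), each UCG into three diagonal unitaries (Lemma~\ref{lem:UCG_decomposition}), and then shows that each $n$-qubit diagonal unitary has size $O(2^n)$ under arbitrary graph constraint (Lemma~\ref{lem:diag_graph_withoutancilla}). The diagonal-unitary construction uses a DFS labeling of a spanning tree together with a multi-target CNOT implementation (Lemma~\ref{lem:Ck_graph}); the same DFS distance bound $\sum_i d_T(v_i,v_{i+1})=O(n)$ (Lemma~\ref{lem:dfs_distance}) appears there, but is used inside each $C_{p.1}$ step rather than globally.

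Your argument is a black-box reduction: you invoke the full path result (both its size \emph{and} depth bounds) and simulate it on $G$ via DFS ordering plus SWAP conjugation. The crucial observation that $s_i\le D$ is what makes the bound $\sum_i s_i\cdot d_T(v_i,v_{i+1})\le D\cdot 2(n-1)=O(4^n)$ work; without the depth guarantee $D=O(4^n/n)$ from Theorem~\ref{thm:US_noancilla_path_intro}, the naive bound $\sum_i s_i\cdot d_T\le S\cdot\max_i d_T$ would give only $O(n4^n)$. So your approach shows, more generally, that any depth-$D$, size-$S$ path circuit can be ported to an arbitrary connected graph with size $O(nD+S)$. This is elegant and avoids reopening the diagonal-unitary machinery, at the cost of relying on the path depth bound as an opaque input; the paper's route is more self-contained and constructive on $G$.
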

This result is tight, as the circuit size lower bound is $\Omega(4^n)$ even assuming all-to-all connectivity \cite{shende2004minimal}. 
This implies that for almost all unitary operations, arbitrary graph constraints do not impact the required circuit size. 

The results above, along with others summarized in Table \ref{tab:US_graph}, relate to the challenge of General Unitary Synthesis (GUS), i.e., the implementation of general $n$-qubit unitary operations. Similar to size complexity, our circuit constructions apply to the worst case (i.e. work for all unitary operations), and our lower bounds hold for generic (i.e., almost all) unitaries, which make our results stronger.

\begin{table}[hbt]
    \centering
    \caption{Circuit depth bounds for $n$-qubit general unitary synthesis (GUS) under graph constraints, using $m$ ancillary qubits. All graphs have $n+m$ vertices. The $(n_1,\ldots,n_d)$-Grid is a $d$-dimensional grid of size $n_1\times n_2\times \cdots \times n_d$ with $n_1\ge n_2\ge\cdots \ge n_d\ge 1$. In the (complete) $d$-ary tree, every non-leaf node has exactly $d$ children. $d$ can be $2$ (a binary tree) and $n+m-1$ (a Star). The last column gives ranges of $m$ where our upper and lower bounds match.}
    \resizebox{\textwidth}{!}{\begin{tabular}{c|cc|c}
    \hline
      Graph  &  Depth upper bounds / $O(\cdot)$  & Depth lower bounds / $\Omega(\cdot)$   &Optimal range of $m$  \\
     \hline
     \hline
     \multirow{2}*{Path}   & $4^{3n/4}+\frac{4^n}{n+m}$ & $4^{n/2}+\frac{4^n}{n+m}$ & \multirow{2}*{$0\le m\le O(2^{n/2})$}  \\
     & [Thm. \ref{thm:US_path_grid}]& [Thm. \ref{thm:lower_bound_grid_k_US}] &
\\      \hline
     
     \multirow{2}*{$(n_1,n_2)$-Grid}   & $4^{2n/3}+\frac{4^{3n/4}}{(n_2)^{1/2}}+\frac{4^n}{n+m}$ & $\max\left\{4^{n/3}, \frac{4^{n/2}}{(n_2)^{1/2}},\frac{4^n}{n+m}\right\}$ & \multirow{2}*{$0\le m\le O\big(\frac{2^n}{2^{n/3}+\frac{2^{n/2}}{(n_2)^{1/2}})}\big)$}  \\
     & [Thm. \ref{thm:US_path_grid}] &[Thm. \ref{thm:lower_bound_grid_k_US}] & \\
      \hline
     
     \multirow{2}*{$(n_1,\ldots,n_d)$-Grid}   & $n^22^n+d4^{\frac{(d+2)n}{2(d+1)}}+\max\limits_{j\in\{2,\ldots,d\}}\big\{\frac{d4^{(j+1)n/(2j)}}{(\Pi_{i=j}^d n_i)^{1/j}}\big\}+\frac{4^n}{n+m}$ & $ n+4^{\frac{n}{d+1}}+\max\limits_{j\in [d]}\big\{\frac{4^{n/j}}{(\Pi_{i=j}^d n_i)^{1/j}}\big\}$& \multirow{2}*{$0\le m\le O\big(\frac{2^n}{n^2+d2^{\frac{n}{d+1}}+\max\limits_{j\in\{2,\ldots,d\}}\big\{\frac{d2^{n/j}}{(\Pi_{i=j}^d n_i)^{1/j}}\big\}}\big)$}  \\
     &[Thm. \ref{thm:US_path_grid}] & [Thm. \ref{thm:lower_bound_grid_k_US}] & \\
      \hline
      
      \multirow{2}*{Binary Tree} & $n^2\log(n)2^n+\frac{\log(n)4^n}{n+m}$& $\max\left\{n,\frac{4^n}{n+m}\right\}$ & \multirow{2}*{optimal up to $\log (n)$ when $m \le O(2^n/n^2)$}\\
      & [Thm. \ref{thm:US_tree}] & [Thm. \ref{thm:lower_bound_tree_US}] & \\
      \hline
      
      \multirow{2}*{$d$-ary Tree} & $n2^nd\log_d (n+m)\log_d(n+d)+\frac{(n+d)\log_d(n+d) 4^{n}}{n+m}$& $\max\left\{n,\frac{d4^n}{n+m}\right\}$& \multirow{2}*{optimal up to $n\log (n)$ when $m \le O(2^n/d\log (n))$}  \\
      & [Thm. \ref{thm:US_tree}] & [Thm. \ref{thm:lower_bound_tree_US}] &\\
     \hline
     
     \multirow{2}*{Star}  & $4^n$  &  $4^n$ & \multirow{2}*{$m\ge 0$}  \\
     &[Thm. \ref{thm:US_tree}] & [Thm. \ref{thm:lower_bound_tree_US}] &\\
     \hline
     
     \multirow{2}*{Expander} & $n^22^n+\frac{\log(m)4^n}{n+m}$ & $\max\left\{n,\frac{4^n}{n+m}\right\}$ & \multirow{2}*{optimal up to $n$ when $m \le O(2^n/n)$}\\
     &[Thm. \ref{thm:US_expander_graph}] &[Thm. \ref{thm:lower_exapnder_US}] &\\
     \hline
    \end{tabular}}
    \label{tab:US_graph}
\end{table}

\paragraph{Circuit complexity for special families of unitaries}

While the above results for GUS tell us what we can hope for in a generic solution for all $n$-qubit unitaries, special families of unitary operations warrant further study. Firstly, by utilizing the structure of particular unitaries, one may design better constructions (in particular we desire  $poly(n)$-depth circuits where possible). Secondly, by focusing on special tasks, one may derive tighter circuit complexity bounds, which may elucidate the effects of connectivity constraints. We study three special families of unitary operations: 
\begin{enumerate}
    \item Diagonal unitaries.
    \item 2-by-2 block diagonal unitaries.
    \item Quantum state preparation (QSP) unitaries.
\end{enumerate}

These three families are closely related and have all been extensively studied in quantum circuit theory. 
For brevity, here we discuss QSP only (for diagonal or 2-by-2 block diagonal unitaries, refer to \cite{bergholm2005quantum,mottonen2005decompositions,plesch2011quantum}). QSP is an important subroutine in many quantum machine learning algorithms \cite{lloyd2014quantum,kerenidis2017quantum,rebentrost2018quantum,kerenidis2020quantum, harrow2009quantum,wossnig2018quantum,kerenidis2019q,rebentrost2014quantum} 
and Hamiltonian simulation algorithms \cite{low2017optimal,berry2015hamiltonian,low2019hamiltonian}, and has been the subject of increasing attention \cite{zhang2021low, sun2021asymptotically,yuan2022optimal,rosenthal2021query,johri2021nearest}, culminating at \cite{yuan2022optimal} achieving the optimal depth for any number of ancillary qubits. 

For QSP, we can again consider circuit size under general graph constraints, and circuit depth for grids and complete $d$-ary tree graphs. We have the following results.
\begin{theorem}
    An $n$-qubit QSP unitary can be implemented by a quantum circuit of size $O(2^n)$ under any graph constraint. 
\end{theorem}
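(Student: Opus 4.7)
The plan is to reduce to the case when the constraint graph is a spanning tree of $G$ and then to construct a QSP circuit of size $O(2^n)$ that respects any tree. Since any spanning subgraph only tightens the connectivity constraint, I would replace $G$ by a spanning tree $T$ of $G$; it then suffices to bound the circuit size on $T$.

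For the tree case I would use the standard recursive QSP scheme. Let $v$ be a leaf of $T$, and decompose the target state as $|\psi\rangle=\sum_{y\in\{0,1\}^{n-1}}r_y\,|y\rangle|\chi_y\rangle$, where $y$ indexes bit-strings on the other $n-1$ qubits, $r_y\ge 0$ is a marginal weight, and $|\chi_y\rangle$ is the induced single-qubit state on $v$. Setting $|\phi\rangle=\sum_y r_y|y\rangle$, one has $|\psi\rangle=\mathrm{UCR}\cdot\bigl(|\phi\rangle\otimes|0\rangle_v\bigr)$, where $\mathrm{UCR}$ is a uniformly controlled single-qubit rotation on target $v$ with controls on $V(T)\setminus\{v\}$. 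Preparing $|\phi\rangle$ recursively on the subtree $T\setminus\{v\}$ yields the recurrence
\[
\mathrm{size}(n,T)\le\mathrm{size}(n-1,\,T\setminus\{v\})+\mathrm{cost}_T(\mathrm{UCR}_{n-1}),
\]
so the theorem follows once one establishes $\mathrm{cost}_T(\mathrm{UCR}_k)=O(2^k)$ on every tree with $k+1$ vertices.

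The main obstacle is implementing a $k$-control UCR on a tree in $O(2^k)$ size, avoiding the naive $O(n\cdot 2^k)$ blowup that arises from routing every CNOT through SWAPs along a tree-path of length up to $n$. In the unrestricted setting, a UCR admits the demultiplexer decomposition $\mathrm{UCR}_k=A\cdot\mathrm{CNOT}_{c\to v}\cdot B\cdot\mathrm{CNOT}_{c\to v}$, where $c$ is a chosen control and $A,B$ are $(k-1)$-control UCRs on the remaining controls with the same target $v$. I would always pick $c$ to be a tree-neighbour of the physical qubit currently holding the logical target, apply the two sub-UCRs recursively, and interleave $O(1)$ SWAPs so that the logical target ``walks'' along the edges of $T$. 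Inductively this gives $\mathrm{cost}_T(\mathrm{UCR}_k)\le 2\cdot\mathrm{cost}_{T'}(\mathrm{UCR}_{k-1})+O(1)$ for a smaller tree $T'$, which solves to $O(2^k)$.

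Summing over the outer recursion then yields $\mathrm{size}(n,T)=\sum_{k=0}^{n-1}O(2^k)=O(2^n)$, as required. The most delicate step is formalising the SWAP amortisation when the logical target is not at a leaf of the current tree; I expect this to need an inductive hypothesis of the form \emph{``UCR with target anywhere on a tree of $k+1$ vertices costs at most $a\cdot 2^k$''} together with a careful case analysis based on where the chosen control $c$ sits in the subtree, so that the post-swap configuration remains a tree compatible with the recursion.
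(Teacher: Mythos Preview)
Your outer reduction (spanning tree, then the standard QSP recursion peeling off one qubit at a time) is fine and is essentially the same outer structure the paper uses. The issue is in the inner step, the claim that a $k$-control UCR on a tree costs $O(2^k)$ via the recurrence $\mathrm{cost}(k)\le 2\,\mathrm{cost}(k-1)+O(1)$ using ``$O(1)$ SWAPs so that the logical target walks along the edges of $T$''. This recurrence holds on a path but breaks on general trees. Concretely, once the logical target sits at a vertex $v$ none of whose neighbours in the current tree is itself a leaf, you are stuck: if you peel the neighbouring control $c$ and SWAP, the post-swap active set $V(T)\setminus\{v\}$ is disconnected (since $v$ was not a leaf); if you peel $c$ and do not SWAP, the active set $V(T)\setminus\{c\}$ is disconnected (since $c$ was not a leaf). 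Strengthening the inductive hypothesis to ``target anywhere'' does not help, because the obstruction is that no single-vertex deletion adjacent to the target yields a connected subtree. A small example is the tree with centre $v$ and three length-$2$ arms: after one peel-and-SWAP you land at a degree-$3$ vertex all of whose neighbours are internal, and no $O(1)$ local move both peels a control and keeps the remaining instance on a tree with one fewer vertex.

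The simple repair is to abandon the ``adjacent control $+$ $O(1)$ SWAP'' idea and instead always peel a \emph{leaf} control $\ell$ of the current tree, implementing $\mathrm{CNOT}_{\ell\to v}$ along the tree path at cost $O(d_T(\ell,v))$ (this routing preserves the intermediate qubits). Peeling the farthest leaf first gives $C(k)\le 2\,C(k-1)+O(d_k)$, where $d_1\le\cdots\le d_k$ are the sorted distances from $v$; unrolling yields $C(k)=O\bigl(\sum_{i}2^{k-i}d_i\bigr)$. In any connected graph the $i$-th closest vertex to $v$ is at distance at most $i$, so $\sum_i 2^{k-i}d_i\le\sum_i 2^{k-i}\,i=O(2^k)$, and the bound follows. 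This is a genuinely different argument from the paper's: the paper does not work with the demultiplexer at all but instead reduces UCGs to diagonal unitaries and then implements a diagonal $\Lambda_n$ on an arbitrary graph via a DFS labelling of a spanning tree together with a Gray-code framework. The key combinatorial fact there is that for DFS-consecutive vertices $\sum_i d_T(i,i+1)=O(n)$, which makes each multi-target CNOT in their construction cost $O(n)$ rather than $O(n^2)$. Your (repaired) demultiplexer argument is shorter and more elementary for the specific statement at hand; the paper's route is heavier but is what they reuse uniformly for diagonal unitaries, UCGs, QSP and GUS across all their graph classes.
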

This bound is tight, as QSP needs $\Omega(2^n)$ size even without any connectivity constraints \cite{plesch2011quantum}, and the presence of constraints does not increase the required circuit size. 

For $d$-dimensional grids, we prove asymptotically optimal circuit depth requirements for any constant $d$, and almost optimal results for larger $d$:
\begin{theorem}
    An $n$-qubit QSP unitary can be implemented by a quantum circuit of depth $O\left(2^{n/2} + \frac{2^n}{n+m}\right)$ under 1D chain constraint, depth $O\left(2^{n/3} + \frac{2^n}{n+m}\right)$ under 2D grid constraint, and depth $O\left(n^3+d2^{\frac{n}{d+1}}+\frac{2^n}{n+m}\right)$ under $d$-dimensional grid of size $(n+m)^{1/d}\times \cdots \times (n+m)^{1/d}$ constraint, using $m\ge 0$ ancillary qubits. These bounds are tight for any constant $d$, and off by at most a factor of $d$ for $d(n)=\omega(1)$.
\end{theorem}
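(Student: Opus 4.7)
The plan is to prove matching upper and lower bounds separately, with the upper bound via a divide-and-conquer QSP construction tailored to the grid geometry, and the lower bound via a combination of counting and light-cone arguments. I expect the construction of the $O(2^{n/(d+1)})$ term in the upper bound to be the main obstacle, since naive SWAP-based routing in a $d$-dim grid would incur a diameter factor that is too large.

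For the lower bound, I would combine two independent arguments. First, a gate-counting argument: a generic $n$-qubit pure state has $\Omega(2^n)$ independent real parameters, while a $d$-dim grid on $n+m$ vertices has $O(d(n+m))$ edges and hence at most $O(d(n+m))$ two-qubit gates per layer, yielding depth $\Omega(2^n/(d(n+m)))$. Second, a light-cone argument: fix the $n$ output qubits; at time $T-t$ before measurement, their joint past light-cone in the grid contains at most $O(n + t^d)$ qubits, so the total number of gates that can influence the output over depth $T$ is $O(\sum_{t=1}^{T}(n + t^d)) = O(nT + T^{d+1})$. Since generic QSP requires $\Omega(2^n)$ independent parameters to be set by the circuit, this forces $T = \Omega(2^{n/(d+1)})$. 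Combining, we get $\Omega(2^{n/(d+1)} + 2^n/(d(n+m)))$, which matches the stated upper bound up to a factor of $d$ (the polynomial $n^3$ term is of lower order).

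For the upper bound, I would recursively decompose $\ket{\psi} = \sum_y \beta_y \ket{y}\otimes\ket{\phi_y}$ where $y$ ranges over the first $k$ qubits and $\ket{\phi_y}$ is an $(n-k)$-qubit state. This reduces QSP to preparing a $k$-qubit distribution state followed by a uniformly controlled state preparation (UCSP) producing $\ket{\phi_y}$ conditioned on $y$. For the $1$D chain, choosing $k \approx n/2$ and placing the control block at one end of the chain, the UCSP can be implemented in depth $O(2^{n/2} + 2^n/(n+m))$ by laying out the $2^k$ ``branches'' of the controlled preparation along the chain and using the ancillary block as a parallel workspace, adapting the ancilla-aided QSP construction of \cite{yuan2022optimal} to the linear geometry. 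For the $d$-dim grid, I would iterate this decomposition $d+1$ times along the $d$ grid directions plus one internal direction, so that each level of recursion ``consumes'' roughly $n/(d+1)$ qubits of parameter space and contributes $O(2^{n/(d+1)})$ depth, leading to the total $O(d \cdot 2^{n/(d+1)} + 2^n/(n+m))$. The $n^3$ term would absorb the polynomial overhead coming from the inter-subgrid routing and from reducing UCSPs to diagonal unitaries (using the results cited for diagonal and $2\times 2$ block diagonal families).

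The main obstacle is showing that the UCSP step can be parallelized within a $d$-dim subgrid without paying an extra diameter factor. The key idea I would pursue is that the $2^k$ different rotations comprising a UCSP act on disjoint amplitudes and can be scheduled in parallel once the control information is broadcast across the subgrid; broadcasting in a $d$-dim grid with $(n+m)^{1/d}$ side length takes only $O((n+m)^{1/d})$ depth, which is dominated by the $2^{n/(d+1)}$ term in the regime where that term matters. Verifying that all bookkeeping SWAPs fit within this budget, and that the recursion correctly combines across all $d$ grid directions, would be the most delicate part of the argument.
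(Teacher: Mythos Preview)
Your lower bound is correct and matches the paper's approach: the paper formalizes the light-cone idea via ``reachable subsets'' of a directed graph associated to the circuit (Theorem~\ref{thm:reachable-set-bounds}), bounds $|S_i'| = O(n + (D-i)^d)$ on the grid exactly as you sketch, and combines with the parameter count $\sum_i |S_i'| \ge \Omega(2^n)$.

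Your upper bound has a genuine gap. The mechanism you propose---broadcast the control across the subgrid, then run the $2^k$ conditional rotations of the UCSP in parallel---does not work as stated: those rotations act on the \emph{same} physical target qubits for different control values $y$ and cannot be spatially separated, and if instead you broadcast once per control value you pay $2^k$ times the diameter, overshooting the target. The ``$d{+}1$ recursions along grid directions'' does not help either, since the innermost UCSP still carries control width close to $n$ and is essentially an $n$-qubit UCG. The paper's route is different: reduce each UCG to diagonal unitaries (Lemma~\ref{lem:UCG_decomposition}), expand $\theta(x)=\sum_s \alpha_s\langle s,x\rangle$ via Walsh--Hadamard, and distribute the $2^n$ strings $s$ across $\Theta(m)$ ancilla targets by suffix, cycling through the $2^{n-p}$ prefixes by a Gray code so each step flips one bit and needs one CNOT per target. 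To make these CNOTs local on the path, the copy and target qubits are \emph{interleaved} and an auxiliary register holding the $\Theta(\log n)$ most-frequently-flipped prefix bits is placed next to each target; the Gray-code structure guarantees that CNOTs of distance $k$ occur with relative frequency $2^{-k}$, so the distance-weighted series sums to $O(1)$ overhead per step. Only the one-time copy stages cost grid-diameter depth $O(\sum_i n_i)$, and balancing that against $2^n/m$ is what produces the $d\,2^{n/(d+1)}$ term. These layout and Gray-code devices are precisely what ``adapting \cite{yuan2022optimal} to the linear geometry'' requires, and your proposal does not supply them.
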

For trees, we give circuit constructions whose depth is optimal if $m$ is not too large.
\begin{theorem}
    An $n$-qubit QSP unitary can be implemented by a quantum circuit of depth $\tilde O\left(n^2 2^{n} + 4^n/(n+m)\right)$ under complete binary tree constraint, depth $\tilde O\left(d n 2^n + (n+d)4^{n}/(n+m)\right)$ on complete $d$-ary tree constraint, and depth $O\left(4^n\right)$ under star graph constraint, using $m\ge 0$ ancillary qubits. The bound for the star graph is tight, and the bound for general complete $d$-ary trees is tight for $m = O(2^n/n^2 d)$.
\end{theorem}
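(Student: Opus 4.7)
My plan is to adapt the $d$-ary-tree constructions used for the earlier general unitary synthesis (GUS) theorem to the more structured task of QSP, exploiting the fact that an $n$-qubit QSP unitary is determined by only $2^n$ complex amplitudes and admits a recursive decomposition along any rooted tree. For the complete $d$-ary tree, I would combine two constructions that dominate in different regimes of $m$. For small $m$, the plan is an \emph{ancilla-light sequential} construction: write the QSP unitary as a product of $\Theta(2^n)$ uniformly-controlled single-qubit rotations in the Plesch--Brukner/M\"ott\"onen style, order these rotations along a rooted traversal of the tree so that consecutive rotations share most of their control qubits, and implement each rotation by $O(dn)$ depth of routing along tree-paths followed by a single-qubit rotation. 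This gives the $\tilde{O}(dn 2^n)$ term. For larger $m$, I would embed the all-to-all optimal QSP circuit of \cite{yuan2022optimal} into the tree by replacing each non-local gate by SWAPs along the shortest tree-path; scheduling the disjoint-subtree operations in parallel should give the $(n+d)4^n/(n+m)$ term. Taking the minimum of the two constructions yields the stated depth; for the binary tree, a tighter recursion based on the tree's height $\log_2(n+m)$ rather than its arity yields the refined $\tilde{O}(n^2 2^n + 4^n/(n+m))$ form.

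For the star graph, every two-qubit gate must involve the center, so at most one such gate can be applied per layer. I would implement QSP by executing a generic QSP circuit gate-by-gate, using $O(1)$ SWAPs through the center to route the participating qubits, yielding depth $O(4^n)$. The matching lower bound follows from the same counting argument as the GUS star bound in Table~\ref{tab:US_graph}, specialized to QSP. Tightness of the general $d$-ary tree bound in the regime $m = O(2^n/(n^2 d))$ is obtained by combining a graph-theoretic diameter lower bound with a counting argument on the $2^n$ amplitudes, so that the two upper-bound terms and the lower bound all coincide at this value of $m$.

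The main obstacle, I expect, is obtaining the $dn 2^n$ term of the small-$m$ construction without paying an extra factor of the tree's diameter at every step: a naive Euler-tour-based implementation of the $2^n$ rotations would give $O(d 2^n \log_d(n+m))$ depth per rotation, which is off by a factor of roughly $n/\log_d n$. Overcoming this requires an amortized routing analysis showing that, when consecutive uniformly-controlled rotations share many control qubits, the routing cost telescopes, leaving a total routing overhead of only $O(dn)$ across all $2^n$ rotations. A related subtlety is verifying that the embedding of the all-to-all construction into the tree remains efficient when $m$ is near the tightness regime, which requires carefully interleaving recursive block operations so that SWAP-paths for different blocks stay on disjoint subtrees and can be parallelized.
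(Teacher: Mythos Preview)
The statement contains a typo: it says ``QSP unitary'' but the listed bounds are the paper's \emph{GUS} bounds (Theorem~\ref{thm:US_tree}, Table~\ref{tab:US_graph}), not its QSP bounds (Theorem~\ref{thm:QSP_tree}, Table~\ref{tab:QSP_graph}), which have $2^n$ in place of $4^n$ throughout and a polynomial-in-$n$ first term. Taken at face value the statement is a one-line corollary of the GUS theorem, since any QSP unitary is in particular an $n$-qubit unitary. You did not notice this mismatch and set out to build a dedicated QSP construction, which is both unnecessary here and, as a QSP result, would be far from optimal.

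Independent of the typo, your approach is not the paper's. The paper never routes an all-to-all circuit through the tree via SWAPs, and it never amortizes routing costs across sequential controlled rotations. All of its tree results (for QSP and GUS alike) rest on bespoke circuits for \emph{diagonal unitaries} under the tree constraint (Lemmas~\ref{lem:diag_tree_withoutancilla} and~\ref{lem:diag_tree_ancilla}): partition the tree into size-$O(\log n)$ subtrees; place control/target (and, with ancilla, copy/auxiliary) registers on designated nodes; and choose the Gray codes so that the number of CNOTs crossing distance $k$ decays like $2^{-k}$. A UCG is then three diagonal unitaries plus four single-qubit gates (Lemma~\ref{lem:UCG_decomposition}); QSP is $n$ UCGs of growing size (Lemma~\ref{lem:QSP_UCGs}) and GUS is $2^n-1$ $n$-qubit UCGs (Lemma~\ref{lem:unitary_UCGs}); the final bounds are one-line summations of the diagonal-unitary costs.

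Your large-$m$ plan is exactly the ``trivial'' route the paper tabulates in Tables~\ref{tab:lambda-bounds} and~\ref{tab:lambda-bounds_ancilla} and then beats: naive SWAP-routing of a full all-to-all layer on a tree collides at the root and pays $\Theta(n+m)\cdot\diam(G)$ per layer, not $(n+d)$. Your small-$m$ ``telescoping routing'' is likewise not how the M\"ott\"onen decomposition behaves on a tree: successive control-qubit changes follow a Gray code on $[n]$, and on a tree two consecutive control qubits can be at distance equal to the diameter, so no amortization occurs without the register/Gray-code design that is the paper's actual contribution. Finally, your star lower bound for QSP is off by a square: the matching-number argument (Theorem~\ref{thm:depth_lowerbound_QSP_graph} with $\nu=1$) gives $\Omega(2^n)$, not $\Omega(4^n)$; the $\Omega(4^n)$ bound is for GUS.
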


\begin{table}[]
    \centering
    \caption{Circuit depth bounds for $n$-qubit quantum state preparation (QSP) under graph constraints, using $m$ ancillary qubits. All graphs have $n+m$ vertices. The $(n_1,\ldots,n_d)$-Grid is a $d$-dimensional grid of size $n_1\times n_2\times \cdots \times n_d$ with $n_1\ge n_2\ge\cdots \ge n_d\ge 1$. In the (complete) $d$-ary tree, every non-leaf node has exactly $d$ children. $d$ can be $2$ (a binary tree) and $n+m-1$ (a Star). The last column gives ranges of $m$ where our upper and lower bounds match.
    }
    \resizebox{\textwidth}{!}{\begin{tabular}{c|c c |c}
    \hline
       Graph  &  Depth upper bounds / $O(\cdot)$  & Depth lower bounds/ $\Omega(\cdot)$  & Optimal range of $m$  \\
     \hline
     \hline
      \multirow{2}*{Path}   & $2^{n/2}+\frac{2^n}{n+m}$  & $2^{n/2}+\frac{2^n}{n+m}$  & \multirow{2}*{$m\ge 0$}  \\
      &[Thm. \ref{thm:QSP_grid_main}] &[Thm. \ref{thm:lower_bound_grid_k_QSP}] & \\
      \hline
      
     \multirow{2}*{$(n_1,n_2)$-Grid}   & $2^{n/3}+\frac{2^{n/2}}{(n_2)^{1/2}}+\frac{2^n}{n+m}$  & $\max\big\{2^{n/3}, \frac{2^{n/2}}{(n_2)^{1/2}},\frac{2^n}{n+m}\big\}$  & \multirow{2}*{$m\ge 0$}  \\
     &[Thm. \ref{thm:QSP_grid_main}]] &[Thm. \ref{thm:lower_bound_grid_k_QSP}]  & \\
       \hline
       
     \multirow{2}*{$(n_1,\ldots,n_d)$-Grid }  & $n^3+d2^{\frac{n}{d+1}}+\max\limits_{j\in\{2,\ldots,d\}}\Big\{\frac{d2^{n/j}}{(\Pi_{i=j}^d n_i)^{1/j}}\Big\}+\frac{2^n}{n+m}$  & \multirow{2}*{$ n+2^{\frac{n}{d+1}}+\max\limits_{j\in [d]}\big\{\frac{2^{n/j}}{(\Pi_{i=j}^d n_i)^{1/j}}\big\}$ } & if $d$ is a constant, $m\ge 0$; \\
    
     &[Thm. \ref{thm:QSP_grid_main}] &[Thm. \ref{thm:lower_bound_grid_k_QSP}]   & otherwise, $0\le m\le O\Big(\frac{2^n}{n^3+d2^{\frac{n}{d+1}}+\max\limits_{j\in\{2,\ldots,d\}}\big\{\frac{d2^{n/j}}{(\Pi_{i=j}^d n_i)^{1/j}}\big\}}\Big)$\\
      \hline
      
      \multirow{2}*{Binary Tree}   & $n^3\log(n)+\frac{\log(n)2^n}{n+m}$  & $\max\left\{n,\frac{2^n}{n+m}\right\}$ &  \multirow{2}*{optimal up to $\log (n)$ when $m \le O(2^n/n^3)$} \\
      &[Thm. \ref{thm:QSP_tree}] & [Thm. \ref{thm:lower_bound_tree_QSP}] & \\
      \hline
      
      \multirow{2}*{$d$-ary Tree}  & $n^2d\log_d (n+m)\log_d(n+d)+\frac{(n+d)\log_d(n+d) 2^{n}}{n+m}$  &  $\max\left\{n,\frac{d2^n}{n+m}\right\}$ & optimal up to $n\log (n)$ when $m \le O(2^n/nd\log n)$  \\
      & [Thm. \ref{thm:QSP_tree}] &[Thm. \ref{thm:lower_bound_tree_QSP}] & \\
      \hline
      
      \multirow{2}*{Star}  & $2^n$  & $2^n$& \multirow{2}*{$m\ge 0$}  \\
      & [Thm. \ref{thm:QSP_tree}] & [Thm. \ref{thm:lower_bound_tree_QSP}] & \\
      \hline
      
      \multirow{2}*{Expander}  & $n^3+\frac{\log(m)2^n}{n+m}$   & $\max\left\{n,\frac{2^n}{n+m}\right\}$  & \multirow{2}*{optimal up to $n$ when $m \le O(2^n/n^2)$} \\
      &[Thm. \ref{thm:QSP_expander}] &[Thm. \ref{thm:lower_exapnder_QSP}] &\\
     \hline
    \end{tabular}}
    \label{tab:QSP_graph}
\end{table}

Our results for QSP are summarized in Table \ref{tab:QSP_graph}. Now, we examine the effect of connectivity constraints on QSP circuits.

First, connectivity constraints make it harder to trade space for depth. Without connectivity constraints, tight bounds for QSP are known for any number $m$ of ancillary qubits~\cite{yuan2022optimal}: The optimal circuit depth is $O(n+2^n/(n+m))$ and the optimal size is $O(2^n)$. In particular, QSP circuit depth is polynomial (in fact, linear) in $n$ when sufficiently many ancilla are available. However, both constant-dimensional grid and $d$-ary tree constraints cause the required circuit depth to become exponential in $n$, regardless of the number of ancillary qubits. 

Second, more connectivity generally implies smaller depth, with the quantitative characterization depending on graphs. In both $d$-dimensional grids and $d$-ary trees, as $d$ grows larger (with the number of vertices roughly fixed), the diameter decreases, and the degree and expansion increase--- intuitively, the graph gets more connected. For grids, our results show that the circuit depth decreases with $d$, consistent with the intuition that greater connectivity enables shallower circuits. However, for $d$-ary trees, the required circuit depth increases slightly with $d$, reaching a maximum when $d$ takes its largest possible value (i.e, a star graph). This is because the size of a maximum matching also plays an important role in circuit depth---if the constraint graph does not contain a large matching, it limits how many two-qubit gates can be applied in parallel. Thus, it seems difficult to use one simple measure of graph connectivity to characterize its effect on circuit complexity.

\paragraph{Related work} 

\begin{table}[]
    \centering
    \caption{Previous circuit depths for $n$-qubit GUS and QSP under no qubit connectivity constraints.}
    \begin{tabular}{c|c|c|c}
       \hline
       Problem & Circuit depth & Number of ancilla $m$ & References\\
       \hline
      \multirow{2}*{QSP}  &$O\left(n+\frac{2^n}{n+m}\right)$& $m\ge 0$ &\cite{sun2021asymptotically,yuan2022optimal} \\
       & $\Omega\left(n+\frac{2^n}{n+m}\right)$& $m\ge 0$ &\cite{sun2021asymptotically}\\
       \hline
      \multirow{4}*{GUS}  &  $O\left(n2^n+\frac{4^n}{n+m}\right)$ & $m\ge 0$ & \cite{sun2021asymptotically}\\
             &  $O(n2^{n/2})$ & $m=\Theta(n4^n)$& \cite{rosenthal2021query}\\
            & $O\big(n2^{n/2}+\frac{n^{1/2}2^{3n/2}}{m^{1/2}}\big)$& $\Omega(2^n)\le m\le O(4^n/n)$ &\cite{yuan2022optimal}\\
             & $\Omega\left(n+\frac{4^n}{n+m}\right)$& $m\ge 0$ &\cite{sun2021asymptotically}\\
      \hline
    \end{tabular}
    \label{tab:related-work}
\end{table}
%
 The circuit sizes of $n$-qubit QSP and GUS are $\Theta(2^n)$ \cite{plesch2011quantum,bergholm2005quantum} and $\Theta(4^n)$ \cite{mottonen2005decompositions,shende2004minimal} in the absence of graph constraints, respectively. Circuit Depth for GUS and QSP in the absence of graph constraints has been widely investigated (see Table~\ref{tab:related-work}).
 There are some known circuit constructions for QSP and specific unitary synthesis under the path constraint. In \cite{mottonen2005decompositions}, the circuit size of any $n$-qubit uniformly controlled gate (UCG) and QSP circuit can be optimized to $O(2^n)$ under path constraint.
 { Ref.\cite{rosenbaum2013optimal} showed that the depth and size required for a general $n$-qubit-controlled 1-qubit gate are $\Theta(n^{1/k})$ and $\Theta(n)$, respectively, under $n^{1/k}\times \cdots \times n^{1/k}$ grid constraint. The paper also shows the same bounds for the Fanout operation with $n$ target qubits. Ref. \cite{herbert2018depth}  showed that there exist $n$-qubit circuits such that a multiplicative overhead of $\Omega(\log(n))$ on depth is needed under certain  constant-degree graph constraints, and there exist constant-degree graphs $G$ that such a logarithmic depth overhead is sufficient for any circuit on $G$.}
 
\paragraph{Organization} 
The rest of this paper is organized as follows. In Section \ref{sec:preliminaries}, we introduce notation and review some previous results. 
In Sections \ref{sec:diag_without_ancilla} and \ref{sec:diag_with_ancilla} we give circuit constructions for diagonal unitary matrices under various graph constraints, which are used in subsequent sections. We prove circuit depth and size upper bounds for QSP and GUS under various graph constraints in Section~\ref{sec:QSP_US_graph}, and prove corresponding lower bounds in Section \ref{sec:QSP_US_lowerbound}. We conclude in Section~\ref{sec:conlusion}.

\section{Preliminaries}
\label{sec:preliminaries}
\subsection{Notation}
Let $[n]$ denote the set $\{1,2,\cdots,n\}$. All logarithms $\log(\cdot)$ are taken base 2. Let $\mathbb{I}_n\in\mathbb{R}^{2^n\times 2^n}$ be the $n$-qubit identity operator. For any $x=x_1\cdots x_s\in\{0,1\}^s$, $y=y_1\cdots y_t\in\{0,1\}^t$, $xy$ denotes the $(s+t)$-bit string $x_1\cdots x_s y_1\cdots y_t\in\{0,1\}^{s+t}$. For $x=x_1\cdots x_n$, $ y=y_1y_2\cdots y_n$, the inner product of $x,y$ is $\langle x,y\rangle:=\oplus_{i=1}^n x_i\cdot y_i$, where  addition $\oplus$ and multiplication $\cdot$ are over the field $\mathbb{F}_2$. We use $x\oplus y$ to denote the bit-wise XOR of $x$ and $y$. For any quantum state $\ket{\psi}$ and qubit set $S$, $\ket{\psi}_S$ denotes the reduced quantum state corresponding to qubits in $S$. If $S=\{i\}$, we simply write $\ket{\psi}_{i}$ for $\ket{\psi}_{\{i\}}$. For sets $S$ and $T$, define $S-T:=\{x: x\in S \text{~and~} x\notin T\}$. 

An $n$-qubit quantum circuit implements a $2^n\times 2^n$ unitary transformation by a sequence of gates. The set of all single qubit gates and the 2-qubit CNOT gate can implement any unitary transformation, and is therefore said to be universal for quantum computation. We refer to circuits consisting of only these gates as \textit{standard quantum circuits}. All circuits in this paper are standard quantum circuits.

\subsection{Graph constraints}
Some implementation schemes of  real quantum computers have a notion of \textit{connectivity}. That is, two-qubit gates may only be implementable between certain pairs of qubits. This can be modelled by a graph $G = (V,E)$ with vertex and edge sets $V$ and $E$, respectively, where a two-qubit gate can be applied to qubits $(i,j)$ if and only if $(i,j)\in E$. We refer to $G$ as the \textit{constraint graph} of the circuit, and the corresponding circuit is said to be \textit{under $G$ constraint}. 
For any graph $G$,  $d_G(u,v)$ denotes the distance between vertices $u$ and $v$ in $G$, i.e, the number of edges on the shortest path from $u$ to $v$. The subscript $G$ is dropped when no confusion is caused. The diameter of $G$ is defined to be $\diam(G)\defeq \max_{u,v\in V}d(u,v)$. 


 A key question we consider is: 
What properties of the constraint graph influence quantum circuit complexity the most?
 To study this, we investigate three families of graphs: (i) grids, (ii) trees, and (iii) expanders. We also consider the general case with $m$ ancillary qubits available.
\paragraph{$d$-dimensional grids} These are graphs with vertex and edge sets:
\begin{align*}
V&=\{v_{i_1,i_2,\ldots,i_d}:\forall i_k\in[n_k],\forall k\in[d]\}, \\
E&=\left\{(v_{i_1,i_2,\ldots,i_d},v_{i_1+1,i_2,\ldots,i_d}), (v_{i_1,i_2,\ldots,i_d},v_{i_1,i_2+1,\ldots,i_d}),\ldots,(v_{i_1,i_2,\ldots,i_d},v_{i_1,i_2,\ldots,i_d+1}): \forall i_k\in[n_k-1], \forall k\in[d]\right\}.
\end{align*}


\begin{definition}
A quantum circuit on $n$ qubits will be said to be under $\Grid^{n_1,n_2, \ldots, n_d}_{n}$ constraint if the the constraint graph is a $d$-dimensional grid with $\prod_{k=1}^d n_k = n$. Without loss of generality, we assume that $n_1 \ge n_2 \ge \cdots\ge n_d$. We will refer to the case $d=1$ as $\Path_{n}$ (see Fig.~\ref{fig:n-path}). \end{definition}

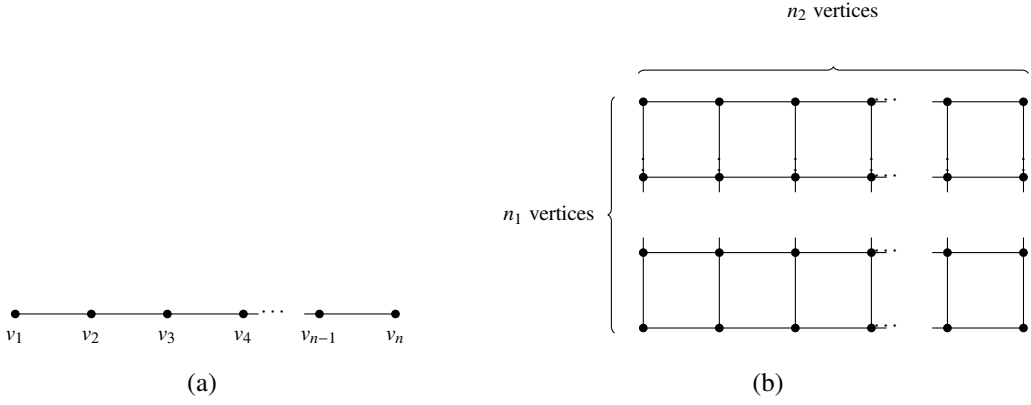
\begin{figure}[!t]
\centering
\subfloat[]{\centering
    \begin{tikzpicture}
        \draw (0,0) -- (3.2,0)
              (3.8,0) -- (5,0);
        
        \draw [fill=black] (0,0) circle (0.05)
                           (1,0) circle (0.05)
                           (2,0) circle (0.05)
                           (3,0) circle (0.05)
                           (4,0) circle (0.05)
                           (5,0) circle (0.05);
                           
         \draw (3.1,0) node[anchor=west]{\scriptsize $\cdots$};
         \draw (0,-0.3) node{\scriptsize $v_1$}
               (1,-0.3) node{\scriptsize $v_2$}
               (2,-0.3) node{\scriptsize $v_3$}
               (3,-0.3) node{\scriptsize $v_4$}
               (4,-0.3) node{\scriptsize $v_{n-1}$}
               (5,-0.3) node{\scriptsize $v_{n}$};
        \end{tikzpicture}%
\label{fig:n-path}}
\hfil
\subfloat[]{\centering
    \begin{tikzpicture}
     \node (a) at (-0.2,3.2) {};
     \node (b) at (5.2,3.2) {};
     \draw[decorate,decoration={brace,raise=5pt}] (a) -- (b);
     
     \node (a) at (-0.2,-0.2) {};
     \node (b) at (-0.2,3.2) {};
     \draw[decorate,decoration={brace,raise=5pt}] (a) -- (b);
    
    \draw (0,0) -- (3.2,0) (3.8,0) -- (5,0)
     (0,1) -- (3.2,1) (3.8,1) -- (5,1)
     (0,2) -- (3.2,2) (3.8,2) -- (5,2)
     (0,3) -- (3.2,3) (3.8,3) -- (5,3);
    
    \draw [fill=black] (0,0) circle (0.05) (1,0) circle (0.05) (2,0) circle (0.05)  (3,0) circle (0.05) (4,0) circle (0.05) (5,0) circle (0.05);
    \draw [fill=black] (0,1) circle (0.05) (1,1) circle (0.05) (2,1) circle (0.05)  (3,1) circle (0.05) (4,1) circle (0.05) (5,1) circle (0.05);
    \draw [fill=black] (0,2) circle (0.05) (1,2) circle (0.05) (2,2) circle (0.05)  (3,2) circle (0.05) (4,2) circle (0.05) (5,2) circle (0.05);
    \draw [fill=black] (0,3) circle (0.05) (1,3) circle (0.05) (2,3) circle (0.05)  (3,3) circle (0.05) (4,3) circle (0.05) (5,3) circle (0.05);
    
     \draw (2.9,0) node[anchor=west]{\scriptsize $\cdots$} (2.9,1) node[anchor=west]{\scriptsize $\cdots$} (2.9,2) node[anchor=west]{\scriptsize $\cdots$} (2.9,3)  node[anchor=west]{\scriptsize $\cdots$};
     
     \draw (0,0) -- (0,1.2) (0,1.8) -- (0,3) (1,0) -- (1,1.2) (1,1.8) -- (1,3) (2,0) -- (2,1.2) (2,1.8) -- (2,3) (3,0) -- (3,1.2) (3,1.8) -- (3,3) (4,0) -- (4,1.2) (4,1.8) -- (4,3) (5,0) -- (5,1.2) (5,1.8) -- (5,3);
     
     \draw (0,2.6) node[anchor=north]{\scriptsize $\vdots$} (1,2.6) node[anchor=north]{\scriptsize $\vdots$} (2,2.6) node[anchor=north]{\scriptsize $\vdots$} (3,2.6) node[anchor=north]{\scriptsize $\vdots$} (4,2.6) node[anchor=north]{\scriptsize $\vdots$} (5,2.6) node[anchor=north]{\scriptsize $\vdots$};
     
    \draw (2.5,4.2) node{\scriptsize $n_2$ vertices} (-1.2,1.5) node{\scriptsize $n_1$ vertices~~};
    \end{tikzpicture}%
\label{fig:m1-m_2-grid}}
\caption{Examples of $d$-dimensional grids with $d=1$ and $2$. (a) The $1$-dimensional path $\Path_{n}$. (b) The 2-dimensional grid $\Grid_{n}^{n_1,n_2}$. }
    \label{fig:d-grids}
\end{figure}

\paragraph{$d$-ary trees} The complete $d$-ary tree is a tree in which every non-leaf node has exactly $d$ children (see Fig.~\ref{fig:tree}). 

\begin{definition}
A quantum circuit on $n$ qubits will be said to be under $\Tree_{n}(d)$ constraint if the constraint graph is a $d$-ary tree with $\sum_{i=0}^h d^i=n$. $\Tree_{n}(2)$ corresponds to a binary tree and the case $d = n -1$ will be denoted $\Star_{n}$.
\end{definition}
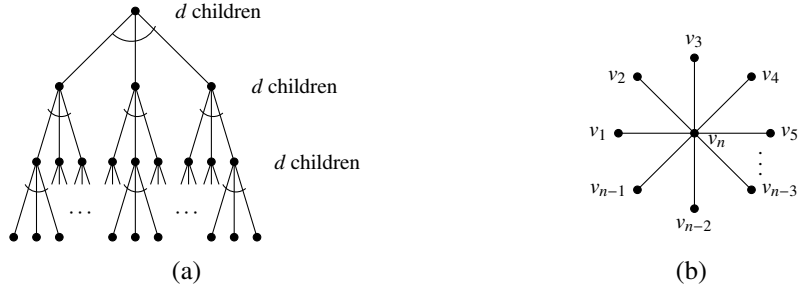
\begin{figure}[!t]
\centering
\subfloat[]{\centering
    \begin{tikzpicture}
       \draw [fill=black] (0,0) circle (0.05) (-1,-1) circle (0.05) (0,-1) circle (0.05) (1,-1) circle (0.05);
       \draw [fill=black] (-1.3,-2) circle (0.05) (-1,-2) circle (0.05) (-0.7,-2) circle (0.05) (-0.3,-2) circle (0.05) (0,-2) circle (0.05) (0.3,-2) circle (0.05) (0.7,-2) circle (0.05) (1,-2) circle (0.05) (1.3,-2) circle (0.05);
       \draw [fill=black] (-1.6,-3) circle (0.05) (-1.3,-3) circle (0.05) (-1,-3) circle (0.05) (-0.3,-3) circle (0.05) (0,-3) circle (0.05) (0.3,-3) circle (0.05) (1.6,-3) circle (0.05) (1.3,-3) circle (0.05) (1,-3) circle (0.05);
       \draw (0,0)--(-1,-1) (0,0)--(0,-1) (0,0)--(1,-1);
       \draw (-1,-1)--(-1.3,-2) (-1,-1)--(-1,-2) (-1,-1)--(-0.7,-2);
       \draw (0,-1)--(-0.3,-2) (0,-1)--(0,-2) (0,-1)--(0.3,-2);
       \draw (1,-1)--(0.7,-2) (1,-1)--(1,-2) (1,-1)--(1.3,-2);
       \draw (-1.3,-2)--(-1.6,-3) (-1.3,-2)--(-1.3,-3) (-1.3,-2)--(-1,-3);
       \draw (0,-2)--(-0.3,-3) (0,-2)--(0,-3) (0,-2)--(0.3,-3);
       \draw (1.3,-2)--(1.6,-3) (1.3,-2)--(1.3,-3) (1.3,-2)--(1,-3);
       \draw (-1,-2)--(-1.1,-2.3) (-1,-2)--(-1,-2.3) (-1,-2)--(-0.9,-2.3) ;
       \draw (-0.7,-2)--(-0.8,-2.3) (-0.7,-2)--(-0.7,-2.3) (-0.7,-2)--(-0.6,-2.3) ;
       \draw (-0.3,-2)--(-0.4,-2.3) (-0.3,-2)--(-0.3,-2.3) (-0.3,-2)--(-0.2,-2.3);
       \draw (0.3,-2)--(0.2,-2.3) (0.3,-2)--(0.3,-2.3) (0.3,-2)--(0.4,-2.3);
       \draw (0.7,-2)--(0.6,-2.3) (0.7,-2)--(0.7,-2.3) (0.7,-2)--(0.8,-2.3);
       \draw (1,-2)--(0.9,-2.3) (1,-2)--(1,-2.3) (1,-2)--(1.1,-2.3);
        \draw (-0.7,-2.7) node{\scriptsize $\cdots$} (0.7,-2.7) node{\scriptsize $\cdots$};
        \draw (-0.3,-0.3) arc(-135:-25:0.35);
        \draw (-0.15,-1.3) arc(-150:-50:0.2);
        \draw (-1.15,-1.3) arc(-150:-50:0.2);
        \draw (0.85,-1.3) arc(-150:-50:0.2);
        \draw (-1.45,-2.3) arc(-150:-50:0.2);
        \draw (1.15,-2.3) arc(-150:-50:0.2);
        \draw (-0.15,-2.3) arc(-150:-50:0.2);
         \draw (1.1,0) node{\scriptsize $d$ children};
         \draw (2.1,-1) node{\scriptsize $d$ children};
        \draw (2.4,-2) node{\scriptsize $d$ children};
    \end{tikzpicture}
\label{fig:darytree}}
\hfil
\subfloat[]{\centering
    \begin{tikzpicture}
        \draw (-1,0) -- (1,0) (0,-1) -- (0,1) (0.75, 0.75) -- (-0.75, -0.75) (-0.75, 0.75) -- (0.75, -0.75);
        \draw [fill=black] (0,0) circle (0.05) (-1,0) circle (0.05) (1,0) circle (0.05) (0,-1) circle (0.05) (0,1) circle (0.05) (0.75, 0.75) circle (0.05) (-0.75, -0.75) circle (0.05)  (-0.75, 0.75) circle (0.05)  (0.75, -0.75) circle (0.05);
        \draw (0.3,-0.1) node{\scriptsize $v_{n}$};
        \draw (-1,0) [anchor=east] node{\scriptsize $v_1$};
        \draw (-0.75, -0.75) [anchor=east] node{\scriptsize $v_{n-1}$};
        \draw (-0.75, 0.75) [anchor=east] node{\scriptsize $v_{2}$};
        \draw (0,1) [anchor=south] node{\scriptsize $v_{3}$};
        \draw (1,0) [anchor=west] node{\scriptsize $v_5$};
        \draw (0.75, 0.75) [anchor=west] node{\scriptsize $v_{4}$};
        \draw (0.75, -0.75) [anchor=west] node{\scriptsize $v_{n-3}$};
        \draw (0,-1) [anchor=north] node{\scriptsize $v_{n-2}$};
        \draw (0.7, -0.3) [anchor=west] node{\scriptsize $\vdots$};
   \end{tikzpicture}
\label{fig:star_n}}
\caption{Examples of $d$-ary trees. (a) The general $d$-ary tree $\Tree_{n}(d)$. (b) The $(n)$-star graph $\Star_{n}$. }
    \label{fig:tree}
\end{figure}

\paragraph{Expander graphs}
\begin{definition}[Vertex expansion]\label{def:expansion}
The vertex expansion of $G$ is defined as
\[
h_{out}(G):=\min_{S\subseteq V,~0<|S|<|V|/2}|\partial_{out}(S)|/|S|,
\]
where $\partial_{out}(S):=\{v\in V-S : \exists u\in S, \text{~s.t.~} (u,v)\in E\}$.
\end{definition}

An expander is a graph $G$ such that $h_{out}(G)\ge c$ for some constant $c>0$. 
\begin{definition}
A quantum circuit on $n$ qubits will be said to be under $\Expander_{n}$ constraint if the constraint graph is an $n$-vertex expander.
\end{definition}

\paragraph{Examples of constraint graphs}
Connectivity for a number of superconducting processors can be expressed in terms of these graphs: 

\begin{itemize}
    \item $\Path_n$: IBM's Falcon r5.11L chip \cite{IBMQ}, 9-qubit chip \cite{kelly2015state}.
    \item $\Grid^{2, n}_{2n}$ (i.e., bilinear chain): IBM's Melbourne chip \cite{IBMQ}, USTC's 24-qubit chip \cite{ye2019propagation}.
    \item $\Grid_{n_1n_2}^{n_1, n_2}$: Google's Sycamore chip \cite{arute2019quantum,acharya2022suppressing} 
    , USTC's Zuchongzhi chip \cite{gong2021quantum}. 
     \item $\Tree_n(2)$ :  IBM's Falcon r5.11H chips \cite{IBMQ}.
\end{itemize}
In addition to these, several other constraint graphs are also encountered in practice:
\begin{itemize}
    \item Brick-wall: IBM's Falcon r8/Falcon r4/Falcon r5.10/Falcon r5.11/
Hummingbird r3/
Eagle r1 chips \cite{IBMQ}. 
    \item T-shape: IBM's Falcon r4T chips \cite{IBMQ}.
\end{itemize}
Of particular note is the brick-wall structure, which we briefly describe below.

\paragraph{Brick-walls} For integers $n_1,n_2\ge 1$, $b_1\ge 2$, $b_2\ge 3$ and $b_2$ odd, the $(n_1,n_2,b_1,b_2)$-brick-wall $\brickwall_{n}^{n_1,n_2,b_1,b_2}$ graph is divided into $n_1$ layers, with each layer containing $n_2$ `bricks', and each brick a rectangle containing $b_1$ vertices on `vertical' edges and $b_2$ vertices on `horizontal' edges (see Fig.~\ref{fig:brickwall}). Brick-wall $\brickwall_{n}^{n_1,n_2,b_1,b_2}$ contains $n$ vertices. In IBM's brick-wall chips, $b_1=3$ and $b_2=5$. 
\begin{figure}[!hbt]
    \centering
    \begin{tikzpicture}
    \filldraw[fill=llgray] (4,0)--(4,0.5)--(6,0.5)--(6,0)--cycle;
    
       \draw [fill=black] (0,0) circle (0.05) (0.5,0) circle (0.05) (1,0) circle (0.05) (1.5,0) circle (0.05) (2,0) circle (0.05) (2.5,0) circle (0.05) (3,0) circle (0.05) (3.5,0) circle (0.05) (4,0) circle (0.05) (4.5,0) circle (0.05) (5,0) circle (0.05) (5.5,0) circle (0.05) (6,0) circle (0.05);
        \draw [fill=black] (0,0.25) circle (0.05) (2,0.25) circle (0.05)  (4,0.25) circle (0.05) (6,0.25) circle (0.05);
        \draw [fill=black] (0,0.5) circle (0.05) (0.5,0.5) circle (0.05) (1,0.5) circle (0.05) (1.5,0.5) circle (0.05) (2,0.5) circle (0.05) (2.5,0.5) circle (0.05) (3,0.5) circle (0.05) (3.5,0.5) circle (0.05) (4,0.5) circle (0.05) (4.5,0.5) circle (0.05) (5,0.5) circle (0.05) (5.5,0.5) circle (0.05) (6,0.5) circle (0.05);
     \draw [fill=black] (0,1) circle (0.05) (0.5,1) circle (0.05) (1,1) circle (0.05) (1.5,1) circle (0.05) (2,1) circle (0.05) (2.5,1) circle (0.05) (3,1) circle (0.05) (3.5,1) circle (0.05) (4,1) circle (0.05) (4.5,1) circle (0.05) (5,1) circle (0.05) (5.5,1) circle (0.05) (6,1) circle (0.05);
     \draw [fill=black] (1,0.75) circle (0.05) (3,0.75) circle (0.05) (5,0.75) circle (0.05);
    \draw [fill=black] (0,1.25) circle (0.05) (2,1.25) circle (0.05)  (4,1.25) circle (0.05) (6,1.25) circle (0.05);
     \draw [fill=black] (0,1.5) circle (0.05) (0.5,1.5) circle (0.05) (1,1.5) circle (0.05) (1.5,1.5) circle (0.05) (2,1.5) circle (0.05) (2.5,1.5) circle (0.05) (3,1.5) circle (0.05) (3.5,1.5) circle (0.05) (4,1.5) circle (0.05) (4.5,1.5) circle (0.05) (5,1.5) circle (0.05) (5.5,1.5) circle (0.05) (6,1.5) circle (0.05);
    \draw [fill=black] (1,1.75) circle (0.05) (3,1.75) circle (0.05) (5,1.75) circle (0.05);
     \draw [fill=black] (0,2) circle (0.05) (0.5,2) circle (0.05) (1,2) circle (0.05) (1.5,2) circle (0.05) (2,2) circle (0.05) (2.5,2) circle (0.05) (3,2) circle (0.05) (3.5,2) circle (0.05) (4,2) circle (0.05) (4.5,2) circle (0.05) (5,2) circle (0.05) (5.5,2) circle (0.05) (6,2) circle (0.05);
        \draw [fill=black] (0,2.25) circle (0.05) (2,2.25) circle (0.05)  (4,2.25) circle (0.05) (6,2.25) circle (0.05);
     \draw [fill=black] (0,2.5) circle (0.05) (0.5,2.5) circle (0.05) (1,2.5) circle (0.05) (1.5,2.5) circle (0.05) (2,2.5) circle (0.05) (2.5,2.5) circle (0.05) (3,2.5) circle (0.05) (3.5,2.5) circle (0.05) (4,2.5) circle (0.05) (4.5,2.5) circle (0.05) (5,2.5) circle (0.05) (5.5,2.5) circle (0.05) (6,2.5) circle (0.05);
     \draw (0,0)--(6,0) (0,0.5)--(6,0.5) (0,1)--(6,1) (0,1.5)--(6,1.5) (0,2)--(6,2) (0,2.5)--(6,2.5);
     \draw (0,0)--(0,0.5) (0,1)--(0,1.5) (0,2)--(0,2.5) (2,0)--(2,0.5) (2,1)--(2,1.5) (2,2)--(2,2.5) (4,0)--(4,0.5) (4,1)--(4,1.5) (4,2)--(4,2.5) (6,0)--(6,0.5) (6,1)--(6,1.5) (6,2)--(6,2.5);
     \draw (1,0.5)--(1,1) (1,1.5)--(1,2) (3,0.5)--(3,1) (3,1.5)--(3,2) (5,0.5)--(5,1) (5,1.5)--(5,2);

     \node (a) at (-0.2,2.5) {};
     \node (b) at (6.2,2.5) {};
     \draw[decorate,decoration={brace,raise=5pt}] (a) -- (b);
      \draw (3,3.2) node{\scriptsize $n_2$ bricks in each layer};
    \node (a) at (0,-0.2) {};
     \node (b) at (0,2.7) {};
     \draw[decorate,decoration={brace,raise=5pt}] (a) -- (b);
      \draw (-2.1,1.25) node{\scriptsize $n_1$ layers of bricks};
    \node (a) at (3.8,-0.05) {};
     \node (b) at (6.2,-0.05) {};
     \draw[decorate,decoration={brace,raise=5pt,mirror}] (a) -- (b);
      \draw (5,-0.7) node{\scriptsize $b_2$ vertices};
          \node (a) at (6,-0.2) {};
     \node (b) at (6,0.7) {};
     \draw[decorate,decoration={brace,raise=5pt,mirror}] (a) -- (b);
      \draw (7,0.25) node{\scriptsize $b_1$ vertices};
    \end{tikzpicture}
   \caption{The brick-wall graph $\brickwall_{n}^{n_1,n_2,b_1,b_2}$.}
    \label{fig:brickwall}
\end{figure}
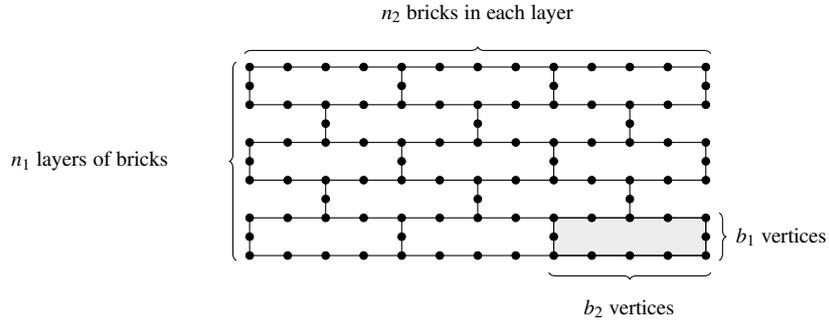

\begin{definition}A quantum circuit on $n$ qubits will be said to be under $\brickwall^{n_1, n_2, b_1, b_2}_{n}$ constraint if the the constraint graph is an $(n_1, n_2, b_1, b_2)$-brick-wall.
\end{definition}

While brick-walls lie outside the families of graphs we consider, in Section~\ref{sec:circuit_transformation_sub} we show that our results for the $2$-dimensional grid can be used to construct a circuit for brick-wall graphs with similar bounds.

\subsection{Gray codes}\label{subsec:gray}
An $n$-bit Gray code is an ordering of all $2^n$ $n$-bit strings such that any two successive strings differ in exactly one bit, as do the first and the last strings. 
An explicit construction uses the ruler function $\zeta(n)=\max\{k: 2^{k-1}|n \}$ as follows. 
It is not hard to verify that for all $k\in[n]$, there are $2^{n-k}$ elements $i \in[2^n-1]$ such that $\zeta(i)=k$. For all $i\in[n]$ and $j\in[2^n]$, define $h_{ij}$ as
\begin{equation}\label{eq:index}
    h_{ij} = (\zeta(j-1)+i-2 \mod n )+1, \text{ with } \zeta(0)\defeq 0.
\end{equation}
It is straightforward to show that
\begin{align}\label{eq:h1j}
 &h_{i1} = \begin{cases} n & \text{if } i=1, \\ i-1 & \text{if } 2\le i\le n, \end{cases} ~ \text{and} \quad 
h_{1j} = \begin{cases} n & \text{if } j=1, \\ \zeta(j-1) & \text{if } 2\le j\le n. \end{cases}   
\end{align}
For each $i\in [k]$, one can make use of $h_{ij}$ to construct an $n$-bit Gray code, as follows.
\begin{lemma}[\cite{frank1953pulse,savage1997survey,gilbert1958gray}]\label{lem:GrayCode}
For any $i\in[n]$, construct $n$-bit strings $c^i_1,c^i_2,\cdots,c^i_{2^{n}-1}, c^i_{2^{n}}$ as follows: Let $c_1^{i}=0^{n}$, and for each $j = 2, 3, \ldots, 2^{n}$, string $c_j^i$ is obtained by flipping the $h_{ij}$-th bit of $c_{j-1}^i$. The following properties hold.
\begin{enumerate}
    \item $c^i_1,c^i_2,\cdots,c^i_{2^{n}-1}, c^i_{2^{n}}$ are all distinct and form an $n$-bit Gray code: for $j\ge 2$, each $c^i_j$ differs from $c^i_{j-1}$ in the $h_{ij}$-th bit, and $c_1^i$ and $c_{2^{n}}^i$ differ in the $h_{i1}$-th bit. 
    \item For each $k\in [n]$, there are $2^{n-k}$ elements $j\in \{2,3,\ldots,2^{n}\}$ such that $h_{ij} = (k+i-2 \mod n)+1$. In particular, there are $2^{n-k}$ elements $j\in \{2,3,\ldots,2^{n}\}$ such that $h_{1j} = k$.
\end{enumerate} 
\end{lemma}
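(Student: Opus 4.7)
The plan is to reduce the general statement to the classical $i=1$ case (the binary reflected Gray code) and then transfer to arbitrary $i$ by a bit-position permutation argument. Both claims are statements about the flip sequence $h_{i2}, h_{i3}, \ldots, h_{i,2^n}, h_{i1}$, and (1) gives $h_{ij} = \sigma_i(\zeta(j-1))$ for $j \ge 2$, where $\sigma_i(k) \defeq ((k+i-2) \bmod n)+1$ is a bijection of $[n]$, and likewise $h_{i1} = \sigma_i(n)$ under the convention $\zeta(0)=0$ read mod $n$. Writing $\pi_{\sigma_i}$ for the induced bit-position permutation of $\{0,1\}^n$, a one-line induction on $j$ yields $c^i_j = \pi_{\sigma_i}(c^1_j)$: the base case holds because $\pi_{\sigma_i}(0^n)=0^n$, and the inductive step simply commutes $\pi_{\sigma_i}$ with a single-bit flip. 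Hence once Property 1 is proved for $i=1$, it passes to arbitrary $i$ through this bijection.

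For $i=1$, I will induct on $n$. The base $n=1$ is immediate. For the inductive step, the key arithmetic observations are $\zeta(j-1) \le n-1$ for $j \in \{2,\ldots,2^{n-1}\}$, $\zeta(2^{n-1}) = n$, and $\zeta(q+2^{n-1}) = \zeta(q)$ for $1 \le q \le 2^{n-1}-1$. Consequently the first $2^{n-1}-1$ flips never touch bit $n$ (so those strings all have bit $n$ equal to $0$), the $2^{n-1}$-th flip toggles bit $n$ from $0$ to $1$, and the last $2^{n-1}-1$ flips replay the first-half sequence on bits $1,\ldots,n-1$ (with bit $n$ now fixed at $1$). By the inductive hypothesis applied to the projection onto bits $1,\ldots,n-1$, the first half bijectively enumerates $\{0,1\}^{n-1}$; since bit flipping is XOR-linear, reapplying the same flip sequence starting from any $s \in \{0,1\}^{n-1}$ produces the path $s \oplus w_1, s \oplus w_2, \ldots, s \oplus w_{2^{n-1}}$ where $w_1,\ldots,w_{2^{n-1}}$ is the first-half path, and translation by $s$ is a bijection of $\{0,1\}^{n-1}$, so the second half likewise enumerates $\{0,1\}^{n-1}$. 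Distinguishing by the value of bit $n$ gives distinctness of all $2^n$ strings, and the Gray code property along consecutive strings is immediate from the construction. Finally, across the full cycle of $2^n$ steps, bit $k$ is flipped $2^{n-k}$ times---an even number for $k<n$ and exactly once for $k=n$---so $c^1_{2^n} = 0^{n-1}1$ differs from $c^1_1 = 0^n$ in bit $n$, matching $h_{11}=n$ as given by (2).

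Property 2 is then a direct count. Since $\sigma_i$ is a bijection, the number of $j \in \{2,\ldots,2^n\}$ with $h_{ij} = \sigma_i(k)$ equals $|\{m \in [2^n-1]: \zeta(m) = k\}|$; these $m$ are exactly the odd multiples of $2^{k-1}$ in $[1,2^n-1]$, of which there are $2^{n-k}$. The only nontrivial piece I anticipate is the second-half analysis in the $i=1$ induction---verifying that replaying the first-half flip sequence from a nonzero start still sweeps out all of $\{0,1\}^{n-1}$ without repetition---and this is precisely the XOR-linearity observation above, which cleanly reduces the second sweep to a translate of the first.
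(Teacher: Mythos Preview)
Your proof is correct. The paper does not actually prove this lemma; it is stated with citations to the classical Gray-code literature \cite{frank1953pulse,savage1997survey,gilbert1958gray} and used as a black box. Your argument---reducing general $i$ to $i=1$ via the bit-position permutation $\sigma_i$, then handling $i=1$ by the standard recursive decomposition of the binary reflected Gray code (first half, pivot flip of bit $n$, second half replaying the first-half flip sequence)---is the canonical self-contained proof, and the XOR-linearity observation cleanly dispatches the one point that requires care, namely that the replayed flip sequence from a nonzero start still bijectively sweeps $\{0,1\}^{n-1}$. The ruler-function count for Property~2 is likewise the standard one. So there is nothing to compare against in the paper itself; your write-up supplies exactly the kind of proof the cited references contain.
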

We refer to this ordered sequence $c_1^i,c_2^i,\cdots, c_{2^{n}}^i$ as an $(n,i)$-Gray code, or simply an $i$-Gray code if $n$ is clear from context. 

\subsection{Quantum gates and circuits} 
\label{subsec:qgates}
For arbitrary $ \theta\in\mathbb{R}$, single-qubit gates $R(\theta)$ are defined as 
\begin{equation}\label{eq:rotation}
R(\theta)=\left(\begin{array}{cc}
   1  & 0 \\
   0  & e^{i\theta}
\end{array}\right).
\end{equation}
Two special cases that will be used later are the phase gate $S$ and the Hadamard gate $H$
\[~S=\left(\begin{array}{cc}
  1   &  \\
    &  i
\end{array}\right), \quad \quad 
~H=\frac{1}{\sqrt{2}}\left(\begin{array}{cc}
  1   & 1 \\
   1  &  -1
\end{array}\right).
\] 
A CNOT gate on qubits $u$ and $v$, denoted $\Cnot^u_v$, effects the transformation \[\Cnot^u_v\ket{x}_u\ket{y}_v=\ket{x}_u\ket{x\oplus y}_v\] for $x,y\in \B$. Here, $u$ is referred to as the \textit{control} and $v$ the \textit{target}. 


\begin{restatable}{lem}{cnotpath}
\label{lem:cnot_path_constraint}
${\sf CNOT}_v^u$ can be implemented by a CNOT circuit of depth and size $O(d(u,v))$ under arbitrary graph constraint, where $d(u,v)$ is the minimum distance between vertices $u$ and $v$ in $G$.
\end{restatable}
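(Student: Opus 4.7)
The plan is to implement the long-range CNOT by conjugating a local CNOT with a SWAP ladder along a shortest path from $u$ to $v$. First, I fix a shortest path $u = w_0, w_1, \ldots, w_k = v$ in $G$, which exists whenever $d(u,v)=k$ is finite; every edge $(w_{i-1}, w_i)$ of this path lies in $E$, so any gate I place on such an edge respects the graph constraint.

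Next, for each $i \in [k-1]$ let $S_i$ be the SWAP gate on the edge $(w_{i-1}, w_i)$, realized by three CNOTs on that same edge. Setting $S \defeq S_{k-1} S_{k-2} \cdots S_1$, a direct check shows that applying $S$ transports the value originally on $w_0$ to $w_{k-1}$ while shifting the other values on $w_1, \ldots, w_{k-2}$ one step toward $w_0$. After $S$ the control value sits on $w_{k-1}$, which is adjacent to $w_k = v$, so I apply $\Cnot_{w_k}^{w_{k-1}}$ on the allowed edge $(w_{k-1}, w_k)$. I then undo the permutation by applying $S^{-1} = S_1 S_2 \cdots S_{k-1}$, restoring the original contents of $w_0, \ldots, w_{k-1}$; the qubit $w_k$ is left carrying the XOR of its original value with the control. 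The net unitary is thus exactly $\Cnot_v^u$, acting as the identity on every other qubit.

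For the complexity bounds, each $S_i$ contributes three CNOTs, so the total size is $2 \cdot 3(k-1) + 1 = O(d(u,v))$. Consecutive SWAPs $S_i$ and $S_{i+1}$ share the qubit $w_i$ and thus cannot be parallelized, so the forward ladder, central CNOT, and backward ladder together have depth $O(k) = O(d(u,v))$. The construction has no real obstacle; the only thing to verify carefully is that $S$ and $S^{-1}$ recompose to the identity on $w_0, \ldots, w_{k-1}$ and that the control value does arrive at $w_{k-1}$ precisely when the central CNOT fires, both of which follow immediately by tracking the action of successive adjacent SWAPs.
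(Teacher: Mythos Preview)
Your proof is correct. The SWAP-ladder conjugation you describe is a perfectly valid implementation, and your size and depth accounting is accurate.

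The paper takes a slightly different route: rather than physically transporting the control value to $w_{k-1}$ via SWAPs, it uses a pure CNOT cascade (Fig.~\ref{fig:cnot_path}). Concretely, it runs a staircase of adjacent CNOTs $\Cnot^{u_{i}}_{u_{i+1}}$ inward toward $u_0$, applies a few more CNOTs to propagate $x$ down to $u_d$, and then runs a mirrored cascade to restore the intermediate qubits $u_1,\ldots,u_{d-1}$. This avoids the factor of three from expanding each SWAP into CNOTs, giving a smaller constant (roughly $4d$ versus your $6d$ CNOTs), and the same cascade pattern is reused later in the paper for the multi-target CNOT (Lemma~\ref{lem:multicontrolcnot}). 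Your SWAP-based argument, on the other hand, is more transparent: correctness is immediate from the fact that conjugating by a permutation relabels the qubits a gate acts on, so there is nothing to verify beyond tracking where $x$ lands. Either construction suffices for the $O(d(u,v))$ claim.
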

The proof of Lemma \ref{lem:cnot_path_constraint} is given in Appendix \ref{append:basic_circuit}.
We call a quantum circuit consisting of only CNOT gates a \textit{CNOT circuit}. An $n$-qubit invertible linear transformation over $\mathbb{F}_2$ can be implemented by an efficient $n$-qubit CNOT circuit:

\begin{lemma}[\cite{wu2019optimization}]\label{lem:cnot_circuit}
Let $G_\delta$ be a connected graph with $n$ vertices and  minimum degree $\delta$. Any $n$-qubit invertible linear transformation can be implemented in circuit depth and size $O(n^2/\log (\delta))$ under $G_\delta$ constraint.
\end{lemma}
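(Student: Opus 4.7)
The plan is to adapt the Patel--Markov--Hayes (PMH) algorithm---which, in the unconstrained (all-to-all) setting, synthesizes any $A\in\mathrm{GL}_n(\mathbb{F}_2)$ by a CNOT circuit of size $O(n^2/\log n)$---to respect the graph $G_\delta$. Under the constraint, the natural block size becomes $\log \delta$ instead of $\log n$, and the PMH ``lookup table'' of patterns is stored in the closed neighborhood of a single vertex of degree $\delta$. First I would reduce the synthesis problem to block Gaussian elimination: an invertible $\mathbb{F}_2$-linear map $\ket{x}\mapsto\ket{Ax}$ is realized by a CNOT circuit whose gates correspond to a row-operation sequence that reduces $A^{-1}$ to the identity, so it suffices to count row operations together with the cost of routing them through $G_\delta$.

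Next, I partition the $n$ columns into $\lceil n/\log \delta\rceil$ blocks of $\log \delta$ columns. For each block, pick an anchor vertex $v$ of degree at least $\delta$ with neighbors $u_1,\ldots,u_\delta$; after routing the $\log \delta$ current pivot rows into the closed neighborhood of $v$, I use $O(\delta)$ CNOTs in depth $O(\log \delta)$, traversing the patterns in Gray-code order along the star at $v$, to store on the $u_i$'s all $2^{\log \delta}=\delta$ XOR combinations of the pivot rows. Then, for each of the $n-O(\delta)$ remaining rows, a single routed CNOT from the neighbor of $v$ carrying the needed pattern clears those $\log \delta$ bits in that row; finally I uncompute the local table so the workspace is clean for the next block. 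This matches the PMH bookkeeping and shows that each block costs $O(n)$ logical CNOT operations, so the circuit has $O(n^2/\log \delta)$ logical CNOTs overall.

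To convert this count into the claimed bound on both size and depth under the constraint, I would route every logical CNOT along shortest paths in a fixed spanning tree $T$ of $G_\delta$, invoking Lemma~\ref{lem:cnot_path_constraint} to realize each long-range CNOT in $O(d_G(u,v))$ local gates. The key amortization is that the $O(n)$ ``read'' CNOTs from the anchor's neighborhood into the rest of the graph---which constitute the dominant cost of a single block---can be pipelined along a single Eulerian sweep of $T$: since each edge of $T$ is used only $O(1)$ times per block, the depth contribution per block is $O(n)$, and summing over the $O(n/\log \delta)$ blocks gives both total depth and total size $O(n^2/\log \delta)$.

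The main obstacle is precisely this pipelined depth argument. A naive per-CNOT routing would inflate the depth by $\diam(G)=O(n)$, yielding $O(n^3/\log \delta)$. Avoiding this requires choosing $T$ and scheduling the reads so that no edge becomes a bottleneck within a block, and arranging the table construction and uncomputation at $v$ to overlap with the Eulerian sweep rather than sit in series with it. The minimum-degree hypothesis enters only to guarantee that the local workspace around $v$ is large enough to hold all $\delta$ simultaneous patterns, which is what makes the $\log \delta$ block size---and hence the $1/\log \delta$ savings---possible.
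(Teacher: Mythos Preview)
The paper does not prove this lemma; it is quoted from \cite{wu2019optimization} without proof and used as a black box throughout. There is therefore no in-paper argument to compare your proposal against.

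On the merits of your sketch: the high-level plan of adapting Patel--Markov--Hayes with block width $\log\delta$ and exploiting a degree-$\delta$ vertex as a hub is reasonable and in the spirit of the cited result. Two steps, however, are under-justified. First, your ``lookup table'' step proposes to store all $\delta=2^{\log\delta}$ XOR combinations of the pivot rows on the neighbors $u_1,\ldots,u_\delta$; but those neighbors are not ancillas---each already carries a row of the matrix being reduced, and there are no spare qubits in this model. The original PMH mechanism does not materialize a table at all; it uses pigeonhole (at most $\delta$ distinct block-patterns among $n$ rows, forcing many collisions) and clears each block with $O(n)$ row additions chosen to exploit those collisions. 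If you want a hub-based variant instead, you must say how the neighbors' original rows are restored and why the extra terms introduced when XORing through an occupied intermediary ultimately cancel. Second, as you yourself flag, the depth bound is the crux: the ``Eulerian sweep of a spanning tree'' needs a concrete schedule showing that the $O(n)$ long-range row additions in a single block genuinely pipeline to $O(n)$ depth under the $G_\delta$ constraint, rather than the naive $O(n\cdot\diam(G_\delta))$. Without that argument you have established at best the size bound, not the depth bound.
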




\subsection{Quantum state preparation and general unitary synthesis}
Two key tasks addressed in this paper are:

\paragraph{Quantum state preparation (QSP)} Given a vector $v=(v_x)_{x\in\{0,1\}^n}\in\mathbb{C}^{2^n}$ where $\sqrt{\sum_{x\in\{0,1\}^n}|v_x|^2}=1$, prepare the corresponding $n$-qubit quantum state 
    \[\ket{\psi_v}=\sum_{x\in\{0,1\}^n}v_x\ket{x}\]
    by a standard quantum circuit, starting from initial state $\ket{0^n}$. We shall refer to such a circuit as a \textit{QSP circuit}.

\paragraph{General unitary synthesis (GUS)} Given an $n$-qubit unitary $U=[u_{xy}]_{x,y\in\{0,1\}^n}\in\mathbb{C}^{2^n\times 2^n}$, construct a standard quantum circuit for $U$. We call such a circuit a \textit{GUS circuit}.

QSP and GUS circuits may make use of ancilla. In this case, we say that:
\begin{enumerate}
    \item A circuit $C_{\rm QSP}$ with $m$ ancillary qubits solves the QSP problem if 
    \[C_{\rm QSP}\ket{0^n}\ket{0^m}=\ket{\psi_v}\ket{0^m}.\]
    \item A circuit $C_{\rm GUS}$ with $m$ ancillary qubits solves the GUS problem if
    \[C_{\rm GUS}\ket{x}\ket{0^m}=(U\ket{x})\ket{0^m},\quad\forall x\in\Bn.\]
\end{enumerate}
    
    

\subsection{Uniformly controlled gates and diagonal unitary matrices} \label{sec:ucg}
Given single-qubit unitary matrices $U_1$, $U_2$, $\ldots$, $U_{2^{n-1}-1}$, $U_{2^{n-1}} \in\mathbb{C}^{2\times 2}$, an $n$-qubit \textit{uniformly controlled gate} (UCG) $V_n$ is a block diagonal matrix given by
\begin{equation}\label{eq:UCG}
V_n=\left(\begin{array}{cccc}
     U_1 &  & &\\
     &  U_2& &\\
     &  & \ddots &\\
     &  & & U_{2^{n-1}}
\end{array}\right)\in\mathbb{C}^{2^n\times 2^n}.
\end{equation}
That is, conditioned on the state of the first $n-1$ qubits, $V_n$ applies the corresponding $U_i$ operation to the $n$-th qubit.
Any $n$-qubit \textit{diagonal unitary matrix} $\Lambda_n$ can be expressed as
\begin{equation*}
\Lambda_n=\left(\begin{array}{cccc}
     1&  & &\\
     &  e^{i\theta_1}& &\\
     &  & \ddots &\\
     &  & & e^{i\theta_{2^n-1}}
\end{array}\right)\in\mathbb{C}^{2^n\times 2^n},
\end{equation*}
where $\theta_1,\ldots, \theta_{2^n-1} \in\mathbb{R}$. As quantum states that differ only by a global phase are indistinguishable, without loss of generality we may set the first entry to 1.
{The task of implementing UCGs can be reduced to that of implementing diagonal unitary matrices: 
\begin{lemma}[\cite{sun2021asymptotically}]\label{lem:UCG_decomposition}
Any $n$-qubit UCG $V_n$ can be decomposed as
$V_n=\Lambda_n'''(\mathbb{I}_{n-1}\otimes (S H))\Lambda_n'' (\mathbb{I}_{n-1}\otimes (HS^\dagger))\Lambda_n',$
where $\Lambda_n',\Lambda_n'',\Lambda_n'''\in\mathbb{C}^{2^n\times 2^n}$ are $n$-qubit diagonal unitary matrices. 
\end{lemma}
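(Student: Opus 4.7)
The plan is to prove the block-diagonal matrix identity by working entry-wise on each $2 \times 2$ block and then assembling the pieces. Since $V_n$ is block diagonal with single-qubit unitaries $U_1,\dots,U_{2^{n-1}}$ on the diagonal, and since the operators $\mathbb{I}_{n-1}\otimes (SH)$ and $\mathbb{I}_{n-1}\otimes (HS^\dagger)$ act by the same $2\times 2$ matrix on each block, the claim reduces to showing that every single-qubit unitary $U$ admits a decomposition
\[
U \;=\; D_3 \cdot (SH) \cdot D_2 \cdot (HS^\dagger) \cdot D_1
\]
for some diagonal $2\times 2$ matrices $D_1, D_2, D_3$. Once we have this decomposition blockwise with matrices $D_1^{(i)},D_2^{(i)},D_3^{(i)}$ for each $U_i$, we set $\Lambda_n'$, $\Lambda_n''$, $\Lambda_n'''$ to be the $n$-qubit block-diagonal matrices whose $i$-th $2\times 2$ block is $D_1^{(i)}$, $D_2^{(i)}$, $D_3^{(i)}$ respectively; since each block is already diagonal, these matrices are themselves $2^n\times 2^n$ diagonal unitaries, as required.

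The key algebraic step is therefore the single-qubit decomposition. First I would invoke the standard ZYZ Euler-angle decomposition, which lets us write any $U \in U(2)$ as
\[
U \;=\; e^{i\alpha}\, R_z(\beta)\, R_y(\gamma)\, R_z(\delta),
\]
where $R_z(\theta) = \mathrm{diag}(e^{-i\theta/2}, e^{i\theta/2})$ and $R_y(\theta) = e^{-i\theta Y/2}$. The factors $e^{i\alpha}R_z(\beta)$ and $R_z(\delta)$ are already diagonal. The only obstruction is that $R_y(\gamma)$ is not diagonal, so we need to conjugate it into diagonal form by $SH$ and $HS^\dagger$. The identity we need is $Y = SH\, Z\, HS^\dagger$; this follows from the two elementary facts $HZH = X$ and $SXS^\dagger = Y$ (both of which are direct matrix computations). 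Exponentiating yields
\[
R_y(\gamma) \;=\; (SH)\, R_z(\gamma)\, (HS^\dagger),
\]
and substituting back gives
\[
U \;=\; \bigl(e^{i\alpha}R_z(\beta)\bigr)\cdot (SH)\cdot R_z(\gamma) \cdot (HS^\dagger) \cdot R_z(\delta),
\]
which is exactly the desired form with $D_3 = e^{i\alpha}R_z(\beta)$, $D_2 = R_z(\gamma)$, and $D_1 = R_z(\delta)$.

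The final step is assembly: reading the identity $U_i = D_3^{(i)}(SH)D_2^{(i)}(HS^\dagger)D_1^{(i)}$ simultaneously for all $i \in \{1,\dots,2^{n-1}\}$ and using the fact that a direct sum of products is the product of direct sums, we obtain the claimed decomposition of $V_n$. There is no real obstacle here; the only subtle point to check is that bundling diagonal $2\times 2$ blocks into a large block-diagonal matrix genuinely produces a diagonal $2^n\times 2^n$ matrix (which is immediate from the definitions), and that the conjugation identity $R_y = SH\cdot R_z \cdot HS^\dagger$ is correct (which is a one-line matrix calculation). Hence the proof is essentially a blockwise ZYZ decomposition packaged to expose the common rotation axis change on the target qubit.
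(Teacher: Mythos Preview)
Your argument is correct: the reduction to a single $2\times 2$ block, the ZYZ Euler decomposition, the conjugation identity $Y = (SH)Z(HS^\dagger)$ (hence $R_y(\gamma) = (SH)R_z(\gamma)(HS^\dagger)$), and the reassembly into block-diagonal form all go through exactly as you describe. Note, however, that the paper does not supply its own proof of this lemma; it is quoted from \cite{sun2021asymptotically}, so there is no in-paper argument to compare against. Your write-up is a clean, self-contained justification of the cited statement and would serve well as a proof if one were needed here.
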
}


To implement $\Lambda_n$, which can be represented as
\begin{equation}\label{eq:diag}
  \ket{x}\to e^{i\theta(x)}\ket{x},~\forall x\in \Bn-\{0^n\},  
\end{equation}
in a quantum circuit, it suffices to accomplish the following two tasks.
\begin{enumerate}
    \item For every $s\in \{0,1\}^n-\{0^n\}$, effect a phase shift of $\alpha_s$ on each basis $\ket{x}$ with $\langle s,x\rangle = 1$, i.e. 
    \begin{equation}\label{eq:task1}
     \ket{x} \to  e^{i\alpha_s\langle s,x\rangle } \ket{x}.
    \end{equation}
    
    \item Find $\{\alpha_s:s\in \Bn-\{0^n\}\}$ s.t. \begin{equation}\label{eq:alpha}
        \sum_{s\in \{0,1\}^n-\{0^n\}}\alpha_s\langle x,s\rangle = \theta(x), \quad \forall x\in \{0,1\}^n-\{0^n\}.
    \end{equation}
\end{enumerate}
Combining the two gives
\[\ket{x} \to \prod_{s\in \{0,1\}^n-\{0^n\}} e^{i\alpha_s\langle s,x\rangle } \ket{x} = e^{i\sum_s\alpha_s \langle s,x\rangle} \ket{x} = e^{i\theta(x) } \ket{x},\]  
as required. For notational convenience, define $\alpha_{0^n}=0$.
For any $x\in\Bn$, if we generate a state $\ket{\langle s,x\rangle}$ on a qubit, apply $R(\alpha_s)$ on it, and restore this qubit, then Task 1 is implemented. We  call the process of generating $\ket{\langle s,x\rangle}$ \textit{generating $s$}. Given $\{\theta(x):x\in \B-\{0^n\}\}$, the values $\{\alpha_s:s\in \Bn-\{0^n\}\}$ in Task 2 can be efficiently found by the Walsh-Hadamard transform \cite{sun2021asymptotically}. After we generate all $s\in\Bn-\{0^n\}$, apply $R(\alpha_s)$ on $\ket{\langle s,x\rangle}$ and restore the qubits, we have implemented $\Lambda_n$ (by Eq. \eqref{eq:alpha}).

\section{Circuits for diagonal unitary matrices under qubit connectivity constraints, without ancillary qubits}
\label{sec:diag_without_ancilla}
Here we present circuits for diagonal unitary matrices $\Lambda_n$ under graph constraints without ancillary qubits, which are used in Section \ref{sec:QSP_US_graph} to construct QSP and GUS circuits.

\subsection{Circuit framework}
\label{sec:diag_without_ancilla_framework_main}


We construct a circuit based on the framework of~\cite{sun2021asymptotically}, modified to minimize additional overhead costs when graph constraints are imposed. 
Additional details and omitted proofs from this section are given in Appendix \ref{sec:diag_without_ancilla_framework}.

\paragraph{Old method}
In~\cite{sun2021asymptotically}, an $n$-bit string $s$ is divided into two parts: an $r_c$-bit prefix and an $r_t$-bit suffix, where $r_c=\lceil n/2\rceil$ and $r_t=\lfloor n/2\rfloor$. The suffix set $\B^{r_t}-\{0^{r_t}\}$ is itself divided into $\ell~ \le \frac{2^{r_t+2}}{r_t+1}-1$ sets $T^{(1)}, \ldots, T^{(\ell)}$, each of size  $r_t$. The process of generating all $s\in\Bn-\{0^n\}$ consists of $\ell$ phases, where the $i$-th phase generates all $n$-bit strings with suffixes in $T^{(i)}$. For bit strings ending with the $j$-th suffix in $T^{(i)}$, prefixes are enumerated in the order of a $j$-Gray code. The prefixes are implemented by CNOT gates where the control qubit lies in the first $r_c$ qubits and the target qubit lies in the last $r_t$ qubits. Bit strings with suffix $0^{r_t}$ need special treatment for technical reasons and are handled by recursive generation.

Unfortunately, this framework is inefficient under qubit connectivity constraints. More specifically, when we generate $r_t$ prefixes simultaneously by Gray code, we apply $r_t$ CNOT gates which cannot be implemented in parallel and impose an overhead of $O(n^2)$ to the circuit depth. To resolve this issue, we choose different lengths of prefixes (and suffixes) under different constraint graphs and rearrange the positions of the control and target qubits to minimize the number of controlled operations that involve distant qubits on the graph. 

\paragraph{New method}
Our circuit framework for $\Lambda_n$ is shown in Fig. \ref{fig:diag_without_ancilla_framwork}. The $n$ input qubits are labelled $1,2,\cdots,n$, and are divided into two registers: control register $\textsf{C}$ and target register $\textsf{T}$, with sizes $r_c$ and $r_t:=n-r_c$, respectively, where $[n]=\textsf{C}\cup \textsf{T}$. Compared to \cite{sun2021asymptotically}, our construction differs in two main ways:
\begin{enumerate}
    \item The design of registers $\textsf{C}$ and $\textsf{T}$. In~\cite{sun2021asymptotically}, $\textsf{C}$ and $\textsf{T}$ are specified as the first $\lceil n/2\rceil$ and the last $\lfloor n/2\rfloor$ qubits, respectively. In this work, $\textsf{C}$ and $\textsf{T}$ depend on the constraint graph (details specified in the following sections): they do not always have sizes $r_c = r_t = n/2$, and their positions are determined by a transformation $\Pi$ which permutes the first $r_c$ and the last $r_t$ qubits. 
    \item The choice of Gray codes. The implementation of the $C_i$ operators involves choosing $r_t$ Gray codes, specified by integers $j_1, \ldots, j_{r_t}$. The choice of Gray codes determines the sequence of qubits which act as the control for CNOT operations required to implement $C_k$. If two or more integers $j_i$ are the same, this corresponds to the same control qubit used for CNOT operations acting on different target qubits. In~\cite{sun2021asymptotically} the Gray codes used correspond to choosing $j(i)=i$. Here, by carefully choosing the $j_i$, accounting for the graph connectivity and the choice of $\sf C$ and $\sf T$, we can achieve a reduction in circuit depth.
\end{enumerate}


\begin{figure}[]
    
    \centerline{
  \Qcircuit @C=1em @R=0.5em {
  \lstick{\scriptstyle\ket{x_1}} &\multigate{5}{\scriptstyle\Pi} &\multigate{5}{\scriptstyle C_1} &\multigate{5}{\scriptstyle C_2} &\qw & {\scriptstyle\cdots~~~} & \multigate{5}{\scriptstyle C_\ell}   & \multigate{5}{\scriptstyle \Pi^{\dagger}} & \multigate{2}{\scriptstyle \Lambda_{r_c}} &\qw  \\
  \lstick{\scriptstyle\vdots~} & \ghost{\scriptstyle\Pi} &\ghost{\scriptstyle C_1} &\ghost{\scriptstyle C_2} &\qw & {\scriptstyle\cdots~~~} & \ghost{\scriptstyle C_\ell}  & \ghost{\scriptstyle \Pi^{\dagger}} & \ghost{\scriptstyle \Lambda_{r_c}} &\qw \\
  \lstick{\scriptstyle\ket{x_{r_c}}} & \ghost{\scriptstyle\Pi} &\ghost{\scriptstyle C_1} &\ghost{\scriptstyle C_2} &\qw & {\scriptstyle\cdots~~~} & \ghost{\scriptstyle C_\ell}  & \ghost{\scriptstyle \Pi^{\dagger}} & \ghost{\scriptstyle \Lambda_{r_c}} &\qw\\
  \lstick{\scriptstyle\ket{x_{r_c+1}}} & \ghost{\scriptstyle\Pi}&\ghost{\scriptstyle C_1} &\ghost{\scriptstyle C_2}&\qw & {\scriptstyle\cdots~~~} &  \ghost{\scriptstyle C_\ell}  & \ghost{\scriptstyle \Pi^{\dagger}} & \multigate{2}{\scriptstyle \mathcal{R}} &\qw \\
  \lstick{\scriptstyle\vdots~} & \ghost{\scriptstyle\Pi} &\ghost{\scriptstyle C_1} &\ghost{\scriptstyle C_2}&\qw & {\scriptstyle\cdots~~~}&  \ghost{\scriptstyle C_\ell}  & \ghost{\scriptstyle \Pi^{\dagger}} & \ghost{\scriptstyle \mathcal{R}} &\qw \\
  \lstick{\scriptstyle\ket{x_n}} & \ghost{\scriptstyle\Pi} &\ghost{\scriptstyle C_1} &\ghost{\scriptstyle C_2} &\qw & {\scriptstyle\cdots~~~} & \ghost{\scriptstyle C_\ell} & \ghost{\scriptstyle \Pi^{ \dagger}} & \ghost{\scriptstyle \mathcal{R}} &\qw  \\
  }
  }
    \caption{Circuit framework for implementing diagonal unitaries $\Lambda_n$ under graph constraints, without ancilla. Control and target register sizes, $r_c$ and $r_t$, respectively, depend on the graph constraint.  Operations $\Pi$ and $\Pi^\dagger$ are used to move control and target qubits close together. Our implementation of each $C_i$ differs from that in \cite{sun2021asymptotically}, and needs to be adapted to different constraint graphs. $\ell\le \frac{2^{r_t+2}}{r_t+1}-1$. }
    \label{fig:diag_without_ancilla_framwork}
\end{figure}
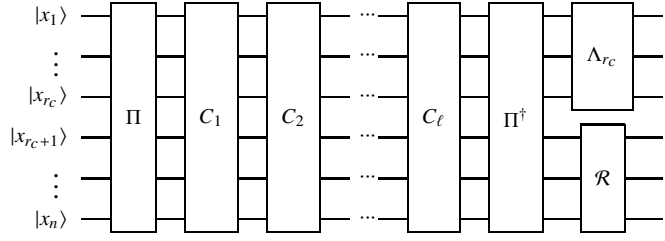

To describe the operators $\Pi$, $C_1,\cdots, C_\ell$, $\Lambda_{r_c}$ and $\mathcal{R}$ in Fig. \ref{fig:diag_without_ancilla_framwork},  recall the following result in \cite{sun2021asymptotically}. 
For some integer $\ell \le \frac{2^{r_t+2}}{r_t+1}-1$, there exists sets $T^{(1)},T^{(2)},\cdots,T^{(\ell)}$ of size $r_t$, such that
the Boolean vectors in $T^{(i)}=\{{t^{(i)}_1},{t^{(i)}_2},\cdots,{t^{(i)}_{r_t}}\}\subseteq \{0,1\}^{r_t}-\{0^{r_t}\}$ are linearly independent over $\mathbb{F}_2$ and $\bigcup_{i\in[\ell]} T^{(i)}= \{0,1\}^{r_t} - \{0^{r_t}\} $.



For each $k\in[\ell]\cup\{0\}$, define an $r_t$-qubit state in register $\textsf{T}$:
\begin{equation} \label{eq:yk}
\ket {y^{(k)}}_\textsf{T} = \ket{y_1^{(k)}\cdots y_{r_t}^{(k)}}_\textsf{T}, 
    y_j^{(k)} =\left\{\begin{array}{ll}
       x_{r_c+j}  & \text{if~} k=0, \\
       \langle {0^{r_c}t_j^{(k)}},x\rangle  &  \text{if~} k\in[\ell].
    \end{array}\right.
\end{equation}
Next, define disjoint sets $F_1,\ldots,F_\ell$ from $T^{(1)},\ldots, T^{(\ell)}$ by removing duplicates. 
\begin{equation}\label{eq:F_k}
\left\{\begin{array}{ll}
     F_1=\{ct:\ t\in T^{(1)},c\in\{0,1\}^{r_c}\}, & \\
         F_k=\{ct:\ t\in T^{(k)},c\in\{0,1\}^{r_c}\}-\bigcup_{d\in[k-1]}F_{d}, & 2\le k\le \ell.
\end{array}\right.
\end{equation}
These satisfy
$F_i\cap F_j =\emptyset$, for all $i\neq j \in[\ell]$, and 
\begin{equation}\label{eq:set_eq}
  \bigcup_{k\in [\ell]} F_k = \B^{r_c}\times \bigcup_{k\in [\ell]} T^{(k)} = \Bn-\{c0^{r_t}:c\in\{0,1\}^{r_c}\}.
\end{equation}
We are now in a position to define the unitary operators $\Pi$, $C_k$, $\mathcal{R}$ and $\Lambda_{r_c}$. 
\begin{enumerate}
    \item $\Pi$ is an $n$-qubit unitary defined by
    \begin{equation}\label{eq:pi}
        \Pi\ket{x_1x_2\cdots x_n}_{[n]}=\ket{x_1x_2\cdots x_{r_c}}_{\textsf{C}}\ket{x_{r_c+1}x_{r_c+2}\cdots x_n}_{\textsf{T}} \defeq\ket{x_{control}}_{\textsf{C}}\ket{x_{target}}_{\textsf{T}}.
    \end{equation}
    That is, $\Pi$ moves the content of the first $r_c$ qubits to register $\textsf{C}$ and the remaining qubits to register $\textsf{T}$. Note that $\Pi$ can be implemented by a sequence of SWAPs, and is thus an invertible linear transformation over $\mathbb{F}_2$.
    
    \item For $k\in[\ell]$, 
    \begin{equation}\label{eq:Ck}
        C_k\ket{x_{control}}_{\textsf{C}}\ket{y^{(k-1)}}_{\textsf{T}}=e^{i\sum_{s \in F_k}\langle s,x\rangle \alpha_s } \ket{x_{control}}_{\textsf{C}}\ket{y^{(k)}}_{\textsf{T}},
    \end{equation}
    where $\alpha_s$ is determined by Eq. \eqref{eq:alpha}, i.e., $C_k$ introduces a phase and updates stage $k-1$ to stage $k$. 
    
    \item $\mathcal{R}$ acts on qubit set $[n]-[r_c]$ and resets the suffix state as follows
    \begin{equation}\label{eq:reset}
        \mathcal{R}\ket{y^{(\ell)}}_{[n]-[r_c]}=\ket{y^{(0)}}_{[n]-[r_c]}.
    \end{equation}
    $\mathcal R$ is an invertible linear transformation over $\mathbb{F}_2$.
    
    \item $\Lambda_{r_c}$ is an $r_c$-qubit diagonal matrix acting on qubit set $[r_c]$, which satisfies 
    \begin{equation}\label{eq:Lambda_rc}
        \Lambda_{r_c}\ket{x_{control}}_{[r_c]}= e^{i\sum_{c\in \{0,1\}^{r_c}-\{0^{r_c}\}}\langle c0^{r_t},x\rangle\alpha_{c0^{r_t}}}\ket{x_{control}}_{[r_c]}.
    \end{equation}
\end{enumerate}

We now present circuit constructions for $C_k$, $\mathcal{R}$ and $\Lambda_{r_c}$ under general graph constraints.
\paragraph{Circuit construction for $C_k$}

Let $G=(V,E)$ denote a connected graph with vertex set $V={\sf C}\cup {\sf T}$. 
For all $k\in [\ell]$, $C_k$ is constructed in two stages:
\begin{align}
    \ket{x_{control}}_{\textsf{C}}\ket{y^{(k-1)}}_{\textsf{T}}&\xrightarrow{U_{Gen}^{(k)}}\ket{x_{control}}_{\textsf{C}}\ket{y^{(k)}}_{\textsf{T}}, \label{eq:Ugen_Graph}\\
    &\xrightarrow{U_{Gray}^{(k)}}e^{i\sum_{s \in F_k}\langle s,x\rangle \alpha_s}\ket{x_{control}}_{\textsf{C}}\ket{y^{(k)}}_{\textsf{T}}. \label{eq:Ugray_Graph}
\end{align}
$U^{(k)}_{Gen}$ is a linear transformation (over $\mathbb{F}_2$) on register $\textsf{T}$, and updates $\ket{y^{(k-1)}}_{\sf T}\rightarrow \ket{y^{(k)}}_{\sf T}$. 

$U^{(k)}_{Gray}$ is parameterized by $r_t$ integers $j_1, j_2, \ldots, j_{r_t}\in[r_c]$, each of which specifies an $(r_c,j_i)$-Gray code.  These Gray codes are used to update each qubit in the target register in a sequence of steps. More precisely, $U^{(k)}_{Gray}$ is carried out in $2^{r_c}+1$ phases, with each phase $p$ implementing a unitary $U_p$:

\begin{enumerate}
    \item Phase 1. For all $i\in[r_t]$,  $U_1$ applies $R(\alpha_{0^{r_c}t_i^{(k)}})$ (see Eqs.~\eqref{eq:rotation} and~\eqref{eq:alpha}) to the $i$-th qubit in $\sf T$ if $0^{r_c}t_i^{(k)} \in F_k$.
    \item Phases $2\le p\le 2^{r_c}$. $U_{p}$ consists of two steps:\begin{enumerate}
        \item Step $p.1$: Apply a unitary transformation $C_{p.1}$ satisfying, $\forall x\in\Bn$,
        \begin{align}\label{eq:step_p1}
           &\ket{x_{control}}_{\sf C}\ket{\langle c_{p-1}^{j_1 }t_1^{(k)},x\rangle,\langle c_{p-1}^{j_2 }t_2^{(k)},x\rangle,\cdots,\langle c_{p-1}^{j_{r_t} }t_{r_t}^{(k)},x\rangle}_{\sf T}\nonumber\\ \xrightarrow{C_{p.1}}& \ket{x_{control}}_{\sf C}\ket{\langle c_{p}^{j_1 }t_1^{(k)},x\rangle,\langle c_{p}^{j_2 }t_2^{(k)},x\rangle,\cdots,\langle c_{p}^{j_{r_t} }t_{r_t}^{(k)},x\rangle}_{\sf T}, \nonumber \\
          =&\ket{x_{control}}_{\sf C}|\langle c_{p-1}^{j_1 }t_1^{(k)},x \rangle \oplus x_{h_{j_1 p}},\langle c_{p-1}^{j_2 }t_2^{(k)},x\rangle \oplus x_{h_{j_2 p}},\cdots,\langle c_{p-1}^{j_{r_t} }t_{r_t}^{(k)},x\rangle \oplus x_{h_{j_{r_t}p}}\rangle_{\sf T}.
        \end{align}
        Note that each update $\langle c^{j_i}_{p-1} t_1^{(k)},x \rangle\rightarrow \langle c^{j_i}_{p} t_1^{(k)},x \rangle$ changes the prefix from $c^{j_i}_{p-1}$ to $c^{j_i}_{p}$, and can be implemented by a $\mathsf{CNOT}$ with control qubit $\ket{x_{h_{j_i p}}}$ and target being the $i$-th qubit in ${\sf T}$. 
        \item Step $p.2$: For all $i\in[r_t]$, apply $R(\alpha_{c^{j_i }_pt_{i}^{(k)}})$ to the $i$-th qubit in $\sf T$ if $c^i_pt_{i}^{(k)}\in F_k$, where $\alpha_{c_p^{j_i }t_i^{(k)}}$ is defined in Eq. \eqref{eq:alpha}.
    \end{enumerate}
    \item Phase $2^{r_c}+1$. $U_{2^{r_c}+1}$ carries out a transformation satisfying, $\forall x\in\Bn$
    ,
            \begin{align}\label{eq:phase_2rc+1}
           &\ket{x_{control}}_{\sf C}\ket{\langle c_{2^{r_c}}^{j_1 }t_1^{(k)},x\rangle,\langle c_{2^{r_c}}^{j_2 }t_2^{(k)},x\rangle,\cdots,\langle c_{2^{r_c}}^{j_{r_t} }t_{r_t}^{(k)},x\rangle}_{\sf T}\nonumber\\ 
           \xrightarrow{U_{2^{r_c}+1}}& \ket{x_{control}}_{\sf C}\ket{\langle c_{1}^{j_1 }t_1^{(k)},x\rangle,\langle c_{1}^{j_2 }t_2^{(k)},x\rangle,\cdots,\langle c_{1}^{j_{r_t} }t_{r_t}^{(k)},x\rangle}_{\sf T}, \nonumber\\
           =&\ket{x_{control}}_{\sf C}|\langle c_{2^{r_c}}^{j_1 }t_1^{(k)},x \rangle \oplus x_{h_{j_1 1}},\langle c_{2^{r_c}}^{j_2 }t_2^{(k)},x\rangle \oplus x_{h_{j_2 1}},\cdots,\langle c_{2^{r_c}}^{j_{r_t} }t_{r_t}^{(k)},x\rangle\oplus x_{h_{j_{r_c} 1}}\rangle_{\sf T}.
        \end{align}
Each update $\langle c_{2^{r_c}}^{j_i }t_1^{(k)},x\rangle\rightarrow \langle c_{1}^{j_i }t_1^{(k)},x\rangle$ changes the last prefix  $c_{2^{r_c}}^{j_i }$ to the first one $c_{1}^{j_i }$, which can be implemented by a CNOT with the $i$-th qubit in ${\sf T}$ as the target, controlled by $\ket{x_{h_{j_1 1}}}$.
\end{enumerate}

Let $\mathcal{D}(C_{p.1})$ and $\mathcal{S}(C_{p.1})$ denote the circuit depth and size, respectively, required to implement $C_{p.1}$ (Eq. \eqref{eq:step_p1}) under arbitrary graph constraint. Then the circuit depth and size for $C_k$ are shown as follows.
\begin{restatable}{lem}{Ck}\label{lem:Ck}
For all $k\in[\ell]$, the circuit $C_k$  in Eq. \eqref{eq:Ck} can be implemented by a quantum circuit of depth $O(n^2+2^{r_c}+\sum_{p=2}^{2^{r_c}}\mathcal{D}(C_{p.1}))$ and size $O(n^2+r_t2^{r_c}+\sum_{p=2}^{2^{r_c}}\mathcal{S}(C_{p.1}))$ under arbitrary graph constraint.
\end{restatable}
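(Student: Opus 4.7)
I would decompose $C_k = U^{(k)}_{\text{Gray}}\,U^{(k)}_{\text{Gen}}$ as in Eqs.~\eqref{eq:Ugen_Graph}--\eqref{eq:Ugray_Graph} and bound each factor separately.

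\emph{Bounding $U^{(k)}_{\text{Gen}}$.} Because both the standard basis $\{e_i\}_{i=1}^{r_t}$ (which produces $y^{(0)}$) and $\{t_i^{(k)}\}_{i=1}^{r_t}$ (which produces $y^{(k)}$) are linearly independent over $\mathbb{F}_2$, the map $\ket{y^{(k-1)}}_{\sf T}\mapsto\ket{y^{(k)}}_{\sf T}$ is an invertible $\mathbb{F}_2$-linear transformation on register $\sf T$, extended by identity on $\sf C$. Applying Lemma~\ref{lem:cnot_circuit} to the full $n$-vertex connected constraint graph (using qubits in $\sf C$ as routing space whenever the induced subgraph on $\sf T$ is insufficiently connected) yields an implementation of depth and size $O(n^2)$.

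\emph{Bounding $U^{(k)}_{\text{Gray}}$.} I would account for each of its $2^{r_c}+1$ phases in turn. Phase~1 is a layer of at most $r_t$ single-qubit rotations on disjoint target qubits, contributing $O(1)$ depth and $O(r_t)$ size. For each $p$ with $2\le p\le 2^{r_c}$, Step $p.1$ incurs depth $\mathcal{D}(C_{p.1})$ and size $\mathcal{S}(C_{p.1})$ by definition, while Step $p.2$ is another layer of at most $r_t$ parallel single-qubit rotations (of depth $O(1)$ and size $O(r_t)$). Summing over $p$ contributes $\sum_{p=2}^{2^{r_c}}\mathcal{D}(C_{p.1})+O(2^{r_c})$ to depth and $\sum_{p=2}^{2^{r_c}}\mathcal{S}(C_{p.1})+O(r_t\,2^{r_c})$ to size. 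Phase $2^{r_c}+1$ is structurally identical to a $C_{p.1}$ step, consisting of at most $r_t$ CNOTs; by Lemma~\ref{lem:cnot_path_constraint} each can be routed through the graph at cost $O(\mathrm{diam}(G))=O(n)$, so this final phase fits inside the $O(n^2)$ term already charged.

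\emph{Assembly and correctness.} Combining the three buckets yields the claimed depth $O(n^2 + 2^{r_c} + \sum_{p=2}^{2^{r_c}}\mathcal{D}(C_{p.1}))$ and size $O(n^2 + r_t\,2^{r_c} + \sum_{p=2}^{2^{r_c}}\mathcal{S}(C_{p.1}))$. Correctness---that the net action is multiplication by $e^{i\sum_{s\in F_k}\langle s,x\rangle\alpha_s}$ with the target register restored to $\ket{y^{(k)}}_{\sf T}$ as required by Eq.~\eqref{eq:Ck}---follows from Lemma~\ref{lem:GrayCode}: the prefixes $c_p^{j_i}$ enumerate $\{0,1\}^{r_c}$ exactly once as $p$ ranges over $[2^{r_c}]$, so together with the disjointness of the $F_k$ (Eq.~\eqref{eq:F_k}) every $s\in F_k$ receives its rotation $R(\alpha_s)$ exactly once; Phase $2^{r_c}+1$ then returns the target register from $\ket{\langle c_{2^{r_c}}^{j_i}t_i^{(k)},x\rangle}$ to $\ket{\langle c_{1}^{j_i}t_i^{(k)},x\rangle}=\ket{y^{(k)}}_{\sf T}$.

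The main obstacle, as I see it, is verifying that Phase $2^{r_c}+1$---which does not appear in the stated sum---is genuinely absorbed into the $O(n^2)$ term, and similarly that $U^{(k)}_{\text{Gen}}$ can be realized on register $\sf T$ without a depth blow-up under low-degree graph constraints. Both reduce to the crude bound that any single CNOT can be implemented in depth $O(n)$ on a connected $n$-vertex graph (Lemma~\ref{lem:cnot_path_constraint}), combined with at most $r_t=O(n)$ such gates in series.
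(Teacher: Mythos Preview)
Your proposal is correct and follows essentially the same approach as the paper: decompose $C_k=U^{(k)}_{\text{Gray}}U^{(k)}_{\text{Gen}}$, bound $U^{(k)}_{\text{Gen}}$ by Lemma~\ref{lem:cnot_circuit}, and sum the per-phase costs of $U^{(k)}_{\text{Gray}}$. The only cosmetic difference is Phase~$2^{r_c}+1$: the paper treats it as a single invertible $\mathbb{F}_2$-linear map and again invokes Lemma~\ref{lem:cnot_circuit} for $O(n^2)$, whereas you bound it as $r_t$ routed CNOTs at cost $O(n)$ each---both land at $O(n^2)$ and your concern about this step is unfounded.
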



 


\paragraph{Circuit construction for $\mathcal{R}$} 
\begin{lemma}\label{lem:reset}
Unitary transformation $\mathcal{R}$ (Eq. \eqref{eq:reset}) can be implemented by a quantum circuit of depth and size $O(n^2)$ under arbitrary graph constraint.
\end{lemma}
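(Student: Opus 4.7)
The plan is to recognize $\mathcal{R}$ as an invertible $\mathbb{F}_2$-linear transformation (as already asserted in the paragraph preceding the lemma) and then invoke Lemma~\ref{lem:cnot_circuit} directly.

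First I would make the linearity concrete for later use. By Eq.~\eqref{eq:yk}, each bit $y^{(\ell)}_j$ equals $\langle 0^{r_c} t_j^{(\ell)}, x\rangle = \langle t_j^{(\ell)}, x_{\sf T}\rangle$ over $\mathbb{F}_2$, where $x_{\sf T} = (x_{r_c+1}, \ldots, x_n)$, and $y^{(0)} = x_{\sf T}$. Assembling the vectors $t_1^{(\ell)}, \ldots, t_{r_t}^{(\ell)}$ into an $r_t \times r_t$ matrix $A^{(\ell)}$ over $\mathbb{F}_2$, the linear-independence hypothesis on $T^{(\ell)}$ forces $A^{(\ell)}$ to be invertible, so on register $\sf T$ the map $\mathcal{R}$ is exactly the invertible linear transformation $(A^{(\ell)})^{-1}$.

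Next I would lift $\mathcal{R}$ to an $n$-qubit invertible $\mathbb{F}_2$-linear transformation by tensoring with the identity on the $r_c$ qubits of register $\sf C$, and feed this into Lemma~\ref{lem:cnot_circuit}. Since the constraint graph $G$ has $n$ vertices, is connected, and has minimum degree $\delta \ge 1$, the lemma supplies a CNOT circuit of depth and size $O(n^2/\log \delta) = O(n^2)$ implementing this extension under $G$ constraint. The resulting circuit may use $\sf C$ qubits as routing intermediaries through the graph, but since the extended map is identity on $\sf C$, every such qubit is returned to its input state at the end of the circuit; hence the circuit correctly realizes $\mathcal{R}$ on $\sf T$ regardless of what $\sf C$ initially held.

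The only anticipated obstacle is the boundary case $\delta = 1$ (for example, when $G$ is a path or any tree with leaves), where the factor $1/\log \delta$ in Lemma~\ref{lem:cnot_circuit} formally degenerates. This is absorbed by the classical observation that any invertible $\mathbb{F}_2$-linear transformation on $n$ bits can be implemented by $O(n^2)$ CNOTs along a spanning tree of any connected graph via parallelised Gaussian elimination; i.e., the $1/\log \delta$ speedup of Lemma~\ref{lem:cnot_circuit} simply vanishes in the $\delta=1$ regime while the baseline $O(n^2)$ bound still holds. This yields the claimed $O(n^2)$ bound uniformly over all connected constraint graphs.
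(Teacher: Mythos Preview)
Your proposal is correct and follows exactly the paper's approach: observe that $\mathcal{R}$ is an invertible $\mathbb{F}_2$-linear transformation and invoke Lemma~\ref{lem:cnot_circuit}. The paper's proof is in fact a two-line version of your argument; your added details (the explicit matrix $(A^{(\ell)})^{-1}$, the extension to all $n$ qubits so that Lemma~\ref{lem:cnot_circuit} applies even when the $r_t$ target qubits do not induce a connected subgraph, and the $\delta=1$ boundary case) are sound refinements that the paper leaves implicit.
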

\begin{proof}
$\mathcal{R}$ is an invertible linear transformation over $\mathbb{F}_2$ acting on qubits $[n]-[r_c]$. The result follows from Lemma~\ref{lem:cnot_circuit}. 
\end{proof}

\paragraph{Circuit construction for $\Lambda_{r_c}$} 

In \cite{sun2021asymptotically}, unitary $\Lambda_{r_c}$ is implemented recursively in depth $O(2^{r_c}/r_c)$. Under a constraint graph, however, the $r_c$ qubits of $\Lambda_{r_c}$ are not necessarily connected, and we therefore cannot implement $\Lambda_{r_c}$ recursively as before.
Fortunately, we can still realize $\Lambda_{r_c}$ with only a modest $O(n)$ overhead.
\begin{restatable}{lem}{Lambdarc}
\label{lem:Lambda_rc}
The $r_c$-qubit diagonal unitary matrix $\Lambda_{r_c}$ (Eq.\eqref{eq:Lambda_rc}) can be implemented by a quantum circuit of depth and size $O(n2^{r_c})$ under arbitrary graph constraint.
\end{restatable}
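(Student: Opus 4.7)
The plan is to implement $\Lambda_{r_c}$ by sequentially applying, for each nonzero $c \in \{0,1\}^{r_c}$, the phase rotation $R(\alpha_{c 0^{r_t}})$ to a single designated ``accumulator'' qubit that carries $\langle c, x_{control}\rangle$, with successive values of $c$ traversed in cyclic Gray code order so that only one CNOT is required between consecutive rotations. This is the generate/rotate/restore template recalled at the end of Section~\ref{sec:ucg}, reorganized so that the constraint graph $G$ contributes at most a multiplicative factor of $n$ to the unconstrained cost of $\Theta(2^{r_c})$ rotations.

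Concretely, I would designate any qubit $v \in [r_c]$ as the accumulator; it already holds $\ket{x_v} = \ket{\langle e_v, x_{control}\rangle}$, where $e_v \in \{0,1\}^{r_c}$ is the indicator string of position $v$. Using Lemma~\ref{lem:GrayCode}, take a cyclic $r_c$-bit Gray code and rotate it so that it begins at $e_v$, giving a Hamiltonian cycle $c^{(1)} = e_v, c^{(2)}, \ldots, c^{(2^{r_c})}$ on $\{0,1\}^{r_c}$ in which consecutive entries differ in exactly one bit $e_{i_p}$ and $c^{(2^{r_c})}$ differs from $c^{(1)}$ in one bit. For $p = 1, 2, \ldots, 2^{r_c}$, apply $R(\alpha_{c^{(p)} 0^{r_t}})$ to $v$ whenever $c^{(p)} \neq 0^{r_c}$, and then apply a single $\Cnot^{i_p}_v$ to update the accumulator from $\langle c^{(p)}, x_{control}\rangle$ to $\langle c^{(p+1)}, x_{control}\rangle$ (indices mod $2^{r_c}$). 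After the full cycle the accumulator has returned to $\ket{x_v}$, so no separate uncomputation step is needed, and every nonzero $c$ has been visited exactly once.

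The cost analysis is then immediate. By Lemma~\ref{lem:cnot_path_constraint}, each $\Cnot^{i_p}_v$ can be realized under the arbitrary connected constraint graph $G$ on $n$ vertices in depth and size $O(d_G(i_p, v)) = O(n)$, while each single-qubit rotation costs $O(1)$. Summing over the $2^{r_c}$ transitions and the at most $2^{r_c} - 1$ rotations yields the claimed depth and size bound $O(n \cdot 2^{r_c})$. The only subtlety to verify is that cyclically shifting the Gray code of Lemma~\ref{lem:GrayCode} so that it starts at $e_v$ preserves both the one-bit-flip property between consecutive entries and the closing transition, which is immediate from vertex transitivity of the $r_c$-dimensional hypercube; hence no substantive technical obstacle arises.
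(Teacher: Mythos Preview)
Your Gray-code traversal has a gap: any cyclic $r_c$-bit Gray code flips bit $v$ at least once (indeed a positive number of times for every coordinate, by Lemma~\ref{lem:GrayCode}). At such a transition your recipe calls for $\Cnot^{v}_{v}$, which is not a gate; more to the point, the intended update is to XOR $x_v$ into the accumulator, but by then qubit $v$ no longer stores $x_v$---it stores the running parity $\langle c^{(p)}, x_{control}\rangle$---and $x_v$ is not recoverable from the remaining $r_c-1$ qubits, each of which still holds only its own $x_i$. So the accumulator cannot be advanced across those transitions, and the construction breaks. The ``only subtlety'' you flagged (that a cyclic shift preserves the Gray-code property) is fine; the actual obstacle is this self-control.

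The standard repair, which is what the paper's one-line proof invokes as the black box Lemma~\ref{lem:diag_size}, avoids ever making the accumulator its own control: partition the nonzero $c$'s by their highest set bit $j$, use qubit $j$ as the accumulator for that block, and Gray-code-enumerate only the lower $j-1$ bits, so every CNOT has control in $\{1,\ldots,j-1\}$ and target $j$. This still yields an unconstrained circuit of size $O(2^{r_c})$, after which your overhead argument via Lemma~\ref{lem:cnot_path_constraint}---each CNOT costs $O(n)$ under the arbitrary connected graph on $n$ vertices---goes through unchanged to give $O(n\,2^{r_c})$.
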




\subsection{Efficient circuits: general framework}
\label{sec:general_framework_noancilla_main}

An $O(2^n/n)$-depth and $O(2^n)$-size circuit construction for general $n$-qubit diagonal unitary $\Lambda_n$ under no graph constraints is given in~\cite{sun2021asymptotically}, using no ancillary qubits.  From these, it is straightforward, via Lemma~\ref{lem:cnot_path_constraint}, to obtain upper bounds on the circuit depth required under various graph constraints (see Table~\ref{tab:lambda-bounds}). These bounds lead to an increase in circuit depth by a factor of $n\cdot \diam(G)$, which may seem unavoidable. However, we show that this is not the case, and savings can be had by the constructions we give in the remainder of this section. Note that for $\Path_n$, $\Tree_n(2)$ and $\Expander_n$, our constructed circuits have depth either $O(2^n/n)$ or $O(\log (n)\cdot 2^n/n)$, which are almost tight as a lower bound of $\Omega(2^n/n)$ is known for QSP (or diagonal unitary operations) even without graph constraints. For general graphs, our constructed circuit has depth $O(2^n)$.

\begin{table}[h!]
    \centering
    \caption{Circuit depth upper bounds (ub) required to implement $\Lambda_n$ under various graph constraints. Trivial bounds are based on the unconstrained construction from~\cite{sun2021asymptotically} and Lemma~\ref{lem:cnot_path_constraint}, which implies that the required circuit depth is $O(n\cdot \diam(G)\cdot 2^n/n)$. Big O notation is implied.
    } 
    \begin{tabular}{c|cccc}
         &  $\Path_n$ & $\Tree_n(2)$ & $\Expander_n$ & General $G$\\ \hline
            $\diam(G)$  & $n$ & $\log n$ & $\log n$ & $n$\\ \hline
    Depth (ub, trival)     & $n2^n$ & $\log(n)2^n$ & $\log(n)2^n$ &$n2^n$  \\
    \hline
    \multirow{2}{*}{Depth (ub, new)}    & $2^n/n$   & $\log(n) 2^n/n$ & $\log(n)2^n/n$  &  $2^n$ \\
     & [Lem.~\ref{lem:diag_path_withoutancilla_main}]   & [Lem.~\ref{lem:diag_tree_withoutancilla}]&  [Lem.~\ref{lem:diag_expander_withoutancilla}] & [Lem. \ref{lem:diag_graph_withoutancilla}]
    \end{tabular}
    \label{tab:lambda-bounds}
\end{table}


To achieve the improved results in Table~\ref{tab:lambda-bounds} we make two design choices for each constraint graph type:
\begin{enumerate}
    \item The choice of control and target registers $\sf C$ and $\sf T$. 
    \item The choice of Gray codes, as specified by the integers $j_1, j_2, \ldots, j_{r_t}$ used to implement the $C_k$ operators.
\end{enumerate}
We adopt the following general strategy. As per Lemma~\ref{lem:cnot_path_constraint}, a graph $G$ constraint leads to an overhead cost $d(u,v)$ when implementing CNOT gates between vertices $u,v\in G$.  To minimize the increase in circuit depth, we aim to choose $\sf C$ and $\sf T$
such that the control and target qubits are close for as many CNOT gates as possible. Ideally, one desires that all CNOT gates have control and target qubits $O(1)$-distance apart. As this does not appear to be possible, we instead design $\sf C$ and $\sf T$ and the circuits $C_i$ in such a way that the number of CNOT gates acting across distance $d$ decays exponentially with $d$, leading to only a small ($O(\log n)$) or even constant overall overhead. 
As mentioned in Section \ref{sec:diag_without_ancilla_framework_main}, we also adapt the choice of Gray codes to account for the graph constraints, and the choices of control and target registers (see Table~\ref{tab:choice-of-gray}).


\begin{table}[!ht]
    \centering
    \caption{Integers $j_i$ ($i=1, \ldots, r_t$) which specify the $r_t$ Gray codes used in the implementation of $C_k$ operators. $A(i) = (i-1)(2^{a+1}-2)$, with $ a=\lceil \log(2\log n)\rceil$. $K_n$ is the complete graph on $n$ vertices.}
    \begin{tabular}{c|c|cccc}
       &  $K_n$~\cite{sun2021asymptotically}  & $\Path_n$  & $\Tree_n(2)$ & $\Expander_n$ & General $G$\\ \hline
    $j_i$    & $i$ &$ i$ &  $1+A(i)$  & $1$ & $1$ 
    \end{tabular}
    
    \label{tab:choice-of-gray}
\end{table}

\subsection{Efficient circuits under path and $d$-dimensional grid constraints}
\label{sec:diag_without_ancilla_path_main}
Here we present the circuit depth and size required for diagonal unitary matrices under path and grid constraints, using no ancillary qubits. Omitted proofs are given in Appendix \ref{sec:diag_without_ancilla_path}.
\paragraph{Choice of $\sf C$ and $\sf T$}
If $n-2\lceil \log n\rceil$ is even, let $\tau=2\lceil \log n\rceil$; if $n-2\lceil \log n\rceil$ is odd, let $\tau=2\lceil  \log n\rceil+1$. Then $2\lceil  \log n\rceil\le \tau\le 2\lceil  \log n\rceil+1$ and $n-\tau$ is even.
The control and target registers are taken to be ${\sf C}=\left\{2i-1:\forall i\in[r_t]\right\}\cup \left\{2r_t+j:\forall j\in[\tau]\right\}$ and ${\sf T}=\left\{2i:\forall i\in[r_t]\right\}$
respectively (see Fig. \ref{fig:pi_n^k_main}, lower part), where $r_c=\frac{n+\tau}{2}$ and $r_t=\frac{n-\tau}{2}$. 

\paragraph{Implementation of $\Pi$}
In this section, the unitary $\Pi$ of Eq. \eqref{eq:pi} is denoted $\Pi^{path}$, and implemented as follows. 
\begin{restatable}{lem}{pipath}
\label{lem:pi_path}
The transformation $\Pi^{path}$, defined by 
\begin{align*}
\bigotimes_{i=1}^n\ket{x_i}_{i}\xrightarrow{\Pi^{path}} &\bigotimes_{i=1}^{r_t
    }\ket{x_i}_{{2i-1}}
    \bigotimes_{i=1}^{\tau}\ket{x_{r_t
    +i}}_{{n-\tau+i}}
    \bigotimes_{i=1}^{r_t
    }\ket{x_{i+
    r_c
    }}_{{2i}} \defeq \ket{x_{control}}_{\sf C}\ket{x_{target}}_{\sf T},
\end{align*}
i.e, which moves the last $r_t$ qubits to the first $r_t$ even positions, can be implemented by a CNOT circuit of depth O(n) and size $O(n^2)$ under $\Path_n$ constraint.
\end{restatable}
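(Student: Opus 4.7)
The plan is to observe that $\Pi^{path}$ is simply a permutation of the qubits on the 1D chain, and then appeal to the fact that any permutation on a line can be sorted in linear depth via adjacent swaps. Concretely, the map rearranges the first $r_t$ original qubits into the odd positions $1,3,\ldots,2r_t-1$, the last $r_t$ original qubits into the even positions $2,4,\ldots,2r_t$, and leaves the middle block of $\tau$ qubits (currently at positions $r_t+1,\ldots,r_c$) shifted rigidly to positions $2r_t+1,\ldots,n$. Since this is a permutation of basis states, it is an $\mathbb{F}_2$-linear transformation; in particular it can be realized by a network of SWAP gates acting only on adjacent vertices of $\Path_n$.

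Next, I would invoke the classical fact that the odd-even transposition sort (equivalently, any bubble-style sorting network on a line) realizes an arbitrary permutation of $n$ items on a 1D chain using $n$ rounds, where each round consists of simultaneous swaps on disjoint pairs of adjacent positions (alternating between the odd-even and even-odd pairings). This yields a SWAP network of depth $O(n)$ and total count $O(n^2)$, using only nearest-neighbour swaps. Replacing each adjacent SWAP by its standard implementation as $3$ CNOTs on the same pair of qubits preserves the $\Path_n$ constraint and changes the depth and size only by constant factors. Hence $\Pi^{path}$ can be carried out by a CNOT circuit of depth $O(n)$ and size $O(n^2)$.

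The only thing to be careful about is not instead invoking Lemma~\ref{lem:cnot_circuit}, which would give a useless bound because $\Path_n$ has minimum degree $\delta=1$ and $\log\delta=0$; the linear-depth sorting-network argument is what yields the stated $O(n)$ depth. Since no nontrivial structure of the particular permutation is exploited beyond its being a permutation, there is no real obstacle, and the proof reduces to packaging the odd-even sort together with the CNOT decomposition of SWAP.
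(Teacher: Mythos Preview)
Your argument is correct. The paper's proof takes essentially the same high-level route---implement the qubit permutation by nearest-neighbour SWAPs, then replace each SWAP by three CNOTs---but it uses a slightly different parallelization scheme: rather than invoking odd-even transposition sort, it observes that each of the last $r_t$ qubits can be shuttled to its target even position by $O(n)$ adjacent SWAPs, and that these $r_t$ shuttling sequences can be overlapped in a pipeline, yielding depth $O(n)$ and size $O(n^2)$.

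Your use of odd-even transposition sort is a cleaner and more general argument: it handles \emph{any} permutation on $\Path_n$ in $n$ rounds of disjoint adjacent swaps, so no structure of $\Pi^{path}$ beyond ``it is a permutation'' is needed. The paper's pipeline argument, by contrast, exploits the specific shape of this permutation (a single block being interleaved), which makes it marginally more concrete but less portable. Both give identical asymptotic bounds, and your remark about why Lemma~\ref{lem:cnot_circuit} is not applicable here (minimum degree $\delta=1$) is a good observation.
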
 
The effect of $\Pi^{path}$ is shown in Fig. \ref{fig:pi_n^k_main}.
    \begin{figure}[ht]
        \centering
    \begin{tikzpicture}
    \draw (-2,0) -- (1.2,0) (1.8,0) -- (5.2,0) (5.8,0) -- (7,0);
    
    \draw [fill=black] (-2,0) circle (0.05)
                       (-1,0) circle (0.05)
                       (0,0) circle (0.05)
                      (1,0) circle (0.05)
                      (2,0) circle (0.05)
                      (3,0) circle (0.05);
    \draw [fill=black] (4,0) circle (0.05)
                      (5,0) circle (0.05)
                      (6,0) circle (0.05)
                      (7,0) circle (0.05);
    \draw (-3,0) node{\scriptsize qubit};
    \draw (-3,-0.8) node{\scriptsize  state};
    \draw (-3,-2) node{\scriptsize qubit};
    \draw (-3,-2.8) node{\scriptsize state};
                       
     \draw (1.2,0) node[anchor=west]{\scriptsize $\cdots$} (5.2,0) node[anchor=west]{\scriptsize $\cdots$};
          \draw (-2,-0.3) node{\scriptsize $1$} (-2,-0.8) node{\scriptsize $\ket{x_1}$};
         \draw (-1,-0.3) node{\scriptsize $2$} (-1,-0.8) node{\scriptsize $\ket{x_2}$};
     \draw (0,-0.3) node{\scriptsize $3$} (0,-0.8) node{\scriptsize $\ket{x_3}$}
          (1,-0.3) node{\scriptsize $4$} (1,-0.8) node{\scriptsize $\ket{x_4}$}
          (2,-0.3) node{\scriptsize ${r_c-1}$} (2,-0.8) node{\scriptsize $\ket{x_{r_c-1}}$}
          (3,-0.3) node{\scriptsize ${r_c}$} (3,-0.8) node{\scriptsize $\ket{x_{r_c}}$}
          (4,-0.3) node{\scriptsize ${r_c+1}$} (4,-0.8) node{\scriptsize  $\underline{\ket{x_{r_c+1}}}$}
          (5,-0.3) node{\scriptsize ${r_c+2}$} (5,-0.8) node{\scriptsize  $\underline{\ket{x_{r_c+2}}}$}
          (6,-0.3) node{\scriptsize ${n-1}$} (6,-0.8) node{\scriptsize  $\underline{\ket{x_{n-1}}}$}
          (7,-0.3) node{\scriptsize ${n}$} (7,-0.8) node{\scriptsize  $\underline{\ket{x_n}}$};
          
     \draw (1.2,-2) node[anchor=west]{\scriptsize $\cdots$} (5.2,-2) node[anchor=west]{\scriptsize $\cdots$};
        \draw (-2,-2) -- (1.2, -2) (1.8,-2) -- (5.2,-2) (5.8,-2) -- (7,-2);
    
    \draw [fill=black] (-2,-2) circle (0.05) (0,-2) circle (0.05)
                      (2,-2) circle (0.05) (4,-2) circle (0.05) (5,-2) circle (0.05) (6,-2) circle (0.05) (7,-2) circle (0.05);
    \draw [fill=white, draw=black] (-1,-2) circle (0.05) (1,-2) circle (0.05) (3,-2) circle (0.05);
            \draw (-2,-2.3) node{\scriptsize $1$} (-2,-2.8) node{\scriptsize $\ket{x_1}$};
        \draw (-1,-2.3) node{\scriptsize $2$} (-1,-2.8) node{\scriptsize $\underline{\ket{x_{r_c+1}}}$};
     \draw (0,-2.3) node{\scriptsize $3$} (0,-2.8) node{\scriptsize $\ket{x_2}$}
          (1,-2.3) node{\scriptsize $4$} (1,-2.8) node{\scriptsize  $\underline{\ket{x_{r_c+2}}}$}
          (2,-2.3) node{\scriptsize ${n-\tau-1}$} (2,-2.8) node{\scriptsize $\ket{x_{r_t}}$}
          (3,-2.3) node{\scriptsize ${n-\tau}$} (3,-2.8) node{\scriptsize $\underline{\ket{x_{n}}}$}
          (4,-2.3) node{\scriptsize ${n-\tau+1}$} (4,-2.8) node{\scriptsize  $\ket{x_{r_t+1}}$}
          (5,-2.3) node{\scriptsize ${n-\tau+2}$} (5,-2.8) node{\scriptsize  $\ket{x_{r_t+2}}$}
          (6,-2.3) node{\scriptsize ${n-1}$} (6,-2.8) node{\scriptsize $\ket{x_{r_c-1}}$}
          (7,-2.3) node{\scriptsize ${n}$} (7,-2.8) node{\scriptsize  $\ket{x_{r_c}}$};
       \draw (2.5,-1.5) node{$\downarrow~\Pi^{path}$};           
    
    \end{tikzpicture}
        \caption{The effect of $\Pi^{path}$. In the lower figure, the qubits in hollow circles (called white qubits) form register $\sf T$, and those in filled circles (called white qubits) form register $\sf C$.}
        \label{fig:pi_n^k_main}
     \end{figure}
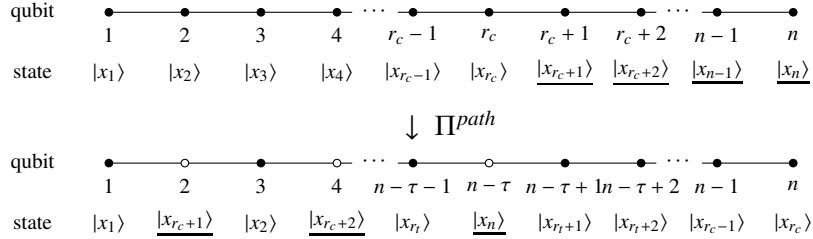

\paragraph{Implementation of $C_k$}
\begin{restatable}{lem}{Ukwithoutancilla}\label{lem:U(k)_without_ancilla}
Let $x=x_1\cdots x_{r_c}\in \B^{r_c}$ and $y=y_1\cdots y_{r_t}\in\B^{r_t}$. For all $k\le r_c$, define the unitary $U^{(k)}$ (see Fig.~\ref{fig:U(k)_without_ancilla_main}) by Eq. \eqref{eq:U(k)_without_ancilla_main}.

\begin{align}\label{eq:U(k)_without_ancilla_main}
&\bigotimes_{i=1}^{r_t}\ket{x_iy_i}_{\{2i-1,2i\}}\bigotimes_{i=r_t+1}^{r_c}\ket{x_{i}}_{r_t+i} \nonumber\\
\xrightarrow{U^{(k)}} &\left\{
\begin{array}{ll}
      \bigotimes_{i=1}^{r_t}\ket{x_i(y_i\oplus x_{i+k-1})}_{\{2i-1,2i\}}\bigotimes_{i=r_t+1}^{r_c}\ket{x_{i}}_{r_t+i}, &\text{if~}k\le \tau+1, \\
      \bigotimes_{i=1}^{r_c-k+1}\ket{x_i(y_i\oplus x_{i+k-1})}_{\{2i-1,2i\}}\bigotimes_{i=r_c-k+2}^{r_t}\ket{x_i(y_i
     \oplus x_{i-r_c+k-1})}_{\{2i-1,2i\}}\bigotimes_{i=r_t+1}^{r_c}\ket{x_{i}}_{r_t+i},&\text{if~}k\ge \tau+2. \\
\end{array}\right.
\end{align}

Under $\Path_n$ constraint, $U^{(k)}$ can be implemented by circuit of depth $O(k)$ and size $O(r_t k)$ for $k\le \tau+1$, and depth and size $O(r_tk)$ for $k\ge \tau+2$. 
\end{restatable}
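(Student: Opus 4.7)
The unitary $U^{(k)}$ applies, in parallel for each $i\in[r_t]$, a CNOT whose control is $x_{\sigma(i)}$ and whose target is $y_i$, where $\sigma(i)=i+k-1$ in the non-wrap regime (and, when $k\ge \tau+2$ and $i>r_c-k+1$, the wrapped index $\sigma(i)=i-r_c+k-1$). Because $x_j$ is stored at path position $2j-1$ if $j\le r_t$ and at position $r_t+j$ otherwise, while $y_i$ is at position $2i$, each such CNOT must act across a path distance of $\Theta(k)$ in the non-wrap regime and up to $\Theta(n)$ in the wrap regime. My plan is to realize $U^{(k)}$ in three stages: (1) move each source $x_{\sigma(i)}$ adjacent to its target $y_i$ by a sequence of nearest-neighbor SWAPs, (2) apply a single layer of $r_t$ disjoint adjacent CNOTs, and (3) uncompute stage (1).

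For the main case $k\le \tau+1$, the desired CNOTs implement a uniform ``shift-by-$(k-1)$'' of the $x$-qubits relative to the interleaved $y$-qubits. I will execute stage (1) as $k-1$ bricklayer shift rounds. In a single round, the entire $x$-sequence is moved one index to the left; since $x$- and $y$-qubits alternate along the path, one such unit shift amounts to permuting disjoint $(x,y,x)$ triples, and can be scheduled as $O(1)$ layers of parallel adjacent SWAPs on pairwise disjoint pairs. After $k-1$ rounds, each source qubit $x_{\sigma(i)}$ sits adjacent to its $y_i$; stage (2) is then $r_t$ disjoint nearest-neighbor CNOTs in depth $1$, and stage (3) reverses stage (1). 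Each SWAP decomposes into $3$ CNOTs, so the total circuit depth is $O(k)$ and the total size is $O(r_t k)$, as required.

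For $k\ge \tau+2$, the map $\sigma$ is no longer a pure shift and some of the required CNOTs span long distances in the path. I apply the same SWAP-based strategy, but I only need the weaker bound ``depth and size both $O(r_t k)$''. Since each bricklayer shift round contains $O(r_t)$ SWAPs and we perform $O(k)$ rounds, together with the final CNOT layer and its uncomputation, the total gate count is $O(r_t k)$ and the depth is at most the same order; the additional flexibility compared to the small-$k$ case is that we no longer need to parallelize all $r_t$ long shifts into depth $O(k)$.

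The main obstacle will be the explicit scheduling of stage (1): verifying that within each bricklayer round the scheduled SWAPs act on pairwise disjoint adjacent pairs (so each round truly has depth $O(1)$), while the cumulative effect after $k-1$ rounds realizes precisely the correct shift of the interleaved $x/y$ arrangement. Additional care is needed at the boundary between the interleaved region (path positions $1,\ldots,2r_t$) and the tail of $\tau$ control qubits (positions $2r_t+1,\ldots,n$): the shift must propagate correctly across this boundary so that the sources $x_{r_t+1},\ldots,x_{r_c}$ living in the tail are brought adjacent to the appropriate $y_i$ without causing a depth bottleneck. Once this scheduling is in hand, the depth and size bounds follow by a direct count of gates per round.
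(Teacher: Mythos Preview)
Your Case 1 plan is essentially the paper's approach: the paper also moves each $x_{i+k-1}$ adjacent to $y_i$ by SWAPs, applies a layer of nearest-neighbor CNOTs, and undoes the SWAPs, noting that the per-pair procedures can be executed in parallel. Your ``$k-1$ bricklayer shift rounds'' is a more explicit description of that parallel schedule; the boundary issue you flag (feeding the tail qubits $x_{r_t+1},\dots,x_{r_c}$ into the interleaved region) is real but minor, and the paper handles it no more explicitly than you do.

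Your Case 2 argument, however, has a gap. You claim that ``$O(k)$ bricklayer shift rounds, each with $O(r_t)$ SWAPs'' suffice. But for the wrap-around indices $i\in\{r_c-k+2,\dots,r_t\}$, the control $x_{i-r_c+k-1}$ sits at position $2(i-r_c+k-1)-1$, which is to the \emph{left} of the target $y_i$ (at position $2i$) by distance $2(r_c-k)+3$. Your leftward unit shifts move these controls further from their targets, not closer; and $r_c-k$ can be as large as $r_t-2=\Theta(n)$, so $O(k)$ rounds cannot bring them adjacent. Thus after your proposed $O(k)$ rounds the wrap CNOTs are not yet nearest-neighbor, and the accounting ``$O(k)$ rounds $\times$ $O(r_t)$ SWAPs'' does not establish the claimed $O(r_tk)$ bound.

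The paper's Case 2 argument is simpler and sidesteps this: it abandons parallelism entirely and implements each of the $r_t$ long-range CNOTs serially via Lemma~\ref{lem:cnot_path_constraint}. The key count is that there are $r_c-k+1\le r_t$ non-wrap CNOTs of distance $O(k)$, and only $k-\tau-1\le k$ wrap CNOTs of distance $O(r_c-k)\le O(r_t)$, so the total depth and size are $(r_c-k+1)\cdot O(k)+(k-\tau-1)\cdot O(r_c-k)=O(r_tk)$. The essential observation you are missing is that the ``long'' wrap CNOTs are few (at most $k$ of them), which is what makes the product bounded by $O(r_tk)$.
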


\begin{figure}[hbt]
    \centering
    \begin{tikzpicture}
        \draw (0.5,2) parabola bend (1.25,2.25) (2,2) (2.5,2) parabola bend (3.25,2.25) (4,2) (4.5,2) parabola bend (5.25,2.25) (6,2) (6.5,2) parabola bend (7.25,2.25) (8,2) (1.5,2) parabola bend (2.25,2.25) (3,2) (3.5,2) parabola bend (4.25,2.25) (5,2) (5.5,2) parabola bend (6.25,2.25) (7,2) (7.5,2) parabola bend (8,2.25) (8.5,2);
     
     \draw (0.5,1) parabola bend (2.75,1.4) (5,1) (1.5,1) parabola bend (3.75,1.4) (6,1) (2.5,1) parabola bend (4.75,1.4) (7,1) (3.5,1) parabola bend (5.75,1.4) (8,1) (4.5,1) parabola bend (6.5,1.4) (8.5,1) (5.5,1) parabola bend (7.25,1.4) (9,1) (0,1) parabola bend (3.25,0.3) (6.5,1)  (1,1) parabola bend (4.25,0.3) (7.5,1);
    
        \draw (0,1) -- (9,1);
        \draw (0,2) -- (9,2);
        \draw [fill=black] (0,1) circle (0.05) (1,1) circle (0.05) (2,1) circle (0.05)  (3,1) circle (0.05) (4,1) circle (0.05) (5,1) circle (0.05) (6,1) circle (0.05) (7,1) circle (0.05);
        \draw [fill=black] (0,2) circle (0.05) (1,2) circle (0.05) (2,2) circle (0.05)  (3,2) circle (0.05) (4,2) circle (0.05) (5,2) circle (0.05) (6,2) circle (0.05) (7,2) circle (0.05);       
        \draw [fill=white] (0.5,1) circle (0.05) (1.5,1) circle (0.05) (2.5,1) circle (0.05)  (3.5,1) circle (0.05) (4.5,1) circle (0.05) (5.5,1) circle (0.05) (6.5,1) circle (0.05) (7.5,1) circle (0.05);
        \draw [fill=white] (0.5,2) circle (0.05) (1.5,2) circle (0.05) (2.5,2) circle (0.05)  (3.5,2) circle (0.05) (4.5,2) circle (0.05) (5.5,2) circle (0.05) (6.5,2) circle (0.05) (7.5,2) circle (0.05);
        \draw [fill=black,draw=black] (8,1) circle (0.05) (8.5,1) circle (0.05) (9,1) circle (0.05) (8,2) circle (0.05) (8.5,2) circle (0.05) (9,2) circle (0.05);
        
        \draw [fill=black] (0,1)  node[anchor=north]{\scriptsize $x_1$} (1,1) node[anchor=north]{\scriptsize $x_2$} (2,1) node[anchor=north]{\scriptsize $x_3$}  (3,1) node[anchor=north]{\scriptsize $x_4$} (4,1) node[anchor=north]{\scriptsize $x_5$} (5,1) node[anchor=north]{\scriptsize $x_{r_t-2}$} (6,1) node[anchor=north]{\scriptsize $x_{r_t-1}$} (7,1) node[anchor=north]{\scriptsize $x_{r_t}$};
        \draw [fill=black] (0,2)  node[anchor=north]{\scriptsize $x_1$} (1,2) node[anchor=north]{\scriptsize $x_2$} (2,2) node[anchor=north]{\scriptsize $x_3$}  (3,2) node[anchor=north]{\scriptsize $x_4$} (4,2) node[anchor=north]{\scriptsize $\cdots$} (5,2) node[anchor=north]{\scriptsize $x_{r_t-2}$} (6,2) node[anchor=north]{\scriptsize $x_{r_t-1}$} (7,2) node[anchor=north]{\scriptsize $x_{r_t}$};

        \draw [fill=black] (0.5,1)  node[anchor=north]{\scriptsize $y_1$} (1.5,1) node[anchor=north]{\scriptsize $y_2$} (2.5,1) node[anchor=north]{\scriptsize $y_3$}  (3.5,1) node[anchor=north]{\scriptsize $y_4$} (4.5,1) node[anchor=north]{\scriptsize $\cdots$} (5.5,1) node[anchor=north]{\scriptsize $y_{r_t-2}$} (6.5,1) node[anchor=north]{\scriptsize $y_{r_t-1}$} (7.5,1) node[anchor=north]{\scriptsize $y_{r_t}$};
        
        \draw [fill=black] (0.5,2)  node[anchor=north]{\scriptsize $y_1$} (1.5,2) node[anchor=north]{\scriptsize $y_2$} (2.5,2) node[anchor=north]{\scriptsize $y_3$}  (3.5,2) node[anchor=north]{\scriptsize $y_4$} (4.5,2) node[anchor=north]{\scriptsize $\cdots$} (5.5,2) node[anchor=north]{\scriptsize $y_{r_t-2}$} (6.5,2) node[anchor=north]{\scriptsize $y_{r_t-1}$} (7.5,2) node[anchor=north]{\scriptsize $y_{r_t}$};
        
        \draw [fill=black] (8,1) node[anchor=north]{\scriptsize $x_{r_t+1}$} (8.5,1)node[anchor=north]{\scriptsize $\cdots$} (9,1) node[anchor=north]{\scriptsize $x_{r_c}$} (8,2) node[anchor=north]{\scriptsize $x_{r_t+1}$} (8.5,2) node[anchor=north]{\scriptsize $\cdots $} (9,2) node[anchor=north]{\scriptsize $x_{r_c}$};
    
    \draw (-0.7,1) node{\scriptsize $k\ge \tau+2$} (-0.7,2) node{\scriptsize $k\le \tau+1$};
    \end{tikzpicture}
    \caption{The effect of $U^{(k)}$. Curves indicate that the value of a black qubit has been added (xor-ed) to the corresponding white qubit.}
    \label{fig:U(k)_without_ancilla_main}
\end{figure}
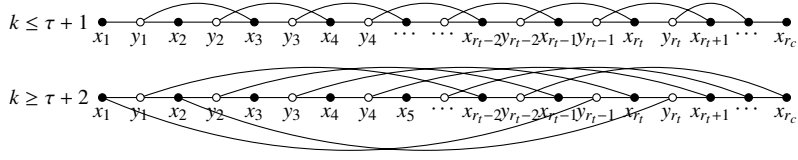
The above lemma is used to prove the circuit complexity required to implement $C_k$ under $\Path_n$ constraint:

\begin{lemma}\label{lem:Ck_path}
For all $k\in[\ell]$, $C_k$ (Eq.~\eqref{eq:Ck}) can be implemented by a quantum circuit of depth $O(2^{r_c})$ and size $O(n2^{r_c})$ under $\Path_n$ constraint.
\end{lemma}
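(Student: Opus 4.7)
The plan is to invoke Lemma~\ref{lem:Ck} and reduce the task to bounding the total cost of the sub-circuits $C_{p.1}$ for $p=2,\ldots,2^{r_c}$, which is the only term in Lemma~\ref{lem:Ck} that depends on the constraint graph. The key observation is that, with the Gray-code choice $j_i=i$ from Table~\ref{tab:choice-of-gray} and the interleaved layout produced by Lemma~\ref{lem:pi_path}, the step-$p$ CNOT pattern in Eq.~\eqref{eq:step_p1} coincides exactly with the unitary $U^{(k)}$ of Lemma~\ref{lem:U(k)_without_ancilla} for $k=\zeta(p-1)$. Indeed, the index formula $h_{ip}=((\zeta(p-1)+i-2)\bmod r_c)+1$ produces control index $\zeta(p-1)+i-1$ when this value is at most $r_c$, and $\zeta(p-1)+i-1-r_c$ otherwise; these two subcases match the non-wrap and wrap branches in Eq.~\eqref{eq:U(k)_without_ancilla_main} corresponding to $k\le \tau+1$ and $k\ge \tau+2$, respectively.

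Granting this identification, Lemma~\ref{lem:U(k)_without_ancilla} gives $\mathcal{D}(C_{p.1})=O(\zeta(p-1))$ when $\zeta(p-1)\le \tau+1$ and $\mathcal{D}(C_{p.1})=O(r_t\,\zeta(p-1))$ otherwise, while $\mathcal{S}(C_{p.1})=O(r_t\,\zeta(p-1))$ in both regimes. By Lemma~\ref{lem:GrayCode}(2), for each $v\in\{1,\ldots,r_c-1\}$ exactly $2^{r_c-v}$ values of $p\in\{2,\ldots,2^{r_c}\}$ have $\zeta(p-1)=v$, and $v=r_c$ contributes a single $p$. Summing over $p$,
\[
\sum_{p=2}^{2^{r_c}}\mathcal{D}(C_{p.1})\;\le\;\sum_{v=1}^{\tau+1}2^{r_c-v}\cdot O(v)\;+\;\sum_{v=\tau+2}^{r_c}2^{r_c-v}\cdot O(r_t v)\;=\;O(2^{r_c})\;+\;O\!\left(r_t\,\tau\,2^{r_c-\tau}\right),
\]
and analogously $\sum_p\mathcal{S}(C_{p.1})=O(r_t\,2^{r_c})$.

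Since $\tau\ge 2\log n$ and $r_t=O(n)$, the tail depth term satisfies $r_t\tau\,2^{r_c-\tau}=O(\log n\cdot 2^{r_c}/n)=O(2^{r_c})$; moreover $2^{r_c}\ge n\cdot 2^{n/2}$ dominates $n^2$. Substituting back into Lemma~\ref{lem:Ck} yields depth $O(n^2+2^{r_c}+2^{r_c})=O(2^{r_c})$ and size $O(n^2+r_t 2^{r_c}+r_t 2^{r_c})=O(n\,2^{r_c})$, as claimed. The delicate step is the first: carefully matching Eq.~\eqref{eq:step_p1} with $j_i=i$ to $U^{(\zeta(p-1))}$, especially verifying that the cyclic wrap-around in the index formula aligns with the case split of Eq.~\eqref{eq:U(k)_without_ancilla_main}. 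The rest is summation bookkeeping that is precisely why $\tau$ was tuned to $\Theta(\log n)$ in the setup of Section~\ref{sec:diag_without_ancilla_path_main}.
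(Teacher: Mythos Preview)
Your proposal is correct and follows essentially the same approach as the paper: choose $j_i=i$, identify $C_{p.1}$ with $U^{(\zeta(p-1))}$ via the index formula for $h_{ip}$, apply the depth/size bounds of Lemma~\ref{lem:U(k)_without_ancilla} in the two regimes $\zeta(p-1)\le\tau+1$ and $\zeta(p-1)\ge\tau+2$, and sum using the Gray-code multiplicity $2^{r_c-v}$. One minor clarification worth making when you write it out in full: the two branches of Eq.~\eqref{eq:U(k)_without_ancilla_main} are not ``the non-wrap index'' versus ``the wrap index'' per $i$, but rather ``no $i$ wraps'' ($k\le\tau+1$, since $k+i-1\le\tau+r_t=r_c$ for all $i\le r_t$) versus ``some $i$ wrap'' ($k\ge\tau+2$, where the second branch already incorporates both wrapped and unwrapped indices)---the paper verifies this case split explicitly, and your acknowledged ``delicate step'' should do the same.
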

\begin{proof}
First, we construct quantum circuits for $C_{p.1}$ (Eq. \eqref{eq:step_p1})  for all $p\in\{2,3,\ldots,2^{r_c}\}$. For every $i\in[r_t]$, choose integers $j_i=i $. The strings $c_{p-1}^i$ and $c_{p}^i$ in the $(r_c,i)$-Gray code differ in the $h_{ip}$-th bit for all $p\in\{2,3,\ldots,2^{r_c}\}$.

Recall that $C_{p.1}$ transforms prefix $c_{p-1}^{j_i}$ to $c_p^{j_i}$ in the $i$-th qubit of target register $\sf T$, i.e., for any $x=x_1x_2\cdots x_n\in \Bn$,
\begin{align*}
&\bigotimes_{i=1}^{r_t}\ket{x_i,\langle c^i_{p-1}t_i^{(k)},x\rangle}_{\{2i-1,2i\}}\bigotimes_{i=r_t+1}^{r_c}\ket{x_{i}}_{r_t+i}\\
\xrightarrow{C_{p.1}} &\bigotimes_{i=1}^{r_t}\ket{x_i,\langle c^i_{p}t_i^{(k)},x\rangle}_{\{2i-1,2i\}}\bigotimes_{i=r_t+1}^{r_c}\ket{x_{i}}_{r_t+i}
=\bigotimes_{i=1}^{r_t}\ket{x_i,\langle c^i_{p-1}t_i^{(k)},x\rangle\oplus x_{h_{ip}}}_{\{2i-1,2i\}}\bigotimes_{i=r_t+1}^{r_c}\ket{x_{i}}_{r_t+i},
\end{align*}
We now show that $C_{p.1}$ is equivalent to $U^{(h_{1p})}$ where $U^{(\cdot)}$ is defined in Eq.~\eqref{eq:U(k)_without_ancilla_main}. We first define $ k' \defeq h_{1p} = \zeta(p-1)$ (see Eq.~\eqref{eq:h1j}).
We consider two cases:
\begin{enumerate}
    \item $k'\le \tau+1$: For all $i\in [r_t]$, we have
        \[0  \le \zeta(p-1)+i-2 = k' + i -2 
        \le \tau + i -1
        = r_c - r_t + i -1 
        \le r_c -1\]
    where the second equality is because of  $r_c = r_t + \tau$. This implies that
        \[h_{ip} = (\zeta(p-1)+i-2\bmod r_c)+1
        = \zeta(p-1) + i -1 
        = k'+i-1.\]
    Thus, the qubits $\ket{x_{k'}}$, $\ket{x_{k'+1}}$, $\ldots$, $\ket{x_{r_t+k'-1}}$ are exactly $\ket{x_{h_{1p}}}$, $\ket{x_{h_{2p}}}$, $\ldots$, $\ket{x_{h_{r_t p}}}$, which implies $U^{(h_{1p})}$ corresponds to the first case of Eq. \eqref{eq:U(k)_without_ancilla_main}.
    \item $k'\ge \tau+2$: In this case, 
    \[0  \le \zeta(p-1)+i-2 = k' + i -2 \begin{cases}
        \le r_c -1, &\text{ if } i \le r_c - k' + 1,\\
        \ge r_c, &\text{ if } i \ge r_c - k' +2.
        \end{cases}
    \]
    It follows that $ h_{ip} = (\zeta(p-1)+i-2\bmod r_c)+1$, which is equal to $ \zeta(p-1) + i -1 = k' + i -1$ if $i \le r_c - k' + 1$; and $\left(\zeta(p-1) + i -2\right) -r_c +1 = k' +i -r_c -1$ if $i \ge r_c - k' +2$ since $i\le r_t$ and $\zeta(p-1)\le r_c$.
     Therefore, the qubits $\ket{x_{k'}}$, $\ket{x_{k'+1}}$, $\ldots$, $\ket{x_{r_c}}$, $\ket{x_1}$, $\ldots$, $\ket{x_{k'-\tau-1}}$ are exactly $\ket{x_{h_{1p}}}$, $\ket{x_{h_{2p}}}$, $\ldots$, $\ket{x_{h_{r_t p}}}$. Namely, $U^{(h_{1p})}$ is the same as defined in the second case of Eq. \eqref{eq:U(k)_without_ancilla_main}.
\end{enumerate}

We now analyze the circuit depth of $C_k$.
{By Lemma \ref{lem:U(k)_without_ancilla}, the depth and size of $U^{(k')}$ are $O(k')$ and $O(r_tk')$ if $k'\le \tau+1$; the depth and size are both $O(r_tk')$ if $\tau+2\le k'\le r_c$.} 
Recall that for every $k''\in[r_c]$, there are $2^{r_c-k''}$ many $p\in \{2,3,\ldots,2^{r_c}\}$ satisfying $h_{1p} = k''$ (Lemma \ref{lem:GrayCode}). Thus, $U^{(k')}$ appears $2^{r_c-k'}$ times in Step $p$.1 ($C_{p.1}$) when we run all iterations $p\in\{2,3,\ldots,2^{r_c}\}$. With $\mathcal{D}(C_{p.1})$ and $\mathcal{S}(C_{p.1})$ denoting the circuit depth and size for $C_{p.1}$, 
by Lemma \ref{lem:Ck}, $C_k$ has  circuit depth
\begin{equation}\label{eq:ck-depth}
O(n^2+2^{r_c}+\sum_{p=2}^{2^{r_c}}\mathcal{D}(C_{p.1}))=O(n^2+2^{r_c})+\sum_{k'=1}^{\tau+1} O(k')2^{r_c-k'} +\sum_{k'=\tau+2}^{r_c}O(r_tk')2^{r_c-k'}= O(2^{r_c}),
\end{equation}
and circuit size \[ O(n^2+r_t2^{r_c}+\sum_{p=2}^{2^{r_c}}\mathcal{S}(C_{p.1}))=O(n^2+2^{r_c})+\sum_{k'=1}^{r_c}2^{r_c-k'}O(r_tk')= O(r_t2^{r_c}),\]
where we use the fact that $2\lceil\log(n)\rceil\le \tau\le 2\lceil\log(n)\rceil+1$.
\end{proof}

\paragraph{Remark} The reason we choose $\tau = 2\lceil \log n\rceil$ is the following. The series $\sum_{j=1}^n j \cdot 2^{-j} \le 2$, with the first $2\log(n)$ terms contributing the majority of the sum, i.e., if $\tau = 2\lceil\log n\rceil$ then
\[
    \sum_{j=\tau}^n j\cdot 2^{-j} = O(1/n).
\]
In Eq.~\eqref{eq:ck-depth}, the circuit depth contains contributions from the terms 
    \[2^{r_c}\sum_{k'=1}^{\tau+1}O(k')2^{-k'} + 2^{r_c}r_t\sum_{k'=\tau+2}^{r_c}O(k')2^{-k'},\]
which, for each $k'$, can be understood roughly as $2^{r_c-k'}$ CNOT circuits, in which each CNOT gate acts on qubits separated by distance $k'$.  Noting that $\tau = 2\lceil\log n\rceil$, $r_t = (n-\tau)/2 \approx n/2$ and $r_c= (n+\tau)/2\approx n/2$, second term has the factor of $r_t$ cancelled by the factor of $1/n$ that comes from the series summation. The number of CNOT circuits with CNOT gates acting on qubits separated by distances $d$ greater than $2\log n$ is exponentially reduced, and the cost of implementing those gates is suppressed by $1/n$.  We take a similar approach with other graph constraints.

\paragraph{Implementation of $\Lambda_n$}
Now putting everything together, we can obtain the complexity for $\Lambda_n$.
\begin{lemma}\label{lem:diag_path_withoutancilla_main}
Any $n$-qubit diagonal unitary matrix $\Lambda_n$ can be realized by a quantum circuit of depth $O(2^n/n)$ and size $O(2^n)$, under $\Path_n$ constraint without ancillary qubits.
\end{lemma}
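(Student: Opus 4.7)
The plan is to assemble the circuit for $\Lambda_n$ by using the general framework depicted in Figure \ref{fig:diag_without_ancilla_framwork}, instantiated with the specific choices of control/target registers and Gray codes laid out in Section \ref{sec:diag_without_ancilla_path_main}, and then simply total the depth and size contributions of each block using the lemmas already proved.

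First, I would set $r_c = (n+\tau)/2$ and $r_t = (n-\tau)/2$ with $\tau = \Theta(\log n)$ as chosen above, so that $r_c, r_t = n/2 \pm O(\log n)$ and $2^{r_c} = O(2^{n/2}\cdot n)$. I would then identify the five blocks that make up the circuit: the permutation $\Pi^{path}$, the sequence $C_1, C_2, \ldots, C_\ell$ with $\ell = O(2^{r_t}/r_t)$, the resetting unitary $\mathcal{R}$, the small diagonal unitary $\Lambda_{r_c}$, and the inverse permutation $(\Pi^{path})^\dagger$. The cost of each block is already in hand: Lemma \ref{lem:pi_path} gives depth $O(n)$ and size $O(n^2)$ for $\Pi^{path}$ (and the same for its inverse); Lemma \ref{lem:Ck_path} gives depth $O(2^{r_c})$ and size $O(n\,2^{r_c})$ for each $C_k$; Lemma \ref{lem:reset} gives depth and size $O(n^2)$ for $\mathcal{R}$; and Lemma \ref{lem:Lambda_rc} gives depth and size $O(n\,2^{r_c})$ for $\Lambda_{r_c}$.

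Next I would add these up. For depth,
\[
O(n) + \ell\cdot O(2^{r_c}) + O(n^2) + O(n\,2^{r_c})
= O\!\left(\frac{2^{r_c+r_t}}{r_t}\right) + O(n\,2^{r_c}) + O(n^2)
= O\!\left(\frac{2^n}{n}\right),
\]
where the $\ell \cdot O(2^{r_c})$ term dominates because $r_t = \Theta(n)$ makes $2^{r_c+r_t}/r_t = \Theta(2^n/n)$, while $n\,2^{r_c} = O(n^2\,2^{n/2})$ is negligible. For size,
\[
O(n^2) + \ell\cdot O(n\,2^{r_c}) + O(n^2) + O(n\,2^{r_c})
= O\!\left(\frac{n\,2^{r_c+r_t}}{r_t}\right) + O(n\,2^{r_c})
= O(2^n),
\]
since $n/r_t = O(1)$ in this regime.

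The only nontrivial thing to verify is that composing these blocks really implements $\Lambda_n$ --- this is exactly the correctness of the framework in Section \ref{sec:diag_without_ancilla_framework_main}, which has already been established by the definitions of $C_k$, $\mathcal{R}$ and $\Lambda_{r_c}$ through Eqs.~\eqref{eq:Ck}--\eqref{eq:Lambda_rc} together with Eq.~\eqref{eq:set_eq} covering all $s\in\{0,1\}^n-\{0^n\}$. Thus no new combinatorial argument is needed, and the main ``obstacle'' is purely bookkeeping: confirming that with the logarithmic padding $\tau = 2\lceil\log n\rceil$, all the secondary terms ($\Pi^{path}$, $\mathcal{R}$, $\Lambda_{r_c}$, and the $O(n^2)$ overhead inside each $C_k$) are swallowed by the leading $O(2^n/n)$ term. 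This is the same balancing already spelled out in the remark after Lemma \ref{lem:Ck_path}, so the proof is essentially a one-line summation of the preceding lemmas.
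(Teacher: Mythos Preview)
Your proposal is correct and mirrors the paper's proof almost exactly: the paper invokes the framework correctness lemma (Lemma~\ref{lem:diag_without_ancilla_correctness}) and then sums the depth/size of $\Pi^{path}$, the $\ell$ copies of $C_k$, $\mathcal{R}$, and $\Lambda_{r_c}$ via Lemmas~\ref{lem:pi_path}, \ref{lem:Ck_path}, \ref{lem:reset}, \ref{lem:Lambda_rc}, obtaining $2O(n)+\ell\cdot O(2^{r_c})+O(n^2)+O(n2^{r_c})=O(2^n/n)$ for depth and the analogous $O(2^n)$ for size. Your bookkeeping and the dominance analysis are the same as the paper's.
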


This result can be extended to $d$-dimensional grid.

\begin{lemma}\label{lem:diag_grid_withoutancilla_main}
    Any $n$-qubit diagonal unitary matrix $\Lambda_n$ can be realized by a quantum circuit of depth $O(2^n/n)$ and size $O(2^n)$ under $\Grid_{n}^{n_1,n_2,\ldots,n_d}$ constraint. 
\end{lemma}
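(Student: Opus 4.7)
\medskip

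\noindent\textbf{Proof plan for Lemma \ref{lem:diag_grid_withoutancilla_main}.} The plan is to observe that any $d$-dimensional grid $\Grid_n^{n_1,\ldots,n_d}$ contains a Hamiltonian path as a spanning subgraph, so the construction already established for $\Path_n$ transfers verbatim to the grid. Concretely, I would exhibit the standard snake-like (boustrophedon) ordering of the grid vertices: traverse along the first coordinate axis, step once in the second direction, traverse back, and so on, eventually recursing into higher-dimensional ``slabs''. Every pair of consecutive vertices in this ordering is adjacent in the grid, so the ordering defines a spanning path $P$ whose edges are all grid edges.

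Given such a Hamiltonian path, label the $n$ qubits $1,2,\ldots,n$ along $P$. Any two-qubit gate that is legal under $\Path_n$ (i.e., acts on qubits $i$ and $i+1$ in this labelling) is also legal under $\Grid_n^{n_1,\ldots,n_d}$, because the corresponding edge of $P$ is by construction an edge of the grid. Hence any circuit produced under the $\Path_n$ constraint can be re-interpreted as a circuit under the $\Grid_n^{n_1,\ldots,n_d}$ constraint with identical depth and size. Applying Lemma \ref{lem:diag_path_withoutancilla_main} therefore gives a circuit of depth $O(2^n/n)$ and size $O(2^n)$ for $\Lambda_n$ under grid constraint without ancilla.

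There is essentially no obstacle in the argument beyond pointing to the existence of a Hamiltonian path in the grid, which is elementary. It is worth noting that this is asymptotically optimal in depth: a diagonal unitary on $n$ qubits has $2^n-1$ independent phase parameters, while any depth-$D$ circuit on $n$ qubits contains $O(nD)$ gates, forcing $D=\Omega(2^n/n)$ even under all-to-all connectivity. Thus the grid constraint imposes no additional cost for this task, and the more refined grid-aware constructions developed in later sections of the paper only become necessary when targeting special trade-offs (e.g.\ with ancillas, or for QSP and GUS where the parameter counts are different).
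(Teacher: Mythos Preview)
Your proposal is correct and matches the paper's approach: the paper states Lemma~\ref{lem:diag_grid_withoutancilla_main} immediately after the path result with the remark ``This result can be extended to $d$-dimensional grid,'' and elsewhere (e.g., the with-ancilla construction in Appendix~\ref{sec:diag_with_ancilla_grid_d}) explicitly invokes the Hamiltonian path in the grid to port path-constrained subroutines over. The snake/boustrophedon ordering you describe is exactly the standard way to exhibit such a spanning path, and since every gate in the $\Path_n$ construction of Lemma~\ref{lem:diag_path_withoutancilla_main} acts along path edges, the transfer is immediate.
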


\subsection{Circuit implementation under $d$-ary tree, expander graph and arbitrary graph constraints}
\label{sec:diag_without_ancilla_tree_expander_graph}
We can similarly bound the circuit depth and size for diagonal unitary matrices under different graph constraints. Proofs are given in Appendices \ref{sec:diag_without_ancilla_binarytree}, \ref{sec:diag_without_ancilla_expander} and \ref{sec:diag_without_ancilla_graph}.
\begin{restatable}{lem}{diagtreenoancilla}\label{lem:diag_tree_withoutancilla}
Any $n$-qubit diagonal unitary matrix $\Lambda_n$ can be realized by a quantum circuit of depth 
\begin{enumerate}
    \item $O(\log(n)2^n/n)$ under $\Tree_n(2)$ constraint.
    \item $O(\log_d(n))$ under $\Tree_n(d)$ constraint for $d<n$.
    \item $O(2^n)$ under $\Star_n$ constraint.
\end{enumerate}
\end{restatable}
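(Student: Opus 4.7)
I plan to instantiate the framework of Section~\ref{sec:diag_without_ancilla_framework_main} (Fig.~\ref{fig:diag_without_ancilla_framwork}) with graph-specific choices of the registers $\sf C$, $\sf T$ and of the Gray-code parameters $j_1,\ldots,j_{r_t}$ suggested in Table~\ref{tab:choice-of-gray}. The overall depth for $\Lambda_n$ decomposes as
\[\mathcal{D}(\Pi)+\sum_{k=1}^{\ell}\mathcal{D}(C_k)+\mathcal{D}(\mathcal{R})+\mathcal{D}(\Lambda_{r_c})+\mathcal{D}(\Pi^\dagger),\]
with $\ell\le 2^{r_t+2}/(r_t+1)-1$. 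Lemmas~\ref{lem:cnot_circuit}, \ref{lem:reset} and \ref{lem:Lambda_rc} already bound the non-$C_k$ pieces by $O(n^2)+O(n2^{r_c})$ under any connected constraint graph, so the only remaining task is to pick $\sf C$, $\sf T$ and Gray codes for each tree family that make $\sum_{k\in[\ell]}\mathcal{D}(C_k)$ match the target bound.

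\textbf{Binary tree.} Set $a=\lceil\log(2\log n)\rceil$ and carve the tree into roughly $r_t=\Theta(n/\log n)$ node-disjoint sub-trees, each of size $\Theta(\log n)$, placing one target qubit near the root of each sub-tree (forming $\sf T$) and using the remaining qubits as $\sf C$. With the choice $j_i=1+A(i)$, $A(i)=(i-1)(2^{a+1}-2)$, from Table~\ref{tab:choice-of-gray}, for every phase $p\in\{2,\ldots,2^{r_c}\}$ the controls $\ket{x_{h_{j_i p}}}$ appearing in step $C_{p.1}$ fall into pairwise distinct $\sf C$-sub-trees, so the $r_t$ required CNOTs can be routed along edge-disjoint paths of length $O(\log n)$. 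Thus $\mathcal{D}(C_{p.1})=O(\log n)$, and summing over the $2^{r_c}$ phases as in Lemma~\ref{lem:Ck_path} (where the series $\sum_{k'\ge 1}k'\cdot 2^{-k'}=O(1)$ does the heavy lifting) yields $\mathcal{D}(C_k)=O(2^{r_c}\log n)$. Summing over $k\in[\ell]$ and balancing against the $O(n2^{r_c})$ cost of $\Lambda_{r_c}$ at $r_t=\Theta(n/\log n)$ produces depth $O(\log(n)\,2^n/n)$ overall.

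\textbf{$d$-ary tree and star.} For $\Tree_n(d)$ with $2\le d<n$ the same recipe works after replacing $\log n$ with $\log_d n$ in both the sub-tree size and the Gray-code spacing, giving the claimed bound. For $\Star_n$, the center is a matching bottleneck---only one two-qubit gate can fire per layer---so no Gray-code trick can reduce the depth below $\Omega(2^n)$. Instead, I would implement $\Lambda_n$ directly through the size-$O(2^n)$ construction of Section~\ref{sec:ucg} using Eq.~\eqref{eq:alpha}, with each constituent CNOT executed in $O(1)$ depth because $\diam(\Star_n)=2$, yielding the stated $O(2^n)$ depth (which matches the $\Omega(2^n)$ circuit-size lower bound).

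\textbf{Main obstacle.} The most delicate point is the \emph{disjoint-sub-tree} property that underlies the per-phase parallelism: one must verify that for every $p\in\{2,\ldots,2^{r_c}\}$ the indices $h_{j_i p}=\bigl((\zeta(p-1)+A(i)-1)\bmod r_c\bigr)+1$ really do land in distinct $\sf C$-sub-trees of the binary (respectively $d$-ary) layout. The key quantitative fact is that the gap $2^{a+1}-2\ge 4\log n$ between successive values of $A(i)$ exceeds $\zeta(p-1)\le\log(p-1)$, so the translation by $\zeta(p-1)$ cannot push $h_{j_i p}$ out of its designated sub-tree. This block-routing argument, analogous to the role Lemma~\ref{lem:U(k)_without_ancilla} plays in the path case, is what cuts the per-phase depth from $O(r_t\log n)$ down to $O(\log n)$, and hence delivers the factor-of-$n$ improvement over the trivial bound recorded in Table~\ref{tab:lambda-bounds}.
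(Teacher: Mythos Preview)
Your binary-tree argument has a genuine gap in the ``disjoint sub-tree'' claim. You assert that the gap $2^{a+1}-2\approx 4\log n$ between successive $A(i)$ always exceeds $\zeta(p-1)$, but this is false: $p$ ranges up to $2^{r_c}$, so $\zeta(p-1)$ can be as large as $r_c-1\approx n$. For any $p$ with $\zeta(p-1)>2^{a+1}-2$ (e.g.\ $p=2^{r_c-1}+1$) the control index $h_{j_i p}$ overshoots the $i$-th sub-tree, the $r_t$ CNOTs are no longer on disjoint regions, and your uniform $O(\log n)$ per-phase bound fails. The paper (Lemma~\ref{lem:Ck_binarytree}) resolves this by a two-case analysis: for $k':=\zeta(p-1)\le 2^{a+1}-2$ the control \emph{is} in-subtree and the phase costs $O(\log k')$ (the distance from the node holding $x_{k'+A(i)}$ to the root $z_i$); for larger $k'$ it falls back to the generic $O(n^2)$ bound of Lemma~\ref{lem:cnot_circuit}. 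Since only $O(2^{r_c}/n^4)$ phases are ``bad'', they contribute $O(2^{r_c}/n^2)$ in total, while the good phases sum to $\sum_{k'\ge 1}2^{r_c-k'}O(\log k')=O(2^{r_c})$, giving $\mathcal{D}(C_k)=O(2^{r_c})$.

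This sharper bound is not cosmetic: your stated $\mathcal{D}(C_k)=O(2^{r_c}\log n)$, combined with $r_t=\Theta(n/\log n)$, yields $\ell\cdot\mathcal{D}(C_k)=O(2^n\log n/r_t)=O(\log^2(n)\,2^n/n)$, one $\log n$ factor too many. The series you cite only helps if the per-phase cost depends on $k'$; invoking it after fixing $\mathcal{D}(C_{p.1})=O(\log n)$ uniformly is a non-sequitur. For the $d$-ary tree the paper does \emph{not} rerun this recipe: it simply compiles the unconstrained $O(2^n/n)$-depth circuit of Lemma~\ref{lem:diag_complete_graph_noancilla} layer by layer, paying $O(n\log_d n)$ per layer, for a total of $O(\log_d(n)\,2^n)$ (the statement's ``$O(\log_d n)$'' is a typo missing the $2^n$). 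Your star argument agrees with the paper's.
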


\begin{restatable}{lem}{expandernoancilla}
\label{lem:diag_expander_withoutancilla}
Any $n$-qubit diagonal unitary matrix $\Lambda_n$ can be realized by a quantum circuit of depth $O(\log(n)2^n/n)$ under $\Expander_n$ constraint, using no ancillary qubits.
\end{restatable}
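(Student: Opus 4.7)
The plan is to instantiate the general framework of Section~\ref{sec:general_framework_noancilla_main} with parameters tailored to the expander structure. Following Table~\ref{tab:choice-of-gray}, I would take $j_i=1$ for every $i\in[r_t]$, so that each sub-step $C_{p.1}$ reduces to a CNOT \emph{fanout} from a single source qubit $x_{h_{1p}}$ onto all $r_t$ qubits of the target register $\mathsf T$. I would choose $r_c=\lceil n/2\rceil$ and $r_t=\lfloor n/2\rfloor$, matching the balance used in~\cite{sun2021asymptotically}.

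Since an $n$-vertex expander has diameter $D=O(\log n)$, I would fix a vertex $v_0$ and build a BFS spanning tree of depth $O(\log n)$ rooted at $v_0$. I place $\mathsf C$ on vertices near $v_0$ and $\mathsf T$ on the remaining vertices, and take $\Pi$ to be the permutation that moves the first $r_c$ input qubits onto $\mathsf C$. The maps $\Pi,\Pi^\dagger,\mathcal R$ are invertible $\mathbb F_2$-linear transformations on $n$ qubits and hence realizable in depth $O(n^2)$ by Lemma~\ref{lem:cnot_circuit} (constant-degree expanders have $\delta=\Theta(1)$, but since $n^2\ll 2^n/n$ this overhead is harmless). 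By Lemma~\ref{lem:Lambda_rc}, $\Lambda_{r_c}$ contributes $O(n\cdot 2^{r_c})=O(n\cdot 2^{n/2})$, also negligible against the target bound.

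The crux is to implement $C_{p.1}$, i.e.\ the fanout from one source to $r_t=\Theta(n)$ targets, in depth $O(\log n)$ under the expander constraint \emph{without} ancillary qubits. Granting this step, Lemma~\ref{lem:Ck} yields depth $O(n^2 + 2^{r_c}\log n)$ per $C_k$, and summing over the $\ell=O(2^{r_t}/r_t)=O(2^{n/2}/n)$ phases produces
\[
\ell\cdot O\!\bigl(2^{r_c}\log n\bigr)
\;=\; O\!\Bigl(\tfrac{2^{n/2}}{n}\cdot 2^{n/2}\log n\Bigr)
\;=\; O\!\Bigl(\tfrac{\log(n)\,2^n}{n}\Bigr),
\]
which absorbs the auxiliary costs from $\Pi$, $\Pi^\dagger$, $\mathcal R$ and $\Lambda_{r_c}$, giving the claimed bound.

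The main obstacle is the ancilla-free fanout in depth $O(\log n)$. A naive protocol that walks the control value along a spanning tree costs $\Theta(n)$ depth because the target qubits are in arbitrary states and cannot simply propagate the control onward without being corrupted. The idea I would pursue is to exploit the BFS tree of depth $O(\log n)$ together with the vertex expansion to route the control value up through $O(\log n)$ intermediate positions in parallel, perform $r_t$ simultaneous CNOTs into the target register, and then reverse the intermediate CNOTs so that every borrowed qubit is restored; expansion guarantees that enough edge-disjoint routes exist at each level to carry out these parallel CNOTs without collisions. An equivalent viewpoint is to simulate the optimal $O(2^n/n)$-depth unconstrained construction of~\cite{sun2021asymptotically} on the expander with a per-layer overhead of $O(\log n)$, using the expander as a low-depth routing network; either framing isolates the $O(\log n)$ fanout step as the only nontrivial ingredient.
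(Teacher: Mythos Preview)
Your framework, parameter balance $r_c\approx r_t\approx n/2$, choice $j_i=1$, and depth accounting all match the paper's approach, and you correctly isolate the ancilla-free fanout $C_{p.1}$ (one source into all of $\mathsf T$) as the single nontrivial ingredient. But your sketch for this step has a genuine gap. Placing $\mathsf T$ as the outer shell of a BFS ball around $v_0$ and invoking ``edge-disjoint routes'' does not work: the multi-target CNOT ladder of Fig.~\ref{fig:add-circuit} XORs the control into \emph{every} vertex on the routing tree, so if the BFS tree passes through $\mathsf C$ vertices on its way to $\mathsf T$, those $\mathsf C$ qubits (holding $x_1,\ldots,x_{r_c}$) are corrupted, and undoing this corruption is itself another fanout. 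Your alternative framing (simulate the unconstrained construction with $O(\log n)$ overhead per layer) is also unjustified: a generic depth-$1$ layer of the unconstrained circuit has $\Theta(n)$ CNOTs between arbitrary pairs, and routing all of them on a bounded-degree graph costs $\Omega(n)$ depth in general, not $O(\log n)$.

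The paper's resolution is not to route through $\mathsf C$ at all, but to \emph{build $\mathsf T$ with an internal hierarchical structure} so that the fanout is self-contained. Starting from a constant-size seed $S_1$, one iterates $S_{i+1}=S_i\cup\Gamma(S_i)$, where vertex expansion (via the maximal-matching Lemma~\ref{lem:graph_property}) guarantees a matching $M_{S_i}$ of size $\Omega(|S_i|)$ between $S_i$ and $\partial_{\mathrm{out}}(S_i)$; after $d=O(\log n)$ rounds $|S_{d-1}|=\Theta(n)$, and one sets $\mathsf T:=S_{d-1}$, $\mathsf C:=V\setminus S_{d-1}$. The fanout $C_{p.1}$ is then exactly the ladder of Fig.~\ref{fig:add-circuit} run along the matchings: apply CNOTs on $M_{S_{d-2}},\ldots,M_{S_1}$ (each depth~$1$), then $|S_1|=O(1)$ CNOTs from the control $x_{h_{1p}}$ into $S_1$ over distance $O(\log n)$, then $M_{S_1},\ldots,M_{S_{d-2}}$ again. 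Every intermediate carrier lies in $\mathsf T$ itself (and ends with the control XORed in, which is precisely what we want), so no $\mathsf C$ qubit is touched except the single control, and the total depth is $2(d-2)+O(\log n)=O(\log n)$. The rest of your accounting then goes through verbatim to give $O(\log(n)\,2^n/n)$.
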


\begin{restatable}{lem}{graphnoancilla}
\label{lem:diag_graph_withoutancilla}
Any $n$-qubit diagonal unitary matrix $\Lambda_n$ can be realized by a standard quantum circuit of size $O(2^n)$  under arbitrary graph constraint, using no ancillary qubits.
\end{restatable}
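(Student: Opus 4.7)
The plan is to recurse on $n$ using a spanning-tree / BFS layout and to split the index space by a single ``accumulator'' bit at each level.

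\emph{Setup.} Fix any spanning tree $T$ of $G$ and choose $v_0$ to be a leaf of $T$, so that $G-v_0$ remains connected. Running BFS in $T$ from $v_0$ produces an ordering $v_0,v_1,\ldots,v_{n-1}$ with BFS distances $\ell_k\le k$ (each BFS level contributes at least one new vertex). Designate the qubit at $v_0$ as the accumulator that will hold $\langle s,x\rangle$, and choose a permutation $\pi$ of the remaining bits so that higher-frequency Gray-code bits sit on vertices with smaller $\ell$.

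\emph{Construction.} Decompose $\Lambda_n=\Lambda_n^{(1)}\cdot\Lambda_n^{(0)}$ where $\Lambda_n^{(b)}=\prod_{s:\,s_0=b}R(\alpha_s)$. Implement $\Lambda_n^{(1)}$ by sweeping its $2^{n-1}$ indices in an $(n-1)$-bit binary reflected Gray code (BRGC) on the bits $\{1,\ldots,n-1\}$: at each step apply $R(\alpha_s)$ on $v_0$ and then update the accumulator with one $\Cnot$ from $v_{\pi(j)}$ into $v_0$, where $j$ is the newly flipped bit. By Lemma~\ref{lem:cnot_path_constraint}, each such $\Cnot$ has size $O(\ell_{\pi(j)})$, and since the $k$-th bit of an $(n-1)$-bit BRGC flips $2^{n-2-k}$ times, the weighted sum $\sum_k 2^{n-2-k}(k+1)$ is $O(2^n)$ thanks to $\sum_{k\ge 1} k\,2^{-k}=O(1)$. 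Combined with the $2^{n-1}$ rotations, this phase costs $O(2^n)$; since the Gray-code cycle starts and ends at $s=e_0$, the accumulator (and hence every qubit) is restored.

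\emph{Recursion and the main obstacle.} The remaining factor $\Lambda_n^{(0)}$ depends only on $x_1,\ldots,x_{n-1}$ and acts as identity on $v_0$, so it is an $(n-1)$-qubit diagonal unitary on the qubits at $v_1,\ldots,v_{n-1}$ under the connected constraint $G-v_0$. Letting $S(n)$ denote the worst-case size over connected graphs on $n$ vertices, we obtain $S(n)\le O(2^n)+S(n-1)$, which telescopes to $S(n)=O(2^n)$. The main obstacle is the \emph{accumulator-bit issue}: a naive single $n$-bit Gray-code sweep would at some point require flipping the bit assigned to $v_0$ via a meaningless $v_0$-to-$v_0$ $\Cnot$. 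The split-and-recurse structure sidesteps this by never flipping that bit within a phase, and this is precisely why $v_0$ must be chosen as a spanning-tree leaf---so that the recursion preserves connectivity at every level. A secondary delicate point is the distance/frequency trade-off, which closes only because the BFS bound $\ell_k\le k$ lets the geometrically-decaying Gray-code weights dominate the linearly-growing routing distances.
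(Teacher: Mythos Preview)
Your argument is correct and takes a genuinely different route from the paper's. The paper instantiates its general control/target-register framework (Lemma~\ref{lem:diag_without_ancilla_correctness}) with $r_c=\lceil n/2\rceil$, $r_t=\lfloor n/2\rfloor$, all Gray codes set to $j_i=1$, and a DFS labeling of a spanning tree; the key graph fact it uses is that consecutive DFS labels satisfy $\sum_{i}d(i,i{+}1)=O(n)$ (Lemma~\ref{lem:dfs_distance}), which makes each multi-target CNOT $C_{p.1}$ cost $O(n)$ and hence each $C_k$ cost $O(n2^{r_c})$, summing to $O(2^n)$ over $\ell\le \frac{2^{r_t+2}}{r_t+1}$ phases. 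You instead peel off a single leaf vertex $v_0$ as an accumulator, sweep the $2^{n-1}$ strings with $s_{v_0}=1$ in a single Gray code, and recurse on the connected remainder $G-v_0$; your key graph fact is the BFS bound $\ell_k\le k$, which makes the frequency-weighted distance sum $\sum_k 2^{n-2-k}\ell_{\pi(k)}=O(2^n)$ converge.

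Both arguments ultimately exploit a tree-traversal inequality to tame the routing cost, but yours is more elementary and self-contained for the size bound: no register split, no $T^{(i)}$ bases, no $\Pi$, $\mathcal{R}$, or recursive $\Lambda_{r_c}$. The price is that your construction is inherently sequential (one accumulator, one rotation at a time) and so does not directly yield depth bounds, whereas the paper's framework is the same one reused for the depth results under path, tree, and expander constraints. One small point worth making explicit in a write-up: your ``permutation $\pi$'' is a relabeling of Gray-code bit positions, not a physical qubit permutation; since BFS order already sorts the $\ell_k$, the identity assignment $\pi(k)=k{+}1$ suffices.
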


 \section{Circuits for diagonal unitary matrices under qubit connectivity constraints, with ancillary qubits }
\label{sec:diag_with_ancilla}
Here we present circuit constructions for $\Lambda_n$, using $m$ ancillary qubits, under various graph constraints.
These results will be used to construct QSP and GUS circuits in Section \ref{sec:QSP_US_graph}. Note that the constructions of Section \ref{sec:diag_without_ancilla}, which do not use ancilla, are not simply special cases (corresponding to $m=0$) of the constructions in this section.  The constructions here are fundamentally different and require $m\ge \Omega(n)$.  
Additional details and omitted proofs are given in Appendix \ref{append:diag_with_ancilla}.
\subsection{Circuit framework}

Our circuit framework for $\Lambda_n$ using $m$ ancillary qubits generalizes the ancilla-based framework of~\cite{sun2021asymptotically}. Both circuit frameworks consist of 5 stages: suffix copy, Gray initial, prefix copy, Gray cycle, and inverse stage. In this section, we show the main ideas of both the original and our new circuit frameworks and highlight the differences between them.
More details are given in Appendix \ref{append:diag_with_ancilla}.

In the original framework, input $x\in\{0,1\}^n$ is divided into a prefix $x_{pre}$ and suffix $x_{suf}$ of lengths $n-\log(m/2)$ and $\log(m/2)$, respectively, and similarly for $s$ with the same cutoff point. The ancillary qubits are divided into an $m/2$-qubit copy register and an $m/2$-qubit target register, the former used for storing copies of $x_{pre}$ and $x_{suf}$, to increase the degree to which cycling through Gray codes can be done in parallel. 
The $m/2$ qubits in the target register are each responsible for enumerating a suffix of $s$, and different layers of circuits are used to enumerate all prefixes. The procedure then consists of five stages, which are similar to the five stages we use in our new procedure. 

In our approach, the $n+m$ qubits are divided into 4 registers:
\begin{itemize}
    \item ${\sf R}_{\rm inp}$: an $n$-qubit input register used to hold the input state $\ket{x}$, with $x\in\{0,1\}^n$ divided into an  $(n-p)$-bit prefix $x_{pre}=x_1x_2\ldots x_{n-p}$ and a $p$-bit suffix $x_{suf}=x_{n-p+1}\ldots x_n$. The first $\tau$ bits of $x_{pre}$ (with $\tau$ dependent on the constraint graph) are referred to as $x_{aux}$, i.e., $x_{aux}=x_1x_2\ldots x_{\tau}$ and hold frequently used content, to be copied close to the target qubits in order to reduce the circuit depth of the Gray cycle stage.
    \item The $m$ ancillary qubits are divided into three registers:
    \begin{itemize}
        \item ${\sf R}_{\rm copy}$: the copy register of size $\lambda_{copy} \ge n$
        \item ${\sf R}_{\rm targ}$: the target register of size $\lambda_{targ} = 2^p \ge n$
        \item ${\sf R}_{\rm aux}$: the auxiliary register  of size $\lambda_{aux} \ge n$ 
    \end{itemize}
\end{itemize}

The circuit itself consists of $5$ stages:
\begin{enumerate}
    \item Suffix Copy: makes $O(\lambda_{copy}/p)$ copies of $\ket{x_{suf}}$ in ${\sf R}_{\rm copy}$.
    
    \item Gray Initial: prepares the state \[\ket{\langle c_1^{\ell_1}t_1, x\rangle}\otimes\cdots\otimes\ket{\langle c_1^{\ell_{2^p}}t_{2^p},x\rangle}=\ket{\langle t_1, x_{suf}\rangle}\otimes\cdots\otimes\ket{\langle t_{2^p},x_{suf}\rangle}\] in ${\sf R}_{\rm targ}$, where $\ell_k$ (for $k\in[2^p]$) are integers specifying $2^p$ $(n-p, \ell_k)$-Gray codes $\{c^{\ell_k}_1,c^{\ell_k}_2,\ldots c^{\ell_k}_{2^{n-p}}\}$, $\{t_1, \ldots, t_{2^p}\} = \{0,1\}^{p}$, and $c_1^i=0^{n-p}$ and $t_i$ are the prefix and suffix of $s$ (see Eq.~\eqref{eq:task1}). 
    
    \item Prefix Copy: makes $O(\lambda_{aux}/\tau)$ copies of $\ket{x_{aux}}$ in ${\sf R}_{\rm aux}$, and replaces the copies of $\ket{x_{suf}}$ in ${\sf R}_{\rm copy}$ with $O\left(\lambda_{copy}/(n-p)\right)$ copies of $\ket{x_{pre}}$.
    
    \item Gray Cycle:
    This stage enumerates all $2^{n-p}$ prefixes of $s$ by going along a Gray code---each qubit $k$ uses $(n-p,\ell_k)$-Gray code, which consists of $2^{n-p}$ steps, with each step $j$ responsible for (i) updating prefix, and (ii) implementing a phase shift.
    
    \item Inverse: restores all ancillary qubits to zero.
\end{enumerate}

Compared to~\cite{sun2021asymptotically}, which did not consider connectivity constraints, our construction differs in:
\begin{enumerate}
    \item The design of the registers. In \cite{sun2021asymptotically}, the value of $p$ (which specifies the division of $x$ into $x_{pre}$ and $x_{suf}$) is fixed at $p =\log(m/2)$, and the ancillary qubits are divided into 2 registers only, with the first $m/2$ qubits forming ${\sf R}_{\rm copy}$ and the second $m/2$ qubits forming ${\sf R}_{\rm targ}$. In this work, $p$ is chosen dependent on the constraint graph, and we add a new register ${\sf R}_{\rm aux}$. The positions and sizes of the ancillary registers now also depend on the constraint graph.
    \item The prefix copy stage. In \cite{sun2021asymptotically}, prefix copy is responsible for making copies of $\ket{x_{pre}}$ in the copy register. Here, it also makes copies of $\ket{x_{aux}}$ in the auxiliary register: when we generate prefixes by Gray code, we apply CNOT gates where the control qubits are in a copy of $\ket{x_{pre}}$ or $\ket{x_{aux}}$. This imposes an overhead of $O(n^2)$ to the circuit depth since the distances between control and target qubits are at most $O(n)$. To resolve this issue, we make copies of $\ket{x_{aux}}$ and arrange them close to the qubits in the target register. If the distance between control qubits in $\ket{x_{pre}}$ and the target qubits is too large, we use qubits in $\ket{x_{aux}}$ as the control qubits instead.  
    \item The choice of Gray codes. The Gray Cycle stage involves choosing $2^{p}$ Gray codes, specified by the integers $\ell_k$. In \cite{sun2021asymptotically}, these are chosen as $\ell_k = (k-1)\mod (n-p)+1$ for every $k\in[2^p]$. Here we choose $\ell_k$ dependent on the constraint graph.
\end{enumerate}
These changes were made to address the fact that the framework of~\cite{sun2021asymptotically} does not perform well under connectivity constraints. 
In particular, the generation of the $2^p$ prefixes by Gray codes (during the Gray Initial and Gray Cycle stages) involves $2^p$ CNOT gates which may not be implementable in parallel under connectivity constraints and may impose an overhead of $O(m^2)$ to the circuit depth.


\subsection{Efficient circuits under path, $d$-dimensional grid, $d$-ary tree and expander graph constraints}
Using our new circuit framework, we bound the circuit depth required for diagonal unitary matrices under path, $d$-dimensional grid, $d$-ary tree and expander graph constraints. 
\begin{lemma}\label{lem:diag_d_grid_ancilla}
    Any $n$-qubit diagonal unitary matrix $\Lambda_n$ can be implemented by a  quantum circuit using $m\ge \Omega(n)$ ancillary qubits, of depth
    \begin{enumerate}
        \item $O\left(2^{n/2}+\frac{2^n}{m}\right)$ under $\Path_{n+m}$ constraint.
        \item $O\big(2^{n/3}+\frac{2^{n/2}}{\sqrt{n_2}}+\frac{2^n}{n+m}\big)$ under $\Grid^{n_1,n_2}_{n+m}$ constraint.
        \item $O\big(n^2+d2^{\frac{n}{d+1}}+\max\limits_{k\in\{2,\ldots,d\}}\big\{\frac{d2^{n/k}}{(\Pi_{i=k}^d n_i)^{1/k}}\big\}+\frac{2^n}{n+m}\big)$ under $\Grid^{n_1,n_2,\ldots,n_d}_{n+m}$ constraint. If $n_1=n_2=\cdots=n_d$, the circuit depth is $O\left(n^2+d2^{\frac{n}{d+1}}+\frac{2^n}{n+m}\right)$.
    \end{enumerate}
\end{lemma}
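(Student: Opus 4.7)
The plan is to instantiate the five-stage ancilla-based framework from Section 5.1 (Suffix Copy, Gray Initial, Prefix Copy, Gray Cycle, Inverse) and to tune three graph-dependent choices: the suffix length $p$, the size $\tau$ of the auxiliary prefix $x_{aux}$, and the physical layout of the four registers ${\sf R}_{\rm inp}$, ${\sf R}_{\rm copy}$, ${\sf R}_{\rm targ}$, ${\sf R}_{\rm aux}$ on the constraint graph. The guiding principle, inherited from the ancilla-free Lemmas~\ref{lem:diag_path_withoutancilla_main} and~\ref{lem:diag_grid_withoutancilla_main}, is to place the $\tau$-bit auxiliary copies adjacent to the corresponding target qubits so that the individual CNOT gates inside the Gray Cycle incur only $O(1)$ amortised depth, and to pay the longer-range cost only once, up front, in the two Copy stages.

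\textbf{Path $\Path_{n+m}$.} I would place ${\sf R}_{\rm inp}$ in a contiguous block, then lay out ${\sf R}_{\rm copy}$, ${\sf R}_{\rm targ}$, ${\sf R}_{\rm aux}$ so that each target qubit is flanked by $\tau$ copies of $x_{aux}$ and has a nearby copy of $x_{pre}$ (or $x_{suf}$, depending on the stage). Choose $p=\min\{\lceil\log(m/3)\rceil,\lceil n/2\rceil\}$, so that $\lambda_{targ}=2^p\le m/3$, and $\tau=2\lceil\log n\rceil$. Suffix Copy and Prefix Copy are fan-outs along a path of length $O(m)$ with $O(m)$ bits to distribute; by pipelined binary broadcasting these cost $O(2^{n/2})$ in the worst case. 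Gray Initial computes all $2^p$ inner products $\langle t_k,x_{suf}\rangle$ in parallel using the nearby $x_{suf}$ copies, in depth $O(2^p)=O(\min\{m,2^{n/2}\})$. Gray Cycle performs $2^{n-p}$ parallel update steps whose CNOTs act between each target qubit and its nearby $x_{aux}$ copy, giving amortised $O(1)$ depth per step as in the proof of Lemma~\ref{lem:Ck_path}; total depth $O(2^{n-p})=O(2^n/m)$. The Inverse is bounded by the other four stages by symmetry. Summing gives $O(2^{n/2}+2^n/m)$.

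\textbf{2D and $d$-dimensional grids.} For $\Grid^{n_1,\ldots,n_d}_{n+m}$ I would embed the four registers in disjoint sub-grids, again keeping ${\sf R}_{\rm aux}$ adjacent to ${\sf R}_{\rm targ}$. Broadcasting a $b$-bit register to $k$ destinations inside a $d$-dimensional box factors as a sequence of one-dimensional fan-outs along each axis; allocating $k_1,\ldots,k_d$ copies per axis with $\prod_j k_j=k$ yields broadcast depth $\max_{j}\{n_j/k_j\}\cdot b$. Minimising this over $(k_1,\ldots,k_d)$ subject to the ancilla budget $\prod_j k_j\le (n+m)/b$ produces, after taking the worst case over which prefixes of axes $\{j,j+1,\ldots,d\}$ are used, the collection of terms $d\,2^{n/k}/(\prod_{i=k}^d n_i)^{1/k}$ appearing in the statement. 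Balancing this broadcast cost against the Gray Cycle cost $2^n/(n+m)$ by setting $p\approx dn/(d+1)$ in the regime $m\gg 2^{dn/(d+1)}$ produces the additive $d\cdot 2^{n/(d+1)}$ term, and the additive $n^2$ term is the cost of the invertible $\mathbb{F}_2$-linear permutations and the Inverse stage, bounded by Lemma~\ref{lem:cnot_circuit} applied to each register's induced subgraph. The 2D bound is the specialisation $d=2$: the $k=2$ term becomes $2^{n/2}/\sqrt{n_2}$ and the balancing term becomes $2^{n/3}$.

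\textbf{Main obstacle.} The delicate step is the $d$-dimensional broadcast/alignment analysis in the Copy stages: one must design fan-out trees that simultaneously (i) respect grid edges, (ii) deliver sufficiently many copies of $x_{suf}$, $x_{pre}$ and $x_{aux}$ into the correct neighbourhoods of the target qubits, and (iii) stay within the ancilla budget. The interplay between the number of copies allocated per axis and the maximum CNOT distance that survives into the Gray Cycle is precisely what produces the maximum over $k\in\{2,\ldots,d\}$ in the final bound, and verifying that the resulting Gray Cycle still executes in the claimed depth (i.e.\ that each parallel CNOT layer has a nearby auxiliary control available) is the main bookkeeping challenge of the proof.
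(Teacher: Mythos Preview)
Your Path argument is essentially the paper's: same register layout (each ${\sf R}_k$ holds $n-p$ alternating copy/target pairs followed by $\tau$ auxiliary qubits), same choice $p\approx\log(m/3)$, $\tau\approx 2\log(n-p)$, and the same Gray Cycle analysis via the exponentially-decaying sum over CNOT distances. Two minor remarks: the paper gets $O(p^2)$ for Gray Initial (not $O(2^p)$) by computing each $\langle t_k,x_{suf}\rangle$ locally inside a $2p$-qubit block, and the Copy stages cost $O(m)$, with the cap at $2^{n/2}$ obtained by simply discarding excess ancilla rather than by bounding the broadcast directly.

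For the grid, your framework is right (run Gray Initial and Gray Cycle along a Hamiltonian path in the grid, so those stages inherit the $O(p^2)$ and $O(2^{n-p})$ path bounds; only the Copy stages change), but your derivation of the $\max_k\{\cdot\}$ term is not how the paper obtains it, and your sketch is too vague to verify. The paper does \emph{not} optimise a per-axis fan-out allocation. Instead it first shows, via a straightforward axis-by-axis broadcast (Lemma~\ref{lem:copy_grid}), that on the \emph{full} grid the Copy stages cost $O(n^2+\sum_{j=1}^d n_j)$, giving total depth $O\big(n^2+\sum_j n_j+\tfrac{2^n}{n+m}\big)$. It then observes that one may choose to work on any \emph{sub-grid} $n_1'\times\cdots\times n_d'$ with $n_j'\le n_j$ and $m'=\prod_j n_j'-n$, trading a smaller $\sum_j n_j'$ against a smaller $m'$. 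The final bound comes from a $(d{+}1)$-case analysis: Case~1 sets all $n_j'=2^{n/(d+1)}$ when $n_d$ is large enough; Case~$j$ fixes the small axes $n_{d-j+2}',\ldots,n_d'$ at their true values and sets the remaining axes equal so that the product is $\approx 2^{(d-j+1)n/(d-j+2)}\cdot(n_{d-j+2}\cdots n_d)^{1/(d-j+2)}$; each case is shown to realise one of the terms in the claimed maximum, and a chain of inequalities verifies that this term dominates all others in that regime. This sub-grid selection is the step your proposal is missing.

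Also, the additive $n^2$ does not come from Lemma~\ref{lem:cnot_circuit} applied to permutations; it is the $O((n')^2)$ startup cost of the pipelined path copy in Lemma~\ref{lem:copy_path}, which feeds into Lemma~\ref{lem:copy_grid} and hence into every Copy and Inverse stage.
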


\begin{lemma}\label{lem:diag_tree_ancilla}
    Any $n$-qubit diagonal unitary matrix $\Lambda_n$ can be implemented by a  quantum circuit using $m\ge \Omega(n)$ ancillary qubits, of depth 
    \begin{enumerate}
        \item $O\left(n^2\log n+\frac{\log(n)2^n}{m}\right)$ under $\Tree_{n+m}(2)$ constraint.
        \item $O\big(nd\log_d (n+m)\log_d(n+d)+\frac{(n+d)\log_d(n+d) 2^{n}}{n+m}\big)$ under $\Tree_{n+m}(d)$ constraint for $d<n+m$.
        \item $O(2^n)$ under $\Star_{n+m}$ constraint.
    \end{enumerate}
\end{lemma}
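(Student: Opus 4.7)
\medskip

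\noindent\textbf{Overall plan.} I would specialize the 5-stage framework (Suffix Copy, Gray Initial, Prefix Copy, Gray Cycle, Inverse) from the preceding subsection to each of the three trees. The free parameters are the suffix length $p$ (hence $\lambda_{targ}=2^p$), the auxiliary length $\tau$, the layout of the four registers ${\sf R}_{\rm inp},{\sf R}_{\rm copy},{\sf R}_{\rm targ},{\sf R}_{\rm aux}$ on the vertices of the tree, and the $2^p$ integers $\ell_k$ labelling the Gray codes in Gray Cycle. For the $d$-ary tree I would place ${\sf R}_{\rm inp}$ in a connected subtree of size $n$ sitting near the root, reserve a nearby subtree of size $\Theta(n)$ for ${\sf R}_{\rm aux}$, and distribute the target register ${\sf R}_{\rm targ}$ and copy register ${\sf R}_{\rm copy}$ across the remaining subtrees so that every qubit in ${\sf R}_{\rm targ}$ is within distance $O(\log_d(n+d))$ of a dedicated copy of $\ket{x_{aux}}$.

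\medskip

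\noindent\textbf{Stage-by-stage bounds on a $d$-ary tree.} A $d$-ary tree on $n+m$ vertices has diameter $O(\log_d(n+m))$, so a single fan-out of a $t$-bit string into $k$ copies takes depth $O(\log(k)\log_d(n+m))$ by a standard doubling argument. This gives Suffix Copy depth $O(\log_d(n+m)\log(\lambda_{copy}/p))$, Gray Initial depth $O(p\log_d(n+m))$, the $x_{aux}$-portion of Prefix Copy depth $O(\log_d(n+d)\log(\lambda_{aux}/\tau))$, and the $x_{pre}$-portion of Prefix Copy depth $O(n\log_d(n+m))$; all of these are absorbed into the first term $nd\log_d(n+m)\log_d(n+d)$. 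The dominant cost is Gray Cycle, which contains $2^{n-p}$ rounds; in each round the $2^p$ target qubits must each be xor-ed with a control bit drawn from its nearest copy of $\ket{x_{aux}}$ (or $\ket{x_{pre}}$). By choosing the $\ell_k$ so that the $2^p$ CNOTs in a round lie on edge-disjoint root-to-leaf paths in a balanced recursive layout, each round finishes in depth $O(\log_d(n+d))$. Setting $\lambda_{targ}=2^p=\Theta(n+m)$, the Gray Cycle depth becomes $O\!\left((n+d)\log_d(n+d)\cdot 2^n/(n+m)\right)$, and summing the five stages yields part~(2). The Inverse stage is just the conjugate of the first three stages and matches their depths.

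\medskip

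\noindent\textbf{Binary tree and star.} Part~(1) is essentially the $d=2$ specialization of part~(2), but with a sharper Prefix Copy: on a binary tree the recursive bisection structure lets us reuse one global copy of $\ket{x_{pre}}$ across all $2^p$ target qubits, saving a factor of $n$ in the Gray Cycle term and giving $O(\log(n) 2^n/m)$ in place of $O(n\log(n) 2^n/(n+m))$; the penalty is an extra $n$ factor in the polynomial term, yielding $n^2\log n$. For part~(3) the star graph forbids all parallelism among two-qubit gates, because every edge shares the central vertex. It therefore suffices to invoke the ancilla-free size bound of Lemma~\ref{lem:diag_graph_withoutancilla}: a size-$O(2^n)$ circuit on $\Star_{n+m}$ automatically has depth $O(2^n)$.

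\medskip

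\noindent\textbf{Main obstacle.} The delicate step is the Gray Cycle on the $d$-ary tree. One must simultaneously (i) partition the target qubits across the $d$ subtrees so that each receives a proportional share, (ii) place matching copies of $\ket{x_{aux}}$ locally in each subtree, and (iii) choose the Gray code labels $\ell_k$ so that within a single round the $2^p$ active CNOT paths are edge-disjoint and all have length $O(\log_d(n+d))$. Proving that such a joint choice exists, with the promised depth, is the technical core; I would establish it by induction on the tree depth, using the Gray-code counting identity in Lemma~\ref{lem:GrayCode} to balance the frequency with which each control bit is used across the $2^{n-p}$ rounds.
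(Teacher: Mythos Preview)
Your proposal has the right skeleton (the 5-stage framework) but misidentifies the mechanism that makes Gray Cycle cheap, and this leads to a genuine gap. In the paper's construction for $\Tree_{n+m}(2)$, the Gray codes are \emph{all identical}: $\ell_k=1$ for every $k$, so in round $j$ every target qubit needs the \emph{same} control bit $x_{h_{1,j+1}}$. Edge-disjointness across varying $\ell_k$ is never used. What makes the round cheap is the register layout: ${\sf R}_{\rm aux}$ is not ``a nearby subtree of size $\Theta(n)$'' but rather a depth-$(b-1)$ micro-subtree hanging \emph{below each individual target qubit}, each holding its own copy of $x_{aux}=x_1\cdots x_\tau$ with $\tau=\Theta(\log n)$. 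Round $j$ then costs $O(\log h_{1,j+1})$ if $h_{1,j+1}\le \tau$ (local CNOT into the micro-subtree) and $O(n^2)$ otherwise (a full CNOT-circuit on the $O(n)$-size copy subtree). The crucial point is the Gray-code frequency: bit $i$ is the flipped bit in exactly $2^{n-p-i}$ rounds, so the expensive $O(n^2)$ rounds occur only $\sum_{i>\tau}2^{n-p-i}=O(2^{n-p}/n^2)$ times, and the total is $O(2^{n-p})$ --- amortized $O(1)$ per round, not $O(\log n)$ per round. This exponential-decay accounting, not an edge-disjoint routing argument, is what the ``main obstacle'' you flag is replaced by.

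Your account of the binary-tree improvement is also inverted. The gain of a factor $n$ in the $2^n/m$ term (compared to plugging $d=2$ into part~(2)) comes from the Gray Cycle mechanism above, not from a ``sharper Prefix Copy'' or a single global copy of $x_{pre}$; the paper uses many copies of $x_{pre}$ (one per copy-subtree) and many copies of $x_{aux}$ (one per target qubit). For general $d$-ary trees the paper does \emph{not} use the micro-subtree trick; it simply pays $O(n\log_d n)$ (when $d<2n$) or $O(d)$ (when $d\ge 2n$) per round, which is why the second term carries the extra factor $(n+d)$. Your per-round claim of $O(\log_d(n+d))$ is therefore too optimistic for $d<2n$, and indeed your own arithmetic silently inserts an unexplained $(n+d)$ factor when passing from the per-round bound to the total. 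Part~(3) is fine: ignoring the ancillas and invoking the size-$O(2^n)$ construction is exactly what the paper does.
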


\begin{restatable}{lem}{diagexpanderancilla}\label{lem:diag_expander_ancilla}
Any $n$-qubit diagonal unitary matrix $\Lambda_n$ can be realized by a quantum circuit of depth $O\big(n^2+\frac{\log(m)2^n}{m}\big)$
under $\Expander_{n+m}$ constraint, using $m\ge \Omega(n)$ ancillary qubits.
\end{restatable}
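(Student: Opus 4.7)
}
The plan is to invoke the five-stage ancilla-based framework of Section~\ref{sec:diag_with_ancilla} with parameters tuned to an $\Expander_{n+m}$, exploiting two key properties: constant vertex expansion $h_{out}(G)\ge c>0$, and the resulting diameter $O(\log(n+m))$. Set $p=\lfloor\log(m/c_0)\rfloor$ for a suitable constant $c_0$ so that $2^p=\Theta(m)$ while leaving $\Theta(m)$ ancillary qubits to form the copy, target, and auxiliary registers $\sf R_{\rm copy},\sf R_{\rm targ},\sf R_{\rm aux}$. Choose all Gray-code labels $\ell_k=1$ as in Table~\ref{tab:choice-of-gray}, and set the small prefix $x_{aux}$ to contain $\tau=O(\log m)$ bits, the expected ``range'' of any one CNOT after routing through the expander.

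First I would bound the Suffix Copy and Prefix Copy stages. Each produces $\Theta(m/\log m)$ copies of an $O(\log m)$-bit (resp.\ $O(n)$-bit) string inside an $\Theta(m)$-qubit subregister. Because the expander has diameter $O(\log m)$ and one can build a balanced CNOT broadcast tree inside any subgraph using expansion, each bit can be copied to $k$ target locations in depth $O(\log k)$; doing this for all $p$ bits in parallel costs depth $O(\log m)$ for Suffix Copy. Prefix Copy broadcasts $n$-bit strings, which I would perform qubit-by-qubit in depth $O(n + \log m)$, or equivalently fold it into the $O(n^2)$ linear-transformation budget below. The Gray Initial stage writes $\ket{\langle t_k,x_{suf}\rangle}$ into each qubit of $\sf R_{\rm targ}$; since we already have $\Omega(m/\log m)$ copies of $x_{suf}$ arranged next to target qubits, this inner product of $p$ bits can be computed locally at each of the $2^p$ targets in parallel in depth $O(p)=O(\log m)$.

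The dominant contribution is the Gray Cycle stage, running $2^{n-p}$ iterations. In each iteration, every one of the $2^p=\Theta(m)$ target qubits must be updated by a single CNOT whose control is a bit of $x_{pre}$ or $x_{aux}$, followed by a single-qubit phase rotation. The crucial point is that if copies of the $\tau$ most frequently used bits are placed throughout the expander at density $\Theta(1)$, then each target qubit has its required control within distance $O(\log m)$; moreover, since each copy supplies at most $O(1)$ simultaneous CNOTs and the expander admits $\Omega(m)$ vertex-disjoint short paths (by iterating the expansion bound), all $2^p$ CNOTs of one iteration can be routed in parallel using Lemma~\ref{lem:cnot_path_constraint} in depth $O(\log m)$. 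Thus the Gray Cycle contributes depth $O(2^{n-p}\cdot\log m)=O(\log(m)\,2^n/m)$. The Inverse stage is the reverse of the preceding four and costs the same.

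The remaining $O(n^2)$ summand absorbs any step that must apply an $\mathbb{F}_2$-invertible linear transformation on the $n$-qubit input block or on a connected $\Theta(n)$-vertex subgraph containing the input register, implemented via Lemma~\ref{lem:cnot_circuit} at depth $O(n^2/\log\delta)=O(n^2)$. Summing over the five stages yields the bound $O\bigl(n^2+\log(m)\cdot 2^n/m\bigr)$. The main obstacle I anticipate is justifying, rigorously, that the parallel CNOT routing inside one Gray Cycle iteration truly fits in depth $O(\log m)$: this requires choosing an explicit placement of the copy, target, and auxiliary qubits inside the expander such that the $2^p$ simultaneous control-target pairs admit edge-congestion-$O(1)$ routings of length $O(\log m)$. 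I would establish this by repeatedly applying the vertex-expansion property to show that balls of radius $O(\log m)$ around target qubits cover all required control copies with enough slack to schedule the routings greedily, analogous to the congestion arguments used for expander sorting networks.
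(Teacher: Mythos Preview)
Your plan follows the five-stage template of Section~\ref{sec:diag_with_ancilla}, but the paper explicitly \emph{departs} from that template for expanders (see the opening of Appendix~\ref{sec:diag_with_ancilla_expander}): it drops the Suffix/Prefix Copy stages and the auxiliary register altogether and runs only Gray Initial, Gray Cycle, and Inverse. The registers are built directly from the nested expansion sets $S_1\subset S_2\subset\cdots\subset S_{\ell-1}$ of Eqs.~\eqref{eq:s1}--\eqref{eq:siplus1}: ${\sf R}_{\rm copy}=S_{\ell-1}$ and ${\sf R}_{\rm targ}=\Gamma(S_{\ell-1})$, linked by the matching $M_{S_{\ell-1}}$ guaranteed by Lemma~\ref{lem:graph_property}. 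This structural choice is what eliminates the routing obstacle you flag.

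The key observation you are missing is that with $\ell_k=1$ for all $k$, every target qubit in a single Gray-cycle iteration needs the \emph{same} control bit $x_{h_{1,j+1}}$. So the step is a \emph{broadcast}, not a multi-commodity routing. The paper implements it as: (i) fan out $x_{h_{1,j+1}}$ from ${\sf R}_{\rm inp}$ to all of $S_{\ell-1}$ by applying CNOTs along the matchings $M_{S_1},M_{S_2},\ldots,M_{S_{\ell-2}}$ in sequence, depth $O(\ell)=O(\log(n+m))$; (ii) add it to every target via $M_{S_{\ell-1}}$, depth $1$; (iii) rotate; (iv) un-broadcast. Each iteration costs $O(\log(n+m))$, and there are $2^{n-p}$ of them, giving $O(2^{n-p}\log(n+m))=O(\log(m)\,2^n/m)$. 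Gray Initial does the same broadcast-then-match once per suffix bit, costing $O(p\log(n+m))\le O(n^2)$.

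By contrast, your plan keeps persistent copies of several bits and then tries to route $2^p$ control--target pairs through vertex-disjoint short paths. That claim is precisely where your argument would fail: a bounded-degree expander on $\Theta(m)$ vertices cannot host $\Theta(m)$ vertex-disjoint paths of length $\Theta(\log m)$ (that would require $\Theta(m\log m)$ vertices). Standard expander routing gives only $O(\log m)$ congestion with $O(\log m)$ dilation, which translates to $O(\log^2 m)$ depth per iteration and hence $O(\log^2(m)\,2^n/m)$ overall---off by a $\log m$ factor from the target bound. The paper's broadcast-via-nested-matchings avoids congestion entirely because at every level the CNOTs lie on a matching.
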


\section{Circuits for QSP and GUS under qubit connectivity constraints}
\label{sec:QSP_US_graph}
In this section, we bound the circuit size and depth for QSP and GUS, based on the circuit constructions for diagonal unitary matrices in Sections \ref{sec:diag_without_ancilla} and \ref{sec:diag_with_ancilla}. In Sections \ref{sec:QSP_graph_sub} and \ref{sec:US_graph_sub}, we present QSP and GUS circuits under path, $d$-dimensional grid, binary tree, expander graph and general graph constraints. In Section \ref{sec:circuit_transformation_sub}, we present a transformation between circuits under different graph constraints, which we use to upper bound the circuit depth for QSP and GUS under brick-wall constraint. Additional details and proofs are given in Appendix \ref{append:QSP_US_graph}.

\subsection{Circuit depth and size upper bounds for QSP}
\label{sec:QSP_graph_sub}
Our results for QSP in this section are based on the following lemma:
\begin{lemma}[\cite{grover2002creating,kerenidis2017quantum}]\label{lem:QSP_UCGs}
The QSP problem can be solved by $n$ UCGs $V_1,V_2,\ldots, V_n$ acting on $1,2,\ldots, n$ qubits, respectively.
\end{lemma}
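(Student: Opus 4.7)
The plan is to build $V_1,\ldots,V_n$ inductively so that after applying $V_1,\ldots,V_k$ to $\ket{0^n}$, the first $k$ qubits carry a state whose squared amplitudes match the marginal distribution of $\ket{\psi_v}$ on its first $k$ bits, while the remaining $n-k$ qubits stay in $\ket{0^{n-k}}$. For each $k\in\{0,\ldots,n\}$ and each prefix $y\in\{0,1\}^k$, first I would define the marginal probability $\alpha_y := \sum_{z\in\{0,1\}^{n-k}}|v_{yz}|^2$, and then pick intermediate amplitudes $\gamma_y := \sqrt{\alpha_y}$ when $k<n$ and $\gamma_y := v_y$ when $k=n$. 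The target intermediate state is then $\ket{\phi_k} := \sum_{y\in\{0,1\}^k}\gamma_y \ket{y}\ket{0^{n-k}}$, with $\ket{\phi_0}=\ket{0^n}$ and $\ket{\phi_n}=\ket{\psi_v}$, so the lemma will follow by induction on $k$.

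Next I would carry out the inductive step: given $V_{k-1}\cdots V_1\ket{0^n}=\ket{\phi_{k-1}}$, define $V_k$ to be the UCG on qubits $1,\ldots,k$ which, conditioned on the first $k-1$ qubits being in state $\ket{y}$, applies to the $k$-th qubit any single-qubit unitary $U_y$ that sends $\ket{0}$ to $(\gamma_{y0}/\gamma_y)\ket{0}+(\gamma_{y1}/\gamma_y)\ket{1}$. The key identity to verify is $|\gamma_{y0}|^2+|\gamma_{y1}|^2=|\gamma_y|^2$, which holds in the intermediate case ($k<n$) because $\alpha_{y0}+\alpha_{y1}=\alpha_y$ by splitting the marginal sum on the $(k{+}1)$-st bit, and in the final case ($k=n$) because $|v_{y0}|^2+|v_{y1}|^2=\alpha_y$ directly from the definition of $\alpha_y$. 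This identity guarantees that the target vector is a normalized qubit state, so a valid $U_y$ exists; in degenerate branches with $\gamma_y=0$ the amplitude on $\ket{y}$ already vanishes and $U_y$ can be chosen arbitrarily. Applied to $\ket{\phi_{k-1}}$, this $V_k$ produces $\ket{\phi_k}$, completing the step.

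Setting $k=n$ gives $V_n\cdots V_1\ket{0^n}=\ket{\psi_v}$, as required. I do not expect a substantive obstacle: the construction is the standard ``binary-tree'' decomposition of an $n$-qubit state into conditional single-qubit preparations, and once the intermediate amplitudes $\gamma_y$ are chosen, the $V_k$ are essentially forced. The only mild subtlety is the convention in zero-marginal branches, which has no effect on correctness since the relevant amplitude is already zero.
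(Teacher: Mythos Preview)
Your proposal is correct and is exactly the standard Grover--Rudolph construction that the cited references \cite{grover2002creating,kerenidis2017quantum} use; the paper itself does not give a proof but simply quotes the lemma from those references (and restates it with the accompanying circuit diagram in the appendix). Your inductive definition of the marginal amplitudes $\gamma_y$ and the conditional single-qubit rotations $U_y$ matches that construction precisely.
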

Combining this with the fact that every $j$-qubit UCG $V_j$ can be decomposed into 3 $j$-qubit diagonal unitary matrices and 4 single-qubit gates (Lemma~\ref{lem:UCG_decomposition}), and using the results for diagonal unitary matrices from Sections \ref{sec:diag_without_ancilla} and \ref{sec:diag_with_ancilla}, we obtain the following circuit depth bounds for QSP.


\begin{theorem}\label{thm:QSP_grid_main}
Any $n$-qubit quantum state can be prepared by a circuit with $m\ge 0$ ancillary qubits, of depth 
\begin{enumerate}
    \item $O\left(2^{n/2}+\frac{2^n}{n+m}\right)$ under $\Path_{n+m}$ constraint.
    \item $O\left(2^{n/3}+\frac{2^{n/2}}{(n_2)^{1/2}}+\frac{2^n}{n+m}\right)$ under $\Grid^{n_1,n_2}_{n+m}$ constraint.
    \item $
   O\big(n^3+d2^{\frac{n}{d+1}}+\max\limits_{j\in\{2,\ldots,d\}}\big\{\frac{d2^{n/j}}{(\Pi_{i=j}^d n_i)^{1/j}}\big\}+\frac{2^n}{n+m}\big)$ under $\Grid_{n+m}^{n_1,n_2,\ldots,n_d}$ constraint. If $n_1=n_2=\cdots=n_d$, the depth is $O\left(n^3+d2^{\frac{n}{d+1}}+\frac{2^n}{n+m}\right)$.
\end{enumerate}
\end{theorem}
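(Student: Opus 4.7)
The plan is to reduce QSP to a cascade of UCGs, then to diagonal unitaries, and then to invoke the bounds already established in Sections~\ref{sec:diag_without_ancilla} and~\ref{sec:diag_with_ancilla}. By Lemma~\ref{lem:QSP_UCGs}, QSP decomposes into a sequence $V_1, V_2, \ldots, V_n$ of UCGs, with $V_j$ acting on a prescribed set of $j$ qubits. By Lemma~\ref{lem:UCG_decomposition}, each $V_j$ is a product of three $j$-qubit diagonal unitaries and $O(1)$ single-qubit gates. Hence the total QSP depth is $\sum_{j=1}^n O(D(\Lambda_j))$, where $D(\Lambda_j)$ denotes the depth of a $j$-qubit diagonal unitary executed under the sub-constraint induced by the qubits used for $V_j$ together with the remaining $n+m-j$ qubits serving as ancilla.

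First I would fix a layout of the $n+m$ qubits so that, when $V_j$ is applied, its $j$ active qubits together with the remaining $n+m-j$ qubits embed in the same type of constraint graph. For the path this is immediate: place the $n$ input qubits at one end and the $m$ ancilla at the other, so the workspace for $V_j$ uses the first $j$ positions with the rest forming a contiguous ancilla block. For the $d$-dimensional grid, I would dedicate a contiguous sub-block to the input qubits while preserving the side lengths $n_2, \ldots, n_d$, so that the workspace for each $V_j$ is itself a sub-grid with the same ``thin'' extents. This ensures the per-dimension factors appearing inside the $\max_k$ of Lemma~\ref{lem:diag_d_grid_ancilla} match those in the target bound.

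Next, in the regime $m+n-j = \Omega(j)$, I apply Lemma~\ref{lem:diag_d_grid_ancilla} to each $\Lambda_j$: for the path this yields $D(\Lambda_j) = O(2^{j/2} + 2^j/(n+m))$, whose geometric-series sum over $j$ is $O(2^{n/2} + 2^n/(n+m))$, proving part 1. For the 2D grid, $D(\Lambda_j) = O(2^{j/3} + 2^{j/2}/\sqrt{n_2} + 2^j/(n+m))$ gives part 2 in the same way. For the $d$-dimensional grid, $D(\Lambda_j) = O\bigl(j^2 + d \cdot 2^{j/(d+1)} + \max_{k}\{d \cdot 2^{j/k}/(\prod_{i=k}^d n_i)^{1/k}\} + 2^j/(n+m)\bigr)$, and the summation $\sum_{j=1}^n j^2 = O(n^3)$ supplies the $n^3$ summand while every exponential term is dominated by its $j = n$ instance. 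For the complementary regime where $m$ is too small for the ancilla-based construction, I instead invoke Lemmas~\ref{lem:diag_path_withoutancilla_main} and~\ref{lem:diag_grid_withoutancilla_main} directly on each $\Lambda_j$, giving $D(\Lambda_j) = O(2^j/j)$; summing gives $O(2^n/n)$, which coincides with the target $O(2^n/(n+m))$ up to a constant when $m = O(n)$, with the $2^{n/2}$, $2^{n/3}$, and polynomial head terms all subsumed.

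The main obstacle I anticipate is the geometric bookkeeping in part 3: verifying that the sub-block carved out for each $V_j$ can be chosen so that the effective side lengths in the ``thin'' directions remain $n_2, \ldots, n_d$, so that the summation $\sum_{j=1}^n \max_{k \in \{2,\ldots,d\}} \{ d \cdot 2^{j/k} / (\prod_{i=k}^d n_i)^{1/k}\}$ telescopes to its $j = n$ term. Equivalently, one must route the $j$-qubit workspace through sub-grids of the correct aspect ratio across all $j$, which requires a careful choice of the input-qubit ordering on the grid; once this layout is fixed the remaining depth analysis is a routine geometric-series calculation.
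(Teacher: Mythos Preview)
Your proposal is correct and mirrors the paper's proof (packaged there as Lemma~\ref{lem:UCG_grid} plus Theorem~\ref{thm:QSP_grid}): QSP $\to$ UCGs via Lemma~\ref{lem:QSP_UCGs}, UCGs $\to$ diagonal unitaries via Lemma~\ref{lem:UCG_decomposition}, then sum the per-$j$ depth bounds from Lemmas~\ref{lem:diag_grid_withoutancilla_main} and~\ref{lem:diag_d_grid_ancilla} as geometric series. Your anticipated obstacle dissolves once you note that each $V_j$ is implemented on the \emph{full} grid $\Grid_{n+m}^{n_1,\ldots,n_d}$ with $j$ input qubits and $n+m-j$ ancilla, so the side lengths $n_1,\ldots,n_d$ in the $\max_k$ denominators are fixed across all $j$; only the numerator $2^{j/k}$ varies, and the sum over $j$ is dominated by its $j=n$ term.
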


\begin{theorem}\label{thm:QSP_tree}
Any $n$-qubit quantum state can be prepared by a circuit with $m\ge 0$ ancillary qubits, of depth 
\begin{enumerate}
    \item $O\left(n^3\log(n)+\frac{\log(n)2^n}{n+m}\right)$ under $\Tree_{n+m}(2)$ constraint.
    \item $O\big(n^2d\log_d (n+m)\log_d(n+d)+\frac{(n+d)\log_d(n+d) 2^{n}}{n+m}\big)$, under $\Tree_{n+m}(d)$ constraint for $d<n+m$.
    \item $O(2^n)$ under $\Star_{n+m}$ constraint.
\end{enumerate}
\end{theorem}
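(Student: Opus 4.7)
The plan is to reduce QSP on trees to the diagonal unitary constructions already established (Lemmas \ref{lem:diag_tree_ancilla} and \ref{lem:diag_tree_withoutancilla}), using the same UCG-based pipeline that was used for the grid case in Theorem \ref{thm:QSP_grid_main}. By Lemma \ref{lem:QSP_UCGs}, an arbitrary $n$-qubit QSP unitary decomposes as a sequential product $V_1 V_2 \cdots V_n$, where $V_j$ is a UCG acting on $j$ qubits. Since these UCGs are applied in sequence, the overall depth is additive: $\mathcal{D}(\text{QSP}) \le \sum_{j=1}^n \mathcal{D}(V_j)$. Lemma \ref{lem:UCG_decomposition} then expresses each $V_j$ as three $j$-qubit diagonal unitaries conjugated by a layer of single-qubit gates, so $\mathcal{D}(V_j) = 3\mathcal{D}(\Lambda_j) + O(1)$ on the tree.

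The second step is to choose, for each $j$, a $j$-qubit subtree of $\Tree_{n+m}(d)$ to host $V_j$, using the remaining $n+m-j$ qubits as ancillas. When $n+m-j \ge \Omega(j)$, Lemma \ref{lem:diag_tree_ancilla} applies to $\Lambda_j$ on $\Tree_{n+m}(d)$ and gives, for the general $d$-ary case,
\[
\mathcal{D}(\Lambda_j) = O\!\left(j d \log_d(n+m)\log_d(j+d) + \tfrac{(j+d)\log_d(j+d)\, 2^j}{n+m}\right).
\]
When $m$ is too small for this to apply at large $j$ (in particular near $j=n$), I fall back to the ancilla-free Lemma \ref{lem:diag_tree_withoutancilla} on a $j$-vertex subtree, whose $O(\log(j)\, 2^j/j)$ bound for the binary case (and analogous bound for $d$-ary) is already absorbed into the $2^n/(n+m)$ term below, since $n+m = \Theta(n)$ in that regime.

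Finally, I sum the depths over $j=1,\ldots,n$. For the binary tree,
\[
\sum_{j=1}^n \left( j^2\log j + \tfrac{\log(j)\, 2^j}{n+m} \right) = O\!\left(n^3 \log n + \tfrac{\log(n)\, 2^n}{n+m}\right),
\]
using $\sum_j j^2\log j = O(n^3\log n)$ and the fact that the geometric series $\sum_j \log(j) 2^j/(n+m)$ is dominated by its last term. The identical telescoping of the $d$-ary bound yields the second claim, where the leading polynomial factor becomes $n^2 d \log_d(n+m)\log_d(n+d)$ and the exponential term becomes $(n+d)\log_d(n+d)\, 2^n/(n+m)$. For the star, Lemma \ref{lem:diag_tree_ancilla}(3) gives $\mathcal{D}(\Lambda_j) = O(2^j)$ regardless of ancilla count, so the sum telescopes to $O(2^n)$ via $\sum_{j=1}^n 2^j = O(2^n)$.

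The main obstacle I foresee is bookkeeping: I need to confirm that the ancilla subtree available to $V_j$ truly permits Lemma \ref{lem:diag_tree_ancilla} to fire with the appropriate parameter (i.e.\ that the $j$ active qubits can be placed in a subtree whose surrounding $n+m-j$ vertices form a valid ancilla region under the constraint), and that the crossover between the ancilla-based and ancilla-free regimes is handled uniformly in $m$. The summation itself is routine, but the $\log_d(n+m)$ versus $\log_d(n+d)$ distinction in the $d$-ary case deserves care, since $j+d$ grows with $j$ whereas $n+m$ is fixed; verifying that the larger logarithm attaches to the correct factor in each summand is where the depth bound could be accidentally loosened.
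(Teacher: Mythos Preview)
Your proposal is correct and follows essentially the same approach as the paper: decompose QSP into UCGs $V_1,\ldots,V_n$ via Lemma~\ref{lem:QSP_UCGs}, reduce each UCG to three diagonal unitaries via Lemma~\ref{lem:UCG_decomposition}, invoke the tree diagonal-unitary bounds (Lemmas~\ref{lem:diag_tree_withoutancilla} and~\ref{lem:diag_tree_ancilla}) with the remaining $n+m-j$ qubits serving as ancilla for $V_j$, and sum over $j$. The only cosmetic difference is in Case~3: the paper shortcuts by invoking the general-graph QSP bound (Theorem~\ref{thm:QSP_graph}) directly rather than summing $\sum_j O(2^j)$, but this is the same argument one level removed.
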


\begin{restatable}{theorem}{qspexpander}\label{thm:QSP_expander}
Any $n$-qubit quantum state can be prepared by a quantum circuit of depth
$O\big(n^3+\frac{\log(n+m)2^n}{n+m}\big)$
under $\Expander_{n+m}$ constraint, using $m\ge 0$ ancillary qubits.
\end{restatable}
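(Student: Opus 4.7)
The plan is to apply the standard QSP-to-UCG-to-diagonal-unitary reduction and then invoke the expander-graph diagonal-unitary lemmas already proved. First, by Lemma \ref{lem:QSP_UCGs}, the QSP task decomposes into $n$ uniformly controlled gates $V_1, V_2, \ldots, V_n$, with $V_j$ acting on $j$ qubits. Next, Lemma \ref{lem:UCG_decomposition} rewrites each $V_j$ as a product of three $j$-qubit diagonal unitaries $\Lambda_j$ interleaved with two single-qubit layers of depth $O(1)$. Hence the overall depth is controlled by $\sum_{j=1}^n \mathcal{D}(\Lambda_j)$, where $\mathcal{D}(\Lambda_j)$ is the depth of a $j$-qubit diagonal unitary implemented on $j$ of the $n+m$ qubits of the expander, with up to $m_j := n+m-j$ of the remaining qubits available as ancilla.

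The next step is a case analysis on the relative size of $j$ versus $m_j$. When $m_j \ge cj$ for a suitable constant $c>0$, I apply Lemma \ref{lem:diag_expander_ancilla} to get $\mathcal{D}(\Lambda_j) = O(j^2 + \log(m_j) 2^j / m_j)$; since $c j \le m_j \le n+m$, this is at most $O(j^2 + \log(n+m) 2^j / (n+m))$. When $m_j < cj$ (which forces $j = \Theta(n+m)$ and only occurs for the largest few values of $j$ when $m$ is small), I apply Lemma \ref{lem:diag_expander_withoutancilla} directly on a $j$-vertex expander subgraph to obtain $\mathcal{D}(\Lambda_j) = O(\log(j) 2^j / j) = O(\log(n+m) 2^j / (n+m))$. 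Summing these bounds over $j \in [n]$, the polynomial terms contribute $O(n^3)$, and the exponential terms are geometric in $j$ and hence dominated by $j=n$, contributing $O(\log(n+m) 2^n / (n+m))$. Adding the negligible overhead of the single-qubit layers yields the target bound.

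The main obstacle is to ensure that the two diagonal-unitary lemmas can actually be applied on the appropriate induced subgraphs of $\Expander_{n+m}$. In the small-ancilla regime, I must exhibit a $j$-vertex induced subgraph that is itself an expander, and in the large-ancilla regime the chosen input-plus-ancilla block of size $j + m_j$ must likewise induce an expander; both are standard facts for expander graphs, which contain large induced subgraphs of constant vertex expansion, and one can always select the ancilla to lie in such a block adjacent to the active qubits. A secondary point is to verify that the two regimes splice cleanly into a single uniform bound $O(j^2 + \log(n+m) 2^j / (n+m))$ across all $j \le n$, which is a routine case check based on comparing $j$ with $(n+m)/(1+c)$.
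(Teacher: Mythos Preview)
Your plan is the paper's proof: reduce QSP to UCGs (Lemma~\ref{lem:QSP_UCGs}), each UCG to three diagonals (Lemma~\ref{lem:UCG_decomposition}), apply the expander diagonal-unitary lemmas with a case split on the ancilla count, and sum; the paper merely packages the per-UCG bound as a separate Lemma~\ref{lem:UCG_expander}, but the substance is identical. One simplification you overlooked: in your large-ancilla regime $j+m_j=j+(n+m-j)=n+m$ is the \emph{full} graph, so no induced-subgraph question arises there and Lemma~\ref{lem:diag_expander_ancilla} applies directly on $\Expander_{n+m}$ with $j$ input qubits and $n+m-j$ ancilla.

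Your small-ancilla concern is genuine, but your resolution is wrong: it is \emph{not} a standard fact that an expander contains large induced expander subgraphs---deleting the neighbourhood of a single vertex in a bounded-degree expander already isolates that vertex, so even nearly-spanning induced subgraphs can have zero vertex expansion. The paper's own Lemma~\ref{lem:UCG_expander} handles the regime $0<m<O(n)$ with the one-line ``do not use the ancillary qubits,'' which skirts exactly the same subtlety without justification. So you have correctly located a gap that the paper also leaves open; closing it requires either a real argument that one can \emph{choose} the $j$ input qubits so that their induced subgraph expands (not an appeal to folklore), or an adaptation of the construction behind Lemma~\ref{lem:diag_expander_withoutancilla} so that it runs on the full $\Expander_{n+m}$ while acting nontrivially on only $j$ coordinates.
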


Under general graph constraints we have the following:
\begin{restatable}{theorem}{qspgraph}\label{thm:QSP_graph}
Any $n$-qubit quantum state can be prepared by a  quantum circuit of size and depth $O(2^n)$ under arbitrary graph $G$ constraint, using no ancillary qubits.
\end{restatable}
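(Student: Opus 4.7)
The plan is to combine three previously-established results: the reduction of QSP to a sequence of $n$ UCGs (Lemma \ref{lem:QSP_UCGs}), the decomposition of each UCG into diagonal unitaries and a few single-qubit gates (Lemma \ref{lem:UCG_decomposition}), and the $O(2^n)$-size construction for diagonal unitaries under arbitrary connected graph constraints without ancilla (Lemma \ref{lem:diag_graph_withoutancilla}). The only extra ingredient needed is a careful choice of which physical qubits play the role of the first $j$ logical qubits of the $j$-th UCG, so that the $j$-qubit diagonal unitaries in its decomposition can themselves be implemented on a connected subgraph of $G$.

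First I would fix, once and for all, an ordering $v_1, v_2, \ldots, v_n$ of the vertices of $G$ produced by a breadth-first (or depth-first) traversal rooted at an arbitrary vertex. Since each newly discovered vertex is joined by an edge of $G$ to some previously discovered vertex, the induced subgraph $G_j := G[\{v_1, \ldots, v_j\}]$ is connected for every $j \in [n]$. I then use Lemma \ref{lem:QSP_UCGs} to reduce the QSP task to implementing $n$ UCGs $V_1, V_2, \ldots, V_n$, and assign $V_j$ to act on the qubits $v_1, \ldots, v_j$. Lemma \ref{lem:UCG_decomposition} expresses $V_j$ as a product of three $j$-qubit diagonal unitaries interleaved with $O(1)$ single-qubit gates on one target qubit; each of the three diagonal unitaries is implemented via Lemma \ref{lem:diag_graph_withoutancilla} on the connected induced subgraph $G_j$ in size $O(2^j)$. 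Summing over $j$ gives total size $\sum_{j=1}^n O(2^j) = O(2^n)$, and since depth is trivially bounded above by size, the depth is also $O(2^n)$.

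The main subtlety—and really the only place the connectivity of $G$ is used—is in ensuring that each $j$-qubit diagonal unitary can be implemented on the induced subgraph $G_j$ rather than on the full $n$-vertex graph. A naive application of Lemma \ref{lem:diag_graph_withoutancilla} to all $n$ qubits for each $V_j$ would cost $O(2^n)$ per UCG and hence $O(n\cdot 2^n)$ in total, which is too much. The BFS/DFS ordering is precisely what guarantees that $G_j$ satisfies the connectedness hypothesis of Lemma \ref{lem:diag_graph_withoutancilla}, so no routing through the remaining $n-j$ qubits is required. Beyond this bookkeeping, the argument is immediate and I do not foresee any further obstacle.
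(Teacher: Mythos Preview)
Your proof is correct and follows essentially the same approach as the paper: reduce QSP to $n$ UCGs (Lemma~\ref{lem:QSP_UCGs}), decompose each UCG into three diagonal unitaries plus $O(1)$ single-qubit gates (Lemma~\ref{lem:UCG_decomposition}), and implement each $j$-qubit diagonal on a connected $j$-vertex subgraph via Lemma~\ref{lem:diag_graph_withoutancilla} in size $O(2^j)$. The only difference is how the $j$ qubits of $V_j$ are guaranteed to lie on a connected subgraph: the paper swaps them into such a subgraph and back at cost $O(n^2)$ per UCG (via Lemma~\ref{lem:cnot_circuit}), whereas your BFS/DFS pre-labeling ensures $G_j$ is connected from the outset and avoids swapping entirely---a slightly cleaner route to the same $O(2^n)$ bound.
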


\subsection{Circuit depth and size upper bounds for GUS}
\label{sec:US_graph_sub}
Our results for GUS are based on the following lemma:
\begin{lemma}[\cite{mottonen2005decompositions}]\label{lem:unitary_UCGs}
Any $n$-qubit unitary matrix $U\in\mathbb{C}^{2^{n}\times 2^n}$ can be decomposed into $2^{n}-1$ $n$-qubit UCGs.
\end{lemma}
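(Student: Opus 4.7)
The plan is induction on $n$, using the Cosine--Sine Decomposition (CSD) together with a merging identity for multiplexed unitaries. The base case $n=1$ is immediate: a $1$-qubit unitary is itself a $1$-qubit UCG (no controls), giving $2^1-1 = 1$.

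For the inductive step, apply the CSD to $U$ with the first qubit as pivot, yielding
\[
  U \;=\; (L_1 \oplus L_2)\, G\, (R_1 \oplus R_2),
\]
where $L_1,L_2,R_1,R_2$ are $(n-1)$-qubit unitaries acting on qubits $2,\dots,n$; each outer factor $L_1\oplus L_2$ (resp.\ $R_1\oplus R_2$) acts as $L_1$ or $L_2$ on qubits $2,\dots,n$ conditioned on qubit $1$ being $\ket{0}$ or $\ket{1}$; and $G$ is a UCG with target qubit $1$ and controls on qubits $2,\dots,n$ which applies a specific single-qubit rotation to qubit $1$ conditioned on the basis state of the controls. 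Thus $G$ alone contributes exactly one $n$-qubit UCG.

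By the inductive hypothesis, each of $L_1, L_2, R_1, R_2$ factors into $2^{n-1}-1$ $(n-1)$-qubit UCGs. The recursion is run deterministically (always pivoting on the next qubit in a fixed order), so that the $k$-th UCG in the decomposition of $L_1$ has the same control/target qubits as the $k$-th UCG in the decomposition of $L_2$, differing only in the single-qubit rotations applied. The key merging identity is then: for any two $(n-1)$-qubit UCGs $A,B$ sharing the same control/target structure on qubits $2,\dots,n$, the block-diagonal operation $A \oplus B = \ket{0}\bra{0}_1 \otimes A + \ket{1}\bra{1}_1 \otimes B$ is itself an $n$-qubit UCG, because qubit $1$ simply becomes an additional control bit that selects between the single-qubit rotations of $A$ versus those of $B$. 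Combined with the identity $(A_1\cdots A_k)\oplus(B_1\cdots B_k) = (A_1\oplus B_1)\cdots(A_k\oplus B_k)$, this rewrites $L_1\oplus L_2$ as a product of $2^{n-1}-1$ $n$-qubit UCGs, and likewise for $R_1\oplus R_2$. Summing gives $2(2^{n-1}-1) + 1 = 2^n - 1$ $n$-qubit UCGs, closing the induction.

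The main obstacle is justifying the merging step: one needs the recursive decompositions of $L_1$ and $L_2$ (resp.\ $R_1$ and $R_2$) to produce UCGs with matching control/target structures so that pairwise merging yields valid $n$-qubit UCGs. This is ensured by fixing the CSD pivot order in advance so that the recursion trees for the two halves are identical, making corresponding UCG factors structurally compatible. The remaining ingredients---existence of the CSD in the stated form and the block-diagonal multiplication identity---are standard linear algebra.
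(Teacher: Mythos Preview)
Your proof is correct and follows the standard Cosine--Sine recursion of M\"ott\"onen et al.\ (the cited reference). The paper itself does not give a proof of this lemma---it is simply quoted from \cite{mottonen2005decompositions}---so there is nothing to compare against beyond noting that your argument is precisely the one underlying the cited result. The one point worth making explicit (which you handle correctly) is that the UCGs produced have varying target qubits; the paper acknowledges this immediately after restating the lemma as Lemma~\ref{lem:unitary_decomposition}, remarking that the target qubit may be arbitrary.
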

Combining this with Lemma~\ref{lem:UCG_decomposition} and the results for diagonal unitary matrices of Sections~\ref{sec:diag_without_ancilla} and \ref{sec:diag_with_ancilla} give the following circuit depth bounds for GUS:


%
\begin{theorem}\label{thm:US_path_grid}
Any $n$-qubit unitary can be realized by a quantum circuit with $m\ge 0$ ancillary qubits, of depth
\begin{enumerate}
    \item $O\big(4^{3n/4}+\frac{4^n}{n+m}\big)$ under $\Path_{n+m}$ constraint.
    \item $O\big(4^{2n/3}+\frac{4^{3n/4}}{(n_2)^{1/2}}+\frac{4^n}{n+m}\big)$ under $\Grid^{n_1,n_2}_{n+m}$ constraint.
    \item $O\big(n^22^n+d4^{\frac{(d+2)n}{2(d+1)}}+\max\limits_{j\in\{2,\ldots,d\}}\big\{\frac{d4^{(j+1)n/(2j)}}{(\Pi_{i=j}^d n_i)^{1/j}}\big\}+\frac{4^n}{n+m}\big)$ under $\Grid^{n_1,n_2,\ldots,n_d}_{n+m}$ constraint.  When $n_1=n_2=\cdots=n_d$, the depth is $O\big(n^22^n+d4^{\frac{(d+2)n}{2(d+1)}}+\frac{4^n}{n+m}\big)$.
\end{enumerate}
\end{theorem}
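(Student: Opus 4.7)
The plan is to reduce general unitary synthesis to a sequence of diagonal unitary synthesis problems and then invoke the diagonal constructions of Sections~\ref{sec:diag_without_ancilla} and~\ref{sec:diag_with_ancilla}. First I would apply Lemma~\ref{lem:unitary_UCGs} to decompose the given $n$-qubit unitary $U$ into $2^n-1$ $n$-qubit uniformly controlled gates $V^{(1)},\ldots,V^{(2^n-1)}$. Next, Lemma~\ref{lem:UCG_decomposition} factors each $V^{(k)}$ as $\Lambda_n^{(k),3}(\mathbb{I}_{n-1}\otimes SH)\Lambda_n^{(k),2}(\mathbb{I}_{n-1}\otimes HS^\dagger)\Lambda_n^{(k),1}$, producing $3(2^n-1)$ $n$-qubit diagonals interleaved with $O(2^n)$ constant-depth single-qubit layers. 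The assembled circuit for $U$ is then the concatenation of these blocks, and its depth is at most $O(2^n)\cdot D_\Lambda$, where $D_\Lambda$ is the depth of one $n$-qubit diagonal unitary under the graph constraint in question.

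The second step is to substitute the appropriate diagonal depth bound according to the size of $m$. When $m=O(n)$ (including $m=0$), I would use the ancilla-free results of Lemmas~\ref{lem:diag_path_withoutancilla_main} and~\ref{lem:diag_grid_withoutancilla_main}, giving $D_\Lambda=O(2^n/n)$ and hence total depth $O(4^n/n)=O(4^n/(n+m))$. When $m\ge\Omega(n)$, I would use Lemma~\ref{lem:diag_d_grid_ancilla}: multiplying its bound by $2^n$ turns $2^{n/2}+2^n/(n+m)$ into $2^{3n/2}+4^n/(n+m)=4^{3n/4}+4^n/(n+m)$ under $\Path_{n+m}$, turns $2^{n/3}+2^{n/2}/(n_2)^{1/2}+2^n/(n+m)$ into $4^{2n/3}+4^{3n/4}/(n_2)^{1/2}+4^n/(n+m)$ under $\Grid^{n_1,n_2}_{n+m}$, and in general turns $d\cdot 2^{n/(d+1)}$ into $d\cdot 4^{(d+2)n/(2(d+1))}$, the max term $d\cdot 2^{n/k}/(\prod_{i=k}^d n_i)^{1/k}$ into $d\cdot 4^{(k+1)n/(2k)}/(\prod_{i=k}^d n_i)^{1/k}$, and the additive $n^2+2^n/(n+m)$ into $n^2 2^n+4^n/(n+m)$, which exactly reproduces each of the three displayed bounds.

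The composition between successive blocks is clean because each diagonal unitary acts in place on the computational basis, so its input and output qubit layouts agree, and the internal permutation $\Pi$ of Fig.~\ref{fig:diag_without_ancilla_framwork} is set up and undone within a single block; the intermediate single-qubit layers contribute only $O(1)$ depth regardless of connectivity. The main obstacle I anticipate is purely arithmetic: verifying that the additive error terms (in particular $n^2 2^n$ arising from the permutations and ancilla preparation) are absorbed into the dominant exponential contributions for every $d$, and that the intermediate regime $0<m<\Omega(n)$ is handled seamlessly by leaving the ancilla idle and using the ancilla-free construction, whose bound $O(4^n/n)$ coincides with $O(4^n/(n+m))$ since $n+m=\Theta(n)$ in that regime. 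No new graph-theoretic construction is required beyond what Sections~\ref{sec:diag_without_ancilla} and~\ref{sec:diag_with_ancilla} already provide.
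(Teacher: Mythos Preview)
Your proposal is correct and follows essentially the same route as the paper: decompose $U$ into $2^n-1$ UCGs via Lemma~\ref{lem:unitary_UCGs}, reduce each UCG to three diagonals via Lemma~\ref{lem:UCG_decomposition}, plug in the diagonal depth bounds from Lemmas~\ref{lem:diag_path_withoutancilla_main}, \ref{lem:diag_grid_withoutancilla_main}, and~\ref{lem:diag_d_grid_ancilla} (ignoring ancilla when $m=O(n)$), and multiply by $O(2^n)$. The paper packages the UCG step as Lemma~\ref{lem:UCG_grid} and then invokes it together with Lemma~\ref{lem:unitary_decomposition}, but the content is identical to what you describe.
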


\begin{theorem}\label{thm:US_tree}
 Any $n$-qubit unitary can be realized by a quantum circuit with $m\ge 0$ ancillary qubits, of depth
\begin{enumerate}
\item $O\left(n^2\log(n)2^n+\frac{\log(n)4^n}{n+m}\right)$ under $\Tree_{n+m}(2)$ constraint.

\item $O\big(n2^nd\log_d (n+m)\log_d(n+d)+\frac{(n+d)\log_d(n+d) 4^{n}}{n+m}\big)$
under $\Tree_{n+m}(d)$ constraint for $d<n+m$.

\item $O\left(2^n\right)$ under $\Star_{n+m}$ constraint.
\end{enumerate}
\end{theorem}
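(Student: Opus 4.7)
The plan is to reduce GUS on a tree to a sequence of diagonal unitaries and invoke the tree bounds established in Sections \ref{sec:diag_without_ancilla} and \ref{sec:diag_with_ancilla}, in direct analogy with the QSP argument sketched for Theorem \ref{thm:QSP_tree} but with $2^n-1$ UCGs instead of $n$.

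First, I would apply Lemma \ref{lem:unitary_UCGs} to write the target $U$ as a product $U = V_{2^n-1} V_{2^n-2} \cdots V_1$ of $n$-qubit UCGs. Then I would apply Lemma \ref{lem:UCG_decomposition} to each $V_i$, expressing it as a product of three $n$-qubit diagonal unitaries $\Lambda_n', \Lambda_n'', \Lambda_n'''$ separated by the single-qubit gates $\mathbb{I}_{n-1}\otimes(SH)$ and $\mathbb{I}_{n-1}\otimes(HS^\dagger)$. These single-qubit gates collectively contribute only $O(2^n)$ depth across all UCGs, which is absorbed by the stated bounds.

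Next, each of the $O(2^n)$ resulting $n$-qubit diagonal unitaries is implemented using the appropriate tree construction. For $m=0$, Lemma \ref{lem:diag_tree_withoutancilla} supplies per-block depths $O(\log(n)\cdot 2^n/n)$, $O(\log_d(n)\cdot \text{polynomial stuff})$ and $O(2^n)$ for binary, $d$-ary and star trees respectively; for $m \ge \Omega(n)$, Lemma \ref{lem:diag_tree_ancilla} supplies the corresponding ancilla-boosted per-block depths. The total depth is obtained by multiplying the per-block cost by $2^n$. For instance, in the binary-tree case with $m \ge \Omega(n)$ this yields
\[
    2^n \cdot O\!\left(n^2 \log n + \frac{\log(n) \, 2^n}{n+m}\right) = O\!\left(n^2 \log(n)\, 2^n + \frac{\log(n) \, 4^n}{n+m}\right),
\]
which matches claim 1; the $d$-ary tree and star claims are obtained by the same multiplication. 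For intermediate values $1 \le m \le O(n)$, the no-ancilla construction is still admissible, and since $n+m \asymp n$ in this regime the resulting bounds fit inside the stated expressions.

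The main points that need verification, rather than a true obstacle, are compositional: each diagonal-unitary construction from the previous sections begins and ends with all ancillas in $\ket{0}$, so successive blocks concatenate without interference; and the placement of the $n$ data qubits within the $(n+m)$-vertex tree must be fixed once and for all so that every one of the $2^n-1$ blocks uses the same layout assumed by Lemmas \ref{lem:diag_tree_withoutancilla} and \ref{lem:diag_tree_ancilla}. Once that placement is pinned down consistently, the arithmetic above gives the three claimed bounds, and no further depth overhead arises from stitching the blocks together.
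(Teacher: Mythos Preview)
Your proposal is correct and matches the paper's argument: the paper likewise invokes Lemma~\ref{lem:unitary_UCGs} to split $U$ into $2^n-1$ UCGs, reduces each to three diagonals via Lemma~\ref{lem:UCG_decomposition}, and plugs in the tree bounds of Lemmas~\ref{lem:diag_tree_withoutancilla} and~\ref{lem:diag_tree_ancilla}. The only deviation is in Case~3, where the paper appeals directly to the general-graph bound (Theorem~\ref{thm:US_graph}) rather than multiplying the per-diagonal star cost by $2^n$; both routes give $O(4^n)$, and the $O(2^n)$ printed in the statement is a typo (cf.\ Table~\ref{tab:US_graph}, Corollary~\ref{coro:US_star}, and the matching $\Omega(4^n)$ lower bound).
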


\begin{restatable}{theorem}{usexpander}
\label{thm:US_expander_graph}
Any $n$-qubit unitary matrix can be realized by a quantum circuit of depth $O\big(n^22^n+\frac{\log(n+m)4^n}{n+m}\big),$ 
under $\Expander_{n+m}$ constraint, using $m\ge 0$ ancillary qubits.
\end{restatable}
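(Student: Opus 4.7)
The plan is to follow the same template used for the other GUS bounds in this section (Theorems \ref{thm:US_path_grid} and \ref{thm:US_tree}): reduce GUS to QSP-style subroutines, then to UCGs, then to diagonal unitaries, and finally invoke the diagonal-unitary circuits constructed for expander graphs in Section \ref{sec:diag_with_ancilla}. Concretely, I would first apply Lemma \ref{lem:unitary_UCGs} to write the input $n$-qubit unitary $U$ as a product of $2^n-1$ $n$-qubit UCGs $V^{(1)},\ldots,V^{(2^n-1)}$. Then I would apply Lemma \ref{lem:UCG_decomposition} to each $V^{(i)}$, expressing it as $\Lambda^{(i)}_3 (\mathbb{I}_{n-1}\otimes SH)\Lambda^{(i)}_2(\mathbb{I}_{n-1}\otimes HS^\dagger)\Lambda^{(i)}_1$, which reduces the problem to implementing $3(2^n-1)=O(2^n)$ $n$-qubit diagonal unitary matrices, interleaved with $O(2^n)$ single-qubit gates on the last qubit.

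Next, I would split into two regimes according to the number of ancillae. If $m \ge \Omega(n)$, I invoke Lemma \ref{lem:diag_expander_ancilla}, which gives a depth of $O\bigl(n^2+\log(m)\,2^n/(n+m)\bigr)$ per diagonal unitary under $\Expander_{n+m}$ constraint, while guaranteeing that the ancillae are returned to $\ket{0^m}$, so they can be reused by the next $\Lambda^{(i)}_j$. If $m = O(n)$, I instead use Lemma \ref{lem:diag_expander_withoutancilla}, obtaining depth $O(\log(n)\,2^n/n)$ per diagonal unitary; in this regime $\log(n+m)\,2^n/(n+m)=\Theta(\log(n)\,2^n/n)$, so the bound has the same shape. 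Summing over the $O(2^n)$ diagonal unitaries and absorbing the $O(2^n)$ single-qubit gates (whose total depth is $O(2^n) = O(n^2 2^n)$) yields the target depth
\[
O\!\left(2^n\cdot\Bigl(n^2+\tfrac{\log(n+m)\,2^n}{n+m}\Bigr)\right)
\;=\;O\!\left(n^2 2^n + \tfrac{\log(n+m)\,4^n}{n+m}\right).
\]

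The main obstacle is not conceptual but bookkeeping: I need to make sure that the diagonal-unitary subroutine from Section \ref{sec:diag_with_ancilla} can actually be invoked $O(2^n)$ times in sequence on the same expander, which requires verifying that (i) each application cleanly resets the $m$ ancilla qubits to $\ket{0^m}$, and (ii) the placement of the $n$ logical qubits inside the expander (needed by the construction in Lemma \ref{lem:diag_expander_ancilla}) can be kept fixed across successive UCGs, so that no extra routing cost is incurred between consecutive diagonal blocks. Both properties hold by inspection of the circuit framework in Section \ref{sec:diag_with_ancilla} and by using the $\textsf{R}_{\rm inp}$ register for the $n$ non-ancilla qubits throughout. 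Once this is verified, the bound follows directly from the two-regime summation above, mirroring the proofs of Theorems \ref{thm:US_path_grid} and \ref{thm:US_tree}.
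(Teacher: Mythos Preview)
Your proposal is correct and follows essentially the same approach as the paper: the paper's proof simply cites Lemma~\ref{lem:unitary_decomposition} (decompose $U$ into $2^n-1$ UCGs) together with Lemma~\ref{lem:UCG_expander} (depth $O(n^2+\log(n+m)2^n/(n+m))$ per UCG under expander constraint), and Lemma~\ref{lem:UCG_expander} itself is proved exactly by the UCG-to-diagonal reduction (Lemma~\ref{lem:UCG_decomposition}) plus the two-regime invocation of Lemmas~\ref{lem:diag_expander_withoutancilla} and~\ref{lem:diag_expander_ancilla} that you spell out. Your bookkeeping points (ancilla reset, fixed $\textsf{R}_{\rm inp}$ placement) are handled implicitly in the paper by the same mechanism you identify.
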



\begin{restatable}{theorem}{usgraph}\label{thm:US_graph}
Any $n$-qubit unitary can be realized by a quantum circuit of size and depth $O(4^n)$ under general graph $G$ constraint.
\end{restatable}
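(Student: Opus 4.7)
} The plan is to combine three ingredients already established in the paper: the decomposition of an arbitrary unitary into uniformly controlled gates, the decomposition of each UCG into diagonal unitaries, and the $O(2^n)$-size implementation of an $n$-qubit diagonal unitary under an arbitrary connected graph constraint.

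First, I would invoke Lemma \ref{lem:unitary_UCGs} to write the target $n$-qubit unitary $U$ as a product of $2^n - 1$ $n$-qubit UCGs, $U = V^{(1)} V^{(2)} \cdots V^{(2^n-1)}$. Next, for each UCG $V^{(k)}$, I would apply Lemma \ref{lem:UCG_decomposition} to write $V^{(k)} = \Lambda'''_n (\mathbb{I}_{n-1}\otimes SH) \Lambda''_n (\mathbb{I}_{n-1}\otimes HS^\dagger) \Lambda'_n$, reducing the implementation of $V^{(k)}$ to that of three $n$-qubit diagonal unitaries (plus a constant number of single-qubit gates on one specified qubit).

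The third step is to implement each of the $3(2^n-1) = O(2^n)$ diagonal unitaries under the graph $G$ constraint via Lemma \ref{lem:diag_graph_withoutancilla}, which guarantees an $O(2^n)$-size circuit per diagonal unitary using no ancillary qubits. Concatenating everything sequentially yields a total size bound of $O(2^n) \cdot O(2^n) = O(4^n)$. Since the depth of a circuit is upper bounded by its size, the depth is also $O(4^n)$.

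There is no real obstacle here: the theorem is essentially a routine corollary of the previously proved results, with the only mild point being to verify that the single-qubit gates $SH$ and $HS^\dagger$ appearing in Lemma \ref{lem:UCG_decomposition} cost $O(1)$ per UCG and therefore contribute only $O(2^n)$ to the total, which is subdominant. The tightness mentioned in the main text follows from the known $\Omega(4^n)$ size lower bound for GUS in the unconstrained case \cite{shende2004minimal}, so no new lower bound argument is needed.
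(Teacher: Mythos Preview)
Your proposal is correct and follows essentially the same route as the paper. The paper's proof simply cites Lemma~\ref{lem:UCG_graph} (which packages your steps 2 and 3 into the statement that any $n$-qubit UCG costs $O(2^n)$ size and depth under arbitrary graph constraint) together with Lemma~\ref{lem:unitary_decomposition}; you have merely unfolded that intermediate lemma into its two ingredients, Lemmas~\ref{lem:UCG_decomposition} and~\ref{lem:diag_graph_withoutancilla}.
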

\subsection{Circuit transformation between different graph constraints}
\label{sec:circuit_transformation_sub}
In this section, we first give a transformation between circuits under different graph constraints.  We then use this transformation to obtain QSP and GUS circuits under brick-wall constraint, by reduction to a 2-dimensional grid.
\begin{restatable}{lem}{circuittrans}\label{lem:circuit_trans}
Let $G=(V,E)$ and $G'=(V,E')$ be two graphs with common vertex set $V$, and with edge sets $E\subseteq E'$, $\abs{E'\setminus E} := \bigcup_{i=1}^c E_i$ such that, 
\begin{enumerate}
    \item $E_i \cap E_j = \emptyset$.
    \item  For each $i\in [c]$,  there are vertex disjoint paths $P_{st}$ in $G$ of length at most $c'$, connecting all edges $(v_s,v_t)\in E_i$.
    
\end{enumerate}
If $\mathcal{C}'$ is a circuit of depth $d$ and size $s$ under $G'$ constraint, there exists a circuit $\mathcal{C}$ implementing the same transformation of depth $O(cc'd)$ and size $O(c's)$, under $G$ constraint.
\end{restatable}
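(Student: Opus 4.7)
}
The plan is to transform $\mathcal{C}'$ layer by layer. I organize $\mathcal{C}'$ into $d$ parallel layers, where each layer consists of gates acting on pairwise disjoint pairs of qubits. For each such layer $L$, I partition its 2-qubit gates according to whether their edges lie in $E$ or in $E' \setminus E$: gates on edges in $E$ can be executed directly under the $G$ constraint in depth $1$; gates on edges in $E' \setminus E$ are further partitioned into at most $c$ groups $L \cap E_1, \ldots, L \cap E_c$, according to the classes $E_i$ given by the hypothesis.

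For each group $L \cap E_i$, I will simulate every gate $(v_s, v_t) \in L \cap E_i$ by using the provided path $P_{st}$ in $G$ of length at most $c'$. Specifically, I propagate the state of $v_s$ along $P_{st}$ by a sequence of SWAPs until it reaches a vertex adjacent (in $G$) to $v_t$, apply the original $2$-qubit gate across this edge of $G$, and then undo the SWAPs. Each such simulation has depth and size $O(c')$. The key property I use here is that the paths $\{P_{st} : (v_s, v_t) \in E_i\}$ are vertex-disjoint, so the SWAP chains associated with different edges in the same class $E_i$ operate on disjoint qubits and can be performed in parallel. Therefore all gates in $L \cap E_i$ are simulated together in depth $O(c')$.

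Assembling these pieces, I replace each layer $L$ by a block consisting of (i) the direct execution of the $E$-gates and (ii) the $c$ sequential sub-blocks simulating $L \cap E_1, \ldots, L \cap E_c$, for a total depth of $O(cc')$ per layer. Summing over $d$ layers yields total depth $O(cc'd)$. For size, each gate in $\mathcal{C}'$ gives rise to $O(c')$ SWAP+gate operations in $\mathcal{C}$ (or just $O(1)$ if its edge lies in $E$), so the total size is $O(c's)$.

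The main thing to get right is the justification that the $c$ sub-blocks of SWAP-based simulations can be sequentialized safely, and that within each sub-block parallelism is genuinely available. Safety of sequentialization follows because each simulation restores all intermediate qubits to their pre-simulation content via the inverse SWAP chain, so qubit contents are the same at the start of each sub-block as at the end of the preceding one. Parallelism within a sub-block follows directly from vertex-disjointness of the paths in each $E_i$: no two simultaneously-executed SWAP chains share a qubit, so they commute and can be scheduled into a common depth-$O(c')$ subcircuit. These two observations are the only non-routine steps; everything else is bookkeeping.
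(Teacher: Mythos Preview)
Your proposal is correct and matches the paper's proof essentially step for step: both process $\mathcal{C}'$ layer by layer, split each layer into its $E$-part and the $c$ classes $E_i$, and use the vertex-disjointness of the paths within each $E_i$ to simulate those gates in parallel in depth $O(c')$, giving $O(cc')$ per layer and $O(c')$ size per gate. The only cosmetic difference is that you route via SWAP chains while the paper invokes its CNOT-on-a-path lemma (Lemma~\ref{lem:cnot_path_constraint}), which is the same idea specialized to the CNOT-only gate set used throughout.
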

 
\begin{restatable}{coro}{qspbrickwall}\label{coro:QSP_brickwall}
Any $n$-qubit quantum state can be prepared by a circuit of depth \[O\big(2^{n/3}+\frac{2^{n/2}}{\sqrt{\min\{n_1,n_2\}}}+\frac{2^n}{n+m}\big)\] under $\brickwall^{n_1,n_2,b_1,b_2}_{n+m}$ constraint with $b_1, b_2=O(1)$, using $m\ge 0$ ancillary qubits. 
\end{restatable}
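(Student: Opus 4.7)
The approach is to reduce the brick-wall case to the 2D grid case of Theorem~\ref{thm:QSP_grid_main}(2) via the circuit transformation lemma (Lemma~\ref{lem:circuit_trans}). The intuition is that, since $b_1, b_2 = O(1)$, the brick-wall $\brickwall^{n_1,n_2,b_1,b_2}_{n+m}$ is a ``local perturbation'' of a 2D grid whose dimensions are proportional to $n_1$ and $n_2$ (up to constant factors depending only on $b_1, b_2$). Accordingly, the QSP depth achievable on the associated grid, given by Theorem~\ref{thm:QSP_grid_main}(2), should carry over to the brick-wall with only a constant-factor overhead.

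More concretely, I would first construct a 2D grid graph $G' = (V, E')$ on the same vertex set $V$ as the brick-wall $G = (V, E)$ by adding a bounded number of ``shortcut'' edges per brick: namely, edges needed to fill the interior of each brick into a full $b_1\times b_2$ sub-grid, and a few edges between corresponding vertices in neighbouring bricks, so that after straightening the bricks, the resulting graph is a 2D grid of dimensions $\Theta(n_1) \times \Theta(n_2)$. Up to relabelling (to enforce the convention $n_1'\ge n_2'$ used in Theorem~\ref{thm:QSP_grid_main}), the smaller dimension satisfies $n_2' = \Theta(\min\{n_1,n_2\})$.

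Next, I would verify the hypotheses of Lemma~\ref{lem:circuit_trans}. Every edge in $E'\setminus E$ lies inside a single brick or connects two adjacent bricks, hence is realizable by a path of length at most $O(b_1+b_2) = O(1)$ in the brick-wall $G$. Using the periodic structure of the brick-wall, the edges of $E'\setminus E$ can be partitioned into $c=O(1)$ colour classes so that within each class the supporting paths are pairwise vertex-disjoint; for example, one can assign each such edge a colour determined by the coordinates of the brick it lies in, modulo a constant. With $c,c' = O(1)$, Lemma~\ref{lem:circuit_trans} then implies that any circuit of depth $d$ under $G'$ constraint can be simulated under brick-wall constraint with depth $O(d)$ and size $O(s)$. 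Applying Theorem~\ref{thm:QSP_grid_main}(2) to $G'$ yields a QSP circuit of depth $O\bigl(2^{n/3} + 2^{n/2}/\sqrt{\min\{n_1,n_2\}} + 2^n/(n+m)\bigr)$, and the transformation preserves this bound up to constants.

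The main obstacle is the combinatorial bookkeeping needed to define $G'$ cleanly on the brick-wall's vertex set and to exhibit an explicit colouring of $E'\setminus E$ whose colour classes have vertex-disjoint supporting paths. The brick-wall is somewhat irregular along brick boundaries, so the colouring argument needs a small case analysis at brick corners and shared edges; however, because $b_1, b_2=O(1)$, the number of cases and colours is bounded by a constant independent of $n, n_1, n_2, m$, so no non-trivial estimation is required.
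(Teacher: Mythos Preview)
Your approach is correct and matches the paper's: reduce to the 2D grid QSP bound via Lemma~\ref{lem:circuit_trans} with $c,c'=O(1)$. The paper handles the grid construction slightly differently from what you sketch: rather than building a grid on the full brick-wall vertex set, it first contracts the degree-$2$ interior vertices on each vertical brick edge into a single edge (at an $O(b_1)$ overhead per CNOT), after which the remaining vertices form an exact $(n_1{+}1)\times(n_2(b_2{-}1){+}1)$ grid once the missing vertical edges are added; the disjoint-path hypothesis of Lemma~\ref{lem:circuit_trans} is then verified by $4$-coloring the bricks and taking at most one added edge per brick of each color.
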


\begin{restatable}{coro}{usbrickwall}\label{coro:US_brickwall}
Any $n$-qubit unitary matrix can be implemented by a circuit of depth \[O\big(4^{2n/3}+\frac{4^{3n/4}}{\sqrt{\min\{n_1,n_2\}}}+\frac{4^n}{n+m}\big)\]
under $\brickwall^{n_1,n_2,b_1,b_2}_{n+m}$ constraint with $b_1, b_2=O(1)$, using $m\ge 0$ ancillary qubits.
\end{restatable}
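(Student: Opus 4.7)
The plan is to reduce the brick-wall case to the $2$-dimensional grid case via Lemma~\ref{lem:circuit_trans}, and then invoke Theorem~\ref{thm:US_path_grid} (case~2). Fixing $G = \brickwall^{n_1,n_2,b_1,b_2}_{n+m}$, I would define an auxiliary graph $G' = (V, E')$ on the same vertex set whose edge set $E'$ is obtained by adding to $E(G)$ all the vertical edges missing from the brick pattern, so that $G'$ becomes (after a simple relabelling) a $2$-dimensional grid $\Grid^{n_1', n_2'}_{n+m}$ with $n_1' = \Theta(n_1)$ and $n_2' = \Theta(n_2)$ (the hidden constants depending only on $b_1, b_2 = O(1)$). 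With this choice, $G$ is a spanning subgraph of $G'$ and $\sqrt{\min\{n_1', n_2'\}} = \Theta(\sqrt{\min\{n_1, n_2\}})$.

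Next, I would verify the two hypotheses of Lemma~\ref{lem:circuit_trans} with $c, c' = O(1)$. Each edge in $E' \setminus E$ straddles a brick boundary, and because $b_1, b_2 = O(1)$ it admits an explicit detour in $G$ of constant length, routed through a few adjacent horizontal and vertical brick edges. I would then color the missing edges with a constant number of colors by exploiting the doubly periodic structure of the brick-wall: choosing a period larger than the detour length in both directions ensures that, within each color class, the chosen detour paths use pairwise vertex-disjoint sets of interior vertices, so the vertex-disjointness condition holds. The disjointness of the color classes themselves is automatic from the coloring.

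Finally, by Theorem~\ref{thm:US_path_grid} (case~2), any $n$-qubit unitary can be realized under $\Grid^{n_1', n_2'}_{n+m}$ constraint using $m \ge 0$ ancillary qubits in depth
\[
O\Big(4^{2n/3} + \frac{4^{3n/4}}{\sqrt{n_2'}} + \frac{4^n}{n+m}\Big).
\]
Applying Lemma~\ref{lem:circuit_trans} with $c, c' = O(1)$ preserves this depth up to a constant factor and yields a circuit under $G$ constraint matching the stated bound, after substituting $\sqrt{n_2'} = \Theta(\sqrt{\min\{n_1, n_2\}})$. The main obstacle is the combinatorial task in the second paragraph: exhibiting an explicit $O(1)$-colored family of constant-length detour paths that is vertex-disjoint within each color class. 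This is elementary given the periodicity of the brick-wall but does require a careful check of how bricks in successive layers are offset relative to one another, with the exact constants absorbing the mild dependence on $b_1$ and $b_2$.
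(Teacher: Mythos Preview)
Your strategy---reduce the brick-wall to a $2$-dimensional grid via Lemma~\ref{lem:circuit_trans} and then invoke Theorem~\ref{thm:US_path_grid} (case~2)---is exactly the paper's approach, and your periodicity-based coloring is essentially the paper's four-brick-color argument (followed by a further split into $b_2-2$ subclasses, one per position inside a brick).

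There is, however, one genuine gap. When $b_1>2$, each vertical brick edge carries $b_1-2$ intermediate vertices (the ``red nodes'' in the paper's Figure~\ref{fig:brickwall_grid}). Because Lemma~\ref{lem:circuit_trans} requires $G$ and $G'$ to share the \emph{same} vertex set, and these intermediate vertices sit at only a sparse set of column positions, no amount of edge-addition can turn the brick-wall into a $2$-dimensional grid: the vertex count simply does not factor as $n_1'\times n_2'$, so your claim that ``$G'$ becomes (after a simple relabelling) a $2$-dimensional grid'' fails for $b_1>2$. The paper resolves this with a preliminary step: it first \emph{contracts} each vertical path of $b_1$ vertices to a single edge, observing (via Lemma~\ref{lem:cnot_path_constraint}) that a CNOT across that contracted edge can be simulated on the original brick-wall with $O(b_1)$ overhead. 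Only on this reduced graph---which now has $(n_1+1)\times(n_2(b_2-1)+1)=\Theta(n+m)$ vertices---does the edge-addition yield a genuine grid, and Lemma~\ref{lem:circuit_trans} applies. The final depth then picks up an extra harmless $O(b_1)$ factor from undoing the contraction. This fix is elementary, but it is a necessary step that your proposal omits.
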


\section{Circuit size and depth lower bounds under graph constraints}
\label{sec:QSP_US_lowerbound}
In this section, we give circuit depth and size lower bounds for QSP and GUS under graph constraints. Omitted proofs are given in Appendix \ref{append:QSP_US_lowerbound}.
\subsection{Circuit size and depth lower bounds under general graph constraints}
We first present circuit size and depth lower bounds for QSP and GUS under general graph constraints.

\begin{theorem}\label{thm:size_lowerbound_QSP_graph}
For an arbitrary graph $G=(V,E)$, there exist $n$-qubit states which can only be prepared by quantum circuits of size at least $\Omega(2^n)$ under $G$ constraint.
\end{theorem}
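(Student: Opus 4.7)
The plan is a parameter-counting (dimension) argument: circuits of size $s = o(2^n)$ can reach only a measure-zero subset of the $n$-qubit pure-state manifold, so a generic state must require size $\Omega(2^n)$.

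First, I would enumerate the discrete topologies of a size-$s$ circuit under $G$ constraint. Each of the $s$ gate slots is either a single-qubit gate on one of the $n$ qubits or a $\mathsf{CNOT}$ along one of $|E|\le\binom{n}{2}$ edges, giving at most $(n+|E|)^s\le n^{2s}$ topologies---a finite number, which is all I need. Having fixed a topology $T$, the circuit is determined by the continuous parameters of its single-qubit gates; since $\mathrm{U}(2)$ is a four-dimensional Lie group and a global phase is irrelevant on the state, each such gate contributes at most $3$ real parameters (e.g.\ Euler angles) while each $\mathsf{CNOT}$ contributes none. Thus $T$ depends on at most $3s$ real parameters, and applying the resulting unitary to the fixed initial state $\ket{0^n}$ produces a smooth map
\[
    \phi_T : \mathbb{R}^{3s} \longrightarrow \mathcal S,\qquad \mathcal S \cong \mathbb{CP}^{2^n-1}.
\]

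Next, I would invoke the standard fact that the image of a smooth map from $\mathbb R^{k}$ into a smooth manifold has Hausdorff (and hence Lebesgue) dimension at most $k$. Since $\dim_{\mathbb R}\mathcal S = 2^{n+1}-2$, whenever $3s < 2^{n+1}-2$ the set $\phi_T(\mathbb R^{3s})$ has measure zero in $\mathcal S$ with respect to the unitarily-invariant Fubini--Study measure. The set of states preparable by \emph{any} size-$s$ circuit under $G$ constraint is the finite union $\bigcup_T \phi_T(\mathbb R^{3s})$, which is itself a finite union of measure-zero sets and therefore also has measure zero. Its complement in $\mathcal S$ is non-empty (in fact of full measure), exhibiting states that no circuit of size $s < (2^{n+1}-2)/3 = \Omega(2^n)$ can prepare, as required.

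The main obstacle is really only technical: one must verify that the elementary gate set fixed in Section~\ref{sec:preliminaries} (arbitrary single-qubit gates together with $\mathsf{CNOT}$) carries $O(1)$ continuous parameters per gate, and carefully apply the ``dimension does not increase under smooth maps'' lemma (or, if one prefers, Sard's theorem). Notably, no graph-theoretic property of $G$ beyond the trivial bound $|E|\le\binom{n}{2}$ enters the argument, which matches the fact that the theorem is stated for an \emph{arbitrary} constraint graph. The argument also adapts in a routine way to circuits with $m$ ancillary qubits, provided one tracks that the output is required to lie in $\mathcal S\otimes\ket{0^m}$.
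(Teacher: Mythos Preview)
Your argument is correct. The paper takes a shorter route: it cites the known $\Omega(2^n)$ size lower bound for QSP in the \emph{unconstrained} setting (Lemma~\ref{lem:lowerbound_size_previous}) and then observes that any circuit satisfying a $G$ constraint is, a fortiori, a valid unconstrained circuit, so the same state that witnesses the unconstrained lower bound also witnesses it under every $G$. You instead redo the parameter-counting dimension argument directly for the $G$-constrained gate set, making explicit that $G$ affects only the finite enumeration of discrete circuit topologies and contributes nothing to the continuous parameter dimension. The underlying mechanism---parameter counting against $\dim_{\mathbb R}\mathbb{CP}^{2^n-1}=2^{n+1}-2$---is the same (it is also how the cited unconstrained bound is proved), but the paper packages it as a one-line monotonicity reduction to a black-box lemma, whereas your version is self-contained and shows transparently why the identity of $G$ is irrelevant to the size bound.
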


\begin{theorem}\label{thm:size_lowerbound_US_graph}
For an arbitrary graph $G=(V,E)$, there exist $n$-qubit unitaries which can be only prepared by quantum circuits of size at least $\Omega(4^n)$ under $G$ constraint.
\end{theorem}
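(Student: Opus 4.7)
The key observation is that any circuit valid under graph $G$ constraint is \emph{a fortiori} a valid unconstrained circuit of the same size. Hence any size lower bound proved in the unconstrained (all-to-all) setting transfers immediately, and the task reduces to exhibiting $n$-qubit unitaries that require $\Omega(4^n)$ gates even with no connectivity restrictions. This is precisely the classical size lower bound of Shende et al.\ \cite{shende2004minimal}, so in principle the proof could consist of a single sentence invoking that result.

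For completeness, my plan is to reprove this via a dimension-counting argument that works uniformly with or without ancilla. Fix any circuit structure on $n+m$ qubits with $N$ gates from the standard gate set: each $\Cnot$ contributes no continuous parameters, and each single-qubit gate contributes $O(1)$ real rotation angles, so the structure carries at most $O(N)$ real parameters. The map taking these parameters to the effective $n$-qubit transformation (restrict to inputs $\ket{x}\ket{0^m}$, compose the gates, project back onto the first $n$ qubits) is polynomial in the gate entries, hence smooth, so its image in $U(2^n)$ is a smooth subset of real dimension at most $O(N)$. The number of distinct circuit structures of size $N$ on $n+m$ qubits is at most $\mathrm{poly}(n+m)^N$, so the set of $n$-qubit unitaries realizable by size-$N$ circuits is a finite union of such smooth subsets, of total real dimension $O(N)$. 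Since $U(2^n)$ has real dimension $4^n$, this union has measure zero whenever $N = o(4^n)$. Consequently, a generic (indeed, almost every) $n$-qubit unitary requires $N = \Omega(4^n)$ gates, and in particular such unitaries exist.

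The main (mild) obstacle is ensuring the dimension count is correct in the ancilla-enabled setting: one must verify that requiring the ancilla to return to $\ket{0^m}$ does not somehow inflate the realizable set. This is immediate because the effective $n$-qubit action is obtained by a smooth polynomial operation on the $(n+m)$-qubit unitary, which itself depends polynomially (hence smoothly) on the $O(N)$ gate parameters, so the image dimension can only decrease under such restriction and composition. Apart from this, the graph $G$ plays no role in the argument beyond restricting which circuit structures are allowed, which can only further shrink the realizable set.
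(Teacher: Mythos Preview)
Your proposal is correct and follows essentially the same approach as the paper: both argue that a circuit under $G$ constraint is in particular an unconstrained circuit of the same size, so the $\Omega(4^n)$ size lower bound of Shende et al.\ \cite{shende2004minimal} transfers immediately. The paper's proof is a one-line invocation of that result (via Lemma~\ref{lem:lowerbound_size_previous}), whereas you additionally spell out the dimension-counting argument and extend it to the ancilla setting; this is a harmless elaboration but not a different route.
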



To prove these lower bounds, we first associate a quantum circuit with a directed graph. (See Fig.~\ref{fig:example_directed_graph} for an example.)
\begin{definition}[Directed graphs for quantum circuits]\label{def:circuit-digraph_main}
Let $C$ be a quantum circuit on $n$ input and $m$ ancillary qubits consisting of $d$ depth-1 layers, with odd layers consisting only of single-qubit gates, even layers consisting only of CNOT gates, and any two (non-identity) single-qubit gates acting on the same qubit must be separated by at least one CNOT gate acting on that qubit.  Let $L_1,L_2,\cdots,L_d$ denote the $d$ layers of this circuit, i.e., $C=L_dL_{d-1}\cdots L_1$.  Define the directed graph $H=(V_C,E_C)$ associated with $C$ as follows. 
\begin{enumerate}
    \item Vertex set $V_C$: For each $i\in[d+1]$, define $S_{i}:=\{v_i^j:j\in[n+m]\}$, where $v_i^j$ is a label corresponding to the $j$-th qubit. Then, $V_C:=\bigcup_{i=1}^{d+1} S_i$. 
    \item Edge set $E_C$: For all $i\in [d]$:
    \begin{enumerate}
        \item 
        If there is a single-qubit gate acting on the $j$-th qubit in layer $L_i$ then, for all $i \le i' \le d$ there exists a directed edge $(v_{i'+1}^j,v_{i'}^j)$.
        
        \item If there is a CNOT gate acting on qubits $j_1$ and $j_2$ in layer $L_i$, then there exist $4$ directed edges $(v_{i+1}^{j_1},v_i^{j_1})$, $(v_{i+1}^{j_2}, v_i^{j_1})$, $(v_{i+1}^{j_1},v_i^{j_2})$ and $(v_{i+1}^{j_2},v_i^{j_2})$.
    \end{enumerate}
  Note that edges are directed from $S_{i+1}$ to $S_i$. 
\end{enumerate}
\end{definition}

\begin{figure}[!t]
\centering
\subfloat[]{
\begin{minipage}[]{0.4\textwidth}
\centerline{
\Qcircuit @C=0.8em @R=0.8em {
\lstick{\scriptstyle\ket{x_1}}&\gate{ } &\ctrl{1} & \qw & \qw & \gate{ } &\ctrl{1} &\gate{ } & \qw\\
\lstick{\scriptstyle\ket{x_2}}&\gate{ } & \targ & \gate{ } &\ctrl{1} & \gate{ } &\targ &\gate{ }& \qw\\
\lstick{\scriptstyle\ket{x_3}}&\gate{ } &\ctrl{1} &\gate{ } &\targ &\qw &\qw &\gate{ }& \qw\\
\lstick{\scriptstyle\ket{0}}&\gate{ } & \targ & \qw &\qw  &\gate{ } &\ctrl{1} &\gate{ }& \qw &\rstick{\scriptstyle \ket{0}}\\
\lstick{\scriptstyle\ket{0}}&\gate{ } &\ctrl{1} &\gate{ } &\ctrl{1} & \gate{ }&\targ &\gate{ }& \qw &\rstick{\scriptstyle \ket{0}}\\
\lstick{\scriptstyle\ket{0}}&\gate{ } &\targ &\gate{ } & \targ &\qw &\qw &\gate{ }& \qw &\rstick{\scriptstyle \ket{0}}\\
&{\scriptstyle L_1} & {\scriptstyle L_2} & {\scriptstyle L_3}  & {\scriptstyle L_4}  & {\scriptstyle L_5} & {\scriptstyle L_6}  & {\scriptstyle L_7} & \\
&&&&&&&&\\
}}%
\end{minipage}
}
\hfil
\subfloat[
]{ \begin{minipage}[]{0.4\textwidth}
    \begin{tikzpicture}
     \centering
     \draw [->] (3.5,1)--(3,1); \draw [->] (3.5,0.5)--(3,0.5); \draw [->] (3.5,0)--(3,0); \draw [->](3.5,-0.5)--(3,-0.5); \draw [->] (3.5,-1)--(3,-1);\draw [->] (3.5,-1.5)--(3,-1.5);
     \draw [->] (3,1)--(2.5,1); \draw [->] (3,1)--(2.5,0.5); \draw [->] (3,0.5)--(2.5,1);\draw [->] (3,0.5)--(2.5,0.5);\draw [->] (3,-0.5)--(2.5,-0.5);\draw [->] (3,-0.5)--(2.5,-1);\draw [->] (3,-1)--(2.5,-0.5);\draw [->] (3,-1)--(2.5,-1);
     \draw [->] (2.5,1)--(2,1); \draw [->] (2.5,0.5)--(2,0.5);\draw [->] (2.5,-0.5)--(2,-0.5);\draw [->] (2.5,-1)--(2,-1);
     \draw [->] (2,0.5)--(1.5,0.5); \draw [->](2,0)--(1.5,0);\draw [->] (2,0.5)--(1.5,0);\draw [->] (2,0)--(1.5,0.5);\draw [->] (2,-1)--(1.5,-1);\draw [->] (2,-1.5)--(1.5,-1.5);\draw [->] (2,-1)--(1.5,-1.5); \draw [->] (2,-1.5)--(1.5,-1);
     \draw [->] (1.5,0.5)--(1,0.5); \draw [->] (1.5,0)--(1,0); \draw [->] (1.5,-1)--(1,-1); \draw [->] (1.5,-1.5)--(1,-1.5);
     \draw [->] (1,1)--(0.5,1); \draw [->](1,0.5)--(0.5,0.5); \draw [->] (1,1)--(0.5,0.5); \draw [->] (1,0.5)--(0.5,1); \draw [->] (1,0)--(0.5,0); \draw [->] (1,-0.5)--(0.5,-0.5); \draw [->] (1,0)--(0.5,-0.5);\draw [->] (1,-0.5)--(0.5,0);\draw [->] (1,-1)--(0.5,-1);\draw [->] (1,-1.5)--(0.5,-1.5); \draw [->](1,-1)--(0.5,-1.5);\draw [->] (1,-1.5)--(0.5,-1);
     \draw [->] (0.5,1)--(0,1); \draw [->] (0.5,0.5)--(0,0.5);\draw [->] (0.5,0)--(0,0);\draw [->] (0.5,-0.5)--(0,-0.5);\draw [->] (0.5,-1)--(0,-1);\draw [->] (0.5,-1.5)--(0,-1.5);
     \draw (0,-2) node{\scriptsize $S_1$} (0.5,-2) node{\scriptsize $S_2$} (1,-2) node{\scriptsize $S_3$} (1.5,-2) node{\scriptsize $S_4$} (2,-2) node{\scriptsize $S_5$} (2.5,-2) node{\scriptsize $S_6$} (3,-2) node{\scriptsize $S_7$} (3.5,-2) node{\scriptsize $S_8$};
     \draw  (-0.1,1.1)--(0.1,1.1)--(0.1,-0.6)--(-0.1,-0.6)--cycle (0.4,1.1)--(0.6,1.1)--(0.6,-0.6)--(0.4,-0.6)--cycle (0.9,0.6)--(1.1,0.6)--(1.1,-0.1)--(0.9,-0.1)--cycle (1.4,0.6)--(1.6,0.6)--(1.6,-0.1)--(1.4,-0.1)--cycle
     (1.9,1.1)--(2.1,1.1)--(2.1,0.4)--(1.9,0.4)--cycle
     (2.4,1.1)--(2.6,1.1)--(2.6,0.4)--(2.4,0.4)--cycle
     (2.9,1.1)--(3.1,1.1)--(3.1,-0.1)--(2.9,-0.1)--cycle
     (3.4,1.1)--(3.6,1.1)--(3.6,-0.1)--(3.4,-0.1)--cycle;
     
     \draw[->,densely dotted,thick] (2.5,-1.5)--(2,-1.5);
     \draw[->,densely dotted,thick] (3,-1.5)--(2.5,-1.5);
     \draw[->,densely dotted,thick] (3,0)--(2.5,-0);
     \draw[->,densely dotted,thick] (2.5,0)--(2,-0);
     \draw[->,densely dotted,thick] (2,-0.5)--(1.5,-0.5);
     \draw[->,densely dotted,thick] (1.5,-0.5)--(1,-0.5);
     \draw[->,densely dotted,thick] (2,1)--(1.5,1);
     \draw[->,densely dotted,thick] (1.5,1)--(1,1);
      \draw [fill=black] (0,0) circle (0.05) (0.5,0) circle (0.05) (1,0) circle (0.05) (1.5,0) circle (0.05) (2,0) circle (0.05) (2.5,0) circle (0.05) (3,0) circle (0.05) (3.5,0) circle (0.05);
      \draw [fill=black] (0,0.5) circle (0.05) (0.5,0.5) circle (0.05) (1,0.5) circle (0.05) (1.5,0.5) circle (0.05) (2,0.5) circle (0.05) (2.5,0.5) circle (0.05) (3,0.5) circle (0.05) (3.5,0.5) circle (0.05);
     \draw [fill=black] (0,1) circle (0.05) (0.5,1) circle (0.05) (1,1) circle (0.05) (1.5,1) circle (0.05) (2,1) circle (0.05) (2.5,1) circle (0.05) (3,1) circle (0.05) (3.5,1) circle (0.05);
      \draw [fill=white] (0,-0.5) circle (0.05) (0.5,-0.5) circle (0.05) (1,-0.5) circle (0.05) (1.5,-0.5) circle (0.05) (2,-0.5) circle (0.05) (2.5,-0.5) circle (0.05) (3,-0.5) circle (0.05) (3.5,-0.5) circle (0.05);
     \draw [fill=white] (0,-1) circle (0.05) (0.5,-1) circle (0.05) (1,-1) circle (0.05) (1.5,-1) circle (0.05) (2,-1) circle (0.05) (2.5,-1) circle (0.05) (3,-1) circle (0.05) (3.5,-1) circle (0.05);
     \draw [fill=white] (0,-1.5) circle (0.05) (0.5,-1.5) circle (0.05) (1,-1.5) circle (0.05) (1.5,-1.5) circle (0.05) (2,-1.5) circle (0.05) (2.5,-1.5) circle (0.05) (3,-1.5) circle (0.05) (3.5,-1.5) circle (0.05);
     
    \draw (0,1.5) node{\scriptsize $S'_1$} (0.5,1.5) node{\scriptsize $S'_2$} (1,1.5) node{\scriptsize $S'_3$} (1.5,1.5) node{\scriptsize $S'_4$} (2,1.5) node{\scriptsize $S'_5$} (2.5,1.5) node{\scriptsize $S'_6$} (3,1.5) node{\scriptsize $S'_7$} (3.5,1.5) node{\scriptsize $S'_8$};
    \end{tikzpicture}
    \end{minipage}}
\caption{A quantum circuit $C$ and its corresponding directed graph $H=(V_C, E_C)$ with all arrows pointing from right to left. (a) A depth $d=7$ circuit $C=L_7L_6\cdots L_1$ on $n=3$ input and $m=3$ ancillary qubits. (b) The directed graph corresponding to $C$: $n+m$ vertices in  each layer $S_i$, for all $1\le i\le d+1$, with black (white) vertices corresponding to input (ancillary) qubits. $S'_i\subseteq S_i$ (boxes) denotes the reachable subset in layer $i$ in $H$ (see Def.~\ref{def:reachable_main}). A dotted arrow indicates no gate operation on the qubit in that layer.
}
\label{fig:example_directed_graph}
\end{figure}
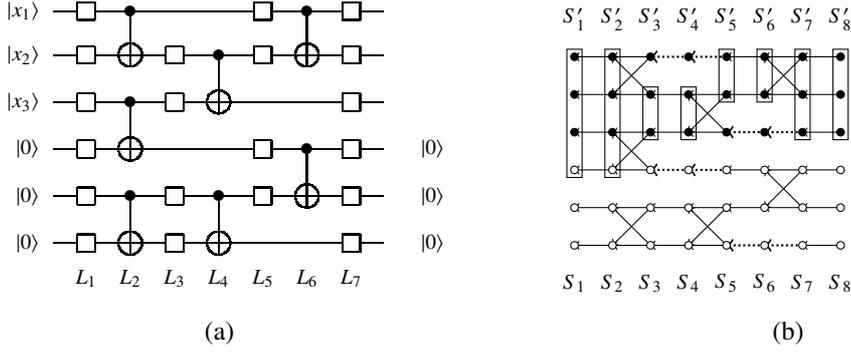

We then define reachable subsets in the directed graph.
\begin{definition}[Reachable subsets]\label{def:reachable_main} Let $H=(V_C,E_C)$ be the directed graph associated with quantum circuit $C$ of depth $d$, with vertex set $V_C = \bigcup_{i=1}^{d+1} S_i$.  For each $i\in[d+1]$ define the reachable subsets $S'_{i}$ of $H$ as follows:
\begin{itemize}
    \item $S'_{d+1} = \{v^j_{d+1} : j\in[n]\}$, i.e., the subset of $n$ vertices in $S_{d+1}$ corresponding to the $n$ input qubits.
    \item For $i\in[d]$, $S'_{i}\subseteq S_i$ is the subset of vertices $v^j_i$ in $S_i$ which are (i) reachable by a directed path from vertices in $S'_{d+1}$, and (ii) there is a quantum gate acting on qubit $j$ in circuit layer $L_{i}$. 
\end{itemize}
\end{definition}

Using reachable subsets, we finally present circuit depth lower bounds under arbitrary graph constraints.
\begin{restatable}{theorem}{reachablesetbounds}\label{thm:reachable-set-bounds}
Let $H=(V_C,E_C)$ be the directed graph associated with quantum circuit $C$, of depth $d$, acting on $n$ input and $m$ ancillary qubits. Let $S'_1, \ldots S'_{d+1}$ be the reachable subsets of $H$. 
\begin{enumerate}
    \item If $C$ is a circuit for any $n$-qubit quantum state, then $O(\sum_{i=1}^d |S_i'|)\ge 2^{n}-1$;
    \item If $C$ is a circuit for any $n$-qubit diagonal unitary matrix, then $O(\sum_{i=1}^d |S_i'|)\ge 2^{n}-1$;
    \item If $C$ is a circuit for any $n$-qubit general unitary matrix, then $O(\sum_{i=1}^d |S_i'|)\ge 4^{n}-1$.
\end{enumerate}
\end{restatable}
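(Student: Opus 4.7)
My plan is to combine a causality (light-cone) argument with a dimension-counting argument. Given any circuit $C$ that prepares a particular $n$-qubit state (or implements a particular diagonal unitary, or a particular general unitary) on its input register, I will argue that $C$'s expressive power is controlled by $\sum_i |S_i'|$, and then apply a dimension lower bound against a generic target.

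I first show that the reduced state on the input register at the output of $C$ (respectively, the induced unitary on the input register, in the two unitary cases) depends only on the single-qubit gates whose positions lie in $\bigcup_i S_i'$. This follows from a Heisenberg-picture computation: for any observable $O$ on the input register, expand $\langle 0^{n+m}|C^{\dagger}(O\otimes I)C|0^{n+m}\rangle$ by back-evolving $O\otimes I$ through the layers of $C$. By induction on $i$ from $i=d$ down to $1$, the Heisenberg support of the evolved observable at time $i$ is contained in $\{j:v_i^j\in S_i'\}$. The persistent edges attached to single-qubit gates in Definition~\ref{def:circuit-digraph_main} are precisely what ensures this containment survives through layers in which the relevant qubit receives no gate. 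Consequently, altering any single-qubit gate at a position outside $\bigcup_i S_i'$ leaves the output on the input qubits unchanged.

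Next, fix the discrete structure of $C$ (positions and types of gates) and regard each single-qubit gate as being parameterized by at most three real numbers. Let $\Phi$ be the smooth map sending the parameter vector to the implemented object---prepared state for Part~1, diagonal unitary for Part~2, or general unitary on the input qubits for Part~3. By the causality step, $\Phi$ depends only on parameters of single-qubit gates at reachable positions, so it factors through a Euclidean space of dimension at most $3\sum_i |S_i'|$. The image of $\Phi$ therefore has real dimension at most $3\sum_i |S_i'|$, and even taking the union over all discrete structures with $\sum_i|S_i'|\le K$ gives a set of dimension at most $O(K)$. The target manifolds of the three parts have real dimensions $2^{n+1}-2$, $2^n-1$, and $4^n-1$ respectively, so by Sard's theorem such a union has Lebesgue measure zero in its target whenever $K$ is below a constant times the target dimension. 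Since a generic target must therefore require $K\ge \Omega(2^n)$ in Parts~1,~2 and $K\ge \Omega(4^n)$ in Part~3, this yields the claimed lower bounds on $\sum_i |S_i'|$.

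The main obstacle is the bookkeeping in the causality step: one has to verify meticulously that the backward graph-reachability in Definition~\ref{def:circuit-digraph_main}, including the persistent-edge convention for single-qubit gates, really coincides with the Heisenberg support at every layer, especially when tracking a qubit through alternating single-qubit and CNOT layers in which it may or may not be acted on. A secondary subtlety in Part~3 is that $\Phi$ produces a partial isometry $(\mathbb{C}^2)^{\otimes n}\to(\mathbb{C}^2)^{\otimes(n+m)}$ rather than a unitary on the input register directly; using the assumption that the ancillas return to $|0^m\rangle$ on every input, one composes with projection onto the ancilla-zero subspace to recover an $n$-qubit unitary while preserving the $4^n-1$ dimension count.
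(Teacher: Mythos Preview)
Your proposal is correct and takes essentially the same approach as the paper: a light-cone argument showing that only gates at reachable positions affect the output on the input register, followed by a dimension (parameter) count against the target manifold. The only cosmetic difference is that the paper carries out the light-cone step in the Schr\"odinger picture---left-multiplying by the inverses $L_{\overline V_i}^\dagger$ of the gates outside the cone and using $\overline V_1\subseteq\cdots\subseteq\overline V_d\subseteq A$ to conclude the input register is unaffected---whereas you do the dual Heisenberg computation; your flagged bookkeeping concern about matching graph reachability to operator support via the persistent-edge convention is exactly the content of the paper's tensor decomposition $L_i=L_{V_i}\otimes L_{\overline V_i}$.
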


By constructing reachable subsets for the constraint graph, we present circuit depth lower bounds based on Theorem \ref{thm:reachable-set-bounds}.
\begin{theorem}\label{thm:depth_lowerbound_QSP_graph}
    Let $G=(V,E)$ denote an arbitrary connected graph with $n+m$ vertices, with $\nu$ the size of a maximum matching of $G$. There exist $n$-qubit quantum states which can only be prepared by circuits of depth at least $\Omega\left(\max\{n,2^n/\nu\}\right)$  under $G$ constraint, using $m\ge 0$ ancillary qubits.
\end{theorem}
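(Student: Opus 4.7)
The plan is to combine the reachable-set bound of Theorem~\ref{thm:reachable-set-bounds} with a per-layer accounting that exploits the matching number $\nu$ of $G$. First I would normalize any given QSP circuit $C$ into the alternating form required by Definition~\ref{def:circuit-digraph_main}: alternating single-qubit and CNOT layers, with any two single-qubit gates on the same qubit separated by at least one CNOT on that qubit. Merging adjacent single-qubit gates on the same qubit and padding with identities achieves this at the cost of at most a constant factor in depth, so any lower bound on the normalized circuit transfers back. Applying Theorem~\ref{thm:reachable-set-bounds}(1) then produces an $n$-qubit hard state $\ket{\psi}$ such that every depth-$d$ QSP circuit for $\ket{\psi}$ satisfies $\sum_{i=1}^d |S'_i| \ge \Omega(2^n-1)$.

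Next I would upper-bound $\sum_{i=1}^d |S'_i|$ in terms of $\nu$ and $d$. In any CNOT layer, the parallel CNOTs must act on pairwise disjoint edges of $G$, so they form a matching and the layer carries at most $\nu$ CNOTs, contributing $|S'_i| \le 2\nu$. For single-qubit layers I would amortize against CNOT layers using the separation property of Definition~\ref{def:circuit-digraph_main}: if $c_j$ denotes the total number of CNOTs touching qubit $j$ and $s_j$ the total number of single-qubit gates on qubit $j$, then $s_j \le c_j+1$, whence $\sum_j s_j \le 2M+(n+m)$, where $M$ is the total CNOT count. Combining the two counts gives
\[
\sum_{i=1}^d |S'_i| \;\le\; 2M \,+\, \sum_j s_j \;\le\; 4M + (n+m) \;\le\; 4\nu d + (n+m).
\]
Pairing this with Theorem~\ref{thm:reachable-set-bounds}(1) yields $4\nu d+(n+m)=\Omega(2^n)$, and hence $d=\Omega(2^n/\nu)$ whenever $n+m\le 2^{n-1}$.

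For the residual $\Omega(n)$ term, which is the binding bound only when $m$ is so large that $\nu$ is comparable to $2^n/n$, I would appeal to the unconditional QSP depth lower bound $\Omega(n)$ from \cite{sun2021asymptotically}, which already holds with no connectivity restriction and therefore a fortiori under the $G$ constraint. Taking the maximum of the two bounds yields the claimed $\Omega(\max\{n,\,2^n/\nu\})$.

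I expect the main subtlety to be the single-qubit amortization. Without the separation rule, a single-qubit layer could contribute up to $n+m$ to $\sum_i |S'_i|$, recovering only the unrestricted bound $\Omega(2^n/(n+m))$ and losing the dependence on $\nu$; it is precisely the separation property built into Definition~\ref{def:circuit-digraph_main}, paid for by the constant-factor depth blow-up during normalization, that upgrades the denominator from $n+m$ to $\nu$ and lets the bound track the sparsity of the constraint graph.
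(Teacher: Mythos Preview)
Your overall strategy---normalize, invoke Theorem~\ref{thm:reachable-set-bounds}, bound $\sum_i|S'_i|$ via the matching number, combine with the unconditional $\Omega(n)$---is the same as the paper's. The gap is in the case split. Your amortization yields $\sum_i|S'_i|\le 4\nu d+(n+m)$, and you then assert that the regime $n+m>2^{n-1}$ is harmless because there ``$\nu$ is comparable to $2^n/n$'' and the $\Omega(n)$ bound takes over. That implication is false: a star on $n+m=2^n$ vertices has $\nu=1$, and the theorem demands $d=\Omega(2^n)$, whereas your argument produces only $d=\Omega(n)$ there. The additive $(n+m)$ is the culprit: you summed $s_j\le c_j+1$ over \emph{all} qubits, but the ``$+1$'' for an ancilla that never enters the backward lightcone contributes nothing to $\sum_i|S'_i|$.

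The fix is easy and stays within your amortization framework. Only qubits that are ever reachable contribute to any $S'_i$, and an ancilla can become reachable only through a CNOT, so every reachable ancilla $j$ has $c_j\ge 1$. Hence the number of reachable qubits is at most $n+\sum_j c_j = n+2M$, and restricting your sum to reachable qubits gives
\[
\sum_{i=1}^d |S'_i|\;\le\;2M+\sum_{\text{reachable }j}s_j\;\le\;2M+\big(2M+(n+2M)\big)\;=\;6M+n\;\le\;6\nu d+n,
\]
from which $d=\Omega(2^n/\nu)$ follows unconditionally (since $n=o(2^n)$). The paper avoids the issue differently: it uses a slightly stronger normalization in which every single-qubit gate in an odd layer $i\ge 3$ is immediately preceded by a CNOT on the same qubit in layer $i-1$, which gives a per-layer bound $|S'_i|\le 2\nu$ directly; combining this with the lightcone doubling bound $|S'_i|\le 2^{d-i+1}n$ yields $\sum_i|S'_i|=O(\nu d)$ without any additive $(n+m)$ term.
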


\begin{theorem}\label{thm:depth_lowerbound_US_graph}
Let $G=(V,E)$ denote an arbitrary connected graph with $n+m$ vertices, with $\nu$ the size of a maximum matching of $G$. There exist $n$-qubit unitaries which can only be implemented by circuits of depth at least $\Omega\left(\max\{n,4^n/\nu\}\right)$  under $G$ constraint, using $m\ge 0$ ancillary qubits.
\end{theorem}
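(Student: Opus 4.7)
The plan is to combine Theorem~\ref{thm:reachable-set-bounds}(3), which exhibits a generic $n$-qubit unitary $U$ such that any circuit implementing $U$ (with any number of ancillae) satisfies $\sum_{i=1}^d |S'_i| = \Omega(4^n - 1)$, with an upper bound $\sum_{i=1}^d |S'_i| = O(d\nu)$ coming from the matching constraint on parallel CNOT gates under $G$ constraint. Together these force $d = \Omega(4^n/\nu)$. The $\Omega(n)$ term in the bound is a baseline: since $|S'_i| \le n+m$ trivially, we already have $d \ge \Omega(4^n/(n+m))$, which is $\Omega(n)$ in every regime where $m \le O(4^n/n)$; for the exponential-ancilla regime a standard lightcone argument fills the gap.

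The key step is the upper bound $\sum_{i=1}^d |S'_i| = O(d\nu)$, which I would derive by counting single-qubit and CNOT gates separately. Under $G$ constraint, the CNOTs in any single layer must form a matching in $G$, so each layer contains at most $\nu$ CNOTs; hence the circuit has at most $d\nu$ CNOTs in total, each contributing at most $2$ to $\sum_i |S'_i|$. For single-qubit gates, first normalize by merging any two single-qubit gates that are consecutive in the temporally ordered gate-sequence on a common qubit $j$ into a single one-qubit gate---this preserves the computed unitary and does not increase circuit depth. After this merging, the sequence of gates on each qubit $j$ strictly alternates between single-qubit gates and CNOTs. Any qubit $j$ whose gate-sequence contains no CNOT at all decouples from the remainder of the circuit, so the overall unitary factorizes as a tensor product with a single-qubit gate on $j$; for a generic $n$-qubit unitary this is impossible for $j\in[n]$, and any such decoupled ancilla ($j > n$) can simply be discarded.

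Therefore in the normalized circuit every surviving single-qubit gate is adjacent, in its qubit's gate-sequence, to at least one CNOT on the same qubit. Charge each single-qubit gate to one such adjacent CNOT. Since each CNOT acts on exactly $2$ qubits and, on each of those qubits, has at most one immediately-preceding and one immediately-following single-qubit gate in the qubit's sequence, each CNOT receives at most $4$ charges. Hence the total number of single-qubit gates is at most $4 d\nu$, each contributing $1$ to $\sum_i |S'_i|$. Summing, $\sum_{i=1}^d |S'_i| \le 2 d\nu + 4 d\nu = 6 d\nu$, and Theorem~\ref{thm:reachable-set-bounds}(3) yields $6 d\nu \ge \Omega(4^n - 1)$, i.e., $d = \Omega(4^n/\nu)$. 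Combining with the $\Omega(n)$ baseline gives the claimed $d = \Omega(\max\{n, 4^n/\nu\})$.

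The main obstacle I anticipate is making the charging/merging step fully rigorous, in particular treating the boundary single-qubit gates at the very start or end of a qubit's gate-sequence (which have only a one-sided adjacency), and handling ancilla qubits that must be reset to $\ket{0}$ at the end of the computation. I expect these edge cases to contribute at most $O(n+m)$ additional single-qubit gates in total, which is absorbed in the $\Omega(4^n)$ lower bound and hence does not affect the asymptotic depth bound.
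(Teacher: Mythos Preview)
Your proposal is correct and follows essentially the same approach as the paper: invoke Theorem~\ref{thm:reachable-set-bounds}(3), bound $\sum_i |S'_i| = O(d\nu)$ using that each CNOT layer under $G$ constraint forms a matching of size at most $\nu$ and that single-qubit gates can be absorbed into adjacent CNOTs, and combine with the known $\Omega(n)$ baseline. The paper's presentation is slightly more streamlined---it builds the single-qubit/CNOT alternation into Definition~\ref{def:circuit-digraph} and then bounds each layer individually by $|S'_i|\le 2\nu$ (also using the lightcone bound $|S'_i|\le 2^{d-i+1}n$, which you do not need here), and it obtains the $\Omega(n)$ term by simply citing Lemma~\ref{lem:lowerbound_previous}/Proposition~\ref{prop:depth_lowerbound_graph} rather than re-deriving it---but the substance is the same as your charging argument.
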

\subsection{Circuit depth lower bounds under specific graph constraints}
Theorems \ref{thm:reachable-set-bounds}, \ref{thm:depth_lowerbound_QSP_graph} and \ref{thm:depth_lowerbound_US_graph} can be used to obtain circuit depth lower bounds under specific graph constraints.
\begin{restatable}{theorem}{lowerboundgrid}\label{thm:lower_bound_grid_k_QSP}
There exists an $n$-qubit quantum state that requires a circuit using $m\ge 0$ ancillary qubits, of depth 
\begin{enumerate}
    \item $\Omega\big(\max\big\{2^{n/2},\frac{2^n}{n+m}\big\}\big)$ under $\Path_{n+m}$ constraint.
    \item $\Omega\big(\max\big\{ 2^{n/3},\frac{2^{n/2}}{\sqrt{n_2}},\frac{2^n}{n+m}\big\}\big)$ under $\Grid^{n_1,n_2}_{n+m}$ constraint.
    \item $\Omega\big( \max\limits_{j\in [d]}\big\{n,2^{\frac{n}{d+1}},\frac{2^{n/j}}{(\Pi_{i=j}^d n_i)^{1/j}}\big\}\big)$ under $\Grid^{n_1,\cdots,n_d}_{n+m}$ constraint.
\end{enumerate}
\end{restatable}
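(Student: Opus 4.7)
The plan is to combine the max-matching depth bound of Theorem~\ref{thm:depth_lowerbound_QSP_graph} with a light-cone argument built on top of Theorem~\ref{thm:reachable-set-bounds}. Any $d$-dimensional grid on $n+m$ vertices admits a maximum matching of size $\nu = \Theta(n+m)$, so Theorem~\ref{thm:depth_lowerbound_QSP_graph} immediately yields $\Omega(\max\{n,\, 2^n/(n+m)\})$. Since $2^n/(n+m) = 2^n/\prod_{i=1}^d n_i$, this already establishes the $\Omega(n)$ term and the $j=1$ case of $\Omega(2^{n/j}/(\prod_{i\ge j}n_i)^{1/j})$ in the statement.

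For the remaining terms---$\Omega(2^{n/(d+1)})$ and $\Omega(2^{n/j}/(\prod_{i\ge j}n_i)^{1/j})$ for $j\ge 2$---I would set up a light-cone bound using the directed graph $H$ of Definition~\ref{def:circuit-digraph_main} associated with a depth-$D$ QSP circuit $C$ under $\Grid^{n_1,\ldots,n_d}_{n+m}$ constraint. Because every CNOT in $C$ acts on a pair of grid-adjacent qubits, any directed edge in $H$ between $v^{j_1}_{i+1}$ and $v^{j_2}_{i}$ requires that qubits $j_1,j_2$ are adjacent in the grid. Starting from $|S'_{D+1}|=n$ and moving one layer backward at a time, the backward-reachable set can expand by at most one grid-edge per layer, so
\begin{equation*}
    |S'_{D+1-T}| \;\le\; n\cdot \max_{v}|B_G(v,T)| \;\le\; n\cdot\prod_{i=1}^d \min(2T+1,\,n_i),
\end{equation*}
where $B_G(v,T)$ denotes the radius-$T$ ball in the grid.

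Fixing $j\in[d]$, the uniform bound $\min(2T+1,n_i)\le 2T+1$ for $i<j$ and $\min(2T+1,n_i)\le n_i$ for $i\ge j$ gives $|B_G(v,T)|\le (2T+1)^{j-1}\prod_{i=j}^d n_i$ for all $T$. By Theorem~\ref{thm:reachable-set-bounds}, there exists an $n$-qubit state whose preparation circuit must satisfy
\begin{equation*}
    \Omega(2^n) \;\le\; \sum_{T=1}^D |S'_{D+1-T}| \;\le\; n\prod_{i=j}^d n_i\sum_{T=1}^D (2T+1)^{j-1} \;=\; O\!\left(n\,D^{j}\prod_{i=j}^d n_i\right),
\end{equation*}
which rearranges to $D = \tilde\Omega\bigl(2^{n/j}/(\prod_{i\ge j}n_i)^{1/j}\bigr)$. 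The $\Omega(2^{n/(d+1)})$ term is obtained analogously from $|B_G(v,T)|\le (2T+1)^d$, yielding $O(nD^{d+1})\ge \Omega(2^n)$ and hence $D=\tilde\Omega(2^{n/(d+1)})$. Taking the maximum of all three bounds and over $j\in[d]$ completes the proof.

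The main technical point to be careful about is that the light-cone bound must hold regardless of how the circuit designer places the $n$ input qubits among the $n+m$ grid vertices; this is automatic because the bound $n\cdot|B_G(v,T)|$ uses only $|S'_{D+1}|=n$ and does not assume any placement. The only slack is the polynomial-in-$n$ factor of order $n^{1/j}$ that the $\Omega(\cdot)$ notation in the statement absorbs, which is negligible compared with the exponential terms that dominate in every relevant regime.
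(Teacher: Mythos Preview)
Your overall plan---combine the maximum-matching bound with a light-cone argument through Theorem~\ref{thm:reachable-set-bounds}---is exactly right, and matches the paper's strategy. The gap is in the light-cone estimate itself. Your bound
\[
|S'_{D+1-T}|\;\le\; n\cdot \max_v |B_G(v,T)|
\]
treats the $n$ input qubits as $n$ disjoint balls, which costs you a multiplicative factor of $n$ inside the sum and hence a factor of $n^{1/j}$ in the final depth bound. That factor is \emph{not} absorbed by $\Omega(\cdot)$: big-Omega hides only constants, and $n^{1/j}$ grows with $n$. Concretely, for the path ($d=1$) your argument gives only $D=\Omega(2^{n/2}/\sqrt{n})$, not the claimed $\Omega(2^{n/2})$; using $D\ge\Omega(n)$ to trade the $n$ for one more power of $D$ shifts the exponent to $2^{n/(j+1)}$ with the wrong product in the denominator, so it does not recover the stated bound either.

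The paper closes this gap by fixing the input placement (the same corner sub-grid ${\sf R}_{\rm inp}$ used in the upper-bound constructions) and computing the light-cone as a \emph{thickened input region} rather than a union of balls: after $T$ layers the reachable set sits inside a box of shape roughly $n_1\times\cdots\times n_{k-1}\times(n'_k+T)\times T^{d-k}$, so $|S'_{D+1-T}|=O(n_1\cdots n_{k-1}(n'_k+T)T^{d-k})$ rather than $O(n\cdot T^{d})$. Summing and then invoking $D\ge\Omega(n)$ absorbs the residual polynomial cleanly and yields the exact exponents in the statement. The paper also needs a case analysis on which dimensions get saturated by the light-cone (Eq.~\eqref{eq:grid-Si-bound} and the Cases $1,\ldots,d{+}1$), because the box shape changes once $T$ exceeds some $n_i$. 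Your argument is morally the same but needs this finer geometric bookkeeping---and crucially the fixed input placement---to remove the polynomial slack you flagged in your last paragraph.
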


\begin{restatable}{theorem}{lowerboundtreeus}\label{thm:lower_bound_tree_QSP}
There exists an $n$-qubit quantum state that requires a circuit using $m\ge 0$ ancillary qubits, of depth 
\begin{enumerate}
    \item $\Omega\big(\max\big\{n,\frac{2^n}{n+m}\big\}\big)$  under $\Tree_{n+m}(2)$ constraint.
    \item $\Omega\big(\max\big\{n,\frac{d2^n}{n+m}\big\}\big)$ under $\Tree_{n+m}(d)$ constraint for $d<m+n$.
    \item $\Omega(2^n)$ under $\Star_{n+m}$ constraint.
\end{enumerate}
\end{restatable}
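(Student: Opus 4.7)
The plan is to apply Theorem \ref{thm:depth_lowerbound_QSP_graph} directly, which provides a depth lower bound of $\Omega(\max\{n, 2^n/\nu(G)\})$ for QSP under any connected graph $G$, where $\nu(G)$ denotes the size of a maximum matching of $G$. Thus the problem reduces to upper bounding $\nu(G)$ for each of the three tree graphs, since $\nu$ appears in the denominator and upper bounds on $\nu$ translate into lower bounds on depth.

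For $\Star_{n+m}$, every edge shares the central vertex, so any matching contains at most one edge, giving $\nu = 1$. Substituting into the bound from Theorem \ref{thm:depth_lowerbound_QSP_graph} yields the depth lower bound $\Omega(2^n)$, as claimed. For a complete $d$-ary tree $\Tree_{n+m}(d)$ with $d < n+m$, the key observation is that every leaf has degree one and its sole neighbor is a non-leaf vertex. Consequently, every matching edge must contain at least one non-leaf (internal) endpoint, and since matching edges are vertex-disjoint, $\nu$ is at most the number of internal vertices. A complete $d$-ary tree of height $h$ has $n+m = (d^{h+1}-1)/(d-1)$ vertices in total and exactly $(d^h-1)/(d-1)$ internal vertices, so the ratio of internal vertices to total vertices is at most $1/d$. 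Hence $\nu \le O((n+m)/d)$, giving a depth lower bound of $\Omega(d \cdot 2^n/(n+m))$. Combined with the $\Omega(n)$ term from Theorem \ref{thm:depth_lowerbound_QSP_graph}, we obtain the claimed $\Omega(\max\{n, d \cdot 2^n/(n+m)\})$ bound. The binary tree case $d=2$ specializes immediately: $\nu \le O((n+m)/2)$, yielding $\Omega(\max\{n, 2^n/(n+m)\})$.

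There is no deep obstacle here; the bulk of the technical machinery is already packaged in Theorem \ref{thm:depth_lowerbound_QSP_graph} (which in turn invokes the reachable-set framework of Theorem \ref{thm:reachable-set-bounds}), and the matching upper bounds follow from the elementary combinatorial observation that each matching edge in a tree consumes a non-leaf vertex. One minor subtlety is ensuring the approximation $(d^h-1)/(d^{h+1}-1) \le 1/d$ is handled carefully at the boundaries, e.g.\ when $h$ is small or when $d$ approaches $n+m$; in particular the star case $d = n+m-1$ must be verified separately (which we already do, obtaining $\nu = 1$ and thus $\Omega(2^n)$), and this is consistent with plugging $d \approx n+m$ into the general $d$-ary bound.
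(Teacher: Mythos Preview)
Your proposal is correct and follows essentially the same approach as the paper: the paper's Lemma~\ref{lem:depth_lower_dary} proves the $d$-ary tree bound by invoking Theorem~\ref{thm:depth_lower_bound_graph} (equivalently Theorem~\ref{thm:depth_lowerbound_QSP_graph}) together with the one-line assertion that the maximum matching in $\Tree_{n+m}(d)$ has size $O((n+m)/d)$, and the star case is then a specialization. Your combinatorial justification for the matching bound (each matching edge consumes a non-leaf vertex, and at most a $1/d$ fraction of vertices are internal) is more detailed than what the paper supplies but otherwise identical in spirit; for the binary tree the paper happens to cite the weaker general bound of Proposition~\ref{prop:depth_lowerbound_graph} directly rather than going through matchings, but this yields the same $\Omega(\max\{n,2^n/(n+m)\})$ conclusion as your $d=2$ specialization.
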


\begin{theorem}\label{thm:lower_exapnder_QSP}
There exists an $n$-qubit quantum state that requires a quantum circuit of depth $\Omega\big(\max\{n, \frac{2^n}{n+m}\}\big)$ under $\Expander_{n+m}$ constraint,  using $m\ge 0$ ancillary qubits.
\end{theorem}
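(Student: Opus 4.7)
The plan is to derive this bound as a direct specialization of the general graph lower bound already established in Theorem~\ref{thm:depth_lowerbound_QSP_graph}. That theorem states that for any connected graph $G$ on $n+m$ vertices with maximum matching size $\nu$, there exists an $n$-qubit state whose preparation requires depth $\Omega(\max\{n, 2^n/\nu\})$ under $G$ constraint. Since $\Expander_{n+m}$ is by definition a connected graph on $n+m$ vertices, we are entitled to apply this theorem verbatim; the task therefore reduces to controlling the matching number $\nu$ of an expander.

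The key observation is a purely combinatorial one that does not actually require the expansion property: any graph on $v$ vertices has maximum matching size at most $\lfloor v/2 \rfloor$, because each edge of a matching consumes two distinct vertices. Specializing to $v = n+m$ yields $\nu \le (n+m)/2$, hence $2^n/\nu \ge 2\cdot 2^n/(n+m) = \Omega(2^n/(n+m))$. Plugging this into the conclusion of Theorem~\ref{thm:depth_lowerbound_QSP_graph} immediately gives the claimed depth lower bound of $\Omega(\max\{n, 2^n/(n+m)\})$.

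There is essentially no obstacle here: the hard combinatorial work sits inside Theorem~\ref{thm:depth_lowerbound_QSP_graph}, whose proof relies on the reachable-subset framework of Definitions~\ref{def:circuit-digraph_main}–\ref{def:reachable_main} and the counting bound $\sum_{i}|S_i'| \ge \Omega(2^n)$ from Theorem~\ref{thm:reachable-set-bounds}, combined with the fact that any depth-1 layer of CNOT gates restricted to a matching of $G$ can grow the reachable set in a single layer by at most $\nu$ vertices. Once that machinery is in place, the expander case is obtained for free. The only point worth remarking on in the write-up is that the lower bound here is strictly weaker than what one might hope for from an expander-specific argument, i.e.\ the expansion constant does not appear, and the $n$ term coincides with the known unrestricted QSP lower bound $\Omega(n + 2^n/(n+m))$ of~\cite{sun2021asymptotically} listed in Table~\ref{tab:related-work}, which by itself already covers the $\Omega(n)$ part if one prefers a connectivity-agnostic route.
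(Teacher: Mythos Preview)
Your proposal is correct. Both you and the paper derive the bound as an immediate corollary of a general-graph lower bound already established earlier, so the approaches are essentially the same. The only minor difference is which general result is invoked: the paper cites Proposition~\ref{prop:depth_lowerbound_graph} (the unrestricted lower bound $\Omega(n+2^n/(n+m))$ from~\cite{sun2021asymptotically}, which carries over to any constraint graph by monotonicity), whereas you go through the matching-based Theorem~\ref{thm:depth_lowerbound_QSP_graph} together with the trivial bound $\nu\le (n+m)/2$. Your route is slightly more indirect but equally valid, and you already note at the end that the connectivity-agnostic bound suffices on its own---that remark is exactly the paper's argument.
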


\begin{restatable}{theorem}{qspbrickwalllowerbound}
\label{thm:lower_bound_brickwall_QSP}
There exists an $n$-qubit quantum state that requires a quantum circuit of depth \[\Omega\big(\max\big\{2^{n/3},\frac{2^{n/2}}{\sqrt{\min\{n_1,n_2\}}},\frac{2^n}{n+m}\big\}\big)\] under $\brickwall_{n+m}^{n_1,n_2,b_1,b_2}$ constraint, using $m\ge 0$ ancillary qubits.
\end{restatable}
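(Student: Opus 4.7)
The plan is to derive this lower bound by reducing to the already-established 2D-grid lower bound in Theorem~\ref{thm:lower_bound_grid_k_QSP}, via a graph embedding. The key observation is that when $b_1, b_2 = O(1)$, the brick-wall $\brickwall_{n+m}^{n_1,n_2,b_1,b_2}$ is nothing more than a subgraph of a 2D grid of dimensions $\Theta(n_1)\times\Theta(n_2)$, so every brick-wall circuit is automatically a 2D-grid circuit and therefore inherits any lower bound that holds on the grid.

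Concretely, I would construct an injection $\phi : V(\brickwall_{n+m}^{n_1,n_2,b_1,b_2}) \hookrightarrow [M_1]\times[M_2]$ with $M_1 = \Theta(n_1)$ and $M_2 = \Theta(n_2)$, such that every brick-wall edge $(u,v)$ is mapped to a grid edge $(\phi(u),\phi(v))$ in $\Grid_{N}^{M_1,M_2}$, where $N = M_1 M_2 = \Theta(n+m)$. This is possible because each brick of dimensions $(b_1+1)\times(b_2+1)$ embeds naturally as the perimeter of a rectangular block, horizontally adjacent bricks share a column of vertices, and the half-brick offset between successive rows is an integer shift (since $b_2$ is odd). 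Vertices of the ambient grid that are not in the image of $\phi$ are a constant fraction of $N$ and will be treated as unused.

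Given any brick-wall circuit $\mathcal{C}$ of depth $d$ preparing a state $\ket{\psi}$ with $m$ ancillas, we reinterpret $\mathcal{C}$ as a 2D-grid circuit on $\Grid_{N}^{M_1,M_2}$ of the same depth $d$, with the $N-(n+m)$ extra lattice points serving as untouched ancillas (which can only ease the task). Applying Theorem~\ref{thm:lower_bound_grid_k_QSP} (case 2) to this grid, there exists an $n$-qubit state whose grid-depth is at least
\[
\Omega\!\left(\max\!\left\{2^{n/3},\ \frac{2^{n/2}}{\sqrt{\min\{M_1,M_2\}}},\ \frac{2^n}{N}\right\}\right) = \Omega\!\left(\max\!\left\{2^{n/3},\ \frac{2^{n/2}}{\sqrt{\min\{n_1,n_2\}}},\ \frac{2^n}{n+m}\right\}\right),
\]
where the constants are absorbed into $\Omega(\cdot)$ using $M_j = \Theta(n_j)$ and $N = \Theta(n+m)$. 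Since the brick-wall depth of the same state is at least the grid depth, this yields the claimed bound.

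The main obstacle is the explicit construction of the embedding $\phi$: one must lay out the bricks in the ambient lattice while faithfully respecting the brick-wall's horizontal offsets, the shared vertices between adjacent bricks, and the parity condition on $b_2$, and then verify that every brick-wall edge becomes a unit-length grid edge. Once this geometric bookkeeping is carried out, the rest of the argument is a one-line invocation of the grid lower bound, with no additional probabilistic or combinatorial input.
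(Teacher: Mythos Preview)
Your approach differs from the paper's. The paper simply asserts that the proof is ``essentially the same as'' that of Theorem~\ref{thm:lower_bound_grid_k_QSP}, meaning one reruns the reachable-set/light-cone argument (Theorem~\ref{thm:reachable-set-bounds} together with a direct bound on $|S_i'|$) on the brick-wall geometry. You instead observe that the brick-wall is a subgraph of an ambient $\Theta(n_1)\times\Theta(n_2)$ grid and invoke the grid lower bound as a black box. This is a legitimate and cleaner reduction: since every brick-wall edge is a grid edge, any brick-wall circuit is literally a grid circuit (with the non-brick-wall grid vertices held as idle ancillas), so a depth lower bound on the grid transfers immediately.

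Two caveats deserve mention. First, the black-box route only recovers the stated bound under $b_1,b_2=O(1)$ (which you do assume, but the theorem statement does not); for general $b_1,b_2$ the ambient grid has $\Theta(n_1n_2b_1b_2)$ vertices while the brick-wall has only $\Theta(n_1n_2(b_1+b_2))$, so the $2^n/N$ and $2^{n/2}/\sqrt{\min\{\cdot\}}$ terms you inherit from the grid are weaker than those claimed. Second, Theorem~\ref{thm:lower_bound_grid_k_QSP} is proved for a \emph{specific} input placement (the corner sub-grid ${\sf R}_{\rm inp}$). For your black-box invocation to apply you must either arrange the embedding so that the brick-wall's $n$ input qubits land on that placement, or note that the $|S_i'|$ bounds in the grid proof hold for any compact input region---this is easy, but it should be said rather than silently assumed. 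With those points addressed, your reduction is sound and reuses the grid machinery rather than rewriting it, which is arguably nicer than the paper's direct approach.
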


\begin{restatable}{theorem}{lowerboundgridus}
\label{thm:lower_bound_grid_k_US}
There exists an $n$-qubit unitary that requires a quantum circuit using $m\ge 0$ ancillary qubits, of depth 
\begin{enumerate}
    \item $\Omega\big(\max\big\{4^{n/2},\frac{4^n}{n+m}\big\}\big)$ under $\Path_{n+m}$ constraint.
    \item $\Omega\big(\max\big\{4^{n/3},\frac{4^{n/2}}{\sqrt{n_2}},\frac{4^n}{n+m}\big\}\big)$ under $\Grid^{n_1,n_2}_{n+m}$ constraint.
    \item $\Omega\big( \max\limits_{j\in [d]}\big\{n,4^{\frac{n}{d+1}},\frac{4^{n/j}}{(\Pi_{i=j}^d n_i)^{1/j}}\big\}\big)$ under $\Grid^{n_1,\cdots,n_d}_{n+m}$ constraint.
\end{enumerate}
\end{restatable}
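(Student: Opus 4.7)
The proof combines Theorem~\ref{thm:reachable-set-bounds}(3) (for GUS, $\sum_{i=1}^T|S_i'|\ge \Omega(4^n)$, where $T$ denotes circuit depth) with Theorem~\ref{thm:depth_lowerbound_US_graph} (diameter bound $\Omega(n)$ and matching bound $\Omega(4^n/\nu)$). The key geometric observation is that each vertex of $S_i'$ must lie within graph distance $r:=T+1-i$ of some input qubit, since a single circuit layer propagates information across at most one edge. Placing the $n$ inputs as a contiguous block in $\Grid^{n_1,\ldots,n_d}_{n+m}$ (consistent with the natural ancilla layout), the reachable set at layer $i$ is contained in the $r$-thickening of that block, whose size one bounds by standard ball-counting in the grid.

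For $\Path_{n+m}$, the $r$-thickening is an interval of size at most $n+2r$, hence
\begin{align*}
\sum_{i=1}^T|S_i'|\le \sum_{r=1}^T(n+2r)=O(nT+T^2),
\end{align*}
and requiring this $\ge \Omega(4^n)$ forces $T\ge \Omega(2^n)=\Omega(4^{n/2})$; the complementary $\Omega(4^n/(n+m))$ term follows from $\nu=\Theta(n+m)$ via Theorem~\ref{thm:depth_lowerbound_US_graph}. For the 2D and general $d$-dimensional grid the $r$-thickening behaves piecewise: when $r\le n_d$ its extra volume is $\Theta(r^d)$, and when $n_j<r\le n_{j-1}$ the lower $d-j+1$ dimensions saturate, giving volume $\Theta(r^{j-1}\prod_{k=j}^d n_k)$. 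Summing $|S_i'|$ across layers in each regime and forcing the total $\ge \Omega(4^n)$ yields the term $\Omega(4^{n/(d+1)})$ (from $r\le n_d$) and the terms $\Omega(4^{n/j}/(\prod_{k=j}^d n_k)^{1/j})$ (from the intermediate regimes $n_j<r\le n_{j-1}$). The $\Omega(n)$ term is Theorem~\ref{thm:depth_lowerbound_US_graph}'s diameter bound.

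The main obstacle is carrying out the piecewise summation $\sum_{r=1}^T\min\{n+m,|B(r)|\}$ cleanly: one must split the sum across the regimes where different powers of $r$ dominate $|B(r)|$, identify the range of $T$ in which each regime saturates the constraint $\sum|S_i'|\ge \Omega(4^n)$, and then take the maximum over $j\in[d]$ so that each active regime contributes a distinct term to the final lower bound. Keeping track of the ball-size crossover points, especially for the mixed grid where the sides $n_1\ge \cdots\ge n_d$ may differ by many orders of magnitude, is where the bookkeeping is most delicate; the analogous $\Omega(4^n)$ lower bound for GUS parallels the $\Omega(2^n)$ analysis already used for QSP in Theorem~\ref{thm:lower_bound_grid_k_QSP}, with $4^n$ replacing $2^n$ throughout.
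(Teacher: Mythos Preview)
Your proposal is correct and follows essentially the same route as the paper: the paper's proof explicitly states that it repeats the argument of Theorem~\ref{thm:lower_bound_grid_k_QSP} (the QSP version) with $4^n$ in place of $2^n$, invoking Theorem~\ref{thm:reachable-set-bounds}(3) together with the general lower bound of Proposition~\ref{prop:depth_lowerbound_graph}/Theorem~\ref{thm:depth_lower_bound_graph}, and your $r$-thickening bound on $|S_i'|$ is exactly the content of the paper's Eq.~\eqref{eq:grid-Si-bound}. The paper organizes the piecewise analysis as a $d{+}1$-case split on the relative sizes of the $n_j$ rather than summing regime-by-regime, but the underlying geometry and bookkeeping are the same.
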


\begin{theorem}\label{thm:lower_bound_tree_US}
There exists an $n$-qubit unitary that requires a quantum circuit using $m\ge 0$ ancillary qubits, of depth 
\begin{enumerate}
    \item $\Omega\big(\max\big\{n,\frac{4^n}{n+m}\big\}\big)$  under $\Tree_{n+m}(2)$ constraint.
    \item $\Omega\big(\max\big\{n,\frac{d4^n}{n+m}\big\}\big)$ under $\Tree_{n+m}(d)$ constraint for $d<m+n$.
        \item $\Omega(4^n)$ under $\Star_{n+m}$ constraint.
\end{enumerate}
\end{theorem}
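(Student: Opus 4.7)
} The plan is to obtain all three lower bounds as direct consequences of the general graph lower bound in Theorem~\ref{thm:depth_lowerbound_US_graph}, combined with tight estimates of the maximum matching size $\nu(T)$ for each tree family. Recall that Theorem~\ref{thm:depth_lowerbound_US_graph} says that for any connected graph $G$ on $n+m$ vertices with maximum matching size $\nu$, there exist $n$-qubit unitaries requiring depth $\Omega(\max\{n, 4^n/\nu\})$ under $G$ constraint. Thus the entire task reduces to bounding $\nu$ from above for the complete $d$-ary tree on $n+m$ vertices (including $d=2$ and $d=n+m-1$ as the binary tree and star cases).

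The matching bound goes through the following simple observation: in any tree $T$, an edge of a matching must connect a node to its parent or to one of its children, so every matching edge is incident to at least one internal (non-leaf) node, and each internal node is incident to at most one matching edge. Therefore $\nu(T)\le I(T)$, where $I(T)$ is the number of internal nodes. For the complete $d$-ary tree on $n+m=\sum_{i=0}^{h}d^i$ vertices we have $I(T)=\sum_{i=0}^{h-1}d^i=\Theta((n+m)/d)$ (with the star case $d=n+m-1$ giving $I(T)=1$ and the binary case $d=2$ giving $I(T)=\Theta(n+m)$). Substituting into Theorem~\ref{thm:depth_lowerbound_US_graph} then yields depth lower bounds of $\Omega(\max\{n,4^n/(n+m)\})$ for $\Tree_{n+m}(2)$, $\Omega(\max\{n,d\cdot 4^n/(n+m)\})$ for $\Tree_{n+m}(d)$, and $\Omega(4^n)$ for $\Star_{n+m}$ (where $4^n$ already dominates $n$ trivially), establishing all three parts of the theorem.

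The one step I want to verify carefully rather than take for granted is that $\nu(T)\le I(T)$ is indeed tight enough---specifically, that no clever matching of leaves through internal nodes gives a matching larger than $I(T)$. This is immediate because in a tree every edge has at least one endpoint that is not a leaf, so a matching of size $k$ covers at least $k$ distinct internal nodes; hence $\nu(T)\le I(T)$ holds unconditionally. The $n$ lower bound itself comes already built into Theorem~\ref{thm:depth_lowerbound_US_graph} via the light-cone/reachable-subset argument (every qubit must be acted on, and no circuit of depth $o(n)$ can implement a generic unitary since there exist unitaries for which information must propagate across the diameter of the graph, which is $\Omega(n)$ for a balanced $d$-ary tree up to the star case where the $4^n$ term dominates anyway), so no extra argument is needed on that side. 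The main expected obstacle is bookkeeping: ensuring that the matching upper bound $\nu\le I(T)=\Theta((n+m)/d)$ is written uniformly to cover $d=2$, general $d<n+m$, and $d=n+m-1$, so that a single application of Theorem~\ref{thm:depth_lowerbound_US_graph} cleanly produces all three statements without case-by-case rederivation.
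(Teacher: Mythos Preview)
Your plan is correct and is exactly the paper's approach: the paper's proof (Lemma~\ref{lem:depth_lower_dary}) is the single sentence ``the size of a maximum matching in $\Tree_{n+m}(d)$ is $O((n+m)/d)$; the result follows from Theorem~\ref{thm:depth_lower_bound_graph},'' and your internal-node counting argument is a clean way to justify that matching bound. One small inaccuracy in your parenthetical: the $\Omega(n)$ term in Theorem~\ref{thm:depth_lowerbound_US_graph} is inherited from the unconstrained lower bound (Lemma~\ref{lem:lowerbound_previous}), not from a diameter argument---the diameter of a balanced $d$-ary tree on $n+m$ vertices is $\Theta(\log_d(n+m))$, not $\Omega(n)$---but since you are invoking the theorem as a black box this does not affect the proof.
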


\begin{theorem}\label{thm:lower_exapnder_US}
There exists an $n$-qubit unitary that requires a quantum circuit of depth $\Omega\big(\max\{n, \frac{2^n}{n+m}\}\big)$ under $\Expander_{n+m}$ constraint,  using $m\ge 0$ ancillary qubits.
\end{theorem}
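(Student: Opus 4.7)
The plan is to deduce this lower bound directly from Theorem~\ref{thm:depth_lowerbound_US_graph}, which, for any connected graph $G$ on $n+m$ vertices with maximum matching number $\nu=\nu(G)$, already establishes a depth lower bound of $\Omega(\max\{n,\,4^n/\nu\})$ for $n$-qubit GUS. Applying this to $G=\Expander_{n+m}$, it suffices to show that $\nu(\Expander_{n+m}) = \Theta(n+m)$, whence the claim follows.

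Fix any maximum matching $M$ of $G$, and let $U := V\setminus V(M)$ denote the set of unmatched vertices, so $|U| = (n+m) - 2|M|$. Since $M$ is maximum, $U$ must be an independent set (any edge inside $U$ would extend $M$), and by Berge's theorem $G$ admits no $M$-augmenting path; in particular two distinct vertices of $N(U)$ cannot be matched to each other in $M$. Consequently each vertex of $N(U)$ is matched to a distinct vertex by $M$, which yields $|N(U)| \le |M|$.

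Next I would invoke the vertex-expansion property. When $|U| < (n+m)/2$, applying Definition~\ref{def:expansion} with $S=U$ gives $c|U| \le |\partial_{out}(U)| \le |N(U)| \le |M|$, where $c>0$ is the expansion constant of $G$; substituting $|U| = (n+m)-2|M|$ yields $|M| \ge \tfrac{c(n+m)}{1+2c} = \Theta(n+m)$. When $|U| \ge (n+m)/2$, I would instead apply the expansion inequality to any $S\subseteq U$ of size $\lfloor (n+m)/2\rfloor - 1$; because $U$ is independent, $\partial_{out}(S)$ still lies in $N(U)$, so the same chain of inequalities forces $|M| \ge c(\lfloor (n+m)/2\rfloor - 1) = \Theta(n+m)$.

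The main subtlety is this latter case, since the vertex-expansion bound in Definition~\ref{def:expansion} is restricted to subsets of size below $(n+m)/2$; passing to a subset of $U$ of the maximum admissible size bypasses this restriction while preserving independence, which is the only structural fact used. In both regimes we obtain $\nu = \Theta(n+m)$, and inserting this into Theorem~\ref{thm:depth_lowerbound_US_graph} produces the claimed $\Omega(\max\{n, 4^n/(n+m)\})$ depth lower bound matching the stated result.
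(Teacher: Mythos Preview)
Your overall route via Theorem~\ref{thm:depth_lowerbound_US_graph} can be made to work, but it is more elaborate than necessary and contains an incorrect step.

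First, the paper's proof is much shorter: it simply invokes Proposition~\ref{prop:depth_lowerbound_graph}, which already gives the depth lower bound $\Omega(n + 4^n/(n+m))$ for \emph{every} connected graph on $n+m$ vertices, with no appeal to matching numbers or expansion at all. (The $2^n$ in the stated theorem appears to be a typo for $4^n$; the paper's own Corollary~\ref{coro:lower_bound_expander} proves the $4^n$ version.)

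Second, even along your route, you only need $\nu = O(n+m)$ to conclude $4^n/\nu = \Omega(4^n/(n+m))$, and $\nu \le (n+m)/2$ holds for any graph. The entire detour establishing the \emph{lower} bound $\nu = \Omega(n+m)$ is unnecessary for the stated result.

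Third, that detour contains a genuine gap. You assert that by Berge's theorem no two distinct vertices of $N(U)$ can be matched to each other in $M$, hence $|N(U)| \le |M|$. This is false: in a triangle $\{u,v_1,v_2\}$ with maximum matching $M=\{(v_1,v_2)\}$ one has $U=\{u\}$ and $N(U)=\{v_1,v_2\}$, so $|N(U)|=2>1=|M|$. Your augmenting-path argument breaks down when $v_1$ and $v_2$ have the same neighbor in $U$. The conclusion $\nu=\Omega(n+m)$ is nevertheless true for expanders --- one may cite Lemma~\ref{lem:graph_property}, or simply use the trivially valid $|N(U)|\le |V(M)|=2|M|$, which loses only a harmless constant in your chain of inequalities.
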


\begin{restatable}{theorem}{usbrickwalllowerbound}\label{thm:lower_bound_brickwall_US}
There exists an $n$-qubit unitary that requires a quantum circuit of depth \[\Omega\big(\max\big\{4^{n/3},\frac{4^{n/2}}{\sqrt{\min\{n_1,n_2\}}},\frac{4^n}{n+m}\big\}\big)\] under $\brickwall_{n+m}^{n_1,n_2,b_1,b_2}$ constraints  using $m\ge 0$ ancillary qubits.
\end{restatable}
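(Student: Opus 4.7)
My plan is to mimic the proof of Theorem~\ref{thm:lower_bound_grid_k_US} for the 2-dimensional grid, since the brick-wall graph with constant $b_1, b_2$ has a very similar local geometry: its balls of radius $r$ grow like $r^2$ in the ``planar'' regime and like $r\cdot \mu$ (with $\mu = \min\{n_1,n_2\}$) once $r$ exceeds the narrow dimension. The key tool is Theorem~\ref{thm:reachable-set-bounds}, which asserts that for any depth-$d$ circuit implementing a (worst-case) $n$-qubit unitary with $m$ ancillae, the reachable subsets satisfy $\sum_{i=1}^d |S'_i| = \Omega(4^n)$.

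First, I would bound $|S'_i|$ using the geometry of the constraint graph. By Definition~\ref{def:reachable_main}, a vertex $v^j_i \in S'_i$ requires a directed path in $H$ from some input vertex $v^{j'}_{d+1}$, $j'\in[n]$; each edge of this path either stays on the same qubit (single-qubit gate) or moves between two qubits connected by a $\Cnot$ in a single layer. Since $\Cnot$s are only applied along edges of the constraint graph, the qubit $j$ must lie within brick-wall distance at most $r := d+1-i$ from the $n$ input qubits. Let $B_r$ denote the union of the radius-$r$ balls around these $n$ qubits. A direct count in $\brickwall_{n+m}^{n_1,n_2,b_1,b_2}$ with $b_1,b_2=O(1)$ shows that a single radius-$r$ ball has size $O(r^2)$ when $r \le \mu$ and $O(r\mu)$ when $r > \mu$, so $|S'_i| \le |B_r| \le \min\{n+m,\, O(n r^2),\, O(n r \mu)\}$.

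Second, I would plug this bound into $\sum_i |S'_i| = \Omega(4^n)$ and do a three-regime analysis. Writing $r = d+1-i$ and summing over $r \in [d]$, the contribution up to $r=\mu$ is $O(n d^3)$, the contribution for $r>\mu$ capped by the $nr\mu$ term is $O(n\mu d^2)$, and the trivial cap by $n+m$ gives $O(d(n+m))$. Requiring the sum to reach $\Omega(4^n)$ in each regime forces, respectively, $d = \Omega(4^{n/3})$, $d = \Omega(4^{n/2}/\sqrt{\mu})$, and $d = \Omega(4^n/(n+m))$ (polynomial-in-$n$ factors are absorbed into the exponentials, exactly as in Theorem~\ref{thm:lower_bound_grid_k_US}). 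Taking the maximum yields the claimed bound. The same reachable-set witness used in the proof of Theorem~\ref{thm:reachable-set-bounds}(3) exhibits an explicit unitary attaining this lower bound.

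The main technical obstacle is the combinatorial ball-size estimate for the brick-wall graph: unlike a clean $n_1\times n_2$ grid, a brick-wall has the nonuniform block structure of Fig.~\ref{fig:brickwall}, so one has to verify that a BFS from any vertex fans out in at most two independent directions and that each brick contributes only $O(b_1 b_2) = O(1)$ vertices per ``layer'' of the BFS. Once this is established, the behavior is indistinguishable (up to the constants hidden in the $b_i$) from that of a 2-dimensional grid of dimensions $\Theta(n_1)\times\Theta(n_2)$, and the rest of the argument is identical to the grid case. An alternative, which I would use as a sanity check, is to observe that the brick-wall graph can be obtained from a $(\Theta(n_1)\times\Theta(n_2))$-grid by deleting $O(1)$ edges per brick and to invoke Lemma~\ref{lem:circuit_trans} in the ``backward'' direction to transfer the reachable-set bound; either route produces the same three terms in the $\max$.
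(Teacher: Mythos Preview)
Your approach is the right one and matches the paper's: the paper simply remarks that the brick-wall lower bounds follow by ``essentially the same'' argument as Theorems~\ref{thm:lower_bound_grid_k_QSP} and~\ref{thm:lower_bound_grid_k_US}, namely the reachable-set light-cone count you outline.

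There is, however, a small but real gap in the arithmetic. Your bound $|B_r|\le O(nr^2)$ treats the $n$ input qubits as $n$ independent centers, so summing yields $\sum_i |S'_i|=O(nD^3)$, and from $nD^3\ge\Omega(4^n)$ you only get $D=\Omega(4^{n/3}/n^{1/3})$. That polynomial loss is \emph{not} absorbable in the way you suggest: using $D\ge\Omega(n)$ to trade the factor $n$ for a factor of $D$ would give $D^4\ge\Omega(4^n)$ and hence only $D\ge\Omega(4^{n/4})$. The same $\sqrt{n}$ slack afflicts your middle term $4^{n/2}/\sqrt{\mu}$.

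What the grid proof actually uses (see Eq.~\eqref{eq:grid-Si-bound} specialised to $d=2$) is that the input qubits sit in a \emph{contiguous} strip, so the $r$-neighborhood of the input region has size $O((n+r)r)$, not $O(nr^2)$. Summing gives $O(nD^2+D^3)$, and \emph{now} $D\ge\Omega(n)$ bounds this by $O(D^3)$ with no residual polynomial factor, yielding the claimed $\Omega(4^{n/3})$. Likewise, once $r>\mu$ the neighborhood is $O((n+r)\mu)$, summing to $O(D^2\mu)$ and giving the clean $4^{n/2}/\sqrt{\mu}$. So keep your plan, but place the input in a contiguous portion of the brick-wall and carry over the sharper neighborhood estimate from the grid proof rather than the union-of-$n$-balls bound.
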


\section{Conclusions}
\label{sec:conlusion}
We have investigated the effects of qubit connectivity on quantum circuit size and depth complexity. We have shown that, somewhat surprisingly, connectivity constraints do not increase the order of circuit size required for implementing almost all unitaries, as well as for quantum state preparation. 

The circuit depth complexity is more subtle. We have shown that connectivity constraints do not increase the order of the circuit depth required for implementing almost all unitary operations, even for the very restricted case of 1D chains with nearest neighbor connectivity, and this remains true when $m$ ancilla are available unless $m$ is exponentially large. However, compared with the unrestricted case, qubit connectivity does hinder space-depth trade-offs: it makes it harder to use a large number of ancilla qubits to achieve smaller depth. 

We have investigated various constraint graphs, including $d$-dimensional grids, complete $d$-ary trees, expander graphs, and general graphs. We have found that common measures for graph connectivity such as graph diameter, vertex degree, and graph expansion, as well as less prominent measures such as the size of a maximum matching, all seem to have some impact on the required circuit depth. 

These results combine analytic bounds with explicit circuit constructions, which hopefully have practical applications for circuit design as well. A number of interesting related research directions warrant futher study:

\begin{enumerate}
    \item Better bounds. Gaps remain between upper and lower bounds for GUS in the {$d$-dimensional grid and $d$-ary tree} cases {when the number of ancillary qubits is large}. It would be technically interesting to close them in these settings.
    \item More graph properties. What other graph properties have an important impact on quantum circuit depth complexity for certain natural families of unitaries? 
    \item More unitary families. We cannot hope to have an efficient algorithm to optimize the circuit complexity for any given unitary as it is QMA-hard \cite{Janzing2005identity}, but it would be interesting to have more circuit constructions for specific unitaries. Can we study some other families of unitaries which have structures that can be exploited to give efficient circuit constructions? 
    
    \item Small scale quantum circuits. Though our designs aim at achieving optimal asymptotic bounds, the constant factor hidden in the big-O notation is not large, and we hope our constructions may inspire efficient constructions for small scale quantum circuits, such as those on $10^2 \sim 10^5$ qubits. Our constructed circuits are all parameterized ones, which may have applications in designing ansatzes for variational quantum circuits for quantum machine learning or quantum chemistry. 
\end{enumerate}


    
\bibliographystyle{alpha}
\bibliography{qsp_graph_constraint}

\newcommand{\etalchar}[1]{$^{#1}$}
\begin{thebibliography}{YGW{\etalchar{+}}19}

\bibitem[AAA{\etalchar{+}}23]{acharya2022suppressing}
Rajeev Acharya, Igor Aleiner, Richard Allen, et~al.
\newblock Suppressing quantum errors by scaling a surface code logical qubit.
\newblock {\em Nature}, 614(7949):676--681, 2023.

\bibitem[AAB{\etalchar{+}}19]{arute2019quantum}
Frank Arute, Kunal Arya, Ryan Babbush, Dave Bacon, Joseph~C Bardin, Rami
  Barends, Rupak Biswas, Sergio Boixo, Fernando~GSL Brandao, David~A Buell,
  et~al.
\newblock Quantum supremacy using a programmable superconducting processor.
\newblock {\em Nature}, 574(7779):505--510, 2019.

\bibitem[BAN11]{buluta2011natural}
Iulia Buluta, Sahel Ashhab, and Franco Nori.
\newblock Natural and artificial atoms for quantum computation.
\newblock {\em Reports on Progress in Physics}, 74(10):104401, 2011.

\bibitem[BCK15]{berry2015hamiltonian}
Dominic~W Berry, Andrew~M Childs, and Robin Kothari.
\newblock Hamiltonian simulation with nearly optimal dependence on all
  parameters.
\newblock In {\em 2015 IEEE 56th Annual Symposium on Foundations of Computer
  Science}, pages 792--809. IEEE, 2015.

\bibitem[Blo08]{bloch2008quantum}
Immanuel Bloch.
\newblock Quantum coherence and entanglement with ultracold atoms in optical
  lattices.
\newblock {\em Nature}, 453(7198):1016--1022, 2008.

\bibitem[BR12]{blatt2012quantum}
Rainer Blatt and Christian~F Roos.
\newblock Quantum simulations with trapped ions.
\newblock {\em Nature Physics}, 8(4):277--284, 2012.

\bibitem[BSK{\etalchar{+}}17]{bernien2017probing}
Hannes Bernien, Sylvain Schwartz, Alexander Keesling, Harry Levine, Ahmed
  Omran, Hannes Pichler, Soonwon Choi, Alexander~S Zibrov, Manuel Endres,
  Markus Greiner, et~al.
\newblock Probing many-body dynamics on a 51-atom quantum simulator.
\newblock {\em Nature}, 551(7682):579--584, 2017.

\bibitem[BVMS05]{bergholm2005quantum}
Ville Bergholm, Juha~J Vartiainen, Mikko M{\"o}tt{\"o}nen, and Martti~M
  Salomaa.
\newblock Quantum circuits with uniformly controlled one-qubit gates.
\newblock {\em Physical Review A}, 71(5):052330, 2005.

\bibitem[CSH{\etalchar{+}}00]{ciorga2000addition}
M~Ciorga, AS~Sachrajda, Pawel Hawrylak, C~Gould, Piotr Zawadzki, S~Jullian,
  Y~Feng, and Zbigniew Wasilewski.
\newblock Addition spectrum of a lateral dot from coulomb and spin-blockade
  spectroscopy.
\newblock {\em Physical Review B}, 61(24):R16315, 2000.

\bibitem[EHG{\etalchar{+}}03]{elzerman2003few}
JM~Elzerman, R~Hanson, JS~Greidanus, LH~Willems Van~Beveren, S~De~Franceschi,
  LMK Vandersypen, S~Tarucha, and LP~Kouwenhoven.
\newblock Few-electron quantum dot circuit with integrated charge read out.
\newblock {\em Physical Review B}, 67(16):161308, 2003.

\bibitem[Fra53]{frank1953pulse}
Gray Frank.
\newblock Pulse code communication, March~17 1953.
\newblock US Patent 2,632,058.

\bibitem[Gid15]{multi-controlled-gate}
Craig Gidney.
\newblock
  https://algassert.com/circuits/2015/06/22/{U}sing-{Q}uantum-{G}ates-instead-of-{A}ncilla-{B}its.html.
\newblock 2015.

\bibitem[Gil58]{gilbert1958gray}
Edgard~N Gilbert.
\newblock Gray codes and paths on the n-cube.
\newblock {\em The bell system technical journal}, 37(3):815--826, 1958.

\bibitem[GKG{\etalchar{+}}19]{graham2019rydberg}
TM~Graham, M~Kwon, B~Grinkemeyer, Z~Marra, X~Jiang, MT~Lichtman, Y~Sun,
  M~Ebert, and M~Saffman.
\newblock Rydberg-mediated entanglement in a two-dimensional neutral atom qubit
  array.
\newblock {\em Physical review letters}, 123(23):230501, 2019.

\bibitem[GR02]{grover2002creating}
Lov Grover and Terry Rudolph.
\newblock Creating superpositions that correspond to efficiently integrable
  probability distributions.
\newblock {\em arXiv preprint quant-ph/0208112}, 2002.

\bibitem[GWZ{\etalchar{+}}21]{gong2021quantum}
Ming Gong, Shiyu Wang, Chen Zha, Ming-Cheng Chen, He-Liang Huang, Yulin Wu,
  Qingling Zhu, Youwei Zhao, Shaowei Li, Shaojun Guo, et~al.
\newblock Quantum walks on a programmable two-dimensional 62-qubit
  superconducting processor.
\newblock {\em Science}, 372(6545):948--952, 2021.

\bibitem[Her20]{herbert2018depth}
Steven Herbert.
\newblock On the depth overhead incurred when running quantum algorithms on
  near-term quantum computers with limited qubit connectivity.
\newblock {\em Quantum Information \& Computation}, 20(9-10):787--806, 2020.

\bibitem[HHL09]{harrow2009quantum}
Aram~W Harrow, Avinatan Hassidim, and Seth Lloyd.
\newblock Quantum algorithm for linear systems of equations.
\newblock {\em Physical review letters}, 103(15):150502, 2009.

\bibitem[HLW06]{hoory2006expander}
Shlomo Hoory, Nathan Linial, and Avi Wigderson.
\newblock Expander graphs and their applications.
\newblock {\em Bulletin of the American Mathematical Society}, 43(4):439--561,
  2006.

\bibitem[IBM21]{IBMQ}
{IBM} quantum.
\newblock {\em https://quantum-computing.ibm.com/}, 2021.

\bibitem[JDM{\etalchar{+}}21]{johri2021nearest}
Sonika Johri, Shantanu Debnath, Avinash Mocherla, Alexandros Singk, Anupam
  Prakash, Jungsang Kim, and Iordanis Kerenidis.
\newblock Nearest centroid classification on a trapped ion quantum computer.
\newblock {\em npj Quantum Information}, 7(1):1--11, 2021.

\bibitem[Jor21]{quantumalgorithm}
Stephen Jordan.
\newblock Quantum algorithm zoo.
\newblock {\em https://quantumalgorithmzoo.org/}, 2021.

\bibitem[JWB05]{Janzing2005identity}
Dominik Janzing, Pawel Wocjan, and Thomas Beth.
\newblock Non-identity-check is {QMA}-complete.
\newblock {\em International Journal of Quantum Information}, 03(03):463--473,
  2005.

\bibitem[KBF{\etalchar{+}}15]{kelly2015state}
Julian Kelly, Rami Barends, Austin~G Fowler, Anthony Megrant, Evan Jeffrey,
  Theodore~C White, Daniel Sank, Josh~Y Mutus, Brooks Campbell, Yu~Chen, et~al.
\newblock State preservation by repetitive error detection in a superconducting
  quantum circuit.
\newblock {\em Nature}, 519(7541):66--69, 2015.

\bibitem[KL21]{kerenidis2020quantum}
Iordanis Kerenidis and Jonas Landman.
\newblock Quantum spectral clustering.
\newblock {\em Physical Review A}, 103(4):042415, 2021.

\bibitem[KLLP19]{kerenidis2019q}
Iordanis Kerenidis, Jonas Landman, Alessandro Luongo, and Anupam Prakash.
\newblock q-means: A quantum algorithm for unsupervised machine learning.
\newblock {\em Advances in Neural Information Processing Systems},
  32:4134--4144, 2019.

\bibitem[KP17]{kerenidis2017quantum}
Iordanis Kerenidis and Anupam Prakash.
\newblock Quantum recommendation systems.
\newblock In {\em 8th Innovations in Theoretical Computer Science Conference
  (ITCS 2017)}. Schloss Dagstuhl-Leibniz-Zentrum fuer Informatik, 2017.

\bibitem[LBMW03]{leibfried2003quantum}
Dietrich Leibfried, Rainer Blatt, Christopher Monroe, and David Wineland.
\newblock Quantum dynamics of single trapped ions.
\newblock {\em Reviews of Modern Physics}, 75(1):281, 2003.

\bibitem[LC17]{low2017optimal}
Guang~Hao Low and Isaac~L Chuang.
\newblock Optimal hamiltonian simulation by quantum signal processing.
\newblock {\em Physical review letters}, 118(1):010501, 2017.

\bibitem[LC19]{low2019hamiltonian}
Guang~Hao Low and Isaac~L Chuang.
\newblock Hamiltonian simulation by qubitization.
\newblock {\em Quantum}, 3:163, 2019.

\bibitem[LKS18]{low2018trading}
Guang~Hao Low, Vadym Kliuchnikov, and Luke Schaeffer.
\newblock Trading t-gates for dirty qubits in state preparation and unitary
  synthesis.
\newblock {\em arXiv preprint arXiv:1812.00954}, 2018.

\bibitem[LMR14]{lloyd2014quantum}
Seth Lloyd, Masoud Mohseni, and Patrick Rebentrost.
\newblock Quantum principal component analysis.
\newblock {\em Nature Physics}, 10(9):631--633, 2014.

\bibitem[MLA{\etalchar{+}}22]{madsen2022quantum}
Lars~S Madsen, Fabian Laudenbach, Mohsen~Falamarzi Askarani, Fabien Rortais,
  Trevor Vincent, Jacob~FF Bulmer, Filippo~M Miatto, Leonhard Neuhaus, Lukas~G
  Helt, Matthew~J Collins, et~al.
\newblock Quantum computational advantage with a programmable photonic
  processor.
\newblock {\em Nature}, 606(7912):75--81, 2022.

\bibitem[MV06]{mottonen2005decompositions}
Mikka M{\"o}tt{\"o}nen and Juha~J Vartiainen.
\newblock Decompositions of general quantum gates.
\newblock {\em Trends in Quantum Computing Research}, 2006.

\bibitem[PB11]{plesch2011quantum}
Martin Plesch and {\v{C}}aslav Brukner.
\newblock Quantum-state preparation with universal gate decompositions.
\newblock {\em Physical Review A}, 83(3):032302, 2011.

\bibitem[PFM{\etalchar{+}}21]{pogorelov2021compact}
Ivan Pogorelov, Thomas Feldker, Ch~D Marciniak, Lukas Postler, Georg Jacob,
  Oliver Krieglsteiner, Verena Podlesnic, Michael Meth, Vlad Negnevitsky,
  Martin Stadler, et~al.
\newblock Compact ion-trap quantum computing demonstrator.
\newblock {\em PRX Quantum}, 2(2):020343, 2021.

\bibitem[PJM{\etalchar{+}}04]{petta2004manipulation}
JR~Petta, AC~Johnson, CM~Marcus, MP~Hanson, and AC~Gossard.
\newblock Manipulation of a single charge in a double quantum dot.
\newblock {\em Physical review letters}, 93(18):186802, 2004.

\bibitem[RML14]{rebentrost2014quantum}
Patrick Rebentrost, Masoud Mohseni, and Seth Lloyd.
\newblock Quantum support vector machine for big data classification.
\newblock {\em Physical review letters}, 113(13):130503, 2014.

\bibitem[Ros13]{rosenbaum2013optimal}
David~J Rosenbaum.
\newblock Optimal quantum circuits for nearest-neighbor architectures.
\newblock In {\em 8th Conference on the Theory of Quantum Computation,
  Communication and Cryptography}, page 294, 2013.

\bibitem[Ros21]{rosenthal2021query}
Gregory Rosenthal.
\newblock Query and depth upper bounds for quantum unitaries via grover search.
\newblock {\em arXiv preprint arXiv:2111.07992}, 2021.

\bibitem[RSML18]{rebentrost2018quantum}
Patrick Rebentrost, Adrian Steffens, Iman Marvian, and Seth Lloyd.
\newblock Quantum singular-value decomposition of nonsparse low-rank matrices.
\newblock {\em Physical review A}, 97(1):012327, 2018.

\bibitem[Sav97]{savage1997survey}
Carla Savage.
\newblock A survey of combinatorial gray codes.
\newblock {\em SIAM review}, 39(4):605--629, 1997.

\bibitem[SGG{\etalchar{+}}07]{schroer2007electrostatically}
D~Schr{\"o}er, AD~Greentree, L~Gaudreau, K~Eberl, LCL Hollenberg, JP~Kotthaus,
  and S~Ludwig.
\newblock Electrostatically defined serial triple quantum dot charged with few
  electrons.
\newblock {\em Physical Review B}, 76(7):075306, 2007.

\bibitem[SMB04]{shende2004minimal}
Vivek~V Shende, Igor~L Markov, and Stephen~S Bullock.
\newblock Minimal universal two-qubit controlled-not-based circuits.
\newblock {\em Physical Review A}, 69(6):062321, 2004.

\bibitem[SNM{\etalchar{+}}13]{schindler2013quantum}
Philipp Schindler, Daniel Nigg, Thomas Monz, Julio~T Barreiro, Esteban
  Martinez, Shannon~X Wang, Stephan Quint, Matthias~F Brandl, Volckmar
  Nebendahl, Christian~F Roos, et~al.
\newblock A quantum information processor with trapped ions.
\newblock {\em New Journal of Physics}, 15(12):123012, 2013.

\bibitem[STY{\etalchar{+}}23]{sun2021asymptotically}
Xiaoming Sun, Guojing Tian, Shuai Yang, Pei Yuan, and Shengyu Zhang.
\newblock Asymptotically optimal circuit depth for quantum state preparation
  and general unitary synthesis.
\newblock {\em IEEE Transactions on Computer-Aided Design of Integrated
  Circuits and Systems}, 2023.

\bibitem[WHY{\etalchar{+}}19]{wu2019optimization}
Bujiao Wu, Xiaoyu He, Shuai Yang, Lifu Shou, Guojing Tian, Jialin Zhang, and
  Xiaoming Sun.
\newblock Optimization of cnot circuits on topological superconducting
  processors.
\newblock {\em arXiv preprint arXiv:1910.14478}, 2019.

\bibitem[WLH{\etalchar{+}}18]{wang201818}
Xi-Lin Wang, Yi-Han Luo, He-Liang Huang, Ming-Cheng Chen, Zu-En Su, Chang Liu,
  Chao Chen, Wei Li, Yu-Qiang Fang, Xiao Jiang, et~al.
\newblock 18-qubit entanglement with six photons’ three degrees of freedom.
\newblock {\em Physical review letters}, 120(26):260502, 2018.

\bibitem[WZP18]{wossnig2018quantum}
Leonard Wossnig, Zhikuan Zhao, and Anupam Prakash.
\newblock Quantum linear system algorithm for dense matrices.
\newblock {\em Physical review letters}, 120(5):050502, 2018.

\bibitem[YGW{\etalchar{+}}19]{ye2019propagation}
Yangsen Ye, Zi-Yong Ge, Yulin Wu, Shiyu Wang, Ming Gong, Yu-Ran Zhang, Qingling
  Zhu, Rui Yang, Shaowei Li, Futian Liang, et~al.
\newblock Propagation and localization of collective excitations on a 24-qubit
  superconducting processor.
\newblock {\em Physical review letters}, 123(5):050502, 2019.

\bibitem[YZ23]{yuan2022optimal}
Pei Yuan and Shengyu Zhang.
\newblock Optimal (controlled) quantum state preparation and improved unitary
  synthesis by quantum circuits with any number of ancillary qubits.
\newblock {\em Quantum}, 7:956, 2023.

\bibitem[ZHM{\etalchar{+}}16]{zajac2016scalable}
DM~Zajac, TM~Hazard, Xiao Mi, E~Nielsen, and Jason~R Petta.
\newblock Scalable gate architecture for a one-dimensional array of
  semiconductor spin qubits.
\newblock {\em Physical Review Applied}, 6(5):054013, 2016.

\bibitem[ZWD{\etalchar{+}}20]{zhong2020quantum}
Han-Sen Zhong, Hui Wang, Yu-Hao Deng, Ming-Cheng Chen, Li-Chao Peng, Yi-Han
  Luo, Jian Qin, Dian Wu, Xing Ding, Yi~Hu, et~al.
\newblock Quantum computational advantage using photons.
\newblock {\em Science}, 370(6523):1460--1463, 2020.

\bibitem[ZYY21]{zhang2021low}
Xiao-Ming Zhang, Man-Hong Yung, and Xiao Yuan.
\newblock Low-depth quantum state preparation.
\newblock {\em Physical Review Research}, 3(4):043200, 2021.

\end{thebibliography}

\appendix

\section{Basic quantum gates and circuits}
\label{append:basic_circuit}
In this section, we show definitions and implementations of some basic quantum gates and circuits which are used in the main text and other appendices.

\subsection{Proof of Lemma \ref{lem:cnot_path_constraint}}


\cnotpath*
\begin{proof}
Let the nodes along the shortest path in $G$ from $u$ to $v$ be $u_0, u_1, \cdots, u_{d}$ where $d=d(u,v)$, $u=u_0$ and $v=u_d$. $\Cnot_v^u$ can be implemented by the circuit in Fig. \ref{fig:cnot_path}, which has depth and size $O(d)$, and consists only of CNOT gates acting on adjacent qubits.
\begin{figure}[!hbt]
\centerline 
{
\Qcircuit @C=0.2em @R=0.5em {
\lstick{\scriptstyle u=u_0 ~~\ket{x}}&\ctrl{5}&\qw &\push{\scriptstyle \ket{x}}&&&\push{\scriptstyle \ket{x}}& \qw & \qw &\qw & \qw & \ctrl{1} & \qw & \qw & \qw & \qw & \qw &\qw &\qw &\qw &\ctrl{1} &\qw &\qw &\qw &\qw &\qw &\qw &\rstick{\scriptstyle \ket{x}}\\
\lstick{\scriptstyle u_1~~\ket{y_1}} &\qw &\qw &\push{\scriptstyle\ket{y_1}}&&&\push{\scriptstyle \ket{y_1}}&\qw &\qw &\qw & \ctrl{1} & \targ & \ctrl{1} &\qw & \qw & \qw & \qw & \qw& \qw &\ctrl{1} &\targ&\ctrl{1} &\qw &\qw &\qw &\qw &\qw &\rstick{\scriptstyle \ket{y_1}}\\
\lstick{\scriptstyle u_2~~\ket{y_2}} &\qw &\qw &\push{\scriptstyle\ket{y_2}}&&&\push{\scriptstyle \ket{y_2}}&\qw &\qw & \ctrl{1} & \targ & \qw & \targ & \ctrl{1} & \qw &\qw & \qw &\qw &\ctrl{1} &\targ &\qw &\targ& \qw &\ctrl{1}&\qw &\qw &\qw &\rstick{\scriptstyle \ket{y_2}}\\
\lstick{\vdots~~}&\qw &\qw &\vdots&=&&\push{\vdots~~}&\qw & \ctrl{1} &\targ & \qw & \qw &  \qw & \targ & \ctrl{1} & \qw &\qw &\ctrl{1}&\targ &\qw &\qw &\qw &\qw &\targ&\qw&\ctrl{1}&\qw &\rstick{\vdots}\\
\lstick{\scriptstyle u_{d-1}~~\ket{y_{d-1}}}&\qw &\qw &\push{\scriptstyle\ket{y_{d-1}}}&&&\push{\scriptstyle \ket{y_{d-1}}}&\ctrl{1} & \targ &\qw & \qw & \qw & \qw & \qw & \targ & \ctrl{1} &\qw &\targ &\qw &\qw &\qw &\qw &\qw &\qw& \qw &\targ &\qw &\rstick{\scriptstyle \ket{y_{d-1}}}\\
\lstick{\scriptstyle v=u_{d}~~\ket{y_d}}&\targ&\qw &\push{\scriptstyle\ket{x\oplus y_d}}&&&\push{\scriptstyle \ket{y_d}}&\targ & \qw &\qw & \qw & \qw & \qw & \qw & \qw &\targ &\qw &\qw &\qw &\qw &\qw &\qw &\qw &\qw &\qw &\qw &\qw &\rstick{\scriptstyle \ket{x\oplus y_d}}\\
}
}
\caption{Implementation of a $\Cnot_v^u$ gate under path constraint by $O(d)$ CNOT gates acting on adjacent qubits.}\label{fig:cnot_path}
\end{figure}
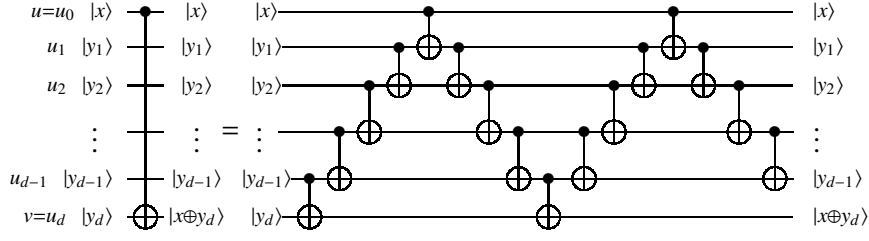
\end{proof}

\subsection{Basic quantum gates and circuits}

The SWAP gate $\textsf{SWAP}^{i}_{j}$ implements $\textsf{SWAP}^{i}_{j}\ket{x}_{i}\ket{y}_{j}= \ket{y}_{i}\ket{x}_{j}$ for any $x,y\in \B$, and can be realized by three CNOT gates, viz., $\textsf{SWAP}^{i}_{j}=\textsf{CNOT}^{i}_{j}\textsf{CNOT}^{j}_{i}\textsf{CNOT}^{i}_{j}$. The implementation of SWAP gates are used in Appendices \ref{append:diag_without_ancilla},\ref{append:diag_with_ancilla}, and \ref{append:binary_tree_improvement}. 

Two natural extensions of the CNOT gate are the Toffoli (multi-controlled not) gate and the multi-target CNOT gate. For qubit set $S$ and string $y\in \B^{|S|}$, the $(|S|+1)$-qubit Toffoli gate $\textsf{Tof}^S_i(y)$ is defined as 
\[\ket{x}_S\ket{b}_i\to\ket{x}_S\ket{1_{[x=y]}\oplus b}_i, \forall x\in \B^{|S|},\forall b\in \B,\]
where $1_{[x=y]} = 1$ if $x=y$, and 0 otherwise. Here $S$ is the control qubit set and $i$ is the target qubit. This extends the standard Toffoli gate in which $y = 11...1$.
\begin{lemma}[\cite{multi-controlled-gate}]\label{lem:tof}
An $n$-qubit Toffoli gate ${\sf Tof}^S_i(y)$ can be implemented by a quantum circuit of size and depth $O(n)$.
\end{lemma}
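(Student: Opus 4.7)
The plan is to reduce the generalized Toffoli $\mathsf{Tof}^S_i(y)$, which flips qubit $i$ exactly when the control register is in state $\ket{y}$, to the standard multiply-controlled NOT (``all-ones'' controls) and then invoke an efficient implementation of the latter.

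First I would strip off the dependence on the pattern $y \in \{0,1\}^{|S|}$ by conjugation: for every $j \in S$ with $y_j = 0$, apply an $X$ gate on qubit $j$ both before and after the all-ones controlled NOT. This rewrites the condition ``all controls equal $1$'' as ``controls equal $y$''. The overhead is at most $2|S| = O(n)$ single-qubit gates, all on disjoint qubits, hence contributing $O(n)$ to size and only $O(1)$ to depth.

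Next I would build the standard $(|S|+1)$-qubit multiply-controlled NOT by a ladder of $3$-qubit Toffolis on $|S|-1$ ancillas initialized to $\ket{0}$. Writing the controls as $x_{j_1}, \dots, x_{j_{|S|}}$, apply $\mathsf{CCX}$ gates that compute partial conjunctions $a_k \leftarrow a_{k-1} \wedge x_{j_{k+1}}$ into successive ancillas (with $a_1 = x_{j_1} \wedge x_{j_2}$), then apply a single $\mathsf{CNOT}$ from the last ancilla to the target $i$, and finally uncompute the ancillas by running the ladder in reverse. Since each $\mathsf{CCX}$ decomposes into $O(1)$ one- and two-qubit gates and there are $O(n)$ of them placed sequentially, both the size and the depth are $O(n)$. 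If one wishes to avoid ancillas, the same $O(n)$ size/depth bound follows from the Barenco et al.\ recursive construction cited in the paper.

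The only real obstacle is bookkeeping: one must verify that the uncomputation cleanly resets every ancilla to $\ket{0}$ regardless of the input $\ket{x}_S$. This is immediate, since each partial-AND step is a deterministic unitary and the reverse ladder exactly inverts the forward ladder after the target flip is factored out. Correctness on the $y$-pattern then follows from the $X$-conjugation in the first step.
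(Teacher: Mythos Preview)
The paper does not give its own proof of this lemma; it is imported wholesale from the cited reference \cite{multi-controlled-gate}, so there is no in-paper argument to compare against. Your sketch is the standard textbook construction and is correct as written. The $X$-conjugation step reducing the pattern $y$ to the all-ones case is exactly right, and the compute--copy--uncompute ladder of $\mathsf{CCX}$ gates on $|S|-1$ clean ancillas gives $O(n)$ size and $O(n)$ depth. One small caveat: the lemma as stated does not mention ancillas, and your ladder uses $n-2$ of them. If you want to match the statement literally you should either note explicitly that ancillas are available in the paper's only use of this lemma (Step~2.2 of Lemma~\ref{lem:uary_binary}, where $\Theta(2^n)$ ancillas are already present), or fall back on the Barenco et al.\ construction with a single borrowed qubit, which also achieves $O(n)$ size and depth. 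Either route closes the argument.
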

The definition and implementation of Toffoli gates is used in Appendix \ref{append:binary_tree_improvement}.

\begin{lemma}\label{lem:multicontrolcnot}
The $n+1$ qubit multi-target $\Cnot^i_{j_1, \ldots, j_n}$ gate can be implemented by a CNOT circuit consisting of $2n-1$ $\mathsf{CNOT}$ gates under $\Path_{n+1}$ constraint (see Fig.~\ref{fig:add-circuit}).
\end{lemma}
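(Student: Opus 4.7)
The plan is to give an explicit construction and verify its correctness by tracing the Boolean values on each qubit. Without loss of generality, label the $n+1$ qubits along the path as $1, 2, \ldots, n+1$. Consider first the case where the control $i$ sits at one endpoint; take $i=1$ with targets $j_k = k+1$ for $k\in[n]$. (If the control lies in the interior, the problem splits into two independent instances on the two sub-paths, which only reduces the total gate count, so handling the endpoint case suffices.) Let $c$ denote the initial value of qubit $1$ and $x_k$ the initial value of qubit $k+1$; the goal is the linear transformation $(c,x_1,\ldots,x_n)\mapsto (c,\,c\oplus x_1,\,\ldots,\,c\oplus x_n)$.

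The proposed circuit consists of three phases, each made of adjacent CNOTs: (i) a right-to-left sweep applying $\mathsf{CNOT}^{k}_{k+1}$ for $k=n,n-1,\ldots,2$, giving $n-1$ gates; (ii) a single $\mathsf{CNOT}^{1}_{2}$; and (iii) a left-to-right sweep applying $\mathsf{CNOT}^{k}_{k+1}$ for $k=2,3,\ldots,n$, giving another $n-1$ gates. The total is $(n-1)+1+(n-1)=2n-1$ adjacent CNOTs, which matches the claim.

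Correctness is established by tracking the state. After phase (i), qubit $1$ still holds $c$, qubit $2$ is untouched and still holds $x_1$, and for every $k\geq 2$ qubit $k+1$ holds $x_{k-1}\oplus x_k$. Phase (ii) then sets qubit $2$ to $c\oplus x_1$. For phase (iii), a one-line induction on $k$ suffices: if the control qubit $k$ entering the $k$-th gate of this sweep holds $c\oplus x_{k-1}$ and the target qubit $k+1$ holds $x_{k-1}\oplus x_k$, then applying $\mathsf{CNOT}^{k}_{k+1}$ leaves qubit $k$ unchanged and updates qubit $k+1$ to $(c\oplus x_{k-1})\oplus(x_{k-1}\oplus x_k) = c\oplus x_k$, advancing the invariant by one step.

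There is no real obstacle: the only thing to be careful about is bookkeeping of indices and the order of sweeps, so that the values deposited by phase (i) telescope correctly against the contribution of phase (ii) during phase (iii). The construction is exactly the one drawn in Fig.~\ref{fig:add-circuit}, and the gate count $2n-1$ is read off directly.
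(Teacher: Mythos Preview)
Your construction is exactly the one depicted in Fig.~\ref{fig:add-circuit}, and your inductive trace of the Boolean values correctly establishes the telescoping identity $(c\oplus x_{k-1})\oplus(x_{k-1}\oplus x_k)=c\oplus x_k$; the paper itself offers no proof beyond the figure, so you have in fact supplied more detail than the original. Your remark about an interior control splitting into two smaller instances (yielding $2n-2<2n-1$ gates) is a nice bonus not spelled out in the paper.
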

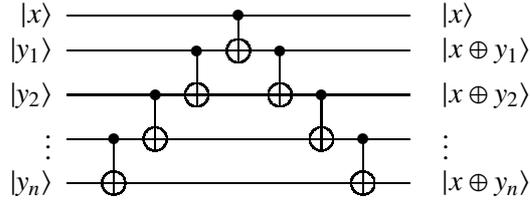
\begin{figure}[!hbt]
\centerline 
{
\Qcircuit @C=0.6em @R=0.7em {
\lstick{\ket{x}}&\qw & \qw &\qw & \qw & \ctrl{1} & \qw & \qw & \qw & \qw & \qw  &\rstick{\ket{x}} \\
\lstick{\ket{y_1}} &\qw &\qw &\qw & \ctrl{1} & \targ & \ctrl{1} &\qw & \qw & \qw & \qw  &\rstick{\ket{x\oplus y_1}}\\
\lstick{\ket{y_2}} &\qw &\qw & \ctrl{1} & \targ & \qw & \targ & \ctrl{1} & \qw &\qw & \qw   &\rstick{\ket{x\oplus y_2}}\\
\lstick{\vdots}&\qw & \ctrl{1} &\targ & \qw & \qw &  \qw & \targ & \ctrl{1} & \qw &\qw   &\rstick{\vdots}\\
\lstick{\ket{y_{n}}}&\qw & \targ &\qw & \qw & \qw & \qw & \qw & \targ & \qw &\qw   &\rstick{\ket{x\oplus y_{n}}}
}
}\caption{CNOT circuit for implementing the $n+1$ qubit multi-target $\mathsf{CNOT}$ gate.}\label{fig:add-circuit}
\end{figure}

\begin{lemma}[\cite{mottonen2005decompositions}]\label{lem:diag_size}
Any $n$-qubit diagonal unitary matrix $\Lambda_n$ can be implemented by a quantum circuit of size $O(2^n)$.
\end{lemma}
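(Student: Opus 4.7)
The plan is to implement $\Lambda_n$ recursively in the number of qubits by exploiting the framework of Section~\ref{sec:ucg}. Recall that realizing $\Lambda_n$ reduces to effecting, for every $s \in \{0,1\}^n - \{0^n\}$, the phase operation $\ket{x} \mapsto e^{i\alpha_s \langle s, x\rangle}\ket{x}$, where the coefficients $\{\alpha_s\}$ are determined once and for all from the diagonal entries $\theta(x)$ by a Walsh--Hadamard transform (Eq.~\eqref{eq:alpha}). For a single $s$, one can pick any pivot qubit $j$ with $s_j = 1$, use CNOTs targeted at $j$ to compute $\ket{\langle s, x\rangle}$ on qubit $j$, apply the single-qubit rotation $R(\alpha_s)$ of Eq.~\eqref{eq:rotation}, and then uncompute. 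Done independently for each $s$, this costs $\Theta(n)$ gates per string for a total of $\Theta(n\,2^n)$, which is too expensive.

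The main idea is to amortize the CNOT overhead via a Gray code. I would split the $2^n - 1$ nonzero strings $s$ according to whether $s_n = 1$ or $s_n = 0$. For the $2^{n-1}$ strings with $s_n = 1$, fix qubit $n$ as the pivot and enumerate the first $n-1$ bits of $s$ in a cyclic $(n-1)$-bit Gray code $s^{(1)}, s^{(2)}, \ldots, s^{(2^{n-1})}$ guaranteed by Lemma~\ref{lem:GrayCode}, starting from $s^{(1)} = 0^{n-1}1$ so that qubit $n$ initially holds $x_n = \langle s^{(1)}, x\rangle$. Between consecutive strings $s^{(p)}, s^{(p+1)}$ only one prefix bit flips, say in position $k \in [n-1]$, so a single $\Cnot^k_n$ updates the contents of qubit $n$ from $\ket{\langle s^{(p)}, x\rangle}$ to $\ket{\langle s^{(p+1)}, x\rangle}$. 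Interleaving the rotations $R(\alpha_{s^{(p)}})$ with these CNOTs, and using the fact that a cyclic Gray code returns to its starting point, handles all strings with $s_n = 1$ using $O(2^{n-1})$ CNOT gates and $O(2^{n-1})$ single-qubit gates, while leaving qubit $n$ restored to $x_n$ at the end.

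The remaining strings with $s_n = 0$ depend only on the first $n-1$ qubits, so the corresponding operation is itself an $(n-1)$-qubit diagonal unitary acting on qubits $1, \ldots, n-1$ (tensored with the identity on qubit $n$), which can be implemented recursively. This yields the recurrence $T(n) = T(n-1) + O(2^n)$ with base case $T(1) = O(1)$ (a single $R(\alpha_1)$), solving to $T(n) = O(2^n)$. I do not anticipate a serious obstacle: the two points to verify with care are that the cyclic Gray code indeed allows the pivot qubit to be restored without extra asymptotic cost, and that combining the two halves produces the total phase $\prod_s e^{i\alpha_s \langle s, x\rangle} = e^{i\theta(x)}$ on each basis state as required by Eq.~\eqref{eq:alpha}.
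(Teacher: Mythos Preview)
Your proof is correct and is essentially the standard Gray-code construction for diagonal unitaries. Note, however, that the paper does not prove this lemma at all: it simply cites it as a known result from \cite{mottonen2005decompositions}, so there is no ``paper's own proof'' to compare against. Your argument is in fact the classical construction underlying that reference, and it also mirrors (in a simpler, sequential form) the Gray-code machinery the paper develops later in Section~\ref{sec:diag_without_ancilla} for its depth-optimized constructions.
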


Lemmas \ref{lem:multicontrolcnot} and \ref{lem:diag_size} are used in Appendix \ref{append:diag_without_ancilla}.

\section{Circuit constructions for diagonal unitary matrices without ancillary qubits under qubit connectivity constraints}
\label{append:diag_without_ancilla}

\subsection{Circuit framework (Proofs omitted from Section~\ref{sec:diag_without_ancilla_framework_main})}
\label{sec:diag_without_ancilla_framework}

\Ck*
\begin{proof}

For each $p\in[2^{r_c}]$, define the set $F_k^{(p)}$ by
\begin{equation}\label{eq:Fkp}
 F_k^{(p)}=\left\{s:\ s\in F_k\text{~and~}s=c_{p}^{j_i }t_i^{(k)}\text{ for some }i\in[r_t]\right\} .
\end{equation}
By the definition of $F_k$ in Eq. \eqref{eq:F_k}, the collection of $F_k^{(p)}$'s satisfy
\begin{align}
    & F_k^{(i)}\cap F_{k}^{(j)}=\emptyset \text{~for~all~} i\neq j\in[2^{r_{c}}], \label{eq:Fkp_1}\\
    & F_k=\bigcup_{p\in[2^{r_c}]}F_k^{(p)}.\label{eq:Fkp_2}
\end{align}
We implement $C_k$ in two stages, via $U_{Gen}^{(k)}$ (Eq. \eqref{eq:Ugen_Graph}) and  $U^{(k)}_{Gray}$ (Eq. \eqref{eq:Ugray_Graph}). 

By Lemma \ref{lem:cnot_circuit}, $U_{Gen}^{(k)}$ can be implemented by a CNOT circuit of depth and size $O(n^2)$. 

$U^{(k)}_{Gray}$ can be realized by a quantum circuit consisting of the following steps $U_1,U_2,\ldots,U_{2^{r_c}+1}$.
\begin{align*}
     &\ket{x_{control}}_{\textsf{C}}\ket{y^{(k)}}_{\textsf{T}} &\\
     &= \ket{x_{control}}_{\textsf{C}}\ket{\langle 0^{r_c}t_{1}^{(k)},x\rangle,\langle 0^{r_c}t_{2}^{(k)},x\rangle,\cdots,\langle 0^{r_c}t_{r_t}^{(k)},x\rangle}_{\textsf{T}} & ( \text{by Eq. \eqref{eq:yk}})\\
     &= \ket{x_{control}}_{\textsf{C}}\ket{\langle c_1^{j_1 }t_{1}^{(k)},x\rangle,\langle c_1^{j_2 }t_{2}^{(k)},x\rangle,\cdots,\langle c_1^{j_{r_t} }t_{r_t}^{(k)},x\rangle}_{\textsf{T}} & (c_1^{j _i}=0^{r_c}, \forall i\in[r_t] ) \\
     & \xrightarrow{U_1} e^{i\sum_{s\in F_{k}^{(1)}}\langle s,x\rangle\alpha_s}\ket{x_{control}}_{\textsf{C}} \ket{\langle c_1^{j_1 }t_{1}^{(k)},x\rangle,\langle c_1^{j_2 }t_{2}^{(k)},x\rangle,\cdots,\langle c_1^{j_{r_t} }t_{r_t}^{(k)},x\rangle}_{\textsf{T}} & (\text{by~Eq. }     \eqref{eq:Fkp}) \\
     &\xrightarrow{U_2} e^{i\sum_{s\in F_{k}^{(1)}\cup F_{k}^{(2)}}\langle s,x\rangle\alpha_s}\ket{x_{control}}_{\textsf{C}} \ket{\langle c_2^{j_1 }t_{1}^{(k)},x\rangle,\langle c_2^{j_2 }t_{2}^{(k)},x\rangle,\cdots,\langle c_2^{j_{r_t} }t_{r_t}^{(k)},x\rangle}_{\textsf{T}} & (\text{by~Eq.\eqref{eq:Fkp},  \eqref{eq:Fkp_1}})\\
     & ~~~~\vdots & \\
     &\xrightarrow{U_{2^{r_c}}} e^{i\sum_{s\in \bigcup_{p\in[2^{r_c}]}F_k^{(p)}}\langle s,x\rangle\alpha_s}\ket{x_{control}}_{\textsf{C}} \ket{\langle c_{2^{r_c}}^{j_1 }t_{1}^{(k)},x\rangle,\langle c_{2^{r_c}}^{j_2 }t_{2}^{(k)},x\rangle,\cdots,\langle c_{2^{r_c}}^{j_{r_t} }t_{r_t}^{(k)},x\rangle}_{\textsf{T}} & (\text{by~Eq.\eqref{eq:Fkp},  \eqref{eq:Fkp_1}})\\
     & = e^{i\sum_{s\in F_k}\langle s,x\rangle\alpha_s}\ket{x_{control}}_{\textsf{C}} \ket{\langle c_{2^{r_c}}^{j_1 }t_{1}^{(k)},x\rangle,\langle c_{2^{r_c}}^{j_2 }t_{2}^{(k)},x\rangle,\cdots,\langle c_{2^{r_c}}^{j_{r_t} }t_{r_t}^{(k)},x\rangle}_{\textsf{T}} & (\text{by~Eq.}\eqref{eq:Fkp_2})\\
     &\xrightarrow{U_{2^{r_c}+1}} e^{i\sum_{s\in F_k}\langle s,x\rangle\alpha_s}\ket{x_{control}}_{\textsf{C}} \ket{\langle c_{1}^{j_1 }t_{1}^{(k)},x\rangle,\langle c_{1}^{j_2 }t_{2}^{(k)},x\rangle,\cdots,\langle c_{1}^{j_{r_t} }t_{r_t}^{(k)},x\rangle}_{\textsf{T}}\\
     &= e^{i\sum_{s\in F_k}\langle s,x\rangle\alpha_s}\ket{x_{control}}_{\textsf{C}} \ket{y^{(k)}}_{\textsf{T}} & ( \text{by Eq.\eqref{eq:yk}})
\end{align*}

%
For every $p\in\{2,3,\ldots,2^{r_c}\}$, $U_p$ itself consists of Phase 1 and a two-step Phase two (comprising steps $p$.1 and $p$.2, see main text for details). The depth and size of Step $p$.1 are $\mathcal{D}(C_{p.1})$ and $\mathcal{S}(C_{p.1})$ by definition. Step $p$.2 consist of $R(\theta)$ gates applied on different qubits in the target register, which can be implemented in depth 1 and size $O(r_t)$. 
For any $i\in[r_t]$, $c_{2^{r_c}}^{j_i}$ and $c_1^{j_i}$ differ in the $h_{j_i,1}$-th bit. $U_{2^{r_c}+1}$ can be implemented by adding $x_{h_{j_i,1}}$ to the $i$-th qubit of target register $\sf T$, using a CNOT circuit. 
Again, by Lemma \ref{lem:cnot_circuit}, step $2^{r_c}+1$ can be implemented by a circuit of depth and size $O(n^2)$.

Circuit $C_k$ thus has total depth $O(n^2)+O(1)+\sum_{p=2}^{2^{r_c}}(1+\mathcal{D}(C_{p.1}))+O(n^2)=O(n^2+2^{r_c}+\sum_{p=2}^{2^{r_c}}\mathcal{D}(C_{p.1}))$, and total size $O(n^2)+O(r_t)+\sum_{p=2}^{2^{r_c}}(r_t+\mathcal{S}(C_{p.1}))+O(n^2)=O(n^2+r_t2^{r_c}+\sum_{p=2}^{2^{r_c}}\mathcal{S}(C_{p.1}))$.
\end{proof}

\Lambdarc*
\begin{proof}
By Lemma \ref{lem:diag_size}, in the absence of any graph constraint, $\Lambda_{r_c}$ can be implemented by a quantum circuit of size (and thus also depth) $O(2^{r_c})$. 
Under arbitrary graph constraint, the distance between control and target qubits of any CNOT gate is at most $O(n)$, which can be realized by a circuit of size $O(n)$ (Lemma \ref{lem:cnot_path_constraint}). Therefore, the required circuit size and depth for $\Lambda_{r_c}$ is $O(n)\cdot O(2^{r_c})=O(n2^{r_c})$.
\end{proof}

The following lemma proves the correctness of the framework shown in Fig. \ref{fig:diag_without_ancilla_framwork}.
\begin{lemma}\label{lem:diag_without_ancilla_correctness}
Any diagonal unitary matrix $\Lambda_n$ can be realized by the quantum circuit  
\begin{equation}\label{eq:framework_withoutancilla}
   (\Lambda_{r_c}\otimes\mathcal{R})\Pi^{\dagger} C_{\ell}C_{\ell-1}\cdots C_{1}\Pi 
\end{equation}
shown in Fig. \ref{fig:diag_without_ancilla_framwork}, under arbitrary graph constraint.
\end{lemma}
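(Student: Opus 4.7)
My plan is to prove Lemma~\ref{lem:diag_without_ancilla_correctness} by tracing a computational basis input $\ket{x}$ through the five blocks of the framework in Fig.~\ref{fig:diag_without_ancilla_framwork} and checking that the accumulated phase equals $\theta(x)$ from Eq.~\eqref{eq:diag}. Since every block acts diagonally on $x_{control}$ and performs a fixed $\mathbb{F}_2$-linear action on the suffix register, a single basis-state trace suffices; implementability under an arbitrary connected graph has already been handled blockwise by Lemmas~\ref{lem:Ck}, \ref{lem:reset}, and \ref{lem:Lambda_rc}, together with Lemma~\ref{lem:cnot_circuit} for $\Pi$ and $\Pi^\dagger$.

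I will proceed in four steps, matching the rightmost-first order of $(\Lambda_{r_c}\otimes\mathcal R)\,\Pi^\dagger\,C_\ell\cdots C_1\,\Pi$. First, apply $\Pi$: by Eq.~\eqref{eq:pi} this produces $\ket{x_{control}}_{\sf C}\ket{x_{target}}_{\sf T}$, and by Eq.~\eqref{eq:yk} $x_{target}$ is exactly $y^{(0)}$. Second, apply $C_1,\ldots,C_\ell$ in order using Eq.~\eqref{eq:Ck}; because the $F_k$'s are pairwise disjoint, the phases simply add and the suffix register evolves through $\ket{y^{(1)}},\ldots,\ket{y^{(\ell)}}$. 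Third, $\Pi^\dagger$ returns the contents of $\sf C$ and $\sf T$ to qubit sets $[r_c]$ and $[n]-[r_c]$, preserving the phase. Fourth, apply $\Lambda_{r_c}\otimes\mathcal R$: by Eq.~\eqref{eq:reset}, $\mathcal R$ sends $\ket{y^{(\ell)}}\mapsto\ket{y^{(0)}}=\ket{x_{r_c+1}\cdots x_n}$, while by Eq.~\eqref{eq:Lambda_rc}, $\Lambda_{r_c}$ contributes the phase $\exp\bigl(i\sum_{c\ne 0^{r_c}}\alpha_{c0^{r_t}}\langle c0^{r_t},x\rangle\bigr)$.

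Combining these, the circuit outputs $e^{i\Phi(x)}\ket{x}$ with
\[
\Phi(x)=\sum_{s\in\bigcup_{k\in[\ell]} F_k}\alpha_s\langle s,x\rangle\;+\;\sum_{c\in\{0,1\}^{r_c}-\{0^{r_c}\}}\alpha_{c0^{r_t}}\langle c0^{r_t},x\rangle.
\]
The combinatorial heart of the argument is the disjoint-union identity
\[
\{0,1\}^n-\{0^n\}\;=\;\Bigl(\bigcup_{k\in[\ell]} F_k\Bigr)\;\sqcup\;\bigl(\{c0^{r_t}:c\in\{0,1\}^{r_c}\}-\{0^n\}\bigr),
\]
which follows directly from Eqs.~\eqref{eq:F_k} and~\eqref{eq:set_eq}. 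Together with the convention $\alpha_{0^n}=0$ and Eq.~\eqref{eq:alpha}, this yields $\Phi(x)=\theta(x)$, so the composite circuit acts as $\Lambda_n$ on every basis state.

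The only subtlety I anticipate when writing up is making the $\mathcal R$ step precise: Eq.~\eqref{eq:yk} shows that each coordinate of $y^{(k)}$ depends on $x$ only through $x_{r_c+1},\ldots,x_n$, since every relevant string enters as $0^{r_c}t_j^{(k)}$. Hence $y^{(\ell)}$ is a fixed invertible $\mathbb{F}_2$-linear image of $y^{(0)}$, which makes the $x_{control}$-independent reset $\mathcal R$ well-defined, and $\mathcal R$ commutes with $\Lambda_{r_c}$ because they act on disjoint qubits. Everything else is routine bookkeeping.
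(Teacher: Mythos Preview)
Your proposal is correct and follows essentially the same approach as the paper's proof: trace a computational basis state $\ket{x}$ through $\Pi$, $C_1,\ldots,C_\ell$, $\Pi^\dagger$, and $\Lambda_{r_c}\otimes\mathcal R$, using Eqs.~\eqref{eq:pi}, \eqref{eq:yk}, \eqref{eq:Ck}, \eqref{eq:reset}, \eqref{eq:Lambda_rc}, the disjointness of the $F_k$, and the set identity~\eqref{eq:set_eq} together with~\eqref{eq:alpha} to conclude the accumulated phase is $\theta(x)$. Your extra remark on why $\mathcal R$ is well-defined (because $y^{(\ell)}$ depends only on the suffix bits) is a nice clarification that the paper leaves implicit.
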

\begin{proof}
For any input state $|x\rangle_{[n]}$, the quantum circuit $(\Lambda_{r_c}\otimes\mathcal{R})\Pi^{\dagger} C_{\ell}C_{\ell-1}\cdots C_{1}\Pi$ performs the following sequence of operations.
    \begin{align*}
     \ket{x}_{[n]}  & =\ket{x_1 x_2\cdots x_{r_c}}_{[r_c]}\ket{ x_{r_c+1}\cdots x_n}_{[n]-[r_c]}\\
     & \xrightarrow{\Pi} \ket{x_{control}}_{\textsf{C}}\ket{x_{target}}_{\textsf{T}} & (\text{by Eq. \eqref{eq:pi}}) \\
     &  = \ket{x_{control}}_{\textsf{C}} \ket{y^{(0)}}_{\textsf{T}} & (\text{by definition of } y^{(0)}, \text{Eq. \eqref{eq:yk}}) \\
      &  \xrightarrow{C_1} e^{i\sum_{s\in F_1}\langle s,x\rangle\alpha_s}\ket{x_{control}}_{\textsf{C}} \ket{y^{(1)}}_{\textsf{T}} & \text{(by Eq. \eqref{eq:Ck})}\\
  &  \xrightarrow{C_2} e^{i\sum_{s\in F_1\cup F_2 }\langle s,x\rangle\alpha_s}\ket{x_{control}}_{\textsf{C}} \ket{y^{(2)}}_{\textsf{T}} & (\text{by Eq. \eqref{eq:Ck} and } F_1\cap F_2 = \emptyset) \\
    & ~~~~\vdots \\
  &  \xrightarrow{C_\ell} e^{i\sum_{s\in\bigcup_{k\in[\ell]}F_k }\langle s,x\rangle\alpha_s}\ket{x_{control}}_{\textsf{C}} \ket{y^{(\ell)}}_{\textsf{T}} & (\text{by Eq.\eqref{eq:Ck} and } F_{j}\cap F_\ell = \emptyset,\ \forall j\in[\ell-1])\\
  & \xrightarrow{\Pi^{\dagger}} e^{i\sum_{s\in\bigcup_{k\in[\ell]}F_k }\langle s,x\rangle\alpha_s} \ket{x_{control}}_{[r_c]}\ket{y^{(\ell)}}_{[n]-[r_c]} & \text{(by Eq. \eqref{eq:pi})}\\
   &  \xrightarrow{\mathbb{I}_{r_c}\otimes \mathcal{R}} e^{i\sum_{s\in\bigcup_{k\in[\ell]}F_k }\langle s,x\rangle\alpha_s}\ket{x_{control}}_{[r_c]} \ket{y^{(0)}}_{[n]-[r_c]}& \text{(by Eq. \eqref{eq:reset})}\\
 &  \xrightarrow{\Lambda_{r_c}\otimes \mathbb{I}_{r_t}} e^{i\sum_{s\in\left(\bigcup_{k\in[\ell]}F_k\right)\cup \left(\left\{c0^{r_t}\right\}_{c\in\{0,1\}^{r_c}-\{0^{r_c}\}}\right) }\langle s,x\rangle\alpha_s}\ket{x_{control}}_{[r_c]} \ket{y^{(0)}}_{[n]-[r_c]}& \text{(by Eq. \eqref{eq:Lambda_rc})}\\
    & = e^{i\sum_{s\in\{0,1\}^n-\{0^{n}\} }\langle s,x\rangle\alpha_s}\ket{x_{control}}_{[r_c]} \ket{y^{(0)}}_{[n]-[r_c]} & \text{(by Eq. \eqref{eq:set_eq})}\\
    & = e^{i\theta(x)}\ket{x_{control}}_{[r_c]}\ket{y^{(0)}}_{[n]-[r_c]} & \text{(by Eq. \eqref{eq:alpha})}\\
    & = e^{i\theta(x)}\ket{x}_{[n]}
    \end{align*}
\end{proof}

\subsection{Circuit implementation under $\Path_{n}$ and $\Grid_{n}^{n_1,n_2,\ldots,n_d}$ constraints (Proofs omitted from Section \ref{sec:diag_without_ancilla_path_main})}
\label{sec:diag_without_ancilla_path}


\pipath*
\begin{proof}
The effect of $\Pi^{path}$ is shown in Fig. \ref{fig:pi_n^k}, and the transformation can be implemented by a sequence of $O(n^2)$ SWAP operations: each qubit $i\in \{ (n+\tau)/2+1, (n+\tau)/2+2,\ldots, n\}$ can be moved from its original to final position using $O(n)$ SWAP operations between adjacent qubits, and the SWAPs for different qubits can be implemented in parallel in a pipeline.
    \begin{figure}[!ht]
        \centering
    \begin{tikzpicture}
    \draw (-2,0) -- (1.2,0) (1.8,0) -- (5.2,0) (5.8,0) -- (7,0);
    
    \draw [fill=black] (-2,0) circle (0.05)
                       (-1,0) circle (0.05)
                       (0,0) circle (0.05)
                      (1,0) circle (0.05)
                      (2,0) circle (0.05)
                      (3,0) circle (0.05);
    \draw [fill=black] (4,0) circle (0.05)
                      (5,0) circle (0.05)
                      (6,0) circle (0.05)
                      (7,0) circle (0.05);
    \draw (-3,0) node{qubit};
    \draw (-3,-0.8) node{state};
    \draw (-3,-2) node{qubit};
    \draw (-3,-2.8) node{state};
                       
     \draw (1.2,0) node[anchor=west]{\scriptsize $\cdots$} (5.2,0) node[anchor=west]{\scriptsize $\cdots$};
          \draw (-2,-0.3) node{\scriptsize $1$} (-2,-0.8) node{\scriptsize $\ket{x_1}$};
         \draw (-1,-0.3) node{\scriptsize $2$} (-1,-0.8) node{\scriptsize $\ket{x_2}$};
     \draw (0,-0.3) node{\scriptsize $3$} (0,-0.8) node{\scriptsize $\ket{x_3}$}
          (1,-0.3) node{\scriptsize $4$} (1,-0.8) node{\scriptsize $\ket{x_4}$}
          (2,-0.3) node{\scriptsize ${r_c-1}$} (2,-0.8) node{\scriptsize $\ket{x_{r_c-1}}$}
          (3,-0.3) node{\scriptsize ${r_c}$} (3,-0.8) node{\scriptsize $\ket{x_{r_c}}$}
          (4,-0.3) node{\scriptsize ${r_c+1}$} (4,-0.8) node{\scriptsize \color{purple} $\ket{x_{r_c+1}}$}
          (5,-0.3) node{\scriptsize ${r_c+2}$} (5,-0.8) node{\scriptsize \color{purple} $\ket{x_{r_c+2}}$}
          (6,-0.3) node{\scriptsize ${n-1}$} (6,-0.8) node{\scriptsize \color{purple} $\ket{x_{n-1}}$}
          (7,-0.3) node{\scriptsize ${n}$} (7,-0.8) node{\scriptsize \color{purple} $\ket{x_n}$};
          
     \draw (1.2,-2) node[anchor=west]{\scriptsize $\cdots$} (5.2,-2) node[anchor=west]{\scriptsize $\cdots$};
        \draw (-2,-2) -- (1.2, -2) (1.8,-2) -- (5.2,-2) (5.8,-2) -- (7,-2);
    
    \draw [fill=black] (-2,-2) circle (0.05) (0,-2) circle (0.05)
                      (2,-2) circle (0.05) (4,-2) circle (0.05) (5,-2) circle (0.05) (6,-2) circle (0.05) (7,-2) circle (0.05);
    \draw [fill=purple, draw=purple] (-1,-2) circle (0.05) (1,-2) circle (0.05) (3,-2) circle (0.05);
            \draw (-2,-2.3) node{\scriptsize $1$} (-2,-2.8) node{\scriptsize $\ket{x_1}$};
        \draw (-1,-2.3) node{\scriptsize $2$} (-1,-2.8) node{\scriptsize \color{purple} $\ket{x_{r_c+1}}$};
     \draw (0,-2.3) node{\scriptsize $3$} (0,-2.8) node{\scriptsize $\ket{x_2}$}
          (1,-2.3) node{\scriptsize $4$} (1,-2.8) node{\scriptsize \color{purple} $\ket{x_{r_c+2}}$}
          (2,-2.3) node{\scriptsize ${n-\tau-1}$} (2,-2.8) node{\scriptsize $\ket{x_{r_t}}$}
          (3,-2.3) node{\scriptsize ${n-\tau}$} (3,-2.8) node{\scriptsize \color{purple} $\ket{x_{n}}$}
          (4,-2.3) node{\scriptsize ${n-\tau+1}$} (4,-2.8) node{\scriptsize  $\ket{x_{r_t+1}}$}
          (5,-2.3) node{\scriptsize ${n-\tau+2}$} (5,-2.8) node{\scriptsize  $\ket{x_{r_t+2}}$}
          (6,-2.3) node{\scriptsize ${n-1}$} (6,-2.8) node{\scriptsize $\ket{x_{r_c-1}}$}
          (7,-2.3) node{\scriptsize ${n}$} (7,-2.8) node{\scriptsize  $\ket{x_{r_c}}$};
       \draw (2.5,-1.5) node{$\downarrow~\Pi^{path}$};           
    
    \end{tikzpicture}
        \caption{$\Pi^{path}$. In the lower figure, the qubits in red form register $\sf T$, and those in black form register $\sf C$.}
        \label{fig:pi_n^k}
     \end{figure}
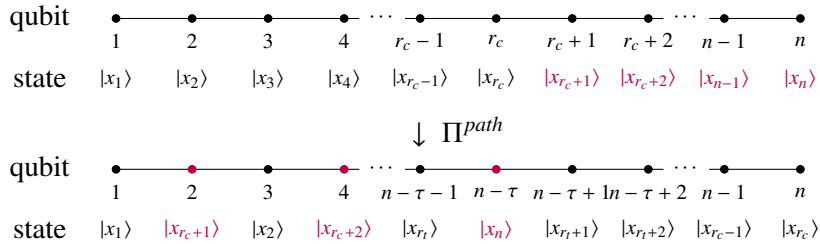
\end{proof}


\Ukwithoutancilla*

\begin{proof}
\textbf{Case 1 ($k\le \tau+1$).} If $k=1$, apply a CNOT circuit $\Pi_{j=1}^{r_t}\textsf{CNOT}^{2j-1}_{2j}$ of depth $1$ and size $O(r_t)$. 
Let us first show how to implement $y_1\oplus x_k$: (i) use a sequence of $2k-4$  SWAPs to move $y_1$ and $x_k$ adjacent to each other; (ii) apply a CNOT gate to change $\ket{y_1}$ to $\ket{y_1\oplus x_k}$; (iii) undo the first sequence of SWAPs to return the qubits to their original positions. The overall cost is $O(k)$ SWAP gates and one CNOT gate. Since each SWAP can be implemented by three CNOT gates, the cost to implement $y_1\oplus x_k$ is $O(k)$ CNOT gates. We can effect $y_2\oplus x_{k+1}$, $\ldots$, $y_{r_t}\oplus x_{r_t+k-1}$ similarly, and these can be implemented in parallel. The overall depth and size are $O(k)$ and $O(k r_t)$, respectively, as claimed.

\textbf{Case 2 ($k\ge \tau+2$) .} $U^{(k)}$ can be implemented by a CNOT circuit  
containing two parts: The first part adds $x_{i+k-1}$ to $y_i$ (for each $i=1,\ldots, r_c-k+1$), which are $O(k)$ apart. The second part adds $x_{i-r_c+k-1}$ to $y_i$ (for each $i=r_c-k+2,\ldots, r_t$), which are $O(r_c-k)$ apart. Since $k\ge \tau + 2$, we know that $r_c-k \le r_c-\tau -2 \le r_t - 2$.
Therefore, this CNOT circuit can be implemented in depth and size 
\begin{align*}
    & \ (r_c-k+1)\cdot O(k)+(r_t-(r_c-k+2)+1)\cdot O(r_c-k) & (\text{by Lemma \ref{lem:cnot_path_constraint}}) \\
    = & \ r_t \cdot O(k) + k \cdot O(r_t) & (r_t < r_c \text{, and }  r_c-k \le r_t-2) \\ 
    = & \ O(r_t k).
\end{align*}
\end{proof} 

\begin{lemma}\label{lem:diag_path_withoutancilla}
Any $n$-qubit diagonal unitary matrix $\Lambda_n$ can be realized by a quantum circuit of depth $O(2^n/n)$ and size $O(2^n)$, under $\Path_n$ constraint without ancillary qubits.
\end{lemma}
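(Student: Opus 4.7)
My plan is to assemble the lemmas already developed for the framework of Fig.~\ref{fig:diag_without_ancilla_framwork} and simply tally the costs under the path-specific choice of registers. More precisely, I would instantiate Eq.~\eqref{eq:framework_withoutancilla} with $\tau\in\{2\lceil\log n\rceil,\,2\lceil\log n\rceil+1\}$ chosen so that $n-\tau$ is even, $r_c=(n+\tau)/2$, $r_t=(n-\tau)/2=\Theta(n)$, and the interleaved control/target placement ${\sf C}=\{2i-1:i\in[r_t]\}\cup\{2r_t+j:j\in[\tau]\}$, ${\sf T}=\{2i:i\in[r_t]\}$ from Section~\ref{sec:diag_without_ancilla_path_main}. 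Correctness then follows immediately from Lemma~\ref{lem:diag_without_ancilla_correctness}, so the only remaining task is to bound depth and size.

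For each of the five building-block types, the relevant bound is already available: $\Pi^{path}$ and its inverse cost depth $O(n)$ and size $O(n^2)$ by Lemma~\ref{lem:pi_path}; the reset $\mathcal{R}$ costs depth and size $O(n^2)$ by Lemma~\ref{lem:reset}; $\Lambda_{r_c}$ costs depth and size $O(n\cdot 2^{r_c})=O(n^2\cdot 2^{n/2})$ by Lemma~\ref{lem:Lambda_rc}, using $2^{r_c}=O(n\cdot 2^{n/2})$; and each $C_k$ costs depth $O(2^{r_c})$ and size $O(n\cdot 2^{r_c})$ by Lemma~\ref{lem:Ck_path}. The number of $C_k$ factors satisfies $\ell=O(2^{r_t}/r_t)$ by construction.

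Adding these contributions, the total depth is dominated by $\ell\cdot O(2^{r_c})=O(2^{r_t+r_c}/r_t)=O(2^n/n)$, since the remaining terms $O(n)$, $O(n^2)$, and $O(n^2 2^{n/2})$ are all $o(2^n/n)$. Similarly, the total size is dominated by $\ell\cdot O(n\cdot 2^{r_c})=O(n\cdot 2^n/r_t)=O(2^n)$, matching the claimed bounds.

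The main technical difficulty has already been absorbed into Lemma~\ref{lem:Ck_path} and the calibration $\tau=\Theta(\log n)$: the parameter $\tau$ is tuned precisely so that the tail $\sum_{k'\ge \tau+2}k'\cdot 2^{-k'}$ contributes only an $O(1/n)$ factor, which in turn cancels the $r_t$ blowup from implementing $U^{(k')}$ at long range on the chain. Once that calibration is in place, the $O(2^n/n)$ depth bound falls out of routine bookkeeping, so I do not expect any further obstacle at this step of the argument.
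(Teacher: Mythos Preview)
Your proposal is correct and matches the paper's proof essentially line by line: both invoke Lemma~\ref{lem:diag_without_ancilla_correctness} for correctness, plug in the path-specific bounds from Lemmas~\ref{lem:pi_path}, \ref{lem:Ck_path}, \ref{lem:reset}, and \ref{lem:Lambda_rc}, and sum using $\ell\le \frac{2^{r_t+2}}{r_t+1}-1$ to obtain depth $O(2^n/n)$ and size $O(2^n)$.
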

\begin{proof}

 By Lemma \ref{lem:diag_without_ancilla_correctness}, $\Lambda_n$ can be implemented by the circuit in Fig.~\ref{fig:diag_without_ancilla_framwork}. Recall that $r_c=\frac{n+\tau}{2}$, $r_t=n-r_c$, $2\lceil\log(n)\rceil\le \tau\le 2\lceil\log(n)\rceil+1$ and $\ell\le \frac{2^{r_t+2}}{r_t+1}-1$. Combining Lemmas \ref{lem:pi_path}, \ref{lem:Ck_path},  \ref{lem:reset} and \ref{lem:Lambda_rc}, the total depth and size of $\Lambda_n$ are
 \begin{align*}
     &\text{depth:}~2O(n)+\ell\cdot  O(2^{r_c})+O(n^2)+O(n2^{r_c})=O(2^n/n),\\
     &\text{size:}~2O(n^2)+\ell \cdot  O(r_t2^{r_c})+O(n^2)+O(n2^{r_c})=O(2^n),
 \end{align*}
under $\Path_n$ constraint.
\end{proof}

\subsection{Circuit implementation under $\Tree_n(d)$ constraints (Proof of Lemma~\ref{lem:diag_tree_withoutancilla})}
\label{sec:diag_without_ancilla_binarytree}




The \textit{depth} of a tree is the distance between the root and the furthest leaf. Note that a depth-$d$ tree has $d+1$ layers of nodes.


\paragraph{Choice of $\sf C$ and $\sf T$}
Label the qubits in the input register $[n]$ as follows.
For the binary tree $\Tree_n(2)$, label the root node with the empty string $\epsilon$. For a node with label $z$,  label its left and right children $z0$ and $z1$, respectively (see Fig.~\ref{fig:label_binarytree}). 

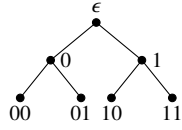
\begin{figure}[!hbt]
    \centering
    \begin{tikzpicture}
       \draw[fill=black]  (0,0) circle (0.05) (-0.6,-0.5) circle (0.05) (0.6,-0.5) circle (0.05) (-1,-1) circle (0.05) (-0.2,-1) circle (0.05) (1,-1) circle (0.05) (0.2,-1) circle (0.05);
       \draw (0,0)--(-0.6,-0.5)--(-1,-1) (-0.6,-0.5)--(-0.2,-1) (0,0)--(0.6,-0.5)--(1,-1) (0.6,-0.5)--(0.2,-1);
       \draw (0,0.2) node{\scriptsize $\epsilon$} (-0.4,-0.5) node{\scriptsize $0$} (0.8,-0.5) node{\scriptsize $1$} (-1,-1.2) node{\scriptsize $00$} (-0.2,-1.2) node{\scriptsize $01$} (1,-1.2) node{\scriptsize $11$} (0.2,-1.2) node{\scriptsize $10$} ;
    \end{tikzpicture}
    \caption{The labels of qubits in a depth-2 binary tree.}
    \label{fig:label_binarytree}
\end{figure}

Let $\epsilon$ denote an empty string. Define the set $\B^{\le k}:=\bigcup_{i=0}^{k}\B^{k}$, in which $\B^0:=\{\epsilon\}$.  Let $\kappa=\left\lceil\log(\frac{n+1}{2})\right\rceil$, $a=\left\lceil\log(2\log n)\right\rceil$. Let $\Tree_z^j=\{zy:y\in\{0,1\}^{\le j}\}$ denote the binary tree with root $z$ and depth $j$. A $\Tree_z^j$ consists of $j+1$ layers of qubits. The $n$ input qubits of $\Lambda_n$ are stored in a binary tree of depth $\kappa$, i.e. $\Tree_{\epsilon}^\kappa$. We divide these $n$ qubits into $O\left(\frac{n}{\log(n)}\right)$ binary subtrees, each of which has depth $a$ and $2^{a+1}-1=O(\log(n))$ vertices, except the `top' subtree, which may have fewer vertices and lower depth (see Fig.~\ref{fig:register_binarytree}). 

The target register $\sf T$ and control register $\sf C$ are defined as
\[\textsf{T}:=\bigcup_{j=1}^s\B^{\kappa-j(a+1)+1},\quad\textsf{C}:=\Tree_\epsilon^\kappa -\textsf{T} =\Big(\bigcup_{z\in \textsf{T}}(\Tree_z^a-\{z\})\Big)\cup \Tree_{\epsilon}^{\kappa-s(a+1)},\]
where $s+1$ is the total number of layers of binary subtrees, with $s=\left\lfloor\frac{\kappa+1}{a+1}\right\rfloor$.
In words, the target register consists of the root nodes of the binary subtrees (except the top subtree), while the control register consists of all other nodes. $\sf T$ and $\sf C$  have sizes $r_t=\sum_{j=1}^s2^{\kappa-j(a+1)+1}=O\left(\frac{n}{\log n}\right)$ and $r_c=n-r_t=O\left(n-\frac{n}{\log(n)}\right)$, respectively. 
 
 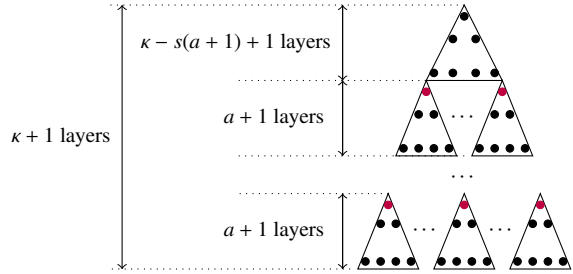
\begin{figure}[!hbt]
    \centering
    \begin{tikzpicture}
        \draw[dotted] (0,0)--(-4.7,0) (0,-1)--(-3,-1) (0,-2)--(-3,-2) (-1,-2.5)--(-3,-2.5) (-1.5,-3.5)--(-4.7,-3.5);
        \draw[->] (-4.5,-2)--(-4.5,-3.5);
        \draw[->] (-4.5,-2)--(-4.5,0);
        \draw[->] (-1.6,0) --(-1.6,-1);
         \draw[<-] (-1.6,0) --(-1.6,-1);
        \draw[->] (-1.6,-1) --(-1.6,-2);
         \draw[<-] (-1.6,-1) --(-1.6,-2);
        \draw[->] (-1.6,-2.5) --(-1.6,-3.5);
         \draw[<-] (-1.6,-2.5) --(-1.6,-3.5);
        \draw (0,0)--(0.5,-1)--(-0.5,-1)--cycle;
        \draw (-0.5,-1)--(-0.9,-2)--(-0.1,-2)--cycle (0.5,-1)--(0.1,-2)--(0.9,-2)--cycle;
        \draw (-1,-2.5)--(-1.4,-3.5)--(-0.6,-3.5)--cycle (1,-2.5)--(1.4,-3.5)--(0.6,-3.5)--cycle (0,-2.5)-- (-0.4,-3.5)--(0.4,-3.5)--cycle;
        \draw [fill=purple,draw=purple]  (-0.5,-1.15) circle (0.05) (0.5,-1.15) circle (0.05) (-1,-2.65) circle (0.05) (0,-2.65) circle (0.05) (1,-2.65) circle (0.05);
        \draw [fill=black] (0,-0.15) circle (0.05) (-0.15,-0.45) circle (0.05) (0.15,-0.45) circle (0.05) (-0.15,-0.9) circle (0.05) (0.15,-0.9) circle (0.05)(-0.4,-0.9) circle (0.05) (0.4,-0.9) circle (0.05);
        \draw [fill=black] (-0.6,-1.45) circle (0.05) (-0.4,-1.45) circle (0.05) (0.6,-1.45) circle (0.05) (0.4,-1.45) circle (0.05) (-0.6,-1.9) circle (0.05) (0.6,-1.9) circle (0.05)(-0.8,-1.9) circle (0.05) (0.8,-1.9) circle (0.05) (0.4,-1.9) circle (0.05) (0.2,-1.9) circle (0.05) (-0.4,-1.9) circle (0.05) (-0.2,-1.9) circle (0.05);
        \draw[fill=black] (-1.1,-2.9) circle (0.05) (-0.9,-2.9) circle (0.05) (-0.1,-2.9) circle (0.05) (0.1,-2.9) circle (0.05) (1.1,-2.9) circle (0.05) (0.9,-2.9) circle (0.05);
        \draw[fill=black] (-1.3,-3.4) circle (0.05) (-1.1,-3.4) circle (0.05) (-0.9,-3.4) circle (0.05) (-0.7,-3.4) circle (0.05)  (-0.3,-3.4) circle (0.05)(-0.1,-3.4) circle (0.05) (0.1,-3.4) circle (0.05) (0.3,-3.4) circle (0.05) (1.1,-3.4) circle (0.05) (0.9,-3.4) circle (0.05) (1.3,-3.4) circle (0.05) (0.7,-3.4) circle (0.05);
        \draw (0,-2.3) node{\scriptsize $\cdots$} (0,-1.5) node{\scriptsize $\cdots$} (-0.5,-3) node{\scriptsize $\cdots$} (0.5,-3) node{\scriptsize $\cdots$};
        \draw (-5.3,-1.75) node{\scriptsize $\kappa+1$ layers} (-3,-0.5) node{\scriptsize $\kappa-s(a+1)+1$ layers} (-2.5,-1.5) node{\scriptsize $a+1$ layers} (-2.5,-3) node{\scriptsize $a+1$ layers};
    \end{tikzpicture}
    \caption{Control ($\textsf{C}$) and target ($\textsf{T}$) registers for $\Tree_n(2)$. The tree is partitioned into $O\left(\frac{n}{\log (n)}\right)$ binary subtrees, each of size $O(\log(n))$ and depth $a$ ($a+1$ layers of qubits). $\textsf{T}$ consists of the root nodes of all subtrees except for the `top' subtree (red vertices), while $\textsf{C}$ consists of all other (black) vertices. }
    \label{fig:register_binarytree}
\end{figure}

\paragraph{Implementation of $\Pi$}
In this subsection, $\Pi$ (Eq.\eqref{eq:pi}) is denoted $\Pi^{binarytree}$. We wish to permute the qubit states in a way that groups consecutive qubit states together in binary subtrees. More precisely, we define $A(i):=(i-1)(2^{a+1}-2)$ and, for all $i\in[r_t]$, permute the $2^{a+1}-2$ states $x_{1+A(i)}, \ldots, x_{A(i+1)}$ to the binary subtree with root given by the $i$-th qubit in the target register (see Fig.~\ref{fig:subtree_states}). 

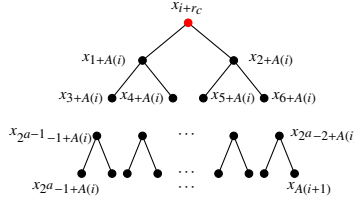
\begin{figure}[!hbt]
    \centering
    \begin{tikzpicture}
       \draw[fill=black]   (-0.6,-0.5) circle (0.05) (0.6,-0.5) circle (0.05) (-1,-1) circle (0.05) (-0.2,-1) circle (0.05) (1,-1) circle (0.05) (0.2,-1) circle (0.05) (-1.2,-1.5) circle (0.05) (-0.6,-1.5) circle (0.05) (1.2,-1.5) circle (0.05) (0.6,-1.5) circle (0.05) (-0.8,-2) circle (0.05) (-0.4,-2) circle (0.05) (-1.4,-2) circle (0.05) (-1,-2) circle (0.05) (0.8,-2) circle (0.05) (0.4,-2) circle (0.05) (1.4,-2) circle (0.05) (1,-2) circle (0.05);
       \draw (0,0)--(-0.6,-0.5)--(-1,-1) (-0.6,-0.5)--(-0.2,-1) (0,0)--(0.6,-0.5)--(1,-1) (0.6,-0.5)--(0.2,-1) (-1.2,-1.5)--(-1.4,-2) (-1.2,-1.5)--(-1,-2) (-0.6,-1.5)--(-0.8,-2) (-0.6,-1.5)--(-0.4,-2) (1.2,-1.5)--(1.4,-2) (1.2,-1.5)--(1,-2) (0.6,-1.5)--(0.8,-2) (0.6,-1.5)--(0.4,-2);
       \draw (0,0.2) node{\tiny $x_{i+r_c}$} (-1.1,-0.5) node{\tiny $x_{1+A(i)}$} (1.1,-0.5) node{\tiny $x_{2+A(i)}$} (-1.4,-1) node{\tiny $x_{3+A(i)}$} (-0.6,-1) node{\tiny $x_{4+A(i)}$} (1.4,-1) node{\tiny $x_{6+A(i)}$} (0.6,-1) node{\tiny $x_{5+A(i)}$} ;
       \draw (0,-1.5) node{\tiny$\cdots$} (0,-2) node{\tiny$\cdots$} (0,-2.2) node{\tiny$\cdots$} (-1.6,-2.2) node{\tiny $x_{2^a-1+A(i)}$}  (1.6,-2.2) node{\tiny $x_{A(i+1)}$}   (1.8,-1.5) node{\tiny $x_{2^{a}-2+A(i)}$}(-1.8,-1.5) node{\tiny $x_{2^{a-1}-1+A(i)}$};
       
       \draw[fill=red,draw=red]  (0,0) circle (0.05);
    \end{tikzpicture}
    \caption{The states of qubits in binary subtree $\Tree_{z_i}^a$ after applying $\Pi^{binarytree}$ (Lemma~\ref{lem:pi_binarytree}). $\Tree_{z_i}^a$ has $2^{a+1}-1$ vertices, with the root corresponding to the $i$-th qubit in the target register ${\sf T}$.}
    \label{fig:subtree_states}
\end{figure}

\begin{lemma}\label{lem:pi_binarytree}
The unitary transformation $\Pi^{binarytree}$, defined by
\begin{multline}\label{eq:pi_binarytree}
\ket{x_1x_2\cdots x_n}_{[n]}\xrightarrow{\Pi^{binarytree}} \ket{x_1x_2\cdots x_{r_c}}_{\sf{C}}\ket{x_{r_c+1}\cdots x_{n}}_{\sf{T}}\\
=\bigotimes_{z_i\in{\sf T}}\left(\ket{x_{r_c+i}}_{z_i}\ket{x_{1+A(i)} x_{2+A(i)} \cdots x_{A(i+1)}}_{\Tree_{z_i}^a-\{z_i\}}\right)\otimes\ket{x_{A(r_t+1)+1}\cdots x_{r_c}}_{\Tree_\epsilon^{\kappa-s(a+1)}},\forall x=x_1x_2\cdots x_n\in\Bn,
\end{multline}
where $z_i$ denotes the $i$-th element in target register $\sf T$ and $A(i):=(i-1)(2^{a+1}-2)$, can be implemented by a CNOT circuit of depth and size $O(n\log(n))$ under $\Tree_n(2)$ constraint.
\end{lemma}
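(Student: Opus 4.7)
My plan is to realize $\Pi^{binarytree}$ as a CNOT circuit by routing qubit states along paths in the binary tree using SWAP gates (each SWAP being three CNOTs). From Eq.~\eqref{eq:pi_binarytree}, $\Pi^{binarytree}$ is a fixed permutation of the $n$ input labels: the labels $x_{r_c+i}$ for $i\in[r_t]$ are placed on the chosen subtree roots $z_i\in\textsf{T}$; each consecutive block $x_{1+A(i)},\ldots,x_{A(i+1)}$ fills the non-root vertices of the subtree $\Tree_{z_i}^a$; and $x_{A(r_t+1)+1},\ldots,x_{r_c}$ fills the top subtree $\Tree_\epsilon^{\kappa-s(a+1)}$.

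The key quantitative fact is that the diameter of $\Tree_n(2)$ is $2\kappa = O(\log n)$, so every qubit state needs to travel at most $O(\log n)$ edges to reach its final location. My strategy is to route the labels one target vertex at a time: pick a target, bring the correct label there via a sequence of adjacent SWAPs along the shortest path (at most $O(\log n)$ SWAPs), and then never disturb that vertex again. Since every SWAP decomposes into three CNOTs between adjacent qubits, this respects the $\Tree_n(2)$ constraint. There are $n$ targets and each routing has depth and size $O(\log n)$, so performing them sequentially yields total depth and size $O(n\log n)$, matching the claim.

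The main obstacle will be ensuring that the ordering of the $n$ routing steps is consistent: once a vertex $v$ holds its final label, later routings must not pass through $v$. I would address this by processing targets in a schedule dictated by the tree structure, for instance filling the bottom subtrees $\Tree_{z_i}^a$ first and, within each subtree, filling the deeper vertices before the shallower ones, and finally placing the contents of the top subtree $\Tree_\epsilon^{\kappa-s(a+1)}$. Because the partition into subtrees is hierarchical and every subtree has only $O(\log n)$ vertices, such an ordering can be constructed so that every SWAP path used for a pending target lies entirely within the not-yet-finalized portion of the tree; a straightforward induction on the subtree levels verifies correctness. Summing $O(\log n)$ over the $n$ routings then gives the desired $O(n\log n)$ bounds on both depth and size.
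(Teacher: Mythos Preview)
Your approach is correct and close to the paper's, with one avoidable complication. The paper observes that $\Pi^{binarytree}$ is a permutation of qubit contents, decomposes it into at most $n$ (possibly long-distance) SWAPs via the cycle decomposition, and then implements each long-distance SWAP as three long-distance CNOTs using Lemma~\ref{lem:cnot_path_constraint}. The construction in Lemma~\ref{lem:cnot_path_constraint} (Fig.~\ref{fig:cnot_path}) restores every intermediate qubit, so the $n$ long-distance SWAPs can be applied in any order without interfering with one another, immediately giving depth and size $O(n\log n)$.

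Your version instead routes labels by chains of \emph{adjacent} SWAPs, which do shift the intermediate labels; this is what forces you into the scheduling argument about not crossing finalized vertices. Your sketch (fill deepest vertices first, proceed upward) does work---it is essentially a leaf-elimination ordering, and since removing a leaf keeps the tree connected, each subsequent routing path stays inside the unfinalized part. But the whole complication disappears if you use the long-distance CNOT primitive directly as the paper does.
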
 
\begin{proof}
$\Pi^{binarytree}$ permutes the last $r_t$ qubits $x_{r_c+1},x_{r_c+2},\cdots,x_n$ to the target register, i.e., the root nodes of the binary subtrees, and the first $r_c$ qubits to the control register $\textsf{C}$. In the absence of graph constraints, $\Pi^{binarytree}$ can be implemented by at most $n$ SWAP gates. The result frollows from Lemma \ref{lem:cnot_path_constraint}, noting that the distance between control and target qubits for any CNOT gate in a binary tree of $n$ vertices is at most $O(\log(n))$, and every SWAP gate can be implemented by 3 CNOT gates.
\end{proof}

\paragraph{Implementation of $C_k$}

\begin{lemma}\label{lem:Ck_binarytree}
For all $k\in[\ell]$, operator $C_k$ (Eq.\eqref{eq:Ck}) can be implemented by a quantum circuit of depth $O(2^{r_c})$ under $\Tree_n(2)$ constraint.
\end{lemma}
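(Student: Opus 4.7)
The plan is to invoke the generic bound of Lemma~\ref{lem:Ck}, namely
\[
\mathcal{D}(C_k) = O\!\left(n^2 + 2^{r_c} + \sum_{p=2}^{2^{r_c}}\mathcal{D}(C_{p.1})\right),
\]
and to show that with the choice of Gray codes $j_i = 1+A(i)$ from Table~\ref{tab:choice-of-gray} the sum telescopes to $O(2^{r_c})$. Since $r_c = \Theta(n)$, the $n^2$ term is absorbed. Recall that $C_{p.1}$ applies, in parallel across $i\in[r_t]$, a $\mathsf{CNOT}$ whose control is qubit $h_{j_i p} = ((\zeta(p-1)+A(i)-1) \bmod r_c)+1$ and whose target is the root of the $i$-th binary subtree placed in $\sf T$ by $\Pi^{binarytree}$ (Lemma~\ref{lem:pi_binarytree}).

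First I would verify the key structural claim: for $k' \defeq \zeta(p-1) \le 2^{a+1}-2$, no wraparound occurs in the mod $r_c$ step (because $A(i+1)=A(i)+(2^{a+1}-2)\le r_c$ for every $i\le r_t$), so the control qubit $k'+A(i)$ sits in exactly the non-root part of the $i$-th subtree. Therefore the $r_t$ CNOTs in $C_{p.1}$ act on $r_t$ pairwise-disjoint subtrees and can be executed in parallel. Each such CNOT travels within a single subtree of depth $a=O(\log\log n)$, and using Lemma~\ref{lem:cnot_path_constraint} it can be realized in depth $O(\log k')$ (if the $2^{a+1}-2$ non-root qubits of a subtree are assigned in BFS order, the qubit indexed by $k'$ is at distance $O(\log k')$ from the root). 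Hence $\mathcal{D}(C_{p.1})=O(\log k')$ for these $p$. For $k' \ge 2^{a+1}-1$ the control may cross subtree boundaries or reach the top subtree, but this only happens for a small number of phases and we only need the crude bound $\mathcal{D}(C_{p.1})=O(\log n)=O(\diam(\Tree_n(2)))$.

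Next I would carry out the summation. By Lemma~\ref{lem:GrayCode}, for each $k'\in[r_c]$ there are exactly $2^{r_c-k'}$ phases $p\in\{2,\dots,2^{r_c}\}$ with $\zeta(p-1)=k'$. Splitting the sum at $k' = 2^{a+1}-2$ and using $2^{a+1}\ge 4\log n$, one gets
\[
\sum_{p=2}^{2^{r_c}}\mathcal{D}(C_{p.1})
\;\le\; \sum_{k'=1}^{2^{a+1}-2} 2^{r_c-k'}\cdot O(\log k')
\;+\; \sum_{k'=2^{a+1}-1}^{r_c} 2^{r_c-k'}\cdot O(\log n).
\]
The first sum is bounded by $O(2^{r_c})\sum_{k'\ge 1} k'\,2^{-k'} = O(2^{r_c})$, since $\sum_{k'\ge 1} k'\,2^{-k'}$ converges. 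The second sum is at most $O(\log n)\cdot 2^{r_c}\cdot 2^{-(2^{a+1}-2)} = O(\log n\cdot 2^{r_c}/n^4) = o(2^{r_c})$. Plugging into Lemma~\ref{lem:Ck} gives $\mathcal{D}(C_k)=O(n^2 + 2^{r_c}) = O(2^{r_c})$, as required.

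The main obstacle I anticipate is the bookkeeping around the choice $j_i=1+A(i)$: one must justify that the $A(i)$-shift of the Gray-code index aligns the control qubits of $C_{p.1}$ with the non-root qubits of the $i$-th subtree \emph{throughout} the phase $p$, and that the BFS-style labelling used by $\Pi^{binarytree}$ gives root-distance $O(\log k')$ rather than $O(k')$. Once these two alignments are in place, the depth bound follows from the convergence of $\sum k'\,2^{-k'}$, exactly in the spirit of the $\tau=2\lceil\log n\rceil$ trick used in the path case (Lemma~\ref{lem:Ck_path}).
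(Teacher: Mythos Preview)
Your plan is essentially the paper's proof: same Gray-code choice $j_i=1+A(i)$, same alignment argument placing the control $x_{k'+A(i)}$ inside the $i$-th subtree for $k'\le 2^{a+1}-2$, same $O(\log k')$ per-phase depth from the BFS labelling, and the same telescoping via Lemma~\ref{lem:GrayCode}. The structural claims you isolate (no wraparound because $A(r_t)+2^{a+1}-2=A(r_t+1)\le r_c$, and root-distance $O(\log k')$ under BFS order) are exactly what the paper uses.

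There is one small slip in the ``bad'' regime $k'\ge 2^{a+1}-1$. You write $\mathcal{D}(C_{p.1})=O(\log n)$, but that is the diameter bound for a \emph{single} CNOT; $C_{p.1}$ consists of $r_t$ CNOTs whose control-to-target paths may pass through common ancestors (indeed through the top subtree), so they are not obviously parallelizable and the naive bound is $r_t\cdot O(\log n)=O(n)$, not $O(\log n)$. The paper sidesteps this by invoking Lemma~\ref{lem:cnot_circuit} and taking the coarse bound $\mathcal{D}(C_{p.1})=O(n^2)$ for these phases. This does not affect your conclusion: the tail weight $\sum_{k'\ge 2^{a+1}-1}2^{-k'}=O(n^{-4})$ absorbs any polynomial, so either $O(n)$ or $O(n^2)$ still yields $o(2^{r_c})$. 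Just replace your $O(\log n)$ by one of these and the argument is complete.
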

\begin{proof}
First, we construct quantum circuits for $C_{p.1}$ (Eq.~\eqref{eq:step_p1}) for all $p\in\{2,3,\ldots,2^{r_c}\}$. For every $i\in[r_t]$, choose integers $j_i= 1+A(i)$, where recall $A(i)=(i-1)(2^{a+1}-2)$ is the index for the last node of the $(i-1)$-th subtree (Fig. \ref{fig:subtree_states}). Strings $c_{p-1}^{1+A(i)}$ and $c_{p}^{1+A(i)}$ in the $(r_c,1+A(i))$-Gray code differ in the $h_{1+A(i),p}$-th bit.


Let $z_i$ denote the $i$-th element in $\textsf{T}$. $C_{p.1}$ effects the transformation
\begin{align*}
&\bigotimes_{z_i\in\textsf{T}}\left(\ket{\langle c_{p-1}^{1+A(i)}t_i^{(k)},x\rangle}_{z_i}\ket{x_{1+A(i)} x_{2+A(i)} \cdots x_{A(i+1)}}_{\Tree_{z_i}^a-\{z_i\}}\right) \otimes\ket{x_{A(r_t+1)+1}x_{A(r_t+1)+2}\cdots x_{r_c}}_{\Tree_\epsilon^{\kappa-s(a+1)}}\\
\xrightarrow{C_{p.1}}&\bigotimes_{z_i\in\textsf{T}}\left(\ket{\langle c_{p-1}^{1+A(i)}t_i^{(k)},x\rangle \oplus h_{1+A(i),p}}_{z_i}\ket{x_{1+A(i)} x_{2+A(i)} \cdots x_{A(i+1)}}_{\Tree_{z_i}^a-\{z_i\}}\right) \otimes\ket{x_{A(r_t+1)+1}x_{A(r_t+1)+2}\cdots x_{r_c}}_{\Tree_\epsilon^{\kappa-s(a+1)}},
\end{align*}
The key operation is thus the mapping of $\ket{\langle c_{p-1}^{1+A(i)}t_i^{(k)},x\rangle }_{z_i}\rightarrow \ket{\langle c_{p-1}^{1+A(i)}t_i^{(k)},x\rangle \oplus h_{1+A(i),p}}_{z_i}$, for all $z_i\in {\sf T}$. To implement this,  
for each $i\in[r_t]$,  we apply a CNOT gate with target qubit $z_i$, and  control qubit $\ket{x_{h_{1+A(i),p}}}$.  By construction, $\ket{x_{h_{1+A(i),p}}}$ lies in subtree $\Tree_{z_i}^a-\{z_i\}$ if $h_{1+A(i),p}\in\{1+A(i),2+A(i),\ldots,A(i+1)\}$, and otherwise lies in subtree $\Tree_\epsilon^{\kappa}-\Tree_{z_i}^a$.


We now analyze the depth of $C_k$.  
\begin{enumerate}
    \item If $h_{1+A(i),p}:=k'+A(i)\in\{1+A(i),2+A(i),\ldots,A(i+1)\}$ for all $z_i\in{\sf T}$ and $k'\in[2^{a+2}-2]$, all CNOT gates in Step $p.1$ ($C_{p.1}$) can be implemented simultaneously because they are in disjoint binary subtrees $\Tree_{z_i}^a$. Since the distance between control and target qubits in each CNOT gate in Step $p.1$ is $O(\log(h_{1+A(i),p}-A(i)))=O(\log(k'))$, by Lemma \ref{lem:cnot_path_constraint}, $C_{p.1} $ can be realized in depth $O(\log(k'))$. 
    \item If $h_{1+A(i),p}\notin\{1+A(i),2+A(i)\ldots,A(i+1)\}$, Step $p.1$ is an $n$-qubit CNOT circuit under $\Tree_n(2)$ constraint. By Lemma \ref{lem:cnot_circuit} it can be implemented in depth $O(n^2)$. 
\end{enumerate}
By Lemma \ref{lem:GrayCode}, for every $k'\in[r_c]$, there are $2^{r_c-k'}$ many $p\in \{2,3,\ldots,2^{r_c}\}$ satisfying
\begin{equation*}
    h_{1+A(i),p} = \begin{cases}
    k'+A(i), & (\text{if }k'\le r_c-A(i)+1) \\
    k'+A(i)-r_c, & (\text{if }k'\ge r_c-A(i)+2)
    \end{cases}
\end{equation*}
Thus, there are $2^{r_c-k'}$ values of $p\in\{2,3\ldots,2^{r_c}\}$ such that $C_{p.1}$ has depth $\mathcal{D}(C_{p.1})=O(\log(k'))$, with $k'\in[2^{a+2}-2]$. The remaining $2^{r_c}-\sum_{k'=1}^{2^{a+2}-2}2^{r_c-k'}-1$ values of $p$ have corresponding circuits $C_{p.1}$ that can be realized in depth $\mathcal{D}(C_{p.1})=O(n^2)$, with $k'\ge 2^{a+2}-1$.  
By Lemma \ref{lem:Ck}, $C_k$ has circuit depth
\[O(n^2+2^{r_c}+\sum_{p=2}^{2^{r_c}}\mathcal{D}(C_{p.1}))=O(n^2+2^{r_c})+\sum_{k'=1}^{2^{a+1}-2} O(\log(k'))2^{r_c-k'} +O(n^2)\cdot (2^{r_c}-\sum_{k'=1}^{2^{a+1}-2}2^{r_c-k'}-1)= O(2^{r_c}),\]
where we use the fact that $a=\lceil\log(2\log n)\rceil$.
\end{proof}

\paragraph{Implementation of $\Lambda_n$}

We are now in a position to prove
\diagtreenoancilla*
The proof of this Lemma consists of the proofs of Lemmas \ref{lem:diag_bianrytree_withoutancilla}, \ref{lem:diag_d_tree_noancilla} and \ref{lem:diag_star_noancilla} below. 

\begin{lemma}\label{lem:diag_bianrytree_withoutancilla}
Any $n$-qubit diagonal unitary matrix $\Lambda_n$ can be realized by a quantum circuit of depth $O(\log(n)2^n/n)$ under  $\Tree_n(2)$ constraint, without ancillary qubits.
\end{lemma}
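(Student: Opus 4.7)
The plan is to instantiate the general framework of Lemma~\ref{lem:diag_without_ancilla_correctness}, i.e.\ realize $\Lambda_n$ as
$(\Lambda_{r_c}\otimes\mathcal{R})\,\Pi^\dagger\, C_{\ell}C_{\ell-1}\cdots C_{1}\,\Pi$,
using the choice of control/target registers $\sf C$, $\sf T$ that was just set up in Section~\ref{sec:diag_without_ancilla_binarytree} for the binary tree: $\sf T$ is the collection of roots of the depth-$a$ binary subtrees (with $a=\lceil \log(2\log n)\rceil$) carved out of $\Tree_\epsilon^\kappa$, and $\sf C$ consists of all remaining vertices. This partition gives $r_t=O(n/\log n)$ and $r_c = n - r_t$, and crucially ensures that, after applying $\Pi^{binarytree}$, each target qubit in $\sf T$ sits at the root of a subtree whose $2^{a+1}-2$ other vertices hold the ``local'' input bits used for that target's Gray-code enumeration.

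With these choices fixed, the only thing left is to add up the costs of the five pieces of the framework, invoking the lemmas already proved in this subsection:
\begin{enumerate}
\item $\Pi^{binarytree}$ and its inverse: depth $O(n\log n)$ by Lemma~\ref{lem:pi_binarytree}.
\item Each $C_k$: depth $O(2^{r_c})$ by Lemma~\ref{lem:Ck_binarytree}. There are $\ell \le \frac{2^{r_t+2}}{r_t+1}-1 = O(2^{r_t}/r_t)$ such blocks, contributing total depth $O(2^{r_c}\cdot 2^{r_t}/r_t) = O(2^n/r_t) = O(\log(n)\cdot 2^n/n)$.
\item $\mathcal R$: depth $O(n^2)$ by Lemma~\ref{lem:reset}.
\item $\Lambda_{r_c}$: depth $O(n\,2^{r_c})$ by Lemma~\ref{lem:Lambda_rc}.
\end{enumerate}
The dominant term is the contribution from the $C_k$'s. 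The remaining terms are easily absorbed: $n\log n$ and $n^2$ are dwarfed by $2^n\log(n)/n$, and since $r_t = \Omega(n/\log n)$ the $\Lambda_{r_c}$ term satisfies
\[
n\,2^{r_c} = n\,2^{n - r_t} = \frac{n\,2^n}{2^{r_t}} = o\!\left(\frac{\log(n)\,2^n}{n}\right),
\]
because $2^{r_t}$ grows super-polynomially in $n$. Summing everything yields total depth $O(\log(n)\,2^n/n)$, as required.

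The only mildly delicate point is verifying that the chosen $r_t$ really does achieve this balance: we need $r_t$ large enough that $2^{r_t}/r_t$ kills the $2^{r_c}$ factor of each $C_k$ down to the target $2^n\log(n)/n$, yet small enough that $n\,2^{r_c}$ remains subdominant. Both constraints are satisfied by $r_t = \Theta(n/\log n)$, which is exactly the value forced by carving $\Tree_\epsilon^\kappa$ into depth-$a$ subtrees with $a=\Theta(\log\log n)$. The rest of the argument is bookkeeping: one checks the calculation in Lemma~\ref{lem:Ck_binarytree} actually produces $O(2^{r_c})$ uniformly in $k$ despite the $O(n^2)$ cost incurred for the (few) $p$ where $h_{1+A(i),p}$ falls outside a single subtree, since the geometric series $\sum_{k'\ge 2^{a+1}-1}2^{r_c-k'}\cdot n^2$ is killed by $2^{a+1}-1 = \Omega(\log n)$ factors of $1/2$.
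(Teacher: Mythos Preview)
Your proposal is correct and follows essentially the same approach as the paper: instantiate the framework of Lemma~\ref{lem:diag_without_ancilla_correctness} with the binary-tree choice of $\sf C,\sf T$, then sum the depth contributions from Lemmas~\ref{lem:pi_binarytree}, \ref{lem:Ck_binarytree}, \ref{lem:reset}, and \ref{lem:Lambda_rc}. The paper's proof is more terse (it just writes down the one-line sum), whereas you add explicit verification that the $n\,2^{r_c}$ term is subdominant and that $r_t=\Theta(n/\log n)$ balances the competing constraints, but the substance is identical.
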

\begin{proof}

By Lemma \ref{lem:diag_without_ancilla_correctness}, $\Lambda_n$ can be implemented by the circuit in Fig.~\ref{fig:diag_without_ancilla_framwork}.  Recall that $r_t=O(n/\log(n))$, $r_c=n-r_t$, and $\ell\le \frac{2^{r_t+2}}{r_t+1}-1$. Combining Lemmas \ref{lem:pi_binarytree},  \ref{lem:Ck_binarytree}, \ref{lem:reset} and \ref{lem:Lambda_rc}, the total depth and size for $\Lambda_n$ are
\[O(n\log(n))+\ell\cdot O(2^{r_c})+O(n^2)+O(n2^{r_c})=O(\log(n)2^n/n),\]
under $\Tree_n(2)$ constraint.
\end{proof}

The circuit depth under general $d$-ary tree constraint is shown as follows.
\begin{lemma}[\cite{sun2021asymptotically}]\label{lem:diag_complete_graph_noancilla}
Any $n$-qubit diagonal unitary matrix $\Lambda_n$ can be implemented by a quantum circuit of depth $O\left( 2^n/n\right)$ and size $O(2^n)$, using no ancillary qubits, under no graph constraint.
\end{lemma}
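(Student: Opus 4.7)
The plan is to specialize the general framework of Fig.~\ref{fig:diag_without_ancilla_framwork} to the setting where every pair of qubits can interact, then recurse on the residual $\Lambda_{r_c}$. Since there is no graph constraint, the permutation $\Pi$ collapses to the identity, and any CNOT acts in unit depth regardless of which qubits it touches. I would take $r_c = \lceil n/2\rceil$, $r_t = \lfloor n/2\rfloor$, with ${\sf C}=[r_c]$ and ${\sf T}=[n]\setminus[r_c]$, exactly as in~\cite{sun2021asymptotically}.

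For each $C_k$ I would pick the natural Gray-code indices $j_i = i$, so that step $C_{p.1}$ of phase $p$ consists of $r_t$ CNOT gates whose targets are all of ${\sf T}$ (one per target qubit) and whose controls sit in ${\sf C}$. All of these act on disjoint target qubits with independent single-qubit controls, so with no connectivity constraint they execute in parallel: $\mathcal{D}(C_{p.1}) = O(1)$ and $\mathcal{S}(C_{p.1}) = O(r_t)$. Plugging into Lemma~\ref{lem:Ck}, each $C_k$ has depth $O(n^2 + 2^{r_c}) = O(2^{r_c})$ and size $O(n^2 + r_t 2^{r_c}) = O(r_t 2^{r_c})$, using $2^{r_c} \ge 2^{n/2} \gg n^2$. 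Applying Lemma~\ref{lem:reset} (depth $O(n^2)$) for $\mathcal{R}$ and summing over the $\ell = O(2^{r_t}/r_t)$ generator blocks gives total depth
\[
\ell \cdot O(2^{r_c}) + O(n^2) = O(2^{r_c+r_t}/r_t) = O(2^n/n),
\]
and total size $\ell \cdot O(r_t 2^{r_c}) = O(2^n)$ for the $C_1,\ldots,C_\ell,\mathcal{R}$ portion.

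The remaining work is to realize $\Lambda_{r_c}$. Since the first $r_c$ qubits are themselves under no connectivity constraint, I would recurse: letting $T(n)$ denote the depth required, we get
\[
T(n) = O(2^n/n) + T(\lceil n/2\rceil),
\]
which solves to $T(n) = O(2^n/n)$ because the contribution $2^{n/2^k}/(n/2^k)$ at recursion depth $k$ decays doubly-exponentially in $k$ and the sum is dominated by its first term. The same recursion on size gives $S(n) = O(2^n) + S(\lceil n/2\rceil) = O(2^n)$.

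The only delicate point is verifying the base case and confirming that the size bound survives the recursion — the depth shrinks geometrically at each level, so one must check that the overhead of $O(n^2)$ for $\mathcal{R}$ at every recursion level never catches up with the leading $O(2^n/n)$ term; this is easy since $\sum_k (n/2^k)^2 = O(n^2) \ll 2^n/n$. Beyond this bookkeeping, the argument is a direct reading of the framework already established, specialized to unit CNOT cost.
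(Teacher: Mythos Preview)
Your proposal is correct and follows precisely the approach of~\cite{sun2021asymptotically}, which is the source the paper cites for this lemma (the paper itself gives no proof, treating it as a known result). Your choice $r_c=\lceil n/2\rceil$, $j_i=i$, the observation that the $r_t$ CNOTs in each $C_{p.1}$ have pairwise distinct controls and targets (hence depth $1$), and the recursion on $\Lambda_{r_c}$ are exactly the ``old method'' the paper summarizes in Section~\ref{sec:diag_without_ancilla_framework_main} and Table~\ref{tab:choice-of-gray}.
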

\begin{lemma}\label{lem:diag_d_tree_noancilla}
Any $n$-qubit diagonal unitary matrix $\Lambda_n$ can be implemented by a quantum circuit of depth $O\left(\log_d(n)2^n\right)$, using no ancillary qubits, under $\Tree_n(d)$ constraint.
\end{lemma}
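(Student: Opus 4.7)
The plan is to reduce the problem to the unconstrained case, paying only the diameter overhead per gate. I would start from Lemma~\ref{lem:diag_complete_graph_noancilla}, which realizes any $n$-qubit diagonal unitary $\Lambda_n$ by a standard quantum circuit of size $O(2^n)$ without any connectivity restriction. Since the complete $d$-ary tree on $n$ vertices has height $h$ satisfying $\sum_{i=0}^{h}d^i=n$, i.e., $h=O(\log_d n)$, its diameter is at most $2h=O(\log_d n)$.

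Next, I would invoke Lemma~\ref{lem:cnot_path_constraint}: under $\Tree_n(d)$ constraint, every CNOT gate acting on two qubits $u,v$ of the unconstrained circuit can be implemented by a sub-circuit of depth $O(d_G(u,v))=O(\log_d n)$, while single-qubit gates incur no connectivity overhead. Executing the $O(2^n)$ gates of the unconstrained circuit sequentially therefore yields a circuit whose total depth under $\Tree_n(d)$ constraint is $O(2^n)\cdot O(\log_d n)=O(\log_d(n)\cdot 2^n)$, matching the claimed bound.

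I do not expect any real obstacle, since the target is precisely the naive ``diameter times size'' bound and no parallelism across layers is needed. Unlike the binary tree case (Lemma~\ref{lem:diag_bianrytree_withoutancilla}), which attains the sharper $O(\log(n)\cdot 2^n/n)$ by instantiating the framework of Section~\ref{sec:diag_without_ancilla_framework_main} with a careful partition of the tree into small subtrees of size $\Theta(\log n)$ and a correspondingly tailored choice of Gray-code indices, the weaker stated bound here allows the purely sequential argument above. Improving beyond $O(\log_d(n)\cdot 2^n)$ for general $d$ would require an analogous partition into small $d$-ary subtrees (as in Fig.~\ref{fig:register_binarytree}) together with a parallelization analysis across layers of the unconstrained construction; this is not needed to establish the lemma as stated.
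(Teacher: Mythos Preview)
Your proposal is correct and essentially the same as the paper's: both reduce to the unconstrained construction of Lemma~\ref{lem:diag_complete_graph_noancilla} and pay only the diameter overhead. The paper phrases it layer-by-layer (depth $O(2^n/n)$ times $O(n\log_d n)$ per depth-1 CNOT layer), while you phrase it gate-by-gate (size $O(2^n)$ times $O(\log_d n)$ per CNOT); these are the same computation.
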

\begin{proof}
Follows from Lemma~\ref{lem:diag_complete_graph_noancilla}, and the fact that any $n$-qubit depth-1 CNOT circuit can be implemented by a CNOT circuit of depth $O(n\log_d(n))$ under $d$-ary tree constraint. The total depth required is thus $O\left(\frac{ 2^{n}}{n}\right)\cdot O(n\log_d(n))=O\left( \log_d(n) 2^n\right)$.
\end{proof}

\begin{lemma}\label{lem:diag_star_noancilla}
Any $n$-qubit diagonal unitary matrix $\Lambda_n$ can be implemented by a quantum circuit of depth $O(2^n)$, using no ancillary qubits, under $\Star_n$ constraint.
\end{lemma}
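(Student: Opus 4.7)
The plan is to derive the depth upper bound directly from the size bound for $\Lambda_n$, exploiting the fact that $\Star_n$ has diameter $2$. I would first invoke Lemma~\ref{lem:diag_size}, which gives a standard quantum circuit of size $O(2^n)$ for any $n$-qubit diagonal unitary $\Lambda_n$, with no connectivity restriction. My goal is to compile this circuit to run on the star, incurring only a constant-factor blowup in size.

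Next, I would argue that the compilation is essentially free. Single-qubit gates can be executed on any vertex without modification. For a CNOT acting on two leaves $u$ and $v$, the unique shortest path in $\Star_n$ between $u$ and $v$ passes through the center and has length $d(u,v) = 2$. By Lemma~\ref{lem:cnot_path_constraint}, each such CNOT is implementable by a CNOT circuit of size (and depth) $O(d(u,v)) = O(1)$ under the $\Star_n$ constraint. CNOTs already incident to the center need no modification. Applying this substitution to every gate of the unrestricted circuit produces a circuit of size $O(2^n)$ under $\Star_n$ constraint.

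Finally, since the depth of any circuit is bounded above by its size, this construction yields a depth-$O(2^n)$ circuit, as claimed. Note that I make no attempt to parallelize gates: on a star, essentially every two-qubit operation must touch the central vertex, so CNOTs between different leaves are forced to be executed sequentially. This is the main obstacle to doing better, and it also explains why we should not expect to improve the bound by a non-constant factor through any scheduling argument—the same observation underlies the matching $\Omega(2^n)$ lower bound for QSP on the star (Theorem~\ref{thm:lower_bound_tree_QSP}), which carries over to diagonal unitaries since diagonal unitaries can implement arbitrary phase patterns that realize QSP-hard states after a Hadamard layer. Thus the bound $O(2^n)$ is essentially the best one can hope for, and the proof reduces to the two-line compilation argument above.
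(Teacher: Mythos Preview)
Your proof is correct. Both your argument and the paper's reduce to the same core observation—the star has diameter $2$, so compiling an unconstrained circuit incurs only constant overhead per CNOT—but you route through a slightly different starting point. The paper obtains the result by specializing Lemma~\ref{lem:diag_d_tree_noancilla} (the $d$-ary tree bound $O(\log_d(n)2^n)$) to $d=n-1$, which in turn starts from the depth-$O(2^n/n)$ construction of Lemma~\ref{lem:diag_complete_graph_noancilla} and then pays a factor $O(n\log_d n)$ to serialize each layer under the tree constraint. You instead start from the size-$O(2^n)$ construction (Lemma~\ref{lem:diag_size}), compile each CNOT in $O(1)$ via Lemma~\ref{lem:cnot_path_constraint}, and bound depth by size. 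Your route is more self-contained for the star case and avoids the $d$-ary tree machinery; the paper's route has the virtue of treating the star as a limiting case of a family already analyzed. Both are equally valid and yield the same bound.

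One minor remark: your closing paragraph about the lower bound is extraneous to the lemma (which claims only the upper bound), and the reduction you sketch from diagonal unitaries to QSP via a Hadamard layer is not quite how the paper establishes the $\Omega(2^n)$ lower bound—there it follows from the maximum-matching argument of Theorem~\ref{thm:depth_lower_bound_graph} applied directly to diagonal unitaries (Lemma~\ref{lem:depth_lower_dary}). This does not affect the correctness of your proof of the lemma itself.
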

\begin{proof}
    Follows from Lemma \ref{lem:diag_d_tree_noancilla}, taking $n=d-1$.
\end{proof}

\subsection{Circuit implementation under $\Expander_n$ constraints (Proof of Lemma ~\ref{lem:diag_expander_withoutancilla})} 
\label{sec:diag_without_ancilla_expander}

\begin{lemma}[\cite{hoory2006expander}]\label{lem:distance}
Let $G=(V,E)$ be an expander. The distance between any two vertices in $G$ is $O(\log(|V|))$.
\end{lemma}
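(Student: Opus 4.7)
The plan is to exploit the vertex-expansion property directly: from any fixed vertex, the ball of radius $r$ grows by at least a factor of $(1+c)$ with each step $r\to r+1$ as long as the ball has not yet reached half of the vertex set. Since exponential growth reaches $|V|/2$ after only $O(\log|V|)$ steps, two such balls centered at any pair of vertices must overlap at some radius $O(\log|V|)$, bounding the diameter.

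More concretely, fix an arbitrary vertex $v\in V$ and let $B_r(v)$ denote the set of vertices at distance at most $r$ from $v$. First I would observe that whenever $0<|B_r(v)|<|V|/2$, Definition~\ref{def:expansion} applied to $S=B_r(v)$ gives
\[
|B_{r+1}(v)|\;\ge\;|B_r(v)|+|\partial_{out}(B_r(v))|\;\ge\;(1+c)\,|B_r(v)|,
\]
where $c=h_{out}(G)>0$ is the expansion constant. Starting from $|B_0(v)|=1$, this recurrence yields $|B_r(v)|\ge (1+c)^r$ for every $r$ with $|B_{r-1}(v)|<|V|/2$. Hence after $r_0=\lceil \log_{1+c}(|V|/2)\rceil = O(\log|V|)$ steps we must have $|B_{r_0}(v)|\ge |V|/2$.

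Next, I would apply exactly the same argument to a second arbitrary vertex $u$, obtaining $|B_{r_0}(u)|\ge |V|/2$ as well. Since both balls contain at least half of $V$, a pigeonhole argument forces $B_{r_0}(u)\cap B_{r_0}(v)\neq\emptyset$; choosing any common vertex $w$ in the intersection and concatenating the shortest $u$--$w$ and $w$--$v$ paths gives $d_G(u,v)\le 2r_0=O(\log|V|)$. As $u,v$ were arbitrary, $\diam(G)=O(\log|V|)$.

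There is no real obstacle here; the only mild subtlety is handling the regime where $|B_r(v)|$ crosses the threshold $|V|/2$, at which point Definition~\ref{def:expansion} no longer directly applies, but this is harmless because we only need one such $r_0$ per vertex and can simply stop the growth recurrence at the first index where the ball reaches half the graph. The constant hidden in $O(\log|V|)$ is $\Theta(1/\log(1+c))$, which is bounded since $c$ is a positive constant by the definition of an expander.
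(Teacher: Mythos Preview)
The paper does not prove this lemma; it is simply cited from the Hoory--Linial--Wigderson survey \cite{hoory2006expander}. Your argument is the standard ball-growing proof and is correct.

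One very small quibble: when you conclude that $|B_{r_0}(u)|\ge |V|/2$ and $|B_{r_0}(v)|\ge |V|/2$ force a nonempty intersection, this fails in the edge case where both balls have size exactly $|V|/2$ and happen to partition $V$. The easy fix is to run the recurrence one more step (using connectivity of $G$ to guarantee $|B_{r_0+1}(v)|>|V|/2$ when $B_{r_0}(v)\neq V$), or simply to observe that strict inequality $>|V|/2$ is reached after at most $r_0+1$ steps. Either way the bound remains $O(\log|V|)$, so this does not affect your conclusion.
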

\begin{lemma}\label{lem:graph_property}
Let $G=(V,E)$ be a graph with vertex expansion $h_{out}(G) = h$. Let $S\subset V$ have size at most $|V|/2$.
Define a bipartite graph $B=(S\cup\partial_{out}(S),E')$, where $E'=\{(u,v)\in E: u\in S,v\in\partial_{out}(S)\}$. Then, the size of any maximal matching for $B$ is at least $\frac{h}{h+2} |S|$. In particular, if $G$ is an expander, then the size of any maximal matching in $B$ is $\Omega(|S|)$.
\end{lemma}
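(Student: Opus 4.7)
Let $M$ be any maximal matching of $B$, and write $S_M \subseteq S$ and $T_M \subseteq \partial_{out}(S)$ for the vertices matched by $M$ on the two sides, so that $|S_M| = |T_M| = |M|$. The plan is to bound the ``unmatched'' part $S\setminus S_M$ from above using the expansion of $G$, and to then combine this with the trivial bound $|S_M|\le |M|$.

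The first observation is a maximality argument: if $u\in S\setminus S_M$ has a neighbor $v\in \partial_{out}(S)\setminus T_M$ in $G$, then $(u,v)\in E'$ and $M\cup\{(u,v)\}$ is a larger matching of $B$, contradicting maximality. Hence every $G$-neighbor of $S\setminus S_M$ lying outside $S$ must actually lie in $T_M$. Since $G$-neighbors of $S\setminus S_M$ lying inside $S$ must be in $S_M$ (they are not in $S\setminus S_M$ by definition of the vertex boundary), we conclude that
\[
\partial_{out}(S\setminus S_M) \subseteq S_M \cup T_M,
\]
and therefore $|\partial_{out}(S\setminus S_M)| \le 2|M|$.

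Next, I would apply the expansion hypothesis. Provided $0<|S\setminus S_M|<|V|/2$, Definition~\ref{def:expansion} gives $|\partial_{out}(S\setminus S_M)| \ge h\, |S\setminus S_M|$, so $|S\setminus S_M| \le 2|M|/h$. Combined with $|S_M|\le |M|$ this yields
\[
|S| \;=\; |S_M| + |S\setminus S_M| \;\le\; |M| + \frac{2|M|}{h} \;=\; \frac{h+2}{h}\,|M|,
\]
which rearranges to $|M| \ge \frac{h}{h+2}|S|$, as claimed. The $\Omega(|S|)$ statement for expanders is then immediate since $h$ is a positive constant.

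The only subtle step is the boundary case of the expansion hypothesis, namely when $S\setminus S_M$ is empty or has size exactly $|V|/2$. If $S\setminus S_M=\emptyset$ then $|M|=|S|$ and the bound is trivial. If $|S\setminus S_M|=|V|/2$, then since $|S|\le |V|/2$ we must have $S_M=\emptyset$ and $S=S\setminus S_M$; in this case I would pick any strict subset $S'\subsetneq S\setminus S_M$ with $|S'|=|S|-1$, apply expansion to $S'$ (which has size $<|V|/2$) to get $h(|S|-1)\le |\partial_{out}(S')|\le 2|M|+1$, and obtain the same asymptotic bound up to a harmless additive constant, which is absorbed into the inequality $|M|\ge \frac{h}{h+2}|S|$ (or handled by a straightforward continuity/limit argument). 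I don't anticipate any real difficulty beyond this bookkeeping; the core of the argument is the one-line maximality observation that forces $\partial_{out}(S\setminus S_M)\subseteq S_M\cup T_M$.
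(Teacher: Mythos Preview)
Your proposal is correct and follows essentially the same argument as the paper: the paper also takes a maximal matching with matched sets $U\subseteq S$ and $W\subseteq\partial_{out}(S)$, observes by maximality that $\partial_{out}(S\setminus U)\subseteq U\cup W$ so that $|\partial_{out}(S\setminus U)|\le 2k$, and then applies vertex expansion to $S\setminus U$ to get $h\le 2k/(|S|-k)$, which rearranges to the claimed bound. Your treatment of the boundary case $|S\setminus S_M|=|V|/2$ is in fact more careful than the paper's, which simply asserts $0<|S-U|<|V|/2$ without comment.
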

\begin{proof}
Let $M:=\left\{(u_i,w_i): u_i\in S, w_i\in\partial_{out}(S),~i\in[k]\right\}$ be a maximal matching in $B$, i.e, $M$ is not a proper subset of any other matching in $B$. Let $U = \{u_i: i\in [k]\}$ and $W = \{w_i: i\in [k]\}$. If $U = S$, then the matching $M$ has size $|S|$, and we have proved the claim. Now consider the case that $S - U$ is not empty. Since $M$ is maximal, there do not exist edges between $S-U$ and $\partial_{out}(S)-W$. 
The neighbors of $S-U$ must therefore lie in set $U\cup W$. This implies that the size of $\partial_{out}(S-U)$ is no more than $2k$. Since $0<|S-U|<|V|/2$, by definition of vertex expansion, we have 
 \[h \le \frac{|\partial_{out}(S-U)|}{|S-U|}\le \frac{2k}{|S|-k}.\]
Rearranging the terms gives $k\ge \frac{h}{h+2}|S|$, as claimed.
\end{proof}

Consider a graph $G$ with vertex expansion $h_{out}(G) = c$ for some constant $c$.

\paragraph{Choice of $\sf C$ and $\sf T$} 
Let $c' = \frac{c}{c+2}$. 
By Lemma \ref{lem:graph_property}, for any  $S\subset V$ of size at most $n/2$, we can find a matching  of size $c'|S|$ between $S$ and $V-S$.  Then, 
\[M_S=\{(u_j^S,w_j^S)\in E: u_j^S\in S, w_j^S\in\partial_{out}(S),\forall j\in[\lfloor c'|S|\rfloor]\}.\]
is a matching of size $\lfloor c'|S|\rfloor$ .
Define the corresponding vertex set of size $\lfloor c'|S|\rfloor$:
\[\Gamma(S):=\left\{w_j^S\in\partial_{out}(S): (u_j^S,w_j^S)\in M_S, u_j^S\in S, j\in[\lfloor c'|S|\rfloor]\right\}.\]
Let $d=\Big\lfloor\frac{\log(n)-1-\log(\lceil 1/c'\rceil+1)}{\log(1+c')}\Big\rfloor+2 = O(\log n)$  and define the sequence of vertex sets $S_1,\ldots,S_d$, where 
\begin{align}
    S_1 \subseteq V,  &\qquad\abs{S_1} = \lceil 1/c'\rceil+1, \text{vertices in }S_1 \text{ arbitrary}\label{eq:s1}\\
    S_{i+1}=S_i\cup \Gamma(S_i), &\qquad \forall i\in[d-1] \label{eq:siplus1}
\end{align}

By Eq. \eqref{eq:siplus1} and the definition of $\Gamma(S_i)$, $|S_{i+1}|=|S_i|+\lceil c'|S_i|\rceil$, which satisfies $(1+c')|S_i|-1\le |S_{i+1}|\le (1+c')|S_i|$. By reduction, we obtain $(|S_1|-\frac{1}{c'})(1+c')^{i-1}\le |S_i|\le |S_1|(1+c')^{i-1}$ for all $\forall i\in [d]$. We also have
\[\frac{n}{2(\lceil 1/c'\rceil+1)(1+c')}\le (|S_1|-\frac{1}{c'})(1+c')^{d-2}\le|S_{d-1}|\le |S_1|(1+c')^{d-2} \le n/2.\]

Define the control and target registers as
\[{\sf T}:=S_{d-1}\quad \text{and} \quad{\sf C}:=V-S_{d-1}.\]
By construction, $\sf T$ and $\sf C$ have sizes $r_t=|S_{d-1}|\in\big[\frac{n}{2(\lceil 1/c'\rceil+1)(1+c')}, n/2\big]$ and $r_c=|V-S_{d-1}|\in \big[n/2, n-\frac{n}{2(\lceil 1/c'\rceil+1)(1+c')}\big]$, respectively. 

\paragraph{Implementation of $\Pi$}
In this subsection, unitary transformation $\Pi$ (Eq. \eqref{eq:pi}) is denoted $\Pi^{expander}$. 

\begin{lemma}\label{lem:pi_expander}
The unitary transformation $\Pi^{expander}$, defined by
\begin{equation*}
    \ket{x_1x_2\cdots x_n}_{V}\xrightarrow{\Pi^{expander}}\ket{x_{control}}_{V-S_{d-1}}\ket{x_{target}}_{S_{d-1}}\defeq \ket{x_{control}}_{\sf{C}}\ket{x_{target}}_{\sf{T}},
\end{equation*}
can be realized by a CNOT circuit of depth and size $O(n\log(n))$ under $\Expander_n$ constraint.
\end{lemma}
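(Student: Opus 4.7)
The transformation $\Pi^{expander}$ merely permutes the contents of the $n$ qubits (mapping the first $r_c$ input positions into ${\sf C}=V-S_{d-1}$ and the last $r_t$ input positions into ${\sf T}=S_{d-1}$), so it can be written as a product of at most $n-1$ transpositions, each exchanging the contents of some pair $(u,v)\in V\times V$. My plan is to implement these transpositions one at a time.

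First, I would invoke Lemma~\ref{lem:distance} to conclude that for every pair $u,v\in V$ in an expander, $d_G(u,v)=O(\log n)$. Next, I would note that a SWAP between two qubits $u,v$ at distance $d$ can be realized by a standard chain of $2d-1$ SWAP gates acting on adjacent qubits along a shortest $u$--$v$ path (and each adjacent SWAP is three CNOTs). By the same argument as Lemma~\ref{lem:cnot_path_constraint}, this long-range SWAP has circuit depth and size $O(d)=O(\log n)$. Hence each of the $n-1$ transpositions costs $O(\log n)$ in both depth and size.

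Performing these transpositions sequentially then yields a CNOT circuit of total depth and size $O(n\log n)$, as required. The only care needed is bookkeeping: after each chain-SWAP along a path, the contents of intermediate qubits along the path are also permuted, so one tracks the current physical location of each original input symbol and chooses the next transposition accordingly. Because each transposition involves only two ``target'' qubits (the rest of the path is restored up to a relabeling after the chain-SWAP), this is routine and does not affect the asymptotic bounds.

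I do not anticipate a serious obstacle here; the naive sequential implementation suffices. One could try to reduce the depth below $O(n\log n)$ by running transpositions over edge-disjoint shortest paths in parallel (in the spirit of the pipelined construction used for $\Pi^{path}$ in Lemma~\ref{lem:pi_path}), but no such improvement is needed for the stated claim.
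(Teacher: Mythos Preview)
Your proposal is correct and essentially identical to the paper's proof: both observe that $\Pi^{expander}$ is a permutation realizable by $O(n)$ SWAPs, invoke the $O(\log n)$ diameter of expanders (Lemma~\ref{lem:distance}), and implement each long-range SWAP in $O(\log n)$ depth and size via Lemma~\ref{lem:cnot_path_constraint}. One small remark: with the $2d-1$ adjacent-SWAP chain you describe, the intermediate qubits are in fact fully restored, so the bookkeeping caveat you add is unnecessary (though harmless).
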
 
\begin{proof}
$\Pi^{expander}$ permutes $\ket{x_i}$ to a qubit in $\sf C$ for every $i\le r_c$ and to a qubit in $\sf T$ for every $i\ge r_c+1$. In the absence of any graph constraints, $\Pi^{expander}$ can be realized by $O(n)$ swap gates, each of which can be implemented by 3 CNOT gates. The distance between any two qubits in an expander is $O(\log(n))$. Thus, by Lemma \ref{lem:cnot_path_constraint}, the depth and size required is $O(n)\cdot O(\log(n))=O(n\log(n))$.
\end{proof}

\begin{lemma}\label{lem:Ck_expander}
For all $k\in[\ell]$, unitary transformation $C_k$ (Eq. \eqref{eq:Ck}) can be implemented by a quantum circuit of depth $O(\log(n)2^{r_c})$ under $\Expander_n$ constraint.
\end{lemma}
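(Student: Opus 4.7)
I apply Lemma~\ref{lem:Ck}, which reduces the proof to bounding $\mathcal{D}(C_{p.1})$, the depth of each Step $p.1$ (Eq.~\eqref{eq:step_p1}). Using the Gray code choice $j_i = 1$ for all $i \in [r_t]$ prescribed in Table~\ref{tab:choice-of-gray}, we have $h_{j_i, p} = h_{1p}$ for every $i$, so $C_{p.1}$ becomes a multi-target CNOT: the single control qubit $\ket{x_{h_{1p}}}$ lies in $\sf{C}$, and the (up to) $r_t$ target qubits exactly fill $\sf{T}=S_{d-1}$.

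The main technical step is to show that this multi-target CNOT can be implemented in depth $O(\log n)$ under $\Expander_n$ constraint without ancilla. The key tool is the nested sequence $S_1 \subsetneq S_2 \subsetneq \cdots \subsetneq S_{d-1} = \sf{T}$ with $d = O(\log n)$ together with its associated matchings $M_{S_1}, \ldots, M_{S_{d-2}}$: viewing each matching as pairing a vertex in $S_i$ with a new ``child'' in $\Gamma(S_i)$ exhibits $\sf{T}$ as the vertex set of a rooted tree of depth $O(\log n)$. Since the source qubit $h_{1p}$ sits at distance at most $O(\log n)$ from any vertex of $\sf{T}$ by Lemma~\ref{lem:distance}, I first route $x_{h_{1p}}$ along a shortest path to a seed vertex in $S_1$ using an $O(\log n)$-depth sequence of CNOTs. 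Then, along the tree defined by the matchings, I perform an XOR-broadcast in two sweeps: (i) a top-down sweep that, in $O(\log n)$ parallel matching rounds, propagates $x_{h_{1p}}$ from the root to every leaf (leaving a controlled linear contamination on the internal tree vertices), followed by (ii) a bottom-up uncomputation sweep that cancels this contamination using the \emph{same} matching edges, so that the net effect is precisely one CNOT from $h_{1p}$ to each vertex of $\sf{T}$, with every other qubit restored to its original value.

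Substituting $\mathcal{D}(C_{p.1}) = O(\log n)$ into Lemma~\ref{lem:Ck} and using $r_c = \Theta(n)$ (which follows from the size bounds $r_t = |S_{d-1}| \in \big[\tfrac{n}{2(\lceil 1/c'\rceil+1)(1+c')},\, n/2\big]$ established in the setup of $\sf{C},\sf{T}$) gives
\[
\mathcal{D}(C_k) = O\!\left( n^2 + 2^{r_c} + \sum_{p=2}^{2^{r_c}} \log n \right) = O(\log(n)\, 2^{r_c}),
\]
as claimed. The hard part of the argument is to rigorously establish the cancellation in sweep (ii): the forward sweep implements a non-trivial linear transformation over $\mathbb{F}_2$ on $\sf{C}\cup \sf{T}$, and one must show that the compensating CNOTs, scheduled along the same matchings $M_{S_{d-2}}, M_{S_{d-3}}, \ldots, M_{S_1}$ in reverse order, precisely remove every intermediate contribution of the original qubit values while preserving the desired single $\oplus\, x_{h_{1p}}$ XOR at each leaf, all without exceeding depth $O(\log n)$ in any round.
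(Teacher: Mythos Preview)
Your overall strategy matches the paper's exactly: choose $j_i=1$ so that $C_{p.1}$ is a multi-target CNOT from the single control $\ket{x_{h_{1p}}}\in\mathsf{C}$ to every qubit in $\mathsf{T}=S_{d-1}$, implement that CNOT in depth $O(\log n)$ using the nested sets $S_1\subset\cdots\subset S_{d-1}$ and their matchings $M_{S_i}$, and then invoke Lemma~\ref{lem:Ck}. The final depth count is also correct.

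The gap is in your two-sweep implementation of the multi-target CNOT. As written, you (a)~CNOT the source into a seed in $S_1$, then (b)~apply the matchings $M_{S_1},\ldots,M_{S_{d-2}}$ top-down, then (c)~apply $M_{S_{d-2}},\ldots,M_{S_1}$ bottom-up. Over $\mathbb{F}_2$ this is the composition $B^{-1}\cdot B\cdot C$, where $C$ is step~(a) and $B$ is step~(b); since each CNOT is an involution and step~(c) reverses the order of step~(b), $B^{-1}B=\mathrm{id}$ and the net effect is just $C$: only the seed vertex (or vertices) receives $\oplus\,x_{h_{1p}}$, and the rest of $\mathsf{T}$ is untouched. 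Concretely, after the top-down sweep a vertex $v$ with ancestor chain $r=p_0,p_1,\ldots,p_k=v$ holds $y_v\oplus y_{p_{k-1}}\oplus\cdots\oplus y_r\oplus x_{h_{1p}}$; XORing in the parent's \emph{current} value during the bottom-up pass subtracts both the contamination \emph{and} the $x_{h_{1p}}$ term, leaving $y_v$. No reordering or reversal of CNOT directions salvages this while keeping depth $O(\log n)$.

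The paper's fix is to place the source insertion in the \emph{middle} of the V-shape, exactly as in Lemma~\ref{lem:multicontrolcnot}/Fig.~\ref{fig:add-circuit}: first apply $M_{S_{d-2}},\ldots,M_{S_1}$ (controls in $S_i$, targets in $\Gamma(S_i)$), then CNOT $x_{h_{1p}}$ onto each of the $|S_1|=O(1)$ vertices of $S_1$ (depth $O(\log n)$ by Lemma~\ref{lem:distance}), then apply $M_{S_1},\ldots,M_{S_{d-2}}$ again. The resulting conjugation $A^{-1}CA$ yields the desired fan-out: after the first sweep each $w\in\Gamma(S_i)$ holds $y_w\oplus y_{\mathrm{parent}(w)}$; the middle step gives every $v\in S_1$ the value $y_v\oplus x_{h_{1p}}$; and in the second sweep one verifies inductively that every $v\in S_i$ holds $y_v\oplus x_{h_{1p}}$ before $M_{S_i}$ is applied, so each $w\in\Gamma(S_i)$ becomes $(y_w\oplus y_{\mathrm{parent}(w)})\oplus(y_{\mathrm{parent}(w)}\oplus x_{h_{1p}})=y_w\oplus x_{h_{1p}}$. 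With this correction your depth analysis stands unchanged.
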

\begin{proof}
We first construct a quantum circuit for $C_{p.1}$ (Eq.~\eqref{eq:step_p1}) for all $p\in\{2,3,\ldots,2^{r_c}\}$.  For all $i\in [r_t]$, choose integers $j_i = 1$. The strings $c_{p-1}^{1}$ and $c_{p}^{1}$ in the  $(r_c,1)$-Gray code differ in the $h_{1p}$-th bit. $C_{p.1}$ effects the transformation
 \begin{align*}
     & \ket{x_1x_2\ldots x_{r_c}}_{V-S_{d-1}}\ket{\langle c_{p-1}^1t_1^{(k)},x\rangle,\ldots,\langle c_{p-1}^1t_{r_t}^{(k)},x\rangle}_{S_{d-1}}\\
     \to & \ket{x_1x_2\ldots x_{r_c}}_{V-S_{d-1}}\ket{\langle c_{p}^1t_1^{(k)},x\rangle,\ldots,\langle c_{p}^1t_{r_t}^{(k)},x\rangle}_{S_{d-1}} 
     \\
     = & \ket{x_1x_2\ldots x_{r_c}}_{V-S_{d-1}}\ket{\langle c_{p-1}^1t_1^{(k)},x\rangle\oplus x_{h_{1p}},\ldots,\langle c_{p-1}^1t_{r_t}^{(k)},x\rangle\oplus x_{h_{1p}}}_{S_{d-1}},\forall x\in\Bn.
 \end{align*}
 That is, it is equivalent to a multi-target CNOT gate (see Section.~\ref{subsec:qgates}), with control $\ket{x_{h_{1p}}}$ and targets being all qubits in $\sf T$. This multi-target CNOT gate can be implemented as follows. 
 For each set $S_i$ used in the construction of $\sf C$ and $\sf T$, there is an associated matching
\begin{equation}\label{eq:matching}
M_{S_i}=\left\{(u_j^{S_i},w_j^{S_i})\in E: u_j^{S_i}\in {S_i}, w_j^{S_i}\in\partial_{out}({S_i}),\text{~for~}\forall j\in[c'|{S_i}|]\right\}.
\end{equation}

\begin{figure}[]
\centering
 \begin{tikzpicture}
    \draw (0,0) ellipse (0.5 and 0.3);
    \draw (0,-0.2) ellipse (1 and 0.6);
    \draw (0,-0.5) ellipse (1.3 and 1);
    \draw (0,-0.8) ellipse (1.6 and 1.4);
    \draw (0,-1.2) ellipse (2 and 1.9);
    \draw (0,-1.6) ellipse (2.3 and 2.4);

    \draw [draw=yellow] (0,1)--(0,0);
    \draw [draw=teal] (0,0)--(0,-0.5);
    \draw [draw=violet] (0,-0.5)--(0,-1.15) (0,0)--(0.4,-1.15);
    \draw [draw=red] (0,-1.15)--(-0.2,-1.9) (0,-0.5)--(0.3,-1.9) (0.4,-1.15)--(0.6,-1.9);
    \draw [draw=blue] (-0.2,-1.9)--(-0.3,-2.7) (0,-1.15)--(-0.1,-2.7)  (0.3,-1.9)--(0.1,-2.7) (0.6,-1.9)--(0.7,-2.7);
    \draw [draw=orange] (-0.3,-2.7)--(-0.4,-3.5) (-0.1,-2.7)--(-0.2,-3.5) (0.1,-2.7)--(0,-3.5) (0.6,-1.9)--(0.6,-3.5) (0.7,-2.7)--(0.8,-3.5) ;
    
    \draw  (0.1,-1.9) node{\scriptsize $\cdots$} (0.4,-2.7) node{\scriptsize $\cdots$} (0.3,-3.5) node{\scriptsize $\cdots$};
    \draw (-0.4,-0.5) node{\tiny $\Gamma(S_1)$} (-0.4,-1.15) node{\tiny $\Gamma(S_2)$} (-0.6,-1.85) node{\tiny $\Gamma(S_{d-4})$} (-0.8,-2.7) node{\tiny $\Gamma(S_{d-3})$} (-0.85,-3.5) node{\tiny $\Gamma(S_{d-2})$};
    \draw (-0.5,0) node{\tiny $S_1$} (-1,-0.2) node{\tiny $S_2$} (-1.3,-0.4) node{\tiny $S_3$}  (-1.6,-0.7) node{\tiny $S_{d-3}$} (-1.9,-1) node{\tiny $S_{d-2}$} (-2.25,-1.4) node{\tiny $S_{d-1}$} (0.5,1) node{\tiny $\ket{x_{h_{1p}}}$};
     
    \draw [->] (4.3,-3)--(4.3,-2.7);
    \draw [->] (4.3,-2.2)--(4.3,-1.9);
    \draw [->] (4.3,-1.4)--(4.3,-1.15);
    \draw [->] (4.3,-0.68)-- (4.3,-0.5);
    \draw [->] (4.3,0)--(4.3,0.2);
    \draw [<-] (8.5,0)--(8.5,0.2);
    \draw [<-] (8.5,-3)--(8.5,-2.7);
    \draw [<-] (8.5,-2.2)--(8.5,-1.9);
    \draw [<-] (8.5,-1.4)--(8.5,-1.15);
    \draw [<-] (8.5,-0.68)-- (8.5,-0.5);
    \draw (4.3,-3.25) node{\tiny \fbox{Step 1: Apply CNOT gates on {\color{orange} $M_{S_{d-2}}$}~~}};
    \draw (4.3,-2.45) node{\tiny \fbox{Step 2: Apply CNOT gates on {\color{blue} $M_{S_{d-3}}$}}};
    \draw (4.3,-1.65) node{\tiny \fbox{Step 3: Apply CNOT gates on {\color{red} $M_{S_{d-4}}$}}};
    \draw (4.3,-0.9) node{\tiny \fbox{Step $d-3$: Apply CNOT gates on {\color{violet} $M_{S_{2}}$}}};
    \draw (4.3,-0.25) node{\tiny \fbox{Step $d-2$: Apply CNOT gates on {\color{teal} $M_{S_{1}}$}}};
    
    \draw (4.5,-1.3) node{\tiny $\cdots$} (8.7,-1.3) node{\tiny $\cdots$};
    
     \draw (6.4,0.5) node{\tiny \fbox{Step $d-1$: Apply $|S_1|$ CNOT gates, where the controls are $\ket{x_{h_{1p}}}$ and targets are in $S_1$.}};
    
    \draw (8.5,-3.25) node{\tiny \fbox{Step $2d-3$: Apply CNOT gates on {\color{orange} $M_{S_{d-2}}$}}};
    \draw (8.5,-2.45) node{\tiny \fbox{Step $2d-4$: Apply CNOT gates on {\color{blue} $M_{S_{d-3}}$}}};
    \draw (8.5,-1.65) node{\tiny \fbox{Step $2d-5$: Apply CNOT gates on {\color{red} $M_{S_{d-4}}$}}};
    \draw (8.5,-0.9) node{\tiny \fbox{Step $d+1$: Apply CNOT gates on {\color{violet} $M_{S_{2}}$}}};
    \draw (8.5,-0.25) node{\tiny \fbox{Step $d$: Apply CNOT gates on {\color{teal} $M_{S_{1}}$}}};
    
    \draw (1.2,-3.25)node{\tiny {\color{orange} $M_{S_{d-2}}$}} (1.1,-2.4)node{\tiny {\color{blue} $M_{S_{d-3}}$}} (0.85,-1.65)node{\tiny {\color{red} $M_{S_{d-4}}$}} (0.6,-1)node{\tiny {\color{violet} $M_{S_{2}}$}} (0.5,-0.4)node{\tiny {\color{teal} $M_{S_{1}}$}};
    
    \draw [fill=black] (0,1) circle (0.05);
    \draw [fill=black] (0,0) circle (0.05);
    \draw [fill=black] (0,-0.5) circle (0.05);
    \draw [fill=black] (0,-1.15) circle (0.05) (0.4,-1.15) circle (0.05);
    \draw [fill=black]  (-0.2,-1.9) circle (0.05) (0.3,-1.9) circle (0.05) (0.6,-1.9) circle (0.05);
    \draw [fill=black]  (-0.3,-2.7) circle (0.05) (-0.1,-2.7) circle (0.05) (0.1,-2.7) circle (0.05) (0.7,-2.7) circle (0.05);
    \draw [fill=black]  (-0.4,-3.5) circle (0.05) (-0.2,-3.5) circle (0.05) (0,-3.5) circle (0.05) (0.6,-3.5) circle (0.05) (0.8,-3.5) circle (0.05);
    \end{tikzpicture}
\caption{Circuit implementation of $C_{p.1}$ under $\Expander_n$ constraint. For all $i\in[d-2]$, applying CNOT gates on matching $M_{S_i}$ means applying CNOT gates on all edges in $M_{S_i}$, where the controls are in set $S_i$. }\label{fig:expander_cp1}
\end{figure}
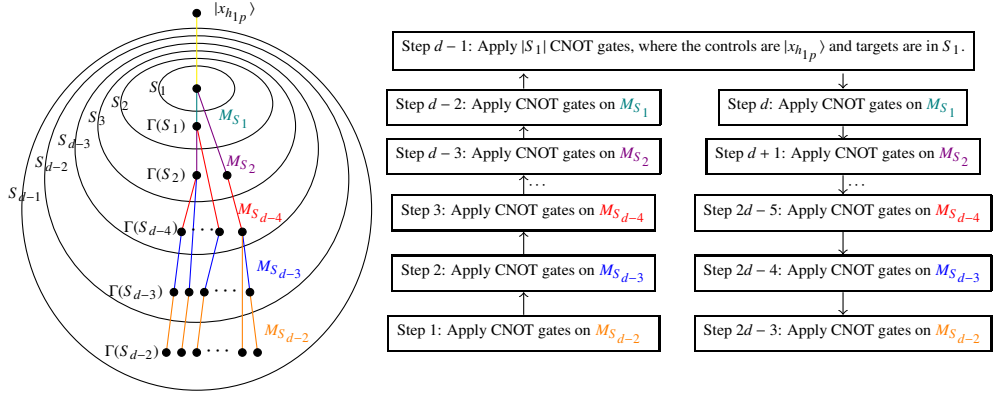
$C_{p.1}$ aims to XOR qubit $\ket{x_{h_{1p}}}$ to all qubits in $S_{d-1}$, and will be implemented in a way similar to that in Fig.~\ref{fig:add-circuit}. More precisely, this is constructed in $2d-3$ steps (see Fig.~\ref{fig:expander_cp1}).
\begin{itemize}
    \item Step $i$ for $i=1, 2, \ldots, d-2$: Apply CNOT gates to $M_{S_{d-i-1}}$;
    \item Step $d-1$: Apply $\lceil 1/c'\rceil+1$ CNOT gates, with each CNOT gate having a separate qubit in $S_1$ as target, and control qubit $\ket{x_{h_{1p}}}$;
    \item Step $j$ for $j=d, d+1, \ldots, 2d-3$: Apply CNOT gates to $M_{S_{j-(d-1)}}$;
\end{itemize}
Above, when we say ``apply CNOT gates to $M_{S_i}$'', we mean to apply CNOT gates to all qubit pairs $(u,v)$ corresponding to edges in the matching, with control qubits in set $S_i$. The correctness of this circuit can be seen by comparing Fig.~\ref{fig:expander_cp1} with the circuit in Fig.~\ref{fig:add-circuit}.

We now analyze the circuit depth of $C_{p.1}$. For each $i\in[d-2]$, all CNOT gates acting on $M_{S_i}$ can be implemented in depth 1 since $M_{S_i}$ is a matching. By Lemma~\ref{lem:distance}, the distance between $\ket{x_{h_{1p}}}$ and  any qubit in $S_1$ is at most $O(\log(n))$ and therefore, by Lemma \ref{lem:cnot_path_constraint},  Step $d-1$ can be implemented in depth  $O(\log(n))\cdot (\lceil 1/c'\rceil+1)=O(\log(n))$. The total depth of $C_{p.1}$ is thus $\mathcal{D}(C_{p.1})=2(d-2)+O(\log(n))=O(\log(n))$.

By Lemma \ref{lem:Ck}, the total depth of $C_k$ is
\[O(n^2+2^{r_c}+\sum_{p=2}^{2^{r_c}}\mathcal{D}(C_{p.1}))=O(n^2+2^{r_c}+(2^{r_c}-1)\cdot O(\log(n))=O(\log(n)2^{r_c}).\]
\end{proof}

\paragraph{Implementation of $\Lambda_n$}

\expandernoancilla*
\begin{proof}
By Lemma \ref{lem:diag_without_ancilla_correctness}, $\Lambda_n$ can be implemented by the circuit in Fig.~\ref{fig:diag_without_ancilla_framwork}. Recall that  both $r_t$ and $r_c=n-r_t$ are between $\Omega(n)$ and $n-\Omega(n)$, and $\ell\le \frac{2^{r_t+2}}{r_t+1}-1$. By Lemmas \ref{lem:reset} and \ref{lem:Lambda_rc}, \ref{lem:pi_expander} and \ref{lem:Ck_expander}, the total depth required is
\begin{equation*}
    2O(n\log(n))+\ell \cdot O(\log(n)2^{r_c})+O(n^2)+O(n2^{r_c})=O(\log(n)2^n/n).
\end{equation*}
\end{proof}

\subsection{Circuit implementation under arbitrary graph constraints (Proof of Lemma \ref{lem:diag_graph_withoutancilla})}
\label{sec:diag_without_ancilla_graph}

\paragraph{Choice of $\sf C$ and $\sf T$ }
Let $T$ be a spanning tree of connected graph $G=(E,V)$, with $\abs{V}=n$. We label all vertices as follows: we traverse $T$ by depth-first search (DFS), starting from the root, and label the vertices along the way in reverse order $n, n-1, \ldots, 2, 1$. 
 
Let $r_c=\lceil n/2\rceil$, $r_t=n-r_c$, and set $\textsf{C}=[r_c]$ and $\textsf{T}=[n]-\textsf{C}$. That is, $\textsf{T}$ contains the first $r_t=\lfloor n/2\rfloor$ vertices traversed in the DFS. By DFS, the vertices in register $\textsf{T}$ span a connected subgraph of graph $G=(V,E)$.

\begin{lemma}\label{lem:dfs_distance}
Let $d(i)$ denote the distance between qubits $i$ and qubit $i+1$ (as labelled by the DFS procedure above) in spanning tree $T$. Then, $\sum_{i=1}^{n-1}d(i)=O(n)$.
\end{lemma}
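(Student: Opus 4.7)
The plan is to relate the sum $\sum_{i=1}^{n-1} d(i)$ to the total length of the Euler tour of the spanning tree $T$, which is a standard object in tree traversal arguments. Recall that the labelling comes from a DFS on $T$ where vertices are assigned labels $n, n-1, \ldots, 1$ in the order they are first discovered. Hence vertex $i+1$ is discovered by the DFS strictly before vertex $i$, and consecutive labels $i$ and $i+1$ correspond to two vertices that are consecutive in the DFS discovery order.

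The key observation is this: during the DFS, once vertex $i+1$ has just been discovered, the procedure to reach the next newly discovered vertex $i$ is to walk up in $T$ until a vertex with unexplored children is reached, and then descend into one of those subtrees until a new leaf-of-the-frontier vertex $i$ is found. Every edge traversed during this walk lies on the unique tree path between $i+1$ and $i$, because $T$ is a tree; in particular, the number of edges walked between the discovery of $i+1$ and that of $i$ is at least $d_T(i+1, i) = d(i)$. So, first step: argue $d(i)$ is bounded above by the number of edges traversed between the two successive discoveries.

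Second step: sum over $i$. The concatenation of the walks between consecutive discoveries forms (a prefix of) the Euler tour of the rooted tree $T$, and it is a well-known fact that the Euler tour traverses each edge of $T$ exactly twice (once going down, once coming back up). Since $T$ has $n-1$ edges, the total number of edge traversals is at most $2(n-1)$. Therefore
\begin{equation*}
    \sum_{i=1}^{n-1} d(i) \;\le\; 2(n-1) \;=\; O(n),
\end{equation*}
which is the claim.

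There is no real obstacle here; the only thing to be careful about is the correspondence between the reverse labelling ($n, n-1, \ldots, 1$) and the standard DFS discovery order, and the fact that the ``return walk'' from the last discovered vertex back to the root (which completes the Euler tour) need not be counted in our sum, so the $2(n-1)$ bound is loose but already sufficient. The argument is insensitive to the underlying graph $G$: only the spanning tree $T$ and its DFS structure are used.
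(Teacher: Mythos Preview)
Your proof is correct and is essentially the same as the paper's: both bound each $d(i)$ by the length of the DFS walk between the discoveries of $i+1$ and $i$, then sum and use that the full DFS (Euler) tour traverses each of the $n-1$ tree edges at most twice, yielding $\sum_{i=1}^{n-1} d(i)\le 2(n-1)$. The paper phrases the first step slightly more simply (tree distance is the shortest path, hence at most any walk length), whereas you argue the DFS walk actually coincides with the tree path; either justification suffices.
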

\begin{proof}
Note that when we traverse $T$ in DFS, we traverse qubits in the order $n,n-1,\ldots,1$. Since  $d(i) = dist_T(i,i+1)$ is the distance on the shortest path from $i$ to $i+1$ on $T$, $dist_T(i,i+1)$ is at most the distance we walk along the DFS traversal path from $i$ to $i+1$. Summing this up for all $i\in [n-1]$, we see that $\sum_{i=1}^{n-1} d(i) = \sum_{i=1}^{n-1} dist_T(i,i+1)$ is at most the total distance we travel in a DFS traveral, which is at most $2(n-1)$, as each edge is visited at most twice in DFS. 
\end{proof}

\paragraph{Implementation of $\Pi$}
In this subsection, unitary transformation $\Pi$ (Eq.\eqref{eq:pi}) is denoted by $\Pi^{graph}$. 

\begin{lemma}\label{lem:pi_graph}
The unitary transformation $\Pi^{graph}$, defined as
\begin{equation*}
    \ket{x_1x_2\cdots x_n}_{[n]}\xrightarrow{\Pi^{graph}}\bigotimes_{i=1}^n\ket{x_i}_i\defeq \ket{x_{control}}_{\sf{C}}\ket{x_{target}}_{\sf{T}},
\end{equation*}
can be realized by a CNOT circuit of depth and size $O(n^2)$ under arbitrary graph constraint.
\end{lemma}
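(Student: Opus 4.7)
The plan is to realise $\Pi^{graph}$ as a permutation of the $n$ qubits via a sequence of SWAP gates, each implemented by Lemma \ref{lem:cnot_path_constraint}. Since $\Pi^{graph}$ acts on computational basis states as a permutation (moving $\ket{x_i}$ to the vertex labelled $i$ by the DFS traversal of the spanning tree $T$), any such permutation on $n$ elements can be decomposed into at most $n-1$ transpositions. Each transposition between two vertices $u,v \in V$ is a SWAP gate, which is just three CNOTs; by Lemma \ref{lem:cnot_path_constraint}, a SWAP between $u$ and $v$ can be realised under $G$ constraint by a CNOT circuit of depth and size $O(d_G(u,v))$, and $d_G(u,v) \le n-1$ for any pair in a connected graph on $n$ vertices. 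Composing the $\le n-1$ transpositions sequentially therefore yields an implementation of $\Pi^{graph}$ with depth and size $O(n^2)$.

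A second, equivalent route is to use the DFS labelling directly as a virtual path $1, 2, \ldots, n$ on $G$, whose consecutive labels are at tree distance $d(i)$ with $\sum_{i=1}^{n-1} d(i) = O(n)$ by Lemma \ref{lem:dfs_distance}. Any permutation on this virtual path can be sorted by bubble sort using $O(n^2)$ virtual-adjacent swaps; a virtual swap between labels $i$ and $i+1$ costs $O(d(i))$ CNOT gates by Lemma \ref{lem:cnot_path_constraint}. Each bubble-sort pass accumulates total size $\sum_i d(i) = O(n)$, and we run at most $n$ passes, giving size $O(n^2)$, and depth $O(n^2)$ since the passes must be serialised.

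The only subtlety to check is that intermediate qubits touched by the routing circuits of Lemma \ref{lem:cnot_path_constraint} are restored to their input values, so that previously positioned qubits are not corrupted; this is handled by the standard reversible-ladder construction underlying that lemma. Neither approach requires anything beyond the machinery already stated, and the $O(n^2)$ bound is what one expects from a worst-case connected graph of diameter $\Theta(n)$ (e.g., $\Path_n$), where a single SWAP across the graph already costs $\Omega(n)$.
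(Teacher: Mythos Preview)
Your proposal is correct. Decomposing the permutation into at most $n-1$ transpositions and routing each SWAP along a shortest path via Lemma~\ref{lem:cnot_path_constraint} gives depth and size $O(n^2)$, and you correctly note that the ladder construction in that lemma restores all intermediate qubits, so previously placed data is not disturbed.

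The paper takes a slightly different route: it simply observes that $\Pi^{graph}$ is an invertible linear transformation over $\mathbb{F}_2$ (being a permutation realizable by SWAPs, hence CNOTs) and invokes Lemma~\ref{lem:cnot_circuit} as a black box to get the $O(n^2)$ bound in one line. Your argument is more elementary and self-contained---it avoids citing the general CNOT-synthesis result of \cite{wu2019optimization} and instead builds the circuit directly from the routing primitive. The paper's approach is terser and reuses a lemma already needed elsewhere in the framework; yours makes the $O(n^2)$ cost more transparent (namely, $O(n)$ transpositions times $O(n)$ diameter). Your second bubble-sort variant also works but is unnecessary once the first argument is in hand.
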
 
\begin{proof}
For all $i\in[n]$, $\Pi^{graph}$ permutes $\ket{x_i}$ to qubit $i$, and can be implemented by a SWAP gates, each of requires 3 CNOT gates. The result follows from Lemma \ref{lem:cnot_circuit}.
\end{proof}

\paragraph{Implementation of $C_k$}

\begin{lemma}\label{lem:Ck_graph}
For all $k\in[\ell]$, unitary transformation $C_k$ (Eq. \eqref{eq:Ck}) can be implemented by a standard quantum circuit of size $O(n2^{r_c})$ under arbitrary graph constraint. 
\end{lemma}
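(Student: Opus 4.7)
The plan is to invoke Lemma~\ref{lem:Ck}, which reduces the problem to bounding $\sum_{p=2}^{2^{r_c}}\mathcal{S}(C_{p.1})$; since $r_t 2^{r_c} = O(n 2^{r_c})$ already, it suffices to implement each $C_{p.1}$ with $O(n)$ gates. Following Table~\ref{tab:choice-of-gray}, I will choose the Gray-code indices $j_1=\cdots=j_{r_t}=1$, so that $h_{j_i p}=h_{1p}$ does not depend on $i$ and $C_{p.1}$ becomes a multi-target CNOT with a single control qubit $\ket{x_{h_{1p}}}$ (a fixed qubit in $\sf C$) and targets equal to the $r_t$ qubits of $\sf T$.

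The key structural input is the DFS labeling used to define $\sf C$ and $\sf T$: since ${\sf T}$ consists of the first $r_t$ vertices encountered in a DFS of a spanning tree $T$ of $G$, the set $\sf T$ induces a connected subtree $T_{\sf T}\subseteq T$ with exactly $r_t-1$ edges. Let $v^*\in{\sf T}$ be the vertex of $\sf T$ closest to $h_{1p}$ in $T$, and let $P$ be the unique tree path from $h_{1p}$ to $v^*$ (of length at most $n$). I will implement $C_{p.1}$ in three stages that generalise Lemma~\ref{lem:multicontrolcnot} from a chain to a tree: (i) a forward pass along $T_{\sf T}$ rooted at $v^*$, processing non-root vertices in reverse DFS order and applying $\Cnot^{p(v)}_v$ on each tree edge, so that each $v\in{\sf T}\setminus\{v^*\}$ subsequently holds $y_v\oplus y_{p(v)}$; (ii) a single $\Cnot^{h_{1p}}_{v^*}$ executed along $P$ using the nested gadget of Lemma~\ref{lem:cnot_path_constraint}, which leaves every intermediate qubit on $P$ unchanged and sets $v^*$ to $y_{v^*}\oplus x_{h_{1p}}$; and (iii) a backward pass on $T_{\sf T}$ applying the same $\Cnot^{p(v)}_v$ in DFS order. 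A short induction on depth in $T_{\sf T}$ then shows that every $v\in{\sf T}$ exits stage (iii) holding $y_v\oplus x_{h_{1p}}$: by the time $v$ is processed its parent already carries $y_{p(v)}\oplus x_{h_{1p}}$, so the CNOT cancels the residual $y_{p(v)}$ that stage (i) left in $v$.

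Tallying gates, stages (i) and (iii) together contribute $2(r_t-1)$ CNOTs acting on single edges of $G$, while stage (ii) contributes $O(|P|)=O(n)$ gates, so $\mathcal{S}(C_{p.1})=O(n)$. Summing over $p$ and substituting into Lemma~\ref{lem:Ck} yields a total size of $O(n^2+r_t 2^{r_c}+n\cdot 2^{r_c}) = O(n 2^{r_c})$. The delicate step is stage (ii): because the intermediate vertices of $P$ lie in $\sf C$ and must be left untouched by $C_{p.1}$, the argument relies on the residue-free implementation of $\Cnot^{h_{1p}}_{v^*}$ in Lemma~\ref{lem:cnot_path_constraint}, whose nested CNOT pattern restores every intermediate qubit to its original value. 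The main piece of bookkeeping is therefore to confirm that the reversed DFS order of stage (iii) correctly unwinds the parities introduced by stage (i) once the single $x_{h_{1p}}$ has been injected at $v^*$; this is the verification step I would carry out most carefully.
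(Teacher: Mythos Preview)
Your proposal is correct and reaches the same bound $\mathcal{S}(C_{p.1})=O(n)$, but the mechanism differs from the paper's. Both you and the paper set $j_1=\cdots=j_{r_t}=1$ so that $C_{p.1}$ is a single-control multi-target CNOT. The paper then realises this by the \emph{linear} ladder of Lemma~\ref{lem:multicontrolcnot} along the DFS-labelled sequence $h_{1p},r_c{+}1,r_c{+}2,\ldots,n$: each CNOT between consecutively labelled qubits $i,i{+}1$ is implemented via Lemma~\ref{lem:cnot_path_constraint} at cost $O(d(i))$, and the total is bounded by Lemma~\ref{lem:dfs_distance}, which gives $\sum d(i)=O(n)$. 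You instead exploit the connectedness of $T_{\sf T}$ directly, doing a tree broadcast (forward pass on the $r_t-1$ tree edges, a single long-range CNOT of cost $O(n)$ into the root $v^*$, then a backward pass). Your route sidesteps Lemma~\ref{lem:dfs_distance} entirely, needing only that a connected subtree on $r_t$ vertices has $r_t-1$ edges (each an edge of $G$, so unit cost). Conversely, the paper's route is a direct reuse of the path gadget and is slightly more uniform in that every target is handled the same way; your argument singles out $v^*$ and requires the small inductive check in stage~(iii), which you flag correctly. One minor point: your remark that intermediate vertices of $P$ lie in ${\sf C}$ is true by the choice of $v^*$, but it is not actually needed for correctness, since the gadget of Lemma~\ref{lem:cnot_path_constraint} restores all intermediate qubits regardless of which register they belong to.
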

\begin{proof}
First, we construct quantum circuits for $C_{p.1}$ (Eq. \eqref{eq:step_p1}) for all $p\in\{2,3,\ldots,2^{r_c}\}$.
For every $i\in[r_t]$, choose integers $j_i=1$. Strings $c_{p-1}^{1}$ and $c_{p}^{1}$ in the $(r_c,1)$-Gray code differ in the $h_{1p}$-th bit.

For all $x\in \Bn$, $C_{p.1}$ effects the transformation
\begin{align*}\label{eq:implement_Cp1}
     \bigotimes_{j=1}^{r_c}\ket{x_j}_{j}\bigotimes_{i=1}^{r_t}\ket{\langle c_{p-1}^1t_i^{(k)},x\rangle}_{r_c+i} \to& \bigotimes_{j=1}^{r_c}\ket{x_j}_{j}\bigotimes_{i=1}^{r_t}\ket{\langle c_{p}^1t_i^{(k)},x\rangle}_{r_c+i} =\bigotimes_{j=1}^{r_c}\ket{x_j}_j\bigotimes_{i=1}^{r_t}\ket{\langle c_{p-1}^1t_i^{(k)},x\rangle\oplus x_{h_{1p}}}_{r_c+i},    
\end{align*}
and corresponds to a multi-target $\mathsf{CNOT}$ gate (see Appendix~\ref{append:basic_circuit}), with control being $\ket{x_{h_{1p}}}$ and targets being all qubits in $\sf T$. This can be implemented by the circuit in Fig.~\ref{fig:multi-cnot-general-g}, which is simply a relabelled version of Fig.~\ref{fig:add-circuit}.

\begin{figure}[!hbt]
\centerline 
{
\Qcircuit @C=0.6em @R=0.7em {
\lstick{\ket{x_{h_{1p}}}}&\qw & \qw &\qw & \qw & \ctrl{1} & \qw & \qw & \qw & \qw & \qw  &\rstick{\ket{x_{h_{1p}}}} \\
\lstick{\ket{x_{r_c+1}}} &\qw &\qw &\qw & \ctrl{1} & \targ & \ctrl{1} &\qw & \qw & \qw & \qw  &\rstick{\ket{x_{h_{1p}}\oplus x_{r_c+1}}}\\
\lstick{\ket{x_{r_c+2}}} &\qw &\qw & \ctrl{1} & \targ & \qw & \targ & \ctrl{1} & \qw &\qw & \qw   &\rstick{\ket{x_{h_{1p}}\oplus x_{r_c+2}}}\\
\lstick{\vdots~~~}&\qw & \ctrl{1} &\targ & \qw & \qw &  \qw & \targ & \ctrl{1} & \qw &\qw   &\rstick{~~~\vdots}\\
\lstick{\ket{x_{n}}}&\qw & \targ &\qw & \qw & \qw & \qw & \qw & \targ & \qw &\qw   &\rstick{\ket{x_{h_{1p}}\oplus x_{n}}}
}
}\caption{CNOT circuit construction of multi-target $\mathsf{CNOT}$ gate used to implement $C_{p.1}$.}\label{fig:multi-cnot-general-g}
\end{figure}
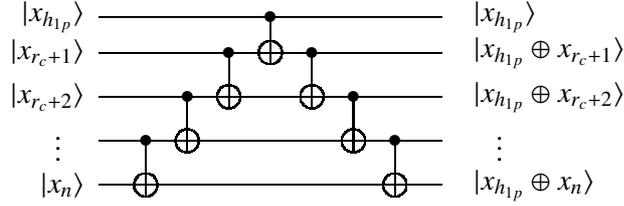



Let $d(i)$ denote the distance between qubits $i$ and $i+1$ in $G$. By Lemma~\ref{lem:dfs_distance}, $\sum_{i=r_c+1}^{n-1} d(i)=O(n)$, and $\sum_{i=h_{1p}}^{r_c+1} d(i) = O(n)$. By Lemma~\ref{lem:multicontrolcnot}, $C_{p.1}$ can be implemented in circuit size $\mathcal{S}(C_{p.1})=O(n)$.

By Lemma \ref{lem:Ck}, the total size of $C_k$ is
$O(n^2+r_t2^{r_c}+\sum_{p=2}^{2^{r_c}}\mathcal{S}(C_{p.1}))=O(n^2+r_t2^{r_c}+(2^{r_c}-1)\cdot O(n))=O(n2^{r_c})$.
\end{proof}

\graphnoancilla*
\begin{proof}
By Lemma \ref{lem:diag_without_ancilla_correctness}, $\Lambda_n$ can be implemented by the circuit in Fig.~\ref{fig:diag_without_ancilla_framwork}. Recall that $r_c=\lceil n/2\rceil$, $r_t=n-r_c$ and $\ell\le \frac{2^{r_t+2}}{r_t+1}-1$. By Lemmas \ref{lem:reset}, \ref{lem:Lambda_rc}, \ref{lem:pi_graph} and \ref{lem:Ck_graph},  the total size required is
\begin{equation*}
    O(n^2)+\ell\cdot  O(n2^{r_c})+O(n^2)+O(n2^{r_c})=O(2^n).
\end{equation*}
\end{proof}
Though this is not as good as the $\tilde O(2^n/n)$ upper bound obtained in the constructions for grids, trees, and expanders, in Appendix 
 \ref{append:QSP_US_lowerbound} we will see that this extra price of $O(n)$ is unavoidable for general graphs.

\section{Circuit constructions for diagonal unitary matrices with ancillary qubits under qubit connectivity constraints}
\label{append:diag_with_ancilla}

\subsection{Circuit framework}
\label{sec:diag_with_ancilla_framework}
Our circuit framework for $\Lambda_n$ using $m$ ancilla is shown in Fig.~\ref{fig:diag_with_ancilla_framwork}, which generalizes the ancilla-based framework of~\cite{sun2021asymptotically}. 
In our approach here, the $n+m$ qubits are divided into four registers:
\begin{itemize}
    \item ${\sf R}_{\rm inp}$: an $n$-qubit input register used to hold the input state $\ket{x}$, with $x\in\{0,1\}^n$ divided into an  $(n-p)$-bit prefix $x_{pre}=x_1x_2\ldots x_{n-p}$ and a $p$-bit suffix $x_{suf}=x_{n-p+1}\ldots x_n$. The first $\tau$ bits of $x_{pre}$ (with $\tau$ dependent on the constraint graph) are referred to as $x_{aux}$, i.e., $x_{aux}=x_1x_2\ldots x_{\tau}$ 
    {and hold frequently used content, to be copied close to the target qubits in order to reduce the circuit depth of the Gray cycle stage.} 
    \item The $m$ ancillary qubits are divided into three registers:
    \begin{itemize}
        \item ${\sf R}_{\rm copy}$: the copy register of size $\lambda_{copy} \ge n$
        \item ${\sf R}_{\rm targ}$: the target register of size $\lambda_{targ} = 2^p \ge n$
        \item ${\sf R}_{\rm aux}$: the auxiliary register  of size $\lambda_{aux} \ge n$ 
    \end{itemize}
\end{itemize}
{A few remarks on why we need the registers each have size at least $n$.} As our approach requires creating at least one copy for each of $\ket{x_{pre}}$ and $\ket{x_{suf}}$ (for a total of $n$ qubits), we require at least $n$ ancillary qubits for the copy register. If the size of the target register is $o(n)$, the circuit depths achievable by methods from this section will be larger than the circuit depths in
Appendix \ref{sec:diag_without_ancilla_path} and Appendix \ref{sec:diag_without_ancilla_binarytree}. We therefore must also allow $n$ ancillary qubits for the target register. While the auxiliary register may be smaller than $n$, for simplicity we also allow $n$ qubits here, and therefore in total we assume that $m\ge 3n$.

The circuit itself consists of $5$ stages. 
\begin{enumerate}
    \item Suffix Copy: makes $O\left(\lambda_{copy}/p\right)$ copies of $\ket{x_{suf}}$ in ${\sf R}_{\rm copy}$.
    
    \item Gray Initial: prepares the state $\ket{\langle c_1^{\ell_1}t_1, x\rangle}\otimes\cdots\otimes\ket{\langle c_1^{\ell_{2^p}}t_{2^p},x\rangle}=\ket{\langle t_1, x_{suf}\rangle}\otimes\cdots\otimes\ket{\langle t_{2^p},x_{suf}\rangle}$ in ${\sf R}_{\rm targ}$, where $\ell_k$ (for $k\in[2^p]$) are integers specifying $2^p$ $(n-p, \ell_k)$-Gray codes $\{c^{\ell_k}_1,c^{\ell_k}_2,\ldots c^{\ell_k}_{2^{n-p}}\}$, $\{t_1, \ldots, t_{2^p}\} = \{0,1\}^{p}$, and $c_1^i=0^{n-p}$ and $t_i$ are the prefix and suffix of $s$ (see Eq.~\eqref{eq:task1}). 
    
    \item Prefix Copy: makes $O\left(\lambda_{aux}/\tau\right)$ copies of $\ket{x_{aux}}$ in ${\sf R}_{\rm aux}$, and replaces the copies of $\ket{x_{suf}}$ in ${\sf R}_{\rm copy}$ with $O\left(\lambda_{copy}/(n-p)\right)$ copies of $\ket{x_{pre}}$.
    
    \item Gray Cycle: 
    This stage enumerates all $2^{n-p}$ prefixes of $s$ by going along a Gray code---each qubit $k$ uses $(n-p,\ell_k)$-Gray code, which consists of $2^{n-p}$ steps, with each step $j$ responsible for (i) updating 
    prefix, and (ii) implementing a phase shift (see further details below).
    
    \item Inverse: restores all ancillary qubits to zero.
\end{enumerate}
More precisely, if we define\footnote{There exist some qubits which are not utilized to store copies of suffixes and prefixes. We omit these qubits for simplicity.}
\begin{equation*}
     \ket{x_{SufCopy}}:=\underbrace{\ket{x_{suf}\cdots x_{suf}}}_{O(\frac{\lambda_{copy}}{p})\text{~copies~of~}x_{suf}},\quad 
     \ket{x_{PreCopy}}:= \underbrace{\ket{x_{pre}\cdots x_{pre}}}_{O(\frac{\lambda_{copy}}{n-p})~{\rm copies~of}~x_{pre}},\quad
     \ket{x_{AuxCopy}}:= \underbrace{\ket{x_{aux}\cdots x_{aux}}}_{O(\frac{\lambda_{aux}}{\tau})~{\rm copies~of}~x_{aux}},
\end{equation*}
as well as, for all $j\in[2^{n-p}]$ and all $k\in [2^p]$
\begin{equation}\label{eq:s,f}
    s(j,k):= c_j^{\ell_k} t_k, \qquad f_{j,k}:= \langle s(j,k), x\rangle, \qquad \ket{f_j}_{{\sf R}_{\rm targ}}:=\bigotimes_{k\in[2^p]}\ket{f_{j,k}}_{{\sf R}_{\rm targ,k}},
\end{equation}
where ${\sf R}_{{\rm targ},k}$ is the $k$-th qubit in ${\sf R}_{\rm targ}$, then unitary operators corresponding to each of the above 5 stages can be expressed as:
\begin{align}
U_{SufCopy}\ket{x}_{{\sf R}_{\rm inp}}\ket{0^{\lambda_{copy}}}_{{\sf R}_{\rm copy}}&=\ket{x}_{{\sf R}_{\rm inp}}\ket{x_{SufCopy}}_{{\sf R}_{\rm copy}}, \label{eq:sufcopy_graph} \\
U_{GrayInit}\ket{x_{SufCopy}}_{{\sf R}_{\rm copy}}\ket{0^{\lambda_{targ}}}_{{\sf R}_{\rm targ}}&=\ket{x_{SufCopy}}_{{\sf R}_{\rm copy}}\ket{f_1}_{{\sf R}_{\rm targ}},\label{eq:gray_initial_graph}\\
U_{PreCopy}\ket{x}_{{\sf R}_{\rm inp}}\ket{x_{SufCopy}}_{\sf{R}_{\rm copy}}\ket{0^{\lambda_{aux}}}_{{\sf R}_{\rm aux}}&=\ket{x}_{{\sf R}_{\rm inp}}\ket{x_{PreCopy}}_{\sf{R}_{\rm copy}}\ket{x_{AuxCopy}}_{{\sf R}_{\rm aux}},\label{eq:precopy_graph}\\
U_{GrayCycle}\ket{x_{PreCopy}}_{{\sf R}_{\rm copy}} \ket{f_1}_{{\sf R}_{\rm targ}}\ket{x_{AuxCopy}}_{{\sf R}_{{\rm aux}}} &=e^{i\theta(x)}\ket{x_{PreCopy}}_{{\sf R}_{\rm copy}} \ket{f_{1}}_{{\sf R}_{\rm targ}}\ket{x_{AuxCopy}}_{{\sf R}_{{\rm aux}}},\label{eq:gray_cycle_graph}\\
 U_{Inverse}\ket{x}_{{\sf R}_{\rm inp}}\ket{x_{PreCopy}}_{{\sf R}_{\rm copy}} \ket{f_1}_{{\sf R}_{\rm targ}}\ket{x_{AuxCopy}}_{{\sf R}_{\rm aux}} &=\ket{x}_{{\sf R}_{\rm inp}}\ket{0^{\lambda_{copy}}}_{{\sf R}_{\rm copy}}\ket{0^{\lambda_{targ}}}_{{\sf R}_{\rm targ}}\ket{0^{\lambda_{aux}}}_{{\sf R}_{\rm aux}}.\label{eq:inverse_graph}
\end{align}

It is straightforward to verify that the sequential application of these unitary operators implements
$\Lambda_n$, i.e. $\ket{x}\to e^{i\theta(x)}\ket{x}$ for all $x\in\Bn$, as in Eq. \eqref{eq:diag}. Note that $c_1^{\ell_i}:=0^{n-p}$ for all $i\in[2^p]$ and thus $\ket{f_1}= \ket{\langle t_1, x_{suf}\rangle}\otimes \cdots \otimes \ket{\langle t_{2^p}, x_{suf}\rangle}$.

\begin{figure}[hbt]
    \centerline{
    \begin{tabular}{c}
\begin{pgfpicture}{0em}{0em}{0em}{0em}
\color{llgray}
\pgfrect[fill]{\pgfpoint{4.5em}{0.5 em}}{\pgfpoint{31em}{-12em}}
\color{llblue}
\pgfrect[fill]{\pgfpoint{4.5em}{-12 em}}{\pgfpoint{31em}{-4.5em}}
\color{llgreen}
\pgfrect[fill]{\pgfpoint{4.5em}{-16.9 em}}{\pgfpoint{31em}{-4.6em}}
\color{llyellow}
\pgfrect[fill]{\pgfpoint{4.5em}{-22.0 em}}{\pgfpoint{31em}{-4.5em}}
\color{lgray}
\pgfrect[fill]{\pgfpoint{31em}{0.5 em}}{\pgfpoint{9em}{-12em}}
\color{lblue}
\pgfrect[fill]{\pgfpoint{31em}{-12 em}}{\pgfpoint{9em}{-4.5em}}
\color{lgreen}
\pgfrect[fill]{\pgfpoint{31em}{-16.9 em}}{\pgfpoint{9em}{-4.6em}}
\color{lyellow}
\pgfrect[fill]{\pgfpoint{31em}{-22.0 em}}{\pgfpoint{9em}{-4.5em}}
\end{pgfpicture}
    \Qcircuit  @C=0.6em @R=0.5em {
    &\lstick{\scriptstyle\ket{x_1}} & \qw & \qw & \qw &  \qw & \multigate{5}{\scriptstyle U_{PreCopy}}& \qw &\qw & \ustick{e^{i\theta(x)}}\qw &  \multigate{17}{\scriptstyle U_{Inverse}} & \qw &\rstick{\scriptstyle \ket{x_1}} \\
     &\lstick{\scriptstyle\vdots~~} & \qw & \qw &\qw &  \qw &  \ghost{\scriptstyle U_{PreCopy}}&\qw & \qw &\qw &  \ghost{\scriptstyle U_{Inverse}} & \qw & \rstick{\scriptstyle ~~\vdots} \inputgroupv{2}{2}{2.5 em}{0em}{\scriptstyle\ket{x_{aux}}~~~~~~}\\
     &\lstick{\scriptstyle\ket{x_\tau}} &\qw & \qw & \qw & \qw & \ghost{\scriptstyle U_{PreCopy}}&\qw & \qw &\qw &  \ghost{\scriptstyle U_{Inverse}} & \qw & \rstick{\scriptstyle \ket{x_\tau}} \\
    &\lstick{\scriptstyle\ket{x_{\tau+1}}} & \qw &\qw & \qw & \qw & \ghost{\scriptstyle U_{PreCopy}}&\qw & \qw &\qw &  \ghost{\scriptstyle U_{Inverse}} & \qw & \rstick{\scriptstyle \ket{x_{\tau+1}}}\\
    & \lstick{\scriptstyle\vdots~~} &\qw &\qw &\qw &\qw &  \ghost{\scriptstyle U_{PreCopy}}&\qw &\qw &\qw &  \ghost{\scriptstyle U_{Inverse}} & \qw & \rstick{{\scriptstyle ~~\vdots}~~~~{\sf R}_{\rm inp}} \\
     &\lstick{\scriptstyle\ket{x_{n-p}}} &\qw & \qw &\qw & \qw & \ghost{\scriptstyle U_{PreCopy}}&\qw & \qw &\qw &  \ghost{\scriptstyle U_{Inverse}} & \qw & \rstick{\scriptstyle\ket{x_{n-p}}}  \inputgroupv{3}{3}{7.5 em}{0em}{\scriptstyle \ket{x_{pre}}~~~~~~~~~~~~~~~~~~~~~~~~~~~}\\
    &\lstick{\scriptstyle\ket{x_{n-p+1}}} & \multigate{5}{\scriptstyle U_{SufCopy}} & \qw &\qw &\qw & \qw & \qw &\qw &\qw &  \ghost{\scriptstyle U_{Inverse}} & \qw & \rstick{\scriptstyle\ket{x_{n-p+1}}}\\
     &\lstick{\scriptstyle\vdots~~} & \ghost{\scriptstyle U_{SufCopy}} & \qw &\qw & \qw & \qw \qwx[-2]\qwx[2] &\qw &\qw &\qw & \ghost{\scriptstyle U_{Inverse}} & \qw & \rstick{\scriptstyle ~~\vdots} \inputgroupv{8}{8}{4.6 em}{0em}{\scriptstyle \ket{x_{suf}}~~~~~~~~~~~~~}\\
     &\lstick{\scriptstyle\ket{x_n}} & \ghost{\scriptstyle U_{SufCopy}} & \qw &\qw &\qw &  \qw & \qw &\qw &\qw &  \ghost{\scriptstyle U_{Inverse}} & \qw & \rstick{\scriptstyle\ket{x_n}} \\
    &\lstick{\scriptstyle\ket{0}} & \ghost{\scriptstyle U_{SufCopy}}  & \push{\scriptstyle\ket{x_{suf}}}\qw & \multigate{5}{\scriptstyle U_{GrayInit}}&\qw &  \multigate{2}{\scriptstyle U_{PreCopy}}& \push{\scriptstyle\ket{x_{pre}}}\qw & \multigate{8}{\scriptstyle U_{GrayCycle}} &\qw &  \ghost{\scriptstyle U_{Inverse}} & \qw & \rstick{\scriptstyle\ket{0}}\\
     &\lstick{\scriptstyle\vdots~~} & \ghost{\scriptstyle U_{SufCopy}} & \push{\scriptstyle\vdots}\qw &\ghost{\scriptstyle U_{GrayInit}}& \qw & \ghost{\scriptstyle U_{PreCopy}}& \push{\scriptstyle\vdots}\qw & \ghost{\scriptstyle U_{GrayCycle}} & \qw & \ghost{\scriptstyle U_{Inverse}} & \qw &\rstick{{\scriptstyle ~~\vdots}~~~~{\sf R}_{\rm copy}}\\
     &\lstick{\scriptstyle\ket{0}} & \ghost{\scriptstyle U_{SufCopy}} & \push{\scriptstyle\ket{x_{suf}}} \qw &\ghost{\scriptstyle U_{GrayInit}}&\qw &  \ghost{\scriptstyle U_{PreCopy}}&\push{\scriptstyle\ket{x_{pre}}} \qw &\ghost{\scriptstyle U_{GrayCycle}} &\qw &  \ghost{\scriptstyle U_{Inverse}} & \qw &\rstick{\scriptstyle\ket{0}} \inputgroupv{11}{11}{3 em}{0em}{\scriptstyle \lambda_{copy}~\text{qubits}~~~~~~~~~~~~~~~~}\\
    &\lstick{\scriptstyle\ket{0}} & \qw &\qw & \ghost{\scriptstyle U_{GrayInit}}& \push{\scriptstyle\ket{\langle c_1^{\ell_1}t_1,x\rangle}}\qw &  \qw &\qw &  \ghost{\scriptstyle U_{GrayCycle}} & \push{\scriptstyle\ket{\langle c_1^{\ell_1}t_1,x\rangle}}\qw &  \ghost{\scriptstyle U_{Inverse}} &\qw & \rstick{\scriptstyle\ket{0}}\\
     &\lstick{\scriptstyle\vdots~~} &\qw &\qw &\ghost{\scriptstyle U_{GrayInit}}&\push{\scriptstyle\vdots }\qw &  \qw \qwx[-2] \qwx[2] &\qw & \ghost{\scriptstyle U_{GrayCycle}} &\push{\scriptstyle \vdots}\qw & \ghost{\scriptstyle U_{Inverse}} &\qw &\rstick{{\scriptstyle ~~\vdots}~~~~{\sf R}_{\rm targ}}\\
     &\lstick{\scriptstyle\ket{0}} &\qw &\qw & \ghost{\scriptstyle U_{GrayInit}}& \push{\scriptstyle\ket{\langle c_1^{\ell_{2^p}}t_{2^p},x\rangle}}\qw &  \qw &\qw &  \ghost{\scriptstyle U_{GrayCycle}} &\push{\scriptstyle\ket{\langle c_1^{\ell_{2^p}}t_{2^p},x\rangle}}\qw &  \ghost{\scriptstyle U_{Inverse}} &\qw &\rstick{\scriptstyle\ket{0}} \inputgroupv{14}{14}{3 em}{0em}{\scriptstyle \lambda_{targ}=2^p~\text{qubits}~~~~~~~~~~~~~~~~~~~}\\
    &\lstick{\scriptstyle\ket{0}} &\qw & \qw &\qw & \qw &  \multigate{2}{\scriptstyle U_{PreCopy}}& \push{\scriptstyle\ket{x_{aux}}}\qw &\ghost{\scriptstyle U_{GrayCycle}} &\qw &  \ghost{\scriptstyle U_{Inverse}} & \qw &\rstick{\scriptstyle\ket{0}}\\
     &\lstick{\scriptstyle\vdots~~} &\qw &\qw &\qw &\qw &  \ghost{\scriptstyle U_{PreCopy}}& \push{\scriptstyle\vdots}\qw &\ghost{\scriptstyle U_{GrayCycle}} &\qw &  \ghost{\scriptstyle U_{Inverse}} &\qw &\rstick{{\scriptstyle ~~\vdots}~~~~{\sf R}_{\rm aux}}\\
     &\lstick{\scriptstyle\ket{0}} &\qw & \qw &\qw &\qw &  \ghost{\scriptstyle U_{PreCopy}}&  \push{\scriptstyle\ket{x_{aux}}}\qw &\ghost{\scriptstyle U_{GrayCycle}} &\qw &  \ghost{\scriptstyle U_{Inverse}} & \qw &\rstick{\scriptstyle\ket{0}}  \inputgroupv{17}{17}{3 em}{0em}{\scriptstyle \lambda_{aux}~\text{qubits}~~~~~~~~~~~~~~~~}\\
    }
    \end{tabular}
    }
    \caption{Circuit framework for implementing diagonal unitaries $\Lambda_n$ with $m$ ancillary qubits under graph constraints. The framework consists of 5 stages: prefix copy, Gray initial, prefix copy, Gray cycle and inverse. The input register $\textsf{R}_{\rm inp}$ (grey) corresponds to the $n$ input qubits of $\Lambda_n$. The $m$ ancillary qubits are partitioned into 3 registers, ${\sf R}_{\rm copy}$ (blue), ${\sf R}_{\rm targ}$ (green) and ${\sf R}_{\rm aux}$ (yellow). Darker shading indicates that the phase shift $e^{i\theta(x)}$ has been effected. Note that $c_1^{\ell_i}:=0^{n-p}$ for all $i\in[2^p]$, and thus $\langle c_1^{\ell_i}t_i, x\rangle = \langle t_i, x_{suf}\rangle$.
    }
    \label{fig:diag_with_ancilla_framwork}
\end{figure}
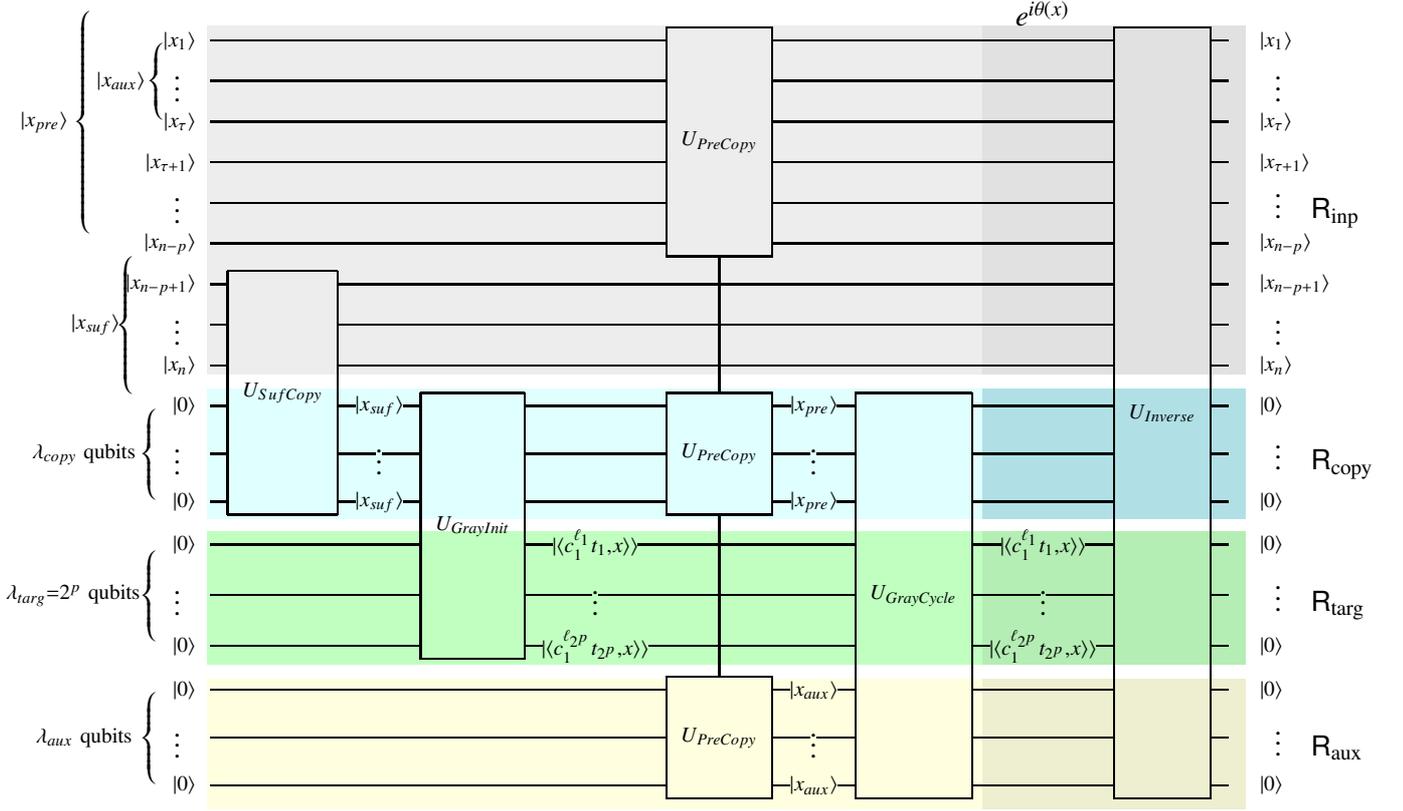

\begin{table}[h!]\small
    \centering
    \caption{Choice of integers $\ell_k$ ($k=1, \ldots, 2^p$) which specify the $2^{p}$ Gray codes used in the Gray Cycle stage, for various graph constraints. $K_{n+m}$ is the complete graph on $n+m$ vertices, and corresponds to no connectivity constraints. 
    }
    \begin{tabular}{c|c|cccc}\footnotesize
         &  $K_{n+m}$~\cite{sun2021asymptotically}  & $\Path_{n+m}$  & $\Grid^{n_1,\ldots,n_d}_{n+m}$& $\Tree_{n+m}(2)$ & $\Expander_{n+m}$ \\ \hline
    $\ell_k$    & $(k-1)\mod (n-p)+1$ & $(k-1)\mod (n-p)+1$ & $(k-1)\mod (n-p)+1$   & $1$ & $1$ 
    \end{tabular}
    \label{tab:choice-of-graycycle}
\end{table}

Next, we show circuit depth bounds for several of the stages under general graph constraint.  In what follows, we use $\mathcal{D}(U)$ to denote the circuit depth required to implement operator $U$.

\paragraph{Prefix Copy} It will be convenient to define the following two operators $U''_{PreCopy}$, $U'''_{PreCopy}$
\begin{align}
   \ket{x}_{{\sf R}_{\rm inp}}\ket{0^{\lambda_{copy}}}_{\sf{R}_{\rm copy}}\ket{0^{\lambda_{aux}}}_{{\sf R}_{\rm aux}} &\xrightarrow{U''_{PreCopy}}\ket{x}_{{\sf R}_{\rm inp}}\ket{x_{PreCopy}}_{\sf{R}_{\rm copy}}\ket{0^{\lambda_{aux}}}_{{\sf R}_{\rm aux}},\label{eq:precopy2_graph} \\
   \ket{x}_{{\sf R}_{\rm inp}}\ket{x_{PreCopy}}_{\sf{R}_{\rm copy}}\ket{0^{\lambda_{aux}}}_{{\sf R}_{\rm aux}}& \xrightarrow{U'''_{PreCopy}}\ket{x}_{{\sf R}_{\rm inp}}\ket{x_{PreCopy}}_{\sf{R}_{\rm copy}}\ket{x_{AuxCopy}}_{{\sf R}_{\rm aux}}.\label{eq:precopy3_graph}
\end{align}

\begin{lemma}[]\label{lem:precopy_graph}
$U_{PreCopy}$ can be implemented in circuit depth 
\begin{equation*}
    \mathcal{D}(U_{PreCopy})\le \mathcal{D}(U_{SufCopy})+\mathcal{D}(U''_{PreCopy})+\mathcal{D}(U'''_{PreCopy}).
\end{equation*} 
\end{lemma}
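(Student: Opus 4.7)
The plan is to implement $U_{PreCopy}$ as a composition of three operators applied sequentially, whose depths add up to give the claimed bound.

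First, I would observe that the difference between the input and output states of $U_{PreCopy}$ is that the copy register starts with $O(\lambda_{copy}/p)$ copies of $x_{suf}$ and ends with $O(\lambda_{copy}/(n-p))$ copies of $x_{pre}$, while the auxiliary register is populated with copies of $x_{aux}$. Therefore, a natural strategy is: (i) uncompute the copies of $x_{suf}$ in ${\sf R}_{\rm copy}$ to restore it to $\ket{0^{\lambda_{copy}}}$; (ii) populate ${\sf R}_{\rm copy}$ with copies of $x_{pre}$; (iii) populate ${\sf R}_{\rm aux}$ with copies of $x_{aux}$. These three steps correspond exactly to $U_{SufCopy}^{\dagger}$, $U''_{PreCopy}$, and $U'''_{PreCopy}$ respectively.

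Concretely, starting from $\ket{x}_{{\sf R}_{\rm inp}}\ket{x_{SufCopy}}_{{\sf R}_{\rm copy}}\ket{0^{\lambda_{aux}}}_{{\sf R}_{\rm aux}}$, applying $U_{SufCopy}^{\dagger}$ on registers ${\sf R}_{\rm inp}$ and ${\sf R}_{\rm copy}$ (using Eq.~\eqref{eq:sufcopy_graph}) yields $\ket{x}_{{\sf R}_{\rm inp}}\ket{0^{\lambda_{copy}}}_{{\sf R}_{\rm copy}}\ket{0^{\lambda_{aux}}}_{{\sf R}_{\rm aux}}$. Then $U''_{PreCopy}$ (Eq.~\eqref{eq:precopy2_graph}) produces $\ket{x}_{{\sf R}_{\rm inp}}\ket{x_{PreCopy}}_{{\sf R}_{\rm copy}}\ket{0^{\lambda_{aux}}}_{{\sf R}_{\rm aux}}$, and finally $U'''_{PreCopy}$ (Eq.~\eqref{eq:precopy3_graph}) produces the desired $\ket{x}_{{\sf R}_{\rm inp}}\ket{x_{PreCopy}}_{{\sf R}_{\rm copy}}\ket{x_{AuxCopy}}_{{\sf R}_{\rm aux}}$, matching Eq.~\eqref{eq:precopy_graph}.

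Since reversing a circuit gives its inverse with the same depth, $\mathcal{D}(U_{SufCopy}^{\dagger}) = \mathcal{D}(U_{SufCopy})$, and the total depth of the composition is at most the sum $\mathcal{D}(U_{SufCopy}) + \mathcal{D}(U''_{PreCopy}) + \mathcal{D}(U'''_{PreCopy})$. There is no real obstacle here: the entire argument is a bookkeeping verification that the three operators compose to $U_{PreCopy}$ and that their depths add. The harder work is deferred to bounding $\mathcal{D}(U_{SufCopy})$, $\mathcal{D}(U''_{PreCopy})$, and $\mathcal{D}(U'''_{PreCopy})$ individually under specific graph constraints, which is the subject of subsequent lemmas.
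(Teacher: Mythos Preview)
Your proposal is correct and matches the paper's proof essentially verbatim: the paper also implements $U_{PreCopy}$ as $U'''_{PreCopy}\,U''_{PreCopy}\,U_{SufCopy}^{\dagger}$ and uses $\mathcal{D}(U_{SufCopy}^{\dagger})=\mathcal{D}(U_{SufCopy})$ to obtain the stated depth bound.
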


\begin{proof}
$U_{PreCopy}$ can be implemented in the following way:
\begin{align*}
    &\ket{x}_{{\sf R}_{\rm inp}}\ket{x_{SufCopy}}_{\sf{R}_{\rm copy}}\ket{0^{\lambda_{aux}}}_{{\sf R}_{\rm aux}}\\
    \xrightarrow{U^\dagger_{SufCopy}}&\ket{x}_{{\sf R}_{\rm inp}}\ket{0^{\lambda_{copy}}}_{\sf{R}_{\rm copy}}\ket{0^{\lambda_{aux}}}_{{\sf R}_{\rm aux}},\\
    \xrightarrow{U''_{PreCopy}}&\ket{x}_{{\sf R}_{\rm inp}}\ket{x_{PreCopy}}_{\sf{R}_{\rm copy}}\ket{0^{\lambda_{aux}}}_{{\sf R}_{\rm aux}},\\
    \xrightarrow{U'''_{PreCopy}}&\ket{x}_{{\sf R}_{\rm inp}}\ket{x_{PreCopy}}_{\sf{R}_{\rm copy}}\ket{x_{AuxCopy}}_{{\sf R}_{\rm aux}}.
\end{align*}
\end{proof}

\paragraph{Gray Cycle} Let $s(j,k)$ and $\ket{f_j}$ be as in Eq.~\eqref{eq:s,f}. $U_{GrayCycle}$ consists of $2^{n-p}$ phases.
For $j\le 2^{n-p}-1$, the $j$-th phase consists of two parts, $U^{(j)}_{Gen}$ and $R_j$:
\begin{align}
    &\ket{x_{PreCopy}}_{{\sf R}_{\rm copy}} \ket{f_j}_{{\sf R}_{\rm targ}}\ket{x_{AuxCopy}}_{{\sf R}_{\rm aux}}\nonumber\\
    \xrightarrow{U_{Gen}^{(j)}}&\ket{x_{PreCopy}}_{{\sf R}_{\rm copy}} \ket{f_{j+1}}_{{\sf R}_{\rm targ}}\ket{x_{AuxCopy}}_{{\sf R}_{\rm aux}},  \label{eq:Ugenj_graph}\\
     \xrightarrow{R_j}& e^{i(\sum_{k=1}^{2^p} f_{j+1,k},\alpha_{s(j+1,k)})}\ket{x_{PreCopy}}_{{\sf R}_{\rm copy}} \ket{f_{j+1}}_{{\sf R}_{\rm targ}}\ket{x_{AuxCopy}}_{{\sf R}_{\rm aux}},\label{eq:rotationj_graph}
\end{align}
Note that $R_j$ consists of $2^{p}$ single-qubit gates acting on target register, i.e., $R_j:=\bigotimes_{k=1}^{2^p}R(\alpha_{s(j+1,k)})$ of depth 1.
The $2^{n-p}$-th phase is
\begin{align}
    &\ket{x_{PreCopy}}_{{\sf R}_{\rm copy}} \ket{f_{2^{n-p}}}_{{\sf R}_{\rm targ}}\ket{x_{AuxCopy}}_{{\sf R}_{\rm aux}}\nonumber\\
    \xrightarrow{U_{Gen}^{(2^{n-p})}}&\ket{x_{PreCopy}}_{{\sf R}_{\rm copy}} \ket{f_{1}}_{{\sf R}_{\rm targ}}\ket{x_{AuxCopy}}_{{\sf R}_{\rm aux}},  \label{eq:Ugen2n-p_graph}\\
     \xrightarrow{R_{2^{n-p}}}& e^{i(\sum_{k=1}^{2^p} f_{1,k},\alpha_{s(1,k)})}\ket{x_{PreCopy}}_{{\sf R}_{\rm copy}} \ket{f_{1}}_{{\sf R}_{\rm targ}}\ket{x_{AuxCopy}}_{{\sf R}_{\rm aux}}.\label{eq:rotation2n-p_graph}
\end{align}
$R_{2^{n-p}}$ consists of $2^{p}$ single-qubit gates acting on target register, i.e., $R_{2^{n-p}}:=\bigotimes\limits_{k\in[2^p]}R(\alpha_{s(1,k)})$ of depth 1.

By applying these $2^{n-p}$ phases, the following transformation is implemented:
\[\ket{x_{PreCopy}}_{{\sf R}_{\rm copy}} \ket{f_1}_{{\sf R}_{\rm targ}}\ket{x_{AuxCopy}}_{{\sf R}_{{\rm aux}}} \to e^{i\left(\sum\limits_{j=1}^{2^{n-p}}\sum\limits_{k=1}^{2^p}f_{j,k}\alpha_{s(j,k)}\right)}\ket{x_{PreCopy}}_{{\sf R}_{\rm copy}} \ket{f_{1}}_{{\sf R}_{\rm targ}}\ket{x_{AuxCopy}}_{{\sf R}_{{\rm aux}}}.\]
From Eq.~\eqref{eq:alpha}, $\sum_{j=1}^{2^{n-p}}\sum_{k=1}^{2^p}f_{j,k}\alpha_{s(j,k)}=\theta(x)$ for all $x\in\Bn$, and the above procedure implements the desired $U_{GrayCycle}$ transformation of Eq.~\eqref{eq:gray_cycle_graph}.

\begin{lemma}[]\label{lem:graycycle_graph}
$U_{GrayCycle}$ can be implemented in circuit depth
\[\mathcal{D}(U_{GrayCycle})\le \sum_{j=1}^{2^{n-p}}\mathcal{D}(U_{Gen}^{(j)})+2^{n-p}.\]
\end{lemma}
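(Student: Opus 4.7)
The plan is straightforward and just amounts to verifying that the $2^{n-p}$ phases composing $U_{GrayCycle}$ can be laid out sequentially, and then summing the depths of the constituent subcircuits. First I would note that, by construction, $U_{GrayCycle}$ has already been decomposed in Eqs.~\eqref{eq:Ugenj_graph}--\eqref{eq:rotation2n-p_graph} into a composition
\[U_{GrayCycle}\;=\;R_{2^{n-p}}\,U_{Gen}^{(2^{n-p})}\cdots R_2\,U_{Gen}^{(2)}\,R_1\,U_{Gen}^{(1)},\]
so its total depth is at most the sum of the depths of the $2^{n-p}$ unitaries $U_{Gen}^{(j)}$ and the $2^{n-p}$ rotation layers $R_j$.

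Next I would argue that each rotation layer has depth exactly one. Indeed, each $R_j$ is defined as a tensor product $\bigotimes_{k\in[2^p]} R(\alpha_{s(j+1,k)})$ (respectively $\bigotimes_{k\in[2^p]} R(\alpha_{s(1,k)})$ for $j=2^{n-p}$) of single-qubit gates acting on distinct qubits of the target register ${\sf R}_{\rm targ}$. Since these gates act on disjoint qubits, they can all be applied in a single time step regardless of the constraint graph, contributing depth~$1$ per phase.

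Summing over the $2^{n-p}$ phases then gives
\[\mathcal{D}(U_{GrayCycle})\;\le\;\sum_{j=1}^{2^{n-p}}\bigl(\mathcal{D}(U_{Gen}^{(j)})+\mathcal{D}(R_j)\bigr)\;=\;\sum_{j=1}^{2^{n-p}}\mathcal{D}(U_{Gen}^{(j)})+2^{n-p},\]
which is exactly the claimed bound. I do not expect any real obstacle here: the only thing to be careful about is confirming that there is no opportunity (and no need) to parallelize across phases, since each $U_{Gen}^{(j+1)}$ reads the output state $\ket{f_{j+1}}$ produced by the preceding phase. In particular, no graph-dependent overhead appears because the graph-dependent cost is entirely absorbed into the depths $\mathcal{D}(U_{Gen}^{(j)})$ themselves; the bound in the statement is a purely combinatorial one over the sequential composition.
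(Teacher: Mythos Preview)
Your proposal is correct and matches the paper's proof essentially verbatim: the paper simply observes that each of the $2^{n-p}$ phases has depth $\mathcal{D}(U_{Gen}^{(j)})+1$ (since $R_j$ is a depth-$1$ layer of single-qubit rotations on distinct target-register qubits) and sums. Your added remark about sequentiality across phases is a reasonable clarification but not needed for the bound.
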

\begin{proof}
For $j\in[2^{n-p}]$, the depth of the $j$-th phase is $\mathcal{D}(U_{Gen}^{(j)})+1$. The total depth is therefore $\sum_{j=1}^{2^{n-p}}(\mathcal{D}(U_{Gen}^{(j)})+1)=\sum_{j=1}^{2^{n-p}}\mathcal{D}(U_{Gen}^{(j)})+2^{n-p}.$
\end{proof}

\paragraph{Inverse}
The inverse stage can be implemented as follows.
\begin{align}
    &\ket{x}_{{\sf R}_{\rm inp}}\ket{x_{PreCopy}}_{{\sf R}_{\rm copy}} \ket{f_1}_{{\sf R}_{\rm targ}}\ket{x_{AuxCopy}}_{{\sf R}_{\rm aux}} \nonumber\\
  \xrightarrow{U^\dagger_{PreCopy}}  &\ket{x}_{{\sf R}_{\rm inp}}\ket{x_{SufCopy}}_{{\sf R}_{\rm copy}} \ket{f_1}_{{\sf R}_{\rm targ}}\ket{0^{\lambda_{aux}}}_{{\sf R}_{\rm aux}}, \label{eq:inverse1_graph}\\
  \xrightarrow{U^\dagger_{GrayInit}}  & \ket{x}_{{\sf R}_{\rm inp}}\ket{x_{SufCopy}}_{{\sf R}_{\rm copy}} \ket{0^{\lambda_{targ}}}_{{\sf R}_{\rm targ}}\ket{0^{\lambda_{aux}}}_{{\sf R}_{\rm aux}}, \label{eq:inverse2_graph}\\
  \xrightarrow{U_{SufCopy}^\dagger}  & \ket{x}_{{\sf R}_{\rm inp}}\ket{0^{\lambda_{copy}}}_{{\sf R}_{\rm copy}} \ket{0^{\lambda_{targ}}}_{{\sf R}_{\rm targ}}\ket{0^{\lambda_{aux}}}_{{\sf R}_{\rm aux}}.\label{eq:inverse3_graph}
\end{align}
It follows from Lemma~\ref{lem:precopy_graph} that:

\begin{lemma} []\label{lem:inverse_graph}
$U_{Inverse}$ can be implemented in depth
\[\mathcal{D}(U_{Inverse})\le 2\mathcal{D}(U_{SufCopy})+\mathcal{D}(U_{PreCopy}'')+\mathcal{D}(U_{PreCopy}''')+\mathcal{D}(U_{GrayInit}).\]
\end{lemma}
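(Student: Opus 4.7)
The plan is to obtain the bound by straightforward composition of the three sub-transformations already displayed in Eqs.~(\ref{eq:inverse1_graph})--(\ref{eq:inverse3_graph}), combined with the decomposition of $U_{PreCopy}$ from Lemma~\ref{lem:precopy_graph}. Since each of those three equations is the action of a single unitary ($U^\dagger_{PreCopy}$, $U^\dagger_{GrayInit}$, and $U^\dagger_{SufCopy}$ respectively) on the appropriate registers, and since the depth of a circuit and its inverse are equal (just reverse the layer order and invert each gate within a layer), the three steps can be executed in sequence to give
\[
\mathcal{D}(U_{Inverse}) \;\le\; \mathcal{D}(U^\dagger_{PreCopy}) + \mathcal{D}(U^\dagger_{GrayInit}) + \mathcal{D}(U^\dagger_{SufCopy}) \;=\; \mathcal{D}(U_{PreCopy}) + \mathcal{D}(U_{GrayInit}) + \mathcal{D}(U_{SufCopy}).
\]

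Next, I would invoke Lemma~\ref{lem:precopy_graph}, which gives $\mathcal{D}(U_{PreCopy}) \le \mathcal{D}(U_{SufCopy}) + \mathcal{D}(U''_{PreCopy}) + \mathcal{D}(U'''_{PreCopy})$. Substituting this into the previous inequality yields
\[
\mathcal{D}(U_{Inverse}) \;\le\; 2\,\mathcal{D}(U_{SufCopy}) + \mathcal{D}(U''_{PreCopy}) + \mathcal{D}(U'''_{PreCopy}) + \mathcal{D}(U_{GrayInit}),
\]
which is exactly the stated bound. The only thing worth double-checking is that the register layout is consistent across the three steps so that they genuinely compose as quantum circuits under the same graph constraint; this is immediate from inspection of Eqs.~(\ref{eq:inverse1_graph})--(\ref{eq:inverse3_graph}), since each uses the same assignment of ${\sf R}_{\rm inp}, {\sf R}_{\rm copy}, {\sf R}_{\rm targ}, {\sf R}_{\rm aux}$.

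There is no substantive obstacle: the lemma is essentially a bookkeeping consequence of Lemma~\ref{lem:precopy_graph} and the observation that the inverse stage factorizes as described in the framework. The only place a subtlety could arise is in ensuring that $U^\dagger_{GrayInit}$ indeed has the same depth as $U_{GrayInit}$ under the given graph constraint, but this follows from the general fact that reversing a depth-$d$ circuit and replacing each gate by its inverse yields another depth-$d$ circuit on the same underlying gate-connectivity pattern.
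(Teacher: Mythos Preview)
Your proposal is correct and matches the paper's approach exactly: the paper simply displays the three-step decomposition in Eqs.~(\ref{eq:inverse1_graph})--(\ref{eq:inverse3_graph}) and then states that the lemma follows from Lemma~\ref{lem:precopy_graph}, which is precisely the argument you have spelled out (including the implicit use of $\mathcal{D}(U^\dagger)=\mathcal{D}(U)$).
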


\subsection{Efficient circuits: general framework}

We use the framework of Fig.~\ref{fig:diag_with_ancilla_framwork} for implementing $\Lambda_n$ under path (Appendix \ref{sec:diag_with_ancilla_path}), grid (Appendix  \ref{sec:diag_with_ancilla_grid_d}) and complete binary tree (Appendix \ref{sec:diag_with_ancilla_binarytree}) constraints. The case for expander graph (Appendix \ref{sec:diag_with_ancilla_expander}) constraints differs slightly.

The constructions of~\cite{sun2021asymptotically} give $O\left(n+\frac{2^n}{n+m}\right)$-depth and $O(2^n)$-size upper bounds for implementing $\Lambda_n$ under no graph constraints, using $m$ ancillary qubits (see Table~\ref{tab:lambda-bounds_ancilla}).  Similar to the trivial upper bounds of Section~\ref{sec:general_framework_noancilla_main}, a trivial depth upper bound for $\Lambda_n$ of $O\left((n+m)\cdot \diam(G)\cdot (n+\frac{2^n}{n+m})\right)$ can be given under graph $G$ constraints.

\begin{table}[!ht]
    \centering
    \caption{Circuit depth upper (ub) and lower bounds (lb) required to implement $\Lambda_n$ in circuits under various graph constraints, using $m$ ancillary qubits.
    The trivial bounds are based on the unconstrained construction from~\cite{sun2021asymptotically} and Lemma~\ref{lem:cnot_path_constraint}, which implies that, under constraint graph $G$, the required circuit depth is $O((n+m)\cdot \diam(G)\cdot (n+\frac{2^n}{n+m}))$. Big O and $\Omega$ notation is suppressed. 
    }
    \resizebox{\textwidth}{!}{\begin{tabular}{c|cccc}
         &  $\Path_{n+m}$ & $\Grid^{n_1,\ldots,n_d}_{n+m}$ & $\Tree_{n+m}(2)$ & $\Expander_{n+m}$ \\ \hline
    $\diam(G)$  & $n+m$ & $\sum_{j=1}^d n_j$ & $\log (n+m)$ & $\log(n+m)$\\ \hline
    Depth (ub, trival)     & $(n+m)(n(n+m)+2^n)$ & $(\sum_{j=1}^d n_j)(n(n+m)+2^n)$ & $\log(n+m)(n(n+m)+2^n)$ & $\log(n+m)(n(n+m)+2^n)$ \\
    \multirow{2}{*}{Depth (ub)}    & $2^{n/2}+\frac{2^n}{n+m}$   &  $n^2+d2^{\frac{n}{d+1}}+\max\limits_{j\in\{2,\ldots,d\}}\left\{\frac{d2^{n/j}}{(\Pi_{i=j}^d n_i)^{1/j}}\right\}+\frac{2^n}{n+m}$  & $n^2\log(n)+\frac{\log(n) 2^n}{n+m}$ & $n^2+\frac{\log(m) 2^n}{n+m}$   \\
     & [Lem.~\ref{lem:diag_path_ancillary}] &[Lem.~\ref{lem:diag_grid_ancillary}] &[Lem.~\ref{lem:diag_binarytree_withancilla}]& [Lem.~\ref{lem:diag_expander_ancilla}]\\ \hline
    \multirow{2}{*}{Depth (lb)} &$2^{n/2}+\frac{2^n}{n+m}$  & $n+2^{\frac{n}{d+1}}+\max\limits_{j\in [d]}\big\{\frac{2^{n/j}}{(\Pi_{i=j}^d n_i)^{1/j}}\big\}$  &$n+\frac{2^n}{n+m}$  & $n+\frac{2^n}{n+m}$  \\
    &[Cor. \ref{coro:lower_bound_path}]&[Lem. \ref{lem:lower_bound_grid_k_Lambda}]&[Cor. \ref{coro:lower_bound_binary}]&[Cor. \ref{coro:lower_bound_expander}]
    \end{tabular}}
    \label{tab:lambda-bounds_ancilla}
\end{table}

To achieve the more efficient constructions summarized in the second last row of Table~\ref{tab:lambda-bounds_ancilla}, for each constraint graph type we must carefully choose:
\begin{enumerate}
    \item the size and locations for ${\sf R}_{\rm inp}$, ${\sf R}_{\rm copy}$, ${\sf R}_{\rm targ}$ and ${\sf R}_{\rm aux}$, and
    \item the particular Gray codes used, i.e., the integers $\ell_1, \ell_2, \ldots, \ell_{2^{p}}$ used to implement the Gray cycle stage.
\end{enumerate}
From the previous section (Lemmas~\ref{lem:precopy_graph}, \ref{lem:graycycle_graph}, \ref{lem:inverse_graph}), to bound the $\Lambda_n$ circuit depth complexity for each graph constraint type, it is sufficient to analyze the circuits implementing
$U_{SufCopy}$ (Eq.~\eqref{eq:sufcopy_graph}),
$U''_{PreCopy}$     (Eq.~\eqref{eq:precopy2_graph}), 
$U'''_{PreCopy}$    (Eq.~\eqref{eq:precopy3_graph}), 
$U_{GrayInit}$    (Eq.~\eqref{eq:gray_initial_graph}), and 
$U_{Gen}^{(j)}$   (Eq.~\eqref{eq:Ugen2n-p_graph}).

As in Appendix~\ref{append:diag_without_ancilla}, we aim to minimize circuit depth by arranging qubit registers and Gray codes such that control and target qubits for required CNOT gates are close, and constraint paths for different CNOT gates are disjoint (and hence implementable in parallel) where possible.


\subsection{Circuit implementation under $\Path_{n+m}$ constraints (Proof of Lemma \ref{lem:diag_d_grid_ancilla} (Case 1))}
\label{sec:diag_with_ancilla_path}
We assume that $m\ge 3n$ and $\frac{m}{3}$ is an integer. Without loss of generality, we also assume that $m\le 3\cdot 2^n$;  
if $m> 3\cdot 2^n$, we only use $3\cdot 2^n$ ancillary qubits. We take $p=\big\lfloor \log (\frac{m}{3}) \big\rfloor$, $\tau=2\lceil\log (n-p)\rceil$, $\lambda_{copy}=\lambda_{targ}=2^p$, and  $\lambda_{aux}=r\tau$ where $r=\frac{2^p}{n-p}$ \footnote{Here we assume $2^p$ is a multiple of $(n-p)$ for convenience. In the general case where this assumption does not hold, we can define $r=\big\lceil\frac{2^p}{n-p}\big\rceil$ with the last register $R_r$ holding  the leftover qubits. The details are tedious and technically uninteresting, thus omitted here.}.

\paragraph{Choice of registers}


We assign qubits to ${\sf R}_{\rm inp}$, ${\sf R}_{\rm copy}$, ${\sf R}_{\rm targ}$ and ${\sf R}_{\rm aux}$ as in Fig.~\ref{fig:register_in_path}.
\begin{itemize}
    \item ${\sf R}_{\rm inp}$ consists of the first $n$ qubits.
    
    \item The $2\cdot 2^p+r\tau$ ancillary qubits are divided into $r$ registers $\textsf{R}_1,\ldots,\textsf{R}_r$.
    
    \item Each $\textsf{R}_k$ for $k\in[r]$ contains $2(n-p) +\tau$ qubits, with the first $2(n-p)$ qubits alternately assigned to ${\sf R}_{\rm copy}$ and ${\sf R}_{\rm targ}$, and the final $\tau$ qubits assigned to ${\sf R}_{\rm aux}$.
    
\end{itemize}

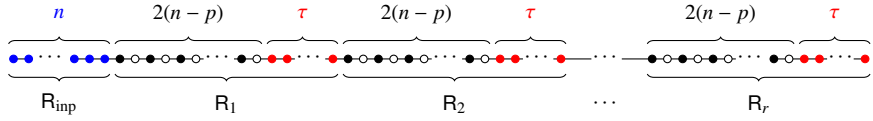
\begin{figure}[!hbt]
    \centering
    \begin{tikzpicture}
    \draw (-1.4,0)--(-1.2,0) (-0.6,0)--(1.1,0) (1.5,0)--(2.3,0) (2.7,0)--(4.1,0) (4.5,0)--(5.3,0) (5.7,0)--(6.2,0) (6.6,0)--(8.1,0) (8.6,0)--(9.3,0) (9.65,0)--(9.8,0);
       \draw [fill=blue,draw=blue] (-0.2,0) circle (0.05) (-0.4,0) circle (0.05) (-0.6,0) circle (0.05) (-1.2,0) circle (0.05) (-1.4,0) circle (0.05);
       \draw [fill=black] (0,0) circle (0.05) (0.4,0) circle (0.05) (0.8,0) circle (0.05) (1.6,0) circle (0.05) (3,0) circle (0.05) (3.4,0) circle (0.05) (3.8,0) circle (0.05) (4.6,0) circle (0.05) (7,0) circle (0.05) (7.4,0) circle (0.05) (7.8,0) circle (0.05) (8.6,0) circle (0.05) ;
       \draw [fill=white] (0.2,0) circle (0.05) (0.6,0) circle (0.05) (1.0,0) circle (0.05) (1.8,0) circle (0.05) (3.2,0) circle (0.05) (3.6,0) circle (0.05) (4.0,0) circle (0.05) (4.8,0) circle (0.05) (7.2,0) circle (0.05) (7.6,0) circle (0.05) (8.0,0) circle (0.05) (8.8,0) circle (0.05);
       \draw [draw=red, fill=red] (2,0) circle (0.05) (2.2,0) circle (0.05) (2.8,0) circle (0.05) (5,0) circle (0.05) (5.2,0) circle (0.05) (5.8,0) circle (0.05) (9,0) circle (0.05) (9.2,0) circle (0.05) (9.8,0) circle (0.05);
       \draw (1.3,0) node{\scriptsize $\cdots$} (2.5,0) node{\scriptsize $\cdots$} (4.3,0) node{\scriptsize $\cdots$} (5.5,0) node{\scriptsize $\cdots$} (8.3,0) node{\scriptsize $\cdots$} (9.5,0) node{\scriptsize $\cdots$} (6.4,0) node{\scriptsize $\cdots$} (-0.9,0) node{\scriptsize $\cdots$};
       \node (a) at (-0.2,-0.2) {};
       \node (b) at (3,-0.2) {};
       \draw [decorate,decoration={brace,mirror}] (a)--(b);
        \node (c) at (2.8,-0.2) {};
       \node (d) at (6,-0.2) {};
       \draw [decorate,decoration={brace,mirror}] (c)--(d);
       \node (e) at (6.8,-0.2) {};
       \node (f) at (10,-0.2) {};
       \draw [decorate,decoration={brace,mirror}] (e)--(f);
       \node (g) at (1.8,0.2) {};
       \node (h) at (3,0.2) {};
       \draw [decorate,decoration={brace}] (g)--(h);
       \node (g) at (4.8,0.2) {};
       \node (h) at (6,0.2) {};
       \draw [decorate,decoration={brace}] (g)--(h);
       \node (g) at (8.8,0.2) {};
       \node (h) at (10,0.2) {};
       \draw [decorate,decoration={brace}] (g)--(h);
        \node (g) at (-0.2,0.2) {};
       \node (h) at (2,0.2) {};
       \draw [decorate,decoration={brace}] (g)--(h);
        \node (g) at (2.8,0.2) {};
       \node (h) at (5,0.2) {};
       \draw [decorate,decoration={brace}] (g)--(h);
        \node (g) at (6.8,0.2) {};
       \node (h) at (9,0.2) {};
       \draw [decorate,decoration={brace}] (g)--(h);
       \node (a) at (-1.6,-0.2) {};
       \node (b) at (0,-0.2) {};
       \draw [decorate,decoration={brace,mirror}] (a)--(b);
       \node (a) at (-1.6,0.2) {};
       \node (b) at (0,0.2) {};
       \draw [decorate,decoration={brace}] (a)--(b);
       \draw (1.4,-0.6) node{\scriptsize $\textsf{R}_{1}$} (4.4,-0.6) node{\scriptsize $\textsf{R}_{2}$} (8.4,-0.6) node{\scriptsize $\textsf{R}_{r}$} (-0.8,-0.6) node{\scriptsize $\textsf{R}_{\rm inp}$};
       \draw (0.9,0.6) node {\scriptsize $2(n-p)$} (3.9,0.6) node {\scriptsize $2(n-p)$} (7.9,0.6) node {\scriptsize 
       $2(n-p)$};
       \draw (2.4,0.6) node{\color{red} \scriptsize $\tau$} (5.4,0.6) node{\color{red} \scriptsize $\tau$} (9.4,0.6) node{\color{red}\scriptsize $\tau$} (-0.8,0.6) node{\color{blue}\scriptsize $n$};
       \draw (6.4,-0.6) node{\scriptsize $\cdots$};
    \end{tikzpicture}
    \caption{${\sf R}_{\rm inp}$, ${\sf R}_{\rm copy}$, ${\sf R}_{\rm targ}$ ${\sf R}_{\rm aux}$ for quantum circuits under $\Path_{n+m}$ constraint. Colors correspond to input (blue), copy (black), target (white) and auxiliary (red) register qubits. The ancillary qubits are grouped into registers labelled $\textsf{R}_1,\textsf{R}_2,\cdots,\textsf{R}_r$. }
    \label{fig:register_in_path}
\end{figure}

It is easily verified that our construction uses $2\cdot 2^p+r\tau\le m $ of the total $m$ ancillary qubits available. For the integers specifying the Gray codes, we take $\ell_k=(k-1)\mod (n-p)+1$ for all $k\in[2^p]$.

\paragraph{Implementation of Suffix Copy and Prefix Copy stages}
\begin{lemma}\label{lem:copy_path}
The unitary transformation $U_{copy}^{path}$ making $t$ copies of an $n$-bit string $x$, defined by
\begin{equation}
\ket{x}\ket{0^{nt}}\xrightarrow{U^{path}_{copy}}\ket{x}\ket{\underbrace{xx\cdots xx}_{t {\rm~copies~of ~}x}},
\end{equation}
where the two registers are connected in the path graph, can be implemented by a circuit of depth $O(n^2+nt)$ and size $O(n^2t)$ under $\Path_{n(t+1)}$ constraint.
\end{lemma}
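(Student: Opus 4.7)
My plan is to implement the copy as a collection of $n \cdot t$ long CNOTs, one per (bit, stage) pair $(i,k) \in [n] \times [t]$. The $(i,k)$-th operation copies $x_i$ from position $(k-1)n + i$ (where $x_i$ is located after the previous stage for bit $i$, or is the original input if $k=1$) to the still-zero position $kn + i$. By Lemma~\ref{lem:cnot_path_constraint}, each such long CNOT takes depth and size $O(n)$ and leaves the intermediate qubits in their pre-operation state after it finishes.

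Running these operations serially (all $t$ stages for bit $1$, then for bit $2$, etc.) gives depth $O(n^2 t)$, which is too much. Instead I would stagger the $n$ bit-chains: letting $c$ denote the constant implicit in Lemma~\ref{lem:cnot_path_constraint} so that every long CNOT has depth exactly $cn$, I would have bit $i$ start its chain at time $(i-1)\cdot 2cn$. Under this schedule the $(i,k)$-th long CNOT occupies the time window $[(i-1)\cdot 2cn + (k-1)cn,\ (i-1)\cdot 2cn + kcn]$ and locks the path interval $[(k-1)n+i,\ kn+i]$.

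The main technical step is to verify that this pipeline has no gate conflicts. If bits $i < j$ are both active at some time, their current stages $k_i, k_j$ must satisfy $k_i - k_j = 2(j-i) \ge 2$. Writing out the gap between bit $j$'s locked interval endpoint and bit $i$'s locked interval startpoint yields $(k_i-1)n + i - (k_j n + j) = (k_i - k_j - 1)n - (j - i) = (2(j-i)-1)n - (j-i)$, which is $\ge 1$ whenever $n \ge 2$. Hence bit $i$'s locked interval begins strictly after bit $j$'s locked interval ends, so no two concurrently active long CNOTs share any qubit and the composed schedule is a valid CNOT circuit under $\Path_{n(t+1)}$ constraint. (The case $n = 1$ is degenerate since there is only a single bit and nothing to stagger; the construction then reduces to the standard path fan-out of Lemma~\ref{lem:multicontrolcnot}.)

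Accounting for the resources: bit $n$'s chain ends at time $(n-1)\cdot 2cn + t\cdot cn = O(n^2 + nt)$, which bounds the depth; and the $nt$ long CNOTs each contribute $O(n)$ gates for a total size of $O(n^2 t)$. The main obstacle I anticipate is not the core inequality (which is elementary) but rather the surrounding bookkeeping: verifying correctness of the staggered schedule at boundary times when some chains have not yet started or have already finished, confirming that the implicit restoration of intermediate qubits within each long CNOT via Lemma~\ref{lem:cnot_path_constraint} composes cleanly across the pipeline, and checking that the initially-zero target qubits and the already-populated source qubits are consumed in the correct order so that $x_i$ is indeed present at position $(k-1)n+i$ the moment the $(i,k)$-th long CNOT begins.
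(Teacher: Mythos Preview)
Your proposal is correct and follows essentially the same pipeline approach as the paper: decompose into $nt$ distance-$n$ CNOTs and stagger them so that concurrently active CNOTs occupy disjoint path segments. The paper (via Fig.~\ref{fig:my_copypath}) uses a stagger of one unconstrained layer per bit---so adjacent active intervals abut exactly---and simply observes that each layer's CNOTs lie on disjoint regions before expanding each to depth $O(n)$; your $2cn$ stagger costs only a constant factor in the $n^2$ term while making the disjointness check cleaner, and your boundary-time worry dissolves once the time windows are taken half-open (which forces $k_i-k_j=2(j-i)$ exactly).
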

\begin{proof}
An explicit circuit, in the absence of any connectivity constraints, is given in Fig.~\ref{fig:my_copypath}, which consists of $t$ CNOT circuits arranged in a pipeline. The total circuit depth is $n + t-1$ and the size is $nt$. 
Now we consider the path constraint. In each layer, by Lemma~\ref{lem:cnot_path_constraint}, each CNOT gate in Fig.~\ref{fig:my_copypath} can be implemented in depth and size $O(n)$ under path constraint, since the distance between any pair of control and target qubits is $O(n)$. Also note that different CNOT gates in the same layer are on disjoint regions of the path graph, and can thus be implemented in parallel. The result follows.
\end{proof}

\begin{figure}[htb!]
    \centerline{
    \Qcircuit @C=0.6em @R=0.7em {
   \lstick{\scriptstyle\ket{x_1}} &\qw &\qw & \ctrl{3} &\qw & \qw &\\
   \lstick{\scriptstyle\vdots~~} & \qw & \ctrl{3} &\qw &\qw &\qw &\\
   \lstick{\scriptstyle\ket{x_n}} &\ctrl{3} &\qw&\qw &\qw &\qw &\\
    \lstick{\scriptstyle\ket{0}}& \qw & \qw &\targ & \ctrl{3} & \qw &\\
   \lstick{\scriptstyle\vdots~~} & \qw &\targ & \ctrl{3} &\qw &\qw &\\
    \lstick{\scriptstyle\ket{0}}&\targ&\ctrl{3} & \qw &\qw & \qw &\\
   \lstick{\scriptstyle\ket{0}} & \qw & \qw &\qw &\targ &\qw &\\
   \lstick{\scriptstyle\vdots~~} & \qw &\qw & \targ &\qw &\qw &\\
   \lstick{\scriptstyle\ket{0}} &\qw&\targ & \qw &\qw &\qw &\\
    & &\vdots & \vdots&\vdots &\vdots\\
    & & & & &\\
   \lstick{\scriptstyle\ket{0}} & \qw & \qw &\qw &\qw & \ctrl{3} &\\
   \lstick{\scriptstyle\vdots~~} & \qw &\qw & \qw &\ctrl{3} & \qw &\\
    \lstick{\scriptstyle\ket{0}}&\qw&\qw & \ctrl{3} &\qw & \qw & \\
   \lstick{\scriptstyle\ket{0}} & \qw & \qw &\qw &\qw & \targ&\\
   \lstick{\scriptstyle\vdots~~} & \qw &\qw & \qw &\targ & \qw &\\
   \lstick{\scriptstyle\ket{0}} &\qw&\qw & \targ &\qw & \qw & \\
    }
    }
    \caption{Implementation of $U_{copy}^{path}$ (Lemma \ref{lem:copy_path}) to create $t$ copies of $\ket{x_1 x_2, \ldots x_n}$. Under path constraint, each CNOT gate can be implemented in depth and size $O(n)$. }
    \label{fig:my_copypath}
\end{figure}
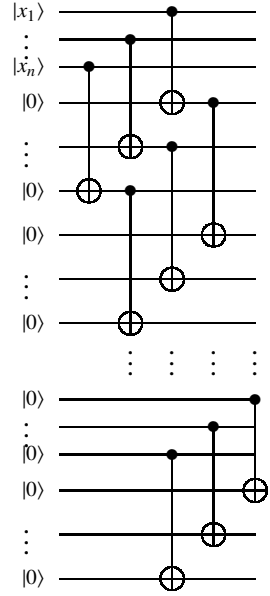
\begin{lemma}[]\label{lem:sufcopy_path}
$U_{SufCopy}$ and $U''_{PreCopy}$ can each be implemented by a quantum circuit of depth $O(m)$ under $\Path_{n+m}$ constraint.
\end{lemma}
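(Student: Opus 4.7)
The plan is to realize both operations by pipelined copy constructions that adapt Lemma \ref{lem:copy_path} to the interleaved register layout of Fig. \ref{fig:register_in_path}. In both cases the key parameters are chosen so that the total number of qubits being written is $\lambda_{copy}=2^{p}=O(m)$, which already suggests that the pipelined depth should scale with $m$ rather than with $2^n$ or $4^n$; the rest of the argument is about fitting this pipeline onto the path layout without paying a large startup cost.

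For $U_{SufCopy}$, the suffix $x_{suf}$ has length $p$ and $\lambda_{copy}/p=2^{p}/p$ copies must be placed across ${\sf R}_{\rm copy}$. I would apply the pipelined CNOT scheme of Lemma \ref{lem:copy_path} directly along the portion of the path containing the suffix and the copy register, treating the interleaved target qubits and the $\tau$-sized auxiliary blocks as inert $\ket{0}$ passthroughs on which the long CNOTs provided by Lemma \ref{lem:cnot_path_constraint} can act without leaving any residual state. Enlarging each ``block'' in the pipeline by a constant factor to absorb these passthroughs does not affect the depth analysis, so Lemma \ref{lem:copy_path} applied with $n:=p$ and $t:=2^{p}/p$ gives depth $O(p^{2}+p\cdot 2^{p}/p)=O(\log^{2}m+2^{p})=O(m)$, where I use $p=\lfloor\log(m/3)\rfloor$.

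For $U''_{PreCopy}$, the prefix $x_{pre}$ has length $n-p$ and $r=2^{p}/(n-p)$ copies are needed, one per register $\textsf{R}_{k}$. A direct invocation of Lemma \ref{lem:copy_path} would give depth $O((n-p)^{2}+2^{p})$, which is too large when $n$ is comparable to $m$. I would therefore prepare a single copy of $x_{pre}$ in $\textsf{R}_{1}$ by a staggered sequence of nearest-neighbor CNOTs that uses the initially-$\ket{0}$ target qubits between the copy slots as temporary carriers, completing in depth $O(n)$; I would then propagate this first copy across $\textsf{R}_{2},\textsf{R}_{3},\ldots,\textsf{R}_{r}$ by pipelined register-to-register copies. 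Each adjacent register pair is separated by $\tau+O(1)$ inert qubits, so each such copy consists of $(n-p)$ long CNOTs of distance $O(n-p)$, and successive copies can be launched into the pipeline before the previous one has finished. The cumulative depth is $O(n+(n-p)r)=O(n+2^{p})=O(m)$, using $2^{p}\le m/3$.

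The main obstacle is verifying that the parallel long CNOTs inside a single register-to-register copy step can actually be scheduled without mutual conflict on the shared intermediate path. The implementation of Lemma \ref{lem:cnot_path_constraint} uses every intermediate vertex as temporary workspace, and several simultaneous long CNOTs between $\textsf{R}_{k-1}$ and $\textsf{R}_{k}$ would collide on those vertices. Resolving this requires either (i) routing CNOTs from distinct bit positions through vertex-disjoint workspace strips---exploiting the $2(n-p)+\tau$ spacing between corresponding qubits of adjacent registers to assign each bit its own channel---or (ii) reordering the CNOTs into a shifted wavefront so that the long-CNOT sub-operations of Fig.~\ref{fig:cnot_path} interleave rather than overlap, completing each sweep in $O(n-p)$ depth. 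Checking that one of these schedulings preserves the $O(n+(n-p)r)=O(m)$ total depth, and that the $\tau$ auxiliary gaps contribute only the claimed constant overhead, is the principal technical step of the proof.
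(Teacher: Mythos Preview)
Your treatment of $U_{SufCopy}$ is essentially the paper's: invoke Lemma~\ref{lem:copy_path} with block size $p$ and $\gamma=2^p/p$ copies, absorbing the interleaved target and auxiliary qubits as inert passthroughs. The paper computes the depth as $(p+\gamma-1)\cdot O(p+\tau)=O(m)$, which is your calculation too.

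For $U''_{PreCopy}$ you are actually more careful than the paper. The paper's proof is one sentence (``similar''), and if one carries the analogy through one gets depth $((n-p)+r-1)\cdot O(n-p+\tau)=O((n-p)^2+2^p)=O(n^2+m)$, not $O(m)$. Your observation that this fails when $3n\le m=o(n^2)$ is correct; the paper simply does not address this, and the stated $O(m)$ bound is not established by the ``similar'' argument alone. (Downstream this is harmless, since in the proof of Lemma~\ref{lem:diag_path_ancillary} the extra $n^2$ is swallowed by $2^n/m\ge 2^{n/2}/3$, but as a proof of the lemma as stated there is a gap.)

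Your proposed workaround has the right skeleton, but the crux you isolate---doing one register-to-register batch of $(n-p)$ distance-$\Theta(n-p)$ CNOTs in $O(n-p)$ depth---is not resolved by either of your suggestions. Option~(i) cannot work: the intervals $[a_i,b_i]$ overlap on $\Theta(n-p)$ vertices, so vertex-disjoint channels do not exist. Option~(ii) is subtle; if you stagger the long-CNOT implementations of Fig.~\ref{fig:cnot_path} by a fixed offset, the ``up'' sweeps separate nicely but the ``down'' sweeps collide (the position of the down-phase gate at a fixed global time becomes independent of, or only weakly dependent on, the CNOT index). A clean way to finish is instead to realize each register-to-register copy by a permute--CNOT--unpermute: apply an $O(n-p)$-depth permutation (adjacent SWAPs, exactly as in $\Pi^{path}$) on the $O(n-p)$-qubit span of ${\sf R}_k\cup{\sf R}_{k+1}$ that brings each source black qubit adjacent to its target black qubit, apply one layer of nearest-neighbor CNOTs, then invert the permutation. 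This gives $O(n-p)$ depth per copy and hence total depth $O(n)+O((n-p)r)=O(n+2^p)=O(m)$, completing your argument.
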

\begin{proof}
$U_{SufCopy}$ creates $\gamma:=\big\lceil\lambda_{targ}/p\big\rceil$ copies of the $p$-qubit state $\ket{x_{suf}}$ in the copy register, i.e., 
\begin{equation*}
    \ket{x_{suf}}\ket{0^{p\gamma}}_{{\sf R}_{\rm copy}}\xrightarrow{U_{SufCopy}} \ket{x_{suf}}\ket{x_{suf}}^{\otimes \gamma}_{{\sf R}_{\rm copy}}.
\end{equation*}
As the $p$ qubits that comprise $\ket{x_{suf}}$ are located in a contiguous block in ${\sf R}_{\rm inp}$ that borders ${\sf R}_{\rm 1}$ (see Fig.~\ref{fig:register_in_path}), if the qubits in ${\sf R}_{\rm copy}$ were also located in a contiguous block bordering ${\sf R}_{\rm inp}$ then, by Lemma~\ref{lem:copy_path}, $U_{SufCopy}$ could be implemented by a CNOT circuit of depth $O(p^2+p\gamma)$, where each CNOT gate acts only on nearest neighbours in the path.  However, in each layer of the circuit in Fig.~\ref{fig:my_copypath}, each CNOT gate has its control and target qubits separated by either (i) $2p$ black and white qubits (in Fig.~\ref{fig:register_in_path}), or (ii) $2p+\tau$ black, white and red qubits. These two cases need depth $O(p)$ and $O(p+\tau)$, respectively. Putting the $p+\gamma-1$ layers in Fig.~\ref{fig:my_copypath} together,  
The total depth required to implement $U_{SufCopy}$ is thus $(p+\gamma-1)\cdot O(p+\tau) = O(m)$. 

The proof for $U_{PreCopy}''$ is similar, except in this case $r$ copies of the $(n-p)$-qubit state $\ket{x_{pre}}$ are made in the copy register.
\end{proof}

\begin{lemma}[]\label{lem:precopy3_path}
$U_{PreCopy}'''$ (Eq.~\eqref{eq:precopy3_graph}) can be implemented by a quantum circuit of depth  $O(n-p)$ under $\Path_{n+m}$ constraint.
\end{lemma}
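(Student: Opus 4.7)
The starting point is to exploit independence across the ancillary blocks $\textsf{R}_1, \ldots, \textsf{R}_r$. By the layout in Fig.~\ref{fig:register_in_path}, each block $\textsf{R}_k$ is a contiguous subpath of length $2(n-p)+\tau$ that contains exactly one copy of $\ket{x_{pre}}$ in its copy-register qubits, together with a size-$\tau$ auxiliary block that is to be filled with $\ket{x_{aux}}$. Since $U'''_{PreCopy}$ factors as a tensor product of $r$ identical operations acting on these disjoint subpaths, my first step is to run all $r$ copies in parallel, so it suffices to bound the depth of the operation restricted to one $\textsf{R}_k$.

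Within one $\textsf{R}_k$, the task is to copy the values at the first $\tau$ odd positions $1,3,\ldots,2\tau-1$ (the copy-register qubits carrying $x_{aux}$) onto the $\tau$ auxiliary qubits at positions $2(n-p)+1,\ldots,2(n-p)+\tau$, without permanently disturbing the interleaved target-register qubits at even positions $2,4,\ldots,2(n-p)$, which hold the $\ket{f_1}$ state produced by $U_{GrayInit}$. My plan is a three-step move-copy-unmove procedure: (i) apply a pipelined schedule of nearest-neighbour SWAPs realizing a fixed permutation of the $2(n-p)$ interior qubits of $\textsf{R}_k$ that relocates the $\tau$ bits of $x_{aux}$ to the rightmost $\tau$ slots of the copy portion, adjacent to the auxiliary block; (ii) apply $\tau$ nearest-neighbour CNOTs in parallel, each copying one relocated $x_{aux}$-bit into its adjacent zero auxiliary qubit; (iii) run the SWAP schedule of (i) in reverse, returning every copy- and target-register qubit to its original position and value. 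Step (ii) has depth $1$, and steps (i) and (iii) each have depth $O(n-p)$ by a standard routing-on-a-path argument (odd-even transposition sort realizes any permutation of $N$ qubits along a path in $O(N)$ parallel rounds of disjoint adjacent SWAPs, and each SWAP is three CNOTs), giving the claimed total depth of $O(n-p)$.

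The main subtlety is step (i): only $\tau=O(\log n)$ data bits genuinely need to move, but they sit interleaved with target-register qubits whose $\ket{f_1}$-content must be preserved. I will circumvent this by treating step (i) as an arbitrary permutation of all $2(n-p)$ interior qubits of $\textsf{R}_k$ rather than trying to transport only the $\tau$ source bits. Because adjacent SWAPs are their own inverses and carry every logical value somewhere, step (iii) automatically restores the original layout of both the copy and target registers, so the only net effect of the whole procedure is the $\tau$ copies deposited by step (ii). The expected ``obstacle'' of needing scratch space near the target qubits therefore disappears, and the depth count reduces to routing $2(n-p)$ qubits on a path, which is $O(n-p)$ independent of $\tau$.
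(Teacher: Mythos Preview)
Your high-level plan is correct and gives the right bound, but step~(ii) as stated does not work: you only permute the $2(n-p)$ interior qubits of $\textsf{R}_k$, so the $\tau$ auxiliary qubits stay fixed at positions $2(n-p)+1,\ldots,2(n-p)+\tau$, and after any such permutation only the single interior qubit at position $2(n-p)$ is adjacent to an auxiliary qubit. Hence you cannot perform $\tau$ disjoint nearest-neighbour CNOTs in one layer. Two easy repairs preserve your $O(n-p)$ bound: either enlarge the permutation in step~(i) to all $2(n-p)+\tau$ qubits of $\textsf{R}_k$ and interleave each source bit with its destination (still $O(n-p)$ odd--even rounds since $\tau=2\lceil\log(n-p)\rceil\le O(n-p)$), or keep step~(i) as is and observe that copying a contiguous $\tau$-bit block into the adjacent auxiliary block costs depth $O(\tau^2)=O(\log^2(n-p))\le O(n-p)$ by Lemma~\ref{lem:copy_path}.

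Your route is genuinely different from the paper's. The paper also decomposes into the disjoint $\textsf{R}_i$ and works within one block, but instead of permute/copy/un-permute it directly pipelines the $\tau$ long-range CNOTs: each source bit $x_j$ at position $2j-1$ is distance $O(n-p)$ from its auxiliary target at position $2(n-p)+j$, and staggering the Lemma~\ref{lem:cnot_path_constraint} constructions yields total depth $O(\tau)+O(n-p)=O(n-p)$. Your routing-based argument invokes heavier machinery (the full odd--even transposition bound) but buys clarity on the point you rightly flag as subtle: the interleaved target-register qubits carrying $\ket{f_1}$ are automatically restored by reversing the SWAP schedule, whereas the paper's pipeline relies on the intermediate-qubit-preserving property built into the Lemma~\ref{lem:cnot_path_constraint} circuit, together with the staggering not corrupting that property across overlapping paths.
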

\begin{proof}
$U_{PreCopy}'''$ makes $r=2^p/(n-p)$ 
copies of $\tau$-qubit state $\ket{x_{aux}}$ in ${\sf R}_{\rm aux}$. From Fig. \ref{fig:register_in_path}, the ancillary qubits are grouped into registers ${\sf R}_1,\ldots,{\sf R}_r$. Each ${\sf R}_{i}$ contains copy, target and auxiliary register qubits and we already have a copy of $\ket{x_{aux}}$ in the black qubits inside ${\sf R}_i$. Thus, within each ${\sf R}_{i}$ we can make a copy of $\ket{x_{aux}}$ from the the black qubits 
to red qubits.
This can be implemented in depth $O(\tau) + O(n-p)=O(n-p)$ for each ${\sf R}_i$, by Lemma \ref{lem:cnot_path_constraint} and noting that the $\tau$ qubits can be copied in a pipeline. Since paths in distinct ${\sf R}_i$ are disjoint, the $r$ copies of $\ket{x_{aux}}$ can be implemented in parallel. The result follows.
\end{proof}
\begin{lemma}[]\label{lem:precopy_path}
$U_{PreCopy}$ can be implemented by a quantum circuit of depth $O(m)$ under $\Path_{n+m}$ constraint.
\end{lemma}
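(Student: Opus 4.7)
The plan is to combine the three preceding lemmas directly through the inequality established in Lemma~\ref{lem:precopy_graph}. Recall that Lemma~\ref{lem:precopy_graph} shows
\[
\mathcal{D}(U_{PreCopy})\le \mathcal{D}(U_{SufCopy})+\mathcal{D}(U''_{PreCopy})+\mathcal{D}(U'''_{PreCopy}),
\]
by realizing $U_{PreCopy}$ as the composition $U'''_{PreCopy}\, U''_{PreCopy}\, U_{SufCopy}^\dagger$, which starts from $\ket{x}_{{\sf R}_{\rm inp}}\ket{x_{SufCopy}}_{{\sf R}_{\rm copy}}\ket{0^{\lambda_{aux}}}_{{\sf R}_{\rm aux}}$, uncopies the suffix, then creates prefix copies in ${\sf R}_{\rm copy}$, and finally fills ${\sf R}_{\rm aux}$ with copies of $\ket{x_{aux}}$.

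Next, I would simply plug in the bounds proved for the path constraint. Lemma~\ref{lem:sufcopy_path} gives $\mathcal{D}(U_{SufCopy})=O(m)$ and $\mathcal{D}(U''_{PreCopy})=O(m)$, while Lemma~\ref{lem:precopy3_path} gives $\mathcal{D}(U'''_{PreCopy})=O(n-p)$. Summing,
\[
\mathcal{D}(U_{PreCopy}) \le O(m) + O(m) + O(n-p) = O(m),
\]
where in the last step I use the assumption $m \ge 3n$ (so that $n-p \le n \le m/3$) that was imposed at the start of this subsection.

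There is no genuine obstacle here; the lemma is a bookkeeping consequence of Lemma~\ref{lem:precopy_graph} together with the two previous path-constraint depth estimates. The only thing to check carefully is that the circuits constructed in Lemmas~\ref{lem:sufcopy_path} and~\ref{lem:precopy3_path} act on disjoint or compatible register locations under the layout of Fig.~\ref{fig:register_in_path}, so that the decomposition used in Lemma~\ref{lem:precopy_graph} is honestly realizable under the $\Path_{n+m}$ constraint; this was already arranged by the choice of ${\sf R}_{\rm inp}$, ${\sf R}_{\rm copy}$, ${\sf R}_{\rm targ}$, ${\sf R}_{\rm aux}$ in Appendix~\ref{sec:diag_with_ancilla_path}.
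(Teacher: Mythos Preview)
Your proposal is correct and takes essentially the same approach as the paper, which simply cites Lemmas~\ref{lem:precopy_graph}, \ref{lem:sufcopy_path} and \ref{lem:precopy3_path}. Your explicit summation $O(m)+O(m)+O(n-p)=O(m)$ (using $m\ge 3n$) and the remark about register layout compatibility just make the paper's one-line proof more explicit.
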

\begin{proof}
Follows from Lemmas \ref{lem:precopy_graph}, \ref{lem:sufcopy_path} and \ref{lem:precopy3_path}.  
\end{proof}

\paragraph{Implementation of Gray Initial and Gray Cycle stages}


\begin{lemma}[] \label{lem:grayinitial_path}
$U_{GrayInit}$ (Eq. \eqref{eq:gray_initial_graph}) can be implemented by a CNOT circuit of depth $O(p^2)$ under $\Path_{n+m}$ constraint.
\end{lemma}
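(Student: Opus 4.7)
The plan is to exploit the block structure of the copy and target registers to localize the computation of all $2^p$ inner products $\langle t_k, x_{suf}\rangle$. Recall from Fig.~\ref{fig:register_in_path} that after $U_{SufCopy}$, each register ${\sf R}_k$ (for $k\in[r]$, $r=2^p/(n-p)$) holds $(n-p)/p$ copies of $\ket{x_{suf}}$ on its copy qubits, interleaved with $n-p$ target qubits. First, I would partition the $2(n-p)$ alternating qubits of each ${\sf R}_k$ into $(n-p)/p$ contiguous blocks of $2p$ vertices, so that every block contains exactly one copy of $\ket{x_{suf}}$ on its $p$ copy positions together with the $p$ nearest target positions. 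Across all ${\sf R}_k$'s this produces $2^p/p$ vertex-disjoint blocks on the global path.

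Next, I would assign the $2^p$ distinct masks $t\in\{0,1\}^p$ bijectively to target qubits, giving each block a set of $p$ distinct masks. This assignment is admissible because it only determines the indexing of the Gray codes $\ell_k$ used in the Gray Cycle stage, which is itself a design parameter. Within a single block, the required operation is the invertible $\mathbb{F}_2$-linear transformation that leaves the $p$ copy positions holding $x_{suf}$ unchanged and writes $\langle t^{(i)},x_{suf}\rangle$ into the $i$-th target position, for $i=1,\ldots,p$, where $t^{(1)},\ldots,t^{(p)}$ are the masks assigned to that block. The associated $2p\times 2p$ matrix over $\mathbb{F}_2$ is lower-triangular with identity diagonal, hence invertible.

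By Lemma~\ref{lem:cnot_circuit} applied to the $2p$-vertex sub-path (or by a direct Gaussian-elimination argument on the path, which yields the $O(n^2)$ bound without requiring $\delta\ge 2$), the block-level transformation admits a CNOT circuit of depth $O((2p)^2)=O(p^2)$ that uses only edges internal to the block. Since the $2^p/p$ blocks are pairwise vertex-disjoint, all of these block-level circuits execute in parallel, yielding the claimed overall depth $O(p^2)$. The only mild technicality is handling the case $p\nmid(n-p)$, which is dealt with by letting one block absorb the remainder at constant-factor cost, so no substantive obstacle is expected. The key conceptual point is simply recognizing that writing $p$ inner products of one $x_{suf}$-copy into $p$ nearby target qubits is itself a $2p$-qubit linear-reversible computation, for which Lemma~\ref{lem:cnot_circuit} applies directly.
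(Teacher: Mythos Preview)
Your approach is essentially identical to the paper's: partition the copy/target qubits into vertex-disjoint blocks of $2p$ consecutive path vertices, each containing one copy of $x_{suf}$ and $p$ target positions, invoke Lemma~\ref{lem:cnot_circuit} on each block for $O(p^2)$ depth, and run all blocks in parallel. The paper additionally treats the regime $n-p<p$, in which a $2p$-block necessarily straddles several ${\sf R}_k$'s and therefore contains some auxiliary (red) qubits; since at most $\tau=2\lceil\log(n-p)\rceil$ red qubits appear per $2(n-p)$ black/white qubits, the block length remains $O(p)$ and your argument carries over unchanged.
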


\begin{proof}
%
Recall that the Gray Initial stage aims to generate state $\ket{\langle t_1, x_{suf}\rangle}\otimes \cdots \otimes \ket{\langle t_{2^p}, x_{suf}\rangle}$ in ${\sf R}_{{\rm targ}}$.
{We consider 2 cases: }

{Case 1: $n-p\ge p$.}   Consider the first block of size $2p$ qubits in ${\sf R}_1$ in Fig.~\ref{fig:register_in_path}: The $p$ black qubits contain exactly $x_{suf}$ and the $p$ white qubits are to hold the state $\ket{\langle t_1, x_{suf}\rangle}\otimes \cdots \otimes \ket{\langle t_{p}, x_{suf}\rangle}$. By Lemma~\ref{lem:cnot_circuit}, this can be implemented by a $(2p)$-qubit CNOT circuit of depth and size $O(p^2)$ under $\Path_{2p}$ constraint. At the same time, we can also generate the state $\ket{\langle t_{p+1}, x_{suf}\rangle}\otimes \cdots \otimes \ket{\langle t_{2p}, x_{suf}\rangle}$ in the second block of $2p$ qubits in ${\sf R}_1$, and similarly for all the rest blocks of $2p$ qubits in all ${\sf R}_i$'s. All these blocks are on connected and disjoint regions on the path graph and can thus be implemented in parallel. The only possible exceptions are the end of each ${\sf R}_i$, where the leftover qubits may not form a complete $(2p)$-qubit block. But for each of these ``incomplete blocks'', there is still an $x_{suf}$ that is only $2p$-distance away, and thus these incomplete blocks can be handled in depth $O(p^2)$ as well. Putting them together, by handling all complete blocks first, and then handling all incomplete blocks afterwards, we achieve the desirable unitary with an overall depth of $2\cdot O(p^2) = O(p^2)$. 

{Case 2: $n-p<p$. We again divide the qubits into blocks, where each block has $p$ black qubits and $p$ white qubits. Different to Case 1, now there are red qubits in each block. But note from Fig. \ref{fig:register_in_path} that every $\tau = 2\lceil\log(n-p)\rceil$ red qubits appear after $2(n-p)$ black/white qubits, thus the total number of red qubits in each block is not more than that of black/white ones. Therefore, the length of each block is still $O(p)$ and any CNOT circuit on one block still has depth and size $O(p^2)$ (and different circuits on different blocks can be parallelized) as in the previous case. Thus the overall depth is $O(p^2)$ as claimed.} 
\end{proof}

The operator $U^{(k)}$, defined in the following lemma, is an important tool in the Gray cycle stage. In the lemma, the $\ket{x_i}$ and $\ket{x_j}$ are black qubits, the $\ket{y_i}$ and $\ket{y_j}$ are white qubits, and the $\ket{x_\ell}$ are red qubits. This lemma is where we use ${\sf R}_{\rm aux}$ to help the CNOT gates.  
\begin{lemma}\label{lem:U(k)}
Let $x,y\in\B^{n-p}$. For all $k\in[n-p]$, We desire a unitary transformation $U^{(k)}$ to satisfy
\begin{align*}
        &\bigotimes_{i=1}^{n-p-k+1}\ket{x_i}_{2i-1}\ket{y_i}_{2i}\bigotimes_{j=n-p-k+2}^{n-p}\ket{x_j}_{2j-1}\ket{y_j}_{2j}\bigotimes_{\ell=1}^{2\lceil\log (n-p)\rceil}\ket{x_\ell}_{2(n-p)+\ell}\\
        \xrightarrow{U^{(k)}}& \bigotimes_{i=1}^{n-p-k+1}\ket{x_i}_{2i-1}\ket{x_{i+k-1}\oplus y_i}_{2i}\bigotimes_{j=n-p-k+2}^{n-p}\ket{x_j}_{2j-1}\ket{x_{j-(n-p)+k-1}\oplus y_j}_{2j}\bigotimes_{\ell=1}^{2\lceil\log (n-p)\rceil}\ket{x_\ell}_{2(n-p)+\ell} &\forall x,y\in\B^{n-p}.
\end{align*}
Under $\Path_{2(n-p)+2\lceil\log (n-p)\rceil}$ constraint,  a $U^{(k)}$ can be implemented by a circuit of depth $O(k)$ and size $O(nk)$ if $k\in[2\lceil\log (n-p)\rceil+1]$; otherwise, $U^{(k)}$ can be implemented by a circuit of depth and size $O((n-p)k)$.
\end{lemma}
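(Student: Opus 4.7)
The plan is to handle the two cases separately. The key observation is that the red auxiliary qubits hold copies $\ket{x_1},\ldots,\ket{x_{2\lceil\log(n-p)\rceil}}$ immediately adjacent to $\ket{y_{n-p}}$ on the path, which is exactly what is needed to perform the ``wrap-around'' CNOTs cheaply when $k$ is small (Case 1). When $k$ is too large for these copies to cover the needed indices (Case 2), we fall back on a swap-based pipeline essentially identical to the one used in Lemma~\ref{lem:U(k)_without_ancilla}.

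For Case 1 ($k\le 2\lceil\log(n-p)\rceil+1$), the transformation needs two families of CNOTs. First, for $i=1,\ldots,n-p-k+1$, we XOR $x_{i+k-1}$ (at path position $2(i+k-1)-1$) into $y_i$ (at position $2i$), which is a distance of $2k-3=O(k)$. Second, for $j=n-p-k+2,\ldots,n-p$, we need $x_{j-(n-p)+k-1}\in\{x_1,\ldots,x_{k-1}\}$; since $k-1\le 2\lceil\log(n-p)\rceil$, we can use the red copy at position $2(n-p)+(j-(n-p)+k-1)$, which is at distance at most $k-1=O(k)$ from $y_j$. Each CNOT is thus implemented by the standard swap-route-unswap pattern (Lemma~\ref{lem:cnot_path_constraint}). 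The $n-p-k+1$ direct operations live on overlapping intervals on the lower portion of the path, but can be executed by a brick-like pipeline of swaps exactly as in Lemma~\ref{lem:U(k)_without_ancilla}, giving depth $O(k)$ and size $O((n-p)k)=O(nk)$; the $k-1$ wrap-around operations live on a separate region (between $y_{n-p-k+2}$ and the red block) and can be pipelined in parallel without interfering with the direct ones, contributing no more than $O(k)$ additional depth and $O(k^2)$ size.

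For Case 2 ($k\ge 2\lceil\log(n-p)\rceil+2$), the red copies no longer supply the required $x_\ell$, so we ignore ${\sf R}_{\rm aux}$ and reduce to the setting of Lemma~\ref{lem:U(k)_without_ancilla} Case 2. The direct CNOTs (for $i\le n-p-k+1$) have control--target distance $O(k)$ and there are $O(n-p-k)$ of them; the wrap-around CNOTs have distance $O(n-p-k)$ and there are $O(k)$ of them. By Lemma~\ref{lem:cnot_path_constraint} the total size is $O((n-p-k)\cdot k)+O(k\cdot(n-p-k))=O((n-p)k)$, and the same bound holds for depth.

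The delicate part of the argument is the parallelization in Case 1: one must check that the swap pipelines implementing the direct CNOTs and the wrap-around CNOTs do not collide on the interface between the white qubit $y_{n-p}$ and the red block. This will force careful scheduling (essentially separating even-indexed and odd-indexed operations into alternating layers, exactly as in the no-ancilla proof), and verifying that under this schedule each qubit participates in at most $O(1)$ gates per layer while every intermediate state agrees with what the single-gate specification requires. Once this bookkeeping is in place, both case analyses reduce to straightforward counting using Lemma~\ref{lem:cnot_path_constraint}.
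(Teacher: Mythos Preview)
Your proposal is correct and follows essentially the same approach as the paper. The only difference is stylistic: for Case~1 the paper observes that once the red auxiliary qubits are renamed as $x_{r_t+1},\ldots,x_{r_c}$ (with $r_t=n-p$, $r_c=n-p+2\lceil\log(n-p)\rceil$, $\tau=2\lceil\log(n-p)\rceil$), the transformation is literally an instance of Case~1 of Lemma~\ref{lem:U(k)_without_ancilla}, so the depth and size bounds follow immediately without re-analyzing the pipeline scheduling you describe as the ``delicate part''; Case~2 is handled identically to your argument.
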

\begin{proof}
Case 1: $k\le 2\lceil\log(n-p)\rceil+1$. We use Lemma~\ref{lem:U(k)_without_ancilla} with parameters $r_t=n-p$, $r_c=n-p+2\lceil \log(n-p) \rceil$, $\tau=2\lceil \log(n-p) \rceil$, and  variables $x_{r_t+1},x_{r_t+2},\ldots,x_{r_c}$ there set as $x_{1},x_{2},\ldots,x_{2\lceil \log(n-p) \rceil}$ here. It is easily verified that $U^{(k)}$ in Lemma \ref{lem:U(k)_without_ancilla} satisfies the unitary requirement of this lemma. 

Case 2: $k\ge 2\lceil\log(n-p)\rceil+2$.  $U^{(k)}$ can be implemented in two parts: The first part adds $x_{i+k-1}$ to $y_i$, which are $O(k)$ apart, for each $i=1,2,\ldots,n-p-k+1$. The second part adds $x_{i-(n-p)+k-1}$ to $y_i$, which are $O(n-p-k)$ apart, for each $i=n-p-k+2,\cdots,n-p$. By Lemma \ref{lem:cnot_path_constraint}, this circuit can be implemented in depth and size
\[(n-p-k+1)\cdot O(k)+(n-p-(n-p-k+2)+1)\cdot O(n-p-k)=O((n-p)k).\]
\end{proof}
\begin{lemma}\label{lem:graycycle_path} $U_{GrayCycle}$ (Eq. \eqref{eq:gray_cycle_graph}) can be implemented by a quantum circuit of depth $O(2^{n-p})$ under $\Path_{n+m}$ constraint.
\end{lemma}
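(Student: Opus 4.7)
The plan is to reduce the analysis to the per-phase generator operator $U_{Gen}^{(j)}$ via Lemma~\ref{lem:graycycle_graph}, and then show that each $U_{Gen}^{(j)}$ decomposes into $r$ independent copies of the $U^{(k)}$ gadget from Lemma~\ref{lem:U(k)}, one per register ${\sf R}_i$, executable in parallel. By Lemma~\ref{lem:graycycle_graph},
\[
\mathcal{D}(U_{GrayCycle})\;\le\;\sum_{j=1}^{2^{n-p}}\mathcal{D}(U_{Gen}^{(j)})+2^{n-p},
\]
so it suffices to upper bound the sum.

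Next I would exploit our choice $\ell_k=((k-1)\bmod(n-p))+1$ together with the layout in Fig.~\ref{fig:register_in_path}. Index the $2^p$ target qubits so that the $n-p$ target qubits lying inside a given ${\sf R}_i$ receive Gray-code indices $1,2,\dots,n-p$ in order. Going from $\ket{f_j}$ to $\ket{f_{j+1}}$ requires, on the $k$-th target qubit, a single CNOT with control qubit $\ket{x_{h_{\ell_k,j+1}}}$, where $h_{\ell_k,j+1}=((\zeta(j)+\ell_k-2)\bmod(n-p))+1$. Writing $k'\defeq h_{1,j+1}=\zeta(j)$ as in Eq.~\eqref{eq:h1j}, the list of required control indices inside a single ${\sf R}_i$ is the cyclic shift $k',k'+1,\dots,n-p,1,\dots,k'-1$. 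Because each ${\sf R}_i$ already contains one copy of $\ket{x_{pre}}$ on its black qubits and one copy of $\ket{x_{aux}}$ on its red qubits (produced by $U_{PreCopy}$), this is exactly the unitary $U^{(k')}$ of Lemma~\ref{lem:U(k)} acting on the path ${\sf R}_i$. The $r$ registers ${\sf R}_1,\dots,{\sf R}_r$ occupy disjoint contiguous segments of the full path, so the $r$ copies of $U^{(k')}$ can be applied simultaneously, giving $\mathcal{D}(U_{Gen}^{(j)})=\mathcal{D}(U^{(k')})$.

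Combining the per-phase cost with the counting statement from Lemma~\ref{lem:GrayCode}, for each $k'\in[n-p]$ there are $2^{n-p-k'}$ values of $j\in\{1,\dots,2^{n-p}\}$ with $\zeta(j)=k'$ (after the natural shift), and Lemma~\ref{lem:U(k)} gives $\mathcal{D}(U^{(k')})=O(k')$ when $k'\le\tau+1$ and $O((n-p)k')$ otherwise. Hence
\[
\sum_{j=1}^{2^{n-p}}\mathcal{D}(U_{Gen}^{(j)})
\;=\;\sum_{k'=1}^{\tau+1}2^{n-p-k'}\cdot O(k')
\;+\;\sum_{k'=\tau+2}^{n-p}2^{n-p-k'}\cdot O((n-p)k').
\]
The first sum is $O(2^{n-p})$ since $\sum_{k'\ge 1}k'\,2^{-k'}=O(1)$. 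For the second sum, the choice $\tau=2\lceil\log(n-p)\rceil$ is exactly what makes $\sum_{k'\ge\tau+2}k'\,2^{-k'}=O(1/(n-p))$, which cancels the extra $(n-p)$ factor and yields $O(2^{n-p})$ as well. Adding back the $2^{n-p}$ depth from the single-qubit rotations $R_j$ gives the claimed $O(2^{n-p})$ total.

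The only place where care is needed is the boundary between adjacent ${\sf R}_i$'s: $U^{(k')}$ as stated in Lemma~\ref{lem:U(k)} acts on an isolated length-$2(n-p)+\tau$ path, whereas here each ${\sf R}_i$ is embedded inside a longer path. Since every gate of $U^{(k')}$ has both endpoints inside ${\sf R}_i$, the embedding does not change correctness, and the parallelism across the $r$ registers is preserved. This last check is the main technical obstacle I anticipate, but it is straightforward; all heavy lifting is done by Lemma~\ref{lem:U(k)}.
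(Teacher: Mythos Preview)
Your proposal is correct and follows essentially the same approach as the paper: reduce to the per-phase $U_{Gen}^{(j)}$ via Lemma~\ref{lem:graycycle_graph}, identify the action on each ${\sf R}_i$ with the gadget $U^{(k')}$ of Lemma~\ref{lem:U(k)} for $k'=h_{1,j+1}$, parallelize across the disjoint registers, and then sum using the Gray-code counting of Lemma~\ref{lem:GrayCode} with the split at $\tau$. The paper also notes separately that the final wrap-around phase $U_{Gen}^{(2^{n-p})}$ has the same form and cost, which you fold into your sum without explicit comment; this is harmless since its $O((n-p)^2)$ cost is absorbed into $O(2^{n-p})$.
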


\begin{proof}
{Recall that each of the $2^p$ qubits in the target register corresponds to a suffix of $s$, and $U_{GrayCycle}$ enumerates all prefixes of $s$ in the order given in a Gray code---more precisely, qubit $k$ uses the $(n-p,\ell_k)$-Gray code where $\ell_{k} = (k-1)\bmod (n-p)+1$. $U_{GrayCycle}$ is given in Eqs. \eqref{eq:Ugenj_graph} to \eqref{eq:rotation2n-p_graph}, where the phase steps in Eqs. \eqref{eq:rotationj_graph} and \eqref{eq:rotation2n-p_graph} are straightforward, and let us} consider quantum circuits for $U_{Gen}^{(j)}$ for all $j\in[2^{n-p}]$ (Eqs. \eqref{eq:Ugenj_graph} and \eqref{eq:Ugen2n-p_graph}). Recall 
the decomposition of qubits in Fig.~\ref{fig:register_in_path} into $r$ registers ${\sf R_1}, {\sf R_2},\ldots, {\sf R_r}$. For every $\ell\in[r]$ and $i\in[n-p]$, if $k=i+(\ell-1)(n-p)$, then 
\[\ell_k = (k-1)\bmod (n-p)+1 = (i+(\ell-1)(n-p)-1)\bmod (n-p)+1 = i-1+1 = i.\]
For each register ${\sf R}_q$ where $q\in [r]$, since we have already copied the prefix by $U_{PreCopy}$, the white qubits are in state $\ket{x_1 x_2 \cdots x_{n-p}}$, and the red qubits are in state $\ket{x_1 x_2 \cdots x_{\tau}}$. For $j\in [2^{n-p}-1]$, before $U^{(j)}_{Gen}$, the black qubits are in state $\ket{f_{j,1+(q-1)(n-p)},f_{j,2+(q-1)(n-p)},\cdots, f_{j,q(n-p)}}$. Thus $U^{(j)}_{Gen}$ (Eq.\eqref{eq:Ugenj_graph}) can be represented as 
\begin{align*}
    &\ket{x_1,f_{j,1+(q-1)(n-p)},x_2,f_{j,2+(q-1)(n-p)},\cdots, x_{n-p},f_{j,q(n-p)}, x_1x_2\cdots x_{\tau}}_{{\sf R}_q}\\
    \to &\ket{x_1,f_{j+1,1+(q-1)(n-p)},x_2,f_{j+1,2+(q-1)(n-p)},\cdots, x_{n-p},f_{j+1,q(n-p)}, x_1x_2\cdots x_{\tau}}_{{\sf R}_q}\\
    = &\ket{x_1,f_{j,1+(q-1)(n-p)}\oplus x_{h_{1,j+1}},x_2,f_{j,2+(q-1)(n-p)}\oplus x_{h_{2,j+1}},\cdots, x_{n-p},f_{j,q(n-p)}\oplus x_{h_{n-p,j+1}}, x_1x_2\cdots x_{\tau}}_{{\sf R}_q},
\end{align*}
where $f_{j+1,k}=\langle c_{j+1}^{\ell_k}t_k,x\rangle=\langle c_{j}^{\ell_k}t_k,x\rangle\oplus x_{h_{\ell_k,j+1}}$ with $h_{ij}$ defined in Eq.~\eqref{eq:index}.

{Recall that $h_{1,j+1} = \zeta(j)$  (Eq.~\eqref{eq:index})}, and therefore
\begin{align*}
    h_{i,j+1} &= (\zeta(j)+i-2\mod (n-p))+1\\
    &=( h_{1,j+1}+i-2\mod (n-p))+1\\
    & = \begin{cases}
    h_{1,j+1}+i-1,  &\quad \text{if } 1\le  i\le n-p-{\zeta(j)}+1.\\
    h_{1,j+1}+i-1-(n-p), &\quad \text{if } n-p-\zeta(j)+2\le i\le n-p.
    \end{cases}
\end{align*}
Therefore, the integers $h_{1,j+1},h_{2,j+1},\ldots,h_{n-p,j+1}$ are equal to $h_{1,j+1}, h_{1,j+1}+1,\ldots, n-p, 1,2,\ldots, h_{1,j+1}-1$.

By Lemma~\ref{lem:U(k)} (with $k\leftarrow h_{1,j+1}$ and  $n\leftarrow n-p$),  the above transformation can be implemented by $U^{(h_{1,j+1})}$ acting on register ${\sf R}_q$, for every $q\in[r]$. Each $U^{(h_{1,j+1})}$ can be implemented in depth
\begin{equation*}
    \mathcal{D}(U^{(h_{1,j+1})})= \begin{cases} 
    O(h_{1,j+1}), &\quad \text{if } h_{1,j+1}\le \tau+1, \\
    O((n-p)h_{1,j+1}), & \quad  \text{otherwise.}
    \end{cases}
\end{equation*}
and, as paths in distinct ${\sf R_\ell}$ are disjoint, the $U^{(h_{1,j+1})}$ for all $r$ registers can be implemented in parallel.  

{In the final iteration, the unitary} $U_{Gen}^{(2^{n-p})}$ (Eq. \eqref{eq:Ugen2n-p_graph}) {moves from the last prefix back to the first one}, and it can be implemented in the same way {as $U^{(j)}_{Gen}$ for $j\le 2^{n-p}-1$}, with the same depth upper bound.

By Lemma~\ref{lem:GrayCode}, there are $2^{n-p-i}$ values of $j\in[2^{n-p}-1]$ such that $h_{1,j+1}=i$.
By Lemma~\ref{lem:graycycle_graph}, the circuit depth required to implement $U_{GrayCyle}$ is
\begin{equation}\label{eq:graycycle-depth}
\sum_{j=1}^{2^{n-p}}\mathcal{D}(U_{Gen}^{(j)})+2^{n-p}=\sum_{i=1}^{\tau+1}O(i)\cdot O(2^{n-p-i})+\sum_{i=\tau+2}^{n-p}O((n-p)i)O(2^{n-p-i})+O((n-p)^2)+2^{n-p}=O(2^{n-p}),
\end{equation}
where $\tau=2\lceil\log (n-p)\rceil$.
\end{proof}
\paragraph{Remark} {As in the proof of Lemma \ref{lem:Ck_path}, here the first term in Eq. \eqref{eq:graycycle-depth} consists of highly numerous CNOT operations, for which we use ${\sf R}_{\rm aux}$ to help to shrink the distance and cost. The second term in Eq. \eqref{eq:graycycle-depth} incurs more cost per operation but the number of operations is small. In general, the number of operations exponentially decays with the distance, thus we choose the cutoff point $\tau = 2\lceil\log(n-p)\rceil$ to make the overall cost small.}

\paragraph{Implementation of Inverse Stage} 
\begin{lemma}\label{lem:inverse_path}
$U_{Inverse}$
can be implemented by a CNOT circuit of depth $O(m)$ under $\Path_{n+m}$ constraint.
\end{lemma}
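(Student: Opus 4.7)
The plan is to reduce the problem to bounds already established for the three component operators and then combine them. By Lemma \ref{lem:inverse_graph}, the depth of $U_{Inverse}$ satisfies
\[
\mathcal{D}(U_{Inverse})\le 2\mathcal{D}(U_{SufCopy})+\mathcal{D}(U''_{PreCopy})+\mathcal{D}(U'''_{PreCopy})+\mathcal{D}(U_{GrayInit}),
\]
so it suffices to plug in the four depth estimates obtained earlier in this section. Note that Lemma \ref{lem:inverse_graph} holds for arbitrary graph constraints, and in particular for $\Path_{n+m}$, so no new circuit design is required here, only a calculation.

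First, I would invoke Lemma \ref{lem:sufcopy_path} to bound both $\mathcal{D}(U_{SufCopy})$ and $\mathcal{D}(U''_{PreCopy})$ by $O(m)$ under $\Path_{n+m}$ constraint. Next, Lemma \ref{lem:precopy3_path} gives $\mathcal{D}(U'''_{PreCopy})=O(n-p)$, and since we have assumed $m\ge 3n$ in the setup of this subsection, $n-p\le n\le m/3=O(m)$. Finally, Lemma \ref{lem:grayinitial_path} gives $\mathcal{D}(U_{GrayInit})=O(p^2)$, and since $p=\lfloor \log(m/3)\rfloor$ we have $p^2=O(\log^2 m)=O(m)$.

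Adding these four contributions yields $\mathcal{D}(U_{Inverse})=O(m)+O(m)+O(m)+O(m)=O(m)$, which is the claimed bound. There is no real obstacle: the construction of $U_{Inverse}$ is already fixed by Eqs.~\eqref{eq:inverse1_graph}--\eqref{eq:inverse3_graph}, each component has already been bounded for the path constraint, and the only thing to verify is that the two ``input-size'' parameters $n-p$ and $p^2$ are dominated by the number of ancillas $m$, which follows directly from the assumptions $m\ge 3n$ and $p=\lfloor \log(m/3)\rfloor$ made at the start of Appendix \ref{sec:diag_with_ancilla_path}. The resulting circuit consists only of CNOT gates since $U_{SufCopy}$, $U''_{PreCopy}$, $U'''_{PreCopy}$ and $U_{GrayInit}$ are all CNOT circuits (they implement invertible $\mathbb{F}_2$-linear maps used for copying and for the initial prefix-free suffix evaluation), so the ``CNOT circuit'' qualifier in the statement is automatically preserved.
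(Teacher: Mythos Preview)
Your proof is correct and follows exactly the same approach as the paper: the paper's proof simply says ``Follows from Lemmas \ref{lem:inverse_graph}, \ref{lem:sufcopy_path}, \ref{lem:precopy3_path} and \ref{lem:grayinitial_path},'' and you have spelled out precisely how these four lemmas combine, including the verification that $n-p$ and $p^2$ are both $O(m)$. Your additional remark that all component operators are CNOT circuits (justifying the ``CNOT circuit'' qualifier) is a nice detail the paper leaves implicit.
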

\begin{proof}
Follows from Lemmas \ref{lem:inverse_graph}, \ref{lem:sufcopy_path},  \ref{lem:precopy3_path} and \ref{lem:grayinitial_path}.
\end{proof}

\paragraph{Implementation of $\Lambda_n$}
\begin{lemma}[Lemma \ref{lem:diag_d_grid_ancilla} (Case 1)]
\label{lem:diag_path_ancillary}
Any $n$-qubit diagonal unitary matrix $\Lambda_n$ can be realized by a  quantum circuit of depth {$O\left(2^{n/2}+\frac{2^n}{m}\right)$}
and size $O(2^n)$ under $\Path_{n+m}$ constraint, using {$m \ge 3n$}
ancillary qubits. In particular, we can achieve circuit depth $O(2^{n/2})$ by using $m=\Theta(2^{n/2})$ ancillary qubits.
\end{lemma}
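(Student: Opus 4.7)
The plan is to instantiate the general five-stage framework of Fig.~\ref{fig:diag_with_ancilla_framwork} with the parameter choices $p=\lfloor\log(m/3)\rfloor$, $\tau=2\lceil\log(n-p)\rceil$, $\lambda_{copy}=\lambda_{targ}=2^p$, $\lambda_{aux}=r\tau$, the register layout of Fig.~\ref{fig:register_in_path}, and the Gray-code choice $\ell_k=(k-1)\bmod(n-p)+1$ described in the preceding subsection. Correctness of the overall circuit then follows directly from the defining equations \eqref{eq:sufcopy_graph}--\eqref{eq:inverse_graph} together with the identity $\sum_{j=1}^{2^{n-p}}\sum_{k=1}^{2^p} f_{j,k}\,\alpha_{s(j,k)}=\theta(x)$ derived from Eq.~\eqref{eq:alpha}; this is exactly the calculation carried out in the paragraph preceding Fig.~\ref{fig:diag_with_ancilla_framwork}, so there is nothing new to prove here.

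Next, I would add up the depth contributions from the five stages using the graph-specific lemmas already established in this subsection: $\mathcal{D}(U_{SufCopy})=O(m)$ and $\mathcal{D}(U''_{PreCopy})=O(m)$ by Lemma~\ref{lem:sufcopy_path}, $\mathcal{D}(U'''_{PreCopy})=O(n-p)$ by Lemma~\ref{lem:precopy3_path}, $\mathcal{D}(U_{GrayInit})=O(p^2)$ by Lemma~\ref{lem:grayinitial_path}, $\mathcal{D}(U_{GrayCycle})=O(2^{n-p})$ by Lemma~\ref{lem:graycycle_path}, and $\mathcal{D}(U_{Inverse})=O(m)$ by Lemma~\ref{lem:inverse_path}. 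Substituting the choice $2^p=\Theta(m)$ gives $2^{n-p}=\Theta(2^n/m)$ and $p^2=O(\log^2 m)$, and summing yields a total depth of
\[
    O(m) + O(\log^2 m) + O(m) + O(2^n/m) + O(m) = O\!\left(m + \frac{2^n}{m}\right).
\]

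The last (slightly subtle) step is to convert this into the claimed bound $O\!\left(2^{n/2}+\frac{2^n}{m}\right)$. When $m\le O(2^{n/2})$ the copy/inverse cost $O(m)$ is dominated by $O(2^n/m)$ so the two expressions agree. When $m$ exceeds $O(2^{n/2})$, the construction is wasteful in ancilla: one simply uses $m'=\Theta(2^{n/2})$ of the available qubits (the remaining $m-m'$ are left untouched), giving depth $O(2^{n/2})=O\!\left(2^{n/2}+\frac{2^n}{m}\right)$. The special case $m=\Theta(2^{n/2})$ balances the two terms and yields the $O(2^{n/2})$ bound advertised in the lemma. I do not expect a genuine obstacle here, just the bookkeeping of which regime of $m$ one is in. For the size bound, the Gray Cycle dominates: it has $2^{n-p}$ phases, each touching $O(2^p)$ qubits (via the parallel $U^{(h_{1,j+1})}$ blocks and single-qubit rotations), so it contributes $O(2^n)$ gates in total; all other stages contribute only $O(m)+O(2^p)=O(2^n)$ gates, giving the claimed $O(2^n)$ size.
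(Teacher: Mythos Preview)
Your proposal is correct and follows essentially the same approach as the paper: sum the stage depths from Lemmas~\ref{lem:sufcopy_path}, \ref{lem:precopy_path}, \ref{lem:grayinitial_path}, \ref{lem:graycycle_path}, \ref{lem:inverse_path} to get $O(m+2^n/m)$, then split into the regimes $m\le O(2^{n/2})$ and $m>O(2^{n/2})$, using only $\Theta(2^{n/2})$ ancilla in the latter. The only minor difference is the size bound: the paper obtains it more coarsely via ``depth $\times$ $(n+m)$ gates per layer $=O(2^n/m)\cdot(n+m)=O(2^n)$'' rather than your per-stage accounting, but both arguments are valid.
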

\begin{proof}
If $m\le 3\cdot 2^{n/2}$, the total depth of $\Lambda_n$ is
\[O(m)+O(m)+O(2^{n-p})+O(p^2)+O(m) = O(m+2^n/m) =O(2^n/m),\]
from Lemmas \ref{lem:sufcopy_path}, \ref{lem:graycycle_path},  \ref{lem:precopy_path} and \ref{lem:inverse_path}. Since there are at most $n+m$ gates in each circuit depth, the total size is $O(2^n/m)\cdot (n+m)=O(2^n)$.
If $m\ge 3\cdot 2^{n/2}$, we only use $3\cdot 2^{n/2}$ ancillary qubits. In this case, the total depth and size are $O(2^{n/2})$ and $O(2^n)$. Putting the two cases together gives the claimed result.
\end{proof}
As we will see from Appendix \ref{append:QSP_US_lowerbound} (Corollary \ref{coro:lower_bound_path}), this bound is optimal.


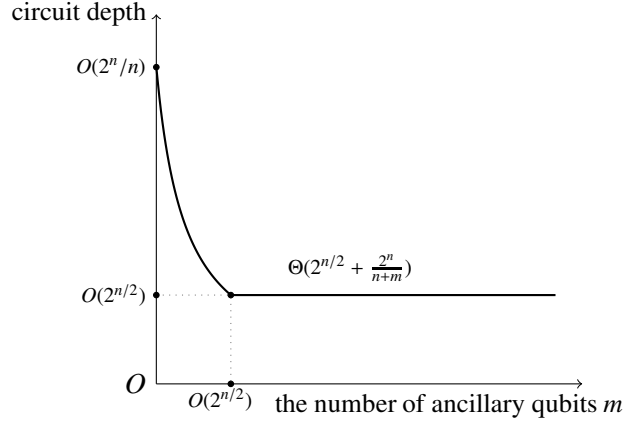
\begin{figure}
\centering
\begin{tikzpicture}[scale=0.7]
\draw[->] (0,0) -- (8,0);
\draw (5.5,0) node[anchor=north] {\small the number of ancillary qubits 
$m$};
		
\draw (0,0) node[anchor=east] {$O$};
\draw (0,6) node[anchor=east] {\scriptsize $O(2^{n}/n)$};
\draw (0,36/21.4) node[anchor=east] {\scriptsize $O(2^{n/2})$};
\draw (5,2.2) node[anchor=east] {\scriptsize $\Theta(2^{n/2}+\frac{2^n}{n+m})$};
\draw (2,-0.3) node[anchor=east]{\scriptsize $O(2^{n/2})$};
\draw[->] (0,0) -- (0,7) node[anchor=east] {\small circuit depth};

\draw[thick] plot[smooth, domain = 0:1.4] (\x, {(36/11)/(\x + (6/11))});
\draw[thick] (1.4,36/21.4) -- (7.5,36/21.4);

\draw[dotted, gray] (0,36/21.4) -- (1.4, 36/21.4) -- (1.4,0) ;
\draw (0,6) node[fill,black,draw=black,circle,scale = 0.2]{};
\draw (1.4, 36/21.4) node[fill,black,draw=black,circle,scale = 0.2]{};
\draw (0,36/21.4) node[fill,black,draw=black,circle,scale = 0.2]{};
\draw (1.4,0) node[fill,black,draw=black,circle,scale = 0.2]{};

\end{tikzpicture}
\caption{Circuit depth for $n$-qubit diagonal unitary matrix $\Lambda_n$ under $\Path_{n+m}$ constraint (Lemmas \ref{lem:diag_path_withoutancilla} and \ref{lem:diag_path_ancillary}).} 
\label{fig:depth_diag_path}
\end{figure}

\subsection{Circuit implementation under $\Grid_{n+m}^{n_1,n_2,\ldots, n_d}$ constraints (Proof of Lemma \ref{lem:diag_d_grid_ancilla} (Cases 2,3))}
\label{sec:diag_with_ancilla_grid_d}
We realize the suffix copy and prefix copy stages under $\Grid_{n+m}^{n_1,n_2,\ldots,n_d}$ constraint, and recall that, without loss of generality, we assume that $n_1 \ge n_2 \ge \cdots\ge n_d$. The Gray initial and Gray cycle stages are implemented by circuits in Appendix \ref{sec:diag_with_ancilla_path} under the Hamiltonian path constraint in $\Grid_{n+m}^{n_1,n_2,\ldots,n_d}$. We take the $n$ input qubits to be arranged in the corner of a $d$-dimensional grid. They can be permuted to any other locations in the grid without increasing the order of the circuit depth required. 

We assume that $m\ge 36n $. If $m< 36n $, diagonal unitary matrices are implemented in the way of Appendix \ref{sec:diag_without_ancilla_path}. We take $p= \log (\frac{m}{18})$, $\tau=2\lceil\log (n-p)\rceil$, $\lambda_{copy}=\lambda_{targ}=2^p$, and  $\lambda_{aux}=r\tau$ where   $r=\frac{2^p}{n-p}$. For the integers specifying the Gray codes, we take $\ell_k=(k-1)\mod (n-p)+1$ for all $k\in[2^p]$.

\paragraph{Choice of registers} We assign qubits to ${\sf R}_{\rm inp}$, ${\sf R}_{\rm copy}$, ${\sf R}_{\rm targ}$ and ${\sf R}_{\rm aux}$ as follows (see Fig.~\ref{fig:register_in_grid}). We divide $\Grid_{n+m}^{n_1,n_2,\ldots, n_d}$ into two grids: $\Grid_{n_1\cdots n_{d-1} \lfloor n_d/2\rfloor}^{n_1,n_2,\ldots, n_{d-1},  \lfloor n_d/2\rfloor}$ and $\Grid_{n_1\cdots n_{d-1}\lceil n_d/2\rceil}^{n_1,n_2,\ldots, n_{d-1}, \lceil n_d/2\rceil}$. We can verify that the sizes of these two grids are at least $m/3~(\ge 12n) $ and $m/2~(\ge 18 n)$ respectively. 
The input register is in $\Grid_{n_1\cdots n_{d-1} \lfloor n_d/2\rfloor}^{n_1,n_2,\ldots, n_{d-1}, \lfloor n_d/2\rfloor}$ and qubits in $\Grid_{n_1\cdots n_{d-1}\lceil n_d/2\rceil}^{n_1,n_2,\ldots, n_{d-1}, \lceil n_d/2\rceil}$ are utilized as ancillary qubits. 
\begin{itemize}
    \item We choose the lowest possible dimensional grid to store the input register. More specifically, let $k$ be the minimum integer satisfying $n_1\cdots n_k \ge n$, and $n_k'$ be the minimum integer satisfying $n_1\cdots n_{k-1} n_k' \ge n$. (When $k=1$, $n_1n_2\cdots n_{k-1}$ is defined to 1).
    ${\sf R}_{\rm inp}$ consists of the first $n$ qubits of sub-grid $\Grid^{n_1,n_2,\ldots,n_{k-1},n'_k,1,1,\ldots,1}_{n_1n_2\cdots n_{k-1}n'_k}$ in $\Grid_{n_1\cdots n_{d-1} \lfloor n_d/2\rfloor}^{n_1,n_2,\ldots, n_{d-1}, \lfloor n_d/2\rfloor}$. 
    
        
    
    \item We choose $2\cdot 2^p+r\tau$ ancillary qubits from $\Grid_{n_1\cdots n_{d-1}\lceil n_{d}/2 \rceil}^{n_1,\cdots, n_{d-1},\lceil n_d/2\rceil}$, 
    and utilize them to construct $r$ registers ${\sf R}_1,{\sf R}_2,\cdots, {\sf R}_r$. The size of each ${\sf R}_k$ is $2(n-p)+\tau$. 
    
    Now we construct register ${\sf R}_k$.  Let $j$ be the minimum integer satisfying $n_1\cdots n_j \ge 2(n-p)+\tau$, and $n_j'$ be the minimum integer satisfying $n_1\cdots n_{j-1} n_j' \ge 2(n-p)+\tau$. We divide $\Grid_{n_1\cdots n_{d-1}(n_{d}-1)}^{n_1,\cdots, n_{d-1},(n_{d}-1)}$ into sub-grids isomorphic to $\Grid^{n_1,n_2,\ldots,n_{j-1},n'_{j},1,1\ldots,1}_{n_1n_2\cdots n_{j-1} n'_{j}}$. (When $j=1$, $n_1\cdots n_{j-1}$ is defined to 1). Each sub-grid stores exactly one register ${\sf R}_k$, and note that ${\sf R}_k$ occupies at least half of this grid, so the number of qubits in the sub-grid not in ${\sf R}_k$ is at most $2(n-p)+\tau$, so at most $r (2(n-p)+\tau)\le \frac{m}{6}$ qubits are wasted (not used). We again choose a Hamiltonian path $P$ in each sub-grid and assign qubits to ${\sf R}_{\rm copy}$, ${\sf R}_{\rm targ}$ and ${\sf R}_{\rm copy}$ and ${\sf R}_{\rm targ}$ registers in the same way as our assignment for Path in Fig.~\ref{fig:register_in_path}. We choose the same Hamiltonian path for all ${\sf R}_k$ (i.e., same for each sub-grid). 

\end{itemize}


An example showing the registers for $\Grid_{n+m}^{n_1,n_2}$ ($n_1\ge n$ and $n_2\ge 2$) is shown in Fig.~\ref{fig:register_in_grid}.

Next we analyze the cost of prefix and suffix copy. Recall that the prefix consists of $n-p$ bits and the suffix has length $p$ bits. In the following lemma, we use a uniform parameter $n'$ to represent either of these two cases.
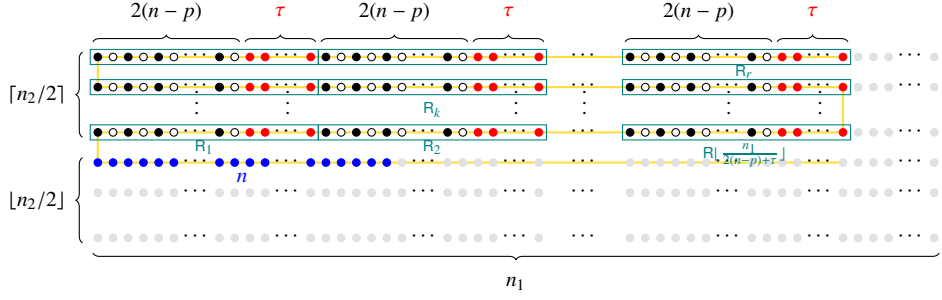
\begin{figure}[]
    \centering
    \begin{tikzpicture}
    \draw [draw=Goldenrod,thick] (9.8,-1.4)--(0,-1.4)--(0,-1)--(9.8,-1)--(9.8,-0.4)--(0,-0.4)--(0,0)--(9.8,0);
       \draw [fill=black] (3,0) circle (0.05) (3.4,0) circle (0.05) (3.8,0) circle (0.05) (4.6,0) circle (0.05) (7,0) circle (0.05) (7.4,0) circle (0.05) (7.8,0) circle (0.05) (8.6,0) circle (0.05) ;
       \draw [fill=white] (0.2,0) circle (0.05) (0.6,0) circle (0.05) (1.0,0) circle (0.05) (1.8,0) circle (0.05) (3.2,0) circle (0.05) (3.6,0) circle (0.05) (4.0,0) circle (0.05) (4.8,0) circle (0.05) (7.2,0) circle (0.05) (7.6,0) circle (0.05) (8.0,0) circle (0.05) (8.8,0) circle (0.05);
       \draw [draw=red, fill=red] (2,0) circle (0.05) (2.2,0) circle (0.05) (2.8,0) circle (0.05) (5,0) circle (0.05) (5.2,0) circle (0.05) (5.8,0) circle (0.05) (9,0) circle (0.05) (9.2,0) circle (0.05) (9.8,0) circle (0.05);
       \draw (1.3,0) node{\scriptsize $\cdots$} (2.5,0) node{\scriptsize $\cdots$} (4.3,0) node{\scriptsize $\cdots$} (5.5,0) node{\scriptsize $\cdots$} (8.3,0) node{\scriptsize $\cdots$} (9.5,0) node{\scriptsize $\cdots$} (6.4,0) node{\scriptsize $\cdots$};
       \draw [fill=black] (0,-0.4) circle (0.05) (0.4,-0.4) circle (0.05) (0.8,-0.4) circle (0.05) (1.6,-0.4) circle (0.05) (3,-0.4) circle (0.05) (3.4,-0.4) circle (0.05) (3.8,-0.4) circle (0.05) (4.6,-0.4) circle (0.05) (7,-0.4) circle (0.05) (7.4,-0.4) circle (0.05) (7.8,-0.4) circle (0.05) (8.6,-0.4) circle (0.05);
       \draw [fill=white] (0.2,-0.4) circle (0.05) (0.6,-0.4) circle (0.05) (1.0,-0.4) circle (0.05) (1.8,-0.4) circle (0.05) (3.2,-0.4) circle (0.05) (3.6,-0.4) circle (0.05) (4.0,-0.4) circle (0.05) (4.8,-0.4) circle (0.05) (7.2,-0.4) circle (0.05) (7.6,-0.4) circle (0.05) (8.0,-0.4) circle (0.05) (8.8,-0.4) circle (0.05);
       \draw [draw=red, fill=red] (2,-0.4) circle (0.05) (2.2,-0.4) circle (0.05) (2.8,-0.4) circle (0.05) (5,-0.4) circle (0.05) (5.2,-0.4) circle (0.05) (5.8,-0.4) circle (0.05) (9,-0.4) circle (0.05) (9.2,-0.4) circle (0.05) (9.8,-0.4) circle (0.05);
      \draw (1.3,-0.4) node{\scriptsize $\cdots$} (2.5,-0.4) node{\scriptsize $\cdots$} (4.3,-0.4) node{\scriptsize $\cdots$} (5.5,-0.4) node{\scriptsize $\cdots$} (8.3,-0.4) node{\scriptsize $\cdots$} (9.5,-0.4) node{\scriptsize $\cdots$} (6.4,-0.4) node{\scriptsize $\cdots$};
      
      \draw [fill=black] (0,0) circle (0.05) (0.4,0) circle (0.05) (0.8,0) circle (0.05) (1.6,0) circle (0.05) ;
      
       \draw [fill=black] (0,-1) circle (0.05) (0.4,-1) circle (0.05) (0.8,-1) circle (0.05) (1.6,-0.4) circle (0.05) (3,-1) circle (0.05) (3.4,-1) circle (0.05) (3.8,-1) circle (0.05) (4.6,-1) circle (0.05) (7,-1) circle (0.05) (7.4,-1) circle (0.05) (7.8,-1) circle (0.05) (8.6,-1) circle (0.05)(1.6,-1) circle (0.05) ;
       \draw [fill=white] (0.2,-1) circle (0.05) (0.6,-1) circle (0.05) (1.0,-1) circle (0.05) (1.8,-1) circle (0.05) (3.2,-1) circle (0.05) (3.6,-1) circle (0.05) (4.0,-1) circle (0.05) (4.8,-1) circle (0.05) (7.2,-1) circle (0.05) (7.6,-1) circle (0.05) (8.0,-1) circle (0.05) (8.8,-1) circle (0.05);
       \draw [draw=red, fill=red] (2,-1) circle (0.05) (2.2,-1) circle (0.05) (2.8,-1) circle (0.05) (5,-1) circle (0.05) (5.2,-1) circle (0.05) (5.8,-1) circle (0.05) (9,-1) circle (0.05) (9.2,-1) circle (0.05) (9.8,-1) circle (0.05);
      \draw (1.3,-1) node{\scriptsize $\cdots$} (2.5,-1) node{\scriptsize $\cdots$} (4.3,-1) node{\scriptsize $\cdots$} (5.5,-1) node{\scriptsize $\cdots$} (8.3,-1) node{\scriptsize $\cdots$} (9.5,-1) node{\scriptsize $\cdots$} (6.4,-1) node{\scriptsize $\cdots$};

       \draw [fill=blue,draw=blue] (0,-1.4) circle (0.05) (0.4,-1.4) circle (0.05) (0.8,-1.4) circle (0.05) (1.6,-1.4) circle (0.05) (3,-1.4) circle (0.05) (3.4,-1.4) circle (0.05)    (0.2,-1.4) circle (0.05) (0.6,-1.4) circle (0.05) (1.0,-1.4) circle (0.05) (1.8,-1.4) circle (0.05) (2,-1.4) circle (0.05) (2.2,-1.4) circle (0.05) (2.8,-1.4) circle (0.05) (3.8,-1.4) circle (0.05) (3.2,-1.4) circle (0.05) (3.6,-1.4) circle (0.05);
       \draw [fill=lgray,draw=lgray](8.6,-1.4) circle (0.05) (7.8,-1.4) circle (0.05)  (7,-1.4) circle (0.05) (7.4,-1.4) circle (0.05) (4.0,-1.4) circle (0.05) (4.8,-1.4) circle (0.05) (7.2,-1.4) circle (0.05) (7.6,-1.4) circle (0.05) (8.0,-1.4) circle (0.05) (8.8,-1.4) circle (0.05) (4.6,-1.4) circle (0.05);
       \draw [fill=lgray,draw=lgray] (5,-1.4) circle (0.05) (5.2,-1.4) circle (0.05) (5.8,-1.4) circle (0.05) (9,-1.4) circle (0.05) (9.2,-1.4) circle (0.05) (9.8,-1.4) circle (0.05);
      \draw (1.3,-1.4) node{\scriptsize $\cdots$} (2.5,-1.4) node{\scriptsize $\cdots$} (4.3,-1.4) node{\scriptsize $\cdots$} (5.5,-1.4) node{\scriptsize $\cdots$} (8.3,-1.4) node{\scriptsize $\cdots$} (9.5,-1.4) node{\scriptsize $\cdots$} (6.4,-1.4) node{\scriptsize $\cdots$};
      \draw[fill=lgray,draw=lgray] (10,0) circle (0.05) (10.2,0) circle (0.05) (10.4,0) circle (0.05) (11,0) circle (0.05);
      \draw (10.7,0) node{\scriptsize $\cdots$};
      \draw[fill=lgray,draw=lgray] (10,-0.4) circle (0.05) (10.2,-0.4) circle (0.05) (10.4,-0.4) circle (0.05) (11,-0.4) circle (0.05);
      \draw (10.7,-0.4) node{\scriptsize $\cdots$};
      \draw[fill=lgray,draw=lgray] (10,-1) circle (0.05) (10.2,-1) circle (0.05) (10.4,-1) circle (0.05) (11,-1) circle (0.05);
      \draw (10.7,-1) node{\scriptsize $\cdots$};
      \draw[fill=lgray,draw=lgray] (10,-1.4) circle (0.05) (10.2,-1.4) circle (0.05) (10.4,-1.4) circle (0.05) (11,-1.4) circle (0.05);
      \draw (10.7,-1.4) node{\scriptsize $\cdots$};
      
      \draw[fill=lgray,draw=lgray] (10,-1.8) circle (0.05) (10.2,-1.8) circle (0.05) (10.4,-1.8) circle (0.05) (11,-1.8) circle (0.05);
      \draw (10.7,-1.8) node{\scriptsize $\cdots$};
      \draw [fill=lgray,draw=lgray] (0,-1.8) circle (0.05) (0.4,-1.8) circle (0.05) (0.8,-1.8) circle (0.05) (1.6,-1.8) circle (0.05) (3,-1.8) circle (0.05) (3.4,-1.8) circle (0.05)    (0.2,-1.8) circle (0.05) (0.6,-1.8) circle (0.05) (1.0,-1.8) circle (0.05) (1.8,-1.8) circle (0.05) (2,-1.8) circle (0.05) (2.2,-1.8) circle (0.05) (2.8,-1.8) circle (0.05) (3.8,-1.8) circle (0.05) (3.2,-1.8) circle (0.05) (3.6,-1.8) circle (0.05);
       \draw [fill=lgray,draw=lgray](8.6,-1.8) circle (0.05) (7.8,-1.8) circle (0.05)  (7,-1.8) circle (0.05) (7.4,-1.8) circle (0.05) (4.0,-1.8) circle (0.05) (4.8,-1.8) circle (0.05) (7.2,-1.8) circle (0.05) (7.6,-1.8) circle (0.05) (8.0,-1.8) circle (0.05) (8.8,-1.8) circle (0.05) (4.6,-1.8) circle (0.05);
       \draw [fill=lgray,draw=lgray] (5,-1.8) circle (0.05) (5.2,-1.8) circle (0.05) (5.8,-1.8) circle (0.05) (9,-1.8) circle (0.05) (9.2,-1.8) circle (0.05) (9.8,-1.8) circle (0.05);
       
       \draw (1.3,-1.8) node{\scriptsize $\cdots$} (2.5,-1.8) node{\scriptsize $\cdots$} (4.3,-1.8) node{\scriptsize $\cdots$} (5.5,-1.8) node{\scriptsize $\cdots$} (8.3,-1.8) node{\scriptsize $\cdots$} (9.5,-1.8) node{\scriptsize $\cdots$} (6.4,-1.8) node{\scriptsize $\cdots$};
    \draw[fill=lgray,draw=lgray] (10,-2.4) circle (0.05) (10.2,-2.4) circle (0.05) (10.4,-2.4) circle (0.05) (11,-2.4) circle (0.05);
      \draw (10.7,-2.4) node{\scriptsize $\cdots$};
      \draw [fill=lgray,draw=lgray] (0,-2.4) circle (0.05) (0.4,-2.4) circle (0.05) (0.8,-2.4) circle (0.05) (1.6,-2.4) circle (0.05) (3,-2.4) circle (0.05) (3.4,-2.4) circle (0.05)    (0.2,-2.4) circle (0.05) (0.6,-2.4) circle (0.05) (1.0,-2.4) circle (0.05) (1.8,-2.4) circle (0.05) (2,-2.4) circle (0.05) (2.2,-2.4) circle (0.05) (2.8,-2.4) circle (0.05) (3.8,-2.4) circle (0.05) (3.2,-2.4) circle (0.05) (3.6,-2.4) circle (0.05);
       \draw [fill=lgray,draw=lgray](8.6,-2.4) circle (0.05) (7.8,-2.4) circle (0.05)  (7,-2.4) circle (0.05) (7.4,-2.4) circle (0.05) (4.0,-2.4) circle (0.05) (4.8,-2.4) circle (0.05) (7.2,-2.4) circle (0.05) (7.6,-2.4) circle (0.05) (8.0,-2.4) circle (0.05) (8.8,-2.4) circle (0.05) (4.6,-2.4) circle (0.05);
       \draw [fill=lgray,draw=lgray] (5,-2.4) circle (0.05) (5.2,-2.4) circle (0.05) (5.8,-2.4) circle (0.05) (9,-2.4) circle (0.05) (9.2,-2.4) circle (0.05) (9.8,-2.4) circle (0.05);
       
       \draw (1.3,-2.4) node{\scriptsize $\cdots$} (2.5,-2.4) node{\scriptsize $\cdots$} (4.3,-2.4) node{\scriptsize $\cdots$} (5.5,-2.4) node{\scriptsize $\cdots$} (8.3,-2.4) node{\scriptsize $\cdots$} (9.5,-2.4) node{\scriptsize $\cdots$} (6.4,-2.4) node{\scriptsize $\cdots$};
      
      \draw (1.3,-0.5) node{\scriptsize $\vdots$} (2.5,-0.5) node{\scriptsize $\vdots$}  (5.5,-0.5) node{\scriptsize $\vdots$} (8.3,-0.5) node{\scriptsize $\vdots$} (9.5,-0.5) node{\scriptsize $\vdots$} (6.4,-0.5) node{\scriptsize $\vdots$};
      
      \draw [draw=teal] (2.9,-0.5)--(5.9,-0.5)--(5.9,-0.3)--(2.9,-0.3)--cycle;
      \draw [draw=teal] (-0.1,-0.5)--(2.9,-0.5)--(2.9,-0.3)--(-0.1,-0.3)--cycle;
      \draw [draw=teal] (6.9,-0.5)--(9.9,-0.5)--(9.9,-0.3)--(6.9,-0.3)--cycle;
      \draw [draw=teal] (2.9,-0.1)--(5.9,-0.1)--(5.9,0.1)--(2.9,0.1)--cycle;
      \draw [draw=teal] (-0.1,-0.1)--(2.9,-0.1)--(2.9,0.1)--(-0.1,0.1)--cycle;
      \draw [draw=teal](6.9,-0.1)--(9.9,-0.1)--(9.9,0.1)--(6.9,0.1)--cycle;
      \draw [draw=teal] (2.9,-0.9)--(5.9,-0.9)--(5.9,-1.1)--(2.9,-1.1)--cycle;
      \draw [draw=teal] (-0.1,-0.9)--(2.9,-0.9)--(2.9,-1.1)--(-0.1,-1.1)--cycle;
      \draw [draw=teal](6.9,-0.9)--(9.9,-0.9)--(9.9,-1.1)--(6.9,-1.1)--cycle;
      \draw (4.4,-0.7) node{\tiny \color{teal} ${\sf R}_k$};
     \draw (1.4,-1.2) node{\tiny\color{teal} ${\sf R}_1$};
    \draw (4.4,-1.2) node{\tiny \color{teal} ${\sf R}_2$};
    \draw (8.5,-1.3) node{\tiny \color{teal} ${\sf R}{\lfloor\frac{n_1}{2(n-p)+\tau}\rfloor}$};
    \draw (8.5,-0.2) node{\tiny \color{teal} ${\sf R}_{r}$};
       \node (g) at (1.8,0.2) {};
       \node (h) at (3,0.2) {};
       \draw [decorate,decoration={brace}] (g)--(h);
       \node (g) at (4.8,0.2) {};
       \node (h) at (6,0.2) {};
       \draw [decorate,decoration={brace}] (g)--(h);
       \node (g) at (8.8,0.2) {};
       \node (h) at (10,0.2) {};
       \draw [decorate,decoration={brace}] (g)--(h);
        \node (g) at (-0.2,0.2) {};
       \node (h) at (2,0.2) {};
       \draw [decorate,decoration={brace}] (g)--(h);
        \node (g) at (2.8,0.2) {};
       \node (h) at (5,0.2) {};
       \draw [decorate,decoration={brace}] (g)--(h);
        \node (g) at (6.8,0.2) {};
       \node (h) at (9,0.2) {};
       \draw [decorate,decoration={brace}] (g)--(h);
       \node (g) at (-0.2,-1.2) {};
       \node (h) at (-0.2,0.2) {};
       \draw [decorate,decoration={brace}] (g)--(h);
       \node (g) at (-0.2,-2.6) {};
       \node (h) at (-0.2,-1.2) {};
       \draw [decorate,decoration={brace}] (g)--(h);
       \node (g) at (11.2,-2.6) {};
       \node (h) at (-0.2,-2.6) {};
       \draw [decorate,decoration={brace}] (g)--(h);
        \draw (5.5,-3) node{ \scriptsize $n_1$} (-0.8,-0.5) node{ \scriptsize $\lceil n_2/2\rceil$} (-0.8,-1.9) node{ \scriptsize $\lfloor n_2/2\rfloor$} (1.9,-1.6) node{ \color{blue} \scriptsize $n$};
       \draw (0.9,0.6) node {\scriptsize $2(n-p)$} (3.9,0.6) node {\scriptsize $2(n-p)$} (7.9,0.6) node {\scriptsize $2(n-p)$};
       \draw (2.4,0.6) node{\color{red} \scriptsize $\tau$} (5.4,0.6) node{\color{red} \scriptsize $\tau$} (9.4,0.6) node{\color{red}\scriptsize $\tau$};
    \end{tikzpicture}
    \caption{${\sf R}_{\rm inp}$,  ${\sf R}_{\rm copy}$, ${\sf R}_{\rm targ}$ and ${\sf R}_{\rm aux}$ for $\Grid_{n+m}^{n_1,n_2}$ constraint for $n_1\ge n$ and $n_2\ge 2$. Colors correspond to input (blue), copy (black), target (white) and auxiliary (red) register qubits. The grey qubits are not utilized in the circuit construction. 
    The ancillary qubits are grouped into registers labelled $\textsf{R}_1,\textsf{R}_2,\ldots,\textsf{R}_r$. }
    \label{fig:register_in_grid}
\end{figure}

\begin{lemma}\label{lem:copy_grid}
For any $n'\ge 1$, let $s\le d$ denote the minimum integer satisfying $n_1\cdots n_s\ge n'$, and $n_s'$ be the minimum integer satisfying $n_1\cdots n_{s-1}n'_s\ge n'$. For a general $y\in\{0,1\}^{n'}$, suppose  the state $\ket{y}$ is input in the first $n'$ qubits of sub-grid $\Grid^{n_1,n_2,\ldots,n_{s-1},n'_s,1,1,\ldots,1}_{n_1n_2\cdots n_{s-1}n'_s}$.
Then one can implement a unitary transformation  $U_{copy}^{grid_d}$ satisfying
\begin{equation}
\ket{y}\ket{0^{nt}}\xrightarrow{U^{grid_d}_{copy}}\ket{y}\underbrace{\ket{yy\cdots yy}}_{t:=O(\prod_{i=1}^dn_i/n') {\rm~copies~of ~}y},
\end{equation}
can be implemented by a circuit of depth $O((n')^2+\sum_{i=1}^d n_i)$ under $\Grid^{n_1,n_2,\ldots,n_d}_{n_1n_2\cdots n_d}$ constraint.
\end{lemma}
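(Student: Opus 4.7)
The plan is to decompose $U_{copy}^{grid_d}$ into two stages: an initial \emph{local fill} of the sub-grid that already holds $\ket{y}$, followed by $d - s + 1$ \emph{axial propagation} stages that extend the filled region one dimension at a time until it covers the whole grid.

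For the local fill, I would exploit the minimality of $n'_s$: since $n_1\cdots n_{s-1}(n'_s-1) < n'$, the sub-grid $G_0 := \Grid^{n_1,\ldots,n_{s-1},n'_s,1,\ldots,1}_{n_1\cdots n_{s-1}n'_s}$ contains at most $n' + n_1 \cdots n_{s-1} \le 2n'$ qubits. I would fix a snake-like Hamiltonian path in $G_0$ whose first $n'$ vertices coincide with the input positions of $\ket{y}$, and then apply Lemma \ref{lem:copy_path} with parameters $n \leftarrow n'$ and $t \leftarrow 1$ along this path to write a single (possibly partial) additional copy of $\ket{y}$ into the remaining $\le n'$ qubits of $G_0$, at cost depth $O((n')^2)$. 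Since every edge of this Hamiltonian path is an edge of the grid, no extra overhead is paid to satisfy the grid constraint.

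For the axial propagation, I would handle dimensions $i = s, s+1, \ldots, d$ in turn. Before the $i$-th phase the filled region is a sub-grid of the form $\Grid^{n_1,\ldots,n_{i-1},\ell_i,1,\ldots,1}$ where $\ell_s = n'_s$ and, inductively, $\ell_i = n_{i-1}$ is already maxed out for $i > s$. I would pipeline nearest-neighbor CNOTs between consecutive slabs perpendicular to dimension $i$: at each step every qubit of the current frontier slab CNOTs to its unique neighbour in the next slab along dimension $i$; since these CNOTs act on pairwise-disjoint grid edges they parallelize to depth $1$, and after at most $n_i - \ell_i \le n_i$ serial steps the extent along dimension $i$ has grown to $n_i$. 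Hence phase $i$ costs depth $O(n_i)$, and the total across all phases is $\sum_{i=s}^{d} O(n_i) \le \sum_{i=1}^d O(n_i)$.

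Adding the two contributions gives the claimed depth $O((n')^2 + \sum_{i=1}^d n_i)$. The main obstacle I anticipate is the bookkeeping for the number of valid $\ket{y}$-copies actually produced: because the local fill only writes a \emph{partial} extra copy when $n' < n_1\cdots n_{s-1} n'_s$, the slab patterns that are subsequently propagated are slightly irregular, and I would need to verify that despite this irregularity the final grid contents still contain $t = \Omega(\prod_{i=1}^d n_i / n')$ disjoint complete copies of $\ket{y}$. This amounts to tracking, along the snake path used for the local fill, the map from bits of $\ket{y}$ to positions in $G_0$, and then showing that the slab-duplication in each axial phase preserves the property that any ``block'' of $n'$ consecutive grid positions (along the snake order lifted to the whole grid) contains a full copy of $\ket{y}$, which is the only non-routine ingredient of the argument.
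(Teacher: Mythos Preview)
Your axial propagation along dimension $s$ is where the argument breaks. Copying ``the current frontier slab to the next slab'' just replicates the single slab at position $n_s'$ into positions $n_s'+1,\ldots,n_s$; it does \emph{not} replicate the block $G_0$ (which spans $n_s'$ slabs). Since a single slab has only $n_1\cdots n_{s-1} < n'$ qubits (by minimality of $s$), it cannot hold a full copy of $y$, so the slabs you add contribute no new copies. Concretely, take $d=1$, $n_1=100$, $n'=n_s'=7$: your local fill does nothing (because $|G_0|=n'$), and frontier propagation writes $y_7$ into positions $8,\ldots,100$, leaving exactly one copy of $y$ where $\Omega(n_1/n')\approx 14$ are required. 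In higher dimensions the later phases multiply the count by $n_{s+1}\cdots n_d$, but you are still short by a factor $n_s/n_s'$, which is unbounded.

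The obstacle you flagged (bookkeeping of the partial extra copy) is therefore not the real issue; the real issue is that dimension~$s$ needs \emph{block} replication, not slab replication. The paper handles dimension $s$ differently from dimensions $s+1,\ldots,d$: along each of the $n_1\cdots n_{s-1}$ lines parallel to axis $s$ it applies Lemma~\ref{lem:copy_path} with parameters $n\leftarrow n_s'$ and $t\leftarrow n_s/n_s'-1$ (in parallel across lines, since they are disjoint), producing $n_s/n_s'$ copies of the whole $G_0$ pattern at depth $O((n_s')^2+n_s)\le O((n')^2+n_s)$. After that, your slab-by-slab propagation is exactly right for each remaining dimension $i=s+1,\ldots,d$. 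Note that once dimension $s$ is handled this way, the separate ``local fill'' step is unnecessary: each replicated $G_0$-block already contains one full copy of $y$ in its first $n'$ positions, and the count $(n_s/n_s')\,n_{s+1}\cdots n_d = \Theta(\prod_i n_i/n')$ follows directly from $n' \le n_1\cdots n_{s-1}n_s' < 2n'$.
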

\begin{proof}
Label qubits in the grid by their integer coordinates $(i_1,i_2,\ldots,i_d)$ where $i_k\in[n_k]$, for all $k\in[d]$. 
    For $y\in \B^{n'}$, $\ket{y}$ stores in first $n'$ qubits of sub-grid $\Grid^{n_1,n_2,\ldots,n_{s-1},n'_s,1,1,\ldots,1}_{n_1n_2\cdots n_{s-1}n'_s}$. It can be verified that the size of $\Grid^{n_1,n_2,\ldots,n_{s-1},n'_s,1,1,\ldots,1}_{n_1n_2\cdots n_{s-1}n'_s}$ is less than $2n'$. For simplicity of presentation assume that $n_s$ is a multiple of $n'_s$.
    \begin{enumerate}
        \item   First, we make $\frac{n_s}{n_s'}-1$ copies of qubits of $\Grid^{n_1,n_2,\ldots,n_{s-1},n'_s,1,1,\ldots,1}_{n_1n_2\cdots n_{s-1}n'_s}$ in $\Grid^{n_1,n_2,\ldots,n_{s-1},n_s,1,1,\ldots,1}_{n_1n_2\cdots n_{s-1}n_s}$. 
        For every $(i_1,\cdots,i_{s-1})\in[n_1]\times [n_2]\times \ldots \times [n_{s-1}]$, define path $P_{(i_1,i_2,\cdots,i_{s-1})}$ of length $n_s$:
        \begin{align*}
        P_{(i_1,i_2,\cdots,i_{s-1})} =  \{(i_1,i_2,\ldots,i_{s-1},v_s,1,\ldots,1): v_s\in [n_s]\}.
        \end{align*}
        For every $(n_1,\ldots,n_{s-1})$, we make $\frac{n_s}{n_s'}-1$ copies of the first $n'_s$ qubit of $P_{(i_1,i_2,\cdots,i_{s-1})}$ in this path under path constraint .
         By Lemma~\ref{lem:copy_path}, this requires depth $O((n'_s)^2+n'_s(n_s/n'_s-1))=O((n'_s)^2+n_s)$. 
        \item Second, we make $n_{s+1}n_{s+2}\cdots n_d$ copies of qubits of $\Grid^{n_1,n_2,\ldots,n_{s-1},n_s,1,1,\ldots,1}_{n_1n_2\cdots n_s}$ in $\Grid^{n_1,n_2,\ldots,n_{d}}_{n_1\cdots n_d}$. This can be implemented in $d-s$ steps. For every $k\in[d-s]$, in the $k$-th step, we make $n_{s+k}-1$ copies of qubits of $\Grid_{n_1n_2\cdots n_{s+k-1}}^{n_1,n_2,\cdots,n_{s+k-1},1,\cdots,1}$ in  $\Grid_{n_1n_2\cdots n_{s+k},1,\cdots,1}^{n_1,n_2,\cdots,n_{s+k},1,\cdots,1}$. Similar to the discussion above, the $k$-th step requires $O(1^2+n_{s+k})=O(n_{s+k})$ depth. The total depth of these $d-k$ steps is $\sum_{k=1}^{d-s}O(n_{s+k})=O(\sum_{i=s+1}^dn_i)$.
      
    \end{enumerate}
    Then we have made $\frac{n_s}{n'_s}n_{s+1}n_{s+2}\cdots n_d=O((\prod_{i=1}^d n_i)/n')$ copies of $y$ in total, since $n'\le (\prod_{i=1}^{s-1}n_i)n'_s<2n'$. The total depth is 
      \[O \big((n'_s)^2 + n_s+\sum_{i=s+1}^d n_i\big)\le O \big((n')^2 + \sum_{i=1}^d n_i\big).\]
 

 
\end{proof}

\begin{lemma}[Lemma \ref{lem:diag_d_grid_ancilla} (Case 2, 3)]\label{lem:diag_grid_ancillary}
Any $n$-qubit diagonal unitary matrix $\Lambda_n$ can be realized by a  quantum circuit of depth \[O\Big(n^2+d2^{\frac{n}{d+1}}+\max_{k\in\{2,\ldots,d\}}\Big\{\frac{d2^{n/k}}{(\Pi_{i=k}^d n_i)^{1/k}}\Big\}+\frac{2^n}{n+m}\Big)\] 
under $\Grid^{n_1,n_2,\ldots,n_d}_{n+m}$ constraint, using $m \ge {36n}$ ancillary qubits. If $n_1=n_2=\cdots=n_d$, the circuit depth is $O\left(n^2+d2^{\frac{n}{d+1}}+\frac{2^n}{n+m}\right)$.
\end{lemma}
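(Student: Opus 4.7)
The plan is to run the five-stage framework of Appendix~\ref{sec:diag_with_ancilla_framework} with the same parameters as the path case (Lemma~\ref{lem:diag_path_ancillary}): $p=\lfloor\log(m/18)\rfloor$, $\tau=2\lceil\log(n-p)\rceil$, $\lambda_{copy}=\lambda_{targ}=2^p$, $\lambda_{aux}=r\tau$ with $r=2^p/(n-p)$, and Gray-code indices $\ell_k=(k-1)\bmod(n-p)+1$. The register layout is the one described before the lemma (Fig.~\ref{fig:register_in_grid}): I split the grid along the last coordinate into two halves, place ${\sf R}_{\rm inp}$ in the lowest-dimensional subgrid $\Grid^{n_1,\ldots,n_{k-1},n_k',1,\ldots,1}$ of one half that holds at least $n$ qubits, and partition the other half into subgrids of size $\ge 2(n-p)+\tau$, each hosting one sub-register ${\sf R}_q$. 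Inside each ${\sf R}_q$ I fix a Hamiltonian path and interleave ${\sf R}_{\rm copy}$, ${\sf R}_{\rm targ}$ and ${\sf R}_{\rm aux}$ qubits exactly as in Fig.~\ref{fig:register_in_path}, and I use the \emph{same} Hamiltonian path for every ${\sf R}_q$.

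With this layout, the crucial observation is that the Gray Initial and Gray Cycle stages act only \emph{inside} a single ${\sf R}_q$ together with ${\sf R}_{\rm inp}$. For Gray Cycle this is immediate from our choice of $\ell_k$, since the $(n-p)$ distinct Gray-code indices cycle within each ${\sf R}_q$. Hence I can invoke the path-constrained bounds directly: Lemma~\ref{lem:grayinitial_path} gives $O(p^2)$ depth, and Lemma~\ref{lem:graycycle_path} gives $O(2^{n-p})=O(2^n/(n+m))$ depth (the subgrids are vertex-disjoint, so the $r$ copies run in parallel). Similarly, $U'''_{PreCopy}$ copies $\ket{x_{aux}}$ from the black to the red qubits along each Hamiltonian path in depth $O(n-p)=O(n)$ by Lemma~\ref{lem:precopy3_path}, run in parallel over all $q$.

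The only stages that genuinely exploit the $d$-dimensional geometry are $U_{SufCopy}$ and $U''_{PreCopy}$, which spread one copy of $\ket{x_{suf}}$ (respectively $\ket{x_{pre}}$) into each of the $r$ subgrids. These are exactly the setting of Lemma~\ref{lem:copy_grid} with $n'=p$ and $n'=n-p$ respectively, giving depth $O(p^2+\sum_i n_i)$ and $O((n-p)^2+\sum_i n_i)=O(n^2+\sum_i n_i)$. Combining all five stages via Lemmas~\ref{lem:precopy_graph}, \ref{lem:graycycle_graph} and~\ref{lem:inverse_graph}, the total depth is
\[
O\!\left(n^2+\sum_{i=1}^d n_i+\frac{2^n}{n+m}\right).
\]

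The main technical step, and the place I expect to spend the most care, is converting $\sum_{i=1}^d n_i$ into the form $d\,2^{n/(d+1)}+\max_{k\in\{2,\ldots,d\}}\{d\,2^{n/k}/(\prod_{i=k}^d n_i)^{1/k}\}$ asserted in the lemma. The idea is that the product constraint $\prod_{i=1}^d n_i=n+m$ plus $n_1\ge n_2\ge\cdots\ge n_d$ forces $n_1\le (n+m)/\prod_{i=2}^d n_i$, and more generally $\prod_{i=1}^{k-1}n_i\ge n_k^{k-1}$ yields $n_k\le \big((n+m)/\prod_{i=k}^d n_i\big)^{1/k}$; since $m=\Theta(2^p)$ and we need the bound to hold uniformly in $p\le n$, the worst case $2^p\asymp 2^{n/(d+1)}$ produces the $d\,2^{n/(d+1)}$ term for $k=1$ and the $d\,2^{n/k}/(\prod_{i=k}^d n_i)^{1/k}$ terms for $k\ge 2$. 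In the cubic case $n_1=\cdots=n_d$, the max collapses and $\sum_i n_i=d(n+m)^{1/d}=O(d\,2^{n/(d+1)})$, matching the simplified bound. Verifying that the subgrids in Fig.~\ref{fig:register_in_grid} have enough room (which is why we require $m\ge 36n$) and that at most a constant fraction of ancilla is wasted is routine once the layout is fixed.
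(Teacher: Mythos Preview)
Your derivation of the intermediate bound $O\big(n^2+\sum_{i=1}^d n_i+\tfrac{2^n}{n+m}\big)$ is correct and matches the paper exactly. The gap is in your final ``conversion'' step: you cannot algebraically bound $\sum_i n_i$ by $d\,2^{n/(d+1)}+\max_{k\ge 2}\{d\,2^{n/k}/(\prod_{i=k}^d n_i)^{1/k}\}$, because this inequality is simply false in general. Take $d=2$, $n_2=O(1)$, and $m=\Theta(2^n)$: then $n_1=\Theta(2^n)$, so $\sum_i n_i=\Theta(2^n)$, whereas the target expression is only $O(2^{n/2})$. Your cubic-case claim that $d(n+m)^{1/d}=O(d\,2^{n/(d+1)})$ is likewise wrong once $m\gg 2^{dn/(d+1)}$: it gives $d\,2^{n/d}$, not $d\,2^{n/(d+1)}$. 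The sentence about ``the worst case $2^p\asymp 2^{n/(d+1)}$'' hints at the right idea but treats $p$ (equivalently $m$) as something you may bound over, which contradicts your own setup where $p=\lfloor\log(m/18)\rfloor$ is determined by the given $m$.

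What the paper actually does is \emph{choose not to use the whole grid}: it works inside a subgrid $\Grid^{n_1',\ldots,n_d'}_{n+m'}$ with $n_i'\le n_i$ and $m'\le m$, trading a larger $\tfrac{2^n}{n+m'}$ term against a smaller $\sum_i n_i'$. The case analysis (Case~1 through Case~$d{+}1$, depending on how the $n_i$ compare to the thresholds $2^{n/(d+1)}$ and $2^{n/(d-i+1)}/(\prod_{j>d-i} n_j)^{1/(d-i+1)}$) picks the $n_i'$ that balance these terms, and the monotonicity inequalities you wrote down (essentially Eq.~\eqref{eq:casej_ineq}) are used to show each case lands inside the claimed maximum. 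So your plan is fine up to the $O(n^2+\sum_i n_i+2^n/(n+m))$ bound; from there you need the subgrid optimization, not an algebraic inequality.
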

\begin{proof}

%

From Lemma~\ref{lem:copy_grid}, it follows from a similar argument to that in the proof of Lemma~\ref{lem:sufcopy_path} that $U_{SufCopy}$ and  $U_{PreCopy}''$ can be realized by circuits of depth $O(n^2+\sum_{j=1}^d n_j)$. By Lemma \ref{lem:precopy3_path}, $U_{PreCopy}'''$ can be realized in depth $O(n-p)$. By Lemmas \ref{lem:precopy_graph}, \ref{lem:grayinitial_path}, \ref{lem:graycycle_path} and \ref{lem:inverse_graph}, we can see that the depth for prefix copy, Gray initial, Gray cycle and inverse stages are $O\big(n^2+\sum_{j=1}^d n_j\big)$, $O(p^2)$, $O(2^{n-p})$, $O\big(n^2+\sum_{j=1}^d n_j\big)$ and  $O\big(n^2+\sum_{j=1}^d n_j\big)$ respectively. 

The total circuit depth for $\Lambda_{n}$ is thus $O(n^2+\sum_{j=1}^d n_j+\frac{2^n}{n+m})$ under $\Grid_{n+m}^{n_1,n_2,\ldots,n_d}$ constraint. This bound is good when all the $n_j$'s are of similar magnitude: Indeed, when $n_1 = n_2 = \cdots = n_d$, the bound becomes $O(n^2+ dm^{1/d}+\frac{2^n}{n+m})$. If $m \le O(2^{\frac{d}{d+1}n}/d)$, then the third term dominates and the bound is $O(2^n/m)$. If $m \ge \Omega(2^{\frac{d}{d+1}n}/d)$, we choose to only use $2^{\frac{d}{d+1}n}/d$ many ancilla, yielding a depth bound of $O(d2^{\frac{n}{d+1}})$. This completes the proof for the special case of $n_1 = n_2 = \cdots = n_d$. 

In the general case, where some $n_j$'s are much larger than others, we need some further treatment. 
Actually, we can use only a sub-grid $\Grid_{n+m'}^{n_1',n_2',\ldots,n_d'}$ of $\Grid_{n+m}^{n_1,n_2,\ldots,n_d}$ for the construction of $\Lambda_n$, where $n_i'\le n_i$ for all $i\in[d]$, and $m'=\prod_{i=1}^dn'_i-n \ge 36n$.
We consider  $d+1$ cases: 
\begin{itemize}
    \item Case 1: $n_d\ge 2^{\frac{n}{d+1}}$. In this case we choose $n_i'=2^{\frac{n}{d+1}}$ for all $i\in[d]$, which gives $m'=\prod_{i=1}^dn'_i-n \ge \omega(n)$. The total depth is
    \[O\Big(n^2+\sum_{j=1}^d n'_j+\frac{2^n}{n+m'}\Big) = O\left(n^2+d2^{\frac{n}{d+1}}+\frac{2^n}{2^{dn/(d+1)}}\right) = O\left(n^2+d2^{\frac{n}{d+1}}\right).
    \]
    
    \item Case $j$ ($2\le j\le d$): $n_d,n_{d-1},\ldots,n_{d-j+1}$ satisfy
     \begin{equation}\label{eq:range_casej}
         n_d < 2^{n/(d+1)},\quad n_{d-i}<\frac{2^{\frac{n}{d-i+1}}}{(n_{d-i+1}\cdots n_{d})^{\frac{1}{d-i+1}}}, \forall i\in[j-2], \quad n_{d-j+1}\ge  \frac{2^{\frac{n}{d-j+2}}}{(n_{d-j+2}\cdots n_d)^{\frac{1}{d-j+2}}}.
         \end{equation}
   We set 
   \begin{equation}\label{eq:ni'}
       n_i' =\begin{cases}
       \frac{2^{\frac{n}{d-j+2}}}{(n_{d-j+2}\cdots n_d)^{\frac{1}{d-j+2}}} &\quad i\in[d-j+1]\\
       n_i &\quad  i\in\{d-j+2,d-j+3,\ldots,d\}
       \end{cases}
   \end{equation}
   The number of ancillary qubits satisfies
 \begin{align}\label{eq:m'}
 m'&= \prod_{i=1}^{d}n'_i-n
 =\Big( \frac{2^{\frac{n}{d-j+2}}}{(n_{d-j+2}\cdots n_d)^{\frac{1}{d-j+2}}} \Big)^{d-j+1}(n_{d-j+2}\cdots n_d)-n= 2^{\frac{(d-j+1)n}{d-j+2}}(n_{d-j+2}\cdots n_d)^{\frac{1}{d-j+2}}-n\\
 \ge& 2^{\frac{(d-j+1)n}{d-j+2}}-n\ge 2^{\frac{n}{2}}-n=\omega(n).\nonumber
 \end{align}
    Now we have the following, where the first inequality holds because $n_{k-1}\le \frac{2^{n/k}}{(n_k\cdots n_d)^{1/k}}$ (Eq. \eqref{eq:range_casej}), and the second inequality holds because $n_{d-j+2}\ge n_{d-j+3}\ge \cdots \ge n_{k-1}$ for $ k\in \{d-j+3,\ldots, d\}$.
 \begin{align*}
     &\frac{2^{\frac{n}{d-j+2}}}{(n_{d-j+2}\cdots n_d)^{\frac{1}{d-j+2}}}/ \frac{2^{\frac{n}{k}}}{(n_k\cdots n_d)^{\frac{1}{k}}}= \frac{2^{\frac{n(k-(d-j+2))}{(d-j+2)k}}}{(n_{d-j+2}\cdots n_{k-1})^{\frac{1}{d-j+2}}(n_{k}\cdots n_d)^{\frac{k-(d-j+2)}{(d-j+2)k}}}\\
     \ge& \frac{(n_{k-1})^{\frac{k-(d-j+2)}{d-j+2}}}{(n_{d-j+2}\cdots n_{k-1})^{\frac{1}{d-j+2}}}\ge \frac{( n_{k-1})^{\frac{k-(d-j+2)}{d-j+2}}}{(n_{d-j+2})^{\frac{k-(d-j+2)}{d-j+2}}}\ge 1,\quad \forall k\in \{d-j+3,\ldots, d\}.
 \end{align*}

   Therefore, we have 
   \begin{equation}\label{eq:casej_ineq}
       \frac{2^{\frac{n}{d-j+2}}}{(n_{d-j+2}\cdots n_d)^{\frac{1}{d-j+2}}}\ge \frac{2^{\frac{n}{k}}}{(n_k\cdots n_d)^{\frac{1}{k}}},\quad \forall k\in \{d-j+3,\ldots, d\}.
   \end{equation} 
   The total circuit depth in this case is
\begin{align*}
  & O\Big(n^2+\sum_{i=1}^d n'_i+\frac{2^n}{n+m'}\Big)\\
  =& O\Big(n^2+\frac{(d-j+1)2^{\frac{n}{d-j+2}}}{(n_{d-j+2}\cdots n_d)^{\frac{1}{d-j+2}}}+\sum_{k=d-j+2}^d n_k+\frac{2^n}{n+m'}\Big) & (\text{by~Eq. \eqref{eq:ni'}})\\
   \le &O\Big(n^2+\frac{(d-j+1)2^{\frac{n}{d-j+2}}}{(n_{d-j+2}\cdots n_d)^{\frac{1}{d-j+2}}}+\sum_{k=d-j+2}^{d-1} \frac{2^{\frac{n}{k+1}}}{(n_{k+1}\cdots n_d)^{\frac{1}{k+1}}}+2^{\frac{n}{d+1}}+\frac{2^n}{n+m'}\Big) & (\text{by~Eq~}\eqref{eq:range_casej})\\
   \le &  O\Big(n^2+\frac{(d-j+1)2^{\frac{n}{d-j+2}}}{(n_{d-j+2}\cdots n_d)^{\frac{1}{d-j+2}}}+\sum_{k=d-j+2}^{d-1} \frac{2^{\frac{n}{k+1}}}{(n_{k+1}\cdots n_d)^{\frac{1}{k+1}}}+2^{\frac{n}{d+1}}+\frac{2^{\frac{n}{d-j+2}}}{(n_{d-j+2}\cdots n_d)^{\frac{1}{d-j+2}}}\Big) &(\text{by~Eq.~}\eqref{eq:m'})\\
   \le & O\Big(n^2+2^{\frac{n}{d+1}}+\frac{d2^{\frac{n}{d-j+2}}}{(n_{d-j+2}\cdots n_d)^{\frac{1}{d-j+2}}}\Big)& (\text{by~Eq.~}\eqref{eq:casej_ineq})\\
   \le & O\Big(n^2+2^{\frac{n}{d+1}}+\max_{k\in\{2,\ldots,d\}}\Big\{\frac{d2^{n/k}}{(\Pi_{i=k}^d n_i)^{1/k}}\Big\}+\frac{2^n}{n+m}\Big).
\end{align*}
\item Case $d+1$: $n_d,n_{d-1},\ldots,n_1$ satisfy
\begin{equation*}
         n_d < 2^{n/(d+1)},\quad  n_{d-i}<\frac{2^{\frac{n}{d-i+1}}}{(n_{d-i+1}\cdots n_{d})^{\frac{1}{d-i+1}}},\forall i\in[d-1].
         \end{equation*}
In this case we set $n'_i=n_i$ for all $i\in[d]$.  Thus, $m'=\prod_{i=1}^{d}n'_i-n=\prod_{i=1}^{d}n_i-n=m\ge 36n$. The total depth is 
\[O\Big(n^2+\sum_{j=1}^d n_j+\frac{2^n}{n+m}\Big)\le O\Big(n^2+2^{\frac{n}{d+1}}+\sum_{j=2}^d\frac{2^{n/j}}{(\Pi_{i=j}^d n_i)^{1/j}}+\frac{2^n}{n+m}\Big)\le O\Big(n^2+2^{\frac{n}{d+1}}+\max_{k\in\{2,\ldots,d\}}\Big\{\frac{d2^{n/k}}{(\Pi_{i=k}^d n_i)^{1/k}}\Big\}+\frac{2^n}{n+m}\Big).\]
\end{itemize}
Combining the above $d+1$ cases, the depth upper bound can be summarized as
\[O\Big(n^2+d2^{\frac{n}{d+1}}+\max_{k\in\{2,\ldots,d\}}\Big\{\frac{d2^{n/k}}{(\Pi_{i=k}^d n_i)^{1/k}}\Big\}+\frac{2^n}{n+m}\Big).\]

\end{proof}

\paragraph{Remark}
Under path and $d$-dimensional grid constraints, we prove later (Lemma \ref{lem:lower_bound_grid_k_Lambda}) that the depth lower bound for $\Lambda_n$ is $\Omega\left(\max\limits_{j\in [d]}\left\{n,2^{\frac{n}{d+1}},\frac{2^{n/j}}{(\prod_{i=j}^d n_i)^{1/j}}\right\}\right)$, using $m$ ancillary qubits. 
If $d$ is a constant, the depth upper and lower bounds match. If the number of ancillary qubits $m\le O\Big(\frac{2^n}{n^2+d2^{\frac{n}{d+1}}+\max\limits_{j\in\{2,\ldots,d\}}\frac{d2^{n/j}}{(\prod_{i=j}^d n_i)^{1/j}}}\Big)$, the depth upper bound is $O\left(\frac{2^n}{n+m}\right)$, which matches the corresponding lower bound $\Omega(\frac{2^n}{n+m})$, and both upper and lower bounds equal those under no graph constraints (\cite{sun2021asymptotically}). For example, if $m\le O(2^{n/2})$, the depth upper bound is $O\left(\frac{2^n}{n+m}\right)$ under path constraint.
It is somewhat surprising that the path and grid constraints do not asymptotically increase the circuit depth of diagonal unitary matrix $\Lambda_n$ if the size of grid (or the number of ancillary qubits) is not too large.
Moreover, our circuit depth is optimal if $d=O(1)$.

\subsection{Circuit implementation under $\Tree_{n+m}(d)$ constraints (Proof of Lemma \ref{lem:diag_tree_ancilla})}
\label{sec:diag_with_ancilla_binarytree}

We first consider the case of $\Tree_{n+m}(2)$ constraint. Without loss of generality, we assume that $m\le O(2^n)$; if $m\ge \omega(2^n)$, we only use $O(2^n)$ ancillary qubits. Our choice of Gray codes is given by setting $\ell_k = 1$, for all $k\in[2^p]$.

\paragraph{Choice of registers} We label the qubits in a binary tree as in Appendix \ref{sec:diag_without_ancilla_binarytree}. Recall that $\Tree_z^k:=\{zy:y\in\B^k\}$ denotes a depth $k$ binary tree where the root node is labelled $z$. The allocation of qubits to registers is shown schematically in Fig.~\ref{fig:register_ancillar_binary}, where the parameters $d$, $\kappa$ and $b$ are taken to be 
\begin{equation*}
    d=\left\lceil\log \left(n+m+1\right)\right\rceil-1, \qquad \kappa=\left\lceil\log \left(n+1\right)\right\rceil-1, \qquad b=\left\lceil\log (2\log n)\right\rceil.
\end{equation*}

\begin{enumerate}
    \item The $n+m$ qubits/nodes are in a depth-$d$ complete-binary tree. 
    
    \item The input register is located in the top sub-tree of $\kappa+1$ layers of nodes (the green part), namely 
    ${\sf R}_{\rm inp}\defeq \Tree_\epsilon^{\kappa}.$ 
    The input corresponds to the first $n$ qubits of ${\sf R}_{\rm inp}$.
    
    \item The remaining $(d+1)-(\kappa+1)=d-\kappa$ layers of nodes are divided into $\lfloor\frac{d-\kappa}{\kappa+b+1}\rfloor$ layers of subtrees, each of $\kappa+b+1$ layers of nodes. The roots of these subtrees collectively form the root register ${\sf R}_{\rm roots}$ (the black vertices in Fig.~\ref{fig:register_ancillar_binary}). The number of these subtrees, i.e. the size of ${\sf R}_{\rm roots}$, is 
    \[|{\sf R}_{\rm roots}|=\sum_{j=1}^{\lfloor\frac{d-\kappa}{\kappa+b+1}\rfloor}2^{d-j(\kappa+b+1)+1} = \Theta\left(\frac{m}{n\log(n)}\right).\]

    \item For each $z\in {\sf R}_{\rm roots}$, the first $\kappa+1$ layers (including the root $z$ itself) in subtree $\Tree_z^{\kappa+b}$ are assigned to the copy register ${\sf R}_{\rm copy}$ (north-east lines part in Fig.~\ref{fig:register_ancillar_binary}), the next one layer is assigned to the target register ${\sf R}_{\rm targ}$ (red part in Fig. \ref{fig:register_ancillar_binary}) and the last $b-1$ layers are assigned to the auxiliary register ${\sf R}_{\rm aux}$ (white part in Fig. \ref{fig:register_ancillar_binary}). The sizes of these three parts in each subtree are 
    \begin{equation}\label{eq:size_threepart}
        \sum_{i=0}^\kappa 2^i = \left(2^{\kappa+1}-1\right) = \Theta(n),\quad  2^{\kappa+1} = \Theta(n),\quad \text{and}\quad \sum_{i=\kappa+2}^{\kappa+b} 2^i = 2^{\kappa+1}(2^{b}-2) = \Theta(n\log(n)),
    \end{equation}
respectively. Putting all subtrees together, we multiply these sizes by $|{\sf R}_{\rm roots}|$, and get the sizes of the registers ${\sf R}_{\rm copy}$, ${\sf R}_{\rm targ}$, and ${\sf R}_{\rm aux}$ 
\begin{align*}
   \lambda_{copy} = \Theta\left(\frac{m}{\log (n)}\right),\quad \lambda_{targ} =  \Theta\left(\frac{m}{\log (n)}\right), \quad
   \lambda_{aux} = \Theta\left(m\right),
\end{align*}
respectively.
\end{enumerate} 

\begin{figure}[]
    \centering
    \begin{tikzpicture}
       \filldraw[fill=green!20] (0,0)--(0.3,-0.75)--(-0.3,-0.75)--cycle;
       \draw[dotted] (-3.6,0)--(0,0) (-2,-0.75)--(0,-0.75) (-3.6,-3.5)--(0,-3.5) (-2,-1)--(-0.5,-1) (-2,-2)--(-0.5,-2) (-2,-2.5)--(0,-2.5);
       \draw (-4.3, -1.9)node{\scriptsize $d+1$ layers} (-2.5,-0.4) node{\scriptsize $\kappa+1$ layers} (-2.5,-1.5) node{\scriptsize $\kappa+b+1$ layers} (-2.5,-3) node{\scriptsize $\kappa+b+1$ layers} (-1.5,-2.1) node{\scriptsize $\vdots$};
       \draw[->] (-3.5, -1.75)--(-3.5, 0); 
       \draw [->] (-3.5, -1.75)--(-3.5, -3.5);
       \draw [->](-1.5,-0.375)--(-1.5,0);
       \draw [->] (-1.5,-0.375)--(-1.5,-0.75);
       
      \draw [->](-1.5,-1.375)--(-1.5,-1);
       \draw [->] (-1.5,-1.375)--(-1.5,-2);
       \draw [->](-1.5,-2.875)--(-1.5,-2.5);
       \draw [->] (-1.5,-2.875)--(-1.5,-3.5);

       \filldraw[fill=red!20] (-0.7,-1.5)--(-0.3,-1.5)--(-0.2,-1.75)--(-0.8,-1.75) -- cycle (0.7,-1.5)--(0.3,-1.5)--(0.2,-1.75)--(0.8,-1.75) -- cycle;
        \filldraw[fill=blue!20] (-0.5,-1)--(-0.75,-1.625)--(-0.25,-1.625)--cycle (0.5,-1)--(0.25,-1.625)--(0.75,-1.625)--cycle;
        \draw (-0.5,-1)--(-0.9,-2)--(-0.1,-2)--cycle (0.5,-1)--(0.1,-2)--(0.9,-2)--cycle;
        \filldraw[fill=red!20] (-1.2,-3)--(-0.8,-3)--(-0.7,-3.25)--(-1.3,-3.25)--cycle  (1.2,-3)--(0.8,-3)--(0.7,-3.25)--(1.3,-3.25)--cycle  (-0.2,-3)--(0.2,-3) --(0.3,-3.25)--(-0.3,-3.25) --cycle;
        \filldraw[fill=blue!20] (-1,-2.5)--(-1.25,-3.125)--(-0.75,-3.125)--cycle (1,-2.5)--(1.25,-3.125)--(0.75,-3.125)--cycle (0,-2.5)-- (-0.25,-3.125)--(0.25,-3.125)--cycle;
        
        \draw (-1,-2.5)--(-1.4,-3.5)--(-0.6,-3.5)--cycle (1,-2.5)--(1.4,-3.5)--(0.6,-3.5)--cycle (0,-2.5)-- (-0.4,-3.5)--(0.4,-3.5)--cycle;
        \draw (0,-2.3) node{\scriptsize $\cdots$} (0,-1.5) node{\scriptsize $\cdots$} (-0.5,-3) node{\scriptsize $\cdots$} (0.5,-3) node{\scriptsize $\cdots$};
        \draw [->] (1,-2.75)--(1.9,-2.75);
        \draw [->] (1,-3.17)--(1.9,-3.17);
        \draw [->] (1,-3.45)--(1.9,-3.45);
        \draw [fill=green!20] (1.5,0)--(1.5,-0.3)--(1.8,-0.3)--(1.8,-0)--cycle;
        \draw[fill=blue!20] (1.5,-0.5)--(1.5,-0.8)--(1.8,-0.8)--(1.8,-0.5)--cycle;
        \draw[fill=red!20] (1.5,-1)--(1.5,-1.3)--(1.8,-1.3)--(1.8,-1)--cycle;
        \draw (1.5,-1.5)--(1.5,-1.8)--(1.8,-1.8)--(1.8,-1.5)--cycle;
        \draw (4,-2.75) node{\scriptsize $2^{\kappa+1}-1=O(n)$ ~qubits, $\kappa+1$ layers} (3.5,-3.14) node{\scriptsize $2^{\kappa+1}=O(n)$ ~qubits, $1$ layer} (4.6,-3.5)node{\scriptsize $2^{\kappa+1}(2^b-2)=O(n\log(n))$ ~qubits, $b-1$ layers} (2.7,-0.15) node{\scriptsize input register}  (2.7,-0.65) node{\scriptsize copy register} (2.75,-1.15) node{\scriptsize target register}(2.95,-1.65) node{\scriptsize auxiliary register};
        \draw (1.3,0.2)--(4,0.2)--(4,-2)--(1.3,-2)--cycle;
        \draw [fill=black,draw=black] (-0.5,-1.13) circle (0.03) (0.5,-1.13) circle (0.03) (-1,-2.63) circle (0.03) (0,-2.63) circle (0.03) (1,-2.63) circle (0.03);
    \end{tikzpicture}
    \caption{Input, copy, target and auxiliary registers in a binary tree with $d+1$ layers of qubits. The $n$ input qubits are assigned to a sub-tree with $\kappa+1$ layers of qubits at the top of the tree (green). The $m$ ancillary qubits are divided into $O\left(\frac{m}{n\log(n)}\right)$ binary sub-trees with $\kappa+b+1$ layers of qubits, each further divided into three parts: (i) the first $\kappa+1$ layers are the copy register (north-east lines), and have size $O(n)$, (ii) a single layer of target register (red) qubits, of size $O(n)$, and (iii) $b-1$ layers of the auxiliary register (white), of size $O(n\log(n))$. The values of $d$, $\kappa$ and $b$ are given in the main text. Note that every target register qubit has $\tau:=2^b-2$ auxiliary register descendants. ${\sf R}_{\rm roots}$ consists of the root nodes of all blue sub-trees (black vertices).}
    \label{fig:register_ancillar_binary}
\end{figure}
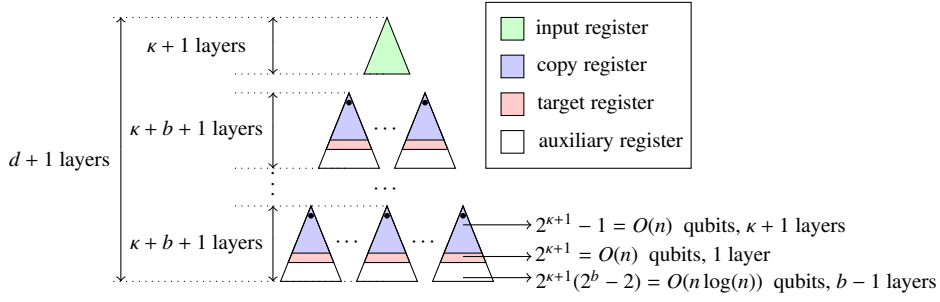

The formal definitions of these registers are given as follows, using the label notation as in Appendix \ref{sec:diag_without_ancilla_binarytree}. 
\begin{align}
\label{eq:subtree-root-nodes}
 & {\sf R}_{\rm roots}\defeq\bigcup_{j=1}^{\left\lfloor\frac{d-\kappa}{\kappa+b+1}\right\rfloor}\B^{d-j(\kappa+b+1)+1},\\
    &{\sf R}_{\rm copy}\defeq \bigcup_{\scriptsize z\in {\sf R}_{\rm roots}}\Tree_z^\kappa,\nonumber\\
    & {\sf R}_{\rm targ}\defeq \bigcup_{\scriptsize z\in {\sf R}_{\rm roots}}\left\{zy: y\in \B^{\kappa+1}\right\}, \nonumber\\
    &{\sf R}_{\rm aux}\defeq \bigcup_{\scriptsize z\in {\sf R}_{\rm roots}}\Big\{zy: y\in \bigcup_{j=\kappa+2}^{\kappa+b}\B^{j}\Big\}= \bigcup_{z\in {\sf R}_{\rm targ}}(\Tree_z^{b-1}-\{z\}).\nonumber
\end{align}

We take
\begin{align*}
    p&=\log(\lambda_{targ}) = \log m - \log \log n \pm O(1), \qquad\tau=2^b-2 = \Theta(\log(n)),
\end{align*}
in specifying $x_{pre} = x_1\ldots x_{n-p}$, $x_{suf}=x_{n-p+1}\ldots x_n$, and $x_{aux}=x_1\ldots x_\tau$.

\paragraph{Implementation of Suffix Copy and Prefix Copy Stages}
\begin{lemma}\label{lem:copy_binary_tree}
A unitary operation  realizing the following transformation
\begin{equation}\label{eq:copy_binary_tree}
   \ket{x'}\ket{0^{n't}}\xrightarrow{U_{copy}^{binarytree}}\ket{x'}\underbrace{\ket{x'\cdots x'}}_{t\text{~copies~of~} x'},\forall x'\in\Bn, {\rm where }~ n'\le n, t=|{\sf R}_{\rm roots}|
\end{equation}
can be implemented by a circuit of depth $O(n'\kappa^2+n'\log(m)\kappa)$ under $\Tree_{n+m}(2)$ constraint, where input $x'$ is in ${\sf R}_{\rm inp}$ and every copy of $x'$ is in a subtree in Fig. \ref{fig:register_ancillar_binary}.
\end{lemma}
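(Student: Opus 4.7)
The plan is to construct an explicit CNOT circuit exploiting the layered structure of ${\sf R}_{\rm roots}$ given by Eq.~\eqref{eq:subtree-root-nodes}. Writing the subtree-root layers as $L_1, L_2, \ldots, L_s$ with $L_j = \B^{d-j(\kappa+b+1)+1}$, I will use two key structural facts: the layer $L_s$ lies just below ${\sf R}_{\rm inp}$ at distance $O(\kappa)$, any two consecutive layers $L_{j+1}$ and $L_j$ are separated by $\kappa+b+1 = O(\kappa)$ tree edges with the ratio $|L_j|/|L_{j+1}| = 2^{\kappa+b+1}$, and the total number of layers is $s = O(\log(m)/\kappa)$ (since $d = O(\log(n+m))$ and $\kappa+b+1 = O(\kappa)$).

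I would then realize $U_{copy}^{binarytree}$ in two stages. Stage~1 (initial copy) populates the copy register $\Tree_z^\kappa$ of every subtree rooted at $z \in L_s$. Each target subtree lies within tree-distance $O(\kappa)$ of ${\sf R}_{\rm inp}$, so for every one of the $n'$ bits of $x'$ a CNOT over distance $O(\kappa)$ suffices (Lemma \ref{lem:cnot_path_constraint}). Serializing bit-by-bit through the shared root of ${\sf R}_{\rm inp}$ while running disjoint sub-branches in parallel yields depth $O(n'\kappa^2)$ for this stage. Stage~2 (downward propagation) iterates over $j = s-1, s-2, \ldots, 1$: the copies present in the subtree copy registers rooted at $L_{j+1}$ serve as sources to seed fresh copies in the $2^{\kappa+b+1}$-times more numerous copy registers rooted at $L_j$. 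Because the source subtrees at a fixed layer are pairwise disjoint, one layer transition can be carried out in parallel across all source subtrees, and within each source subtree the fanout of $x'$ to its $2^{\kappa+b+1}$ descendant copy registers, each at distance $O(\kappa)$, can be implemented via an extension of the path-copy construction (Lemma \ref{lem:copy_path}) adapted to a binary subtree, in depth $O(n'\kappa^2)$. Summing over the $s = O(\log(m)/\kappa)$ layer transitions gives Stage~2 depth $O(n'\log(m)\kappa)$.

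Adding the two stages yields the claimed $O(n'\kappa^2 + n'\log(m)\kappa)$ depth bound. The hard part is establishing the per-layer transition depth of $O(n'\kappa^2)$: because the $n'$ bits of $x'$ sit at scattered positions in the source copy register and must all leave through the root $z$ of that source subtree in order to reach the $2^{\kappa+b+1}$ destination subtrees, a naive scheduling would serialize through $z$ with depth $O(n'\kappa)$ per destination and blow up the total. The required scheduling combines (i) a pipelined bit-by-bit flow through $z$ analogous to Fig.~\ref{fig:my_copypath}, with (ii) the binary-tree fanout of each bit once it leaves $z$, so that different bits occupy different tree levels simultaneously; verifying that the resulting schedule is collision-free on the constraint graph and attains depth $O(n'\kappa^2)$ per layer transition is the central calculation of the proof.
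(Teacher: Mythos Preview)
Your high-level structure matches the paper exactly: propagate copies from ${\sf R}_{\rm inp}$ to the top layer $L_s$ of subtree roots, then from $L_{j+1}$ to $L_j$ for $j=s-1,\ldots,1$, with each transition costing $O(n'\kappa^2)$ depth and $s=O(\log(m)/\kappa)$ transitions total.

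Where you diverge is in the per-layer mechanism, and here you over-complicate. The paper does \emph{not} pipeline bits through a single root; it uses a pure doubling fan-out. Concretely, to make $2^{\kappa+1}$ copies below the input register (Eq.~\eqref{eq:copy_tree_2^{k+1}}), the paper runs $\kappa+2$ sequential steps: Step~0 copies $x'$ into the single subtree $\Tree_{0^{\kappa+1}}^\kappa$; Step~1 copies it into $\Tree_{10^\kappa}^\kappa$; and Step~$j$ (for $2\le j\le \kappa+1$) copies, for every $z\in\B^{j-1}$ in parallel, from $\Tree_{z00^{\kappa-j+1}}^\kappa$ to its sibling $\Tree_{z10^{\kappa-j+1}}^\kappa$. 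Each step is just $n'$ CNOTs serialized over distance $O(\kappa)$, so depth $O(n'\kappa)$; the parallel instances in Step~$j$ touch disjoint subtrees. After $O(\kappa)$ steps all $2^{\kappa+1}$ destinations are populated, giving $O(n'\kappa^2)$. The same doubling is repeated (with $\kappa$ replaced by $\kappa'=\kappa+b+1=O(\kappa)$) for each subsequent layer transition.

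Your ``hard part''---the bottleneck at the source-subtree root and the hybrid pipelined/fan-out schedule you propose to defeat it---simply does not arise in this scheme, because no step ever sends more than one subtree's worth of data through any vertex: every step is a single sibling-to-sibling copy. So while your plan would likely work, the scheduling argument you flag as central is unnecessary; the doubling strategy gives the bound directly.
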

\begin{proof}
Let $\kappa':=\kappa+b+1$; note that $\kappa < \kappa' < 2\kappa$. We first implement the following unitary transformation
\begin{equation}\label{eq:copy_tree_2^{k+1}}
    \ket{x'0^{2^{\kappa+1}-n'-1}}_{\Tree_\epsilon^\kappa}\bigotimes_{\scriptsize 
    z\in\B^{\kappa+1}}\ket{0^{2^{\kappa'+1}-1}}_{\Tree_z^{\kappa'}}\to\ket{x'0^{2^{\kappa +1}-n'-1}}_{\Tree_\epsilon^\kappa}\bigotimes_{\scriptsize 
    z\in\B^{\kappa+1}}\ket{x'0^{2^{\kappa'+1}-n'-1}}_{\Tree_z^{\kappa'}},\forall x'\in\B^{n'},
\end{equation}
which makes $2^{\kappa+1}$ copies of $x$ in all $2^{\kappa+1}$ subtrees directly below the input register. It can be implemented in $\kappa+2$ steps.
\begin{enumerate}
    \item Step 0: Make 1 copy of $x'$ from the top subtree $\Tree_{\epsilon }^\kappa$ to the first (i.e. leftmost) subtree under it (i.e. $\Tree_{0^{\kappa+1}}^{\kappa'}$), by applying $n$ CNOT gates with control and target qubits $O(\kappa)$-away. By Lemma~\ref{lem:cnot_path_constraint}, this step can be implemented in depth $O(n')\cdot O(\kappa)=O(n'\kappa)$.
    
    \item Step 1: Make 1 copy of $x'$ from $\Tree_{\epsilon }^\kappa$ to $\Tree_{10^{\kappa}}^\kappa$. This step can similarly be implemented in depth $O(n'\kappa)$.
    
    \item Step $j$ for $j = 2, 3, \cdots, \kappa+1$: For all $z\in\B^{j-1}$, make 1 copy of $x'$ from $\Tree_{z00^{\kappa-j+1}}^{\kappa}$ to $\Tree_{z10^{\kappa-j+1}}^{\kappa}$. Such a copy can be realized in depth $O(n'\kappa)$ by Lemma \ref{lem:cnot_path_constraint}. Note that for different $z$, the copying processes are on disjoint connected components of the binary tree, thus these $2^{j-1}$ copies can be implemented in parallel. Therefore for each $j$, this step can be implemented in depth $O(n'\kappa)$.
\end{enumerate}
The total circuit depth required to implement Eq. \eqref{eq:copy_tree_2^{k+1}} is $O(n'\kappa) + \kappa\cdot O(n'\kappa) =  O(n'\kappa^2)$.

Eq.~\eqref{eq:copy_binary_tree} can be implemented by using  Eq.~\eqref{eq:copy_tree_2^{k+1}}  repeatedly. 
For every newly copied $x'$ in a $\kappa'$-depth binary sub-tree, we repeat the construction to make further copies of $x'$ in its adjacent binary sub-trees, which requires depth $O(n'(\kappa')^2)$, and so on.
We repeat this process $s:=\lfloor\frac{d-\kappa}{\kappa'+1}\rfloor=O(\log(m)/\kappa)$ times and make $t$ copies of $x'$. The total circuit depth is  $O(n'\kappa^2)+(s-1)O(n'(\kappa')^2)=O(n'\kappa^2+n'\log(m)\kappa)$.
\end{proof}

\begin{lemma}[]\label{lem:sufcopy_tree}
$U_{SufCopy}$ and $U_{PreCopy}''$ can each be implemented by a circuit of depth $O(n\log(n)\log(m))$ under  $\Tree_{n+m}(2)$ constraint.
\end{lemma}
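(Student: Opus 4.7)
The plan is to reduce both transformations to Lemma~\ref{lem:copy_binary_tree} (supplemented, for $U_{SufCopy}$, by an intra-subtree replication step), exploiting the fact that $\kappa = \Theta(\log n)$ and that $\log m \ge \log n$ since $m \ge \Omega(n)$.

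For $U_{PreCopy}''$ the prefix $x_{pre}$ has length $n-p \le n$, and each $\Tree_z^{\kappa}$ (of size $2^{\kappa+1}-1 = \Theta(n)$) comfortably holds exactly one copy. Moreover the number of subtrees $|{\sf R}_{\rm roots}| = \Theta(m/(n\log n))$ is precisely the number $\Theta(\lambda_{copy}/(n-p))$ of prefix copies required. Hence I invoke Lemma~\ref{lem:copy_binary_tree} directly with $n' = n-p$ and $t = |{\sf R}_{\rm roots}|$, obtaining
\[
O\bigl((n-p)\kappa^{2} + (n-p)\log(m)\,\kappa\bigr) \;=\; O\bigl(n\log^{2}n + n\log(m)\log n\bigr) \;=\; O\bigl(n\log(n)\log(m)\bigr),
\]
where the last equality uses $\log m \ge \log n$.

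For $U_{SufCopy}$ we need $\Theta(n/p)$ copies of the $p$-bit suffix $x_{suf}$ inside each of the $|{\sf R}_{\rm roots}|$ subtrees, which is more than Lemma~\ref{lem:copy_binary_tree} alone delivers. I therefore use two stages. \textbf{Stage A:} apply Lemma~\ref{lem:copy_binary_tree} with $n'=p$ to seed one copy of $x_{suf}$ near the root of every $\Tree_z^{\kappa}$; this costs depth $O(p\kappa^{2} + p\log(m)\,\kappa)$. \textbf{Stage B:} in parallel across all subtrees, spread the seeded copy to $\Theta(n/p)$ copies filling each $\Tree_z^{\kappa}$. Because each $\Tree_z^{\kappa}$ is itself a complete binary tree of depth $\kappa$, I re-run the doubling argument of Lemma~\ref{lem:copy_binary_tree} \emph{inside} a subtree: each round pipelines a $p$-qubit block of CNOTs along paths of length $O(\kappa)$, costing depth $O(p\kappa)$, and $O(\log(n/p)) = O(\log n)$ rounds suffice to reach $\Theta(n/p)$ copies. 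Since distinct subtrees are vertex-disjoint connected components of the binary tree, these intra-subtree replications execute in parallel, contributing total depth $O(p\kappa \log n) = O(p\log^{2}n)$. Summing the two stages and using $p \le n$ together with $\log m \ge \log n$,
\[
O\bigl(p\kappa^{2} + p\log(m)\,\kappa + p\log^{2}n\bigr) \;=\; O\bigl(p\log(n)\log(m)\bigr) \;\le\; O\bigl(n\log(n)\log(m)\bigr).
\]

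The main obstacle will be the bookkeeping in Stage B: I must verify that inside a depth-$\kappa$ binary tree seeded with one $p$-bit copy near the root, one can indeed lay down $\Theta(n/p)$ further copies by iterated pipelined doubling while (i) keeping every round's depth at $O(p\kappa)$, (ii) respecting the binary-tree adjacency throughout, and (iii) using disjoint CNOT supports so that the $|{\sf R}_{\rm roots}|$ copies of the procedure really can be run simultaneously in different subtrees. This should mirror Steps~$0$ through $\kappa+1$ in the proof of Lemma~\ref{lem:copy_binary_tree}, restated recursively inside a single $\Tree_z^{\kappa}$; once that bookkeeping is checked, both claims follow from the prior lemma essentially verbatim.
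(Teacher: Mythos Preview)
Your argument for $U_{PreCopy}''$ matches the paper exactly: a single invocation of Lemma~\ref{lem:copy_binary_tree} with $n'=n-p$ gives the bound.

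For $U_{SufCopy}$ your proof is correct but does more work than the paper. The paper observes that in the binary-tree specialization, only \emph{one} copy of $x_{suf}$ per subtree is actually required (placed in the first $p$ qubits of $\Tree_z^{\kappa}$), not $\Theta(n/p)$ copies; see the displayed transformation in the paper's proof and note that the subsequent Gray Initial stage (Lemma~\ref{lem:grayinitial_tree}) uses a generic $O(n^2)$-depth CNOT circuit inside each $\Tree_{z_q}^{\kappa+1}$, which only needs a single local copy of $x_{suf}$ to generate all the $\langle t_k, x_{suf}\rangle$. Hence the paper simply invokes Lemma~\ref{lem:copy_binary_tree} once with $n'=p$ and bounds the depth by $O(p\kappa^2+p\log(m)\kappa)=O(\log^2(m)\log n)\le O(n\log(n)\log(m))$, using $p\le\log m$ and $m=O(2^n)$. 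Your Stage~B is thus unnecessary: you were (reasonably) following the general framework's statement that $O(\lambda_{copy}/p)$ suffix copies are made, but the binary-tree instantiation quietly relaxes this. Your two-stage argument still yields the same final bound, so nothing is wrong; it is just that Stage~A alone already proves the lemma once you realize one copy per subtree suffices.
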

\begin{proof}
$U_{SufCopy}$ makes $|{\sf R}_{\rm roots}|
$ copies of $x_{suf}$, and can be represented as
\begin{equation*}
      \ket{x0^{2^{\kappa+1}-n-1}}_{\Tree_\epsilon^\kappa}\bigotimes_{\scriptsize z\in {\sf R}_{\rm roots}}\ket{0^{2^{\kappa+1}-1}}_{\Tree_z^{\kappa}}
   \to   \ket{x0^{2^{\kappa+1}-n-1}}_{\Tree_\epsilon^\kappa}\bigotimes_{\scriptsize z\in {\sf R}_{\rm roots}}\ket{x_{suf}0^{2^{\kappa+1}-p-1}}_{\Tree_z^{\kappa}},\forall x\in\Bn.
\end{equation*}
By Lemma~\ref{lem:copy_binary_tree} (with $n' =p$ and $\kappa = O(\log(n))$), it can be implemented by a circuit of depth 
\[O\Big(p\log^2(n)+p\log(m)\log(n)\Big) = O(\log^2(m)\log(n)) = O(\log(m) n \log(n)),\] 
where we used $p \le \log(m)$ and the assumption $m=O(2^n)$. The argument for $U_{PreCopy}''$ is similar though now the parameter $n'$ in Lemma~\ref{lem:copy_binary_tree} is $n-p$, and thus the depth upper bound is $O(n \log(m) \log(n))$. 
\end{proof}


\begin{lemma}[]\label{lem:precopy3_tree}
$U_{PreCopy}'''$ (Eq. \eqref{eq:precopy3_graph}) can be implemented by a quantum circuit of depth $O(n\log^2(n))$ under $\Tree_{n+m}(2)$ constraint.
\end{lemma}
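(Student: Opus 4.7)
The plan is to parallelize over the depth-$(\kappa+b+1)$ subtrees $\Tree_z^{\kappa+b}$, $z \in {\sf R}_{\rm roots}$, and thus reduce the question to the depth inside a single such subtree. By Eq.~\eqref{eq:subtree-root-nodes} and Fig.~\ref{fig:register_ancillar_binary} these subtrees sit on pairwise disjoint vertex sets, so the whole operation $U_{PreCopy}'''$ decomposes into $|{\sf R}_{\rm roots}|$ simultaneous independent sub-operations, one per subtree; any depth bound I can prove inside a single subtree is therefore also a bound for $U_{PreCopy}'''$ as a whole.

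Inside a single such subtree, after $U_{PreCopy}''$ the copy register (the top $\kappa+1$ layers) contains $\ket{x_{pre}}$, so $\ket{x_{aux}}$ sits in a fixed block of $\tau = 2^b - 2 = \Theta(\log n)$ qubits near the top; the target register (the single middle layer of $2^{\kappa+1} = \Theta(n)$ qubits) holds $\ket{f_1}$ and must not be disturbed; and the $b-1$ auxiliary layers below each target qubit are in the all-zero state. The task is to write a fresh copy of $\ket{x_{aux}}$ into the $\tau$ auxiliary descendants of each target qubit. The straightforward construction is to process the $\Theta(n)$ target qubits sequentially and, for each of them, copy the $\tau$ bits of $x_{aux}$ across tree-distance $O(\kappa + b) = O(\log n)$ using a pipelined nearest-neighbor \textsf{CNOT} pattern in the spirit of Lemma~\ref{lem:copy_path} and Fig.~\ref{fig:my_copypath}. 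Each per-target-qubit copy then contributes depth $O(\tau + \kappa + b) = O(\log n)$, and summing over the $\Theta(n)$ target qubits in the subtree gives the claimed bound of $O(n \log^2 n)$ (one factor of $\log n$ is the depth per copy and the other is absorbed into a loose accounting of the distance-overhead when routing through the binary tree rather than a straight path).

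The key technical point that must be checked is that a \textsf{CNOT} chain is allowed to pass through target-register qubits without permanently disturbing them. This is immediate from the proof of Lemma~\ref{lem:cnot_path_constraint} (see Fig.~\ref{fig:cnot_path}): the explicit gate pattern there restores every intermediate qubit to its original state regardless of what that state was, so the $\ket{f_1}$ stored in the target register is left invariant. A secondary point is that the $\tau$ \textsf{CNOT} chains forming a single per-target-qubit copy all emanate from the same $\tau$-qubit block in the copy register and head to $\tau$ nearby destinations, so they can be pipelined without stalling exactly as in the proof of Lemma~\ref{lem:copy_path}. This is the main obstacle: once it is checked, no further subtlety appears. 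A more aggressive doubling/broadcast scheme across target qubits could in principle shave the depth down further, but the simple sequential construction above already meets the stated $O(n\log^2 n)$ depth, which is not the bottleneck of the overall $\Lambda_n$ circuit.
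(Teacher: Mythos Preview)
Your proposal is correct and takes essentially the same approach as the paper: parallelize over the disjoint subtrees $\Tree_z^{\kappa+b}$, and within each one make the $2^{\kappa+1}=\Theta(n)$ copies of $x_{aux}$ sequentially, at cost $O(\tau)\cdot O(\kappa+b)=O(\log^2 n)$ per copy via Lemma~\ref{lem:cnot_path_constraint}. Your pipelining discussion is superfluous and its arithmetic slightly off---$\Theta(n)$ copies at $O(\log n)$ each would give $O(n\log n)$, not $O(n\log^2 n)$---but since you only need to meet the stated $O(n\log^2 n)$ bound, the non-pipelined crude count (which is exactly what the paper does) already suffices.
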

\begin{proof}

For each $z\in  {\sf R}_{\rm roots}$ (Eq.~\eqref{eq:subtree-root-nodes}), $U_{PreCopy}'''$ makes $2^{\kappa+1}$ copies of $x_{aux}=x_1\ldots x_\tau$ (with $\tau=2^b-2)$ from the ${\sf R}_{\rm copy}$ to ${\sf R}_{\rm aux}$ parts of $\Tree_z^{\kappa+b}$ (i.e., from blue to white portions of each sub-tree in Fig.~\ref{fig:register_ancillar_binary}). As the distance between any two qubits in $\Tree_z^{\kappa+b}$ is $O(\log(n))$, by Lemma~\ref{lem:cnot_path_constraint}, the $2^{\kappa+1}$ copies
can be implemented in depth $O(2^{b}-2)\cdot O(\log(n))\cdot 2^{\kappa+1}=O(n\log^2 (n))$. Since all the binary sub-trees are disjoint, they can be implemented in parallel, and thus $U_{PreCopy}'''$ has circuit depth $O(n\log^2(n))$.
\end{proof}

\begin{lemma}[]\label{lem:precopy_tree}
$U_{PreCopy}$ can be implemented by a quantum circuit of depth $O(n\log(m)\log(n))$ under $\Tree_{n+m}(2)$ constraint.
\end{lemma}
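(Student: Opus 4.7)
The plan is to obtain this bound as a direct corollary of the three preceding lemmas, invoking the general decomposition of $U_{PreCopy}$ given in Lemma~\ref{lem:precopy_graph}. Recall that Lemma~\ref{lem:precopy_graph} shows that, in any graph, $U_{PreCopy}$ can be realized by first running $U_{SufCopy}^{\dagger}$ to uncompute the copies of $\ket{x_{suf}}$ in $\mathsf{R}_{\rm copy}$, then $U''_{PreCopy}$ to fill $\mathsf{R}_{\rm copy}$ with copies of $\ket{x_{pre}}$, and finally $U'''_{PreCopy}$ to populate $\mathsf{R}_{\rm aux}$ with copies of $\ket{x_{aux}}$. Consequently,
\[
\mathcal{D}(U_{PreCopy})\ \le\ \mathcal{D}(U_{SufCopy})\ +\ \mathcal{D}(U''_{PreCopy})\ +\ \mathcal{D}(U'''_{PreCopy}).
\]

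The first step is to insert the bounds we have already established for the binary tree. Lemma~\ref{lem:sufcopy_tree} gives $\mathcal{D}(U_{SufCopy})=O(n\log(n)\log(m))$ and $\mathcal{D}(U''_{PreCopy})=O(n\log(n)\log(m))$ under $\Tree_{n+m}(2)$ constraint, while Lemma~\ref{lem:precopy3_tree} gives $\mathcal{D}(U'''_{PreCopy})=O(n\log^2(n))$. Summing and using $m\ge \Omega(n)$ so that $\log(n)=O(\log(m))$, the third term is absorbed into the first two, and the total depth is $O(n\log(m)\log(n))$, as claimed.

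There is no real obstacle here: the construction of $\mathsf{R}_{\rm inp}, \mathsf{R}_{\rm copy}, \mathsf{R}_{\rm targ}, \mathsf{R}_{\rm aux}$ and the choice of parameters $p, \tau$ have already been fixed in the construction of Appendix~\ref{sec:diag_with_ancilla_binarytree}, and the three constituent operators have already been analyzed under exactly this register layout. The only subtlety worth double-checking is the bookkeeping relating $\log n$ and $\log m$, and the fact that the three lemmas invoked all use the same assignment of qubits to $\mathsf{R}_{\rm copy}$ and $\mathsf{R}_{\rm aux}$ (the blue and white portions of each depth-$(b-1)$ subtree in Fig.~\ref{fig:register_ancillar_binary}), so that the compositions in Eqs.~\eqref{eq:precopy2_graph}--\eqref{eq:precopy3_graph} really do yield $U_{PreCopy}$ as defined by Eq.~\eqref{eq:precopy_graph}.
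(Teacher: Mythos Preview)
Your proof is correct and follows essentially the same approach as the paper, which simply cites Lemmas~\ref{lem:precopy_graph}, \ref{lem:sufcopy_tree}, and \ref{lem:precopy3_tree}. Your added remark that $m\ge\Omega(n)$ (indeed $m\ge 3n$ in this framework) is what lets the $O(n\log^2 n)$ term be absorbed into $O(n\log(m)\log(n))$ is exactly the implicit step the paper leaves to the reader.
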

\begin{proof}
Follows from Lemmas~\ref{lem:precopy_graph}, \ref{lem:sufcopy_tree} and \ref{lem:precopy3_tree}.
\end{proof}

\paragraph{Implementation of Gray Initial Stage}


\begin{lemma}[]\label{lem:grayinitial_tree}
$U_{GrayInit}$ can be implemented by a CNOT circuit of depth $O(n^2)$ under $\Tree_{n+m}(2)$ constraint.
\end{lemma}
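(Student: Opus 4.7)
The plan is to exploit the subtree decomposition of the constraint graph. Observe that $U_{GrayInit}$ acts only on registers ${\sf R}_{\rm copy}$ and ${\sf R}_{\rm targ}$, so the auxiliary register (still zero at this stage) and the input register can be ignored. For each $z\in {\sf R}_{\rm roots}$, the local copy qubits $\Tree_z^{\kappa}$ and the local target qubits $\{zy:y\in\{0,1\}^{\kappa+1}\}$ together form the subtree $\Tree_z^{\kappa+1}$, which has $2^{\kappa+2}-1=\Theta(n)$ vertices. Since these subtrees are vertex-disjoint across different roots $z$, the full operation decomposes into independent operations on each $\Tree_z^{\kappa+1}$.

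Within each such subtree, $U_{SufCopy}$ has already placed $\lfloor\lambda_{copy}/(p|{\sf R}_{\rm roots}|)\rfloor=\Theta(n/p)$ copies of $\ket{x_{suf}}$ among the copy qubits in $\Tree_z^{\kappa}$. The $2^{\kappa+1}=\Theta(n)$ target qubits at depth $\kappa+1$ must then be written with the inner products $\langle t_k,x_{suf}\rangle$ for the $2^{\kappa+1}$ suffix strings $t_k\in\{0,1\}^p$ assigned to this subtree. Since the target qubits start in $\ket{0}$ and end holding $\mathbb{F}_2$-linear functions of $x_{suf}$, while the copy qubits are preserved, the overall map restricted to $\Tree_z^{\kappa+1}$ extends in an obvious way (by, say, the identity on the copy qubits) to an invertible $\mathbb{F}_2$-linear transformation on $\Theta(n)$ bits.

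The implementation step is then a direct appeal to Lemma~\ref{lem:cnot_circuit}: the subtree $\Tree_z^{\kappa+1}$ is connected with $\Theta(n)$ vertices, so the desired linear transformation can be realized by a CNOT circuit of depth $O(n^2)$ confined to this subtree (this is the same application as in the proof of Lemma~\ref{lem:Ck_binarytree}). Because the subtrees $\Tree_z^{\kappa+1}$ for different $z\in {\sf R}_{\rm roots}$ are pairwise vertex-disjoint, all of these CNOT circuits can be executed in parallel, yielding total depth $O(n^2)$.

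The main thing to verify is that $U_{GrayInit}$ genuinely decomposes as a tensor product over the per-subtree transformations, i.e., no target qubit needs to access a copy of $x_{suf}$ located in a different subtree. This is built into the preceding $U_{SufCopy}$, which deposits one copy of $\ket{x_{suf}}$ into each of the $|{\sf R}_{\rm roots}|$ subtrees, so each subtree can compute its share of inner products using only local content, which is where the entire argument hinges.
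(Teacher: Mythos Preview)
Your proposal is correct and follows essentially the same approach as the paper: decompose $U_{GrayInit}$ over the vertex-disjoint subtrees $\Tree_z^{\kappa+1}$ for $z\in{\sf R}_{\rm roots}$, observe that on each $\Theta(n)$-vertex subtree the map is an invertible $\mathbb{F}_2$-linear transformation (since a local copy of $x_{suf}$ is already present from $U_{SufCopy}$), apply Lemma~\ref{lem:cnot_circuit} for depth $O(n^2)$ per subtree, and parallelize. The only minor discrepancy is that the paper's binary-tree $U_{SufCopy}$ (Lemma~\ref{lem:sufcopy_tree}) actually places exactly one copy of $x_{suf}$ in each $\Tree_z^{\kappa}$ rather than $\Theta(n/p)$, but this does not affect your argument.
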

\begin{proof}
Recall $U_{GrayInit}$ defined in Eq.~\eqref{eq:gray_initial_graph}, which can be represented as follows. For all $z_{q}\in {\sf R}_{\rm roots}$, 
\begin{equation}\label{eq:grayinitial_tree_version2}
\begin{array}{ll}
     & \ket{x_{suf}0^{2^{\kappa+1}-p-1}}_{\Tree_{z_{q}}^{\kappa}}\ket{0^{2^{\kappa+1}}}_{\scriptsize \left\{z_{q}y:y\in \B^{\kappa+1}\right\}} \\
   \to  & \ket{x_{suf}0^{2^{\kappa+1}-p-1}}_{\Tree_{z_{q}}^{\kappa}}\ket{f_{1,1+(q-1)2^{\kappa+1}} f_{1,2+(q-1)2^{\kappa+1}} \cdots f_{1,q2^{\kappa+1}}}_{\scriptsize \left\{z_{q}y:y\in \B^{\kappa+1}\right\}},
\end{array}
\end{equation}
where $z_q$ is the $q$-th element in $ {\sf R}_{\rm roots}$. 
Eq. \eqref{eq:grayinitial_tree_version2} acts on qubits in 
{$\Tree_{z_q}^{\kappa+1}$ of size $O(n)$} and, by Lemma~\ref{lem:cnot_circuit}, can be realized by a CNOT circuit of depth $O(n^2)$ under binary tree constraint. 
All trees $\Tree_{z_q}^{\kappa+1}$ are disjoint and $U_{GrayInit}$ can therefore be implemented in parallel in depth $O(n^2)$. 
\end{proof}

\paragraph{Implementation of Gray Cycle Stage}
\begin{lemma}[]\label{lem:graycycle_tree}
$U_{GrayCycle}$ (Eq. \eqref{eq:gray_cycle_graph}) can be implemented by a quantum circuit of depth $O(2^{n-p})$ under $\Tree_{n+m}(2)$ constraint.
\end{lemma}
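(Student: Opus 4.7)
The plan is to mirror the path-constraint analysis of Lemma \ref{lem:graycycle_path}, adapting the distance accounting to exploit the tree geometry together with the auxiliary register. First, I would observe that because $\ell_k = 1$ for every $k \in [2^p]$, all target qubits cycle through the same $(n-p,1)$-Gray code, so at each iteration $j$ the operator $U_{Gen}^{(j)}$ reduces to a multi-target CNOT that adds a single prefix bit $x_{h_{1,j+1}}$ into every target qubit. By Lemma \ref{lem:GrayCode} the value $h_{1,j+1} = i$ arises for exactly $2^{n-p-i}$ iterations, and by Lemma \ref{lem:graycycle_graph} it suffices to bound $\sum_j \mathcal{D}(U_{Gen}^{(j)})$.

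Next I would exploit the disjointness of the $|{\sf R}_{\rm roots}|$ subtrees $\Tree_z^{\kappa+b}$ to parallelize. Inside a single such subtree, ${\sf R}_{\rm copy}$ sits in the top $\kappa+1$ layers (holding a fresh copy of $x_{pre}$), a single layer of $2^{\kappa+1}$ target qubits sits below it, and beneath each target qubit hangs a depth-$(b{-}1)$ auxiliary sub-sub-tree of $\tau = 2^b - 2$ qubits storing a copy of $x_{aux} = x_1\cdots x_\tau$. The key design choice, inherited from the path construction, is to lay out the copy of $x_{aux}$ so that $x_{2^{d-1}},\ldots,x_{2^d-1}$ occupy depth-$d$ nodes of the auxiliary sub-sub-tree; then for every $i \le \tau$ each target qubit has its own private copy of $x_i$ at distance $O(\log i)$.

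With this layout, implementing $U_{Gen}^{(j)}$ inside one subtree splits into two cases, directly analogous to Lemma \ref{lem:U(k)}. When $i := h_{1,j+1} \le \tau$, each target qubit's nearby copy of $x_i$ is at distance $O(\log i)$, so by Lemma \ref{lem:cnot_path_constraint} all $2^{\kappa+1}$ CNOTs fire in parallel in depth $O(\log i)$. When $i > \tau$, the bit must be fetched from ${\sf R}_{\rm copy}$ at distance $O(\log n)$ above the target layer; routing it downward by a standard binary-tree fanout (using the auxiliary qubits as scratch and uncomputing them inside the same iteration) realizes the multi-target CNOT in depth $O(\log n)$. Summing against the Gray-code weights and using $\tau = \Theta(\log n)$,
\[
\sum_{j=1}^{2^{n-p}} \mathcal{D}(U_{Gen}^{(j)})
\;\le\; \sum_{i=1}^{\tau} O(\log i)\cdot 2^{n-p-i}
\;+\; \sum_{i=\tau+1}^{n-p} O(\log n)\cdot 2^{n-p-i}
\;=\; O(2^{n-p}),
\]
since $\sum_{i\ge 1}(\log i)/2^i$ converges and the tail is $O(\log n \cdot 2^{n-p-\tau}) = O(\log n / n)\cdot 2^{n-p} = o(2^{n-p})$. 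Feeding this into Lemma \ref{lem:graycycle_graph} together with the trivial $O(2^{n-p})$ budget for the single-qubit rotations $R_j$ and the wraparound step $U_{Gen}^{(2^{n-p})}$ (which is implemented the same way as a generic $U_{Gen}^{(j)}$) gives the claimed $O(2^{n-p})$ depth.

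The main obstacle I foresee is the far case $i > \tau$: I need the binary-tree multi-target CNOT with $2^{\kappa+1}$ targets to run in $O(\log n)$ depth while leaving the ${\sf R}_{\rm copy}$ and ${\sf R}_{\rm aux}$ contents intact, so that every subsequent iteration still sees the correct copies of $x_{pre}$ and $x_{aux}$. I expect the standard recipe — propagate the control bit downward through the subtree into scratch sites, apply a single depth-one layer of CNOTs into the target layer, then reverse the propagation — to achieve both, but verifying the uncomputation and the constant-factor bookkeeping across all $2^{n-p}$ iterations is the delicate step.
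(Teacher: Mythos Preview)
Your overall approach and the near case $i\le\tau$ match the paper exactly: same Gray-code choice $\ell_k\equiv 1$, same disjoint-subtree parallelization, and the same $O(\log i)$ cost coming from the depth-$(b{-}1)$ auxiliary sub-subtree beneath each target. Where you diverge is the far case. The paper does not attempt the $O(\log n)$ fanout you sketch; it simply observes that the required update is an invertible $\mathbb F_2$-linear map on the $O(n)$ qubits of $\Tree_{z_q}^{\kappa+1}$ and invokes Lemma~\ref{lem:cnot_circuit} for a blanket $O(n^2)$ per-iteration bound. That is enough because $\tau\ge 2\log n-2$ makes the tail $\sum_{i>\tau} n^2\cdot 2^{n-p-i}=O(n^2\cdot 2^{-\tau})\cdot 2^{n-p}=O(2^{n-p})$, which sidesteps entirely the uncomputation issue you correctly flag as delicate.

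Your own far-case construction has a geometric slip: the auxiliary qubits sit \emph{below} the target layer, not between the control (which lives in ${\sf R}_{\rm copy}$) and the targets, so they cannot serve as intermediate scratch for a downward fanout from ${\sf R}_{\rm copy}$ as you describe. An $O(\log n)$ implementation does exist---e.g.\ swap $x_i$ to the subtree root in $O(\kappa)$ depth, apply the tree analogue of the V-shape multi-target CNOT of Lemma~\ref{lem:multicontrolcnot} on $\Tree_{z_q}^{\kappa+1}$ and once more on $\Tree_{z_q}^{\kappa}$ to cancel the disturbance to the copy register, then swap back---but that is not the recipe you state, and in any case the paper's $O(n^2)$ shortcut already yields the same final $O(2^{n-p})$.
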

\begin{proof}
First, we construct circuits for $U_{Gen}^{(j)}$ for all $j\in[2^{n-p}]$ in Eqs. \eqref{eq:Ugenj_graph} and \eqref{eq:Ugen2n-p_graph}.
Let $z_q$ be the $q$-th element in $ {\sf R}_{\rm roots}$. For every $i\in [2^{\kappa+1}]$, let $y_i$ denote the $i$-th element in $\B^{\kappa+1}$ in lexicographical order.
$U_{Gen}^{(r)}$ (Eq. \eqref{eq:Ugenj_graph}) can be represented as acting on qubits in $\Tree_{z_q}^{\kappa+b} = \Tree_{z_q}^\kappa \cup  (\bigcup_{y_i\in\{0,1\}^{\kappa+1}} \Tree_{z_qy_i}^{b-1})$ in the following way:
\begin{align}
    &\ket{x_{pre}0^{2^{\kappa+1}-(n-p)-1}}_{\Tree_{z_q}^{\kappa}}\bigotimes_{\scriptsize y_i\in \B^{\kappa+1}}\ket{f_{r,i+(q-1)2^{\kappa+1}}x_{aux}}_{\Tree_{z_q y_i}^{b-1}} \nonumber\\
    \to &\ket{x_{pre}0^{2^{\kappa+1}-(n-p)-1}}_{\Tree_{z_q}^{\kappa}}\bigotimes_{\scriptsize y_i\in \B^{\kappa+1}}\ket{f_{r+1,i+(q-1)2^{\kappa+1}} x_{aux}}_{\Tree_{z_q y_i}^{b-1}},\quad \forall z_q \in  {\sf R}_{\rm roots}. \label{eq:Ugen_small}
\end{align}
For all $y_i\in\B^{\kappa+1}$, Eq.~\eqref{eq:Ugen_small} transforms $\ket{f_{r,i+(q-1)2^{\kappa+1}}}_{z_{q}y_i}$ to $\ket{f_{r+1,i+(q-1)2^{\kappa+1}}}_{z_{q}y_i}=\ket{f_{r,i+(q-1)2^{\kappa+1}}\oplus x_{h_{1,r+1}}}_{z_{q}y_i}$, and can be implemented by a CNOT gate with target qubit $z_{q}y_i$, and control qubit in state $\ket{x_{h_{1,r+1}}}$. We consider two cases:
\begin{enumerate}
    \item Case 1: If $h_{1,r+1}\le 2^b-2$, we use the control qubit  $\ket{x_{h_{1,r+1}}}$ in ${\sf R}_{\rm aux}$ in $\Tree_{z_q y_i}^{b-1}-\{z_q y_i\}$, the subtree under the current target qubit $z_{q}y_i$. The distance between control and target qubits is $O(\log(h_{1,r+1}))$ in binary tree $\Tree_{z_{q}y_i}^{b-1}$.  By Lemma \ref{lem:cnot_path_constraint}, it can be implemented in depth $O(\log(h_{1,r+1}))$ under an $O(\log(h_{1,r+1}))$-long path in binary tree. For all $y_{i}\in\B^{\kappa+1}$, trees $\Tree_{z_{q}y_i}^{b-1}$ are disjoint. Eq.~\eqref{eq:Ugen_small} can thus be implemented in depth $O(\log(h_{1,r+1}))$.
    
    \item Case 2: If $h_{1,r+1}> 2^b-2$, we use the control qubit $\ket{x_{h_{1,r+1}}}$ in ${\sf R}_{\rm copy}$ in $\Tree_{z_{q}}^{\kappa+1}$. Then Eq. \eqref{eq:Ugen_small} can be implemented by a CNOT circuit acting on qubits purely within $\Tree_{z_{q}}^{\kappa+1}$.  By Lemma~\ref{lem:cnot_circuit}, Eq. \eqref{eq:Ugen_small} can be implemented in depth $O(n^2)$. 
\end{enumerate}

For all $z_q\in {\sf R}_{\rm roots}$, $\Tree_{z_{q}y_i}^{b-1}$ are disjoint. Therefore, the circuit depth of Eq. \eqref{eq:Ugenj_graph} is $O(\log(h_{1,r+1}))$ if $h_{1,r+1}\le 2^b-2$, and $O(n^2)$ if $h_{1,r+1}>2^b-2$ under $\Tree_{n+m}(2)$ constraint.
Similar to the circuit of Eq. \eqref{eq:Ugenj_graph}, $U_{Gen}^{(2^{n-p})}$ (Eq. \eqref{eq:Ugen2n-p_graph}) can be implemented by a CNOT circuit of depth $O(n^2)$ according to Lemma \ref{lem:cnot_circuit}. 

We now bound the circuit depth required to implement $U_{GrayCycle}$. From Lemma~\ref{lem:GrayCode}, there are $2^{n-p-i}$ values of $r$ in $[2^{n-p}-1]$ such that $h_{1,r+1}=i$.
Recall that $b=\lceil\log(2\log(n))\rceil$. By Lemma~\ref{lem:graycycle_graph}, the depth of the Gray cycle stage is 
\[\sum_{j=1}^{2^{n-p}}\mathcal{D}(U_{Gen}^{(j)})+2^{n-p}=\sum_{i=1}^{\tau}O(\log (i))2^{n-p-i}+\sum_{i=\tau+1}^{n-p}O(n^2)2^{n-p-i}+2^{n-p}=O(2^{n-p}),\]
where $\tau=2^b-2\ge 2\log(n)-2$.
\end{proof}
\paragraph{Implementation of Inverse Stage} 
\begin{lemma} []\label{lem:inverse_tree}
$U_{Inverse}$ (Eq. \eqref{eq:inverse_graph})
can be implemented by a CNOT circuit of depth $O(\log(m)n\log (n))$ under $\Tree_{n+m}(2)$ constraint.
\end{lemma}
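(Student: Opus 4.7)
The plan is to derive the bound by a direct application of the generic circuit-depth decomposition for $U_{Inverse}$ given in Lemma \ref{lem:inverse_graph}, and then substitute the $\Tree_{n+m}(2)$-specific bounds on each constituent subroutine proved earlier in this subsection.

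First, I would invoke Lemma \ref{lem:inverse_graph} to get
\[
\mathcal{D}(U_{Inverse}) \;\le\; 2\,\mathcal{D}(U_{SufCopy}) + \mathcal{D}(U''_{PreCopy}) + \mathcal{D}(U'''_{PreCopy}) + \mathcal{D}(U_{GrayInit}).
\]
Next, I would plug in the binary-tree bounds: by Lemma \ref{lem:sufcopy_tree}, both $\mathcal{D}(U_{SufCopy})$ and $\mathcal{D}(U''_{PreCopy})$ are $O(n\log(n)\log(m))$; by Lemma \ref{lem:precopy3_tree}, $\mathcal{D}(U'''_{PreCopy}) = O(n\log^2(n))$; and by Lemma \ref{lem:grayinitial_tree}, $\mathcal{D}(U_{GrayInit}) = O(n^2)$. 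Summing yields $O(n\log(n)\log(m) + n\log^2(n) + n^2)$. Under the standing assumption $m \ge \Omega(n)$ we have $\log(m) \ge \Omega(\log(n))$, so the $n\log^2(n)$ term is absorbed into $n\log(n)\log(m)$, matching the claimed bound.

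The main obstacle is justifying that the $O(n^2)$ contribution from the Gray Initial stage does not exceed the advertised $O(\log(m)\, n\log(n))$ bound. This term arises because $U_{GrayInit}$ is implemented by applying Lemma \ref{lem:cnot_circuit} to each of the $\Theta(n)$-sized subtrees $\Tree_{z_q}^{\kappa+1}$. In the intended operating regime where $m$ is sufficiently large that $\log(m) \ge \Omega(n/\log(n))$, the $n^2$ term is dominated by $\log(m)\, n \log(n)$; for smaller $m$, one must observe that the $n^2$ contribution is a lower-order additive term that does not affect the asymptotic bound claimed in the downstream Lemma \ref{lem:diag_binarytree_withancilla}, since that final bound already carries an $n^2 \log(n)$ factor. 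Thus after stating the decomposition and substituting, the only delicate point is a careful regime-by-regime check that the $O(n^2)$ piece can be folded into the $O(\log(m)\, n \log(n))$ expression.
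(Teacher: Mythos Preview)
Your approach is exactly the paper's: it gives a one-line proof citing Lemmas \ref{lem:inverse_graph}, \ref{lem:sufcopy_tree}, \ref{lem:precopy3_tree}, and a Gray-Initial lemma (the paper writes \ref{lem:grayinitial_path}, almost certainly a typo for \ref{lem:grayinitial_tree}), which is precisely the decomposition-and-substitute argument you outline.

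On the $O(n^2)$ issue you raise: you are being more careful than the paper here. The paper does not address it at all, and indeed the claimed bound $O(\log(m)\,n\log n)$ does not literally absorb $O(n^2)$ when $m$ is only polynomially large in $n$. Your diagnosis that this is harmless downstream is correct---Lemma \ref{lem:diag_binarytree_withancilla} already has an $n^2\log n$ term---but you should not phrase this as ``folding $O(n^2)$ into $O(\log(m)\,n\log n)$ via a regime-by-regime check,'' because that folding simply fails for small $m$. The honest statement is that the bound in Lemma \ref{lem:inverse_tree} is slightly loose as written; the correct bound from your substitution is $O(n\log(n)\log(m) + n^2)$, and this is what feeds cleanly into the final depth in Lemma \ref{lem:diag_binarytree_withancilla}.
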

\begin{proof}
Follows from Lemmas \ref{lem:inverse_graph}, \ref{lem:grayinitial_path}, \ref{lem:sufcopy_tree} and \ref{lem:precopy3_tree}.
\end{proof}

\paragraph{Implementation of $\Lambda_n$} 
\begin{lemma}[Lemma \ref{lem:diag_tree_ancilla} (Case 1)]\label{lem:diag_binarytree_withancilla}
Any $n$-qubit unitary diagonal matrix $\Lambda_n$ can be implemented by a quantum circuit of depth  \[O\left(n^2\log n+\frac{\log(n)2^n}{m}\right)\]
under $\Tree_{n+m}(2)$ constraint, using $m\ge 3n$ ancillary qubits.
\end{lemma}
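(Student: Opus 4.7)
The plan is to simply assemble the depth bounds already established for the five constituent stages of the circuit framework from Fig.~\ref{fig:diag_with_ancilla_framwork}, and then verify that the dominant contributions match the claimed $O(n^2\log n + \log(n)2^n/m)$ bound. Specifically, I would invoke Lemma~\ref{lem:sufcopy_tree} for the suffix copy stage ($O(n\log n \log m)$), Lemma~\ref{lem:precopy_tree} for the prefix copy stage ($O(n\log n \log m)$), Lemma~\ref{lem:grayinitial_tree} for the Gray initial stage ($O(n^2)$), Lemma~\ref{lem:graycycle_tree} for the Gray cycle stage ($O(2^{n-p})$), and Lemma~\ref{lem:inverse_tree} for the inverse stage ($O(n\log n \log m)$).

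Summing these, the overall circuit depth is
\[
O\bigl(n\log(n)\log(m) + n^2 + 2^{n-p}\bigr).
\]
Now I would substitute the parameter choice $p = \log(\lambda_{\rm targ}) = \log m - \log\log n \pm O(1)$ fixed earlier in this subsection, which yields $2^{n-p} = \Theta(\log(n)\,2^n/m)$. Under the standing assumption $m \le O(2^n)$ (without loss of generality, since otherwise we simply restrict attention to $O(2^n)$ of the available ancillas, as noted at the start of the subsection), we have $\log m \le n$, so $n\log(n)\log(m) \le O(n^2 \log n)$. Collecting terms gives
\[
O\bigl(n^2 \log n + \tfrac{\log(n)\,2^n}{m}\bigr),
\]
as claimed.

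The only nontrivial bookkeeping is to handle the regime $m > 3\cdot 2^n$ separately: in this case we discard all but $3\cdot 2^n$ of the ancillas, so effectively $m' = \Theta(2^n)$, and the second term $\log(n)2^n/m$ becomes $O(\log n)$, which is absorbed into $n^2\log n$. Since there is no actual technical obstacle beyond correctly adding up the depths and tracking how the parameters $p$ and $m$ enter, I do not anticipate any substantive hurdle; the work was really done in the preceding five lemmas, and the role of this final statement is just to package them together with the correct choice of $p$.
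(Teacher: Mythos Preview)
Your proposal is correct and matches the paper's proof essentially line-for-line: the paper likewise assumes $m \le O(2^n)$ (using only $O(2^n)$ ancillas otherwise), invokes the same five lemmas for the five stages, sums their depths as $3O(n^2\log n) + O(n^2) + O(2^{n-p})$, and substitutes $p = \log m - \log\log n \pm O(1)$ to obtain the claimed bound. The only cosmetic difference is that the paper writes the copy/inverse contributions directly as $O(n^2\log n)$ rather than first stating $O(n\log(n)\log(m))$ and then bounding $\log m \le n$.
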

\begin{proof}
We assume that $m\le O(2^n)$. If $m=\omega(2^n)$, we only use $O(2^n)$ ancillary qubits.
From Lemmas \ref{lem:sufcopy_tree}, \ref{lem:grayinitial_tree}, \ref{lem:precopy_tree}, \ref{lem:graycycle_tree} and  \ref{lem:inverse_tree} the total depth is 
\[3O(n^2\log n)+O(n^2)+O(2^{n-p})=O\left(n^2\log (n)+\frac{\log(n)2^n}{m}\right).\]
where we used $p = \log m - \log \log n \pm O(1)$.
\end{proof}

Similarly, we can construct a circuit for $\Lambda_n$ under $d$-ary trees constraint. 
\begin{lemma}[Lemma \ref{lem:diag_tree_ancilla} (Case 2)]\label{lem:diag_d_tree_ancilla}
Any $n$-qubit diagonal unitary matrix can be implemented by a quantum circuit of depth $O\left(nd\log_d (n+m)\log_d(n+d)+\frac{(n+d)\log_d(n+d) 2^{n}}{n+m}\right)$ under $\Tree_{n+m}(d)$ constraint, using $m\ge 3n$ ancillary qubits.
\end{lemma}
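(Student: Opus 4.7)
The plan is to adapt the five-stage framework used in the binary tree case (Lemma~\ref{lem:diag_binarytree_withancilla}) to a general $d$-ary tree. Set $\kappa = \lceil\log_d(n+1)\rceil - 1$ so that the top subtree rooted at $\epsilon$ holds the $n$-qubit input register ${\sf R}_{\rm inp}$, and choose $b$ so that $\tau := \Theta(d\log_d n)$ auxiliary qubits sit beneath each target qubit. Partition the remaining $m$ qubits into depth-$(\kappa+b+1)$ subtrees; within each such subtree the top $\kappa+1$ layers become (a block of) ${\sf R}_{\rm copy}$, the next layer becomes ${\sf R}_{\rm targ}$, and the final $b-1$ layers become ${\sf R}_{\rm aux}$. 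Set $p=\log\lambda_{targ}$ and choose all Gray-code indices $\ell_k=1$, exactly as in the binary case.

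First I would generalize the copy lemma (Lemma~\ref{lem:copy_binary_tree}) to $d$-ary trees: an $n'$-bit state placed at the top of a subtree can be copied into all $t$ subtrees in the desired ${\sf R}_{\rm roots}$ in depth $O\bigl(n'\log_d(n+d)\log_d(n+m)\bigr)$, by pipelining CNOTs down root-to-leaf paths of length $O(\log_d n)$ and cascading $O(\log_d(m/n))$ rounds of duplication; the $d$ sibling copies at each internal node go through disjoint edges and can be done in parallel. This immediately gives depth bounds of the form $O\bigl(nd\log_d(n+d)\log_d(n+m)\bigr)$ for $U_{SufCopy}$, $U''_{PreCopy}$ and $U'''_{PreCopy}$ (the factor of $d$ absorbing the bookkeeping for duplicating to all children).

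Next I would redo the Gray Initial and Gray Cycle analyses. $U_{GrayInit}$ still reduces to disjoint $O(n)$-qubit CNOT circuits acting on each subtree $\Tree_{z_q}^{\kappa+1}$, implementable in depth $O(n^2)$ via Lemma~\ref{lem:cnot_circuit}. For $U_{GrayCycle}$, I would split the iterations $j\in[2^{n-p}]$ according to whether $h_{1,j+1}\le \tau$ (use a control in ${\sf R}_{\rm aux}$ that is $O(\log_d h_{1,j+1})$ away, in parallel across all subtrees) or $h_{1,j+1}>\tau$ (use a control in ${\sf R}_{\rm copy}$ and invoke Lemma~\ref{lem:cnot_circuit}). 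Applying Lemma~\ref{lem:GrayCode} as in the binary case yields $\sum_j \mathcal{D}\bigl(U_{Gen}^{(j)}\bigr)=O\bigl((n+d)\log_d(n+d)\cdot 2^{n-p}\bigr)$; the key $d$-dependent factor arises because each short-range CNOT now traverses an $O(\log_d\tau)$-path in a $d$-ary subtree rather than a simple binary path. Combining Lemmas~\ref{lem:precopy_graph}, \ref{lem:graycycle_graph} and \ref{lem:inverse_graph} and substituting the chosen $p$ then gives the claimed bound.

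The main obstacle is calibrating $b$ so that the exponentially decaying tail $\sum_{i>\tau} i\cdot d^{-i}$ truly dominates the ``far'' regime of the Gray Cycle sum, while simultaneously keeping $\tau$ small enough that $U'''_{PreCopy}$ still fits within $O(nd\log_d(n+d)\log_d(n+m))$. This is the same balancing act as in the binary case, but the $d$-dependence has to be tracked carefully in both $\tau$ and in the per-CNOT cost so as not to over-count factors of $d$; I expect this to be the only step requiring genuine care, with the remaining estimates being routine adaptations of the binary tree proof.
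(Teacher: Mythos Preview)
Your plan follows the right five-stage template, but it diverges from the paper in two structural ways, and this makes your Gray-Cycle bookkeeping harder than it needs to be.

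First, the paper does \emph{not} carry over the auxiliary register ${\sf R}_{\rm aux}$ to the $d$-ary case at all. In each ancillary subtree it simply designates the first $n$ qubits as ${\sf R}_{\rm copy}$ and the remainder as ${\sf R}_{\rm targ}$, with no $x_{aux}$ copies, no cutoff $\tau$, and no near/far split in the Gray Cycle. The point is that the target bound already absorbs a per-step cost of order $(n+d)\log_d(n+d)$, so the crude estimate ``every $U_{Gen}^{(j)}$ costs $O(n\log_d n)$ (resp.\ $O(d)$)'' is enough; there is no need for the exponential-tail balancing that motivated ${\sf R}_{\rm aux}$ in the binary case. Second, the paper handles the two regimes $d\ge 2n$ and $d<2n$ by separate constructions: for $d\ge 2n$ the input and each ancillary block sit in a depth-$1$ star (root plus $d$ children), giving $O(d)$ per Gray-Cycle step and $p=\log\big((d-n)m/d\big)$; for $d<2n$ one uses depth-$k$ subtrees of size $\Theta(n)$ with $p=\log m$, giving $O(n\log_d n)$ per step. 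Combining these two yields the stated bound directly.

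Your aux-register route can in principle be made to work, but the analysis you sketch is off: the short-range sum $\sum_{i\le\tau} O(\log_d i)\,2^{n-p-i}$ is $O(2^{n-p})$, not $O\big((n+d)\log_d(n+d)\cdot 2^{n-p}\big)$, so the $(n+d)\log_d(n+d)$ factor does not ``arise from the path length'' as you say. It instead enters through the reduced $\lambda_{targ}$ you incur by devoting a $\Theta(\tau)$-fraction of each subtree to ${\sf R}_{\rm aux}$; with your choice $\tau=\Theta(d\log_d n)$ this gives $2^{n-p}=\Theta(d\log_d n)\cdot 2^n/m$, which is within the target for $d<2n$ but requires rechecking for $d\ge 2n$ where your $\kappa$ degenerates. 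The paper's no-aux, case-split argument avoids all of this calibration.
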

\begin{proof}
The proof is similar to the proofs in Appendices \ref{append:diag_without_ancilla} and \ref{append:diag_with_ancilla}, so we only sketch the approach. Let $x=x_{pre}x_{suf}\in \Bn$, where $x_{pre}=x_1x_2\cdots x_{n-p}$, $x_{suf}=x_{n-p+1}\cdots x_{n}$, with $p$ specified below. The implementation of $\Lambda_n$ under $d$-ary tree constraint is discussed in two cases. We label qubits in a $d$-ary tree as follows. The root node is labelled with the empty string $\epsilon$. For a node with label $z$, for all $j\in[d]$, the $j$-th child of $z$ is labelled $z(j-1)$. Let $\Tree_z^k(d)=\{zy:y\in\{0,1,\ldots,d-1\}^{\le k}\}$ denote a $d$-ary tree of depth $k$, where $\{0,1,\ldots,d-1\}^{\le k}=\bigcup_{j=0}^k\{0,1,\ldots,d-1\}^j$. The depth of $\Tree_{n+m}(d)$ is $h=\lceil\log_d((m+n)(d-1)+1)\rceil-1=O(\log_d(n+m))$.
\begin{enumerate}
    \item Case 1: $d\ge 2n$. Assume that the $n$ input qubits $\ket{x_1\cdots x_n}$ are the first $n$ qubits at depth one of $\Tree_\epsilon^1(d)$ (i.e., the layer below the root). Take $p=O(\log((d-n)m/d))$, and divide the remaining part of $d$-ary tree into $O(m/d)$ $d$-ary trees, each of depth $1$. The set of root nodes of these trees is ${\sf R}_{\rm root}=\bigcup_{j=1}^{\lceil h/2\rceil} \{0,1,\ldots,d-1\}^{h-1-2(j-1)}$ of size $O(m/d)$. For all $z\in {\sf R}_{\rm root}$, the first $n$ qubits in the first depth of $\Tree_z^1(d)$ form the copy register, and the remaining $d-n$ qubits in the first depth form the target register.
     The circuit implementation of $\Lambda_n$ consists of 5 stages. 
    \begin{enumerate}
        \item Suffix Copy. We make $O(m/d)$ copies of $x_{suf}$, where each copy is made on the first $p$ qubits of one of the $\Tree_z^1(d)$ for all $z\in{\sf R}_{\rm root}$. The circuit depth required is $O(\log_d(n+m)nd)$.
        \item Gray Initial. For all $z\in{\sf R}_{\rm root}$, we generate linear combinations of $x_{suf}$, i.e., all $\ket{\langle x, 0^{n-p}s\rangle}$ for $s\in\B^{p}$ on qubits $zn,z(n+1),z(n+2),\ldots,zd$ of $\Tree_z^k(d)$.  
        In each  $\Tree_z^1(d)$, there exists a copy of $x_{suf}$ and all these trees are disjoint. Therefore, the circuit depth required is $O(dn)$. 
         \item Prefix Copy. We invert the suffix copy stage to restore the copy register to zero states. Then we make $O(m/d)$ copies of $x_{pre}$, with each copy made on the first $n-p$ qubits $\{z0,z1,\ldots,z(n-p-1)\}$ of $\Tree_z^1(d)$ for all $z\in{\sf R}_{\rm root}$. The total depth required is $O(\log_d(n+m)nd)$.
         \item Gray Cycle. We generate $\ket{\langle x,s\rangle}$ for all $x\in\Bn$. As in Appendix \ref{append:diag_with_ancilla}, this stage consists of $2^{n-p}$ steps. We generate $(n-p,1)$-Gray codes in every qubit of the target register, and introduce the corresponding phases. Each step can be implemented by a circuit of depth $O(d)$ in $\Tree_z^1(d)$, for every $z\in{\sf R}_{\rm root}$. Since all $\Tree_z^1(d)$s are disjoint, the total depth required is $O(d2^{n-p})$.
         \item Inverse. We invert the Prefix Copy, Gray Initial and Suffix Copy stages.The total depth required is $O(\log_d(n+m)nd)$.
    \end{enumerate}
    In total, $\Lambda_n$ can be implemented in depth $O(\log_d(n+m)nd+d2^{n-p})=O(\log_d(n+m)nd+d2^n/m)$.
    \item Case 2: $d<2n$. 
    The $n$ input qubits are stored in the first $n$ qubit of $\Tree_z^{\kappa}(d)$, where $\kappa=\lceil\log_d(n(d-1)+1)\rceil-1=O(\log_d(n))$. Let
    \begin{equation*}
        k=\begin{cases}
        \lceil\log_d(n(d-1)+1)-1\rceil&\qquad \text{if~}  \frac{d^{\lceil\log_d(n(d-1)+1)-1\rceil}-1}{d-1}-n\ge \Omega(n),\\
        \lceil\log_d(n(d-1)+1)\rceil &\qquad\text{otherwise}.
        \end{cases}
    \end{equation*}
    Apart from the qubits in $\Tree_z^{\kappa}(d)$, the rest qubits in $\Tree_{n+m}(d)$ are divided into $O(m/t)$ subtrees of depth $k$, each of which consists of $t:=\frac{d^{k+1}-1}{d-1}$ vertices . The set of root nodes for these trees are ${\sf R}_{\rm root}=\bigcup_{j=1}^{\frac{h-\kappa}{k+1}} \{0,1,\ldots,d-1\}^{h+1-j(k+1)}$.
    For all $z\in{\sf R}_{\rm root}$, the first $n$ qubits of $\Tree_z^k(d)$ form the copy register and the rest $t-n$ qubits form the target register.
    Take $p=\log\left[\left(t-n\right)\cdot O(m/t)\right]=\log(m)$.
The circuit implementation of $\Lambda_n$ consists of 5 stages. 
    
    \begin{enumerate}
        \item Suffix Copy. We make $r$ copies of $x_{suf}$, and every copy of $x_{suf}$ is made on the first $p$ qubits of each $\Tree^k_z(d)$ for all $z\in{\sf R}_{\rm root}$. The depth required is $O(\log_d(n+m)nd)$.
        \item Gray Initial. We generate all linear combinations of $x_{suf}$, i.e, all $\ket{\langle x, 0^{n-p}s\rangle}$ for $s\in\B^{p}$, on the last $\frac{d^{k+1}-1}{d-1}-n$ qubits of each $\Tree^k_z(d)$. The depth required is $O(nd\log_d(m)\log_d(n))$.
         \item Prefix Copy. We invert the Suffix Copy stage and then make $r$ copies of $x_{pre}$, with each made on the first $n-p$ qubits of the $\Tree^k_z(d)$ for all $z\in{\sf R}_{\rm root}$. The depth required is $O(\log_d(n+m)nd)$.
         \item Gray Cycle. We generate all $\ket{\langle x,s\rangle}$ for all $x\in\Bn$. This stage consists of $2^{n-p}$ steps. We generate $(n-p,1)$-Gray code in every qubit of the target register. Every step can be implemented by a circuit of depth $O(n\log_d(n))$, and the total depth required is $O(n\log_d(n)2^{n-p})$.
         \item Inverse. We apply the inverse circuits of the prefix copy, Gray initial and suffix copy stages, in total depth $O(\log_d(n+m)nd)$.
    \end{enumerate}
        In total, $\Lambda_n$ can be implemented by a circuit of depth $O(nd\log_d(m)\log_d(n)+\log_d(n+m)nd+n\log_d(n)2^{n-p})=O(nd\log_d(m)\log_d(n)+n\log_d(n)2^n/m)$.
\end{enumerate}
\end{proof}

\begin{lemma}[Lemma \ref{lem:diag_tree_ancilla} (Case 3)]\label{lem:diag_star_ancilla}
Any $n$-qubit diagonal unitary matrix can be implemented by a quantum circuit of depth $O(2^n)$ with $m>0$ ancillary qubits, under $\Star_{n+m}$ constraint.
\end{lemma}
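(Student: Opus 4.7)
The key observation is that we do not need to exploit the ancillary qubits at all: $\Star_{n+m}$ contains $\Star_n$ as an induced subgraph, and the no-ancilla bound of Lemma~\ref{lem:diag_star_noancilla} already gives depth $O(2^n)$ on $\Star_n$. So my plan is to reduce the $m>0$ case to the $m=0$ case. First, by relabelling (or by applying a single SWAP, which has constant depth), I would arrange for one of the $n$ input qubits to sit at the center of $\Star_{n+m}$. Then the $n$ input qubits together with the edges between the center input qubit and the $n-1$ leaf input qubits form a copy of $\Star_n$ inside $\Star_{n+m}$.

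Having set this up, I would simply invoke Lemma~\ref{lem:diag_star_noancilla}: there exists a quantum circuit of depth $O(2^n)$ implementing $\Lambda_n$ on these $n$ qubits, using only edges of this $\Star_n$ subgraph and no ancillary qubits. The remaining $m$ ancillary leaves play no role and stay in state $\ket{0^m}$ throughout. If a SWAP was applied at the start to bring an input to the center, I would undo it at the end, incurring only $O(1)$ additional depth. This gives a total depth of $O(2^n) + O(1) = O(2^n)$ under $\Star_{n+m}$ constraint, as required.

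There is no real obstacle here; the only subtlety is noticing that having more ancillary qubits in a star cannot possibly \emph{help} because every two-qubit interaction still has to be routed through the single central vertex, which is a sequential bottleneck. In fact, the matching lower bound $\Omega(2^n)$ for the star (which will appear later via the maximum matching argument of Theorem~\ref{thm:depth_lowerbound_QSP_graph}, since a star has matching number $1$) shows that no amount of ancilla could improve this bound, so the trivial reduction above is the right approach.
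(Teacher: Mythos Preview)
Your proposal is correct and takes essentially the same approach as the paper, which simply says ``Do not use ancillary qubits. The result follows from Lemma~\ref{lem:diag_star_noancilla}.'' Your added remarks about arranging an input qubit at the center via a constant-depth SWAP and about the matching-number lower bound are sound extra detail but not needed for the argument.
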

\begin{proof}
    Do not use ancillary qubits. The result follows from Lemma \ref{lem:diag_star_noancilla}.
\end{proof}

\subsection{Circuit implementation under $\Expander_{n+m}$ constraints (Proof of Lemma \ref{lem:diag_expander_ancilla})}
\label{sec:diag_with_ancilla_expander}
For this case, we use a different circuit framework to that shown in Fig.~\ref{fig:diag_with_ancilla_framwork}: 
\begin{enumerate}
    \item Here, the ancillary qubits are divided into only two registers, ${\sf R}_{\rm copy}$ and ${\sf R}_{\rm targ}$, and there is no auxiliary register ${\sf R}_{\rm aux}$. 
    \item In Fig.~\ref{fig:diag_with_ancilla_framwork}, the suffix copy and prefix 
    copy stages make copies of $x_{suf}$ and $x_{pre}$, in order to reduce the depth of the Gray initial and Gray cycle stages which follow them, respectively. Here, the suffix copy and prefix copy stages are not used, and the circuit consists only of the other three stages, i.e. the Gray initial, Gray cycle and inverse stages. 
    The precise definition of these three steps are given in Eq. \eqref{eq:grayinitial_expander}, Eq. \eqref{eq:graycycle_expander} and Eq. \eqref{eq:inverse_expander}, respectively, from which it is easily verified that the diagonal unitary $\Lambda_n$ is realized.
\end{enumerate}

\paragraph{Choice of registers}
Consider an expander graph $G$ with vertex expansion $h_{out}(G) = c$ for some constant $c>0$.
Let $c' = \frac{c}{c+2}$. 
Let $\ell=\Big\lfloor\frac{\log(m)-1-\log(\lceil 1/c'\rceil+1)}{\log(1+c')}\Big\rfloor+2$ and define a sequence of sets $S_1, S_2, \ldots S_\ell$ as in Appendix \ref{sec:diag_without_ancilla_expander} (Eqs.~\eqref{eq:s1} and~\eqref{eq:siplus1}), i.e. 
\begin{enumerate}
    \item For some constant $c'>0$, choose arbitrary set $S_1$ of size $\lceil 1/c'\rceil+1$;
    \item For every $2\le i\le \ell$, $S_i=S_{i-1}\cup \Gamma(S_{i-1})$, where $\Gamma(S_{i-1})\subset V-{S_{i-1}}$ consists of $\lfloor c'|S_{i-1}|\rfloor$ vertices. The size of a maximum matching $M_{S_{i-1}} $between $S_{i-1}$ and $\Gamma(S_{i-1})$ is $\lfloor c'|S_{i-1}|\rfloor$.
\end{enumerate}
By construction, $|S_{\ell-1}|\le m/2$, $|S_{\ell-1}|=\Theta(m)$, $|\Gamma(S_{\ell-1})|=\Theta(m)$.  We take
\begin{itemize}
    \item ${\sf R}_{\rm copy}:= S_{\ell-1}$;
    \item ${\sf R}_{\rm targ}:= \Gamma(S_{\ell-1})$;
    \item ${\sf R}_{\rm inp} \subseteq V-({\sf R}_{\rm copy}\cup {\sf R}_{\rm targ})=V-S_{\ell}$.
\end{itemize}
The copy and target registers have sizes $\lambda_{copy}=\Theta(m)$ and $\lambda_{targ}=\Theta(m)$, respectively, while ${\sf R}_{\rm inp}$ consists of $n$ qubits in $V-S_{\ell}$. We define $p=\log(\lambda_{targ})$.

Our choice of Gray codes is given by setting $\ell_k = 1$, for all $k\in[2^p]$.

\paragraph{Remark} Note that, once $S_1, \ldots, S_\ell$ have been constructed, it may not be the case that the $n$ input qubits (which have been loaded with non-zero inputs $\ket{x}$) lie entirely within $V-S_\ell$. However, by using at most $n$ SWAP operations (that may across some distance under $G$ constraint), we can permute the input qubits so that they do lie within $V-S_\ell$, and we can then take the locations of those qubits to define ${\sf R}_{\rm inp}$.  By Lemma~\ref{lem:distance} the distance between the two qubits in any of these SWAP gates is $O(\log(n+m))$, and each SWAP can be implemented by three CNOT gates. By Lemma~\ref{lem:cnot_path_constraint}, permuting all input qubits into $V-S_\ell$ can be implemented in circuit depth $n\cdot O(\log(n+m))=O(n\log(n+m))$. We shall see that this does not impact the final circuit depth complexity required to implement $\Lambda_n$.
 

\paragraph{Implementation of $\Lambda_n$} 

We assume $m\ge \Omega(n)$. If $m\le o(n)$, the circuit depth in this section is larger than the depth in Lemma \ref{lem:diag_expander_withoutancilla}, which does not use ancillary qubits.
\diagexpanderancilla*
\begin{proof}
We assume that the number of ancillary qubits $m\le O(2^n)$. Let ${\sf R}_{{\rm inp},k}$ denote the $k$-th qubit of the input register,  $\ket{x}_{{\sf R}_{\rm inp}}=\bigotimes_{k=1}^n\ket{x_k}_{{\sf R}_{{\rm inp},k}}$, and define $s(j,k)$, $f_{j,k}$ and $\ket{f_j}$ as in Eq.~\eqref{eq:s,f}.

Let $U^k_{copy}$ be a transformation which makes copies of $\ket{x_k}_{{\sf R}_{{\rm inp},k}}$ in ${\sf R}_{\rm copy}=S_{\ell-1}$, i.e.,
\begin{equation}\label{eq:copy_expander_graph}
    \ket{x_k}_{{\sf R}_{{\rm inp},k}}\ket{0^{|S_{\ell-1}|}}_{S_{\ell-1}}\xrightarrow{ U_{copy}^k } \ket{x_k}_{{\sf R}_{{\rm inp},k}}\underbrace{\ket{x_k\cdots x_k}_{S_{\ell-1}}}_{|S_{\ell-1}|\text{~copies~of~}x_k} 
\end{equation}
which can be realized in $\ell-1$ steps: 
\begin{enumerate}
    \item Step 1: make $|S_1|$ copies of $x_k$ from ${\sf R}_{{\rm inp},k}$ to $S_1$, i.e.,
    \begin{equation*}
        \ket{x_k}_{{\sf R}_{{\rm inp},k}}\ket{0^{|S_1|}}_{S_1}\to \ket{x_k}_{{\sf R}_{{\rm inp},k}}\underbrace{\ket{x_k\cdots x_k}_{S_{1}}}_{|S_{1}|\text{~copies~of~}x_k}
    \end{equation*}
    This can be implemented by applying $\lceil 1/c'\rceil+1$ CNOT gates, with each CNOT gate having a separate qubit in $S_1$ as target, and control qubit ${\sf R}_{{\rm inp},k}$. By Lemmas~\ref{lem:distance} and~\ref{lem:cnot_path_constraint}, Step 1 can be realized in depth $|S_1|\cdot O(\log(n+m))=O(\log(n+m))$.
    
    \item Step $2\le i\le \ell-1$: make copies of $x_k$ from $S_{i-1}$ to $\Gamma(S_{i-1})$, i.e.,
    \begin{equation}\label{eq:copy_expander_graph_i}
        \underbrace{\ket{x_k\cdots x_k}_{S_{i-1}}}_{|S_{i-1}|\text{~copies~of~}x_k}\ket{0^{\lfloor c'|S_{i-1}|\rfloor}}_{\Gamma(S_{i-1})}\to \ket{x_k\cdots x_k}_{S_{i-1}}\underbrace{\ket{x_k\cdots x_k}_{\Gamma(S_{i-1})}}_{\lfloor c'|S_{i-1}|\rfloor\text{~copies~of~}x_k}=\underbrace{\ket{x_k\cdots x_k}_{S_{i}}}_{|S_{i}|\text{~copies~of~}x_k}, \forall x_k\in \B.
    \end{equation}
    By the construction of $S_1,S_2,\ldots,S_\ell$, there exists a maximum matching $M_{S_{i-1}}$  between $S_{i-1}$ and $\Gamma(S_{i-1})$ of size $\lfloor c'|S_{i-1}|\rfloor$.
    Eq. \eqref{eq:copy_expander_graph_i} can be implemented by applying CNOT gates to all pairs of qubits $(u,v)$ corresponding to edges in $M_{S_{i-1}}$. Each of these can be implemented in parallel, and thus the total depth required is $1$. 
\end{enumerate}
The total depth required to implement Eq.~\eqref{eq:copy_expander_graph} is therefore $O(\log(n+m))+\ell-2=O(\log(m+n))$.

We now consider the circuit construction for $\Lambda_n$, which we implement in 3 stages:
%

\begin{enumerate}
    \item Gray initial stage:
    \begin{equation}\label{eq:grayinitial_expander}
        \ket{x}_{{\sf R}_{\rm inp}}\ket{0^{|S_{\ell-1}|}}_{S_{\ell-1}}\ket{0^{\lfloor c'|S_{\ell-1}|\rfloor}}_{\Gamma(S_{\ell-1})}\to \ket{x}_{{\sf R}_{\rm inp}}\ket{0^{|S_{\ell-1}|}}_{S_{\ell-1}}\ket{f_1}_{\Gamma(S_{\ell-1})}.
    \end{equation}
      This can be realized in $p$ steps by handling the $p$ suffix bits one by one. For all $j\in[p]$, the $j$-th step is implemented as follows:
      \begin{enumerate}
          \item First, we make $|S_{\ell-1}|$ copies of $x_{n-p+j}$ in copy register $S_{\ell-1}$ by the implementation of Eq. \eqref{eq:copy_expander_graph}, i.e.,
    \begin{equation}\label{eq:copy_expander_phase1}
     \ket{x}_{{\sf R}_{\rm inp}}\ket{0^{|S_{\ell-1}|}}_{S_{\ell-1}}\to \ket{x}_{{\sf R}_{\rm inp}}\underbrace{\ket{x_{n-p+j}x_{n-p+j}\cdots x_{n-p+j}}_{S_{\ell-1}}}_{|S_{\ell-1}|~\text{~copies~of~}x_{n-p+j}}
     \end{equation}
      This requires depth $O(\log(n+m))$.
     \item Second, for all $k\in[2^p]$, if $f_{j,k}=\langle s(1,k),x\rangle$ (viewed as a linear function of the variables $x_i$) contains $x_{n-p+j}$, we add $x_{n-p+j}$ to the $k$-th qubit of target register $\Gamma(S_{\ell-1})$. This can be implemented by applying a CNOT gate of which the control qubit is $\ket{x_{n-p+j}}$ in $S_{\ell-1}$ and the target qubit is the $k$-th qubit of $\Gamma(S_{\ell-1})$. Since there exists a $\lfloor c'|S_{\ell-1}|\rfloor$-size matching between $S_{\ell-1}$ and $\Gamma(S_{\ell-1})$ and each qubit in $S_{\ell-1}$ contains a copy of $x_{n-p+j}$, all the CNOT gates can be applied in parallel, and the required circuit depth is $1$. 
     \item Third, we apply the inverse circuit of Eq. \eqref{eq:copy_expander_phase1} of depth $O(\log(n+m))$ to restore the copy register.
      \end{enumerate}
    In total, the circuit depth for the Gray initial stage is $O(p\log(n+m))$.
    \item Gray cycle stage:
    \begin{equation}\label{eq:graycycle_expander}
        \ket{x}_{{\sf R}_{\rm inp}}\ket{0^{|S_{\ell-1}|}}\ket{f_1}_{\Gamma(S_{\ell-1})} \to e^{i\theta(x)}\ket{x}_{{\sf R}_{\rm inp}}\ket{0^{|S_{\ell-1}|}}\ket{f_1}_{\Gamma(S_{\ell-1})}.
    \end{equation}
    We implement this in $2^{n-p}$ steps. For $j\le 2^{n-p}-1$, the $j$-th step is defined as
    \begin{equation}
     \ket{x}_{{\sf R}_{\rm inp}}\ket{0^{|S_{\ell-1}|}}_{S_{\ell-1}}\ket{f_j}_{\Gamma(S_{\ell-1})}\to e^{i\sum\limits_{k\in[2^p]}f(j+1,k)\alpha_{s(j+1,k)}}\ket{x}_{{\sf R}_{\rm inp}}\ket{0^{|S_{\ell-1}|}}_{S_{\ell-1}}\ket{f_{j+1}}_{\Gamma(S_{\ell-1})},\forall x\in\Bn.
    \end{equation}
    Recall that $s(j,k)$ and $s(j+1,k)$ differ in the $h_{1,j+1}$-th bit.
    \begin{enumerate}
        \item First, we make $|S_{\ell-1}|$ copies of $x_{h_{1,j+1}}$ in $S_{\ell-1}$. This can be done in depth $O(\log(n+m))$ using $U_{copy}^{h_{1,j+1}}$ (Eq. \eqref{eq:copy_expander_graph}).
        
        \item Second, we add $x_{h_{1,j+1}}$ to every qubit of $\Gamma(S_{\ell-1})$, by applying CNOT gates to all qubit pairs $(u,v)$ corresponding to edges in $M_{S_{\ell-1}}$. This can be done in depth 1. 
        
        \item Third, for all $k\in[2^p]$, we apply $R(\alpha_{s(j+1,k)})$ on the $k$-th qubit of $\Gamma(S_{\ell-1})$, where $\alpha_{s(j+1,k)}\in\mathbb{R}$ is defined in Eq. \eqref{eq:alpha}. This can be done in depth 1. 
        
        \item Finally, we restore the copy register using the inverse of $U_{copy}^{h_{1,j+1}}$.
    \end{enumerate} 
    The total depth of the $j$-th step is $O(\log(n+m))$. The last step, the $2^{n-p}$-th step, is defined as
    \begin{equation}
      \ket{x}_{{\sf R}_{\rm inp}}\ket{0^{|S_{\ell-1}|}}_{S_{\ell-1}}\ket{f_{2^{n-p}}}_{\Gamma(S_{\ell-1})}\to e^{i\sum\limits_{k\in[2^p]}f(1,k)\alpha_{s(1,k)}}\ket{x}_{{\sf R}_{\rm inp}}\ket{0^{|S_{\ell-1}|}}_{S_{\ell-1}}\ket{f_{1}}_{\Gamma(S_{\ell-1})},\forall x\in\Bn.
    \end{equation}
     By a similar discussion, this can be implemented in depth $O(\log(n+m))$. 
    In summary, the total depth of the Gray Cycle stage is $O(2^{n-p}\log(n+m))$.
    \item Inverse stage:
    \begin{equation}\label{eq:inverse_expander}
        \ket{x}_{{\sf R}_{\rm inp}}\ket{0^{|S_{\ell-1}|}}_{S_{\ell-1}}\ket{f_1}_{\Gamma(S_{\ell-1})}\to\ket{x}_{{\sf R}_{\rm inp}}\ket{0^{|S_{\ell-1}|}}_{S_{\ell-1}}\ket{0^{\lfloor c'|S_{\ell-1}|\rfloor}}_{\Gamma(S_{\ell-1})}.
    \end{equation}
    This can be implemented by the inverse of Eq. \eqref{eq:grayinitial_expander}.
\end{enumerate}
The total depth required to implement $\Lambda_n$ is thus $2O(p\log(n+m))+O(2^{n-p}\log(n+m))=O(\log(n+m)\log(m)+\frac{\log(m)2^n}{m})=O(n^2+\frac{\log(m)2^n}{m})$ for $m\le O(2^n)$. If $m\ge\omega(2^n)$, we only use $O(2^n)$ of the ancillary qubits, and the circuit depth is $O(n^2)$. Therefore, the total depth is $O\Big(n^2+\frac{\log(m)2^n}{m}\Big)$.
\end{proof}

\section{Circuit constructions for QSP and GUS under qubit connectivity constraints}
\label{append:QSP_US_graph}
In this section, we bound the circuit size and depth for quantum state preparation (QSP) and general unitary synthesis (GUS) under different graph constraints, based on the circuit constructions for diagonal unitary matrices in Appendix \ref{append:diag_without_ancilla} and Appendix \ref{append:diag_with_ancilla}. In Appendix \ref{sec:QSP_graph} and Appendix \ref{sec:US_graph}, we present QSP and GUS circuits under path, $d$-dimensional grid, binary tree, expander graph and general graph constraints. In Appendix \ref{sec:circuit_transformation}, we present a transformation between circuits under different graph constraints, which we use to upper bound the circuit depth for QSP and GUS under brick-wall constraint. 


\subsection{Circuit complexity for QSP under graph constraints (Proofs of Theorems \ref{thm:QSP_grid_main} -  \ref{thm:QSP_graph})}
\label{sec:QSP_graph}

\subsubsection{QSP under $\Path_{n+m}$ and $\Grid_{n+m}^{n_1,n_2,\ldots,n_d}$ constraints}

\label{sec:QSP_path_grid}

The results of this section are based on the fact that (i) every $j$-qubit uniformly controlled gate (UCG) $V_j$ can be decomposed into 3 $j$-qubit diagonal unitary matrices and 4 single-qubit gates (Lemma~\ref{lem:UCG_decomposition}); and (ii) any QSP circuit can be decomposed into a sequence of UCGs $V_1,V_2,\ldots, V_n$: 

\begin{lemma}[\cite{grover2002creating,kerenidis2017quantum}]\label{lem:QSP_framework_UCG}
The QSP problem can be solved by $n$ UCGs acting on $1,2,\ldots, n$ qubits, respectively, 
\[V_n (V_{n-1}\otimes \mathbb{I}_1) \cdots (V_2\otimes \mathbb{I}_{n-2})(V_1\otimes\mathbb{I}_{n-1}),\]
by the circuit in Fig.~\ref{fig:QSP_circuit}.
\end{lemma}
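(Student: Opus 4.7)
The plan is to prove the lemma by reverse induction on $n$: I will show that any $n$-qubit target state can be written as $V_n$ applied to a product of an $(n-1)$-qubit state with $\ket{0}$, then invoke the inductive hypothesis on the $(n-1)$-qubit state. Concretely, given $\ket{\psi_v}=\sum_{x\in\{0,1\}^n} v_x\ket{x}$, I will first decompose it as
\[
\ket{\psi_v}=\sum_{x'\in\{0,1\}^{n-1}}\ket{x'}\otimes\bigl(v_{x'0}\ket{0}+v_{x'1}\ket{1}\bigr),
\]
and set $p_{x'}:=|v_{x'0}|^2+|v_{x'1}|^2$. The key auxiliary state is the $(n-1)$-qubit state $\ket{\psi'}:=\sum_{x'}\sqrt{p_{x'}}\,\ket{x'}$, which is a valid (normalized) quantum state since $\sum_{x'}p_{x'}=\sum_x |v_x|^2=1$.

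Next, I will construct the UCG $V_n$ explicitly. For each $x'\in\{0,1\}^{n-1}$ indexing a block on the first $n-1$ qubits, I define the single-qubit unitary $U_{x'+1}$ on the last qubit by requiring
\[
U_{x'+1}\ket{0}=\frac{v_{x'0}\ket{0}+v_{x'1}\ket{1}}{\sqrt{p_{x'}}}\quad\text{when }p_{x'}>0,
\]
extended arbitrarily to a unitary (e.g.\ by choosing the second column orthogonal), and set $U_{x'+1}$ to be any fixed unitary (say the identity) when $p_{x'}=0$. Assembling these into the block-diagonal form of Eq.~\eqref{eq:UCG} yields a valid UCG $V_n$. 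A direct computation then gives
\[
V_n\bigl(\ket{\psi'}\otimes\ket{0}\bigr)=\sum_{x'}\sqrt{p_{x'}}\,\ket{x'}\otimes U_{x'+1}\ket{0}=\sum_{x'}\ket{x'}\otimes(v_{x'0}\ket{0}+v_{x'1}\ket{1})=\ket{\psi_v},
\]
where the $p_{x'}=0$ terms vanish automatically because $v_{x'0}=v_{x'1}=0$.

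Finally, I will apply the inductive hypothesis to $\ket{\psi'}$: there exist UCGs $V_1,\ldots,V_{n-1}$ on $1,\ldots,n-1$ qubits respectively such that the circuit of Lemma~\ref{lem:QSP_framework_UCG} on $n-1$ qubits prepares $\ket{\psi'}$ from $\ket{0^{n-1}}$. Tensoring the whole $(n-1)$-qubit construction with $\mathbb{I}_1$ on the last qubit and prepending $V_n$ produces the desired decomposition $V_n(V_{n-1}\otimes\mathbb{I}_1)\cdots(V_1\otimes\mathbb{I}_{n-1})\ket{0^n}=\ket{\psi_v}$. The base case $n=1$ is immediate: $V_1$ is just the single-qubit unitary mapping $\ket{0}$ to $\ket{\psi_v}$, which always exists.

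I expect no substantial obstacle: the main technical point is only the definition of $U_{x'+1}$ when $p_{x'}=0$, where one must check that an arbitrary unitary choice does not affect the output—this is immediate because the prefactor $\sqrt{p_{x'}}$ kills those branches in $\ket{\psi'}$. The rest is bookkeeping of tensor products and a straightforward induction.
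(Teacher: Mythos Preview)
Your proposal is correct and is essentially the standard Grover--Rudolph construction; the paper itself does not give a proof of this lemma but simply cites it from \cite{grover2002creating,kerenidis2017quantum}. Your inductive argument with the explicit block unitaries $U_{x'+1}$ is exactly the construction underlying those references, including the handling of the degenerate $p_{x'}=0$ branches.
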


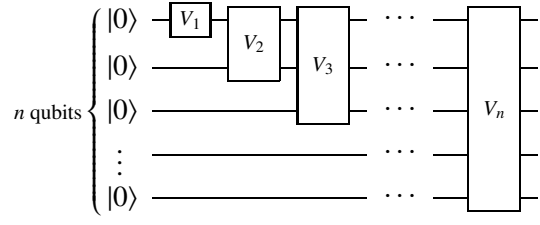
\begin{figure}[]
\centerline{
\Qcircuit @C=0.6em @R=0.6em {
\lstick{\ket{0}} & \gate{\scriptstyle V_1} & \multigate{1}{\scriptstyle V_2} & \multigate{2}{\scriptstyle V_3} & \qw & \push{\cdots} & &  \qw & \multigate{4}{\scriptstyle V_n} & \qw\\
\lstick{\ket{0}} & \qw & \ghost{\scriptstyle V_2} & \ghost{\scriptstyle V_3} & \qw &\push{\cdots} & & \qw & \ghost{\scriptstyle V_n} & \qw\\
\lstick{\ket{0}} & \qw & \qw & \ghost{\scriptstyle V_3} & \qw & \push{\cdots} &  & \qw & \ghost{\scriptstyle V_n} & \qw\\
\vdots~~~~~~~~~ & \qw & \qw  & \qw  & \qw & \push{\cdots} &  & \qw & \ghost{\scriptstyle V_n}& \qw\\
\lstick{\ket{0}} & \qw & \qw & \qw & \qw & \push{\cdots} &  & \qw & \ghost{\scriptstyle V_n} & \qw \inputgroupv{2}{4}{4em}{1.6em}{ \scriptstyle n \text{~qubits}~~~~~~~~~~~~~~~}\\
}
}
    \caption{A QSP circuit to prepare an $n$-qubit state. Every $V_j$ is a $j$-qubit uniformly controlled gate (UCG) for $j\in[n]$, where the first $j-1$ qubits are control qubits and the last qubit is the target qubit. }
    \label{fig:QSP_circuit}
\end{figure}
\begin{lemma}\label{lem:UCG_grid}
Any $n$-qubit UCG $V_n$ can be realized by a quantum circuit of depth
\begin{equation*}
    O\Big(n^2+d2^{\frac{n}{d+1}}+\max_{j\in\{2,\ldots,d\}}\Big\{\frac{d2^{n/j}}{(\Pi_{i=j}^d n_i)^{1/j}}\Big\}+\frac{2^n}{n+m}\Big),
\end{equation*}
under $\Grid_{n+m}^{n_1,n_2,\ldots,n_d}$ constraint, using $m\ge 0$ ancillary qubits. 
If $n_1=n_2=\cdots=n_d$, the depth is $O\left(n^2+d2^{\frac{n}{d+1}}+\frac{2^n}{n+m}\right)$.
\end{lemma}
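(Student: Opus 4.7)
The plan is to reduce Lemma~\ref{lem:UCG_grid} to the already-established depth bounds for diagonal unitaries on grids (Lemmas~\ref{lem:diag_grid_withoutancilla_main} and~\ref{lem:diag_d_grid_ancilla}) via the UCG decomposition in Lemma~\ref{lem:UCG_decomposition}. Recall that this decomposition writes
\[
V_n \;=\; \Lambda_n''' \bigl(\mathbb{I}_{n-1}\otimes (SH)\bigr) \Lambda_n'' \bigl(\mathbb{I}_{n-1}\otimes (HS^\dagger)\bigr) \Lambda_n',
\]
so any UCG is the product of three $n$-qubit diagonal unitaries interleaved with two single-qubit gates on the target qubit. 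The single-qubit gates contribute only $O(1)$ depth, and the ancillary qubits used in the three diagonal-unitary sub-circuits can be reused (they are restored to $\ket{0}$ at the end of each sub-circuit), so the total depth and ancilla cost are at most $3$ times that of implementing a single $n$-qubit diagonal unitary on $\Grid^{n_1,\ldots,n_d}_{n+m}$.

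The main step is therefore to invoke the diagonal-unitary bounds in two regimes. When $m \ge \Omega(n)$, Lemma~\ref{lem:diag_d_grid_ancilla} (Cases 2 and 3) directly gives the depth
\[
O\!\left(n^2 + d\,2^{n/(d+1)} + \max_{j\in\{2,\ldots,d\}}\!\left\{\frac{d\,2^{n/j}}{(\prod_{i=j}^d n_i)^{1/j}}\right\} + \frac{2^n}{n+m}\right),
\]
which is exactly the claimed bound. When $m < \Omega(n)$ (in particular when $m=0$), we fall back on Lemma~\ref{lem:diag_grid_withoutancilla_main}, which yields depth $O(2^n/n)$; since $n+m = \Theta(n)$ in this regime, this is absorbed into the $\frac{2^n}{n+m}$ term of the stated bound. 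Either way, taking $3$ copies of the diagonal sub-circuit and adding $O(1)$ for the $SH$/$HS^\dagger$ layers preserves the asymptotics.

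The only mild subtlety is that Lemma~\ref{lem:diag_d_grid_ancilla} requires the ancillary qubits to be laid out as a sub-grid adjacent to the $n$ input qubits (cf.\ Fig.~\ref{fig:register_in_grid}), and the three successive diagonal sub-circuits $\Lambda_n', \Lambda_n'', \Lambda_n'''$ must use the same placement of input qubits so that no permutation is needed between them. This is automatic because each sub-circuit starts and ends with the input qubits in ${\sf R}_{\rm inp}$ and restores the ancillary registers to $\ket{0}$, so the two single-qubit layers act directly on the target qubit of ${\sf R}_{\rm inp}$ without any re-routing. In the equal-sides case $n_1=\cdots=n_d$ the maximum term vanishes and the bound collapses to $O(n^2 + d\,2^{n/(d+1)} + 2^n/(n+m))$, matching the special case in the lemma statement. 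I do not anticipate a real obstacle here; the work is bookkeeping to verify that the regime split above yields a uniform upper bound valid for all $m\ge 0$.
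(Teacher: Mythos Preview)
Your proposal is correct and follows essentially the same approach as the paper: decompose $V_n$ via Lemma~\ref{lem:UCG_decomposition} into three diagonal unitaries plus $O(1)$ single-qubit gates, then invoke Lemma~\ref{lem:diag_grid_withoutancilla_main} when $m$ is small (the paper uses the explicit threshold $m<3n$ and simply discards the ancilla) and Lemma~\ref{lem:diag_d_grid_ancilla} when $m\ge 3n$. Your discussion of register placement and ancilla reuse is more explicit than the paper's, but the argument is the same.
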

\begin{proof}
Any $n$-qubit  $\Lambda_n$ can be implemented in depth
\[
\begin{cases}
    O(2^n/n), & \text{ if $m=0$, \qquad (\text{Lemma}~\ref{lem:diag_grid_withoutancilla_main})}\\
    O\Big(n^2+d2^{\frac{n}{d+1}}+\max\limits_{j\in\{2,\ldots,d\}}\Big\{\frac{d2^{n/j}}{(\Pi_{i=j}^d n_i)^{1/j}}\Big\}+\frac{2^n}{n+m}\Big) & \text{ if $m\ge 3n$. \qquad (\text{Lemma}~\ref{lem:diag_grid_ancillary})}
\end{cases}
\]
under $\Grid_{n+m}^{n_1,n_2,\ldots,n_d}$ constraint. If $0 < m < 3n$ we do not use the ancillary qubits.
The result follows from Lemma~\ref{lem:UCG_decomposition}. The case where $n_i=(n+m)^{1/d}$ is also dealt with in Lemma~\ref{lem:diag_grid_ancillary}. 
 
\end{proof}

\begin{theorem}[Theorem \ref{thm:QSP_grid_main} (Case 3)]\label{thm:QSP_grid}
Any $n$-qubit quantum state can be prepared by a quantum circuit of depth
\begin{equation*}
   O\Big(n^3+d2^{\frac{n}{d+1}}+\max\limits_{j\in\{2,\ldots,d\}}\Big\{\frac{d2^{n/j}}{(\Pi_{i=j}^d n_i)^{1/j}}\Big\}+\frac{2^n}{n+m}\Big),
\end{equation*} 
under $\Grid_{n+m}^{n_1,n_2,\ldots,n_d}$ constraint, using $m \ge 0$ ancillary qubits. 
If $n_1=n_2=\cdots=n_d$, the depth is $O\left(n^3+d2^{\frac{n}{d+1}}+\frac{2^n}{n+m}\right)$.
\end{theorem}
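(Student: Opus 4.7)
The plan is to reduce the QSP problem to a sequence of uniformly controlled gates, and then apply the already-established UCG depth bound under grid constraint together with a telescoping-geometric-series argument.

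First, I would invoke Lemma~\ref{lem:QSP_framework_UCG} to express the QSP task as the sequential product $V_n (V_{n-1}\otimes \mathbb{I}_1)\cdots (V_1\otimes \mathbb{I}_{n-1})$, where $V_j$ is a $j$-qubit UCG. Since these are applied in series, the total depth is the sum of the depths of the individual UCGs. For each $V_j$, I would embed it in the ambient $(n+m)$-vertex grid, using the first $j$ qubits of ${\sf R}_{\rm inp}$ as its working qubits and treating the remaining $n+m-j$ qubits as ancillas. Applying Lemma~\ref{lem:UCG_grid} to $V_j$ (with $j$ playing the role of $n$ and with $n+m-j$ ancillas, so that $j + (n+m-j) = n+m$), the depth of $V_j$ is bounded by
\begin{equation*}
D_j = O\Big(j^2+d\,2^{\frac{j}{d+1}}+\max_{k\in\{2,\ldots,d\}}\Big\{\tfrac{d\,2^{j/k}}{(\Pi_{i=k}^d n_i)^{1/k}}\Big\}+\tfrac{2^j}{n+m}\Big).
\end{equation*}

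Next, I would sum $D_j$ over $j=1,\ldots,n$. The first contribution gives $\sum_{j=1}^n j^2 = O(n^3)$, which explains why the $n^2$ term in Lemma~\ref{lem:UCG_grid} becomes $n^3$ in the QSP statement. Each of the other three terms is geometric in $j$ with common ratio strictly greater than $1$ (namely $2^{1/(d+1)}$, $2^{1/k}\ge 2^{1/d}$, and $2$, respectively), so each sum is dominated by its $j=n$ entry up to a constant factor. Summing thus yields exactly the claimed depth
\begin{equation*}
O\Big(n^3+d\,2^{\frac{n}{d+1}}+\max_{j\in\{2,\ldots,d\}}\Big\{\tfrac{d\,2^{n/j}}{(\Pi_{i=k}^d n_i)^{1/j}}\Big\}+\tfrac{2^n}{n+m}\Big).
\end{equation*}
The special case $n_1=\cdots=n_d$ follows because the inner maximum is then absorbed by $d\,2^{n/(d+1)}$.

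I do not anticipate any conceptual obstacle: the heavy lifting has already been done in Lemma~\ref{lem:UCG_grid} (which itself packages Lemmas~\ref{lem:UCG_decomposition}, \ref{lem:diag_grid_withoutancilla_main}, and \ref{lem:diag_grid_ancillary}). The only mildly delicate point is verifying that embedding $V_j$ in the full $(n+m)$-vertex grid, rather than in an optimally chosen sub-grid, does not cost us; this is true because the available ancilla count $n+m-j$ is monotone non-increasing in $j$ and the grid dimensions $(n_1,\ldots,n_d)$ used in the max term are the same ambient dimensions for every $j$, so $D_j$ as written is a valid upper bound. One should also note that for very small $j$ the overhead of navigating the large grid to reach the $j$ working qubits is $O(\sum_i n_i)$, which is absorbed by the $d\,2^{n/(d+1)}$ term.
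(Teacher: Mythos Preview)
Your proposal is correct and follows essentially the same approach as the paper: decompose QSP into $n$ UCGs via Lemma~\ref{lem:QSP_framework_UCG}, bound each $V_k$ by Lemma~\ref{lem:UCG_grid}, and sum the resulting geometric series so that the $k^2$ term accumulates to $n^3$ while each exponential term is dominated by its $k=n$ entry. Your additional remarks about embedding $V_j$ in the ambient grid and absorbing the $O(\sum_i n_i)$ navigation cost are sound and simply make explicit what the paper leaves implicit.
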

\begin{proof}
By Lemma~\ref{lem:QSP_framework_UCG}, any $n$-qubit QSP circuit can be decomposed into $n$ UCGs, $V_1,V_2,\ldots,V_n$ of growing size. Combined with Lemma~\ref{lem:UCG_grid}, this gives a circuit depth upper bound of
\[\sum_{k=1}^{n}O\Big(k^2+d2^{\frac{k}{d+1}}+\max_{j\in\{2,\ldots,d\}}\Big\{\frac{d2^{k/j}}{(\Pi_{i=j}^d n_i)^{1/j}}\Big\}+\frac{2^k}{k+m}\Big)=O\Big(n^3+d2^{\frac{n}{d+1}}+\max_{j\in\{2,\ldots,d\}}\Big\{\frac{d2^{n/j}}{(\Pi_{i=j}^d n_i)^{1/j}}\Big\}+\frac{2^n}{n+m}\Big).\]
\end{proof}

\begin{corollary}[Theorem \ref{thm:QSP_grid_main} (Case 1, 2)]\label{coro:QSP_path_grid}
Any $n$-qubit quantum state can be prepared by a circuit with $m\ge 0$ ancillary qubits, of depth 
\begin{enumerate}
    \item $O\left(2^{n/2}+\frac{2^n}{n+m}\right)$ under $\Path_{n+m}$ constraint.
    \item $O\left(2^{n/3}+\frac{2^{n/2}}{(n_2)^{1/2}}+\frac{2^n}{n+m}\right)$ under $\Grid^{n_1,n_2}_{n+m}$ constraint.
    \item $O\left(2^{n/4}+\frac{2^{n/2}}{(n_2n_3)^{1/2}}+\frac{2^{n/3}}{(n_3)^{1/3}}+\frac{2^n}{n+m}\right)$ under $\Grid^{n_1,n_2,n_3}_{n+m}$ constraint.
\end{enumerate}
\end{corollary}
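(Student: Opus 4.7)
} The plan is to obtain all three bounds as direct specializations of Theorem \ref{thm:QSP_grid} (i.e.\ Theorem \ref{thm:QSP_grid_main} Case 3) by plugging in $d=1,2,3$ and simplifying the resulting expressions. The general statement gives, for an arbitrary $d$-dimensional grid of size $n_1\times \cdots \times n_d$ (with $n_1 \ge \cdots \ge n_d$), a QSP depth of
\[
O\Big(n^3 + d\,2^{n/(d+1)} + \max_{j\in\{2,\ldots,d\}}\Big\{\tfrac{d\,2^{n/j}}{(\prod_{i=j}^d n_i)^{1/j}}\Big\} + \tfrac{2^n}{n+m}\Big),
\]
so the task reduces to reading off each case and absorbing the polynomial $n^3$ term into an exponential term.

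First, for $d=1$ (the path), there is no max-term, and the bound becomes $O(n^3 + 2^{n/2} + 2^n/(n+m))$. Since $n^3 = o(2^{n/2})$ for all sufficiently large $n$, the $n^3$ is absorbed into the $2^{n/2}$ term, giving the claimed $O(2^{n/2} + 2^n/(n+m))$. Next, for $d=2$ the max is over a single index $j=2$, yielding $O(n^3 + 2^{n/3} + 2^{n/2}/n_2^{1/2} + 2^n/(n+m))$; again $n^3 = o(2^{n/3})$ is absorbed, giving case~2. Finally, for $d=3$ the max ranges over $j\in\{2,3\}$, producing the three exponential terms $2^{n/4}$, $2^{n/3}/n_3^{1/3}$, and $2^{n/2}/(n_2 n_3)^{1/2}$, together with the $2^n/(n+m)$ term; again $n^3 = o(2^{n/4})$, yielding case~3.

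There is essentially no obstacle here: the corollary is a notational repackaging that highlights the practically important low-dimensional cases ($d=1,2,3$, corresponding to chains, 2D and 3D grids found in current superconducting hardware). The only verification needed is that the $n^3$ polynomial overhead (which arises from the summation $\sum_{k=1}^n k^2$ when pasting together the depth bounds for $V_1,\ldots,V_n$ in Lemma \ref{lem:QSP_framework_UCG}) is dominated by the leading exponential term $2^{n/(d+1)}$ for each of $d\in\{1,2,3\}$, which is immediate.
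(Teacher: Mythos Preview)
Your proposal is correct and matches the paper's approach exactly: the paper simply notes that the $\Path_{n+m}$ result follows by setting $d=1$ in Theorem~\ref{thm:QSP_grid}, with the $d=2,3$ cases handled identically. Your additional remarks about absorbing the $n^3$ term and unpacking the $\max$ into a sum are valid and in fact more explicit than the paper's one-line justification.
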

Note that the $\Path_{n+m}$ result holds by setting $d=1$ in Theorem \ref{thm:QSP_grid}. 

\subsubsection{QSP under $\Expander_{n+m}$ constraints}
\label{sec:QSP_expander}
\begin{lemma}\label{lem:UCG_expander}
Any $n$-qubit UCG $V_n$ can be realized by a quantum circuit of depth
\begin{equation*}
    O\left(n^2+\frac{\log(n+m)2^n}{n+m}\right)
\end{equation*}
under $\Expander_{n+m}$ constraint, using $m\ge 0$ ancillary qubits.
\end{lemma}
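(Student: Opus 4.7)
The plan is to reduce the UCG synthesis problem to diagonal unitary synthesis via Lemma~\ref{lem:UCG_decomposition}, which expresses any $n$-qubit UCG $V_n$ as a product of three $n$-qubit diagonal unitaries interleaved with four single-qubit gates (the latter contributing only $O(1)$ depth). This mirrors exactly the strategy used for Lemma~\ref{lem:UCG_grid} in the grid setting, so the task reduces to plugging in the correct diagonal unitary bounds under expander constraint.

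The proof splits into two cases depending on the number of ancillary qubits $m$. If $m < \Omega(n)$ (in particular if $m=0$), we ignore the ancillae and apply Lemma~\ref{lem:diag_expander_withoutancilla}, which gives a depth bound of $O(\log(n)\, 2^n/n)$ per diagonal factor; summed over three factors this yields total depth $O(\log(n)\, 2^n/n)$, which fits within the claimed bound since $\log(n+m)/(n+m) = \Omega(\log(n)/n)$ in this regime. If $m \ge \Omega(n)$, we apply Lemma~\ref{lem:diag_expander_ancilla}, which gives depth $O(n^2 + \log(m)\, 2^n/m)$ per diagonal factor; again summing three factors gives depth $O(n^2 + \log(m)\, 2^n/m) = O(n^2 + \log(n+m)\, 2^n/(n+m))$ because $m = \Theta(n+m)$ in this regime.

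I do not anticipate any genuine obstacle here: both diagonal unitary upper bounds under expander constraint have already been established (Lemmas~\ref{lem:diag_expander_withoutancilla} and~\ref{lem:diag_expander_ancilla}), and Lemma~\ref{lem:UCG_decomposition} is an off-the-shelf reduction. The only minor bookkeeping is to verify that the two case bounds can be unified under the single expression $O\!\left(n^2 + \frac{\log(n+m)\, 2^n}{n+m}\right)$, which follows from the elementary observation that $\log(n+m)/(n+m) = \Theta(\log(n)/n)$ when $m = O(n)$ and $\log(n+m)/(n+m) = \Theta(\log(m)/m)$ when $m = \Omega(n)$. No new technical ideas beyond those already developed for the diagonal case are needed.
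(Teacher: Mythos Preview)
Your proposal is correct and follows essentially the same approach as the paper: reduce via Lemma~\ref{lem:UCG_decomposition} to three diagonal unitaries, then invoke Lemma~\ref{lem:diag_expander_withoutancilla} when $m<\Omega(n)$ (ignoring the ancillae) and Lemma~\ref{lem:diag_expander_ancilla} when $m\ge\Omega(n)$. The paper's proof is slightly terser but identical in structure; your added bookkeeping to unify the two cases under the single bound is a welcome clarification that the paper leaves implicit.
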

\begin{proof}
Any $n$-qubit diagonal unitary $\Lambda_n$ can be implemented in depth
\[
\begin{cases}
    O(\log(n)2^n/n), & \text{if $m=0$, \qquad (\text{Lemma}~\ref{lem:diag_expander_withoutancilla})}\\
    O\left(n^2+\frac{\log(m)2^n}{m}\right), &\text{ if  $m\ge \Omega(n)$. \quad (\text{Lemma}~\ref{lem:diag_expander_ancilla})}
\end{cases}
\]
under $\Expander_{n+m}$ constraint. If $0 < m < O(n)$ we do not use the ancillary qubits. This result follows from Lemma \ref{lem:UCG_decomposition}.
\end{proof}

\qspexpander*
\begin{proof}
By Lemma~\ref{lem:QSP_framework_UCG}, any $n$-qubit QSP circuit can be decomposed into $n$ UCGs, $V_1,V_2,\ldots,V_n$ of growing size. Combined with Lemma~\ref{lem:UCG_expander}, this gives a circuit depth upper bound of
\[\sum_{k=1}^{n}O\left(k^2+\frac{\log(k+m)2^k}{k+m}\right)=O\left(n^3+\frac{\log(n+m)2^n}{n+m}\right).\]
\end{proof}

\subsubsection{QSP under general graph $G$ constraints}
\label{sec:QSP_graph_general}
\begin{lemma}\label{lem:UCG_graph}
Any $n$-qubit UCG $V_n$ can be implemented by a quantum circuit of size and depth $O(2^n)$ under arbitrary graph constraint, using no ancillary qubits.
\end{lemma}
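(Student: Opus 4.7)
The plan is to reduce the UCG implementation problem to the diagonal unitary implementation problem via the decomposition already established in Lemma~\ref{lem:UCG_decomposition}, and then to invoke the previously proven size bound for diagonal unitaries under arbitrary graph constraint.

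First, I would apply Lemma~\ref{lem:UCG_decomposition} to write
\[
V_n = \Lambda_n''' \bigl(\mathbb{I}_{n-1}\otimes (SH)\bigr) \Lambda_n'' \bigl(\mathbb{I}_{n-1}\otimes (HS^\dagger)\bigr) \Lambda_n',
\]
where $\Lambda_n',\Lambda_n'',\Lambda_n'''$ are $n$-qubit diagonal unitaries and the intermediate $SH$, $HS^\dagger$ operators act on a single qubit. Thus it suffices to implement each $\Lambda_n^{(\cdot)}$ under the given graph constraint and concatenate, together with the four single-qubit gates whose cost is $O(1)$ in both depth and size.

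Next, I would invoke Lemma~\ref{lem:diag_graph_withoutancilla}, which states that any $n$-qubit diagonal unitary can be realized by a standard quantum circuit of size $O(2^n)$ under arbitrary graph constraint, using no ancillary qubits. Applying this to each of $\Lambda_n',\Lambda_n'',\Lambda_n'''$ and summing, the total circuit size is $3\cdot O(2^n) + O(1) = O(2^n)$. Since circuit depth is upper bounded by circuit size, the depth bound $O(2^n)$ follows immediately.

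There is no real obstacle: the statement is a direct corollary of Lemmas~\ref{lem:UCG_decomposition} and~\ref{lem:diag_graph_withoutancilla}. The only thing to verify is that the single-qubit gates $SH$ and $HS^\dagger$ act on a fixed qubit in the input register and therefore respect any graph constraint trivially, which is immediate since single-qubit gates are never constrained by the graph.
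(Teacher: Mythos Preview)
Your proposal is correct and follows exactly the paper's approach: the paper's proof simply states ``Follows directly from Lemmas~\ref{lem:UCG_decomposition} and~\ref{lem:diag_graph_withoutancilla},'' which is precisely the reduction you have spelled out in detail.
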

\begin{proof}
Follows directly from Lemmas~\ref{lem:UCG_decomposition} and \ref{lem:diag_graph_withoutancilla}.
\end{proof}

\qspgraph*
\begin{proof}
For every $i\in[n]$, UCG $V_i$ acts on $i$ qubits in $G$. We first swap the locations of these $i$ qubits such that they lie in a connected subgraph of $G$ with $i$ vertices. $V_i$ can then be implemented in depth $O(2^i)$ by Lemma \ref{lem:UCG_graph}, and the qubits then swapped back to their original positions. The process of swapping and unswapping the qubits can be realized by a CNOT circuit of size and depth $O(n^2)$ by Lemma \ref{lem:cnot_circuit}. The total depth and size of to implement the QSP circuit is $\sum_{i=1}^n (O(2^i)+O(n^2))=O(2^n)$. 
\end{proof}

\subsubsection{QSP under $\Tree_{n+m}(2)$ and $\Tree_{n+m}(d)$ constraints}
\label{sec:QSP_binarytree}
\paragraph{QSP under $\Tree_{n+m}(2)$ constraints}
\begin{lemma}\label{lem:UCG_binarytree}
Any $n$-qubit UCG $V_n$ can be realized by a quantum circuit of depth
\begin{equation*}
O\left(n^2\log(n)+\frac{\log(n)2^n}{n+m}\right)    
\end{equation*} 
under $\Tree_{n+m}(2)$ constraint, using $m\ge 0$ ancillary qubits.
\end{lemma}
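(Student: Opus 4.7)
The plan is to mimic the strategy used in Lemma~\ref{lem:UCG_grid} and Lemma~\ref{lem:UCG_expander}: reduce the implementation of an $n$-qubit UCG to the implementation of $n$-qubit diagonal unitary matrices, then plug in the depth bounds already established for $\Lambda_n$ under $\Tree_{n+m}(2)$ constraint. Concretely, by Lemma~\ref{lem:UCG_decomposition}, any $n$-qubit UCG $V_n$ decomposes as
\begin{equation*}
    V_n = \Lambda_n''' (\mathbb{I}_{n-1}\otimes SH)\Lambda_n''(\mathbb{I}_{n-1}\otimes HS^\dagger)\Lambda_n',
\end{equation*}
which is a product of three $n$-qubit diagonal unitaries together with four single-qubit gates acting on the last qubit. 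The single-qubit gates contribute only $O(1)$ depth (regardless of the graph), so the depth of $V_n$ is, up to constants, three times the depth of the worst $\Lambda_n$ realized on the same constraint graph and with the same ancilla budget.

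Next I would invoke the previously established bounds for diagonal unitaries on the complete binary tree. If $m \ge 3n$, Lemma~\ref{lem:diag_binarytree_withancilla} yields depth $O\!\left(n^2\log n + \frac{\log(n)\,2^n}{m}\right)$ for each $\Lambda_n$ factor. Since $m \ge 3n$ implies $m = \Theta(n+m)$, this is $O\!\left(n^2\log n + \frac{\log(n)\,2^n}{n+m}\right)$. If instead $m < 3n$, I would simply discard the ancillary qubits and appeal to Lemma~\ref{lem:diag_tree_withoutancilla}~(Case~1), giving depth $O(\log(n)\,2^n/n)$. In this regime $n+m = \Theta(n)$, so this matches $O(\log(n)\,2^n/(n+m))$, and the $n^2\log n$ term is trivially accommodated. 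Combining the two regimes, every $\Lambda_n$ factor can be implemented in depth $O\!\left(n^2\log n + \frac{\log(n)\,2^n}{n+m}\right)$ under $\Tree_{n+m}(2)$ constraint.

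Summing the depths of the three diagonal blocks and the constant-depth single-qubit layers gives the claimed upper bound on $V_n$. There is no real obstacle here: the lemma is a direct corollary of Lemma~\ref{lem:UCG_decomposition} together with Lemmas~\ref{lem:diag_tree_withoutancilla} and~\ref{lem:diag_binarytree_withancilla}, and the only item requiring a line of justification is the case split on whether $m$ is large enough to invoke the ancilla-based construction, so that the $\frac{\log(n)\,2^n}{n+m}$ form holds uniformly in $m \ge 0$.
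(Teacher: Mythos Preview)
Your proposal is correct and matches the paper's own proof, which simply states that the result follows directly from Lemmas~\ref{lem:UCG_decomposition}, \ref{lem:diag_bianrytree_withoutancilla} and \ref{lem:diag_binarytree_withancilla}. Your explicit case split on $m \gtrless 3n$ and the verification that the two regimes both fit the stated bound is exactly the implicit reasoning behind that one-line proof.
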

\begin{proof}
Follows directly from Lemmas~\ref{lem:UCG_decomposition}, \ref{lem:diag_bianrytree_withoutancilla} and \ref{lem:diag_binarytree_withancilla}.
\end{proof}

\begin{theorem}[Theorem \ref{thm:QSP_tree} (Case 1)]\label{thm:QSP_binarytree}
Any $n$-qubit quantum state can be realized by a  quantum circuit of depth 
\begin{equation*}
O\left(n^3\log(n)+\frac{\log(n)2^n}{n+m}\right)    
\end{equation*}
under $\Tree_{n+m}(2)$ constraint, using $m\ge 0$ ancillary qubits.
\end{theorem}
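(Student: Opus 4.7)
The plan is to follow the same two-step recipe used in the proofs of Theorems \ref{thm:QSP_grid} and \ref{thm:QSP_expander}: decompose the QSP circuit into a cascade of uniformly controlled gates via Lemma~\ref{lem:QSP_framework_UCG}, implement each UCG by appealing to Lemma~\ref{lem:UCG_binarytree}, and then sum the depths.

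First, by Lemma~\ref{lem:QSP_framework_UCG}, any $n$-qubit QSP circuit can be written as a product of $n$ UCGs $V_1, V_2, \ldots, V_n$, where $V_k$ acts on $k$ qubits (the first $k-1$ being controls and the $k$-th being the target). To apply Lemma~\ref{lem:UCG_binarytree} to each $V_k$, I would first fix, once and for all, a nested family of connected subtrees $T_1 \subset T_2 \subset \cdots \subset T_n \subseteq \Tree_{n+m}(2)$ with $|T_k|=k$, so that $V_k$ can be regarded as acting on a $k$-qubit binary-tree-constrained sub-problem with $(n+m)-k$ ancillary qubits available (those qubits of $\Tree_{n+m}(2) - T_k$ remain usable as ancilla). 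Such a nested family exists by a simple depth-first labelling of the tree. Lemma~\ref{lem:UCG_binarytree} then bounds the depth of $V_k$ by $O\!\left(k^2\log(k) + \tfrac{\log(k)\,2^k}{k+m}\right)$, using up to $m$ ancillary qubits inside $\Tree_{n+m}(2)$ (the ``extra'' free qubits in $T_n\setminus T_k$ only help, and never hurt, the bound).

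Summing these bounds across $k=1,\ldots,n$ and using that $\log(k)\le \log(n)$ and $2^k/(k+m) \le 2^n/(n+m)$ contributes a geometric-style tail dominated by the $k=n$ term gives
\begin{equation*}
\sum_{k=1}^{n} O\!\left(k^2\log(k) + \frac{\log(k)\,2^k}{k+m}\right) \;=\; O\!\left(n^3\log(n) + \frac{\log(n)\,2^n}{n+m}\right),
\end{equation*}
which is the claimed depth bound.

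The only non-routine point is justifying the nested subtree choice and checking that, when ancillae living outside $T_k$ are used by the $V_k$ circuit, the overhead of routing them into $T_k$ is already absorbed in the bound of Lemma~\ref{lem:UCG_binarytree}; this is automatic because that lemma is proved for an arbitrary complete binary tree with $k$ input qubits and any number $m'\ge 0$ of ancillary qubits (which here is just $m$, with the extra qubits in $T_n\setminus T_k$ relabelled as part of the ancilla register). No new combinatorial estimate is required, so I do not anticipate a substantive obstacle beyond this bookkeeping.
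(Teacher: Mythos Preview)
Your proposal is correct and follows essentially the same approach as the paper: decompose QSP into $n$ UCGs via Lemma~\ref{lem:QSP_framework_UCG}, bound each $V_k$ by Lemma~\ref{lem:UCG_binarytree}, and sum. The paper's proof is in fact terser than yours---it simply writes $\sum_{k=1}^{n}O\!\left(k^2\log(k)+\frac{\log(k)2^k}{k+m}\right)=O\!\left(n^3\log(n)+\frac{\log(n)2^n}{n+m}\right)$ without the nested-subtree bookkeeping you add, so your extra care about where the $k$ input qubits sit inside $\Tree_{n+m}(2)$ is reasonable due diligence rather than a deviation.
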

\begin{proof}
By Lemma~\ref{lem:QSP_framework_UCG}, any $n$-qubit QSP circuit can be decomposed into $n$ UCGs, $V_1,V_2,\ldots,V_n$ of growing size. Combined with Lemma~\ref{lem:UCG_binarytree}, this gives a circuit depth upper bound of
\[\sum_{k=1}^{n}O\left(k^2\log(k)+\frac{\log(k)2^k}{k+m}\right)=O\left(n^3\log(n)+\frac{\log(n)2^n}{n+m}\right).\]
\end{proof}

In Appendix~\ref{append:binary_tree_improvement} we show that using a unary encoding for the QSP circuit and a different circuit framework, the circuit depth in Theorem \ref{thm:QSP_binarytree} can be improved to $O\left(n^2\log^2(n)+\frac{\log(n)2^n}{n+m}\right)$ if $m\le o(2^n)$, and $O(n^2\log(n))$ if $m\ge \Omega(2^n)$.

\paragraph{QSP under $\Tree_{n+m}(d)$ and $\Star_{n+m}$ constraints}

\begin{theorem}[Theorem \ref{thm:QSP_tree} (Case 2)]\label{thm:QSP_darytree}
Any $n$-qubit quantum state can be realized by a  quantum circuit of depth \[O\left(n^2d\log_d (n+m)\log_d(n+d)+\frac{(n+d)\log_d(n+d) 2^{n}}{n+m}\right),\]
under $\Tree_{n+m}(d)$ constraint, using $m\ge 0$ ancillary qubits, for $d< n+m$.
\end{theorem}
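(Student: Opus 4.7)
The plan is to follow exactly the template used for the other graph families in the preceding subsections (paths, grids, expanders, binary trees): decompose QSP into a sequence of growing uniformly controlled gates, decompose each UCG into diagonal unitaries, and then invoke the diagonal‐unitary depth bounds for $\Tree_{n+m}(d)$ that were established in Section~\ref{sec:diag_with_ancilla} (with the ancilla‐free bound as a fallback when $m$ is small).

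First I would state, as an intermediate lemma, a depth bound for a single $k$-qubit UCG $V_k$ under $\Tree_{n+m}(d)$ constraint, namely
\[
O\!\left(kd\,\log_d(k+m)\log_d(k+d) \;+\; \frac{(k+d)\log_d(k+d)\,2^{k}}{k+m}\right).
\]
This follows immediately from Lemma~\ref{lem:UCG_decomposition}, which writes any UCG as three diagonal unitaries plus four single‐qubit gates, combined with Lemma~\ref{lem:diag_tree_ancilla} (Case~2) / Lemma~\ref{lem:diag_d_tree_ancilla} when $m\ge \Omega(k)$ ancillary qubits are available and with Lemma~\ref{lem:diag_tree_withoutancilla} (Case~2) when $m=O(k)$ (in which case we simply ignore the ancillary qubits). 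A minor technical point, which parallels the proof of Theorem~\ref{thm:QSP_graph}, is that $V_k$ acts on only $k$ of the $n+m$ vertices of the tree, so before applying the diagonal‐unitary construction we first use $O(n\log_d(n+m))$‐depth SWAPs (via Lemma~\ref{lem:cnot_circuit} / Lemma~\ref{lem:cnot_path_constraint}) to route the $k$ active qubits into a connected $k$-vertex subtree and reverse this routing afterwards; this additive overhead is absorbed into the first term of the bound.

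Next, by Lemma~\ref{lem:QSP_framework_UCG}, an arbitrary $n$-qubit QSP circuit is the composition of UCGs $V_1, V_2, \ldots, V_n$ of growing width. Summing the per‐UCG depth bound gives
\[
\sum_{k=1}^{n} O\!\left(kd\,\log_d(k+m)\log_d(k+d) + \frac{(k+d)\log_d(k+d)\,2^{k}}{k+m}\right).
\]
The first term telescopes to $O\bigl(n^{2} d\,\log_d(n+m)\log_d(n+d)\bigr)$ since the logarithmic factors are monotone in $k$. For the second term, the ratio of consecutive summands is essentially $2$, so the sum is dominated by its last term, yielding $O\!\bigl(\tfrac{(n+d)\log_d(n+d)\,2^{n}}{n+m}\bigr)$. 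Combining the two contributions produces the claimed bound.

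I do not expect any substantial obstacle: the proof is a direct assembly of previously established lemmas, and the only care needed is (i) ensuring that the diagonal‐unitary bound used is the one appropriate to the regime of $m$ relative to $k$, and (ii) justifying the geometric‐series collapse in the second term so that the final expression contains $(n+d)\log_d(n+d)$ rather than a sum. Both steps are routine and mirror the corresponding arguments for Theorems~\ref{thm:QSP_grid} and~\ref{thm:QSP_binarytree}.
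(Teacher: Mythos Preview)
Your proposal is correct and matches the paper's proof essentially line for line: the paper likewise invokes Lemmas~\ref{lem:diag_d_tree_noancilla} and~\ref{lem:diag_d_tree_ancilla} for the per-$k$ diagonal-unitary depth, applies Lemma~\ref{lem:UCG_decomposition} and Lemma~\ref{lem:QSP_framework_UCG}, and then sums over $k$ to obtain the stated bound. The SWAP-routing detail you mention is not spelled out in the paper's proof but is a reasonable clarification; otherwise the arguments are the same.
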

\begin{proof}
Lemmas \ref{lem:diag_d_tree_noancilla} and \ref{lem:diag_d_tree_ancilla} show that any $k$-qubit diagonal matrix can be implemented by a quantum circuit of depth $O\left(kd\log_d (k+m)\log_d(k+d)+\frac{(k+d)\log_d(k+d) 2^{k}}{k+m}\right)$ under $\Tree_{n+m}(d)$ constraint. By the frameworks of Lemma ~\ref{lem:UCG_decomposition} and Lemma~\ref{lem:QSP_framework_UCG}, any $n$-qubit quantum state can be prepared by a circuit of depth 
\begin{align*}
    & \sum_{k=1}^n O\left(kd\log_d (k+m)\log_d(k+d)+\frac{(k+d)\log_d(k+d) 2^{k}}{k+m}\right)\\
    = & O\left(n^2d\log_d (n+m)\log_d(n+d)+\frac{(n+d)\log_d(n+d) 2^{n}}{n+m}\right).
\end{align*}
\end{proof}
\begin{corollary}[Theorem \ref{thm:QSP_tree} (Case 3)]\label{coro:QSP_star}
Any $n$-qubit quantum state can be realized by a  quantum circuit of depth $O\left(2^n\right)$
under $\Star_{n+m}$ constraint, using $m\ge 0$ ancillary qubits.
\end{corollary}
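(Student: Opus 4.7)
The plan is to reduce the problem to the no-ancilla case on a sub-star, and then invoke results already established in the paper. Since the number of ancillary qubits is at our disposal, the simplest route is to ignore all $m$ ancillae: designate the central vertex of $\Star_{n+m}$ together with any $n-1$ of its leaves as the $n$ input qubits, and leave the remaining $m$ leaves in $\ket{0}$ throughout the computation. The induced subgraph on these $n$ qubits is isomorphic to $\Star_n$, so it suffices to prepare an arbitrary $n$-qubit state under $\Star_n$ constraint with no ancillary qubits.

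From this point one has two equivalent routes. The most direct is to invoke Theorem~\ref{thm:QSP_graph}, which already asserts that any $n$-qubit QSP circuit can be realized in depth and size $O(2^n)$ under arbitrary connected graph $G$ constraint without ancillary qubits; taking $G=\Star_n$ immediately gives the claim. Alternatively, one can re-derive the bound along the same lines as Theorems~\ref{thm:QSP_grid}, \ref{thm:QSP_binarytree}, and \ref{thm:QSP_darytree}: by Lemma~\ref{lem:QSP_framework_UCG} it is enough to implement the sequence of UCGs $V_1, V_2, \ldots, V_n$, where $V_k$ acts on $k$ qubits; by Lemma~\ref{lem:UCG_decomposition} each $V_k$ reduces to three $k$-qubit diagonal unitaries together with four single-qubit gates; and by Lemma~\ref{lem:diag_star_noancilla} each such diagonal unitary can be realized in depth $O(2^k)$ under $\Star_k$ constraint. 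Choosing the center of the star to be one of the $k$ qubits on which $V_k$ acts ensures that those $k$ qubits always induce a $\Star_k$ subgraph, so no inter-UCG permutation is required.

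Summing the depth contributions over all UCGs yields a total depth of $\sum_{k=1}^{n}O(2^k)=O(2^n)$, matching the claimed bound. The only mildly non-routine step is the observation that fixing the central vertex of $\Star_{n+m}$ as a ``universal'' participant for every $V_k$ avoids any SWAP overhead between consecutive UCGs; once that is noted, every other ingredient is a direct quotation of a preceding lemma and the proof is essentially immediate. I expect no real obstacle beyond this light bookkeeping, which is why this corollary is stated as a corollary rather than as a separate theorem.
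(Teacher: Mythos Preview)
Your proposal is correct and matches the paper's proof exactly: the paper's argument is simply ``Do not use the ancillary qubits. The result follows Theorem~\ref{thm:QSP_graph},'' which is precisely your primary route. Your alternative derivation via Lemmas~\ref{lem:QSP_framework_UCG}, \ref{lem:UCG_decomposition}, and \ref{lem:diag_star_noancilla} also works and merely unpacks what Theorem~\ref{thm:QSP_graph} already encapsulates.
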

\begin{proof}
Do not use the ancillary qubits. The result follows Theorem~\ref{thm:QSP_graph}. 
\end{proof}

\subsection{Circuit complexity for GUS under graph constraints (Proof of Theorems \ref{thm:US_path_grid} - \ref{thm:US_graph})}
\label{sec:US_graph}

\begin{lemma}[\cite{mottonen2005decompositions}]\label{lem:unitary_decomposition}
Any $n$-qubit unitary matrix $U\in\mathbb{C}^{2^{n}\times 2^n}$ can be decomposed into $2^{n}-1$ $n$-qubit UCGs.
\end{lemma}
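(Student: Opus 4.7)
The plan is to prove this by induction on $n$, using the cosine--sine decomposition (CSD) as the engine; a standard textbook fact is that any $2^n\times 2^n$ unitary $U$, viewed in $2\times 2$ block form with $2^{n-1}\times 2^{n-1}$ blocks (this block structure corresponds to conditioning on the value of the last qubit), admits the factorization
\[
U \;=\; \begin{pmatrix} L_1 & 0 \\ 0 & L_2 \end{pmatrix} \begin{pmatrix} C & -S \\ S & C \end{pmatrix} \begin{pmatrix} R_1 & 0 \\ 0 & R_2 \end{pmatrix},
\]
where $L_1,L_2,R_1,R_2\in\mathbb{C}^{2^{n-1}\times 2^{n-1}}$ are unitaries acting on qubits $1,\ldots,n-1$ and $C,S$ are real nonnegative diagonal matrices with $C^2+S^2=\mathbb{I}$. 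For the base case $n=1$, a single-qubit unitary is itself (tautologically) a $1$-qubit UCG, giving $T(1)=1=2^1-1$.

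For the inductive step I want to account for the three factors above. The middle factor is, by direct inspection, exactly an $n$-qubit UCG: for each computational basis state of qubits $1,\ldots,n-1$ it applies a $y$-rotation with angle determined by the corresponding pair $(C_x,S_x)$ to qubit $n$. That contributes $1$ UCG. The two outer factors are block-diagonal with $(n-1)$-qubit unitary blocks controlled by qubit $n$ (``quantum multiplexers''); these are not themselves UCGs because their ``target'' is the first $n-1$ qubits collectively. The task is therefore to decompose each multiplexer into $2^{n-1}-1$ $n$-qubit UCGs, which combined with the single middle UCG yields $T(n)=2T(n-1)+1$, and hence $T(n)=2^n-1$ by the standard geometric-sum unfolding.

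The central step is the multiplexer decomposition. Here I would strengthen the inductive hypothesis so that it produces, for any $(n-1)$-qubit unitary, a decomposition into $2^{n-1}-1$ UCGs whose sequence of target qubits depends only on $n-1$ and not on the unitary being decomposed (this comes for free if one always applies CSD to split off the same qubit at each level of the recursion). Given this, decompose $L_1$ and $L_2$ into $L_i = V_1^{(i)}V_2^{(i)}\cdots V_{2^{n-1}-1}^{(i)}$ so that $V_j^{(1)}$ and $V_j^{(2)}$ share the same target qubit $t_j\in\{1,\ldots,n-1\}$. Then
\[
\begin{pmatrix} L_1 & 0 \\ 0 & L_2 \end{pmatrix} \;=\; \prod_{j=1}^{2^{n-1}-1}\begin{pmatrix} V_j^{(1)} & 0 \\ 0 & V_j^{(2)} \end{pmatrix},
\]
and each factor on the right is an $n$-qubit unitary whose target is qubit $t_j$ and whose controls are the $n-2$ controls of $V_j^{(i)}$ together with qubit $n$; after a trivial relabeling of qubits (which is allowed because the UCG definition only distinguishes ``target'' from ``controls''), this is an $n$-qubit UCG. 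The same argument applied to $\mathrm{diag}(R_1,R_2)$ yields another $2^{n-1}-1$ UCGs.

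The main obstacle is precisely this alignment of targets across $L_1$ and $L_2$: without it, $\mathrm{diag}(V_j^{(1)},V_j^{(2)})$ would not be a UCG (one would in general have two different single-qubit ``targets'' sitting inside one block-diagonal, which is no longer of UCG block form). The remedy is to bake the ``target schedule'' into the inductive statement, which CSD accommodates naturally because at each level one is free to choose which qubit defines the $2\times 2$ blocking. The rest is bookkeeping: the recurrence $T(n)=2T(n-1)+1$ with $T(1)=1$ gives $T(n)=2^n-1$, matching the statement.
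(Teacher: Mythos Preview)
The paper does not supply its own proof of this lemma; it is quoted from \cite{mottonen2005decompositions} and used as a black box. Your proposal is correct and is precisely the cosine--sine decomposition argument of that reference: recursively splitting off one qubit via CSD, counting the middle factor as a single UCG, and showing that each outer multiplexer $\mathrm{diag}(L_1,L_2)$ (resp.\ $\mathrm{diag}(R_1,R_2)$) decomposes into $2^{n-1}-1$ UCGs by aligning the target schedules of $L_1$ and $L_2$. The strengthened induction hypothesis you flag---that the recursion always splits on the same qubit at each level so that the target sequence is data-independent---is exactly the point that makes the multiplexer step go through, and is how \cite{mottonen2005decompositions} proceeds. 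The paper itself anticipates your ``relabeling'' remark: immediately after stating the lemma it notes that the target qubit of these UCGs may be arbitrary, generalizing the fixed-target form in Eq.~\eqref{eq:UCG}.
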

Note that the target qubit of the UCGs in Lemma~\ref{lem:unitary_decomposition} may be arbitrary, which generalizes the UCGs in Eq. \eqref{eq:UCG} for which the target is always the $n$-th qubit. 

\begin{theorem}[Theorem \ref{thm:US_path_grid} (Case 3)]\label{thm:US_grid}
Any $n$-qubit unitary can be realized by a quantum circuit of depth \[O\Big(n^22^n+d4^{\frac{(d+2)n}{2(d+1)}}+\max_{j\in\{2,\ldots,d\}}\Big\{\frac{d4^{(j+1)n/(2j)}}{(\Pi_{i=j}^d n_i)^{1/j}}\Big\}+\frac{4^n}{n+m}\Big)\] 
under $\Grid^{n_1,n_2,\ldots,n_d}_{n+m}$ constraint, using $m\ge 0$ ancillary qubits.  
When $n_1=n_2=\cdots=n_d$, the depth is $O\left(n^22^n+d4^{\frac{(d+2)n}{2(d+1)}}+\frac{4^n}{n+m}\right)$.
\end{theorem}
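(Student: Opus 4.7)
\textbf{Proof plan for Theorem \ref{thm:US_grid}.} The plan is to directly chain the two tools already in hand. By Lemma \ref{lem:unitary_decomposition}, an arbitrary $n$-qubit unitary decomposes as a product of $2^n-1$ UCGs, each acting on all $n$ input qubits. By Lemma \ref{lem:UCG_grid}, each such UCG can be implemented under $\Grid_{n+m}^{n_1,\ldots,n_d}$ constraint in depth
\[
D_{\rm UCG} \;=\; O\Big(n^2 + d\,2^{\frac{n}{d+1}} + \max_{j\in\{2,\ldots,d\}}\Big\{\frac{d\,2^{n/j}}{(\prod_{i=j}^d n_i)^{1/j}}\Big\} + \frac{2^n}{n+m}\Big),
\]
and in the balanced case $n_1=\cdots=n_d$ the middle ``max'' term drops out. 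Concatenating the $2^n-1$ UCG subcircuits sequentially yields an overall depth of $O(2^n)\cdot D_{\rm UCG}$.

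The arithmetic then collapses to exactly the stated bound via the identity $2^n\cdot 2^{n/k}=2^{(k+1)n/k}=4^{(k+1)n/(2k)}$: taking $k=d+1$ converts $d\,2^{n/(d+1)}$ into $d\,4^{(d+2)n/(2(d+1))}$, taking $k=j$ converts each entry of the maximum into $d\,4^{(j+1)n/(2j)}/(\prod_{i=j}^d n_i)^{1/j}$, and the $2^n/(n+m)$ term becomes $4^n/(n+m)$, while the $n^2$ term becomes $O(n^2 2^n)$. Ancillae can be reused across the $2^n-1$ UCGs since Lemma \ref{lem:UCG_grid} cleanly restores them to $\ket{0^m}$, so we never exceed the budget $m$.

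\textbf{Where care is needed.} One minor subtlety is that Lemma \ref{lem:unitary_decomposition}, as cited from \cite{mottonen2005decompositions}, produces UCGs whose target qubit is not necessarily the ``last'' qubit assumed in the UCG definition of Section \ref{sec:ucg} and in Lemma \ref{lem:UCG_grid}. Between successive UCGs we therefore may need to insert a relabelling of the $n$ input qubits, which is an invertible linear transformation over $\mathbb{F}_2$ and hence implementable in depth $O(n^2/\log\delta) = O(n^2)$ on the relevant sub-grid by Lemma \ref{lem:cnot_circuit}. This overhead is absorbed by the $O(n^2 2^n)$ term that is already present. I do not expect any genuine obstacle beyond this bookkeeping: the lower bound of $\Omega(4^n/(n+m))$ for GUS (even without graph constraints) shows the sequential approach cannot be asymptotically improved in the depth-dominant regime, so the crux of the argument is simply the arithmetic identity above.
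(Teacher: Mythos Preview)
Your proposal is correct and follows essentially the same approach as the paper, which simply states that the result follows from Lemmas~\ref{lem:UCG_grid} and~\ref{lem:unitary_decomposition}. Your treatment is in fact more detailed than the paper's one-line proof, including the arithmetic verification and the target-qubit relabelling subtlety (which the paper also flags in the remark following Lemma~\ref{lem:unitary_decomposition}).
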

\begin{proof}
Follows from Lemmas~\ref{lem:UCG_grid} and \ref{lem:unitary_decomposition}.
\end{proof}

\begin{corollary}[Theorem \ref{thm:US_path_grid} (Case 1, 2)]\label{coro:US_path_grid}
Any $n$-qubit unitary can be realized by a quantum circuit with $m\ge 0$ ancillary qubits, of depth
\begin{enumerate}
    \item $O\left(4^{3n/4}+\frac{4^n}{n+m}\right)$ under $\Path_{n+m}$ constraint.
    \item $O\left(4^{2n/3}+\frac{4^{3n/4}}{(n_2)^{1/2}}+\frac{4^n}{n+m}\right)$ under $\Grid^{n_1,n_2}_{n+m}$ constraint.
    \item $O\left(4^{5n/8}+\frac{4^{3n/4}}{(n_2n_3)^{1/2}}+\frac{4^{2n/3}}{(n_3)^{1/3}}+\frac{4^n}{n+m}\right)$ under $\Grid^{n_1,n_2,n_3}_{n+m}$ constraint.
\end{enumerate}
\end{corollary}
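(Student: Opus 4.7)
\textbf{Proof proposal for Corollary \ref{coro:US_path_grid}.}

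The plan is to obtain all three bounds by direct specialization of Theorem \ref{thm:US_grid} to $d=1$, $d=2$, and $d=3$ respectively, together with a small simplification to absorb lower-order terms. Since Theorem \ref{thm:US_grid} (combined with Lemmas \ref{lem:UCG_grid} and \ref{lem:unitary_decomposition}) already delivers the depth bound
\[
O\Big(n^22^n+d4^{\frac{(d+2)n}{2(d+1)}}+\max_{j\in\{2,\ldots,d\}}\Big\{\frac{d\,4^{(j+1)n/(2j)}}{(\Pi_{i=j}^d n_i)^{1/j}}\Big\}+\frac{4^n}{n+m}\Big)
\]
for an arbitrary $d$-dimensional grid, nothing new needs to be constructed; what remains is bookkeeping.

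First I would handle the path case by setting $d=1$, so that the exponent $(d+2)/(2(d+1))=3/4$ gives the $4^{3n/4}$ term, while the max-term over $j\in\{2,\ldots,d\}$ is vacuous. This yields $O(n^2 2^n + 4^{3n/4}+4^n/(n+m))$, and since $n^2 2^n = o(4^{3n/4})$ the first term is absorbed, recovering Case 1. Next, for $d=2$, the exponent is $(d+2)/(2(d+1))=4/6=2/3$, giving the leading $4^{2n/3}$ term; the only value of $j$ in the max is $j=2$, producing $\tfrac{4^{3n/4}}{(n_2)^{1/2}}$ (with the constant factor $d=2$ swallowed by the $O(\cdot)$). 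Again $n^2 2^n$ is dominated by $4^{2n/3}$, so Case 2 follows. Finally, for $d=3$, the exponent $(d+2)/(2(d+1))=5/8$ gives $4^{5n/8}$, and the max ranges over $j\in\{2,3\}$: $j=2$ contributes $\tfrac{4^{3n/4}}{(n_2 n_3)^{1/2}}$ and $j=3$ contributes $\tfrac{4^{2n/3}}{(n_3)^{1/3}}$. As before $n^2 2^n$ is subsumed, yielding the Case 3 bound.

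The main (minor) obstacle is verifying that the $n^2 2^n$ term is indeed dominated by the exponential term $4^{(d+2)n/(2(d+1))}$ in each case, i.e., that $(d+2)/(2(d+1))>1/2$ for $d\ge 1$. This is immediate: $(d+2)/(2(d+1))-1/2 = 1/(2(d+1))>0$, so $4^{(d+2)n/(2(d+1))} = 2^{n(1+1/(d+1))}$ dominates $n^2 2^n$ asymptotically for every fixed $d\ge 1$. With this remark, each of the three bounds in the corollary follows immediately from Theorem \ref{thm:US_grid} and the $O(\cdot)$ sum rule, and no new construction is required.
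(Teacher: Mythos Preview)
Your proposal is correct and follows exactly the paper's approach: the corollary is obtained by specializing Theorem \ref{thm:US_grid} to $d=1,2,3$, with the $n^2 2^n$ term absorbed into the dominant exponential (the paper's own proof is a one-line remark to this effect).
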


Note that the case for $\Path_{n+m}$ follows from choosing $d=1$ in  Theorem \ref{thm:US_grid}. 

\begin{theorem}[Theorem \ref{thm:US_tree} (Case 1)]\label{thm:US_binarytree}
Any $n$-qubit unitary can be realized by a quantum circuit of depth $O\left(n^2\log(n)2^n+\frac{\log(n)4^n}{n+m}\right)$ under $\Tree_{n+m}(2)$ constraint, using $m\ge 0$ ancillary qubits.
\end{theorem}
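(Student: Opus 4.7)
}

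The plan is to combine the decomposition of a general unitary into uniformly controlled gates (Lemma \ref{lem:unitary_decomposition}) with the per-UCG depth bound already proved for binary trees (Lemma \ref{lem:UCG_binarytree}). Concretely, I would first invoke Lemma \ref{lem:unitary_decomposition} to write an arbitrary $n$-qubit unitary $U$ as a product of at most $2^n-1$ $n$-qubit UCGs $V^{(1)}, V^{(2)}, \ldots, V^{(2^n-1)}$. Each UCG in this decomposition may have an arbitrary target qubit, whereas Lemma \ref{lem:UCG_binarytree} (through its underlying Lemma \ref{lem:UCG_decomposition}) is stated for UCGs whose target is a fixed qubit, so a small reduction is needed.

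To handle the arbitrary target qubit of each $V^{(i)}$, I would precede and follow the implementation of $V^{(i)}$ by a pair of qubit-permutations realized via SWAPs, moving the designated target qubit into the position required by Lemma \ref{lem:UCG_binarytree} and then moving it back. Since a permutation can be expressed as an invertible linear transformation over $\mathbb{F}_2$, Lemma \ref{lem:cnot_circuit} gives a CNOT implementation of depth $O(n^2)$ on the binary tree (the leaves have degree one, so the best guarantee is $O(n^2/\log\delta) = O(n^2)$). This extra $O(n^2)$ per UCG is dominated by the per-UCG depth from Lemma \ref{lem:UCG_binarytree}, namely $O\bigl(n^2\log(n) + \log(n)\,2^n/(n+m)\bigr)$, so it is harmless.

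Summing over the at most $2^n-1$ UCGs in the decomposition, and using sequential composition, the total depth is
\begin{equation*}
(2^n-1)\cdot O\!\left(n^2\log(n) + \frac{\log(n)\,2^n}{n+m}\right)
\;=\; O\!\left(n^2\log(n)\,2^n + \frac{\log(n)\,4^n}{n+m}\right),
\end{equation*}
exactly matching the bound claimed in the theorem.

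I do not expect any serious technical obstacle here: all the heavy lifting is in Lemma \ref{lem:UCG_binarytree} (and, beneath it, in the diagonal-unitary constructions of Sections \ref{sec:diag_without_ancilla} and \ref{sec:diag_with_ancilla}). The only point that deserves care is the remark just after Lemma \ref{lem:unitary_decomposition}, that the UCGs produced by the decomposition may have an arbitrary target qubit, which is handled cleanly by the SWAP-based permutation argument above and which is absorbed into the existing bounds without changing the asymptotics.
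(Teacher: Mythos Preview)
Your proposal is correct and matches the paper's approach exactly: the paper's proof is the one-line ``Follows from Lemmas~\ref{lem:UCG_binarytree} and \ref{lem:unitary_decomposition}.'' Your extra care about the arbitrary target qubit (handled by $O(n^2)$-depth SWAPs via Lemma~\ref{lem:cnot_circuit}) is a valid and harmless refinement that the paper leaves implicit.
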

\begin{proof}
Follows from Lemmas~\ref{lem:UCG_binarytree} and \ref{lem:unitary_decomposition}.
\end{proof}

\begin{theorem}[Theorem \ref{thm:US_tree} (Case 2)]\label{thm:US_darytree}
Any $n$-qubit unitary can be realized by a quantum circuit of depth \[O\left(n2^nd\log_d (n+m)\log_d(n+d)+\frac{(n+d)\log_d(n+d) 4^{n}}{n+m}\right)\]
under $\Tree_{n+m}(d)$ constraint, using $m\ge 0$ ancillary qubits.
\end{theorem}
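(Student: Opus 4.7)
\noindent\textbf{Proof Proposal for Theorem \ref{thm:US_darytree}.} The plan is to follow the same two-level decomposition strategy already used for the binary tree case (Theorem \ref{thm:US_binarytree}), simply substituting in the $d$-ary diagonal-unitary bounds in place of the binary tree ones. First, I would establish the intermediate UCG bound: any $n$-qubit UCG $V_n$ can be implemented by a circuit of depth
\[
O\!\left(nd\log_d(n+m)\log_d(n+d) + \frac{(n+d)\log_d(n+d)\,2^{n}}{n+m}\right)
\]
under $\Tree_{n+m}(d)$ constraint with $m\ge 0$ ancillary qubits. By Lemma \ref{lem:UCG_decomposition}, $V_n$ factors into three $n$-qubit diagonal unitaries and four single-qubit gates; the diagonal pieces are handled by Lemma \ref{lem:diag_d_tree_noancilla} (when $m=0$) or Lemma \ref{lem:diag_d_tree_ancilla} (when $m \ge \Omega(n)$), and in the intermediate regime $0<m<\Omega(n)$ we simply discard the ancillas and fall back to the $m=0$ construction. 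Summing the three diagonal depths plus $O(1)$ gives the stated UCG depth bound.

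Second, by Lemma \ref{lem:unitary_decomposition}, any $n$-qubit unitary $U$ decomposes into $2^n-1$ $n$-qubit UCGs composed sequentially. Applying the UCG bound to each and summing yields total depth
\[
(2^n-1)\cdot O\!\left(nd\log_d(n+m)\log_d(n+d) + \frac{(n+d)\log_d(n+d)\,2^{n}}{n+m}\right),
\]
which simplifies to the claimed bound
\[
O\!\left(n2^{n}d\log_d(n+m)\log_d(n+d) + \frac{(n+d)\log_d(n+d)\,4^{n}}{n+m}\right).
\]

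The one subtlety worth flagging is that the UCGs produced by Lemma \ref{lem:unitary_decomposition} are allowed to have arbitrary target qubits rather than always the $n$-th qubit, whereas Lemma \ref{lem:UCG_decomposition} is stated with the last qubit as target. This is easily resolved: the diagonal factors $\Lambda_n',\Lambda_n'',\Lambda_n'''$ are invariant under permutation of the qubit roles (a diagonal unitary does not depend on which qubit is designated ``target''), while the four single-qubit gates simply need to be applied to whichever qubit plays the target role. Since $\Tree_{n+m}(d)$ has diameter $O(\log_d(n+m))$, the single-qubit gates contribute only lower-order terms, and in any case are already absorbed in the $nd\log_d(n+m)\log_d(n+d)$ factor. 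I expect no genuine technical obstacle here; the proof is essentially a direct composition of the preceding lemmas, matching the pattern of Theorems \ref{thm:QSP_darytree} and \ref{thm:US_binarytree}.
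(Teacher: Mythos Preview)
Your proposal is correct and follows essentially the same route as the paper: combine the diagonal-unitary depth bound from Lemmas~\ref{lem:diag_d_tree_noancilla} and~\ref{lem:diag_d_tree_ancilla} with Lemma~\ref{lem:UCG_decomposition} to get the UCG cost, then multiply by $O(2^n)$ via Lemma~\ref{lem:unitary_decomposition}. The target-qubit subtlety you flag is handled even more simply than you suggest---the single-qubit gates $SH$ and $HS^\dagger$ act on one qubit only and hence incur no connectivity overhead at all.
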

\begin{proof}
Lemmas \ref{lem:diag_d_tree_noancilla} and \ref{lem:diag_d_tree_ancilla} show that any $n$-qubit diagonal unitary matrix can be implemented by a quantum circuit of depth $O\left(nd\log_d (n+m)\log_d(n+d)+\frac{(n+d)\log_d(n+d) 2^{n}}{n+m}\right)$ under $\Tree_{n+m}(d)$ constraint, using $m$ ancillary qubits. By Lemma \ref{lem:UCG_decomposition} and Lemma \ref{lem:unitary_decomposition}, the total circuit depth required is therefore 
\begin{align*}
    & O\left(nd\log_d (n+m)\log_d(n+d)+\frac{(n+d)\log_d(n+d) 2^{n}}{n+m}\right)\cdot O(2^n)\\
    = & O\left(n2^nd\log_d (n+m)\log_d(n+d)+\frac{(n+d)\log_d(n+d) 4^{n}}{n+m}\right).
\end{align*}
\end{proof}

\usexpander*
\begin{proof}
Follows from Lemmas~\ref{lem:UCG_expander} and \ref{lem:unitary_decomposition}.
\end{proof}

\usgraph*
\begin{proof}
Follows from Lemmas~\ref{lem:UCG_graph} and \ref{lem:unitary_decomposition}.
\end{proof}

\begin{corollary}[Theorem \ref{thm:US_tree} (Case 3)]\label{coro:US_star}
Any $n$-qubit unitary matrix can be realized by a quantum circuit of depth $O\left(4^n\right)$
under $\Star_{n+m}$ constraint, using $m\ge 0$ ancillary qubits.
\end{corollary}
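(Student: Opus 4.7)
}

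The plan is to reduce the statement directly to Theorem~\ref{thm:US_graph}, which already gives an $O(4^n)$ depth and size bound for implementing an arbitrary $n$-qubit unitary under any connected graph constraint. The strategy is simply to refuse to use the $m$ ancillary qubits: if we can realize $U$ by an $n$-qubit standard circuit whose two-qubit gates only use edges of a connected subgraph of $\Star_{n+m}$ on the $n$ input qubits, then that same circuit is a valid circuit under the full $\Star_{n+m}$ constraint (the ancillary qubits are left untouched, hence remain in $\ket{0^m}$).

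The only subtlety is to make sure the induced subgraph on the $n$ input qubits is connected. Without loss of generality we may assume that the center of the star $\Star_{n+m}$ is one of the $n$ input qubits; otherwise, we relabel by applying a single SWAP (or simply permute the role of that qubit with an input qubit) before and after the circuit, which costs at most a $\Cnot$ circuit of depth $O(n)$ across the center, easily absorbed into the $O(4^n)$ bound. Once the center of the star is an input qubit, the subgraph of $\Star_{n+m}$ induced on the $n$ input qubits is exactly $\Star_n$, a connected graph on $n$ vertices.

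Now I apply Theorem~\ref{thm:US_graph} with graph $G=\Star_n$: any $n$-qubit unitary can be realized by a quantum circuit of size and depth $O(4^n)$ under $\Star_n$ constraint, using no ancillary qubits. Embedding this circuit into $\Star_{n+m}$ yields the claimed bound. I do not expect any real obstacle here: the construction is entirely inherited from the general graph argument used to prove Theorem~\ref{thm:US_graph} (which itself combines Lemma~\ref{lem:unitary_decomposition} with Lemma~\ref{lem:UCG_graph}), and the star geometry is used only to verify that the relevant induced subgraph is connected. In particular, having extra ancilla on the leaves of the star cannot improve the depth, because the star has maximum matching of size $1$, so only one two-qubit gate can be executed in each layer; this matches the matching lower bound of $\Omega(4^n)$ from Theorem~\ref{thm:lower_bound_tree_US} (Case 3), confirming that ignoring the ancilla is in fact optimal.
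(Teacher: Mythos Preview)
Your proposal is correct and follows essentially the same approach as the paper: the paper's proof is simply ``Do not use the ancilla. The result follows from Theorem~\ref{thm:US_graph}.'' Your additional remarks about ensuring the center of the star lies among the input qubits (so that the induced subgraph is $\Star_n$, hence connected) and about matching the lower bound are correct elaborations that the paper leaves implicit.
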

\begin{proof}
Do not use the ancilla. The result follows from Theorem \ref{thm:US_graph}.
\end{proof}

\subsection{Circuit transformation between different graph constraints }
\label{sec:circuit_transformation}
We first show a transformation between circuits under different graph constraints. 
\circuittrans*

\begin{proof}
We say that a CNOT gate acts on $e=(v_s,v_t)$, if it acts on qubit $v_s$ and $v_t$. In $\mathcal{C}'$ there are $d$ layers of gates, each layer $C_k'$ ($k\in [d]$) can be represented as $C'_k = C'_k(E)\otimes (\otimes_{i=1}^c C'_k(E_i)) $, where $C'_k(E)$ consists of single-qubit gates and CNOT gates in $C'$ acting on edges in $E$, and $C'_k(E_i)$ consists of CNOT gates in $C'$ acting on edges in $E_i$. 

Since $E$ is the edge set of $G$, the circuit $C_k'(E)$ can be realized by a circuit of depth $1$ under $G$ constraint. By assumption, for each $e=(v_s,v_t)\in E_i$, there exists a path from $v_s$ to $v_t$ of length at most $c'$ in $G$, and these paths for different edges $e\in E_i$ are disjoint. Thus, all CNOT gates in $C_k'(E_i)$ can be implemented in parallel, in depth and size $O(c')$. 
A depth-$1$ CNOT circuit under $G'$ constraint can thus be realized by a CNOT circuit of depth $1+O(c')\cdot c = O(cc')$ under $G$ constraint.  As every CNOT gate in $\mathcal{C}'$ can be realized in size $O(c')$ under path constraint in $G$, the total size of $C$ is $O(c')\times s=O(c's)$.


\end{proof}

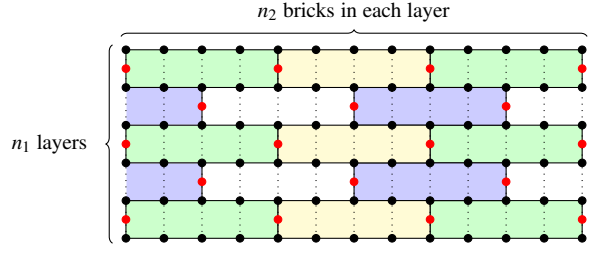
\begin{figure}[]
    \centering
    \begin{tikzpicture}
       \filldraw[fill=green!20] (0,0)--(2,0)--(2,0.5)--(0,0.5)--cycle (4,0)--(6,0)--(6,0.5)--(4,0.5)--cycle (0,1)--(2,1)--(2,1.5)--(0,1.5)--cycle (4,1)--(6,1)--(6,1.5)--(4,1.5)--cycle (0,2)--(2,2)--(2,2.5)--(0,2.5)--cycle (4,2)--(6,2)--(6,2.5)--(4,2.5)--cycle;
       \filldraw[fill=blue!20,draw=white] (0,0.5)--(1,0.5)--(1,1)--(0,1)--cycle  (0,1.5)--(1,1.5)--(1,2)--(0,2)--cycle ;
       \filldraw[fill=blue!20] (3,0.5)--(5,0.5)--(5,1)--(3,1)--cycle  (3,1.5)--(5,1.5)--(5,2)--(3,2)--cycle;
       \draw (0,0.5)--(1,0.5)--(1,1)--(0,1)  (0,1.5)--(1,1.5)--(1,2)--(0,2);
       \filldraw[fill=yellow!20] (2,0)--(4,0)--(4,0.5)--(2,0.5)--cycle (2,1)--(4,1)--(4,1.5)--(2,1.5)--cycle (2,2)--(4,2)--(4,2.5)--(2,2.5)--cycle;
       
       \draw[dotted] (0,0)--(0,2.5) (2,0)--(2,2.5) (4,0)--(4,2.5) (0.5,0)--(0.5,2.5) (2.5,0)--(2.5,2.5) (4.5,0)--(4.5,2.5) (1,0)--(1,2.5) (3,0)--(3,2.5) (5,0)--(5,2.5) (1.5,0)--(1.5,2.5) (3.5,0)--(3.5,2.5) (5.5,0)--(5.5,2.5) (6,0)--(6,2.5);
       \draw [fill=black] (0,0) circle (0.05) (0.5,0) circle (0.05) (1,0) circle (0.05) (1.5,0) circle (0.05) (2,0) circle (0.05) (2.5,0) circle (0.05) (3,0) circle (0.05) (3.5,0) circle (0.05) (4,0) circle (0.05) (4.5,0) circle (0.05) (5,0) circle (0.05) (5.5,0) circle (0.05) (6,0) circle (0.05);
        \draw [fill=black] (0,0.5) circle (0.05) (0.5,0.5) circle (0.05) (1,0.5) circle (0.05) (1.5,0.5) circle (0.05) (2,0.5) circle (0.05) (2.5,0.5) circle (0.05) (3,0.5) circle (0.05) (3.5,0.5) circle (0.05) (4,0.5) circle (0.05) (4.5,0.5) circle (0.05) (5,0.5) circle (0.05) (5.5,0.5) circle (0.05) (6,0.5) circle (0.05);
     \draw [fill=black] (0,1) circle (0.05) (0.5,1) circle (0.05) (1,1) circle (0.05) (1.5,1) circle (0.05) (2,1) circle (0.05) (2.5,1) circle (0.05) (3,1) circle (0.05) (3.5,1) circle (0.05) (4,1) circle (0.05) (4.5,1) circle (0.05) (5,1) circle (0.05) (5.5,1) circle (0.05) (6,1) circle (0.05);
     \draw [fill=black] (0,1.5) circle (0.05) (0.5,1.5) circle (0.05) (1,1.5) circle (0.05) (1.5,1.5) circle (0.05) (2,1.5) circle (0.05) (2.5,1.5) circle (0.05) (3,1.5) circle (0.05) (3.5,1.5) circle (0.05) (4,1.5) circle (0.05) (4.5,1.5) circle (0.05) (5,1.5) circle (0.05) (5.5,1.5) circle (0.05) (6,1.5) circle (0.05);
    \draw [fill=red,draw=red]   (1,1.75) circle (0.05) (3,1.75) circle (0.05) (5,1.75) circle (0.05);
     \draw [fill=black] (0,2) circle (0.05) (0.5,2) circle (0.05) (1,2) circle (0.05) (1.5,2) circle (0.05) (2,2) circle (0.05) (2.5,2) circle (0.05) (3,2) circle (0.05) (3.5,2) circle (0.05) (4,2) circle (0.05) (4.5,2) circle (0.05) (5,2) circle (0.05) (5.5,2) circle (0.05) (6,2) circle (0.05);
     \draw [fill=black] (0,2.5) circle (0.05) (0.5,2.5) circle (0.05) (1,2.5) circle (0.05) (1.5,2.5) circle (0.05) (2,2.5) circle (0.05) (2.5,2.5) circle (0.05) (3,2.5) circle (0.05) (3.5,2.5) circle (0.05) (4,2.5) circle (0.05) (4.5,2.5) circle (0.05) (5,2.5) circle (0.05) (5.5,2.5) circle (0.05) (6,2.5) circle (0.05);
     
     \node (a) at (-0.2,2.5) {};
     \node (b) at (6.2,2.5) {};
     \draw[decorate,decoration={brace,raise=5pt}] (a) -- (b);
      \draw (3,3) node{\scriptsize $n_2$ bricks in each layer};
    \node (a) at (0,-0.2) {};
     \node (b) at (0,2.7) {};
     \draw[decorate,decoration={brace,raise=5pt}] (a) -- (b);
      \draw (-1,1.25) node{\scriptsize $n_1$ layers};
      
      \draw [fill=red,draw=red]  (1,0.75) circle (0.05) (3,0.75) circle (0.05) (5,0.75) circle (0.05);
    \draw [fill=red,draw=red]   (0,1.25) circle (0.05) (2,1.25) circle (0.05)  (4,1.25) circle (0.05) (6,1.25) circle (0.05);
     \draw [fill=red,draw=red]  (0,2.25) circle (0.05) (2,2.25) circle (0.05)  (4,2.25) circle (0.05) (6,2.25) circle (0.05);
      \draw [fill=red,draw=red] (0,0.25) circle (0.05) (2,0.25) circle (0.05)  (4,0.25) circle (0.05) (6,0.25) circle (0.05);
    \end{tikzpicture}
   \caption{A new graph $G'$ constructed by adding edges to $\brickwall_{n+m}^{n_1,n_2,b_1,b_2}$. The red nodes are removed, and new dotted edges are added. Bricks are divided into $4$ groups, indicated by the green, white, yellow and blue colors. 
   }
   \label{fig:brickwall_grid}
\end{figure}

We use this lemma to obtain QSP and GUS circuits under brick-wall constraint, by reducing that to our 2D grid results. 
\qspbrickwall*
\begin{proof}
In $\brickwall_{n+m}^{n_1,n_2,b_1,b_2}$, each brick has a rectangle containing $b_1$ vertices on each vertical edge. We wish to apply Lemma~\ref{lem:circuit_trans}, by defining the brick-wall as a subgraph of a fully connected grid.  To do so, conceptually we must first `remove' the $b_1-2$ vertices (indicated by the red nodes in Fig.~\ref{fig:brickwall_grid}) in the middle of each vertical edge and add an edge between the remaining two vertices. This is possible because, at the cost of an $O(b_1)$ overhead, we can implement a CNOT between the two remaining vertices along the path between them.  Thus, we can view the brick-wall as a new graph $G=(V,E)$ where the red nodes have been removed, and CNOT gates across the newly added edges cost $O(b_1)$.

From $G=(V,E)$ we construct yet another new graph $G'=(V, E\cup E')$, by adding vertical edges across layers as in Fig.~\ref{fig:brickwall_grid}. We color the bricks in even and odd layers with alternating colors (using four colors total: two colors for each of the even and odd layers) and set $E' = \cup_{i=1}^4 E_i'$ where the four new edge sets $E_i'$ correspond to which brick color the edge lies in. Note that all bricks of the same color are vertex disjoint. We further decompose each $E_i'$ into at most $b_2 -2$ disjoint subsets $E_i$, with each $E_i$ formed by selecting at most $1$ edge from every brick in $E_i'$.  Thus, $E' = \cup_{i=1}^c E_i$ for $c\le 4(b_2-2) = O(1)$.  As each $(u,v)\in E_i$ lies in a separate brick, there exists a path from $u$ to $v$ in $G$ of length at most $c'=b_1+b_2 = O(1)$, and all such paths are disjoint. 

We can now invoke our results for the 2D grid to obtain circuits for the brick-wall. We consider two cases.
\begin{enumerate}
    \item Case 1: $m\ge 2n$. Note that $G'(V, E\cup E')$ is a 2-dimensional grid $\Grid^{n_1+1,n_2b_2-n_2+1}_{(n_1+1)(n_2b_2-n_2+1)}$. By Corollary~\ref{coro:QSP_path_grid} (result 2), the circuit depth required for $n$-qubit QSP is  
\[
O\left(2^{n/3}+\frac{2^{n/2}}{\sqrt{\min\{n_1+1, n_2 b_2-n_2+1\}}}+\frac{2^n}{n+m}\right)
=  O\left(2^{n/3}+\frac{2^{n/2}}{\sqrt{\min\{n_1,n_2\}}}+\frac{2^n}{n+m}\right)
\] 
under $\Grid^{n_1+1,n_2b_2-n_2+1}_{(n_1+1)(n_2b_2-n_2+1)}$ constraint, which translates, via Lemma \ref{lem:circuit_trans} into a circuit depth bound of $O\left(2^{n/3}+\frac{2^{n/2}}{\sqrt{\min\{n_1,n_2\}}}+\frac{2^n}{n+m}\right)$ under graph $G$ constraint. 

By Lemma~\ref{lem:cnot_path_constraint}, in one layer of the above circuit, all CNOT gates acting on the vertical edges in the brick-wall can be implemented in depth $O(b_1)$ simultaneously. Therefore, the circuit depth of $n$-qubit QSP is \[O(b_1)\cdot O\left(2^{n/3}+\frac{2^{n/2}}{\sqrt{\min\{n_1,n_2\}}}+\frac{2^n}{n+m}\right)=O\left(2^{n/3}+\frac{2^{n/2}}{\sqrt{\min\{n_1,n_2\}}}+\frac{2^n}{n+m}\right).\]

    \item Case 2: $m\le 2n$. 
    There exists a Hamiltonian path in $G'$ and thus, by Corollary \ref{coro:QSP_path_grid} (result 1), the circuit depth for $n$-qubit QSP is $O(\frac{2^n}{n+m})$. By Lemma \ref{lem:circuit_trans}, the circuit depth for $n$-qubit QSP is $O(\frac{2^n}{n+m})$ under $\brickwall_{n+m}^{n_1,n_2,b_1,b_2}$ constraint.
\end{enumerate}

\end{proof}
In the general case where $b_1$ and $b_2$ are not necessarily constant, the result still holds, although with an additional factor of $O(b_1b_2(b_1+b_2))$.

By the same argument, we obtain the following depth of GUS circuits under brick-wall constraint.
\usbrickwall*

\section{Circuit size and depth lower bounds under graph constraints}
\label{append:QSP_US_lowerbound}
In this section, we show circuit depth and size lower bounds for QSP, diagonal unitary matrix preparation and GUS under graph constraints. 

\subsection{Circuit lower bounds under general graph constraints}
\label{sec:QSP_US_lowerbound_general}

\paragraph{Size lower bounds}

\begin{lemma}[\cite{shende2004minimal,plesch2011quantum}]\label{lem:lowerbound_size_previous}
There exist $n$-qubit quantum states and $n$-qubit unitaries which can only be implemented by quantum circuits under no graph constraints, of size at least $\Omega\left(2^n\right)$ and $\Omega\left(4^n\right)$ respectively.
\end{lemma}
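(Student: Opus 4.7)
The plan is to prove this via a standard dimension-counting (parameter-counting) argument, which works both for states and for unitaries. The key observation is that an $n$-qubit circuit of size $s$ is described by at most $O(s)$ continuous parameters (from single-qubit gates) together with a discrete choice of gate type and qubit placement per gate. The image of the map from parameters to unitaries (or to states, when applied to $\ket{0^n}$) is therefore contained in a finite union of smooth manifolds of dimension at most $O(s)$. Comparing this with the dimensions of the target spaces yields the claimed lower bounds.

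First, I would fix a universal gate set in which every single-qubit gate is described by at most three real parameters (e.g., the ZYZ decomposition $R_z(\alpha)R_y(\beta)R_z(\gamma)$), and CNOT contributes no continuous parameters. For a circuit on $n$ qubits (no ancilla, since adding ancilla only reduces the required size) of size $s$, the number of discrete circuit ``skeletons'' — choices of whether each of the $s$ gates is a CNOT or a single-qubit rotation, together with the choice of which qubit(s) it acts on — is at most $(3n^2)^s$. For each fixed skeleton, the resulting circuit implements a smooth map $\Phi : \mathbb{R}^{3s} \to U(2^n)$.

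Next, I would compare dimensions. For the unitary case, $U(2^n)$ is a real manifold of dimension $4^n$, so the image of any single $\Phi$ has dimension at most $3s$, and the union over all $(3n^2)^s$ skeletons remains a countable union of manifolds of dimension at most $3s$. By Sard's theorem (or simply by the fact that a countable union of lower-dimensional smooth images has measure zero in the ambient manifold), if $3s < 4^n$ then the set of unitaries realizable by size-$s$ circuits has measure zero in $U(2^n)$ with respect to the Haar measure. In particular, it is a proper subset, so there exists a unitary requiring size $s \ge \Omega(4^n)$. For the QSP case, the same argument applied to $\ket{\psi} = C\ket{0^n}$ gives a map into $\mathbb{CP}^{2^n-1}$, which has real dimension $2^{n+1}-2$, yielding $s \ge \Omega(2^n)$.

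There is no real obstacle here — the argument is essentially a pigeonhole on dimensions — but the one point that warrants care is handling the discrete choice of gate skeleton. The cleanest way is to observe that the set of all reachable unitaries is a \emph{countable} (in fact finite, for each $s$) union of $3s$-dimensional smooth images, which is still measure-zero in $U(2^n)$ whenever $3s < 4^n$; thus the hidden constants in the $\Omega(\cdot)$ depend only on the choice of universal gate set. Alternatively, one may give a quantitative $\epsilon$-net version: a $\delta$-covering of $U(2^n)$ has size $(C/\delta)^{4^n}$, while a $\delta$-covering of the size-$s$ reachable set has size at most $(3n^2)^s \cdot (C'/\delta)^{3s}$, and comparing logarithms for, say, $\delta = 2^{-n}$ again yields $s = \Omega(4^n)$ (and analogously $s = \Omega(2^n)$ for states), now with an explicit constant. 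Either version suffices to complete the proof.
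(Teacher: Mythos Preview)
Your proposal is correct and is precisely the standard parameter-counting/dimension argument. The paper does not give its own proof of this lemma (it is cited from \cite{shende2004minimal,plesch2011quantum}), but the same reasoning appears explicitly in the paper's proofs of Proposition~\ref{prop:size_lowerbound_Lambda} and Theorem~\ref{thm:reachable-set-bounds}: each gate contributes $O(1)$ real parameters, a fixed skeleton yields an image of bounded dimension, and the finite union over skeletons cannot raise the dimension, so one compares with $\dim U(2^n)=4^n$ (resp.\ the $2^{n+1}-2$ real dimensions of the state space).
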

Since a connectivity graph constraint only adds difficulty, the same lower bounds also hold for any constraint graph $G$.
\begin{proposition}[Theorems \ref{thm:size_lowerbound_QSP_graph} and \ref{thm:size_lowerbound_US_graph}]\label{prop:lowerbound_size_graph}
For any connected graph $G$, there exist $n$-qubit quantum states and $n$-qubit unitaries which require quantum circuits of size at least $\Omega\left(2^n\right)$ and $\Omega\left(4^n\right)$, respectively, under $G$ constraint.
\end{proposition}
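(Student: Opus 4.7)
The proposition is an almost immediate consequence of Lemma~\ref{lem:lowerbound_size_previous}, so the plan is essentially a one-step reduction: any circuit valid under the $G$ constraint is in particular a valid unconstrained circuit (since a connectivity restriction only rules gates out, never in), and therefore the minimum circuit size needed under $G$ is at least the minimum circuit size needed with no constraint at all.

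More concretely, I would argue as follows. Fix any $n$-qubit state $\ket{\psi_v}$ (respectively any $n$-qubit unitary $U$) witnessing the lower bound of Lemma~\ref{lem:lowerbound_size_previous}, so that every standard quantum circuit (on $n$ input qubits together with any number of ancillary qubits) implementing $\ket{\psi_v}$ (resp.\ $U$) must contain at least $\Omega(2^n)$ (resp.\ $\Omega(4^n)$) one- and two-qubit gates. Now let $\mathcal{C}$ be any standard quantum circuit, under $G$ constraint, that prepares $\ket{\psi_v}$ (resp.\ implements $U$). By definition, the gates of $\mathcal{C}$ lie in the allowed set for $G$, which is a subset of the allowed set in the unrestricted model; thus $\mathcal{C}$ is also a valid unconstrained circuit computing the same map. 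Applying the unconstrained size lower bound to $\mathcal{C}$ yields $\lvert\mathcal{C}\rvert \ge \Omega(2^n)$ in the QSP case and $\lvert\mathcal{C}\rvert \ge \Omega(4^n)$ in the GUS case.

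The only subtlety worth being careful about is the accounting of ancillary qubits, since the theorems do not fix a specific number $m$. Fortunately, Lemma~\ref{lem:lowerbound_size_previous} is stated without restricting $m$, so the inherited bound also holds uniformly in $m$; the witnessing state/unitary is the same for every $G$. I would state this once explicitly to make clear that both parts of the proposition follow without any additional combinatorial argument about $G$.

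I do not anticipate a genuine obstacle: unlike the circuit depth lower bounds of Section~\ref{sec:QSP_US_lowerbound}, which required the reachable-subset machinery of Definition~\ref{def:reachable_main} and Theorem~\ref{thm:reachable-set-bounds} to account for graph-dependent parallelism, size lower bounds are insensitive to the depth structure and therefore transfer trivially from the unconstrained setting to any constrained setting. Consequently the proof is essentially a single paragraph, with the main work being a clean statement of the reduction and a remark that the bound is optimal in view of the matching upper bounds in Theorems~\ref{thm:QSP_graph} and \ref{thm:US_graph}.
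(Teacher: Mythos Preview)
Your proposal is correct and is essentially the same argument as the paper's: a circuit obeying the $G$ constraint is in particular an unconstrained circuit, so the unconstrained size lower bounds of Lemma~\ref{lem:lowerbound_size_previous} carry over verbatim. The paper's proof is even terser (it just notes that beating the bound under any $G$ would contradict Lemma~\ref{lem:lowerbound_size_previous}, the unconstrained case being $G=K_n$), but the content is identical to yours.
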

\begin{proof}
Let $G=K_n$ be the complete graph on $n$ vertices. The ability to implement any $n$-qubit quantum state and unitary matrix by circuits of size $o(2^n)$ and $o(4^n)$ would contradict Lemma~\ref{lem:lowerbound_size_previous}. 
\end{proof}
The proof of Lemma \ref{lem:lowerbound_size_previous} is by parameter counting, which also applies to diagonal unitaries.
\begin{proposition}\label{prop:size_lowerbound_Lambda}
For any graph $G$, there exist $n$-qubit diagonal unitary matrices which can be implemented by quantum circuits under $G$ constraint, of size at least $\Omega(2^n)$.
\end{proposition}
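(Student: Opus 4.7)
The plan is a parameter-counting (dimension-counting) argument, in the same spirit as the proof of Lemma~\ref{lem:lowerbound_size_previous} from \cite{shende2004minimal,plesch2011quantum}, but applied to the smaller class of diagonal unitaries. The space of $n$-qubit diagonal unitaries, parameterized by phases $\theta_1,\ldots,\theta_{2^n-1}\in\mathbb{R}$ (Section~\ref{sec:ucg}), forms a smooth real manifold $\mathcal{D}_n$ of real dimension $2^n-1$. I will show that for any connected graph $G$, the set of $\Lambda_n\in\mathcal{D}_n$ realizable by a size-$s$ standard quantum circuit under $G$ constraint is contained in a finite union of smooth images of $\mathbb{R}^{O(s)}$, and hence has Hausdorff dimension at most $O(s)$. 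Taking $s$ slightly below $c(2^n-1)$ for a small constant $c>0$ then forces a measure-zero (in fact lower-dimensional) image inside $\mathcal{D}_n$, so a generic $\Lambda_n$ is not realizable.

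The first step is to bound the number of ancillary qubits. Since every gate touches at most two qubits, a size-$s$ circuit interacts with at most $n+2s$ qubits, and any ancilla never acted upon can be discarded without changing the action on $\ket{x}\ket{0^m}$; thus we may assume $m\le 2s$ throughout. Next I fix a \emph{circuit structure}, i.e.\ an ordered list of $s$ slots each labeled either ``single-qubit gate on qubit $v$'' or ``CNOT on edge $(u,v)\in E(G)$''. With at most $n+2s$ qubits, the number of such structures is finite, bounded by $((n+2s)+|E(G)|)^s$. For each fixed structure the resulting unitary is a smooth function of at most $O(s)$ real continuous parameters (each single-qubit gate contributes $O(1)$ Euler angles; CNOT contributes none), so the set of $(n+m)$-qubit unitaries realized by this structure is the image of a smooth map from $\mathbb{R}^{O(s)}$ into $U(2^{n+m})$, and is therefore of real dimension at most $O(s)$.

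Finally, I extract the induced diagonal unitary. On the (relatively closed) subset of parameter space where the circuit happens to send $\ket{x}\ket{0^m}\mapsto e^{i\theta(x)}\ket{x}\ket{0^m}$ for all $x\in\{0,1\}^n$, reading off the phases $\theta(x)$ from the diagonal matrix entries $\langle x0^m|U|x0^m\rangle$ defines a smooth map into $\mathcal{D}_n$ whose image has dimension at most $O(s)$. Taking the union over the finitely many structures and invoking the fact that a finite union of smooth images of $O(s)$-dimensional spaces has Hausdorff dimension at most $O(s)$, I conclude that for $s \le c(2^n-1)$ this union has strictly smaller dimension than $\dim\mathcal{D}_n = 2^n-1$, in particular Lebesgue measure zero; its complement is therefore nonempty (indeed generic), producing $\Lambda_n$'s that require size $\Omega(2^n)$. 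The main technical point that needs care is justifying the reduction $m\le 2s$ and verifying that the ``read-off'' map is well-defined and smooth on the relevant stratum of parameter space, which follows from the implicit function theorem once one restricts to the constraint that ancillas return to $\ket{0^m}$; everything else is routine dimension counting.
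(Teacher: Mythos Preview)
Your proposal is correct and takes essentially the same approach as the paper: both are parameter-counting (dimension) arguments showing that size-$s$ circuits are parameterized by $O(s)$ real numbers while $\mathcal{D}_n$ has dimension $2^n-1$. The paper's proof is simply a two-sentence sketch of this same idea (noting that a size-$o(2^n)$ circuit of arbitrary $2$-qubit gates introduces $o(2^n)$ real parameters versus the $2^n-1$ parameters of $\Lambda_n$), whereas you have spelled out the ancilla bound, the finiteness of circuit structures, and the smooth-image dimension bound more carefully.
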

\begin{proof}
A circuit of size $o(2^n)$ consisting of arbitrary $2$-qubit gates introduces $o(2^n)$ real parameters. On the other hand, a diagonal unitary matrix $\Lambda_n$ is determined by at least $2^n-1$ free real parameters.
\end{proof}

\paragraph{Depth lower bounds}
\begin{lemma}[\cite{sun2021asymptotically}]\label{lem:lowerbound_previous}
There exist $n$-qubit quantum states and $n$-qubit unitaries which can only be implemented by quantum circuits under no graph constraints, of depth  at least $\Omega\left(n+\frac{2^n}{n+m}\right)$ and $\Omega\left(n+\frac{4^n}{n+m}\right)$ respectively, using $m\ge 0$ ancillary qubits.
\end{lemma}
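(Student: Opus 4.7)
The statement bundles two qualitatively different lower bounds: a linear term $\Omega(n)$ which is independent of $m$, and an exponential term $\Omega(2^n/(n+m))$ (resp.\ $\Omega(4^n/(n+m))$) that degrades as ancilla grow. I would prove them separately and then combine via the elementary estimate $\max\{a,b\} \ge (a+b)/2$ to recover the stated form.

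\textbf{Step 1: Exponential term via parameter counting.} The set of $n$-qubit pure states modulo global phase is a smooth real manifold of dimension $2(2^n-1)$, and the unitary group $U(2^n)$ has real dimension $4^n$. Normalize the circuit so that each depth layer consists of at most $(n+m)/2$ simultaneously applicable two-qubit gates, each parametrized by at most $15$ real numbers. Hence a depth-$D$ circuit on $n+m$ qubits is described by at most $O(D(n+m))$ real parameters, and its image under the ``prepare-from-$\ket{0^{n+m}}$'' map (for QSP) or ``conjugate-and-trace-ancilla'' map (for GUS) is a semi-algebraic set of real dimension at most $O(D(n+m))$. Covering the full state manifold or the full unitary group therefore forces $D(n+m) \ge \Omega(2^n)$ or $\Omega(4^n)$ respectively. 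The only delicate point is that GUS circuits must return the $m$ ancilla exactly to $\ket{0^m}$ on every computational-basis input, which is an algebraic constraint that restricts but does not enlarge the parameter count, so the dimension bound still applies.

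\textbf{Step 2: Linear term via a reachable-set / light-cone argument.} For the $\Omega(n)$ term I would invoke a variant of Theorem~\ref{thm:reachable-set-bounds}. Fix any hard target state $\ket{\psi}$ or unitary $U$ whose amplitudes are ``genuinely $n$-variate,'' in the sense that some output qubit of the circuit is required to depend on information originally stored in each of the $n$ input qubits (for QSP, pick a state whose partial trace over any single qubit differs from the uniform mixture; for GUS, pick a unitary whose action on one qubit depends on all others). A standard causal argument then shows that the directed light-cone of that output qubit must encompass all $n$ input qubits. Even with all-to-all connectivity this forces $D = \Omega(\log n)$ purely by light-cone expansion, and a sharper information-theoretic argument (e.g.\ communication-complexity or entropy-accumulation bounds, or the reachable-subset count of Theorem~\ref{thm:reachable-set-bounds} restricted to the linear regime $|S_i'|=O(n+m)$) upgrades this to $\Omega(n)$ for the specific hard families that saturate the matching $O(n)$ upper bound achievable with exponentially many ancilla.

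\textbf{Expected main obstacle.} The parameter-counting part is routine and essentially tight. The real work is in the linear bound: with unbounded ancilla and all-to-all connectivity, naive light-cone arguments only yield $\Omega(\log n)$, so one has to exhibit a concrete family of hard states/unitaries for which information has to propagate along \emph{any} circuit layer-by-layer, not merely double per layer. I expect most of the technical effort to go into identifying such a family and proving the sharp $\Omega(n)$ bound, whereas combining the two bounds and handling the ancilla-restoration constraint in Step~1 should be mechanical.
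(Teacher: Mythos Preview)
The paper does not prove this lemma; it is quoted from \cite{sun2021asymptotically}. Your Step~1 (parameter counting for the exponential term) is correct and standard. The real gap is in Step~2, where you treat the $\Omega(n)$ term as a separate obstruction needing ``communication-complexity or entropy-accumulation bounds'' and explicitly concede that light-cone reasoning yields only $\Omega(\log n)$. In fact the $\Omega(n)$ bound falls out of the \emph{same} light-cone-plus-parameter-counting argument, once you use the light cone to cap the number of ancilla that can matter.

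Concretely: the backward light cone of the $n$ designated output qubits at most doubles per layer, so after $D$ layers it touches at most $n\cdot 2^{D}$ qubits. By the cancellation argument in the proof of Theorem~\ref{thm:reachable-set-bounds} (this is precisely the bound $|S_i'|\le 2^{d-i+1}n$ recorded in the proof of Theorem~\ref{thm:depth_lower_bound_graph}), every gate outside this cone can be removed while still producing $\ket{\psi}$ on the data register. Hence the effective circuit has at most $\sum_{i=0}^{D} n\cdot 2^{i} < 2n\cdot 2^{D}$ gates and $O(n\cdot 2^{D})$ real parameters, \emph{independently of $m$}. Parameter counting then forces $n\cdot 2^{D}\ge \Omega(2^{n})$, i.e.\ $D\ge n-\log n-O(1)=\Omega(n)$ (and analogously $D=\Omega(n)$ for GUS from $n\cdot 2^{D}\ge \Omega(4^{n})$). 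You invoked Theorem~\ref{thm:reachable-set-bounds} but only with the trivial bound $|S_i'|=O(n+m)$; the exponential bound $|S_i'|\le n\cdot 2^{d+1-i}$ is exactly the missing ingredient that converts the two ``separate'' lower bounds into two regimes of one argument.

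Your framing for QSP is also slightly off: you speak of an output qubit depending on ``information originally stored in each of the $n$ input qubits,'' but in QSP the initial state is $\ket{0^{n+m}}$ and carries no information. The light cone runs \emph{backward} from the $n$ output (non-ancilla) qubits, as in Definition~\ref{def:reachable}, not forward from informative inputs.
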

By the same method used to prove Lemma~\ref{lem:lowerbound_previous}, it can be shown that:
\begin{proposition}\label{prop:lowerbound_Lambda_noconstraint}
There exist $n$-qubit diagonal unitary matrices which can only be implemented by quantum circuits of depth at least $\Omega\left(n+\frac{2^n}{n+m}\right)$, under no graph constraints, using $m\ge 0$ ancillary qubits.
\end{proposition}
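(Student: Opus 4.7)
\medskip

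\noindent\textbf{Proof proposal.} The plan is to split the bound $\Omega\bigl(n+2^n/(n+m)\bigr)$ into its two summands and to handle each by a direct adaptation of the corresponding step used for QSP in Lemma~\ref{lem:lowerbound_previous} from \cite{sun2021asymptotically}. Concretely, I will bound the first term by dimension counting (a property of the ambient family of diagonal unitaries) and the second term by a depth-preserving reduction from QSP on the class of states with uniform magnitudes and arbitrary phases.

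For the $\Omega(2^n/(n+m))$ term I plan to use parameter counting. Any depth-$d$ standard quantum circuit on $n+m$ qubits contains at most $(n+m)/2$ two-qubit gates per layer and $n+m$ one-qubit gates per layer, hence $O\bigl(d(n+m)\bigr)$ gates in total. Viewing each gate as parameterized by $O(1)$ real numbers, the set of $\Lambda_n$ realizable in depth $d$ is contained in a smooth image of a space of real dimension $O\bigl(d(n+m)\bigr)$. Since by Eq.~\eqref{eq:diag} the diagonal unitaries form a $(2^n-1)$-dimensional real family (the phases $\theta_1,\ldots,\theta_{2^n-1}$ are independent), realizing a generic $\Lambda_n$ forces $d(n+m)=\Omega(2^n)$, i.e., $d=\Omega\bigl(2^n/(n+m)\bigr)$. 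This is literally the counting step of Lemma~\ref{lem:lowerbound_previous}, restricted to the diagonal subfamily.

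For the $\Omega(n)$ term I plan to reduce QSP to diagonal unitary synthesis. Given any depth-$d$ circuit $C$ implementing $\Lambda_n$ on $n$ input and $m$ ancillary qubits, the circuit $C\,(H^{\otimes n}\otimes\mathbb{I}_m)$ has depth $d+1$ and maps $\ket{0^{n+m}}$ to $\ket{\psi_\theta}\ket{0^m}$, where
\[
\ket{\psi_\theta}\;=\;\frac{1}{\sqrt{2^n}}\sum_{x\in\{0,1\}^n} e^{i\theta(x)}\ket{x}.
\]
Thus every state of the form $\ket{\psi_\theta}$ can be prepared in depth at most $d+1$. By Lemma~\ref{lem:lowerbound_previous}, some $n$-qubit state requires QSP depth $\Omega(n+2^n/(n+m))$, and in particular $\Omega(n)$. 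It therefore suffices to exhibit a single $\theta:\{0,1\}^n\to\mathbb{R}$ for which $\ket{\psi_\theta}$ already witnesses that $\Omega(n)$ lower bound; combined with the reduction this yields $d+1=\Omega(n)$.

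The main obstacle is verifying the last point: the QSP $\Omega(n)$ witness from Lemma~\ref{lem:lowerbound_previous} must lie in (or be replaceable by one in) the uniform-magnitude family $\{\ket{\psi_\theta}\}$. The resolution I have in mind is that the $\Omega(n)$ QSP lower bound is proved via a causal-cone / bipartite-entanglement argument that is insensitive to amplitude magnitudes: any depth-$d$ circuit built from two-qubit gates can generate at most $O(d)$ ebits across any cut of the $n$ qubits, while a generic phase choice $\theta$ makes $\ket{\psi_\theta}$ have $\Omega(n)$ entanglement across a balanced bipartition. Reproducing this calculation inside the uniform-magnitude family gives the desired witness and closes the $\Omega(n)$ bound; combining with the previous paragraph yields $d=\Omega\bigl(n+2^n/(n+m)\bigr)$, as claimed.
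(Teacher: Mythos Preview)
Your parameter-counting argument for the $\Omega(2^n/(n+m))$ term is correct and is exactly what the paper (via \cite{sun2021asymptotically}) uses. The reduction $\Lambda_n\mapsto\ket{\psi_\theta}$ by prepending $H^{\otimes n}$ is also a valid and natural idea for the $\Omega(n)$ term.

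The gap is in your proposed resolution of the obstacle. The claim ``any depth-$d$ circuit built from two-qubit gates can generate at most $O(d)$ ebits across any cut of the $n$ qubits'' is \emph{false} in the unconstrained-connectivity setting of this proposition. A depth-$2$ circuit (a layer of Hadamards followed by $\Cnot^{i}_{i+n/2}$ for $i=1,\ldots,n/2$) already produces $n/2$ ebits across the balanced cut $[n/2]:[n]\setminus[n/2]$. The $O(d)$-ebit bound you have in mind holds for geometrically local circuits (e.g., under $\Path$ constraint each layer contributes $O(1)$ gates across any fixed point cut), but not for all-to-all connectivity. Hence this entanglement-rate argument cannot supply the $\Omega(n)$ term here, and your proof is incomplete as stated.

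The paper's own proof does not take your QSP-reduction route: it simply notes that the method of \cite{sun2021asymptotically} behind Lemma~\ref{lem:lowerbound_previous} applies verbatim to the $(2^n-1)$-parameter family of diagonal unitaries, yielding both summands at once. If you prefer to keep the reduction, the honest fix is to open the $\Omega(n)$ argument in \cite{sun2021asymptotically} and verify that it already applies to the uniform-magnitude family $\{\ket{\psi_\theta}\}$ (it does, since that argument is insensitive to which generic state is chosen), rather than attempting to reprove it via an entanglement bound that does not hold without connectivity constraints.
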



And again these lower bounds hold under any graph constraint.
\begin{proposition}\label{prop:depth_lowerbound_graph}
Let $G=(V,E)$ denote an arbitrary connected graph with $n+m$ vertices for any $m\ge 0$. There exist $n$-qubit quantum states, diagonal unitary matrices and unitaries which can only be implemented by circuits under $G$ constraint of depth at least $\Omega\left(n+\frac{2^n}{n+m}\right)$, $\Omega\left(n+\frac{2^n}{n+m}\right)$ and $\Omega\left(n+\frac{4^n}{n+m}\right)$ , respectively, using $m$ ancillary qubits.
\end{proposition}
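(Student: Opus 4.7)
The plan is to reduce the constrained case directly to the unconstrained one. The key observation is that any standard quantum circuit on $n+m$ qubits that respects a connectivity graph $G$ is, in particular, a standard quantum circuit on $n+m$ qubits with no connectivity restriction. Hence any depth lower bound that holds for all circuits with $m$ ancillary qubits (i.e., under the complete graph $K_{n+m}$) automatically transfers to circuits under $G$ constraint with $m$ ancillary qubits.

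Concretely, suppose for contradiction that every $n$-qubit quantum state $\ket{\psi_v}$ could be prepared by some circuit under $G$ constraint of depth $o\bigl(n + 2^n/(n+m)\bigr)$ using $m$ ancillary qubits. Forgetting the graph constraint, this would yield, for every $\ket{\psi_v}$, an unconstrained $(n+m)$-qubit QSP circuit of the same depth, contradicting the lower bound $\Omega\bigl(n+2^n/(n+m)\bigr)$ of Lemma~\ref{lem:lowerbound_previous}. Therefore there must exist an $n$-qubit state whose preparation requires depth $\Omega\bigl(n+2^n/(n+m)\bigr)$ under $G$ constraint. The same argument, applied to Proposition~\ref{prop:lowerbound_Lambda_noconstraint} and to the second bound of Lemma~\ref{lem:lowerbound_previous}, gives the corresponding depth lower bounds $\Omega\bigl(n+2^n/(n+m)\bigr)$ for diagonal unitaries and $\Omega\bigl(n+4^n/(n+m)\bigr)$ for general unitaries, respectively.

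There is essentially no obstacle here: the only thing to check is the quantifier order, namely that the witnesses to the lower bound can be taken independent of $G$. This is immediate because the witnesses in Lemma~\ref{lem:lowerbound_previous} and Proposition~\ref{prop:lowerbound_Lambda_noconstraint} are obtained by a counting/volume argument (the set of states, diagonal unitaries, or unitaries realizable by circuits of a given size is too small to cover all of $\mathbb{C}^{2^n}$, the $(2^n-1)$-parameter diagonal group, or $U(2^n)$, respectively), so the existential statement is uniform over the circuit model and survives restriction to $G$-respecting circuits. This completes the proof.
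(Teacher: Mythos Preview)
Your proof is correct and takes essentially the same approach as the paper: the paper's own proof is a one-line citation of Lemma~\ref{lem:lowerbound_previous} and Proposition~\ref{prop:lowerbound_Lambda_noconstraint}, relying on exactly the observation you spell out, that a $G$-constrained circuit is in particular an unconstrained circuit, so the unconstrained depth lower bounds carry over. Your extra paragraph about quantifier order is a reasonable sanity check but not needed for the argument.
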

\begin{proof}
These results follow from Lemma \ref{lem:lowerbound_previous} and Proposition \ref{prop:lowerbound_Lambda_noconstraint}.
\end{proof}

To give depth lower bounds under general graph constraints, we first associate a quantum circuit with a directed graph.

\begin{definition}[Directed graphs for quantum circuits]\label{def:circuit-digraph}
Let $C$ be a quantum circuit on $n$ input and $m$ ancillary qubits consisting of $d$ depth-1 layers, with odd layers consisting only of single-qubit gates, even layers consisting only of CNOT gates, and any two (non-identity) single-qubit gates acting on the same qubit must be separated by at least one CNOT gate acting on that qubit (either as control or target).  Let $L_1,L_2,\cdots,L_d$ denote the $d$ layers of this circuit, i.e., $C=L_dL_{d-1}\cdots L_1$.  Define the directed graph $H=(V_C,E_C)$ associated with $C$ as follows. 
\begin{enumerate}
    \item Vertex set $V_C$: For each $i\in[d+1]$, define $S_{i}:=\{v_i^j:j\in[n+m]\}$, where $v_i^j$ is a label corresponding to the $j$-th qubit. Then, $V_C:=\bigcup_{i=1}^{d+1} S_i$. 
    \item Edge set $E_C$: For all $i\in [d]$:
    \begin{enumerate}
        \item 
        If there is a single-qubit gate acting on the $j$-th qubit in layer $L_i$ then, for all $i \le i' \le d$ there exists a directed edge $(v_{i'+1}^j,v_{i'}^j)$.
        
        \item If there is a CNOT gate acting on qubits $j_1$ and $j_2$ in layer $L_i$, then there exist $4$ directed edges $(v_{i+1}^{j_1},v_i^{j_1})$, $(v_{i+1}^{j_2}, v_i^{j_1})$, $(v_{i+1}^{j_1},v_i^{j_2})$ and $(v_{i+1}^{j_2},v_i^{j_2})$.
    \end{enumerate}
  Note that edges are directed from $S_{i+1}$ to $S_i$. 
\end{enumerate}
\end{definition}
\paragraph{Remark}
The circuits $C$ in Def.~\ref{def:circuit-digraph} assume a particular structure of alternating layers of single qubit gates and CNOT gates. However, an arbitrary circuit can be brought into this form with at most a constant factor overhead in depth: consecutive single qubit gates acting on a single qubit can be combined into one single-qubit gate, and consecutive circuit layers containing CNOT gates can be separated by a layer of identity gates. Thus, without loss of generality, for the remainder of this section, we will assume (as in~\cite{shende2004minimal}) that circuits have this alternating layer structure.    

\begin{definition}[Reachable subsets]\label{def:reachable} Let $H=(V_C,E_C)$ be the directed graph associated with quantum circuit $C$ of depth $d$, with vertex set $V_C = \bigcup_{i=1}^{d+1} S_i$.  For each $i\in[d+1]$ define the reachable subsets $S'_{i}$ of $H$ as follows:
\begin{itemize}
    \item $S'_{d+1} = \{v^j_{d+1} : j\in[n]\}$, i.e., the subset of $n$ vertices in $S_{d+1}$ corresponding to the $n$ input qubits.
    \item For $i\in[d]$, $S'_{i}\subseteq S_i$ is the subset of vertices $v^j_i$ in $S_i$ which are (i) reachable by a directed path from vertices in $S'_{d+1}$, and (ii) there is a quantum gate acting on qubit $j$ in circuit layer $L_{i}$. 
\end{itemize}
\end{definition}


\reachablesetbounds*
\begin{proof}
Let $V$ denote the set of $n+m$ qubits, and $V_{d+1}$ denote the $n$-input qubit set. For $i\in[d]$, $V_{i}=V_{i+1}\cup V'_i$, where $V'_i$ denotes the qubit set corresponding to $S_i'$.
The $i$-th layer $L_i$ can be represented as $L_i=L_{V_i}\otimes L_{\overline{V}_i}$, where $L_{V_i}$ consists of gates acting on $V_i$, and $L_{\overline{V}_i}$ consists of gates acting on $V-V_i$. Then, $C$ can be expressed as
\begin{align*}
    C&=L_dL_{d-1}\cdots L_1 = (L_{V_d}\otimes L_{\overline{V}_d}) (L_{V_{d-1}}\otimes L_{\overline{V}_{d-1}})\cdots (L_{V_1}\otimes L_{\overline{V}_1})
\end{align*}

Since $C$ is a QSP circuit acting on $n$ input and $m$ ancillary qubits, we have
\begin{equation}
    \ket{\psi}_I\ket{0^m}_A = C \ket{0^n}_I\ket{0^m}_A = (L_{V_d}\otimes L_{\overline{V}_d}) (L_{V_{d-1}}\otimes L_{\overline{V}_{d-1}})\cdots (L_{V_1}\otimes L_{\overline{V}_1}) \ket{0^n}_I\ket{0^m}_A
\end{equation}
where $I$ and $A$ are registers for holding the input and ancilla, respectively. Now we can cancel the gates in $L_{\overline{V}_i}$ without affecting $\ket{\psi}$. More precisely, we multiply the two sides of the above equation by $L_{\overline{V}_d}, \ldots, L_{\overline{V}_1}$ in that order, and get
\begin{equation}\label{eq:lightcone}
    (\mathbb{I}_{V_1}\otimes L_{\overline{V}_1}^\dagger) (\mathbb{I}_{V_{2}}\otimes L_{\overline{V}_{2}}^\dagger)\cdots (\mathbb{I}_{V_d}\otimes L_{\overline{V}_d}^\dagger) \ket{\psi}_I\ket{0^m}_A = (L_{V_d}\otimes \mathbb{I}_{\overline{V}_d}) (L_{V_{d-1}}\otimes \mathbb{I}_{\overline{V}_{d-1}})\cdots (L_{V_1}\otimes \mathbb{I}_{\overline{V}_1}) \ket{0^n}_I\ket{0^m}_A.
\end{equation}
Note that $V_{d+1}\subseteq V_{d} \subseteq \cdots \subseteq V_1$ by definition, thus $\overline{V}_1\subseteq \overline{V}_{2} \subseteq \cdots \subseteq \overline{V}_d \subseteq \overline{V}_{d+1} = [n+m]-[n]$. Therefore, all the operators $L_{\overline{V}_i}^\dagger$ at the LHS of the above equation act on the second register $A$ only, thus 
\[ (L_{V_d}\otimes \mathbb{I}_{\overline{V}_d}) (L_{V_{d-1}}\otimes \mathbb{I}_{\overline{V}_{d-1}})\cdots (L_{V_1}\otimes \mathbb{I}_{\overline{V}_1}) \ket{0^n}_I\ket{0^m}_A = \ket{\psi}_I\ket{\phi}_A\]
for some $m$-qubit state $\ket{\phi}_A$. That is, by removing all gates outside the lightcone, we get another circuit $C' = (L_{V_d}\otimes \mathbb{I}_{\overline{V}_d}) (L_{V_{d-1}}\otimes \mathbb{I}_{\overline{V}_{d-1}})\cdots (L_{V_1}\otimes \mathbb{I}_{\overline{V}_1})$ that also generates state $\ket{\psi}_I$, though with a garbage state $\ket{\phi}_A$ unentangled with $\ket{\psi}_I$.

Now we analyze the number of parameters in $C'$ to see how many different $\ket{\psi}_I$ it can generate. For any $i\in[d]$, according to the definition of $L_{V_i}$, there are $O(|S_i'|)=O(|V'_i|)$ gates in $L_{V_i}$. Therefore, $C'$ consists of $O(\sum_{i=1}^d|S'_i|)$ gates. As each gate can be fully specified by $O(1)$ free real parameters, $C'$ can be specified by $O(\sum_{i=1}^d|S'_i|)$ free real parameters. Thus the output of circuit $C'$ is a manifold of  dimension at most $O(\sum_{i=1}^d|S'_i|)$. Since the set of all $n$-qubit states $\ket{\psi}$ is a sphere of dimension $2^n-1$, we have that $O(\sum_{i=1}^d|S'_i|)\ge 2^n-1$.

The results for $n$-qubit diagonal unitary matrices and $n$-qubit general unitary matrices follow similarly, noting that they are specified by at least $2^n-1$ and $4^n-1$ free parameters, respectively. 
\end{proof}
\paragraph{Remark} In the above proof, we assume that $C$ is a parameterized circuit, i.e., the architecture is fixed and only the parameters vary. But note that even if we allow flexible architecture for depth-$d$ circuits, that only multiplies the measure of the output by a finite number, and in particular cannot increase its dimension.
\begin{theorem}[Theorems \ref{thm:depth_lowerbound_QSP_graph} and \ref{thm:depth_lowerbound_US_graph}]\label{thm:depth_lower_bound_graph}
Let $G_\nu=(V,E)$ be a connected graph with $n+m$ vertices, with $\nu$ the size of a maximum matching in $G$. 
There exist $n$-qubit quantum states, diagonal unitary matrices and general unitaries which require quantum circuits under $G_\nu$ constraint of depth at least $\Omega\left(\max\{n,2^n/\nu\}\right)$, $\Omega\left(\max\{n,2^n/\nu\}\right)$ and $\Omega\left(\max\{n,4^n/\nu\}\right)$, respectively, to be implemented, using $m\ge 0$ ancillary qubits.
\end{theorem}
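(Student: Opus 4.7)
\bigskip

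\noindent\textbf{Proposal.} My plan is to combine the reachable-set machinery of Theorem~\ref{thm:reachable-set-bounds} with the matching constraint in~$G_\nu$. First, I will handle the easy term: the $\Omega(n)$ lower bound follows immediately from Proposition~\ref{prop:depth_lowerbound_graph} (which in turn invokes Lemma~\ref{lem:lowerbound_previous} and Proposition~\ref{prop:lowerbound_Lambda_noconstraint}, since an unconstrained complete graph is only easier than any $G_\nu$). So the real work is to produce the $\Omega(2^n/\nu)$ and $\Omega(4^n/\nu)$ bounds.

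The central step is a per-layer bound on $|S_i'|$ under $G_\nu$ constraint. Fix a circuit $C = L_d L_{d-1} \cdots L_1$ in the alternating form of Definition~\ref{def:circuit-digraph}. In any CNOT layer $L_i$, the active CNOTs must act on pairwise disjoint edges of $G_\nu$, hence they form a matching of size at most $\nu$; this yields $|S_i'| \le 2\nu$. For a single-qubit layer $L_i$, the alternation rule forces the number of non-identity single-qubit gates on any fixed qubit $j$ across the whole circuit to be at most $(\text{\# CNOT gates on } j) + 1$. Summing over $j$ (restricted to the light cone, of size at most $n+m$) gives
\begin{equation*}
\sum_{i \text{ single-qubit}} |S_i'| \;\le\; 2 \cdot (\text{total CNOTs in light cone}) + (n+m) \;\le\; 2\nu d + (n+m),
\end{equation*}
where I used that each of at most $d/2$ CNOT layers contributes at most $\nu$ CNOTs. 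Combining with $\sum_{i \text{ CNOT}} |S_i'| \le 2\nu d$ yields the overall bound $\sum_{i=1}^d |S_i'| \le 4\nu d + (n+m)$.

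Next I plug this into Theorem~\ref{thm:reachable-set-bounds}. For QSP and for diagonal unitaries I get $4\nu d + (n+m) \ge \Omega(2^n)$; for general unitary synthesis I get $4\nu d + (n+m) \ge \Omega(4^n)$. When $n+m$ is below half of the relevant exponential (say $n+m \le 2^n/10$ for QSP), the $(n+m)$ term is absorbed and I obtain $d \ge \Omega(2^n/\nu)$, respectively $d \ge \Omega(4^n/\nu)$. In the remaining regime where $n+m$ dominates the exponential, the matching satisfies $\nu \le (n+m)/2$, hence $2^n/\nu$ (resp.\ $4^n/\nu$) is absorbed into the $\Omega(n)$ bound already proved, and there is nothing more to show. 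Taking the maximum of the two bounds yields the claimed $\Omega(\max\{n, 2^n/\nu\})$ and $\Omega(\max\{n, 4^n/\nu\})$ depths.

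The main subtlety I expect to have to justify carefully is the single-qubit-layer counting: one must be precise that the alternation-separation axiom from Definition~\ref{def:circuit-digraph} is an ``along each qubit'' statement and that counting $|S_i'|$ rather than the number of gates is still valid after restricting to the light cone (where ancilla qubits never acted on by any gate are absent from every $S_i'$). Once this bookkeeping is pinned down, the rest is an arithmetic comparison of the two inequalities. The existence of a single hard state/unitary achieving these lower bounds is automatic from the parameter-counting step inside Theorem~\ref{thm:reachable-set-bounds}, exactly as in Proposition~\ref{prop:depth_lowerbound_graph}.
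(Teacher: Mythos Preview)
Your case analysis in the last paragraph is incorrect. When $n+m$ dominates $2^n$ you write ``the matching satisfies $\nu \le (n+m)/2$, hence $2^n/\nu$ is absorbed into the $\Omega(n)$ bound.'' But $\nu \le (n+m)/2$ only yields a \emph{lower} bound $2^n/\nu \ge 2^{n+1}/(n+m)$, not the upper bound you need. Take $G_\nu = \Star_{n+m}$ with $m \ge 2^n$: here $\nu = 1$, so the target lower bound is $2^n/\nu = 2^n \gg n$, yet your inequality $4\nu d + (n+m) \ge \Omega(2^n)$ becomes $4d + (n+m) \ge \Omega(2^n)$, which is satisfied by \emph{every} $d \ge 0$ once $n+m \ge 2^n$ and therefore gives you nothing.

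The repair is to sharpen the additive term. The only way an ancilla qubit can enter the backward light cone is through a CNOT edge in $H$ (single-qubit gates produce only self-edges), and each CNOT brings in at most one new qubit. Hence the number of qubits that ever appear in any $S_i'$ is at most $n + (\text{total CNOTs}) \le n + \nu d$, so your ``$+(n+m)$'' can be replaced by ``$+\,n + O(\nu d)$''. This collapses your bound to $\sum_i |S_i'| = O(\nu d + n)$, and Theorem~\ref{thm:reachable-set-bounds} then gives $\nu d \ge \Omega(2^n) - O(n) = \Omega(2^n)$ directly, with no case split needed. The paper takes a slightly different route: rather than summing globally over single-qubit layers, it argues a per-layer bound $|S_i'| \le 2\nu$ for odd $i$ as well, by observing that in the normalized circuit every single-qubit gate in layer $i\ge 3$ is immediately preceded by a CNOT on the same qubit in layer $i-1$; this gives $\sum_i |S_i'| = O(\nu d)$ outright and sidesteps the additive term.
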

\begin{proof}


We consider the directed graph $H=(V_C, E_C)$ and reachable sets $S'_1$, $S'_2$, $\ldots$, $S'_d$, $S'_{d+1}$ for a QSP circuit $C$ of depth $d$. For $i\in [d]$, for every vertex in $S'_{i+1}$, there are at most 2 neighbors in $S'_{i}$ and thus  $\abs{S'_{i}} \le 2|S'_{i+1}|$. Since $|S'_{d+1}|=n$, we have $\abs{S'_{i}} \le 2^{d-i+1}n$ for all $i\in[d]$. 
Since the maximum matching size of $G$ is $\nu$, $\abs{S'_{i}}\le 2\nu$ for all $i\in[d]$. This is true for even $i$ as there are at most $2\nu$ CNOT gates in layer $i$. This actually also holds for odd $i$, as there are at most $2\nu$ CNOT gates in layer $i-1$. For qubits $j$ that these CNOT gates do not touch in layer $i-1$, there is no single-qubit gate in layer $i$ on them as well, because if there are, they should have been absorbed into the single-qubit gates in layer $i-2$ or earlier.

Combining the above two cases, we obtain $\abs{S'_i} \le \min\{2^{d-i+1}n,2\nu\}$, for all $i\in[d]$. 
Note that $2\nu \le n$, thus based on Theorem \ref{thm:reachable-set-bounds}, we have
\begin{align*}
   &2^n-1 \le O\big(\sum_{i=1}^d |S'_i|\big)= O\big( \sum_{i=1}^d \min\{2^{d-i+1}n,2\nu\}\big)\\
   = &O\Big(\sum\limits_{i=1}^{ d- \lfloor\log(\frac{\nu}{n})\rfloor}2\nu+\sum\limits_{i= d-\lfloor \log(\frac{\nu}{n})\rfloor+1}^d2^{d-i+1}n\Big)=O\left((d-\log(\frac{\nu}{n}))\nu\right)\le O(d\nu),  
\end{align*}
which implies $d=\Omega(2^n/\nu)$.
Lemma \ref{lem:lowerbound_previous} gives a depth lower bound $\Omega(n)$ for QSP. Combined with this result, we obtain a depth lower bound for QSP of $\Omega(\max\{n,2^n/\nu\})$.

The results for diagonal unitaries and arbitrary $n$-qubit unitaries follow by the same argument.

\end{proof}


\subsection{Circuit size and depth lower bounds under specific graph constraints}
We prove lower bounds for specific graph constraints, starting with grid graphs. The generated state is in ${\sf R}_{\rm inp}$ as defined in Appendix \ref{sec:diag_with_ancilla_grid_d}.

\lowerboundgrid*
\begin{proof}
Recall that $n_1\ge n_2\ge \cdots\ge n_d$. Proposition \ref{prop:depth_lowerbound_graph} gives a depth lower bound $\Omega\Big(\max\Big\{n,\frac{2^n}{\Pi_{i=1}^d n_i}\Big\}\Big)$. Let $D$ denote the depth of the $n$-qubit QSP circuit implementing the quantum state, and $H=(V_C, E_C)$ the associated directed graph with reachable sets $S'_1, \ldots, S'_{D+1}$. Recall the arrangement of input register ${\sf R}_{\rm inp}$ in Appendix \ref{sec:diag_with_ancilla_grid_d}: Let $k$ be the minimum integer satisfying $n_1\cdots n_k \ge n$, and $n_k'$ be the minimum integer satisfying $n_1\cdots n_{k-1} n_k' \ge n$. (When $k=1$, $n_1\cdots n_{k-1}$ is defined to be 1.) Register ${\sf R}_{\rm inp}$ consists of the first $n$ qubits of sub-grid $\Grid_{n_1n_2\cdots n_{k-1}n'_k}^{n_1,n_2,\cdots, n_{k-1},n'_k,1,1,\cdots,1}$.
Note that, for $\Grid_{n+m}^{n_1, n_2, \ldots, n_d}$, $S'_{D+1} \subseteq [n_1]\times \cdots [n_{k-1}] \times [n_k'] \times \{1\}\times \cdots \times \{1\}$, $S'_{D}\subseteq [n_1]\times \cdots [n_{k-1}] \times [n_k'+1] \times [2]\times \cdots \times [2]$, $S'_{D-1}\subseteq [n_1]\times \cdots [n_{k-1}] \times [n_k'+2] \times [3]\times \cdots \times [3]$, and so on. Since $n_d\le \cdots \le n_1$, the last dimensions $[n_d]$, $[n_{d-1}]$ ... may be saturated as $i$ (in $S_i'$) decreases. In general, 
we have the following bounds for $|S_i'|$, where, in the middle line, the last $\ell \in[d-k]$ dimensions are saturated. 
     \begin{equation}\label{eq:grid-Si-bound}
    \abs{S'_{i}} \le \begin{cases}
    O\left(n_1n_2\cdots n_{k-1}(n'_k+D-i+1)(D-i+2)^{d-k}\right) &\quad \text{if }D-i+2\le n_d,\\
    O\left(n_1 n_2\cdots n_{k-1}(n'_k+D-i+1)(D-i+2)^{d-\ell-k}n_{d-\ell+1}\cdots n_d \right) &\quad \text{if } n_{d-\ell+1}< D-i+2\le n_{d-\ell},\\
    n_1 n_2\cdots n_d & \quad \text{if } D-i+2>n_k.
    \end{cases}
\end{equation}

We consider $d+1$ cases.
\begin{itemize}
    \item Case 1: If $n_d\ge \Omega(2^{\frac{n}{d+1}})$, assume for the sake of contradiction that $D=o\left(2^{\frac{n}{d+1}}\right)$. Then $D=o(n_d)$ in this case, and for all $i\in [D]$,
    \begin{align*}
    |S_i'|=O\left(n_1n_2\cdots n_{k-1}(n'_k+D-i+1)(D-i+2)^{d-k}\right)=O(n(n'_k+D-i+1)(D-i+2)^{d-k}) 
    \end{align*}
    By Theorem~\ref{thm:reachable-set-bounds}, 
    \begin{align*}
        2^n-1\le O\left(\sum_{i=1}^D|S'_i|\right)= \sum_{i=1}^{D} O\left(n(n'_k+D-i+1)(D-i+2)^{d-k}\right)=O(n(2D)^{d-k+1})\le O((2D)^{d+1}), 
    \end{align*}
    as $D\ge \Omega(n)$. This implies $D\ge \Omega(2^{n/(d+1)})$, which contradicts with our assumption. Therefore, $D$ must satisfy $D=\Omega(2^{n/(d+1)})$. In this case, it is not hard to verify that $\frac{2^{n/j}}{(n_j\cdots n_d)^{1/j}} \le O(2^{n/(d+1)})$ for all $j\in [d]$, and thus $D=\Omega(2^{n/(d+1)})=\Omega\Big(n+2^{\frac{n}{d+1}}+ \max\limits_{j\in [d]}\Big\{\frac{2^{n/j}}{(\Pi_{i=j}^d n_i)^{1/j}}\Big\}\Big)$.
    
    \item Case $j$ ($2\le j\le d-k+1$): $n_d,n_{d-1},\ldots,n_{d-j+2}$ satisfy
     \begin{equation}\label{eq:grid-lb-case-j}
         n_d \le o( 2^{n/(d+1)}),\quad n_{d-i}\le o\Big(\frac{2^{\frac{n}{d-i+1}}}{(n_{d-i+1}\cdots n_{d})^{\frac{1}{d-i+1}}}\Big),  \quad\forall i\in[j-2].
     \end{equation}
     and $n_{d-j+1}$ satisfies $n_{d-j+1}\ge\Omega\Big( \frac{2^{\frac{n}{d-j+2}}}{(n_{d-j+2}\cdots n_d)^{\frac{1}{d-j+2}}}\Big)$.
     Assume
     for the sake of contradiction that $D=o\Big(\frac{2^{\frac{n}{d-j+2}}}{(n_{d-j+2}\cdots n_{d})^{\frac{1}{d-j+2}}}\Big)$. Then we have $D=o(n_{d-j+1})$. We claim that $D\ge n_d$. Suppose that it does not hold, i.e, $D< n_d$. Based on Eq. \eqref{eq:grid-Si-bound} (the first case), 
         $|S'_i|\le O(n_1\cdots n_{k-1}(n'_k+D-i+1)(D-i+2)^{d-k})$ for all $i\in[D]$. Since $D<n_d$, $|S_i|$ satisfies
         \begin{align*}
             |S_i|&\le O((n_1\cdots n_{k-1})(n'_k+n_d)(n_d)^{d-k})
         \end{align*}
         Recall $n_d = o(2^{n/(d+1)})$, $n_1\cdots n_{k-1}n'_k=O(n)$, $n_1\cdots n_{k-1}$ is defined to 1 when $k=1$, and the assumption $\Omega(n)\le D<n_d$. We can obtain that the above bound is at most $O((n_d)^{d})$ both in case $k=1$ and $k\ge 2$. Thus 
         \[\sum_{i                    =1}^D|S'_i|\le D\cdot O((n_d)^{d}) = O((n_d)^{d+1}) = o(2^{n}).
         \]
         But according to Theorem \ref{thm:reachable-set-bounds}, $\sum_{i=1}^D|S'_i|\ge 2^n-1$, which contradicts the above equation. Therefore, we have $D\ge n_d$. 
         
     
        Recall that we assumed $D=o(n_{d-j+1})$, so $D$ falls in an interval $[n_{d-j+\tau+1},  n_{d-j+\tau})$ for some $1\le \tau\le j-1$. Now we upper bound $|S_i'|$ for different $i$. First consider those $i$ with  $D-i+2\le n_d$: we have $D-i+2\le n_d\le n_{d-1} \le \cdots \le n_{d-j+\tau+1}$, thus by Eq.\eqref{eq:grid-Si-bound} (the first case)
         \begin{align}\label{eq:reachable_set_size1}
              |S_i'| &\le O(n_1\cdots n_{k-1}(n'_k+D-i+1)(D-i+2)^{d-k})\nonumber\\
            &\le O(n_1\cdots n_{k-1}(n'_k+D-i+1)(D-i+2)^{d-j+\tau-k}n_{d-j+\tau+1}\cdots n_{d}).
        \end{align}
         Next consider those $i$ with  $n_{d-\ell+1} < D-i+2 \le n_{d-\ell}$ for some $\ell\in[j-\tau-1]$: we have $D-i+2 \le n_{d-\ell} \le \cdots \le n_{d-j+\tau+1}$, thus by Eq.\eqref{eq:grid-Si-bound} (second case)
         \begin{align}\label{eq:reachable_set_size2}
             |S_i'|&\le O(n_1\cdots n_{k-1}(n'_k+D-i+1)(D-i+2)^{d-\ell-k}n_{d-\ell+1}\cdots n_d)\nonumber\\
              &\le  O(n_1\cdots n_{k-1}(n'_k+D-i+1)(D-i+2)^{d-j+\tau-k}n_{d-j+\tau+1}\cdots n_{d})
         \end{align}
         For $i$ with $1\le i<D-n_{d-j+\tau+1}+2$, we have $n_{d-j+\tau+1}\le D-i+2\le D+1\le n_{d-j+\tau}$, thus by Eq. \eqref{eq:grid-Si-bound} (the second case)
            \begin{align}\label{eq:reachable_set_size3}
             |S_i'|
              &\le  O(n_1\cdots n_{k-1}(n'_k+D-i+1)(D-i+2)^{d-j+\tau-k}n_{d-j+\tau+1}\cdots n_{d}).
         \end{align}


     By Theorem~\ref{thm:reachable-set-bounds}, we have
    \begin{align*}
          2^n-1\le O\left(\sum_{i=1}^D|S'_i|\right)
          =O\left(\sum_{i=D-n_d+2}^D|S_i'|+\sum_{
          \ell=1}^{j-\tau-1}\sum_{i=D-{n_{d-\ell}+2}}^{D-n_{d-\ell+1}+1}|S_i'|+\sum_{i=1}^{D-n_{d-j+\tau+1}+1}|S_i'|\right)
    \end{align*}
    
    Now we use Eq. \eqref{eq:reachable_set_size1}, Eq. \eqref{eq:reachable_set_size2}, and Eq. \eqref{eq:reachable_set_size3} to bound the first, second and third term, respectively, and obtain the upper bound
    \[2^n-1 \le \sum_{i=1}^{D} O(n_1\cdots n_{k-1}(n'_k+D-i+1)(D-i+2)^{d-j+\tau-k} n_{d-j+\tau+1} \cdots n_{d}).
    \]
    Note that $D-i+1\le D$ for all $i\in D$, $n'_k\le n$ and $D\ge \Omega(n)$, therefore
    \begin{align*}
         \sum_{i=1}^D (n'_k+D-i+1)(D-i+2)^{d-j+\tau-k} \le  \sum_{i=1}^D (n'_k+D)(D+1)^{d-j+\tau-k} 
        \le  (2D)^{d-j+\tau+2-k},
    \end{align*}
    and the upper bound becomes 
    \[2^n-1 \le O(n_1\cdots n_{k-1}(2D)^{d-j+\tau+2-k}n_{d-j+\tau+1}\cdots n_d).\]
    Recall that $n_1\cdots n_{k-1}=1$ if $k=1$ and $n_1\cdots n_{k-1} = O(n) = O(D)$ if $k\ge 2$. In either case, we have $2^n-1 \le O((2D)^{d-j+\tau+1} n_{d-j+\tau+1}\cdots n_d)$.
    Thus $D\ge\Omega\left(\frac{2^{\frac{n}{d-j+\tau+1}}}{(n_{d-j+\tau+1}\cdots n_{d})^{\frac{1}{d-j+\tau+1}}}\right)$. 
    
    If $2\le \tau\le j-1$, i.e. $j-\tau\in [j-2]$, Eq.\eqref{eq:grid-lb-case-j} with $i$ set to be $j-\tau$ gives 
    $D\le o\left(\frac{2^{\frac{n}{d-j+\tau+1}}}{(n_{d-j+\tau+1}\cdots n_d)^{\frac{1}{d-j+\tau+1}}}\right)$, contradicting the above lower bound of $D$. Thus $\tau = 1$ and $D=\Omega\Big(\frac{2^{\frac{n}{d-j+2}}}{(n_{d-j+2}\cdots n_{d})^{\frac{1}{d-j+2}}}\Big)$. 
    Now we shall show that 
    \begin{align}\label{eq:grid-lb-D}
        D=\Omega\Big(\max\Big\{n,\frac{2^{\frac{n}{d-j+2}}}{(n_{d-j+2}\cdots n_{d})^{\frac{1}{d-j+2}}}\Big\}\Big)=\Omega\Big( n+2^{\frac{n}{d+1}}+\max\limits_{j\in [d]}\Big\{\frac{2^{n/j}}{(\Pi_{i=j}^d n_i)^{1/j}}\Big\}\Big).
    \end{align}

    We first show the following facts
        \begin{align}
            &\frac{2^{\frac{n}{d-i+1}}}{(n_{d-i+1}\cdots n_{d})^{\frac{1}{d-i+1}}}\ge \frac{2^{\frac{n}{d-i+2}}}{(n_{d-i+2}\cdots n_{d})^{\frac{1}{d-i+2}}}, \quad \text{if~} 2\le i\le j-1. \label{eq:compare1}\\
            &\frac{2^{\frac{n}{d-j+2}}}{(n_{d-j+2}\cdots n_{d})^{\frac{1}{d-j+2}}}\ge \Omega\left(\frac{2^{\frac{n}{k'}}}{(n_{k'}\cdots n_{d})^{\frac{1}{k'}}}\right), \quad \text{if~} 1\le k'\le d-j+1.\label{eq:compare2}
        \end{align}
        Eq. \eqref{eq:grid-lb-case-j} implies that $n_{d-i+1}\le \frac{2^{\frac{n}{d-i+2}}}{(n_{d-i+2}\cdots n_d)^{\frac{1}{d-i+2}}}$ for all $i\in \{2,\ldots,j-1\}$. Then we have
        \begin{align*}
            &\frac{2^{\frac{n}{d-i+1}}}{(n_{d-i+1}\cdots n_{d})^{\frac{1}{d-i+1}}}/ \frac{2^{\frac{n}{d-i+2}}}{(n_{d-i+2}\cdots n_{d})^{\frac{1}{d-i+2}}}
            =\frac{2^{\frac{n}{(d-i+1)(d-i+2)}}}{(n_{d-i+1})^{\frac{1}{d-i+1}}(n_{d-i+2}\cdots n_{d})^{\frac{1}{(d-i+1)(d-i+2)}}}
            \ge (n_{d-i+1})^{\frac{1}{d-i+1}}/(n_{d-i+1})^{\frac{1}{d-i+1}}=1.
        \end{align*}
        Eq.~\eqref{eq:compare1} thus holds. Recall that $n_1\ge n_2\ge \cdots \ge n_{d-j+1}\ge \Omega\Big( \frac{2^{\frac{n}{d-j+2}}}{(n_{d-j+2}\cdots n_d)^{\frac{1}{d-j+2}}}\Big)$. Then we have
        \begin{align*}
            &\frac{2^{\frac{n}{d-j+2}}}{(n_{d-j+2}\cdots n_{d})^{\frac{1}{d-j+2}}}/ \frac{2^{\frac{n}{k'}}}{(n_{k'}\cdots n_{d})^{\frac{1}{k'}}}=\frac{(n_{d-j+2}\cdots n_d)^{\frac{d-j+2-k'}{(d-j+2)k'}}(n_{k'}\cdots n_{d-j+1})^{1/k'}}{2^{\frac{(d-j+2-k')n}{(d-j+2)k'}}}\\
            &\ge \Omega\left(\frac{(n_{k'}\cdots n_{d-j+1})^{1/k'}}{(n_{d-j+1})^{\frac{d-j+2-k'}{k'}}}\right)\ge \Omega\left(\frac{(n_{d-j+1})^{\frac{d-j+1-k'+1}{k'}}}{(n_{d-j+1})^{\frac{d-j+2-k'}{k'}}}\right)=\Omega(1),
        \end{align*}
        and Eq.~\eqref{eq:compare2} holds.

        Combining Eq. \eqref{eq:compare1} and Eq. \eqref{eq:compare2}, we see that
        $\frac{2^{\frac{n}{d-j+2}}}{(n_{d-j+2}\cdots n_{d})^{\frac{1}{d-j+2}}}\ge \Omega\left(\frac{2^{\frac{n}{k'}}}{(n_{k'}\cdots n_d)^{\frac{1}{k}}}\right)$ for $k'\in[d]$. For Eq. \eqref{eq:grid-lb-D}, it remains to prove         $\frac{2^{\frac{n}{d-j+2}}}{(n_{d-j+2}\cdots n_{d})^{\frac{1}{d-j+2}}}\ge 2^{\frac{n}{d+1}}$. Since $n_d\le 2^{n/(d+1)}$ (Eq. \eqref{eq:grid-lb-case-j}), it follows that $\frac{2^{n/d}}{(n_d)^{1/d}}\ge \frac{2^{n/d}}{(2^{n/(d+1)})^{1/d}}=2^{n/(d+1)}$. According to Eq. \eqref{eq:compare1}, we have 
            \[\frac{2^{\frac{n}{d-j+2}}}{(n_{d-j+2}\cdots n_{d})^{\frac{1}{d-j+2}}}\ge \frac{2^{\frac{n}{d-j+3}}}{(n_{d-j+3}\cdots n_{d})^{\frac{1}{d-j+3}}}\ge \cdots \ge \frac{2^{\frac{n}{d}}}{( n_{d})^{\frac{1}{d}}}\ge  2^{n/(d+1)}.\]
        This completes the proof of Eq. \eqref{eq:grid-lb-D}.
       
     \item   Case $j$ $(d-k+2\le j\le d)$: Same as Eq. \eqref{eq:grid-lb-case-j},  $n_d,n_{d-1},\ldots,n_{d-j+1}$ satisfy
     \begin{equation*}
         n_d \le o( 2^{n/(d+1)}), \quad n_{d-i}\le o\Big(\frac{2^{\frac{n}{d-i+1}}}{(n_{d-i+1}\cdots n_{d})^{\frac{1}{d-i+1}}}\Big),~ \forall i\in[j-2], \quad n_{d-j+1}\ge\Omega\Big( \frac{2^{\frac{n}{d-j+2}}}{(n_{d-j+2}\cdots n_d)^{\frac{1}{d-j+2}}}\Big).
     \end{equation*}
     Assume for the sake of contradiction that  $D=o\Big( \frac{2^{\frac{n}{d-j+2}}}{(n_{d-j+2}\cdots n_d)^{\frac{1}{d-j+2}}}\Big)$. Then we have $n_{d-j+2}\cdots n_d=o\Big( \frac{2^{n}}{D^{d-j+2}}\Big)$. For all $i\in[D]$, we use a trivial size bound of $|S'_i|$ is $|S_i'|\le n_1n_2\cdots n_d$. Since in the current case $j$ we have $d-k+2\le j$, $n_{d-j+2}\cdots n_{k-1}$ is well defined and at least 1. Thus by Theorem~\ref{thm:reachable-set-bounds}, we have
     \begin{equation}\label{eq:d-k+2<=j<=d_bound}
          2^n-1\le O\left(\sum_{i=1}^D|S'_i|\right)\le O(Dn_1\cdots n_d)
          \le O\left (Dn_1\cdots n_{k-1} n_{d-j+2}\cdots n_{k-1} n_{k}\cdots n_d\right).
     \end{equation}

     Since $n_{d-j+2}\cdots n_d=o\Big( \frac{2^{n}}{D^{d-j+2}}\Big)$,  $n_1n_2\cdots n_{k-1}n'_k=O(n)$ and $D\ge \Omega (n)$, we have
     \[O\left (Dn_1\cdots n_{k-1} n_{d-j+2}\cdots n_{k-1} n_{k}\cdots n_d\right)\le 
        o\left(\frac{n2^n}{D^{d-j+1}}\right)\le o\left(\frac{n2^n}{D}\right)= o(2^n).\]
    This contradicts Eq.~\eqref{eq:d-k+2<=j<=d_bound}.
     Therefore, the assumption that $D=o\Big( \frac{2^{\frac{n}{d-j+2}}}{(n_{d-j+2}\cdots n_d)^{\frac{1}{d-j+2}}}\Big)$ does not hold and $D$ satisfies $D=\Omega\Big(\frac{2^{\frac{n}{d-j+2}}}{(n_{d-j+2}\cdots n_{d})^{\frac{1}{d-j+2}}}\Big)$. By the same discussion in Case $j~(2\le j\le d-k+1)$, we have $\Omega\Big(\max\Big\{n,\frac{2^{\frac{n}{d-j+2}}}{(n_{d-j+2}\cdots n_{d})^{\frac{1}{d-j+2}}}\Big\}\Big)=\Omega\Big( n+2^{\frac{n}{d+1}}+\max\limits_{j\in [d]}\Big\{\frac{2^{n/j}}{(\Pi_{i=j}^d n_i)^{1/j}}\Big\}\Big)$.
    
    %
    \item  Case $d+1$: $n_d,n_{d-1},\ldots, n_1$ satisfy 
         \begin{equation*}
         n_d \le o( 2^{n/(d+1)}), \qquad n_{d-i}\le o\Big(\frac{2^{\frac{n}{d-i+1}}}{(n_{d-i+1}\cdots n_{d})^{\frac{1}{d-i+1}}}\Big),~ \forall i\in[d-1],
     \end{equation*}
     Proposition \ref{prop:depth_lowerbound_graph} gives a depth lower bound of $\Omega\Big(\max\Big\{n,\frac{2^n}{\Pi_{i=1}^d n_i}\Big\}\Big)$. For any $k'\ge 2$, the above inequality is rephrased as $n_{k'-1}\le o\left(\frac{2^{\frac{n}{k'}}}{(n_{k'}\cdots n_d)^{\frac{1}{k'}}}\right)$. We have 
     \begin{align*}
         &\frac{2^{\frac{n}{k'-1}}}{(n_{k'-1}\cdots n_d)^{\frac{1}{k'-1}}}\Big/\frac{2^{n/k'}}{(n_{k'}\cdots n_d)^{1/k'}}=\frac{2^{\frac{n}{k'(k'-1)}}}{(n_{k'-1})^{\frac{1}{k'-1}}(n_{k'}\cdots n_d)^{\frac{1}{k'(k'-1)}}}
        \ge  \frac{(n_{k'-1})^{\frac{1}{k'-1}}}{(n_{k'-1})^{\frac{1}{k'-1}}}=1,\quad \forall k'\ge 2.
      \end{align*}
      Therefore, \[\frac{2^n}{n_1\cdots n_d}\ge \frac{2^{\frac{n}{2}}}{(n_2\cdots n_d)^{1/2}}\ge \cdots \ge\frac{2^{\frac{n}{d}}}{(n_d)^{1/d}}\ge \frac{2^{\frac{n}{d}}}{(2^{n/(d+1)})^{1/d}}=O(2^{n/(d+1)})\] 
      where the last inequality used $n_d\le o(2^{n/(d+1)})$. This implies $\Omega\Big(\max\Big\{n,\frac{2^n}{\Pi_{i=1}^d n_i}\Big\}\Big)=\Omega\Big( n+2^{\frac{n}{d+1}}+\max\limits_{j\in [d]}\Big\{\frac{2^{n/j}}{(\Pi_{i=j}^d n_i)^{1/j}}\Big\}\Big)$.
\end{itemize}
Let $d=1$ and $d=2$, the first and second cases are obtained.
\end{proof}

Using the same argument, we can also show the following.
\begin{lemma}\label{lem:lower_bound_grid_k_Lambda}
There exists an $n$-qubit diagonal unitary matrix that requires a quantum circuit of depth \[\Omega\Big( n+2^{\frac{n}{d+1}}+\max\limits_{j\in [d]}\Big\{\frac{2^{n/j}}{(\Pi_{i=j}^d n_i)^{1/j}}\Big\}\Big)\] under $\Grid_{n+m}^{n_1,n_2,\cdots,n_d}$ constraint to be implemented, using $m\ge0$ ancillary qubits.
\end{lemma}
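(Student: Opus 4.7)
The plan is to mimic the proof of Theorem \ref{thm:lower_bound_grid_k_QSP} almost verbatim, exploiting the fact that part (2) of Theorem \ref{thm:reachable-set-bounds} gives exactly the same counting bound $O\left(\sum_{i=1}^d |S_i'|\right)\ge 2^n-1$ for diagonal unitaries as part (1) does for quantum states. First I would fix a depth-$D$ circuit $C$ on $n$ input qubits and $m$ ancillary qubits implementing a general diagonal unitary $\Lambda_n$ under $\Grid_{n+m}^{n_1,\ldots,n_d}$ constraint, form its associated directed graph $H=(V_C,E_C)$ (Definition \ref{def:circuit-digraph}), and compute the reachable subsets $S_1',\ldots,S_{D+1}'$ (Definition \ref{def:reachable}), with $S_{D+1}'$ consisting of the $n$ input qubits placed in the corner sub-grid described in Appendix \ref{sec:diag_with_ancilla_grid_d}. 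The lightcone argument used to prove Theorem \ref{thm:reachable-set-bounds}(2) then gives $\sum_{i=1}^D |S_i'| = \Omega(2^n)$, since a circuit consisting of only $O(\sum_i|S_i'|)$ gates outside its restriction to the input lightcone can only realize an $O(\sum_i|S_i'|)$-real-parameter family, whereas diagonal unitaries form a $(2^n-1)$-parameter manifold.

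Next I would bound $|S_i'|$ geometrically. Because each layer of $H$ can grow the reachable set by at most a ball of radius one in the grid, the standard grid volume estimate from the QSP proof (Eq.\ \eqref{eq:grid-Si-bound}) carries over unchanged:
\[
|S_i'|\le
\begin{cases}
O\big(n_1\cdots n_{k-1}(n_k'+D-i+1)(D-i+2)^{d-k}\big), & D-i+2\le n_d,\\
O\big(n_1\cdots n_{k-1}(n_k'+D-i+1)(D-i+2)^{d-\ell-k}n_{d-\ell+1}\cdots n_d\big), & n_{d-\ell+1}<D-i+2\le n_{d-\ell},\\
\prod_{j=1}^d n_j, & D-i+2>n_k,
\end{cases}
\]
where $k$ and $n_k'$ are as in Appendix \ref{sec:diag_with_ancilla_grid_d}. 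Combining this with $\sum_{i=1}^D|S_i'|=\Omega(2^n)$ will produce the desired depth lower bound.

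From here, I would split into the same $d+1$ cases as in the proof of Theorem \ref{thm:lower_bound_grid_k_QSP}, according to which $n_{d-j+1}$ is the first grid dimension large enough to satisfy $n_{d-j+1}\ge \Omega\big(2^{n/(d-j+2)}/(n_{d-j+2}\cdots n_d)^{1/(d-j+2)}\big)$. In each case, arguing by contradiction that $D$ is asymptotically smaller than the claimed lower bound, the telescoping sum of the $|S_i'|$ estimates is upper bounded by a power of $D$ times the product of saturated dimensions, and forcing this to be $\Omega(2^n)$ reproduces the bound $D=\Omega\big(2^{n/(d-j+2)}/(n_{d-j+2}\cdots n_d)^{1/(d-j+2)}\big)$. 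The algebraic manipulations comparing $2^{n/(d-i+1)}/(n_{d-i+1}\cdots n_d)^{1/(d-i+1)}$ across different $i$ are identical to those in Eqs.\ \eqref{eq:compare1}--\eqref{eq:compare2} and show that the dominant term is captured by $\max_{j\in[d]}\{2^{n/j}/(\prod_{i=j}^d n_i)^{1/j}\}+2^{n/(d+1)}$.

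Finally, I would add the $\Omega(n)$ summand from Proposition \ref{prop:depth_lowerbound_graph}, which already established an $\Omega(n+2^n/(\prod n_i))$ lower bound for diagonal unitaries under any connected graph. There is no genuinely new obstacle here: the only reason the QSP proof cannot be quoted as a black box is that it is written for states rather than diagonal unitaries, but since Theorem \ref{thm:reachable-set-bounds}(2) provides the same $2^n-1$ parameter-count bound, the combinatorial core of the argument transfers line for line. The most delicate part, as in the QSP case, is the case split and the verification that the various comparison inequalities $2^{n/(d-i+1)}/(n_{d-i+1}\cdots n_d)^{1/(d-i+1)}\ge 2^{n/(d-i+2)}/(n_{d-i+2}\cdots n_d)^{1/(d-i+2)}$ hold in the regime defined by each case, which is where care is required.
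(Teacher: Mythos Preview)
Your proposal is correct and matches the paper's approach exactly: the paper simply states ``Using the same argument, we can also show the following'' before Lemma \ref{lem:lower_bound_grid_k_Lambda}, relying on the fact that Theorem \ref{thm:reachable-set-bounds}(2) gives the identical parameter-count inequality $O(\sum_i|S_i'|)\ge 2^n-1$ for diagonal unitaries, so the geometric case analysis from the QSP proof transfers verbatim.
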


\lowerboundgridus*
\begin{proof}
Lemma \ref{lem:lowerbound_previous} and Theorem \ref{thm:depth_lower_bound_graph} give a depth lower bound of $\Omega\Big(\max\Big\{n,\frac{4^n}{\Pi_{i=1}^d n_i}\Big\}\Big)$. By the same argument as in the proof of Theorem \ref{thm:lower_bound_grid_k_QSP}, the following $d+1$ cases:
\begin{itemize}
    \item Case 1:  $n_d\ge \Omega(4^{\frac{n}{d+1}})$.
    \item Case $j$ ($2\le j\le d$):  $n_d,n_{d-1},\ldots,n_{d-j+1}$ satisfy
     \begin{equation*}
         n_d \le o( 4^{n/(d+1)}), n_{d-i}\le o\Big(\frac{4^{\frac{n}{d-i+1}}}{(n_{d-i+1}\cdots n_{d})^{\frac{1}{d-i+1}}}\Big), n_{d-j+1}\ge\Omega\Big( \frac{4^{\frac{n}{d-j+2}}}{(n_{d-j+2}\cdots n_d)^{\frac{1}{d-j+2}}}\Big), \quad\forall i\in[j-2].
     \end{equation*}
     \item Case $d+1$: $n_d,n_{d-1},\ldots,n_{1}$ satisfy
     \begin{equation*}
         n_d \le o( 4^{n/(d+1)}),\quad n_{d-i}\le o\Big(\frac{4^{\frac{n}{d-i+1}}}{(n_{d-i+1}\cdots n_{d})^{\frac{1}{d-i+1}}}\Big),\quad\forall i\in[d-1].
     \end{equation*}
\end{itemize}
have depth lower bounds of 
\begin{align*}
    \Omega(4^{\frac{n}{d+1}})&=\Omega\Big(n+4^{\frac{n}{d+1}}+\max\limits_{j\in [d]}\Big\{\frac{4^{n/j}}{(\Pi_{i=j}^d n_i)^{1/j}}\Big\}\Big), \qquad \text{Case }1;\\
    \Omega\left(\frac{4^{\frac{n}{d-j+2}}}{(n_{d-j+2}\cdots n_d)^{\frac{1}{d-j+2}}}\right)&=\Omega\Big(n+4^{\frac{n}{d+1}}+\max\limits_{j\in [d]}\Big\{\frac{4^{n/j}}{(\Pi_{i=j}^d n_i)^{1/j}}\Big\}\Big),\qquad\text{Cases }2\le j \le d;\\
    \Omega\left(\max\left\{n,\frac{4^n}{\prod_{i=1}^d n_i}\right\}\right)&=\Omega\Big(n+4^{\frac{n}{d+1}}+\max\limits_{j\in [d]}\Big\{\frac{4^{n/j}}{(\Pi_{i=j}^d n_i)^{1/j}}\Big\}\Big),\qquad\text{Case }d+1.
\end{align*}
\end{proof}

\begin{corollary}\label{coro:lower_bound_path}
Under $\Path_{n+m}$ constraint, using $m$ ancillary qubits,
\begin{enumerate}
    \item the depth and size lower bounds for $n$-qubit QSP are $\Omega\left(\max\left\{2^{n/2},\frac{2^n}{n+m}\right\}\right)$ and  $\Omega(2^n)$.
    \item the depth and size lower bounds for $n$-qubit diagonal unitary matrices are $\Omega\left(\max\left\{2^{n/2},\frac{2^n}{n+m}\right\}\right)$ and  $\Omega(2^n)$.    
    \item the depth and size lower bounds for $n$-qubit GUS are $\Omega\left(\max\left\{4^{n/2},\frac{4^n}{n+m}\right\}\right)$ and $\Omega(4^n)$.
\end{enumerate}
\end{corollary}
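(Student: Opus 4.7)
The plan is to observe that $\Path_{n+m}$ is simply the one-dimensional grid $\Grid_{n+m}^{n+m}$, so all three claims should follow by specializing the already-established grid lower bounds to $d=1$ with $n_1=n+m$.

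For the depth bounds, I would instantiate Theorem~\ref{thm:lower_bound_grid_k_QSP} with $d=1$. The generic expression $\max_{j\in[d]}\bigl\{n,\, 2^{n/(d+1)},\, 2^{n/j}/(\prod_{i=j}^d n_i)^{1/j}\bigr\}$ collapses to $\max\{n,\, 2^{n/2},\, 2^n/(n+m)\}$, and since $2^{n/2}\ge n$ asymptotically, this yields exactly $\Omega\bigl(\max\{2^{n/2},\, 2^n/(n+m)\}\bigr)$ for QSP. An identical specialization of Lemma~\ref{lem:lower_bound_grid_k_Lambda} gives the same depth bound for diagonal unitaries, and specializing Theorem~\ref{thm:lower_bound_grid_k_US} (with $4^{n/(d+1)}=4^{n/2}$ and $4^n/n_1=4^n/(n+m)$) gives the $\Omega\bigl(\max\{4^{n/2},\, 4^n/(n+m)\}\bigr)$ depth bound for GUS.

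For the size bounds, I would cite the general-graph lower bounds directly: Proposition~\ref{prop:lowerbound_size_graph} gives $\Omega(2^n)$ and $\Omega(4^n)$ size lower bounds for QSP and GUS respectively under any connected constraint graph (in particular the path), and Proposition~\ref{prop:size_lowerbound_Lambda} gives the $\Omega(2^n)$ size lower bound for diagonal unitaries under arbitrary graph constraints. No path-specific argument is needed for these.

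There is no real obstacle here: the proof is essentially a bookkeeping check that the $d=1$ specialization of the grid expressions reduces to the claimed path formulas, plus an invocation of the three size lower bounds. The only mildly delicate point is verifying that the small-$n$ term $n$ inside the maxima is absorbed by $2^{n/2}$ (or $4^{n/2}$), which is immediate for all $n\ge 2$.
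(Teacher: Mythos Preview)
Your proposal is correct and matches the paper's proof essentially verbatim: the paper also sets $d=1$ so that $\Grid_{n+m}^{n_1,\ldots,n_d}$ becomes $\Path_{n+m}$, and then invokes Theorem~\ref{thm:lower_bound_grid_k_QSP}, Lemma~\ref{lem:lower_bound_grid_k_Lambda}, Theorem~\ref{thm:lower_bound_grid_k_US} for the depth bounds and Propositions~\ref{prop:lowerbound_size_graph} and~\ref{prop:size_lowerbound_Lambda} for the size bounds. Your explicit verification that the $d=1$ specialization collapses to the stated formulas is a nice addition but not strictly necessary.
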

\begin{proof}
 Let $d=1$, then $\Grid_{n+m}^{n_1,n_2,\ldots,n_d}$ is $\Path_{n+m}$. The results follow from  Proposition \ref{prop:size_lowerbound_Lambda}, Theorem \ref{thm:lower_bound_grid_k_QSP}, Lemma \ref{lem:lower_bound_grid_k_Lambda}, Theorem \ref{thm:lower_bound_grid_k_US} and Proposition \ref{prop:lowerbound_size_graph}.
\end{proof}

\begin{corollary}\label{coro:lower_bound_binary}
Under $\Tree_{n+m}(2)$ constraint, using $m$ ancillary qubits, 
\begin{enumerate}
    \item $n$-qubit QSP needs quantum circuits of depth  $\Omega\left(\max\left\{n,\frac{2^n}{n+m}\right\}\right)$ and size $\Omega(2^n)$,
    
    \item $n$-qubit diagonal unitary matrix needs quantum circuits of depth  $\Omega\left(\max\left\{n,\frac{2^n}{n+m}\right\}\right)$ and size $\Omega(2^n)$,
    
    \item $n$-qubit GUS needs quantum circuits of depth  $\Omega\left(\max\left\{n,\frac{4^n}{n+m}\right\}\right)$ and size $\Omega(4^n)$.
\end{enumerate}
\end{corollary}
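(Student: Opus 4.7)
The plan is to observe that the binary tree $\Tree_{n+m}(2)$ is a connected graph on $n+m$ vertices, so all of the general-graph lower bounds established earlier apply directly. No tree-specific combinatorics are needed for this corollary: unlike the path case (Corollary~\ref{coro:lower_bound_path}), there is no ``extra'' $2^{n/2}$-type term, so the bounds match the general-graph lower bounds exactly.

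For the size bounds, I would invoke Proposition~\ref{prop:lowerbound_size_graph} to produce $n$-qubit quantum states and $n$-qubit unitaries requiring circuit size $\Omega(2^n)$ and $\Omega(4^n)$, respectively, under arbitrary connected graph constraint, and Proposition~\ref{prop:size_lowerbound_Lambda} to produce $n$-qubit diagonal unitaries requiring circuit size $\Omega(2^n)$. Since $\Tree_{n+m}(2)$ is a particular connected graph, these bounds specialize immediately.

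For the depth bounds, I would invoke Proposition~\ref{prop:depth_lowerbound_graph}, which shows that, under arbitrary connected graph constraint on $n+m$ vertices, there exist $n$-qubit states, diagonal unitaries, and general unitaries requiring circuit depth $\Omega(n + 2^n/(n+m))$, $\Omega(n + 2^n/(n+m))$, and $\Omega(n + 4^n/(n+m))$, respectively. Finally, I would note the identity $\Omega(n + x) = \Omega(\max\{n, x\})$ to rewrite these in the form claimed in the corollary statement.

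The only ``obstacle'' is essentially cosmetic: one must verify that no stronger constraint-specific lower bound is needed. Unlike $\Path_{n+m}$, whose diameter forces the $2^{n/2}$ term via the reachable-set argument (Theorem~\ref{thm:reachable-set-bounds}), the binary tree has diameter $O(\log(n+m))$ and maximum matching size $\nu = \Theta(n+m)$, so Theorem~\ref{thm:depth_lower_bound_graph} yields exactly $\Omega(\max\{n, 2^n/\nu\}) = \Omega(\max\{n, 2^n/(n+m)\})$, i.e., nothing stronger than what Proposition~\ref{prop:depth_lowerbound_graph} already provides. Hence the corollary is a direct specialization, and the proof can be written in a few lines.
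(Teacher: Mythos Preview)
Your proposal is correct and matches the paper's approach exactly: the paper's proof reads simply ``Follows from Propositions~\ref{prop:lowerbound_size_graph}, \ref{prop:size_lowerbound_Lambda} and \ref{prop:depth_lowerbound_graph}.'' Your additional observation that the binary tree's large maximum matching means Theorem~\ref{thm:depth_lower_bound_graph} gives nothing beyond Proposition~\ref{prop:depth_lowerbound_graph} is a nice sanity check, though not needed for the proof itself.
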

\begin{proof}
Follows from Propositions \ref{prop:lowerbound_size_graph}, \ref{prop:size_lowerbound_Lambda} and  \ref{prop:depth_lowerbound_graph}.
\end{proof}

\begin{corollary}[Theorems \ref{thm:lower_exapnder_QSP} and \ref{thm:lower_exapnder_US}]\label{coro:lower_bound_expander}
Under $\Expander_{n+m}$ constraint, using $m\ge 0$ ancillary qubits,
\begin{enumerate}
    \item $n$-qubit QSP needs quantum circuits of depth  $\Omega\left(\max\left\{n,\frac{2^n}{n+m}\right\}\right)$ and size $\Omega(2^n)$.
    
    \item $n$-qubit diagonal unitary matrix needs quantum circuits of depth  $\Omega\left(\max\left\{n,\frac{2^n}{n+m}\right\}\right)$ and size $\Omega(2^n)$.
    
    \item $n$-qubit GUS needs quantum circuits of depth  $\Omega\left(\max\left\{n,\frac{4^n}{n+m}\right\}\right)$ and size $\Omega(4^n)$.
\end{enumerate}
\end{corollary}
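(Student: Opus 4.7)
The plan is to observe that every expander graph is, by definition, a connected graph, so Corollary \ref{coro:lower_bound_expander} follows immediately by specializing the general graph bounds already established earlier in the section. Concretely, I would invoke Proposition \ref{prop:depth_lowerbound_graph} with $G = \Expander_{n+m}$, which directly yields depth lower bounds of $\Omega(n + 2^n/(n+m))$ for QSP and for diagonal unitaries, and $\Omega(n + 4^n/(n+m))$ for GUS. Since $\max\{n, 2^n/(n+m)\} = \Theta(n + 2^n/(n+m))$ and similarly for the $4^n$ case, the depth statements are exactly what is claimed.

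For the size bounds, the plan is analogous: Proposition \ref{prop:lowerbound_size_graph} supplies $\Omega(2^n)$ for QSP and $\Omega(4^n)$ for GUS under any connected graph constraint, while Proposition \ref{prop:size_lowerbound_Lambda} supplies $\Omega(2^n)$ for diagonal unitaries under any graph constraint. Applying each to $G = \Expander_{n+m}$ finishes the three size claims.

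There is essentially no obstacle here, since all the hard work was already done in the general-graph lower bound arguments: the parameter-counting argument behind Proposition \ref{prop:size_lowerbound_Lambda} and \ref{prop:lowerbound_size_graph} is graph-insensitive, and the reachable-set argument behind Theorem \ref{thm:reachable-set-bounds} (which feeds into Proposition \ref{prop:depth_lowerbound_graph}) also does not use any structural property of the constraint graph beyond connectedness. The only thing to verify is the trivial implication that an $n+m$-vertex expander is a connected $(n+m)$-vertex graph, which is immediate from Definition \ref{def:expansion}: $h_{out}(G) > 0$ forces no nontrivial subset to be isolated, so $G$ is connected.

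Thus the proposed proof is a one-line reduction of the form: ``Apply Propositions \ref{prop:lowerbound_size_graph}, \ref{prop:size_lowerbound_Lambda}, and \ref{prop:depth_lowerbound_graph} with $G = \Expander_{n+m}$, and note $\max\{n,f(n,m)\} = \Theta(n+f(n,m))$ for $f \in \{2^n/(n+m), 4^n/(n+m)\}$.'' No separate argument exploiting the expander structure is needed for the \emph{lower} bound direction, which is consistent with the fact (discussed after Table \ref{tab:QSP_graph}) that the interesting side of the story for expanders is the upper-bound construction, not the lower bound.
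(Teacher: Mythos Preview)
Your proposal is correct and matches the paper's own proof essentially verbatim: the paper also simply cites Propositions \ref{prop:lowerbound_size_graph}, \ref{prop:size_lowerbound_Lambda}, and \ref{prop:depth_lowerbound_graph} (the general-graph lower bounds) and nothing more. Your additional remarks about connectedness of expanders and the $\max$-versus-sum equivalence are harmless elaborations of what the paper leaves implicit.
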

\begin{proof}
Follows from Theorem \ref{prop:lowerbound_size_graph}, and Propositions \ref{prop:size_lowerbound_Lambda} and  \ref{prop:depth_lowerbound_graph}.
\end{proof}

\begin{lemma}[Theorems \ref{thm:lower_bound_tree_QSP} and \ref{thm:lower_bound_tree_US}]\label{lem:depth_lower_dary}
There exist $n$-qubit quantum states, $n$-qubit diagonal unitary matrices and $n$-qubit unitary matrices which require quantum circuits under $\Tree_{n+m}(d)$ constraint of depth at least $\Omega\left(\max\left\{n,\frac{d2^n}{n+m}\right\}\right)$, $\Omega\left(\max\left\{n,\frac{d2^n}{n+m}\right\}\right)$ and  $\Omega\left(\max\left\{n,\frac{d4^n}{n+m}\right\}\right)$ respectively, to implement, using $m\ge 0$ ancillary qubits.
\end{lemma}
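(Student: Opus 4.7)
The plan is to invoke Theorem~\ref{thm:depth_lower_bound_graph} (the general graph depth lower bound in terms of the maximum matching size $\nu$) with $G = \Tree_{n+m}(d)$, and show that $\nu = O((n+m)/d)$ for a complete $d$-ary tree. Substituting this matching bound into the three parts of Theorem~\ref{thm:depth_lower_bound_graph} immediately gives the claimed $\Omega(d2^n/(n+m))$ and $\Omega(d4^n/(n+m))$ lower bounds, while the $\Omega(n)$ term is already supplied by Proposition~\ref{prop:depth_lowerbound_graph}.

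The only real content is the matching bound, which I would establish as follows. In a complete $d$-ary tree every non-leaf has exactly $d$ children, so if $I$ denotes the number of internal nodes and $L$ the number of leaves, then the total number of edges is $dI = (n+m) - 1$, giving $I = (n+m-1)/d$. Every edge of the tree goes from an internal node to one of its children, so in any matching each internal node is the endpoint of at most one matching edge. Hence
\[
\nu \;\le\; I \;=\; \frac{n+m-1}{d} \;=\; O\!\left(\frac{n+m}{d}\right).
\]

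Plugging this into Theorem~\ref{thm:depth_lower_bound_graph} yields, respectively, depth lower bounds
\[
\Omega\!\left(\max\!\left\{n,\tfrac{2^n}{\nu}\right\}\right) = \Omega\!\left(\max\!\left\{n,\tfrac{d\,2^n}{n+m}\right\}\right)
\]
for QSP and for diagonal unitaries, and
\[
\Omega\!\left(\max\!\left\{n,\tfrac{4^n}{\nu}\right\}\right) = \Omega\!\left(\max\!\left\{n,\tfrac{d\,4^n}{n+m}\right\}\right)
\]
for GUS. I do not expect a major obstacle here: the matching bound is an easy structural fact about complete $d$-ary trees, and the heavy lifting (reducing depth lower bounds to matching-size upper bounds via the reachable-set / light-cone argument) has already been carried out in Theorem~\ref{thm:reachable-set-bounds} and Theorem~\ref{thm:depth_lower_bound_graph}. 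The only subtlety worth being careful about is ensuring the matching argument covers both even layers (at most $2\nu$ CNOT endpoints) and odd single-qubit layers (which, under the normal-form assumption used to prove Theorem~\ref{thm:reachable-set-bounds}, also contribute at most $2\nu$ active qubits); this is already handled inside the proof of Theorem~\ref{thm:depth_lower_bound_graph}, so no new bookkeeping is needed.
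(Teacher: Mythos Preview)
Your proposal is correct and follows exactly the same approach as the paper's proof, which simply states that the maximum matching in $\Tree_{n+m}(d)$ has size $O((n+m)/d)$ and then invokes Theorem~\ref{thm:depth_lower_bound_graph}. Your added justification of the matching bound (mapping each matching edge to its parent endpoint, which is necessarily internal and distinct across matching edges) is a nice detail the paper omits.
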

\begin{proof}
The size of a maximum matching in $\Tree_{n+m}(d)$ is $O\left(\frac{n+m}{d}\right)$. The result follows from Theorem \ref{thm:depth_lower_bound_graph}. 
\end{proof}

\begin{corollary}\label{coro:depth_lower_star}
There exist $n$-qubit quantum states and $n$-qubit unitary matrices which require quantum circuits under $\Star_{n+m}$ constraint of depth at least $\Omega (2^n)$ and  $\Omega(4^n)$, respectively, to implement, using $m\ge 0$ ancillary qubits.
\end{corollary}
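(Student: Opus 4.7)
The plan is to derive this corollary as an immediate specialization of results already established for general connected graphs. I would observe two equivalent routes: (i) $\Star_{n+m}$ is the complete $d$-ary tree $\Tree_{n+m}(d)$ with $d = n+m-1$, so one can plug directly into Lemma~\ref{lem:depth_lower_dary}; or (ii) the star graph has maximum matching of size $\nu = 1$ (every edge is incident to the unique center vertex, hence no two edges are independent), so one can apply Theorem~\ref{thm:depth_lower_bound_graph} with $\nu = 1$. Both routes lead to the same conclusion and either can be used to close the argument in one line.

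Taking route (ii), the hard work has already been done in proving Theorem~\ref{thm:depth_lower_bound_graph}, which gives depth lower bounds of $\Omega(\max\{n, 2^n/\nu\})$ for QSP and $\Omega(\max\{n, 4^n/\nu\})$ for GUS under arbitrary connected graph constraint on $n+m$ vertices with maximum matching size $\nu$. Specializing to $G = \Star_{n+m}$, I would verify that $\nu(\Star_{n+m}) = 1$: any two distinct edges of a star share the central vertex, so a matching can contain at most one edge, and clearly one edge is achievable. Plugging $\nu = 1$ into the bounds yields $\Omega(2^n)$ for QSP and $\Omega(4^n)$ for GUS, as required.

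Alternatively, taking route (i), Lemma~\ref{lem:depth_lower_dary} states depth lower bounds of $\Omega\left(\max\{n, d \cdot 2^n/(n+m)\}\right)$ and $\Omega\left(\max\{n, d \cdot 4^n/(n+m)\}\right)$ for QSP and GUS under $\Tree_{n+m}(d)$ constraint. Setting $d = n+m-1$ (which makes the tree a star with root degree $n+m-1$) gives the claimed $\Omega(2^n)$ and $\Omega(4^n)$ bounds directly, since $(n+m-1)/(n+m) = \Theta(1)$.

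There is no substantive obstacle here; the entire content of the corollary is reduced to recognizing $\Star_{n+m}$ as an extreme special case of a previously handled graph family and noting that its maximum matching is trivially of constant size. The proof is essentially one sentence of bookkeeping built on top of Theorem~\ref{thm:depth_lower_bound_graph} (or Lemma~\ref{lem:depth_lower_dary}).
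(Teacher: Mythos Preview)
Your proposal is correct and matches the paper's approach: the corollary is stated immediately after Lemma~\ref{lem:depth_lower_dary} with no explicit proof, so the intended argument is precisely your route (i), specializing the $d$-ary tree bound to $d = n+m-1$. Your route (ii) via $\nu(\Star_{n+m})=1$ in Theorem~\ref{thm:depth_lower_bound_graph} is equally valid and in fact the same argument one level deeper, since Lemma~\ref{lem:depth_lower_dary} itself is derived from that theorem via the maximum matching size.
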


\qspbrickwalllowerbound*

\usbrickwalllowerbound*
    The proofs of Theorems \ref{thm:lower_bound_brickwall_QSP} and \ref{thm:lower_bound_brickwall_US} are essentially the same as those of Theorems \ref{thm:lower_bound_grid_k_QSP} and \ref{thm:lower_bound_grid_k_US}, so we omit them here.

\section{An improvement of QSP circuit under $\Tree_{n+m}(2)$ constraint}
\label{append:binary_tree_improvement}

We first present a method of preparing quantum states using a unary encoding of the basis states, which requires $O(2^n)$ ancillary qubits. We then combine this with a way of converting from unary to binary encoded bases to bound the QSP circuit depth under binary tree constraints.

\paragraph{Unary-encoded state preparation}
\label{sec:unary_encoding}

Let $e_x$ denote the vector where the $x$-th bit is 1 and all other bits are 0. Let $\ket{e_x}$ denote the corresponding $2^n$-qubit quantum state. 
The unary-encoded quantum state preparation problem is: Given $v=(v_x)_{x\in\{0,1\}^n}\in\mathbb{C}^{2^n}$ with $\sqrt{\sum_{x\in\{0,1\}^n}|v_x|^2}=1$, prepare the $2^n$-qubit state 
    \[\ket{\psi'_v}=\sum_{x\in\{0,1\}^n}v_x\ket{e_x},\]
starting from the initial state $\ket{0^{2^n}}$. We call a circuit that implements this task a \textit{unary-encoded QSP circuit}.
Ref. \cite{johri2021nearest} constructed an $O(n)$-depth unary-encoded QSP circuit under no graph constraints. Our construction below can be applied to binary tree constraints.

We label qubits in a binary tree as in in Section~\ref{sec:diag_without_ancilla_binarytree} (see Fig.~\ref{fig:label_binarytree}).
Given parameter $\alpha\in\mathbb{R}$, define the 2-qubit rotation gate $R^{i,j}_y(\alpha)$ by its action on qubits $i,j$:
\begin{align*}
    &\ket{0}_i\ket{0}_j\to \ket{0}_i\ket{0}_j, \\
    &\ket{0}_i\ket{1}_j\to \cos(\alpha/2)\ket{0}_i\ket{1}_j+\sin(\alpha/2)\ket{1}_i\ket{0}_j,\\
    &\ket{1}_i\ket{0}_j\to -\sin(\alpha/2)\ket{0}_i\ket{1}_j+\cos(\alpha/2)\ket{1}_i\ket{0}_j,\\
    &\ket{1}_i\ket{1}_j\to \ket{1}_i\ket{1}_j .
\end{align*}

\begin{lemma}\label{lem:unary_qsp_tree}
Any $2^n$-qubit unary-encoded state $\ket{\psi'_v}=\sum_{\scriptsize x\in\Bn}v_x\ket{e_x}$ can be implemented by a quantum circuit of depth $O(n)$ and size $O(2^n)$ under $\Tree_{2^{n+1}-1}(2)$ constraint using $2^{n}-1$ ancillary qubits, where $\ket{\psi_v'}$ is in the $(n+1)$-th layer of the binary tree.
\end{lemma}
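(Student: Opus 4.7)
The plan is to treat the $2^n$ leaves of the complete binary tree as the output register and the $2^n-1$ internal nodes as ancillary qubits, then ``route'' a single unit of amplitude from the root down to the leaves along the tree edges by cascading the two-qubit beam-splitter rotations $R_y^{i,j}$ from the statement immediately above the lemma.

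First I would initialize the root with a single $X$ gate, so that the entire tree is in the unary state $\ket{e_{\text{root}}}$ (one ``particle'' at the root, zeros everywhere else). For each internal node $v$ with children $v_L,v_R$, set $p_v := \sum_{x\in L_v}|v_x|^2$, where $L_v$ is the set of leaves in the subtree rooted at $v$. I then process internal nodes level by level from the root downwards; at each $v$ that currently carries amplitude $\alpha_v$, apply two gates in succession: $R_y^{v,v_L}(\theta_v)$ with $\sin^2(\theta_v/2)=p_{v_L}/p_v$, followed by $R_y^{v,v_R}(\pi)$ (if $p_v=0$, skip the gates). A direct calculation on the three relevant basis states $\ket{100},\ket{010},\ket{001}$ of the triple $(v,v_L,v_R)$ shows that this pair of gates deterministically transfers $\alpha_v$ from $v$ to its children in the ratio $\sqrt{p_{v_L}/p_v}:\sqrt{p_{v_R}/p_v}$, picking up a uniform sign of $-1$. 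A straightforward induction on the level then shows that after all $n$ layers have been processed, leaf $x$ holds amplitude $(-1)^n|v_x|$ and every internal node is back in state $\ket{0}$; a depth-$1$ layer of single-qubit phase gates $R(\theta_x)$ on the leaves installs the global sign together with the phases $\arg(v_x)$ to yield $\ket{\psi'_v}$.

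For the depth and size analysis: the two $R_y$ gates at a fixed internal node share vertex $v$ and must be sequential, contributing $O(1)$ depth. However, distinct internal nodes on the same level act on disjoint $\{v,v_L,v_R\}$ triples and can be executed in parallel, so each level costs $O(1)$ depth and the total depth over all $n$ levels (plus initialization and the phase fix-up) is $O(n)$. The number of two-qubit gates is $2(2^n-1)=O(2^n)$. Since each $R_y^{i,j}$ acts on a parent and one of its direct children, it already respects the $\Tree_{2^{n+1}-1}(2)$ connectivity. Finally, to convert this into a \emph{standard} quantum circuit, each $R_y^{i,j}$ is decomposed into $O(1)$ CNOT gates and single-qubit gates acting on the same tree edge, which preserves both the depth and size bounds.

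The main technical content, and the step I expect to require the most care, is the inductive correctness of the amplitude bookkeeping: one must check that the unitary produced by the gates confined to the subtree of $v$ maps $\ket{e_v}$ to the normalized unary state $\sum_{x\in L_v}(v_x/\sqrt{p_v})\ket{e_x}$ on the leaves of that subtree, so that the angles $\theta_v$ are determined a priori by a single top-down pass through the tree. The only subtleties are the uniform sign $(-1)^n$ introduced by the $-\sin(\theta/2)$ factor of $R_y$, which is absorbed into the final single-qubit phase layer, and the degenerate case $p_v=0$, which is handled by simply omitting the gates at $v$.
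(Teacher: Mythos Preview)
Your proposal is correct and follows essentially the same top-down amplitude-routing construction as the paper: initialize the root with $X$, then level by level split the amplitude from each internal node into its two children using $O(1)$ two-qubit gates on the local triple, and finish with a depth-$1$ phase layer on the leaves. The only cosmetic difference is that the paper uses a SWAP from $z$ to $z0$ followed by $R_y^{z0,z1}$ through the parent, whereas you use two parent--child $R_y$ gates; both respect the tree constraint with $O(1)$ cost per node, and you are in fact slightly more careful than the paper about tracking the accumulated $(-1)^n$ sign.
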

\begin{proof}
For all $x\in\Bn$, express the coefficients $v_x$ as $v_x=e^{i\theta_x}u_x$, where $\theta_x\in\mathbb{R}$ and $u_x\in\mathbb{R}_+$. For $0\le k\le n-1$ and all $z\in\B^k$, define $u_z=\sqrt{(u_{z0})^2+(u_{z1})^2}$ and $\cos(\alpha_z/2)=u_{z0}/u_z$. By construction, $z_\epsilon = \sqrt{\sum_{x\in\{0,1\}^n} u_x^2} = 1$. The creation of $\ket{\psi'_v}$ is carried out in $n+1$ steps: 

\begin{enumerate}
    \item Step $0$: Apply a single-qubit gate $X$ on the root qubit $\epsilon$.
    \item Step $k$ $(1\le k\le n-1)$: For all $z\in\{0,1\}^{k-1}$,
    \begin{enumerate}  
          \item Apply $\textsf{SWAP}^{z}_{z0}$.
        \item Apply $R_y^{z0,z1}(\alpha_z)$ under path $z0-z-z1$ constraint.
    \end{enumerate}
    \item Step $n$: Apply $R(\theta_x)$ on qubit $x$, for all $x\in\{0,1\}^{n}$.
\end{enumerate}


The effect of this procedure is illustrated in Fig.~\ref{fig:unary_QSP_example}, and can easily be verified to carry out the transformation
\begin{equation*}
    \ket{0^{2^{n+1}-1}}_{\Tree_\epsilon^n}\rightarrow  \ket{0^{2^{n+1}-1-2^n}}_{\Tree_\epsilon^n-\{0,1\}^{n}}\big(\sum_{x\in\{0,1\}^n}v_x\ket{e_x}_{\{0,1\}^n}\big)
\end{equation*} 
which is the desired state creation, where the leaf nodes of $\Tree_\epsilon^n$ store the basis states $\ket{e_x}$, and all other nodes are ancilla.

Step 0 involves only a single $X$ gate. For each $k\in[n-1]$ and  $z\in\{0,1\}^{k-1}$, $\textsf{SWAP}^{z}_{z0}$ and $R_y^{z0,z1}(\alpha_z)$ can each be implemented in size and depth $O(1)$ and, as all paths $z0-z-z1$ are disjoint in $\Tree_{\epsilon}^n$, they can be executed in parallel. Step $n$ involves $2^n$ single qubit gates, which requires depth $1$. The total size and depth required for the procedure are therefore $O(n)$ and $O(2^n)$, respectively.

\end{proof}

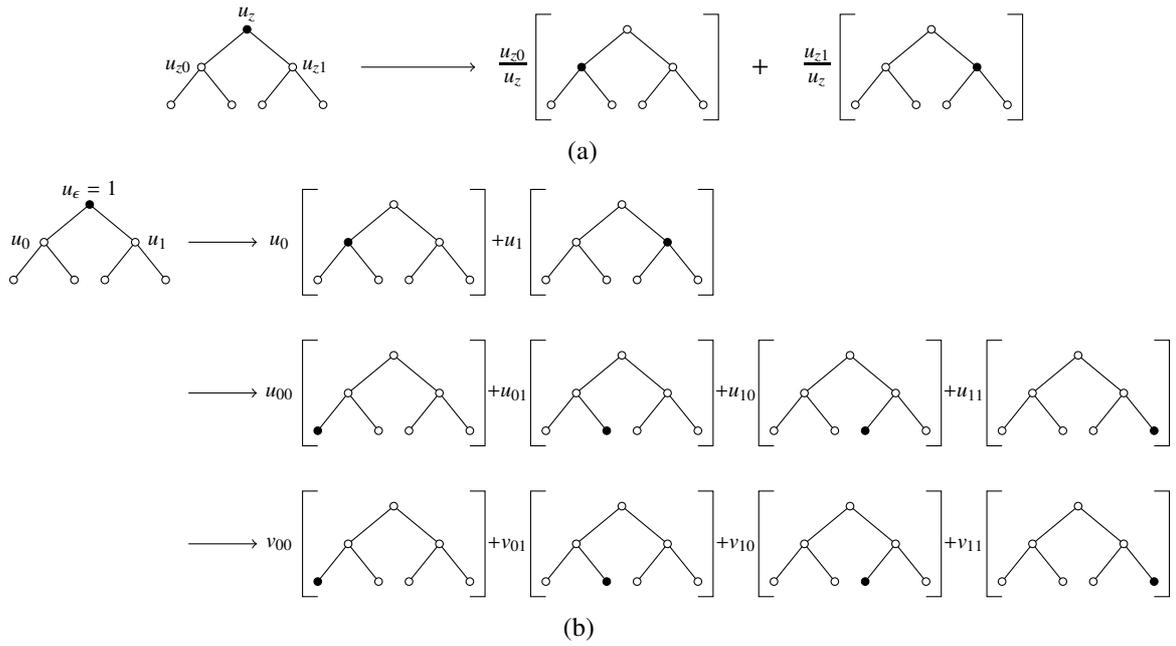
\begin{figure}[!t]
\centering
\subfloat[]{
\begin{minipage}[]{1\textwidth}
    \centering
    \begin{tikzpicture}
     \draw (0,0)--(-0.6,-0.5)--(-1,-1) (-0.6,-0.5)--(-0.2,-1) (0,0)--(0.6,-0.5)--(1,-1) (0.6,-0.5)--(0.2,-1);
     \draw [fill=black] (0,0) circle (0.05);
     \draw[fill=white]  (-0.6,-0.5) circle (0.05) (0.6,-0.5) circle (0.05) (-1,-1) circle (0.05) (-0.2,-1) circle (0.05) (1,-1) circle (0.05) (0.2,-1) circle (0.05);
     \draw (0,0.2) node{\scriptsize $u_z$} (-0.9,-0.5) node{\scriptsize $u_{z0}$} (0.9,-0.5) node{\scriptsize $u_{z1}$};
     \draw[->] (1.5,-0.5)--(3,-0.5);
     \draw (4,0.2)--(3.8,0.2)--(3.8,-1.2)--(4,-1.2);
     \draw (6,0.2)--(6.2,0.2)--(6.2,-1.2)--(6,-1.2);
    \draw (5,0)--(4.4,-0.5)--(4,-1) (4.4,-0.5)--(4.8,-1) (5,0)--(5.6,-0.5)--(6,-1) (5.6,-0.5)--(5.2,-1);
    \draw[fill=white]  (5,0) circle (0.05) (5.6,-0.5) circle (0.05)  (4,-1) circle (0.05) (4.8,-1) circle (0.05) (6,-1) circle (0.05) (5.2,-1) circle (0.05);
    \draw [fill=black] (4.4,-0.5) circle (0.05);
     \draw (3.5,-0.5) node{$\frac{u_{z0}}{u_z}$};
     \draw (8,0.2)--(7.8,0.2)--(7.8,-1.2)--(8,-1.2);
     \draw (10,0.2)--(10.2,0.2)--(10.2,-1.2)--(10,-1.2);
     \draw (9,0)--(8.4,-0.5)--(8,-1) (8.4,-0.5)--(8.8,-1) (9,0)--(9.6,-0.5)--(10,-1) (9.6,-0.5)--(9.2,-1);
    \draw[fill=white]  (9,0) circle (0.05) (8.4,-0.5) circle (0.05) (8,-1) circle (0.05) (8.8,-1) circle (0.05) (10,-1) circle (0.05) (9.2,-1) circle (0.05);
    \draw [fill=black] (9.6,-0.5) circle (0.05);
     \draw (7.5,-0.5) node{$\frac{u_{z1}}{u_z}$};
     \draw (6.75,-0.5) node{$+$};
     \end{tikzpicture}
      \end{minipage}
}
\hfil
\subfloat[
]{ \begin{minipage}[]{1\textwidth}

      \centering
     \begin{tikzpicture}
     \draw (0,-2)--(-0.6,-2.5)--(-1,-3) (-0.6,-2.5)--(-0.2,-3) (0,-2)--(0.6,-2.5)--(1,-3) (0.6,-2.5)--(0.2,-3);
     \draw [fill=black] (0,-2) circle (0.05);
     \draw[fill=white]  (-0.6,-2.5) circle (0.05) (0.6,-2.5) circle (0.05) (-1,-3) circle (0.05) (-0.2,-3) circle (0.05) (1,-3) circle (0.05) (0.2,-3) circle (0.05);
     \draw (0,-1.8) node{\scriptsize $u_\epsilon=1$} (-0.9,-2.5) node{\scriptsize $u_{0}$} (0.9,-2.5) node{\scriptsize $u_{1}$};
     \draw[->] (1.3,-2.5)--(2.2,-2.5);
     \draw (2.5,-2.5) node{\scriptsize $u_0$} (5.5,-2.5) node{\scriptsize $+u_1$};
     \draw (3,-1.8)--(2.8,-1.8)--(2.8,-3.2)--(3,-3.2) (5,-1.8)--(5.2,-1.8)--(5.2,-3.2)--(5,-3.2);
     \draw (4,-2)--(3.4,-2.5)--(3,-3) (3.4,-2.5)--(3.8,-3) (4,-2)--(4.6,-2.5)--(5,-3) (4.6,-2.5)--(4.2,-3);
     \draw [fill=black] (3.4,-2.5) circle (0.05);
     \draw[fill=white]  (4,-2) circle (0.05) (4.6,-2.5) circle (0.05) (3,-3) circle (0.05) (3.8,-3) circle (0.05) (5,-3) circle (0.05) (4.2,-3) circle (0.05);
    \draw (6,-1.8)--(5.8,-1.8)--(5.8,-3.2)--(6,-3.2) (8,-1.8)--(8.2,-1.8)--(8.2,-3.2)--(8,-3.2);
    \draw (7,-2)--(6.4,-2.5)--(6,-3) (6.4,-2.5)--(6.8,-3) (7,-2)--(7.6,-2.5)--(8,-3) (7.6,-2.5)--(7.2,-3);
     \draw [fill=black] (7.6,-2.5) circle (0.05);
     \draw[fill=white]  (7,-2) circle (0.05)(6.4,-2.5) circle (0.05)  (6,-3) circle (0.05) (6.8,-3) circle (0.05) (8,-3) circle (0.05) (7.2,-3) circle (0.05);
     \draw[->] (1.3,-4.5)--(2.2,-4.5);
     \draw (2.5,-4.5) node{\scriptsize $u_{00}$} (5.5,-4.5) node{\scriptsize $+u_{01}$} (8.5,-4.5) node{\scriptsize $+u_{10}$} (11.5,-4.5) node{\scriptsize $+u_{11}$};
     \draw (3,-3.8)--(2.8,-3.8)--(2.8,-5.2)--(3,-5.2) (5,-3.8)--(5.2,-3.8)--(5.2,-5.2)--(5,-5.2);
     \draw (4,-4)--(3.4,-4.5)--(3,-5) (3.4,-4.5)--(3.8,-5) (4,-4)--(4.6,-4.5)--(5,-5) (4.6,-4.5)--(4.2,-5);
     \draw [fill=black] (3,-5) circle (0.05);
     \draw[fill=white]  (4,-4) circle (0.05)(3.4,-4.5) circle (0.05) (4.6,-4.5) circle (0.05)  (3.8,-5) circle (0.05) (5,-5) circle (0.05) (4.2,-5) circle (0.05);
        \draw (6,-3.8)--(5.8,-3.8)--(5.8,-5.2)--(6,-5.2) (8,-3.8)--(8.2,-3.8)--(8.2,-5.2)--(8,-5.2);
    \draw (7,-4)--(6.4,-4.5)--(6,-5) (6.4,-4.5)--(6.8,-5) (7,-4)--(7.6,-4.5)--(8,-5) (7.6,-4.5)--(7.2,-5);
     \draw [fill=black] (6.8,-5) circle (0.05);
     \draw[fill=white]  (7,-4) circle (0.05)(6.4,-4.5) circle (0.05)(7.6,-4.5) circle (0.05)  (6,-5) circle (0.05)  (8,-5) circle (0.05) (7.2,-5) circle (0.05);
        \draw (9,-3.8)--(8.8,-3.8)--(8.8,-5.2)--(9,-5.2) (11,-3.8)--(11.2,-3.8)--(11.2,-5.2)--(11,-5.2);
    \draw (10,-4)--(9.4,-4.5)--(9,-5) (9.4,-4.5)--(9.8,-5) (10,-4)--(10.6,-4.5)--(11,-5) (10.6,-4.5)--(10.2,-5);
     \draw [fill=black](10.2,-5) circle (0.05);
     \draw[fill=white]  (10,-4) circle (0.05) (9.4,-4.5) circle (0.05) (10.6,-4.5) circle (0.05) (9,-5) circle (0.05) (9.8,-5) circle (0.05)  (11,-5) circle (0.05);
    \draw (12,-3.8)--(11.8,-3.8)--(11.8,-5.2)--(12,-5.2) (14,-3.8)--(14.2,-3.8)--(14.2,-5.2)--(14,-5.2);
    \draw (13,-4)--(12.4,-4.5)--(12,-5) (12.4,-4.5)--(12.8,-5) (13,-4)--(13.6,-4.5)--(14,-5) (13.6,-4.5)--(13.2,-5);
     \draw [fill=black] (14,-5) circle (0.05);
     \draw[fill=white]  (13,-4) circle (0.05)(12.4,-4.5) circle (0.05) (13.6,-4.5) circle (0.05) (12,-5) circle (0.05) (12.8,-5) circle (0.05) (13.2,-5) circle (0.05) ;
     \draw[->] (1.3,-6.5)--(2.2,-6.5);
     \draw (2.5,-6.5) node{\scriptsize $v_{00}$} (5.5,-6.5) node{\scriptsize $+v_{01}$} (8.5,-6.5) node{\scriptsize $+v_{10}$} (11.5,-6.5) node{\scriptsize $+v_{11}$};
     \draw (3,-5.8)--(2.8,-5.8)--(2.8,-7.2)--(3,-7.2) (5,-5.8)--(5.2,-5.8)--(5.2,-7.2)--(5,-7.2);
     \draw (4,-6)--(3.4,-6.5)--(3,-7) (3.4,-6.5)--(3.8,-7) (4,-6)--(4.6,-6.5)--(5,-7) (4.6,-6.5)--(4.2,-7);
     \draw [fill=black] (3,-7) circle (0.05);
     \draw[fill=white]  (4,-6) circle (0.05)(3.4,-6.5) circle (0.05) (4.6,-6.5) circle (0.05)  (3.8,-7) circle (0.05) (5,-7) circle (0.05) (4.2,-7) circle (0.05);
        \draw (6,-5.8)--(5.8,-5.8)--(5.8,-7.2)--(6,-7.2) (8,-5.8)--(8.2,-5.8)--(8.2,-7.2)--(8,-7.2);
    \draw (7,-6)--(6.4,-6.5)--(6,-7) (6.4,-6.5)--(6.8,-7) (7,-6)--(7.6,-6.5)--(8,-7) (7.6,-6.5)--(7.2,-7);
     \draw [fill=black] (6.8,-7) circle (0.05);
     \draw[fill=white]  (7,-6) circle (0.05)(6.4,-6.5) circle (0.05) (7.6,-6.5) circle (0.05) (6,-7) circle (0.05)  (8,-7) circle (0.05) (7.2,-7) circle (0.05);
        \draw (9,-5.8)--(8.8,-5.8)--(8.8,-7.2)--(9,-7.2) (11,-5.8)--(11.2,-5.8)--(11.2,-7.2)--(11,-7.2);
    \draw (10,-6)--(9.4,-6.5)--(9,-7) (9.4,-6.5)--(9.8,-7) (10,-6)--(10.6,-6.5)--(11,-7) (10.6,-6.5)--(10.2,-7);
     \draw [fill=black](10.2,-7) circle (0.05) ;
     \draw[fill=white]  (10,-6) circle (0.05) (9.4,-6.5) circle (0.05) (10.6,-6.5) circle (0.05) (9,-7) circle (0.05) (9.8,-7) circle (0.05)  (11,-7) circle (0.05);
     \draw (12,-5.8)--(11.8,-5.8)--(11.8,-7.2)--(12,-7.2) (14,-5.8)--(14.2,-5.8)--(14.2,-7.2)--(14,-7.2);
    \draw (13,-6)--(12.4,-6.5)--(12,-7) (12.4,-6.5)--(12.8,-7) (13,-6)--(13.6,-6.5)--(14,-7) (13.6,-6.5)--(13.2,-7);
     \draw [fill=black](14,-7) circle (0.05) ;
     \draw[fill=white] (13,-6) circle (0.05) (12.4,-6.5) circle (0.05) (13.6,-6.5) circle (0.05) (12,-7) circle (0.05) (12.8,-7) circle (0.05)  (13.2,-7) circle (0.05);
     \end{tikzpicture}
     \end{minipage}}
    \caption{Schematic of the unary-encoded state preparation procedure of Lemma~\ref{lem:unary_qsp_tree}. Black (white) circles indicated qubits in the $\ket{1}$ ($\ket{0}$) state. Values outside parentheses are amplitudes for each basis function. (a) In step $1\le k \le n$, for all $z\in\{0,1\}^k$, a SWAP gate is applied between qubits $z$ and $z0$, followed by a two qubit rotation between qubits $z0$ and $z1$ along the path $z0-z-z1$. (b) Full procedure for $n=2$. Step 0 (creation of the initial $\ket{1}$ state at the root node) is not shown. Steps $1$ and $2$ create the state $\sum_z u_z \ket{e_z}$.  The final step consists of single qubit rotations which add the phase to each amplitude, i.e. $u_x\rightarrow v_x = u_xe^{i\theta_x}$.}
    \label{fig:unary_QSP_example}
\end{figure}

\begin{lemma}[Unary to binary basis encodings]\label{lem:uary_binary}
The $(2^n+n)$-qubit unitary transformation 
\begin{equation*}
    \ket{0^n}\ket{e_x}\to\ket{x}\ket{0^{2^n}},\quad \forall x\in\Bn,
\end{equation*}
can be implemented by a quantum circuit of depth $O(n^2\log (n))$ and size $O(n^22^n)$, using $2^n-1$ ancillary qubits under binary tree constraint, where $x$ is in the first $\lceil \log(n+1)\rceil$ layers and $e_x$ is in the $(n+1)$-th layer of the binary tree.
\end{lemma}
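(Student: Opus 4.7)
The plan is to implement the map in two phases: Phase 1 computes $x$ into the top-layer output register while leaving the unary leaf encoding essentially intact, and Phase 2 then uncomputes the leaves using $\ket{x}$ as controls.

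For Phase 1, the key observation is that since $|e_x\rangle$ has a single nonzero entry, at position $x$, each bit satisfies
\[
x_i \;=\; \bigoplus_{y\in\{0,1\}^n,\; y_i=1} (|e_x\rangle\text{-value at leaf }y).
\]
We evaluate these XORs by a two-step fan-in along the tree. First, working bottom-up, at every internal node $z$ we compute the parity of all leaves in the subtree rooted at $z$, so that $z$ stores $1$ iff it is an ancestor of $x$; since every layer can be processed in $O(1)$ depth, this takes depth $O(n)$. Second, for each $i\in[n]$, $x_i$ equals the XOR of the values at all depth-$i$ right-child internal nodes (exactly one of which is nonzero, namely the depth-$i$ ancestor of $x$, and it is a right child iff $x_i=1$); we produce $x_i$ by a fan-in that terminates at the output qubit reserved for $x_i$ in the top $\lceil\log(n+1)\rceil$ layers. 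Scheduling the $n$ fan-ins so that gates acting on disjoint sub-trees execute in parallel, and allowing an $O(\log n)$ scheduling overhead for contention near the top layers, gives Phase 1 depth $O(n^2\log n)$ and size $O(n^2 2^n)$.

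For Phase 2, observe that the inverse transformation $\ket{x}\ket{0^{2^n}}\mapsto \ket{x}\ket{e_x}$ is a bucket-brigade QRAM-write: it routes a single $\ket{1}$ from the root to leaf $x$ by using $x_i$ as the routing bit at depth $i-1$. To realize it under binary-tree connectivity, we first broadcast each $x_i$ from its location in the top layers to (a subset of) internal nodes at depth $i-1$ that control the routing, then apply the conditional propagation down the tree. Broadcasting one $x_i$ to up to $2^{i-1}$ copies takes depth $O(n)$, and pipelining the $n$ broadcasts (staggered by one tree layer so that bits destined for deeper layers do not collide with shallower ones) adds only an $O(\log n)$ factor from contention near the top, for a total Phase 2 depth of $O(n^2\log n)$ and size $O(n\cdot 2^n)$. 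Inverting this subcircuit clears the leaves, as required.

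The main obstacle will be the detailed scheduling argument in both phases: Phase 1 performs $n$ fan-ins through overlapping regions of the tree, and Phase 2 performs $n$ broadcasts through overlapping regions, and in both cases we must carefully partition the gates into depth-layers of disjoint sub-tree operations. A subtle point is the handling of the top $\lceil\log(n+1)\rceil$ layers, where the output qubits for $x$ coexist with ancilla used both as fan-in targets and as broadcast origins; we reserve one specific position per $x_i$ and route the fan-ins and broadcasts through distinct sub-trees of these top layers. Once the scheduling is set up so that each parallel layer consists of vertex-disjoint sub-tree operations (each of which can be implemented in constant depth under tree constraint by Lemma~\ref{lem:cnot_path_constraint}), the claimed $O(n^2\log n)$ depth and $O(n^2 2^n)$ size follow by straightforward accounting.
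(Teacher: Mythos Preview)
Your two-phase plan---extract $x$ while keeping the unary encoding, then uncompute the unary register controlled on $x$---matches the paper's structure. The gaps are in the implementations of the phases.

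In Phase~1, once you load parities into every internal node, those nodes are occupied. Your fan-in for $x_i$ must route the XOR of the $2^{i-1}$ depth-$i$ right-child values up to a top-layer output qubit, and the only available paths go through nodes at depths $0,\ldots,i-1$, which now hold parities. A hierarchical fan-in that XORs into those nodes corrupts the parities (and does not produce $x_i$ at the top, since the existing values mix in); doing the $2^{i-1}$ long-distance CNOTs one at a time via Lemma~\ref{lem:cnot_path_constraint} preserves intermediate values but cannot be parallelized (all share the same target $q_i$), costing $\Theta(n\,2^n)$ depth in total. The paper avoids this by dropping the parity pre-computation altogether: for each $i$ it runs a self-contained circuit $C_i$ that fans in directly from the leaves with $y_i=1$ up to the root (through internal nodes that start at $0$) and then fans back out, restoring all ancilla before the next bit; $n$ rounds give $O(n^2)$ for this phase. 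After each $C_i$ the bit at the root is swapped to temporary storage, and a final $O(n^2)$ CNOT circuit moves all $n$ bits into the top layers.

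In Phase~2, the bucket-brigade routing needs, at each depth-$(i-1)$ node, both the routing state (which node currently holds the travelling $1$) and the address bit $x_i$. But each tree node is a single qubit, so broadcasting $x_i$ into the depth-$(i-1)$ layer overwrites the routing state and the subsequent ``conditional propagation'' has nothing left to condition on. Making this work with single-qubit nodes requires genuinely extra gadgetry (e.g., parking the address bit one layer away and realizing each routing step as a three-qubit gate along a short path, together with clean-up), none of which you supply and which is not just ``scheduling.'' The paper sidesteps the issue with a copy-and-Toffoli approach: it uses Lemma~\ref{lem:copy_binary_tree} to fan out $O(2^n/n)$ disjoint copies of the full string $x$ into subtrees (depth $O(n^2\log n)$), then fires $O(2^n/n)$ $n$-controlled Toffolis in parallel---each copy of $x$ handles $O(n)$ nearby leaves---to clear leaf $x$, and finally uncopies. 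This step is the overall bottleneck and gives the claimed $O(n^2\log n)$ depth and $O(n^2 2^n)$ size.
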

\begin{proof}
Let $\kappa=\lceil\log(\frac{n+1}{2})\rceil$, and label qubits in the binary tree as in Section~\ref{sec:diag_with_ancilla_path} (Fig.~\ref{fig:label_binarytree}), i.e., with the root labelled with the empty string $\epsilon$, and with left and right children of qubit $z$ labelled as $z0$ and $z1$, respectively. This unitary transformation can be implemented in 2 steps.
\begin{enumerate}
    \item Step 1: 
    $\ket{0^{2^n-1}}_{\Tree_\epsilon^{n-1}}\ket{e_x}_{\scriptsize\Bn}\to\ket{x0^{2^{\kappa+1}-n-1}}_{\Tree_\epsilon^\kappa}\ket{0^{2^n-2^{\kappa+1}}}_{\Tree_\epsilon^{n-1}-\Tree_\epsilon^\kappa}\ket{e_x}_{\scriptsize\Bn}$.
   
   First, we implement unitary transformation $\ket{0}_\epsilon\ket{e_x}_{\scriptsize\Bn}\to \ket{x_i}_\epsilon\ket{e_x}_{\scriptsize \Bn}$ for all $i\in[n]$ by circuit $C_i$ under the binary tree constraint. 
    Unitary $C_i$ consists of $2n-1$ steps. In the first step, we apply CNOT gates where the controls are the subset of leaf nodes $x\in\Bn$ where $x_i=1$, and the targets are their respective parent nodes. In the $k$-th step ($2\le k\le n$), we apply CNOT gates of which the control qubits are in the $(n-k+1)$-th depth of the binary tree and the target qubits are their parents. In the $k$-th step $n+1\le k\le 2n-2$, we apply CNOT gates where the control qubits are in the $(k-n+1)$-th depth of the binary tree and the target qubits are their parents. The last step is the same as the first step. See an example in Fig. \ref{fig:C1_example}.
    
    Second, we use $C_i$ to implement step 1. Let $\{q_1,q_2,\ldots,q_n\}$ denote a set consisting of the first $n$ qubits in the $(n+1)$-th depth ($\B^{n+1}$) of the binary tree. Step 1 is realized step by step as follows:
    \begin{align*}
        &\ket{0^{2^{n}-1}}_{\Tree^{n-1}_\epsilon}\ket{e_x}_{\scriptsize\Bn}\ket{0^{2^{n+1}}}_{\scriptsize\B^{n+1}}\\
    \xrightarrow{C_1} &\ket{x_1}_{\epsilon}\ket{0^{2^{n}-2}}_{\Tree^{n-1}_\epsilon-\{\epsilon\}}\ket{e_x}_{\scriptsize\Bn} \ket{0^{2^{n+1}}}_{\scriptsize\B^{n+1}}\\
    \xrightarrow{\textsf{SWAP}^{\epsilon}_{q_1}}&\ket{0^{2^n-1}}_{\Tree_\epsilon^{n-1}}\ket{e_x}_{\scriptsize\Bn}\ket{x_10^{2^{n+1}-1}}_{\scriptsize\B^{n+1}}\\
    &\vdots\\
    \xrightarrow{C_j} &\ket{x_j}_{\epsilon}\ket{0^{2^{n}-2}}_{\Tree_\epsilon^{n-1}-\{\epsilon\}}\ket{e_x}_{\scriptsize\Bn}\ket{x_1x_2\ldots x_{j-1}0^{2^{n+1}-j+1}}_{\scriptsize\B^{n+1}}\\
    \xrightarrow{\textsf{SWAP}^\epsilon_{q_j}} &\ket{0^{2^{n}-1}}_{\Tree_\epsilon^{n-1}}\ket{e_x}_{\scriptsize\Bn}\ket{x_1x_2\ldots x_{j}0^{2^{n+1}-j}}_{\scriptsize\B^{n+1}}\\
    &\vdots\\
    \xrightarrow{C_n}&\ket{x_n}_{\epsilon}\ket{0^{2^{n}-2}}_{\Tree_\epsilon^{n-1}-\{\epsilon\}}\ket{e_x}_{\scriptsize\Bn}\ket{x_1x_2\ldots x_{n-1}0^{2^{n+1}-n+1}}_{\scriptsize\B^{n+1}}\\
    \xrightarrow{\textsf{SWAP}^\epsilon_{q_n}} &\ket{0^{2^{n}-1}}_{\Tree_\epsilon^{n-1}}\ket{e_x}_{\scriptsize\Bn}\ket{x_1x_2\ldots x_{n}0^{2^{n+1}-n}}_{\scriptsize\B^{n+1}}\\
   \xrightarrow{C}&\ket{x0^{2^{\kappa+1}-n-1}}_{\Tree_\epsilon^{\kappa}}\ket{0^{2^n-2^{\kappa+1}}}_{\Tree_\epsilon^{n-1}-\Tree_\epsilon^{\kappa}}\ket{e_x}_{\scriptsize\B^{n}}\ket{0^{2^{n+1}}}_{\scriptstyle\B^{n+1}}
    \end{align*}
    
   As discussed above, $C_j$ can be realized by a CNOT circuit of depth $O(n)$ for all $j\in[n]$. There exists an $O(\log(n))$-path between node $\epsilon$ and $q_j$ in a binary tree and thus, by Lemma \ref{lem:cnot_path_constraint}, $\textsf{SWAP}^{\epsilon}_{q_j}=\textsf{CNOT}^{\epsilon}_{q_j}\textsf{CNOT}_{\epsilon}^{q_j}\textsf{CNOT}^{\epsilon}_{q_j}$ can be implemented by a circuit of depth $O(n)$ under an $O(n)$-path.
   The functionality of $C$ is to swap the first $n$ qubits of the binary tree with qubits $\{q_1,q_2,\ldots,q_n\}$. Therefore, $C$ is an invertible linear transformation consisting of swap gates (each of which can be implemented with $3$ CNOT gates). By~Lemma \ref{lem:cnot_circuit}, it can be implemented by a CNOT circuit of depth $O(n^2)$. In summary, the total depth  step 1 is $n(O(n)+O(n)+O(n^2)=O(n^2)$.
    \item Step 2: 
    $\ket{x0^{2^{\kappa+1}-n-1}}_{\Tree_\epsilon^{\kappa}}\ket{0^{2^n-2^{\kappa+1}}}_{\Tree_\epsilon^{n-1}-\Tree_\epsilon^{\kappa}}\ket{e_x}_{\scriptsize \Bn}\to \ket{x0^{2^{\kappa+1}-n-1}}_{\Tree_\epsilon^{\kappa}}\ket{0^{2^{n+1}-2^{\kappa+1}}}_{\Tree_\epsilon^{n}-\Tree_\epsilon^{\kappa}} ,\forall x\in \Bn.$
    \begin{enumerate}
        \item Step 2.1: For simplicity, we assume that $\frac{n}{\kappa+1}$ is an integer. Let ${\sf R}_{\rm root}=\bigcup_{j=1}^{\frac{n}{\kappa+1}}\B^{(j-1)(\kappa+1)}$ denote the root nodes of all sub-binary trees of depth $\kappa$. The size of ${\sf R}_{\rm root}$ is $O(2^n/n)$. The following unitary transformation makes $O(2^n/n)$ copies of $x$.
        \begin{align*}
           &\ket{x0^{2^{\kappa+1}-n-1}}_{\Tree_\epsilon^{\kappa}}\bigotimes_{z\in{\sf R}_{\rm root}-\{\epsilon\}}\ket{0^{2^{\kappa+1}-1}}_{\Tree_z^{\kappa}}\\
         \to &\ket{x0^{2^{\kappa+1}-n-1}}_{\Tree_\epsilon^{\kappa}}\bigotimes_{z\in{\sf R}_{\rm root}-\{\epsilon\}}\ket{x0^{2^{\kappa+1}-n-1}}_{\Tree_z^{\kappa}}=\bigotimes_{z\in{\sf R}_{\rm root}}\ket{x0^{2^{\kappa+1}-n-1}}_{\Tree_z^{\kappa}},\forall x\in\Bn.
        \end{align*}
         By Lemma~\ref{lem:copy_binary_tree}, step 2.1 can be implemented by a circuit of depth $O(\log(2^n/n)n\log n)=O(n^2\log n)$ .
        \item Step 2.2: 
        \begin{align*}
            &\bigotimes_{z\in{\sf R}_{\rm root}}\ket{x0^{2^{\kappa+1}-n-1}}_{\Tree_z^{\kappa}}\ket{e_x}_{\scriptsize\Bn}
        \to \bigotimes_{z\in{\sf R}_{\rm root}}\ket{x0^{2^{\kappa+1}-n-1}}_{\Tree_z^{\kappa}}\ket{0^{2^n}}_{\scriptsize\Bn}.
        \end{align*}
        Lemma \ref{lem:tof} shows that the $|S|+1$-qubit $\textsf{Tof}^S_i(y)$ can be implemented by a circuit of size $O(n)$ for $|S|=n$. Let $S_z$ denote the set comprising the first $n$ qubits in $\Tree_z^{\kappa}$ for $z\in\B^{(\frac{n}{\kappa+1}-1)(\kappa+1)}\subset {\sf R}_{\rm root}$. The size of $\B^{(\frac{n}{\kappa+1}-1)(\kappa+1)}$ is $O(2^n/n)$. 
        If there exists one copy of $x$, we can apply $2^n$ toffoli gates $\textsf{Tof}^S_i(y)$ to implement step 2.2, where $S$ is the qubit set of $x$, $i$ is the target qubit and all $y\in \B^n$. Since we have $O(2^n/n)$ copy of $x$, we can apply $O(2^n/n)$ Toffoli gates in parallel, whose control and target set are disjoint. Under binary tree constraint, $\textsf{Tof}^S_i(y)$ can be implemented in depth and size $O(\log(n))\cdot O(n)=O(n\log(n))$. Therefore, step 2.2 can be implemented in depth $O(n\log(n))\cdot O(2^n/(2^n/n))=O(n^2\log(n))$.
    \item Step 2.3: applying the inverse circuit of step 2.1, we can implement the following circuit of depth $O(n^2\log n)$. 
            \begin{align*}
           &\ket{x0^{2^{\kappa+1}-n-1}}_{\Tree_\epsilon^{\kappa}}\bigotimes_{z\in{\sf R}_{\rm root}-\{\epsilon\}}\ket{x0^{2^{\kappa+1}-n-1}}_{\Tree_z^{\kappa}}\ket{0^{2^n}}_{\scriptsize \Bn}\\
         \to &\ket{x0^{2^{\kappa+1}-n-1}}_{\Tree_\epsilon^{\kappa}}\bigotimes_{z\in{\sf R}_{\rm root}-\{\epsilon\}}\ket{0^{2^{\kappa+1}-1}}_{\Tree_z^{\kappa}}\ket{0^{2^n}}_{\scriptsize \Bn}\\
         =&\ket{x0^{2^{\kappa+1}-n-1}}_{\Tree_\epsilon^{\kappa}}\ket{0^{2^{n+1}-2^{\kappa+1}}}_{\Tree_\epsilon^{n}-\Tree_\epsilon^{\kappa}}.
        \end{align*}
    \end{enumerate}
In summary the total depth of unitary transformation is $O(n^2)+3\cdot O(n^2\log(n))=O(n^2\log(n))$.
\end{enumerate}
\end{proof}
\begin{figure}[]
    \centering
    \begin{tikzpicture}[scale=0.8]
     \draw (5.5,-4.5) node{\scriptsize $+$} (8.5,-4.5) node{\scriptsize $+$} (11.5,-4.5) node{\scriptsize $+$};
     \draw (3,-3.8)--(2.8,-3.8)--(2.8,-5.2)--(3,-5.2) (5,-3.8)--(5.2,-3.8)--(5.2,-5.2)--(5,-5.2);
     \draw (4,-4)--(3.4,-4.5)--(3,-5) (3.4,-4.5)--(3.8,-5) (4,-4)--(4.6,-4.5)--(5,-5) (4.6,-4.5)--(4.2,-5);
     \draw [fill=black] (3,-5) circle (0.05) ;
     \draw [draw=red](4.2,-5)--(4.6,-4.5)--(5,-5);
     \draw[fill=white]  (4,-4) circle (0.05)(3.4,-4.5) circle (0.05) (4.6,-4.5) circle (0.05)  (3.8,-5) circle (0.05) (5,-5) circle (0.05) (4.2,-5) circle (0.05);
        \draw (6,-3.8)--(5.8,-3.8)--(5.8,-5.2)--(6,-5.2) (8,-3.8)--(8.2,-3.8)--(8.2,-5.2)--(8,-5.2);
    \draw (7,-4)--(6.4,-4.5)--(6,-5) (6.4,-4.5)--(6.8,-5) (7,-4)--(7.6,-4.5)--(8,-5) (7.6,-4.5)--(7.2,-5);
     \draw [fill=black] (6.8,-5) circle (0.05);
     \draw [draw=red] (7.2,-5)--(7.6,-4.5)--(8,-5) ;
     \draw[fill=white]  (7,-4) circle (0.05)(6.4,-4.5) circle (0.05)(7.6,-4.5) circle (0.05)  (6,-5) circle (0.05)  (8,-5) circle (0.05) (7.2,-5) circle (0.05);
        \draw (9,-3.8)--(8.8,-3.8)--(8.8,-5.2)--(9,-5.2) (11,-3.8)--(11.2,-3.8)--(11.2,-5.2)--(11,-5.2);
    \draw (10,-4)--(9.4,-4.5)--(9,-5) (9.4,-4.5)--(9.8,-5) (10,-4)--(10.6,-4.5)--(11,-5) (10.6,-4.5)--(10.2,-5);
     \draw [fill=black](10.2,-5) circle (0.05);
     \draw [draw=red] (10.2,-5)--(10.6,-4.5)--(11,-5);
     \draw[fill=white]  (10,-4) circle (0.05) (9.4,-4.5) circle (0.05) (10.6,-4.5) circle (0.05) (9,-5) circle (0.05) (9.8,-5) circle (0.05)  (11,-5) circle (0.05);
    \draw (12,-3.8)--(11.8,-3.8)--(11.8,-5.2)--(12,-5.2) (14,-3.8)--(14.2,-3.8)--(14.2,-5.2)--(14,-5.2);
    \draw (13,-4)--(12.4,-4.5)--(12,-5) (12.4,-4.5)--(12.8,-5) (13,-4)--(13.6,-4.5)--(14,-5) (13.6,-4.5)--(13.2,-5);
     \draw [fill=black] (14,-5) circle (0.05);
     \draw [draw=red] (13.2,-5)--(13.6,-4.5)--(14,-5);
     \draw[fill=white]  (13,-4) circle (0.05)(12.4,-4.5) circle (0.05) (13.6,-4.5) circle (0.05) (12,-5) circle (0.05) (12.8,-5) circle (0.05) (13.2,-5) circle (0.05) ;
     \draw[->] (1.3,-6.5)--(2.2,-6.5);
     \draw (5.5,-6.5) node{\scriptsize $+$} (8.5,-6.5) node{\scriptsize $+$} (11.5,-6.5) node{\scriptsize $+$};
     \draw (3,-5.8)--(2.8,-5.8)--(2.8,-7.2)--(3,-7.2) (5,-5.8)--(5.2,-5.8)--(5.2,-7.2)--(5,-7.2);
     \draw (4,-6)--(3.4,-6.5)--(3,-7) (3.4,-6.5)--(3.8,-7) (4,-6)--(4.6,-6.5)--(5,-7) (4.6,-6.5)--(4.2,-7);
     \draw [fill=black] (3,-7) circle (0.05);
       \draw[draw=red] (4.6,-6.5)--(4,-6)--(3.4,-6.5);
     \draw[fill=white]  (4,-6) circle (0.05)(3.4,-6.5) circle (0.05) (4.6,-6.5) circle (0.05)  (3.8,-7) circle (0.05) (5,-7) circle (0.05) (4.2,-7) circle (0.05);
        \draw (6,-5.8)--(5.8,-5.8)--(5.8,-7.2)--(6,-7.2) (8,-5.8)--(8.2,-5.8)--(8.2,-7.2)--(8,-7.2);
    \draw (7,-6)--(6.4,-6.5)--(6,-7) (6.4,-6.5)--(6.8,-7) (7,-6)--(7.6,-6.5)--(8,-7) (7.6,-6.5)--(7.2,-7);
     \draw [fill=black] (6.8,-7) circle (0.05);
        \draw[draw=red] (7.6,-6.5)--(7,-6)--(6.4,-6.5);
     \draw[fill=white]  (7.6,-6.5) circle (0.05)(6.4,-6.5) circle (0.05)(7,-6) circle (0.05)  (6,-7) circle (0.05)  (8,-7) circle (0.05) (7.2,-7) circle (0.05);
        \draw (9,-5.8)--(8.8,-5.8)--(8.8,-7.2)--(9,-7.2) (11,-5.8)--(11.2,-5.8)--(11.2,-7.2)--(11,-7.2);
    \draw (10,-6)--(9.4,-6.5)--(9,-7) (9.4,-6.5)--(9.8,-7) (10,-6)--(10.6,-6.5)--(11,-7) (10.6,-6.5)--(10.2,-7);
     \draw [fill=black](10.2,-7) circle (0.05) (10.6,-6.5) circle (0.05);
          \draw[draw=red] (10.6,-6.5)--(10,-6)--(9.4,-6.5);
     \draw[fill=white]  (10,-6) circle (0.05) (9.4,-6.5) circle (0.05)  (9,-7) circle (0.05) (9.8,-7) circle (0.05)  (11,-7) circle (0.05);
     \draw (12,-5.8)--(11.8,-5.8)--(11.8,-7.2)--(12,-7.2) (14,-5.8)--(14.2,-5.8)--(14.2,-7.2)--(14,-7.2);
    \draw (13,-6)--(12.4,-6.5)--(12,-7) (12.4,-6.5)--(12.8,-7) (13,-6)--(13.6,-6.5)--(14,-7) (13.6,-6.5)--(13.2,-7);
     \draw [fill=black](14,-7) circle (0.05)  (13.6,-6.5) circle (0.05);
     \draw[draw=red] (13.6,-6.5)--(13,-6)--(12.4,-6.5);
     \draw[fill=white] (13,-6) circle (0.05)(12.4,-6.5) circle (0.05)  (12,-7) circle (0.05) (12.8,-7) circle (0.05)  (13.2,-7) circle (0.05);
     \draw[->] (1.3,-8.5)--(2.2,-8.5);
     \draw (5.5,-8.5) node{\scriptsize $+$} (8.5,-8.5) node{\scriptsize $+$} (11.5,-8.5) node{\scriptsize $+$};
     \draw (3,-7.8)--(2.8,-7.8)--(2.8,-9.2)--(3,-9.2) (5,-7.8)--(5.2,-7.8)--(5.2,-9.2)--(5,-9.2);
     \draw (4,-8)--(3.4,-8.5)--(3,-9) (3.4,-8.5)--(3.8,-9) (4,-8)--(4.6,-8.5)--(5,-9) (4.6,-8.5)--(4.2,-9);
     \draw [fill=black] (3,-9) circle (0.05);
    \draw [draw=red](4.2,-9)--(4.6,-8.5)--(5,-9);
     \draw[fill=white]  (4,-8) circle (0.05)(3.4,-8.5) circle (0.05) (4.6,-8.5) circle (0.05)  (3.8,-9) circle (0.05) (5,-9) circle (0.05) (4.2,-9) circle (0.05);
        \draw (6,-7.8)--(5.8,-7.8)--(5.8,-9.2)--(6,-9.2) (8,-7.8)--(8.2,-7.8)--(8.2,-9.2)--(8,-9.2);
    \draw (7,-8)--(6.4,-8.5)--(6,-9) (6.4,-8.5)--(6.8,-9) (7,-8)--(7.6,-8.5)--(8,-9) (7.6,-8.5)--(7.2,-9);
     \draw [fill=black] (6.8,-9) circle (0.05);
     \draw [draw=red](7.2,-9)--(7.6,-8.5)--(8,-9);
     \draw[fill=white] (6.4,-8.5) circle (0.05)(7,-8) circle (0.05)  (7.6,-8.5) circle (0.05) (6,-9) circle (0.05)  (8,-9) circle (0.05) (7.2,-9) circle (0.05);
        \draw (9,-7.8)--(8.8,-7.8)--(8.8,-9.2)--(9,-9.2) (11,-7.8)--(11.2,-7.8)--(11.2,-9.2)--(11,-9.2);
    \draw (10,-8)--(9.4,-8.5)--(9,-9) (9.4,-8.5)--(9.8,-9) (10,-8)--(10.6,-8.5)--(11,-9) (10.6,-8.5)--(10.2,-9);
     \draw [fill=black](10,-8) circle (0.05)(10.2,-9) circle (0.05) (10.6,-8.5) circle (0.05);
    \draw [draw=red](10.2,-9)--(10.6,-8.5)--(11,-9);
     \draw[fill=white]   (9.4,-8.5) circle (0.05)  (9,-9) circle (0.05) (9.8,-9) circle (0.05)  (11,-9) circle (0.05);
     \draw (12,-7.8)--(11.8,-7.8)--(11.8,-9.2)--(12,-9.2) (14,-7.8)--(14.2,-7.8)--(14.2,-9.2)--(14,-9.2);
    \draw (13,-8)--(12.4,-8.5)--(12,-9) (12.4,-8.5)--(12.8,-9) (13,-8)--(13.6,-8.5)--(14,-9) (13.6,-8.5)--(13.2,-9);
     \draw [fill=black](14,-9) circle (0.05) (13,-8) circle (0.05)(13.6,-8.5) circle (0.05) ;
     \draw [draw=red](13.2,-9)--(13.6,-8.5)--(14,-9);
     \draw[fill=white](12.4,-8.5) circle (0.05)   (12,-9) circle (0.05) (12.8,-9) circle (0.05)  (13.2,-9) circle (0.05);
     
     \draw[->] (1.3,-10.5)--(2.2,-10.5);
     \draw (5.5,-10.5) node{\scriptsize $+$} (8.5,-10.5) node{\scriptsize $+$} (11.5,-10.5) node{\scriptsize $+$};
     \draw (3,-9.8)--(2.8,-9.8)--(2.8,-11.2)--(3,-11.2) (5,-9.8)--(5.2,-9.8)--(5.2,-11.2)--(5,-11.2);
     \draw (4,-10)--(3.4,-10.5)--(3,-11) (3.4,-10.5)--(3.8,-11) (4,-10)--(4.6,-10.5)--(5,-11) (4.6,-10.5)--(4.2,-11);
     \draw [fill=black] (3,-11) circle (0.05);
     \draw[fill=white]  (4,-10) circle (0.05)(3.4,-10.5) circle (0.05) (4.6,-10.5) circle (0.05)  (3.8,-11) circle (0.05) (5,-11) circle (0.05) (4.2,-11) circle (0.05);
        \draw (6,-9.8)--(5.8,-9.8)--(5.8,-11.2)--(6,-11.2) (8,-9.8)--(8.2,-9.8)--(8.2,-11.2)--(8,-11.2);
    \draw (7,-10)--(6.4,-10.5)--(6,-11) (6.4,-10.5)--(6.8,-11) (7,-10)--(7.6,-10.5)--(8,-11) (7.6,-10.5)--(7.2,-11);
     \draw [fill=black]  (6.8,-11) circle (0.05);
     \draw[fill=white]  (7,-10) circle (0.05)(6,-11) circle (0.05)(6.4,-10.5) circle (0.05) (7.6,-10.5) circle (0.05)  (8,-11) circle (0.05) (7.2,-11) circle (0.05);
        \draw (9,-9.8)--(8.8,-9.8)--(8.8,-11.2)--(9,-11.2) (11,-9.8)--(11.2,-9.8)--(11.2,-11.2)--(11,-11.2);
    \draw (10,-10)--(9.4,-10.5)--(9,-11) (9.4,-10.5)--(9.8,-11) (10,-10)--(10.6,-10.5)--(11,-11) (10.6,-10.5)--(10.2,-11);
     \draw [fill=black](10,-10) circle (0.05)(10.2,-11) circle (0.05) ;
     \draw[fill=white]   (9.4,-10.5) circle (0.05) (10.6,-10.5) circle (0.05) (9,-11) circle (0.05) (9.8,-11) circle (0.05)  (11,-11) circle (0.05);
     \draw (12,-9.8)--(11.8,-9.8)--(11.8,-11.2)--(12,-11.2) (14,-9.8)--(14.2,-9.8)--(14.2,-11.2)--(14,-11.2);
    \draw (13,-10)--(12.4,-10.5)--(12,-11) (12.4,-10.5)--(12.8,-11) (13,-10)--(13.6,-10.5)--(14,-11) (13.6,-10.5)--(13.2,-11);
     \draw [fill=black](14,-11) circle (0.05)  (13,-10) circle (0.05) ;
     \draw[fill=white](12.4,-10.5) circle (0.05) (13.6,-10.5) circle (0.05) (12,-11) circle (0.05) (12.8,-11) circle (0.05)  (13.2,-11) circle (0.05);
     \draw (1.5,-11.6) node{\scriptsize The state of 4 leaf nodes:} (4,-11.6) node{\scriptsize $\ket{e_{00}}$} (7,-11.6) node{\scriptsize $\ket{e_{01}}$}(10,-11.6) node{\scriptsize $\ket{e_{10}}$} (13,-11.6) node{\scriptsize $\ket{e_{11}}$};
     \end{tikzpicture}
    \caption{The operator $C_1$ when $n=2$. Black (white) circle indicate qubits in $\ket{1}$ ($\ket{0}$) states. Red line denotes a CNOT gate, for which the control qubit is in the lower depth of the tree. Step 1: CNOT gates applied to leaf nodes $S= \{x\in\{0,1\}^n, x_1 = 1\} = \{10,11\}$. Step 2: CNOT gates applied to all nodes in the layer above the leaves.  Step 3 is the same as Step 1.}
    \label{fig:C1_example}
\end{figure}
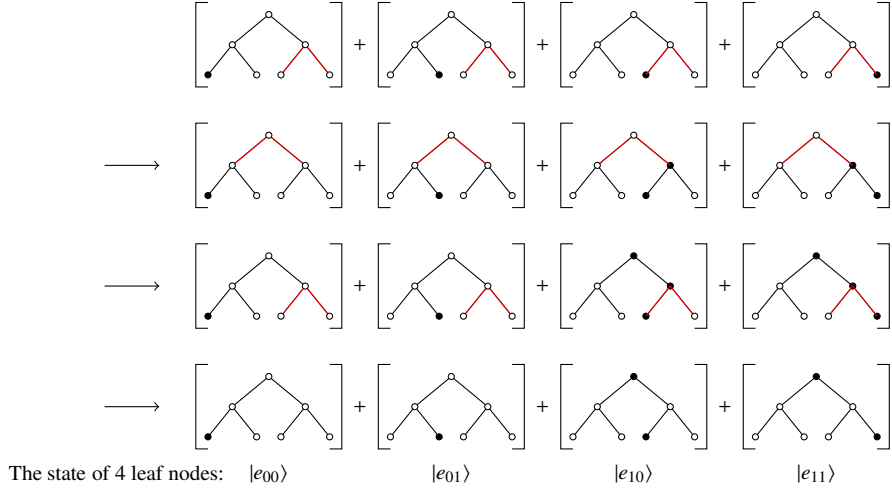

By combining the previous two lemmas, we can create a unary-encoded state, and then transform it to a standard (binary-encoded) state, to give the following result.
\begin{lemma}\label{lem:qsp_unary_binary_tree}
Any $n$-qubit quantum state $\ket{\psi_v}=\sum_{\scriptsize x\in\Bn}v_x\ket{x}$ can be prepared by a quantum circuit of depth $O(n^2\log n)$ and size $O(n^22^n)$ under $\Tree_{n+m}(2)$ constraint using $m=2^{n+1}-1$ ancillary qubits.
\end{lemma}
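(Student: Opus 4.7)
The statement is almost immediate from the two preceding lemmas, so my plan is to compose them directly. The idea is: first use Lemma~\ref{lem:unary_qsp_tree} to prepare the unary encoded state $\sum_{x\in\{0,1\}^n} v_x \ket{e_x}$ on the $2^n$ leaves of a suitable subtree, then invoke Lemma~\ref{lem:uary_binary} to transfer this superposition into the binary encoded state $\sum_{x} v_x \ket{x}$ on the top $n$ qubits while uncomputing the unary register back to $\ket{0^{2^n}}$.

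\textbf{Step 1 (register layout).} I will set up the binary tree $\Tree_{n+m}(2)$ with $n+m = n + 2^{n+1}-1$ vertices, identifying three disjoint registers whose roles match the hypotheses of Lemmas~\ref{lem:unary_qsp_tree} and~\ref{lem:uary_binary}: an \emph{output register} $\mathsf{R}_{\rm out}$ of $n$ qubits placed in the first $\lceil \log(n+1)\rceil$ layers of the tree (to be the $\ket{x}$ register for Lemma~\ref{lem:uary_binary}); a \emph{leaf register} $\mathsf{R}_{\rm leaf}$ of $2^n$ qubits placed at the $(n+1)$-th layer (to hold $\ket{e_x}$); and an \emph{ancilla register} $\mathsf{R}_{\rm anc}$ of the remaining $2^n-1$ qubits in the intermediate layers. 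This matches exactly the register layout specified in Lemma~\ref{lem:uary_binary}, and the subgraph $\mathsf{R}_{\rm leaf} \cup \mathsf{R}_{\rm anc}$ forms a binary tree on $2^{n+1}-1$ vertices as required by Lemma~\ref{lem:unary_qsp_tree}.

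\textbf{Step 2 (prepare the unary state).} I will apply Lemma~\ref{lem:unary_qsp_tree} to the subtree spanned by $\mathsf{R}_{\rm leaf} \cup \mathsf{R}_{\rm anc}$, leaving $\mathsf{R}_{\rm out}$ in $\ket{0^n}$. This realizes
\begin{equation*}
\ket{0^n}_{\mathsf{R}_{\rm out}} \ket{0^{2^{n+1}-1}}_{\mathsf{R}_{\rm leaf}\cup \mathsf{R}_{\rm anc}} \;\longrightarrow\; \ket{0^n}_{\mathsf{R}_{\rm out}} \ket{0^{2^n-1}}_{\mathsf{R}_{\rm anc}} \sum_{x\in\{0,1\}^n} v_x \ket{e_x}_{\mathsf{R}_{\rm leaf}}
\end{equation*}
in depth $O(n)$ and size $O(2^n)$.

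\textbf{Step 3 (unary-to-binary conversion).} I will then apply Lemma~\ref{lem:uary_binary} to $\mathsf{R}_{\rm out} \cup \mathsf{R}_{\rm leaf} \cup \mathsf{R}_{\rm anc}$. By linearity, this carries the superposition above to
\begin{equation*}
\Big(\sum_{x\in\{0,1\}^n} v_x \ket{x}_{\mathsf{R}_{\rm out}}\Big) \ket{0^{2^n}}_{\mathsf{R}_{\rm leaf}} \ket{0^{2^n-1}}_{\mathsf{R}_{\rm anc}},
\end{equation*}
in depth $O(n^2\log n)$ and size $O(n^2 2^n)$, which is precisely $\ket{\psi_v}$ on $\mathsf{R}_{\rm out}$ with all ancillas cleanly restored.

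\textbf{Obstacle and conclusion.} The resource totals follow by adding: depth $O(n) + O(n^2\log n) = O(n^2\log n)$ and size $O(2^n) + O(n^2 2^n) = O(n^2 2^n)$. There is essentially no mathematical obstacle beyond the bookkeeping: the only thing to check carefully is that the register layout chosen in Step~1 simultaneously satisfies the connectivity and initialization preconditions of both component lemmas (in particular, that the leaf layer and its ancestors form a complete binary subtree, and that the first $\lceil \log(n+1)\rceil$ layers can accommodate the $n$ output qubits). Once this layout is fixed, the composition is immediate.
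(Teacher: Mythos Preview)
Your proposal is correct and follows exactly the approach the paper takes: the paper's entire proof is the one-sentence remark ``By combining the previous two lemmas, we can create a unary-encoded state, and then transform it to a standard (binary-encoded) state,'' and you have spelled out precisely this composition of Lemma~\ref{lem:unary_qsp_tree} with Lemma~\ref{lem:uary_binary}, together with the cost arithmetic. The register-layout bookkeeping you flag is indeed the only point requiring care (and the paper does not make it fully explicit either).
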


\paragraph{Circuit implementation for QSP under $\Tree_{n+m}(2)$ constraint}
\label{sec:QSP_mix_binarytree}
~\\

The QSP circuit in Lemma~\ref{lem:qsp_unary_binary_tree} requires an exponential number of ancilla. To allow for arbitrary numbers of ancilla, we use the QSP circuit framework of~\cite{sun2021asymptotically}, see Fig.~\ref{fig:new_framework}.
\begin{figure}[ht]
\centerline{
\Qcircuit @C=0.6em @R=0.7em {
\lstick{\ket{0}} & \qw  & \multigate{4}{\qsp_t} &\multigate{5}{\scriptstyle V_{t+1}} &\qw& \push{\cdots} & &\multigate{7}{\scriptstyle V_n} &  \qw\\
\lstick{\ket{0}} &  \qw  & \ghost{\qsp_t} &\ghost{\scriptstyle V_{t+1}} &\qw& \cdots & &\ghost{\scriptstyle V_n}&  \qw\\
\lstick{\ket{0}} & \qw   & \ghost{\qsp_t} &\ghost{\scriptstyle V_{t+1}} &\qw& \cdots & &\ghost{\scriptstyle V_n}&  \qw\\
\vdots~~~~~~~~~ &  \qw   & \ghost{\qsp_t}& \ghost{\scriptstyle V_{t+1}} &\qw& \cdots & &\ghost{\scriptstyle V_n}& \qw\\
\lstick{\ket{0}} &  \qw   & \ghost{\qsp_t} &\ghost{\scriptstyle V_{t+1}} &\qw& \cdots & &\ghost{\scriptstyle V_n}&  \qw \\
\lstick{\ket{0}} &  \qw & \qw  & \ghost{\scriptstyle V_{t+1}} &\qw& \cdots & &\ghost{\scriptstyle V_n}&  \qw \\
\vdots~~~~~~~~~ &  \qw & \qw  & \qw &\qw& \cdots & &\ghost{\scriptstyle V_n}&  \qw \inputgroupv{2}{7}{4em}{4 em}{ \scriptstyle n \text{~qubits}~~~~~~~~~~~~~~~}\\
\lstick{\ket{0}} &  \qw & \qw & \qw &\qw& \cdots& &\ghost{\scriptstyle V_n}&  \qw \\
}
}
\caption{The QSP circuit framework of~\cite{sun2021asymptotically}. For all $t\in[n]$, the unitary transformation $\qsp_t$ is a $t$-qubit QSP circuit (which we take here to be implemented by the method of Lemma~\ref{lem:qsp_unary_binary_tree}), and $V_{t+1},V_{t+2},\ldots,V_n$ are UCGs of size $t+1,t+2,\ldots,n$, respectively.}
\label{fig:new_framework}
\end{figure}
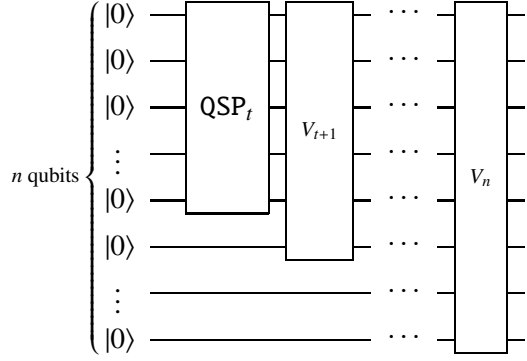

\begin{theorem}\label{thm:QSP_binarytree_improvement}
Any $n$-qubit quantum state can be realized by a  quantum circuit of depth 
\[
\begin{cases}
    O\big(n^2\log^2(n)+\frac{\log(n)2^n}{n+m}\big),   &\quad \text{if~} m\le o(2^n),\\
    O(n^2\log(n)),   &\quad \text{if~} m\ge \Omega(2^n),        
\end{cases}
\]
under $\Tree_{n+m}(2)$ constraint, using $m\ge 0 $ ancillary qubits.
\end{theorem}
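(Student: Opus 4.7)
The plan is to use the hybrid QSP framework of Fig.~\ref{fig:new_framework}: decompose the $n$-qubit QSP into an initial $t$-qubit subroutine $\qsp_t$, implemented via the unary-encoded procedure of Lemma~\ref{lem:qsp_unary_binary_tree}, followed by the UCG sequence $V_{t+1}, V_{t+2}, \ldots, V_n$ implemented via Lemma~\ref{lem:UCG_binarytree}. The parameter $t$ is chosen based on the available ancilla: large enough that $\qsp_t$ is allowed the $\Theta(2^t)$ ancilla it needs, but small enough that the UCG stage remains manageable.

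For Case 1 ($m \ge \Omega(2^n)$), I would set $t = n$ so that Lemma~\ref{lem:qsp_unary_binary_tree} alone produces the state in depth $O(n^2 \log n)$ using $2^{n+1}-1 \le O(m)$ ancillary qubits, and no UCG stage is needed. This case is essentially immediate from Lemma~\ref{lem:qsp_unary_binary_tree}.

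For Case 2 ($m \le o(2^n)$), I would choose $t = \lfloor \log((m+1)/2) \rfloor$ so that $\qsp_t$ fits within $m$ ancillary qubits. By Lemma~\ref{lem:qsp_unary_binary_tree}, the $\qsp_t$ subcircuit has depth $O(t^2 \log t) = O(\log^2 m \log \log m) \le O(n^2 \log n)$, which is absorbed into the target bound. For each subsequent UCG $V_k$ ($k = t+1, \ldots, n$), Lemma~\ref{lem:UCG_binarytree} yields depth $O(k^2 \log k + \log k \cdot 2^k/(n+m))$, where the $(n+m) - k$ remaining qubits serve as ancilla. The $\log k \cdot 2^k / (n+m)$ contributions sum geometrically to the target $O(\log n \cdot 2^n / (n+m))$ term.

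The main obstacle will be controlling the $k^2 \log k$ contribution of each UCG: a naive term-by-term summation yields $\sum_{k=t+1}^n k^2 \log k = O(n^3 \log n)$, which overshoots the claimed $O(n^2 \log^2 n)$ bound by a factor of $n/\log n$. The hard part of the proof will be to amortize the shared setup operations across the UCGs $V_{t+1}, \ldots, V_n$---in particular, the qubit-permutation $\Pi$, the copy register $\mathsf{R}_{\rm copy}$, and the auxiliary register $\mathsf{R}_{\rm aux}$ from the diagonal-unitary framework of Appendix~\ref{sec:diag_with_ancilla_binarytree}---so that only a one-time $O(n^2 \log n)$ setup plus an incremental cost of $O(n \log^2 n)$ per UCG is charged, giving the aggregate bound of $O(n^2 \log^2 n)$. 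This will likely require fusing the three diagonal unitaries in Lemma~\ref{lem:UCG_decomposition} across consecutive $V_k$'s and reusing the unary-to-binary conversion of Lemma~\ref{lem:uary_binary} rather than invoking it afresh for each UCG.
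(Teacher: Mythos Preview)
Your handling of the large-ancilla case ($m \ge \Omega(2^n)$) is essentially the paper's: apply Lemma~\ref{lem:qsp_unary_binary_tree} directly (with $t$ equal to $n$ or $n-O(1)$) and finish. The gap is in the small-ancilla case.

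Your choice $t = \lfloor \log((m+1)/2) \rfloor$ ties $t$ to $m$ and can leave $n-t = \Theta(n)$ UCGs in the second stage. You correctly diagnose that the $\sum_{k=t+1}^n k^2\log k$ contribution then overshoots the target by a factor of $n/\log n$, but your proposed remedy---amortising the copy/auxiliary-register setup and the unary-to-binary conversion across consecutive $V_k$, and fusing diagonal unitaries from Lemma~\ref{lem:UCG_decomposition}---is speculative (``will likely require'') and not actually carried out. It is also far from clear that it would work: the $O(k^2\log k)$ term in Lemma~\ref{lem:UCG_binarytree} arises from per-invocation costs (suffix/prefix copy, Gray initial, inverse) that depend on the current qubit count $k$, and neighbouring UCGs in the cosine--sine decomposition have different target qubits and parameters, so there is no obvious shared structure to exploit.

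The paper avoids this entirely by a different choice of $t$, obtained after splitting $m\le o(2^n)$ into two sub-cases. When $m\le O(2^n/n^3)$, the exponential term already satisfies $\frac{\log(n)\,2^n}{n+m}=\Omega(n^3\log n)$, so it swallows the naive $O(n^3\log n)$ sum; one simply uses the basic UCG decomposition (Theorem~\ref{thm:QSP_binarytree}) with no $\qsp_t$ stage at all. When $\omega(2^n/n^3)\le m\le o(2^n)$, one sets $t=n-3\log n$; then $\qsp_t$ needs only $2^{t+1}-1=O(2^n/n^3)\le m$ ancilla and has depth $O(n^2\log n)$, while the second stage has merely $n-t=3\log n$ UCGs, each of polynomial cost $O(n^2\log n)$, summing to $O(n^2\log^2 n)$. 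No amortisation is needed. The key idea you are missing is that $t$ should be chosen to make $n-t$ small (namely $O(\log n)$), not to exhaust the ancilla budget.
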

\begin{proof} We consider the following cases. 

\begin{enumerate}
    \item $0\le m\le O(2^n/n^3)$. Use the QSP framework in Fig.~\ref{fig:QSP_circuit}. By Lemmas~\ref{lem:QSP_framework_UCG} and \ref{lem:UCG_binarytree}, the total depth required is $\sum_{k=1}^{n} O\left(k^2\log(k)+\frac{\log(k)2^k}{k+m}\right)=O\left(\frac{\log(n)2^n}{n+m}\right)$.
    \item $\omega(2^n/n^3)\le m\le o(2^n)$.
Use the QSP framework in Fig.~\ref{fig:new_framework}, with $t=n-3\log(n)$. By Lemma~\ref{lem:qsp_unary_binary_tree}, $\qsp_t$ can be implemented in depth $O(t^2\log(t))=O(n^2\log(n))$. The total circuit depth required is therefore 
\begin{equation*}
O(n^2\log(n))+\sum_{k=t+1}^n O\left(k^2\log(k)+\frac{\log(k)2^k}{k+m}\right)=O\left(n^2\log^2(n)+\frac{\log(n)2^n}{n+m}\right).    
\end{equation*}
\item $m\ge \Omega(2^n)$. If $m\ge 2^{n+1}-1$, use only $2^{n+1}-1$ of the ancilla. 
Use the QSP framework in Fig.~\ref{fig:new_framework}, with $t=n-\log(m+1)+2$. In this case, the total depth required is
\begin{equation*}
O(n^2\log(n))+\sum_{k=t+1}^n O\left(k^2\log(k)+\frac{\log(k)2^k}{k+m}\right)=O\left(n^2\log^2(n)\right).    
\end{equation*}
\end{enumerate}
\end{proof}

Note that, for general $m$, applying Lemma~\ref{lem:UCG_binarytree} to the QSP framework of Fig.~\ref{fig:QSP_circuit} (as in the first case in the proof above) leads to a circuit depth of $\sum_{i=k}^nO\left(k^2\log(k)+\frac{\log(k)2^k}{k+m}\right)=O\left(n^3\log(n)+\frac{\log(n)2^n}{n+m}\right)$ under binary tree constraint. For $m\ge \Omega(2^n/n^3)$ this gives an $O(n^3 \log(n))$ circuit depth bound, which is weaker than what we are able to achieve using the QSP framework of Fig.~\ref{fig:new_framework}. 

\end{document}